\theoremstyle{plain}
\newtheorem*{notations}{Notations}
\newcommand{\tens}{\otimes}
\newcommand{\lpar}{\parr}
\newcommand{\one}{1}
\newcommand{\bottom}{\bot}
\newcommand{\cod}{\oc}
\newcommand{\contr}{\wn}
\newcommand{\target}[1]{t_{#1}}
\newcommand{\targetAnyPort}[1]{\mathfrak{t}_{#1}}
\newcommand{\basis}[1]{\beta(#1)}
\newcommand{\numberInvisibleComponents}[1]{\Theta(#1)}
\newcommand{\TaylorExpansion}[1]{\mathcal{T}_{#1}[0]}
\newcommand{\pseudoExperiments}[1]{\mathfrak{E}(#1)}
\newcommand{\subsets}[1]{\mathfrak{P}(#1)}
\newcommand{\subsetsofsizetwo}[1]{\mathfrak{P}_2(#1)}
\newcommand{\Nat}{\ensuremath{\mathbb{N}}}
\newcommand{\atoms}[1]{\textit{At}(#1)}
\newcommand{\nontrivialconnected}[3]{\mathcal{S}_{#1}^{#3}(#2)}
\newcommand{\connectedcomponents}[1]{\mathcal{H}(#1)}
\newcommand{\connectedcomponentsContaining}[1]{H_{#1}}
\newcommand{\size}[1]{\textit{size}(#1)}
\newcommand{\height}[1]{\textit{height}(#1)}
\newcommand{\leftwires}[1]{\mathcal{L}(#1)}
\newcommand{\leftwiresatzero}[1]{\mathcal{L}_0(#1)}
\newcommand{\labelofport}[1]{l_{#1}}
\newcommand{\termofTaylor}[3]{\mathcal{T}_{#1}[#3](#2)}
\newcommand{\isaCopyfrom}[3]{\kappa_{#1}[#3](#2)}
\newcommand{\criticalports}[3]{\mathcal{K}_{#2, #3}(#1)}
\newcommand{\groundof}[1]{\mathcal{G}(#1)}
\newcommand{\finitemultisets}[1]{\mathcal{M}_\textit{fin}(#1)}
\newcommand{\finitesubsets}[1]{\mathfrak{P}_\textit{fin}(#1)}
\newcommand{\cosize}[1]{\textit{co-size}(#1)}
\newcommand{\depthof}[1]{\textit{depth}(#1)}
\def\restriction#1#2{\mathchoice
              {\setbox1\hbox{${\displaystyle #1}_{\scriptstyle #2}$}
              \restrictionaux{#1}{#2}}
              {\setbox1\hbox{${\textstyle #1}_{\scriptstyle #2}$}
              \restrictionaux{#1}{#2}}
              {\setbox1\hbox{${\scriptstyle #1}_{\scriptscriptstyle #2}$}
              \restrictionaux{#1}{#2}}
              {\setbox1\hbox{${\scriptscriptstyle #1}_{\scriptscriptstyle #2}$}
              \restrictionaux{#1}{#2}}}
\def\restrictionaux#1#2{{#1\,\smash{\vrule height .8\ht1 depth .85\dp1}}_{\,#2}} 
\newcommand{\dom}[1]{\mathsf{dom}(#1)}
\newcommand{\im}[1]{\mathsf{im}(#1)}
\newcommand{\emptysequence}{\varepsilon}
\newcommand{\typesoflinks}{\mathfrak{T}}
\newcommand{\cuts}[1]{\mathcal{C}(#1)}
\newcommand{\cutsatzero}[1]{\mathcal{C}_0(#1)}
\newcommand{\portsatzero}[1]{\mathcal{P}_0(#1)}
\newcommand{\portsatdepthgreater}[2]{\mathcal{P}_{> #2}(#1)}
\newcommand{\portsatdepthleq}[2]{\mathcal{P}_{\leq #2}(#1)}
\newcommand{\wiresatzero}[1]{\mathcal{W}_0(#1)}
\newcommand{\axiomsatzero}[1]{\mathcal{A}_0(#1)}
\newcommand{\arity}[1]{{\textit{a}}_{#1}}
\newcommand{\ports}[1]{\mathcal{P}(#1)}
\newcommand{\nonshallowConclusions}[1]{\mathcal{P}_{>0}^{\mathsf{f}}(#1)}
\newcommand{\conclusions}[1]{\mathcal{P}^{\mathsf{f}}(#1)}
\newcommand{\temporaryConclusions}[2]{\mathcal{P}_{#1}^{\mathsf{f}}(#2)}
\newcommand{\aboveBang}[2]{{\cod}_{#1}(#2)}
\newcommand{\axioms}[1]{\mathcal{A}(#1)}
\newcommand{\multiplicativeports}[1]{\mathcal{P}^\textit{m}(#1)}
\newcommand{\multiplicativeportsatzero}[1]{\mathcal{P}_0^\textit{m}(#1)}
\newcommand{\sm}[1]{\llbracket #1 \rrbracket}
\newcommand{\Card}[1]{\mathsf{Card}\left( #1 \right)}
\newcommand{\portsoftype}[2]{\mathcal{P}^{#1}(#2)}
\newcommand{\portsatzerooftype}[2]{\mathcal{P}_0^{#1}(#2)}
\newcommand{\exponentialportsatzero}[1]{\mathcal{P}_0^\textit{e}(#1)}
\newcommand{\boxes}[1]{\mathcal{B}(#1)}
\newcommand{\boxesatzero}[1]{\mathcal{B}_{0}(#1)}
\newcommand{\exactboxes}[2]{\mathcal{B}^{=#2}(#1)}
\newcommand{\exactboxesatzero}[2]{\mathcal{B}_{0}^{=#2}(#1)}
\newcommand{\boxesgeq}[2]{\mathcal{B}^{\geq #2}(#1)}
\newcommand{\boxesatzerogeq}[2]{\mathcal{B}_0^{\geq #2}(#1)}
\newcommand{\boxesatzerosmaller}[2]{\mathcal{B}_0^{< #2}(#1)}
\newcommand{\boxessmaller}[2]{\mathcal{B}^{< #2}(#1)}
\newcommand{\exponentialports}[1]{\mathcal{P}^{\textit{e}}(#1)}
\newcommand{\wires}[1]{\mathcal{W}(#1)}
\newcommand{\supp}[1]{\textit{Supp}(#1)}
\newcommand{\multi}[1]{[#1]}
\newcommand{\scalefactten}{1}
\newcommand{\scalefactter}{0.4}
\newcommand{\scalefactR}{0.6}
\newcommand{\propvar}[0]{\mathcal{X}}
\newcommand{\contractionsUnder}[2]{\mathcal{P}_{#1}^{\contr}(#2)}
\begin{document}

\title{Taylor expansion in linear logic is invertible}

\author[D.~de Carvalho]{Daniel de Carvalho}
\address{Innopolis University, Universitetskaya St, 1, Innopolis, Respublika Tatarstan, 420500, Russia}	\email{d.carvalho@innopolis.ru} 
\keywords{Linear Logic, Denotational semantics, Taylor expansion}

\begin{abstract}
Each Multiplicative Exponential Linear Logic (MELL) proof-net can be expanded into a differential net, which is its Taylor expansion. We prove that two different MELL proof-nets have two different Taylor expansions. As a corollary, we prove a completeness result for MELL: We show that the relational model is injective for MELL proof-nets, i.e. the equality between MELL proof-nets in the relational model is exactly axiomatized by cut-elimination.
\end{abstract}

\maketitle

In the seminal paper by Harvey Friedman \cite{Friedman}, it has been shown that equality between simply-typed lambda-terms in the full typed structure $\mathcal{M}_X$ over an infinite set $X$ is completely axiomatized by $\beta$ and $\eta$: for any simply-typed lambda-terms $v$ and $u$, we have $(\mathcal{M}_X \vDash v = u \Leftrightarrow v \simeq_{\beta \eta} u)$. Some variations, refinements and generalizations of this result have been provided by Gordon Plotkin \cite{Plotkin:Lambda_Definability} and Alex Simpson \cite{Simpson:TLCA95}. A natural problem is to know whether a similar result could be obtained for Linear Logic. 

Such a result can be seen as a ``separation'' theorem. To obtain such separation theorems, it is a prerequisite to have a ``canonical'' syntax. 
When Jean-Yves Girard introduced Linear Logic (LL) \cite{ll}, he not only introduced a sequent calculus system but also ``proof-nets''. Indeed, as for LJ and LK (sequent calculus systems for intuitionnistic and classical logic, respectively), different proofs in LL sequent calculus can represent ``morally'' the same proof: proof-nets were introduced to find a unique representative for these proofs. 

The technology of proof-nets was completely satisfactory for the multiplicative fragment without units.\footnote{For the multiplicative fragment with units, it has been recently shown \cite{MLLwithunits, HeijltjesH15} that, in some sense, no satisfactory notion of proof-net can exist. Our proof-nets have no jump, so they identify too many sequent calculus proofs, but not more than the relational semantics.} For proof-nets having additives, contractions or weakenings, it was easy to exhibit different proof-nets that should be identified. Despite some flaws, the discovery of proof-nets was striking. In particular, Vincent Danos proved by syntactical means in \cite{phddanos} the confluence of these proof-nets for the Multiplicative Exponential Linear Logic fragment (MELL). For additives, the problem to have a satisfactory notion of proof-net has been addressed in \cite{mallpn}. For MELL, a ``new syntax'' was introduced in \cite{hilbert}. In the original syntax, the following properties of the weakening and of the contraction did not hold:
\begin{itemize}
\item the associativity of the contraction;
\item the neutrality of the weakening for the contraction;
\item the contraction and the weakening as morphisms of co-algebras.
\end{itemize}
But they hold in the new syntax; at least for MELL, we got a syntax that was a good candidate to deserve being considered ``canonical''. Then trying to prove that any two ($\eta$-expanded) MELL proof-nets that are equal in some denotational semantics are $\beta$-joinable has become sensible and had at least the two following motivations:
\begin{itemize}
\item to prove the canonicity of the ``new syntax'' (if we quotient more normal proof-nets, then we would identify proof-nets having different semantics);
\item to prove the confluence by semantic means (if a proof-net reduces to two cut-free proof-nets, then they have the same semantics, so they would be $\beta$-joinable, hence equal).
\end{itemize}
The problem of \emph{injectivity}\footnote{The tradition of the lambda-calculus community rather suggests the word ``completeness'' and the terminology of category theory rather suggests the word ``faithfulness'', but we follow here the tradition of the Linear Logic community.} of the denotational semantics for MELL, which is the question whether equality in the denotational semantics between ($\eta$-expanded) MELL proof-nets is exactly axiomatized by cut-elimination or not, can be seen as a study of the separation property with a semantic approach. The first work on the study of this property in the framework of proof-nets is \cite{Marco} where the
authors deal with the translation into LL of the pure $\lambda$-calculus; it has been studied more recently for the intuitionistic multiplicative
fragment of LL \cite{typedbohm} and for differential nets \cite{separationdiff}. For Parigot's $\lambda \mu$-calculus, see \cite{lmbohm} and \cite{separationsaurin}.

Finally the precise problem of  injectivity for MELL has been addressed by Lorenzo Tortora de Falco in his PhD thesis \cite{phdtortora} and in \cite{injectcoh} for the (multiset based) coherence semantics and the multiset based relational semantics. 
He gave partial results and counter-examples for the coherence semantics: the (multiset based) coherence semantics is not injective for MELL.  
Also, it was conjectured that the relational model is injective for MELL. It is worth mentioning that the injectivity of the relational model trivially entails the injectivity of other denotational semantics: non-uniform coherence semantics (\cite{bucciarelliehrhard01} and \cite{Boudes11}), finiteness spaces \cite{Ehrhard:Finiteness}, weighted sets \cite{AminiEhrhard}... 

We presented an abstract of a proof of this conjecture in \cite{Carvalho:CSL}. This result can be seen as
\begin{itemize}
\item a semantic separation property in the sense of \cite{Friedman};
\item a semantic proof of the confluence property;
\item a proof of the ``canonicity'' of the new syntax of MELL proof-nets;
\item a proof of the fact that if the Taylor expansions of two cut-free MELL proof-nets into differential nets coincide, then the two proof-nets coincide.
\end{itemize}
Differential proof-nets \cite{EhrhardRegnier:DiffNets} are linear approximations of proof-nets that are meant to allow the expression of the Taylor expansion of proof-nets as infinite series of their
linear approximations, which can be seen as a syntactic counterpart of quantitative semantics of Linear Logic (see \cite{Ehrhard:IntroToDiff} for an introduction to the topic). Now, in the present paper, we not only provide a fully detailed proof of this result, we also prove a more general result: We show that if the Taylor expansions of \emph{any} two MELL proof-nets into differential nets coincide, then the two proof-nets coincide, i.e. we removed the assumption of the absence of cuts. Then the injectivity of the relational semantics becomes a corollary of this new result. By the way, the proof is essentially the same as before.\footnote{The two main differences are the following ones: \begin{enumerate} \item The \emph{pseudo-experiments} we consider are not necessarily induced by \emph{experiments} any more, which means that we consider simple differential nets that might reduce to $0$ and have no counterpart in the denotational semantics. \item The constraints on the basis $k$ of the \emph{$k$-heterogeneous pseudo-experiments} we consider are stronger. \end{enumerate}}

In \cite{injectcoh}, a proof of the injectivity of the relational model is given for a weak fragment. 
But despite many efforts (\cite{phdtortora}, \cite{injectcoh}, \cite{boudesunifying}, \cite{pagani06a}, \cite{separationdiff}, \cite{taylorexpansioninverse}...), all
the attempts to prove the conjecture failed up to now. 
New progress was made in \cite{LPSinjectivity}, where it has been proved that the relational semantics is injective for ``connected'' MELL proof-nets. Even though, there, ``connected'' is understood as a very strong assumption, the set of ``connected'' MELL proof-nets contains the fragment of MELL defined by removing weakenings and units. Actually \cite{LPSinjectivity} proved a much stronger result: in the full MELL fragment, two cut-free proof-nets $R$ and $R'$ with the same
interpretation are the same up to the map associating auxiliary doors with their box (we say that they have the same LPS\footnote{The LPS of a cut-free proof-net is the graph obtained by forgetting the outline of the boxes but keeping the trace of the auxiliary doors. The acronym LPS originally stands for ``Linear Proof-Structure''; this terminology might be misleading since the LPS is much more informative than the result of an injective $1$-experiment but is well-established, so we keep the acronym forgetting what it stood for.} - for instance, there are exactly four different proof-nets whose LPS is the LPS depicted in Figure~\ref{fig: LPS},~p.~\ref{fig: LPS}: These four proof-nets are the ones depicted in Figure~\ref{fig: R_1}, Figure~\ref{fig: R_2}, Figure~\ref{fig: R_3} and Figure~\ref{fig: R_4},~p.~\pageref{fig: R_1}). We wrote: ``This result can be expressed in terms of differential nets: two cut-free proof-nets with different LPS have different Taylor expansions. We also believe this work is an essential step
towards the proof of the full conjecture.'' Despite the fact we obtained a very interesting result about \emph{all} the proof-nets (i.e. also for non-``connected'' proof-nets\footnote{and even adding the MIX rule}), the last sentence was a bit too optimistic, since, in the present paper, which presents a proof of the full conjecture, we could not use any previous result nor any previous technique/idea.

Let us give one more interpretation of its significance. First, notice that a proof of this result should consist in showing that, given two non $\beta$-equivalent proof-nets $R$ and $R'$, their respective semantics $\sm{R}$ and $\sm{R'}$ are not equal, i.e. $\sm{R} \setminus \sm{R'} \not= \emptyset$ or $\sm{R'} \setminus \sm{R} \not= \emptyset$.\footnote{The converse, i.e. two $\beta$-equivalent proof-nets have the same semantics, holds by soundness.} But, actually, we prove something much stronger: We prove that, given a proof-net $R$, there exist two points $\alpha$ and $\beta$ such that, for any proof-net $R'$, we have $(\{ \alpha, \beta \} \subseteq \sm{R'} \Leftrightarrow R \simeq_\beta R')$. 

Now, the points of the relational model can be seen as non-idempotent intersection types\footnote{Idempotency of intersection ($\alpha \cap \alpha = \alpha$) does not hold.}  (see \cite{phddecarvalho} and \cite{Carvalhoexecution} for a correspondence between points of the relational model and System R - System R has also been studied recently in \cite{inhabitation}). And the proof given in the present paper uses MELL types only to derive the normalization property; actually we prove the injectivity for cut-free proof-nets in an untyped framework:\footnote{Our proof even works for ``non-correct'' proof-structures (correctness is the property characterizing nets corresponding in a typed framework with proofs in sequent calculus): we could expect that if the injectivity of the relational semantics holds for proof-nets corresponding with MELL sequent calculus, then it still holds for proof-nets corresponding with MELL+MIX sequent calculus, since the category \textbf{Rel} of sets and relations is a compact closed category. The paper \cite{k=2} assuming correctness substituted in the proof the ``bridges'' of \cite{LPSinjectivity}, which are essentially connected components (in the strong sense of the term since the notion of \emph{bridge} ignores boxes - we will consider other ``connected components'' in our proof), by ``empires'', which, in contrast, discriminate between the connectors $\tens$ and $\parr$.} Substituting the assumption that proof-nets are typed by the assumption that proof-nets are normalizable does not change anything to the proof.\footnote{Except that we have to consider the \emph{atomic} subset of the interpretation instead of the full interpretation (see Definition~\ref{definition: atomic point}).} In \cite{CarvPagTdF10}, we gave a semantic characterization of normalizable untyped proof-nets and we characterized ``head-normalizable'' proof-nets as proof-nets having a non-empty interpretation in the relational semantics, while \cite{CarvalhoF16} gave a characterization of strongly normalizable untyped proof-nets \emph{via} non-idempotent intersection types. Principal typings in untyped $\lambda$-calculus are intersection types which allow to recover all the intersection types of some term. If, for instance, we consider the System $R$ of \cite{phddecarvalho} and \cite{Carvalhoexecution}, it is enough to consider some \emph{injective $1$-point}\footnote{An \emph{injective $k$-point} is a point in which all the positive multisets have cardinality $k$ and in which each atom occurring in it occurs exactly twice.} to obtain the principal typing of an untyped $\lambda$-term. But, generally, for normalizable MELL proof-nets, \emph{injective $k$-points}, for any $k$, are not principal typings; indeed, two cut-free MELL proof-nets having the same LPS have the same injective $k$-points for any $k \in \Nat$. In the current paper we show that a $1$-point and a \emph{$k$-heterogeneous point}\footnote{\emph{$k$-heterogeneous points} are points in which every positive multiset has cardinality $k^j$ for some $j> 0$ and, for any $j > 0$, there is at most one occurrence of a positive multiset having cardinality $k^j$ (see our Definition~\ref{definition: k-heterogeneous point}).} together allow to recover the interpretation of any normalizable MELL proof-net; by the way, our result cannot be improved in such a way that one point would be already enough for any MELL proof-net (see our Proposition~\ref{prop: non-principal typing}). So, our work can also be seen as a first attempt to find a right notion of ``principal typing'' of intersection types in Linear Logic. As a consequence, the introduction of technologies allowing to compute directly by semantic means this principal typing should make possible normalization by evaluation, as in \cite{Rocca88} for $\lambda$-calculus; that said, the complexity of such a computation is still unclear.

Section~\ref{section: Syntax} formalizes untyped proof-structures (PS's) and typed proof-structures (typed PS's). Taylor expansion is defined in Section~\ref{section: Taylor expansion}. Section~\ref{section: rebuilding} presents our algorithm leading from the Taylor expansion of $R$ to the rebuilding of $R$ and proves its correctness, which shows the invertibility of Taylor expansion (Corollary~\ref{cor: invertibility of Taylor expansion}). Section~\ref{section: injectivity} is devoted to show the completeness (the injectivity) of the relational semantics: for any typed PS's $R$ and $R'$, we have $(\llbracket R \rrbracket = \llbracket R' \rrbracket \Leftrightarrow R \simeq_\beta R')$ (Corollary~\ref{cor: injectivity}), where $\simeq_\beta$ is the reflexive symmetric transitive closure of the cut-elimination relation, by showing first that cut-free PS's are characterized by their relational interpretation (Theorem~\ref{thm: injectivity in the untyped framework}).

\begin{notations} 
We denote by $\emptysequence$ the empty sequence. 

For any $n \geq 2$, for any $\alpha_1, \ldots, \alpha_{n+1}$, we define, by induction on $n$, the $(n+1)$-tuple $(\alpha_1, \ldots, \alpha_{n+1})$ by setting $(\alpha_1, \ldots, \alpha_{n+1}) = (\alpha_1, (\alpha_2, \alpha_3, \ldots, \alpha_{n+1}))$.

For any set $E$, we denote by $\subsets{E}$ the set of subsets of $E$, by $\finitesubsets{E}$ the set of finite subsets of $E$ and by $\subsetsofsizetwo{E}$ the set $\{ \mathcal{E}_0 \in \subsets{E} ; \Card{\mathcal{E}_0} = 2 \}$.

A multiset $f$ of elements of some set $\mathcal{E}$ is a function $\mathcal{E} \to \Nat$; we denote by $\supp{f}$ \emph{the support of $f$} i.e. the set $\{ e \in \mathcal{E} ; f(e) \not= 0 \}$. A multiset $f$ is said to be \emph{finite} if $\supp{f}$ is finite. The set of finite multisets of elements of some set $\mathcal{E}$ is denoted by $\finitemultisets{\mathcal{E}}$.

If $f$ is a function $\mathcal{E} \to \mathcal{E}'$, $x_0 \in \mathcal{E}$ and $y \in \mathcal{E}'$, then we denote by $f[x_0 \mapsto y]$ the function $\mathcal{E} \to \mathcal{E}'$ defined by $f[x_0 \mapsto y](x) = \left\lbrace \begin{array}{ll} f(x) & \textit{if $x \not= x_0$;}\\ y & \textit{if $ x = x_0$.} \end{array} \right.$ If $f$ is a function $\mathcal{E} \to \mathcal{E}'$ and $\mathcal{E}_0 \subseteq \dom{f} = \mathcal{E}$, then we denote by $f[\mathcal{E}_0]$ the set $\{ f(x) ; x \in \mathcal{E}_0 \}$ and by $f_\ast$ the function $\subsets{E} \to \subsets{E}'$ that associates with every $\mathcal{E}_0 \in \subsets{E}$ the set $f[\mathcal{E}_0]$.
\end{notations}

\section{Syntax}\label{section: Syntax}

\subsection{Differential proof-structures}

We introduce the syntactical objects we are interested in. As recalled in the introduction, correctness does not play any role, that is why we do not restrict our nets to be correct and we rather consider proof-structures (\emph{PS's}). Since it is convenient to represent formally our proof using differential nets possibly with boxes (\emph{differential PS's}), we define PS's as differential PS's satisfying some conditions (Definition~\ref{defin: differential PS}). More generally, \emph{differential in-PS's} are defined by induction on the depth, which is the maximum level of box nesting: Definition~\ref{defin: diff ground-structure},  Definition~\ref{defin: ground-structure} and Definition~\ref{defin: typed simple differential net} concern what happens at depth $0$, i.e. whenever there is no box; in particular, typed ground-structures allow to represent proofs of the multiplicative fragment (MLL).

We set $\typesoflinks = $ $\{ \tens, $ $\parr, $ $\one, \bottom, \cod, \contr, \textit{ax} \}$.

\begin{defi}\label{defin: diff ground-structure}
A \emph{pre-net} is a $7$-tuple $\mathcal{G} = (\mathcal{P}, l, \mathcal{W}, \mathcal{A}, $ $\mathcal{C}, $ $t, \mathcal{L})$, where
\begin{itemize}
\item $\mathcal{P}$ is a finite set; the elements of $\mathcal{P}$ are the \emph{ports of $\mathcal{G}$};
\item $l$ is a function $\mathcal{P} \to \typesoflinks$; the element $l(p)$ of $\typesoflinks$ is the \emph{label of $p$ in $\mathcal{G}$};
\item $\mathcal{W}$ is a subset of $\mathcal{P}$; the elements of $\mathcal{W}$ are the \emph{wires of $\mathcal{G}$};\footnote{We identify a wire with its source port.}
\item $\mathcal{A} \subseteq \subsetsofsizetwo{\mathcal{P}}$ is a partition of $\{ p \in \mathcal{P} ; l(p) = \textit{ax}\}$; the elements of $\mathcal{A}$ are the \emph{axioms of $\mathcal{G}$};
\item $\mathcal{C}$ is a subset of $\subsetsofsizetwo{\mathcal{P} \setminus \mathcal{W}}$ such that $(\forall c, c' \in \mathcal{C}) (c \cap c' \not= \emptyset \Rightarrow c=c')$; the elements of $\mathcal{C}$ are the \emph{cuts of $\mathcal{G}$};
\item $t$ is a function $\mathcal{W} \to \{ p \in \mathcal{P} ; l(p) \notin \{ \one, \bottom, \textit{ax} \} \}$ such that, for any $p \in \mathcal{P}$, we have $(l(p) \in$ $\{ \tens, $ $\parr \}$ $\Rightarrow$ $\Card{\{ w \in \mathcal{W} ; t(w) = p \}} = 2)$; 
if $t(w) = p$, then $w$ is a \emph{premise of $p$}; the \emph{arity $\arity{\mathcal{G}}(p)$ of $p$} is the number of its premises;
\item and $\mathcal{L}$ is a subset of $\{ w \in \mathcal{W}; l(t(w)) \in \{ \tens, \parr \} \}$ such that $(\forall p \in \mathcal{P})$ $(l(p) \in \{ \tens, \parr \}$ $\Rightarrow$ $\Card{{\{ w \in \mathcal{L} ; t(w) = p \}}} = 1)$; if $w \in \mathcal{L}$ such that $t(w) = p$, then $w$ is \emph{the left premise of $p$}; if $w \in \mathcal{W} \setminus \mathcal{L}$ such that $l(t(w)) \in \{ \tens, \parr \}$, then $w$ is \emph{the right premise of $t(w)$}.
\end{itemize}

We set $\wires{\mathcal{G}} = \mathcal{W}$, $\ports{\mathcal{G}} = \mathcal{P}$, $\labelofport{\mathcal{G}} = l$, $\target{\mathcal{G}} = t$, $\leftwires{\mathcal{G}} = \mathcal{L}$, $\axioms{\mathcal{G}} = \mathcal{A}$ and $\cuts{\mathcal{G}} = \mathcal{C}$. The set $\conclusions{\mathcal{G}} = \mathcal{P} \setminus (\mathcal{W} \cup \bigcup \mathcal{C})$ is the set of \emph{conclusions of $\mathcal{G}$}. For any $t \in \typesoflinks$, we set $\portsoftype{t}{\mathcal{G}} = \{ p \in \mathcal{P}; l(p) = t \}$; we set $\multiplicativeports{\mathcal{G}} = \portsoftype{\tens}{\mathcal{G}} \cup \portsoftype{\parr}{\mathcal{G}}$; the set $\exponentialports{\mathcal{G}}$ of \emph{exponential ports of $\mathcal{G}$} is $\portsoftype{\cod}{\mathcal{G}} \cup \portsoftype{\contr}{\mathcal{G}}$. 

A \emph{pre-ground-structure} is a pre-net $\mathcal{G}$ such that $\im{\target{\mathcal{G}}} \cap \portsoftype{\cod}{\mathcal{G}} = \emptyset$.
\end{defi}

Notice that, although we depict cuts as wires\footnote{like wires between principal ports in the formalism of interaction nets \cite{Lafont:fromproofnetstointeractionnets} (but, in contrast with interaction nets, Definition~\ref{defin: diff ground-structure} allows axiom-cuts)} (see the content of the box $o_3$ of the PS $R$ - the third leftmost box at depth $0$ of Figure~\ref{fig: new_example},~p.~\pageref{fig: new_example} - for an example of a cut), cuts are not elements of the set $\mathcal{W}$. A wire $p \in \mathcal{W}$ goes from a port that has the same name $p$ to its target $t(p)$; instead of using arrows in our figures to indicate the direction, we will use the following convention: Unless $l(p) = \textit{ax}$ (but in this case there is no ambiguity since such a port $p$ can never be the target of any wire), whenever a wire goes from $p$ to some port $q$, it will be depicted by an edge reaching underneath the vertice corresponding to $p$.

\begin{defi}\label{defin: ground-structure}
Given a pre-net $\mathcal{G}$, we denote by $\le_{\mathcal{G}}$ the reflexive transitive closure of the binary relation $P_{\mathcal{G}}$ on $\ports{\mathcal{G}}$ defined by $(P_{\mathcal{G}}(q, p) \Leftrightarrow \target{\mathcal{G}}(p) = q)$.

A \emph{simple differential net} (resp. a  \emph{ground-structure}) is a pre-net $\mathcal{G}$ (resp. a pre-ground-structure) such that the relation $P_{\mathcal{G}}$ is irreflexive and the relation $\leq_{\mathcal{G}}$ is antisymmetric.
\end{defi}

\begin{exa}\label{example: ground-structure}
The ground-structure $\mathcal{G}$ of the content of the box $o_1$ of the PS $R$ (the leftmost box of Figure~\ref{fig: new_example}) is defined by: $\ports{\mathcal{G}} = \{ p_1, p_2, p_3, p_4 \}$, $\wires{\mathcal{G}} = \{ p_2 \}$, $\labelofport{\mathcal{G}}(p_1) = \bot = \labelofport{\mathcal{G}}(p_2)$, $\labelofport{\mathcal{G}}(p_3) = \contr$, $\labelofport{\mathcal{G}}(p_4) = \one$, $\target{\mathcal{G}}(p_2) = p_3$ and $\cuts{\mathcal{G}} = \emptyset = \axioms{\mathcal{G}}$.
\end{exa}

Types will be used only in Subsection~\ref{subsection: Typed}. We introduce them right now in order to help the reader to see how ground-structures can represent MLL proofs.

We are given a set $\propvar$ of propositional variables. We set $\propvar^{\perp} = \{ X^\perp ; X \in \propvar \}$. 
We define the set $\mathbb{T}$ of \emph{MELL types} as follows: $\mathbb{T} ::= \propvar \: \vert \: \propvar^\perp \: \vert \: \one \: \vert \: \bot \vert \: (\mathbb{T} \tens \mathbb{T}) \: \vert \: (\mathbb{T} \parr \mathbb{T}) \: \vert \: \cod \mathbb{T} \: \vert \: \contr \mathbb{T}$. We extend the operator $-^{\perp}$ from the set $\propvar$ to the set $\mathbb{T}$ by defining $T^\perp \in \mathbb{T}$ by induction on $T$, for any $T \in \mathbb{T} \setminus \propvar$, as follows: $(X^{\perp})^{\perp} = X$ if $X \in \propvar$; $\one^\perp = \bot$; $\bot^\perp = \one$; $(A \tens B)^\perp = (A^\perp \parr B^\perp)$; $(A \parr B)^\perp = (A^\perp \tens B^\perp)$; $(\cod A)^\perp = \contr A^\perp$; $(\contr A)^\perp = \cod A^\perp$.

\begin{defi}\label{defin: typed simple differential net}
A \emph{typed simple differential net}  (resp. a \emph{typed ground-structure}) is a pair $(\mathcal{G}, \mathsf{T})$ such that $\mathcal{G}$ is a pre-net (resp. a pre-ground-structure) and $\mathsf{T}$ is a function $\ports{\mathcal{G}} \to \mathbb{T}$ such that 
\begin{itemize}
\item for any axiom $a$ of $\mathcal{G}$, there exists a propositional variable $C$ such that $\mathsf{T}[a] = \{ C, C^\perp \}$;\footnote{Our typed proof-structures are $\eta$-expanded.}
\item for any cut $c$ of $\mathcal{G}$, there exists a MELL type $T$ such that $\mathsf{T}[c] = \{ T, T^\perp \}$;
\end{itemize}
and, for any $p \in \ports{\mathcal{G}}$, the following properties hold:
\begin{itemize}
\item $(\labelofport{\mathcal{G}}(p) \in \{ \one, \bottom \} \Rightarrow \mathsf{T}(p) = \labelofport{\mathcal{G}}(p))$;
\item if $p \in \portsoftype{\tens}{\mathcal{G}}$, then $\mathsf{T}(p) = (\mathsf{T}(w_1) \tens \mathsf{T}(w_2))$, where $w_1$ (resp. $w_2$) is the left premise of $p$ (resp. the right premise of $p$);
\item if $p \in \portsoftype{\parr}{\mathcal{G}}$, then $\mathsf{T}(p) = (\mathsf{T}(w_1) \parr \mathsf{T}(w_2))$, where $w_1$ (resp. $w_2$) is the left premise of $p$ (resp. the right premise of $p$);
\item and, if $p$ is an exponential port of $\mathcal{G}$, then there exists a MELL type $C$ such that $(\mathsf{T}(p) = \labelofport{\mathcal{G}}(p) C \wedge (\forall w \in \mathcal{W}) (\target{\mathcal{G}}(w) = p \Rightarrow \mathsf{T}(w) = C))$.
\end{itemize}
\end{defi}

\begin{figure}
\begin{minipage}{0.4\textwidth}
\centering
\scalebox{\scalefactten}{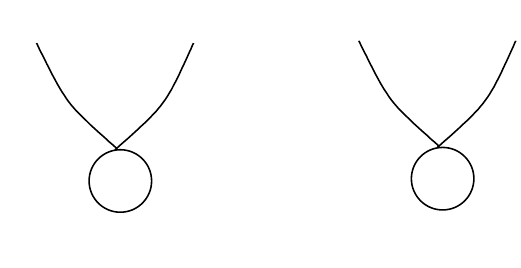}
\caption{Typing of exponential ports}
\label{fig: Typing of exponential ports}
\end{minipage}\hfill
\begin{minipage}{0.4\textwidth}
\centering
\scalebox{\scalefactR}{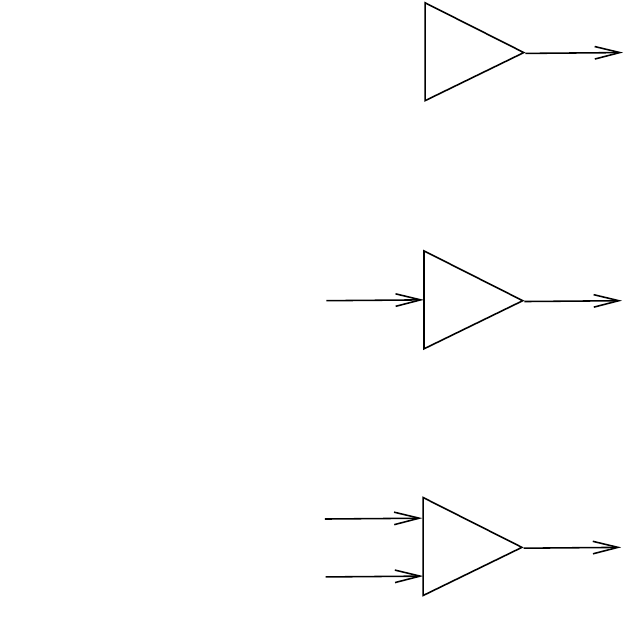}
\caption{Original cells of differential nets}
\label{fig: Original cells}
\end{minipage}\hfill
\end{figure}

Notice that the ports labelled by ``$\cod$'' are completely symmetric to the ports labelled by ``$\contr$'': They can have any number of premises and the typing rule systematically introduces the connector $\cod$ (see Figure~\ref{fig: Typing of exponential ports}, while in \cite{EhrhardRegnier:DiffNets}, there were three different kinds of \emph{cells}: co-weakenings (of arity $0$) and co-derelictions (of arity $1$) that introduce the connector $\cod$, and co-contractions (of arity $2$) that do not modify the type (see Figure~\ref{fig: Original cells}).

\begin{fact}
Let $(\mathcal{G}, \mathsf{T})$ be a typed ground-structure (resp. a typed simple differential net). Then $\mathcal{G}$ is a ground-structure (resp. a simple differential net).
\end{fact}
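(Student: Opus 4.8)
The plan is to exhibit a strictly decreasing integer measure along the relation $P_{\mathcal{G}}$, obtained from the typing $\mathsf{T}$, and then read off irreflexivity and antisymmetry as a general well-foundedness fact. Concretely, for a MELL type $T$ let $s(T) \in \Nat$ be its structural size, defined by $s(X) = s(X^\perp) = s(\one) = s(\bottom) = 1$, $s(A \tens B) = s(A \lpar B) = s(A) + s(B) + 1$ and $s(\cod A) = s(\contr A) = s(A) + 1$. The single key claim I would prove is: whenever $w \in \wires{\mathcal{G}}$ with $\target{\mathcal{G}}(w) = p$, one has $s(\mathsf{T}(w)) < s(\mathsf{T}(p))$; equivalently, $P_{\mathcal{G}}(p, w)$ implies $s(\mathsf{T}(w)) < s(\mathsf{T}(p))$.

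The claim is established by case analysis on $\labelofport{\mathcal{G}}(p)$, which — since $p = \target{\mathcal{G}}(w)$ lies in the codomain of $\target{\mathcal{G}}$ — belongs to $\{ \tens, \lpar, \cod, \contr \}$. If $p \in \portsoftype{\tens}{\mathcal{G}}$ (resp. $p \in \portsoftype{\lpar}{\mathcal{G}}$), then $\mathsf{T}(p) = \mathsf{T}(w_1) \tens \mathsf{T}(w_2)$ (resp. $\mathsf{T}(p) = \mathsf{T}(w_1) \lpar \mathsf{T}(w_2)$) with $w$ one of the two premises $w_1, w_2$, so $s(\mathsf{T}(p)) = s(\mathsf{T}(w_1)) + s(\mathsf{T}(w_2)) + 1 > s(\mathsf{T}(w))$; and if $p$ is an exponential port, then $\mathsf{T}(p) = \labelofport{\mathcal{G}}(p)\, C$ with $\mathsf{T}(w) = C$, whence $s(\mathsf{T}(p)) = s(C) + 1 > s(C) = s(\mathsf{T}(w))$. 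In every case the typing constraints of Definition~\ref{defin: typed simple differential net} force a strict increase of $s \circ \mathsf{T}$ from $w$ to its target.

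From the claim the Fact follows by a standard argument, carried out identically in the two cases — only the underlying object (a pre-net, resp. a pre-ground-structure) differs, and being a pre-ground-structure is already part of the hypothesis and is preserved verbatim. Irreflexivity of $P_{\mathcal{G}}$: if $P_{\mathcal{G}}(p,p)$ held, the claim would give $s(\mathsf{T}(p)) < s(\mathsf{T}(p))$, a contradiction. Antisymmetry of $\le_{\mathcal{G}}$: assume $p \le_{\mathcal{G}} q$ and $q \le_{\mathcal{G}} p$ but $p \neq q$; since $\le_{\mathcal{G}}$ is the reflexive transitive closure of $P_{\mathcal{G}}$, each of the two relations is witnessed by a $P_{\mathcal{G}}$-chain of length $\geq 1$, and along any such chain $s \circ \mathsf{T}$ strictly decreases (by the claim at each step), yielding $s(\mathsf{T}(q)) < s(\mathsf{T}(p))$ and $s(\mathsf{T}(p)) < s(\mathsf{T}(q))$ simultaneously — again a contradiction. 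Hence $P_{\mathcal{G}}$ is irreflexive and $\le_{\mathcal{G}}$ is antisymmetric, so $\mathcal{G}$ is a simple differential net (resp. a ground-structure).

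I do not expect a genuine obstacle here. The only point requiring care is bookkeeping the orientation of $P_{\mathcal{G}}$ (recall $P_{\mathcal{G}}(q,p) \Leftrightarrow \target{\mathcal{G}}(p) = q$, so the wire $p$ sits strictly below its target $q$ for $s \circ \mathsf{T}$), and noting that co-weakening ports of arity $0$ cause no trouble since the measure argument only ever compares a target with an \emph{existing} premise. The essence is simply that a well-typed net cannot contain an oriented cycle, because types grow strictly towards targets.
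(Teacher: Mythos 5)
Your proof is correct and follows exactly the paper's approach: the paper's entire proof is the one-line observation that for any wire $p \in \wires{\mathcal{G}}$, the size of $\mathsf{T}(\target{\mathcal{G}}(p))$ is strictly greater than the size of $\mathsf{T}(p)$, which is precisely your key claim. You have merely spelled out the case analysis on the label of the target and the routine derivation of irreflexivity and antisymmetry from this strictly monotone measure, all of which the paper leaves implicit.
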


\begin{proof}
It is enough to notice that, for any $p \in \wires{\mathcal{G}}$, the size of $\mathsf{T}(\target{\mathcal{G}}(p))$ is greater than the size of $\mathsf{T}(p)$.
\end{proof}

A ground-structure $\mathcal{G}$ such that $\portsoftype{\cod}{\mathcal{G}} = \emptyset$ is essentially a PS of depth $0$, so MLL proofs can be represented by typed ground-structures.

\begin{figure}
\begin{minipage}{0.4\textwidth}
\centering
\resizebox{\textwidth}{!}{
\AxiomC{\tiny{$\vdash A, A^\perp$}}
\AxiomC{\tiny{$\vdash B, B^\perp$}}
\RightLabel{\tiny{$\otimes$}}\BinaryInfC{\tiny{$\vdash (A \otimes B), A^\perp, B^\perp$}}
\RightLabel{\tiny{$\parr$}}\UnaryInfC{\tiny{$\vdash (A \otimes B), (A^\perp \parr B^\perp)$}}
\AxiomC{\tiny{$\vdash \underline{A}, \underline{A}^\perp$}}
\RightLabel{\tiny{$\otimes$}}\BinaryInfC{\tiny{$\vdash ((A \otimes B) \otimes \underline{A}), (A^\perp \parr B^\perp), \underline{A}^\perp$}}
\DisplayProof}
\caption{Proof $\pi_1$}\label{fig: sequent calculus 1}
\end{minipage}\hfill
\begin{minipage}{0.4\textwidth}
\centering
\resizebox{\textwidth}{!}{
\AxiomC{\tiny{$\vdash A, A^\perp$}}
\AxiomC{\tiny{$\vdash B, B^\perp$}}
\RightLabel{\tiny{$\otimes$}}\BinaryInfC{\tiny{$\vdash (A \otimes B), A^\perp, B^\perp$}}
\AxiomC{\tiny{$\vdash \underline{A}, \underline{A}^\perp$}}
\RightLabel{\tiny{$\otimes$}}\BinaryInfC{\tiny{$\vdash ((A \otimes B) \otimes \underline{A}), A^\perp, B^\perp, \underline{A}^\perp$}}
\RightLabel{\tiny{$\parr$}}\UnaryInfC{\tiny{$\vdash ((A \otimes B) \otimes \underline{A}), (A^\perp \parr B^\perp), \underline{A}^\perp$}}
\DisplayProof}
\caption{Proof $\pi_2$}\label{fig: sequent calculus 2}
\end{minipage}\hfill
\end{figure}
\begin{figure}
\begin{minipage}{0.4\textwidth}
\centering
\resizebox{\textwidth}{!}{
\AxiomC{\tiny{$\vdash A, A^\perp$}}
\AxiomC{\tiny{$\vdash B, B^\perp$}}
\RightLabel{\tiny{$\otimes$}}\BinaryInfC{\tiny{$\vdash (A \otimes B), A^\perp, B^\perp$}}
\AxiomC{\tiny{$\vdash \underline{A}, \underline{A}^\perp$}}
\RightLabel{\tiny{$\otimes$}}\BinaryInfC{\tiny{$\vdash ((A \otimes B) \otimes \underline{A}), A^\perp, B^\perp, \underline{A}^\perp$}}
\RightLabel{\tiny{$\parr$}}\UnaryInfC{\tiny{$\vdash ((A \otimes B) \otimes \underline{A}), (\underline{A}^\perp \parr B^\perp), A^\perp$}}
\DisplayProof}
\caption{Proof $\pi_3$}\label{fig: sequent calculus 3}
\end{minipage}
\end{figure}
\begin{figure}
\centering
\begin{minipage}{0.45\textwidth}
\centering
\resizebox{.7 \textwidth}{!}{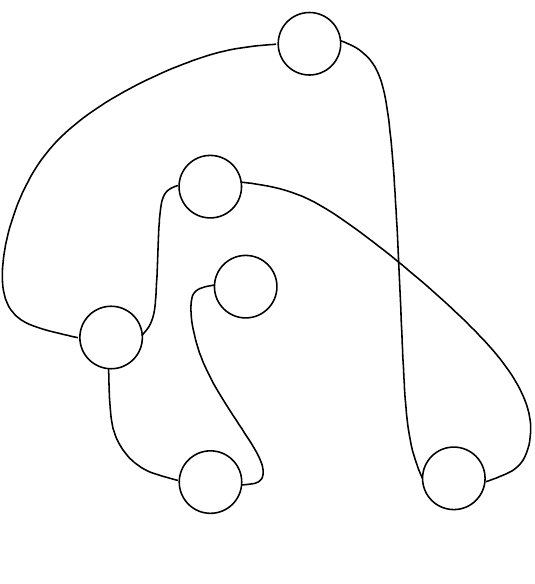}
\captionsetup{width=.9 \textwidth}
\caption{The typed proof-net $R'$}
\label{fig: Proof-net R'}
\end{minipage}\hfill
\begin{minipage}{0.45\textwidth}
\centering
\resizebox{.7 \textwidth}{!}{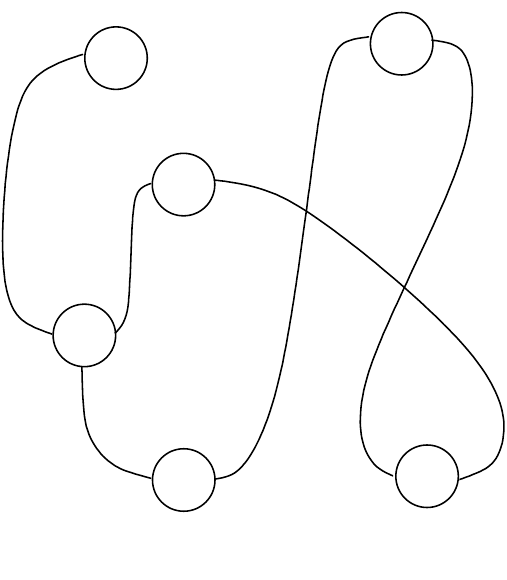}
\captionsetup{width=.9 \textwidth}
\caption{The typed proof-net $R''$}
\label{fig: Proof-net R''}
\end{minipage}\hfill
\end{figure}

\begin{exa}
As we wrote in the introduction, the motivation for proof-nets was to have a canonical object to represent different sequent calculus proofs that should be identified. For instance, Figure~\ref{fig: sequent calculus 1}, Figure~\ref{fig: sequent calculus 2} and Figure~\ref{fig: sequent calculus 3} are three different sequent calculus proofs of the same sequent,\footnote{We underline some occurrences of propositional variables in order to distinguish between different occurrences of the same propositional variable instead of using explicitly the exchange rule.} but the two first proofs are two different sequentializations of the same typed proof-net $(R', \textsf{T}')$ depicted in Figure~\ref{fig: Proof-net R'}, while the third proof is a sequentialization of the typed proof-net $(R'', \textsf{T}'')$ depicted in Figure~\ref{fig: Proof-net R''}. Let $\mathcal{G}'$ (resp. $\mathcal{G}''$) be the ground-structure that corresponds to the proof-net $R'$ (resp. $R''$). 

We can define $\mathcal{G}'$ and $\mathcal{G}''$ as follows: $\ports{\mathcal{G}'} = \{ p_1, p_2, p_3, p_4, p_5, p_6, p_7, p_8, p_9 \} = \ports{\mathcal{G}''}$; $\axioms{\mathcal{G}'} = \{ \{ p_3, p_4 \}, $ $\{ p_5, p_6 \},$ $\{ p_7, p_8 \} \} = $ $\axioms{\mathcal{G}''}$; $\labelofport{\mathcal{G}'} = \labelofport{\mathcal{G}''}$ with $\labelofport{\mathcal{G}'}(p_i) = \left\lbrace \begin{array}{ll} \textit{ax} & \textit{if $3 \leq i \leq 8$;}\\ \tens & \textit{if $i \in \{ 1, 9 \}$;}\\ \parr & \textit{if $i = 2$;} \end{array}\right.$ $\wires{\mathcal{G}'} = \{ p_3, $ $p_5, $ $p_6, $ $p_7, $ $p_8,$ $p_9 \}$ and $\wires{\mathcal{G}''} = \{ p_3, $ $p_5, $ $p_6, $ $p_7, $ $p_4,$ $p_9 \}$; $\leftwires{\mathcal{G}'} = \{ p_7, p_8, p_9 \}$ and $\leftwires{\mathcal{G}''} = \{ p_7, p_4, p_9 \}$; and $\target{\mathcal{G}'}(p_3) = p_1 = \target{\mathcal{G}''}(p_3)$, $\target{\mathcal{G}'}(p_6) = p_2 = \target{\mathcal{G}''}(p_6)$,  $\target{\mathcal{G}'}(p_7) = p_9 = \target{\mathcal{G}''}(p_7)$, $\target{\mathcal{G}'}(p_5) = p_9 = \target{\mathcal{G}''}(p_5)$, $\target{\mathcal{G}'}(p_9) = p_1 = \target{\mathcal{G}''}(p_9)$, $\target{\mathcal{G}'}(p_8) = p_2$ and $\target{\mathcal{G}''}(p_4) = p_2$.
\end{exa}

We can now define our notion of \emph{PS}: We recall that this notion formalizes Danos \& Regnier's new syntax, and not Girard's original syntax. Figure~\ref{fig: Girard_syntax1}, Figure~\ref{fig: Girard_syntax2} and Figure~\ref{fig: new_syntax} illustrate some differences between the two syntaxes: Figure~\ref{fig: Girard_syntax1} and Figure~\ref{fig: Girard_syntax2} are two different objects in the original syntax, both of them are represented in the new syntax by the PS that is depicted in Figure~\ref{fig: new_syntax}. In particular, in the new syntax, auxiliary doors of boxes are always premises of contractions. Since between auxiliary doors and contractions several box boundaries might be crossed, we need the auxiliary notion of \emph{(differential) in-PS}. Concerning \emph{differential PS's}, it is worth noticing that the content of each of their boxes is an in-PS, in particular every $\cod$-port inside is always the main door of a box.

\begin{figure}
\centering
\begin{minipage}{0.3\textwidth}
\centering
\resizebox{\textwidth}{!}{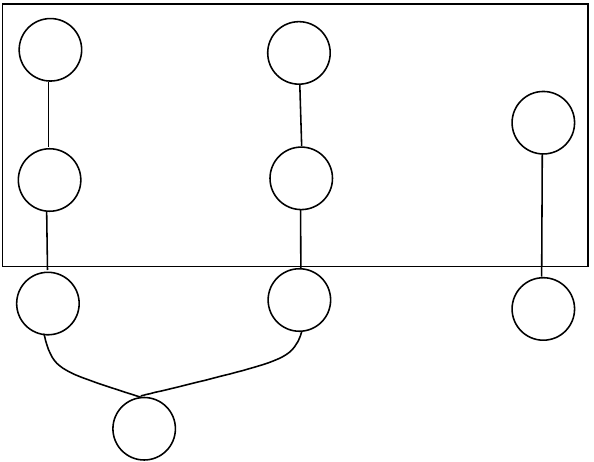}
\captionsetup{width=.8 \textwidth}
\caption{Proof-net in Girard's original syntax}
\label{fig: Girard_syntax1}
\end{minipage}\hfill
\begin{minipage}{0.3\textwidth}
\centering
\resizebox{\textwidth}{!}{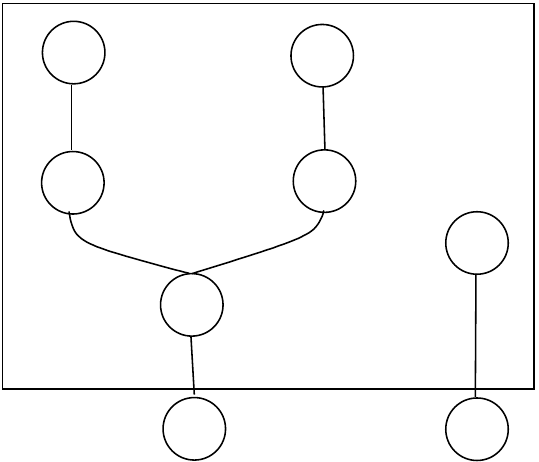}
\captionsetup{width=.8 \textwidth}
\caption{Proof-net in Girard's original syntax}
\label{fig: Girard_syntax2}
\end{minipage}\hfill
\begin{minipage}{0.3\textwidth}
\centering
\resizebox{\textwidth}{!}{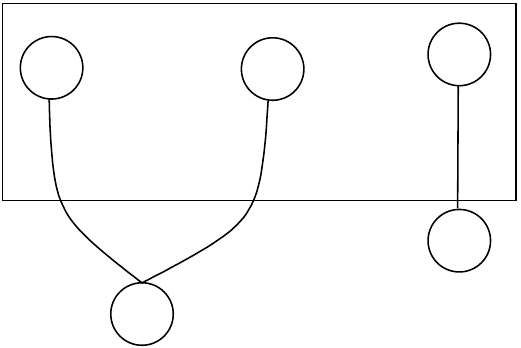}
\captionsetup{width=.8 \textwidth}
\caption{PS (Danos \& Regnier's new syntax)}
\label{fig: new_syntax}
\end{minipage}\hfill
\end{figure}

\begin{defi}\label{defin: differential PS}
For any $d \in \Nat$, we define, by induction on $d$, the set of \emph{differential in-PS's of depth $d$} (resp. the set of \emph{in-PS's of depth $d$}) and, for any differential in-PS $S$ of depth $d$, the sets $\ports{S}$ and $\conclusions{S} \subseteq \ports{S}$. A \emph{differential $in$-PS of depth $d$} (resp. an \emph{$in$-PS of depth $d$}) is a 4-tuple $S = (\mathcal{G}, \mathcal{B}_0, B_0, t)$ such that
\begin{itemize}
\item $\mathcal{G}$ is a simple differential net (resp. a ground-structure); we set $\groundof{S} = \mathcal{G}$;
\item $\mathcal{B}_0 \subseteq \{ p \in \portsoftype{\cod}{\mathcal{G}} ; \arity{\mathcal{G}}(p) = 0 \}$ (resp. $\mathcal{B}_0 = \portsoftype{\cod}{\mathcal{G}}$) such that $\emptysequence \notin \mathcal{B}_0$ and, for any pair $(p_1, p_2) \in \ports{\mathcal{G}}$, we have $p_1 \notin \mathcal{B}_0$ and, if $p_1$ is a pair $(p, p')$ too, then $p \notin \mathcal{B}_0$;\footnote{We cannot simply disallow pairs in $\ports{\mathcal{G}}$ since in the definition of the differential in-PS $\termofTaylor{R}{e}{i}$ (Definition~\ref{defin: Taylor}) we will use pairs to denote copies of ports of the contents of the boxes that have been expanded.} the elements of $\mathcal{B}_0$ are the \emph{boxes of $S$ at depth $0$};\footnote{We identify a box with its main door.}
\item $B_0$ is a function that associates with every $o \in \mathcal{B}_0$ an in-PS of depth $< d$ that enjoys the following property: if $d > 0$, then there exists $o \in \mathcal{B}_0$ such that $B_0(o)$ is an in-PS of depth $d-1$;\footnote{The function $B_0$ maps boxes at depth $0$ to their contents.} we set $\ports{S} = \ports{\mathcal{G}} \cup \bigcup_{o \in \mathcal{B}_0} (\{ o \} \times \ports{B_0(o)})$; the elements of $\ports{S}$ are the \emph{ports of $S$};
\item $t$ is a partial function $\bigcup_{o \in \mathcal{B}_0} (\{ o \} \times \conclusions{B_0(o)}) \rightharpoonup \portsoftype{\contr}{\mathcal{G}} \cup \mathcal{B}_0$ such that, for any $o \in \mathcal{B}_0$, there is a unique $q_o \in \conclusions{\groundof{B_0(o)}}$, which we will denote by $\aboveBang{S}{o}$, such that $\{ (o, q_o) \} = $ $\{ q \in \dom{t} ; t(q) = o \}$;\footnote{The function $t$ maps to exponential ports at depth $0$ their premises that are doors of boxes.} 
we set $\conclusions{S} = \conclusions{\groundof{S}} \cup \bigcup_{o \in \mathcal{B}_0} \{ (o, q) ; (q \in \conclusions{B_0(o)} \wedge (o, q) \notin \dom{t}) \}$ and $\nonshallowConclusions{S} = \conclusions{S} \setminus \conclusions{\groundof{S}}$; the elements of $\conclusions{S}$ (resp. of $\conclusions{\groundof{S}}$, resp. of $\nonshallowConclusions{S}$) are the (resp. \emph{shallow}, resp. \emph{non-shallow}) \emph{conclusions of $S$}.
\end{itemize}

We set $\portsatzero{S} = \ports{\groundof{S}}$ (the elements of $\portsatzero{S}$ are the \emph{ports of $S$ at depth $0$}) and, for any $l \in \typesoflinks \cup \{ \textit{m}, \textit{e} \}$, we set $\portsatzerooftype{l}{S} = \portsoftype{l}{\groundof{S}}$. We set $\wiresatzero{S} = \wires{\groundof{S}}$, $\leftwiresatzero{S} = \leftwires{\groundof{S}}$, $\axiomsatzero{S} = \axioms{\groundof{S}}$ and $\cutsatzero{S} = \cuts{\groundof{S}}$. The function $\arity{S}: \portsatzero{S} \to \Nat$ is defined by setting $\arity{S}(p) = \arity{\groundof{S}}(p) + \Card{\{ q \in \dom{t} ; t(q) = p \}}$ for any $p \in \portsatzero{S}$. The integer $\cosize{S}$ is defined by induction on $\depthof{S}$:\footnote{The supremum is taken in $\Nat$, hence, if $S$ is the empty PS, then $\cosize{S} = 0$.} $$\cosize{S} = \sup{(\{ \arity{S}(p) ; p \in \portsatzero{S} \} \cup \{ \cosize{B_0(o)} ; o \in \mathcal{B}_0(S) \})}$$

We set $\boxesatzero{S} = \mathcal{B}_0$, $B_S = B_0$ and $\target{S} = t$. For any $o \in \boxesatzero{S}$, we set $\temporaryConclusions{S}{o} = \{ p \in \conclusions{B_S(o)} ; (o, p) \in \dom{\target{S}} \}$.\footnote{Equivalently, $\temporaryConclusions{S}{o} = \{ p \in \conclusions{B_S(o)} ; (o, p) \notin \conclusions{S}\}$.} We denote by $\targetAnyPort{S}$ the function $\wiresatzero{S} \cup \bigcup_{o \in \boxesatzero{S}} (\{ o \} \times \temporaryConclusions{S}{o}) \to \portsatzero{S}$ that associates with every $p \in \wiresatzero{S}$ the port $\target{\groundof{S}}(p)$ of $\groundof{S}$ and with every $(o, p)$, where $o \in \boxesatzero{S}$ and $p \in \temporaryConclusions{S}{o}$, the port $\target{S}(o, p)$ of $\groundof{S}$. We set $\contractionsUnder{S}{o} = \target{S}[\{ o \} \times \temporaryConclusions{S}{o}] \setminus \{ o \}$ for any $o \in \boxesatzero{S}$. The set $\boxes{S}$ of \emph{boxes of $S$} is defined by induction on $\depthof{S}$: $\boxes{S} = \boxesatzero{S} \cup \bigcup_{o \in \boxesatzero{S}} \{ (o, o') ; o' \in \boxes{B_0(o)} \}$. For any binary relation $P \in \{ \geq, =, < \}$ on $\Nat$, for any $i \in \Nat$, we set $\mathcal{B}_0^{P i}(S) = \{ o \in \boxesatzero{S} ; P(\depthof{B_S(o)}, i) \}$ and we define, by induction on $\depthof{S}$, the set $\mathcal{B}^{P i}(S) \subseteq \boxes{S}$ as follows: $\mathcal{B}^{P i}(S) = \mathcal{B}_0^{P i}(S) \cup \bigcup_{o \in \boxesatzero{S}} \{ (o, o') ; o'  \in \mathcal{B}^{P i}(B_S(o)) \}$. We set $\portsatdepthgreater{S}{i} = \bigcup_{o \in \boxesatzerogeq{S}{i}} \{ (o, q) ; q \in \ports{B_S(o)} \}$ and $\mathcal{P}_{\leq i}(S) = \ports{S} \setminus \portsatdepthgreater{S}{i}$. 

A \emph{differential PS} (resp. a \emph{PS}) is a differential in-PS (resp. an in-PS) $S$ such that $\conclusions{S} \subseteq \conclusions{\groundof{S}}$.\footnote{Equivalently, a \emph{differential PS} (resp. a \emph{PS}) is a differential in-PS (resp. an in-PS) $S$ such that $(\forall o \in \boxesatzero{S}) \temporaryConclusions{S}{o} = \conclusions{B_S(o)}$.}
\end{defi}

The set of \emph{cocontractions} of an in-PS $S$ is the set $\portsatzerooftype{\cod}{S} \setminus \boxesatzero{S}$. Notice that an in-PS is a differential in-PS with no co-contraction.

It is worth noticing that the binary relation $\leq_S$ on the set $\boxes{S} \cup \{ \emptysequence \}$ defined by $((o_1, \ldots, o_m) \leq_S (o'_1, \ldots, o'_n) \Leftrightarrow (m \leq n \wedge (o_1, \ldots, o_m) = (o'_1, \ldots, o'_m)))$ defines a tree with $\emptysequence$ as the root.

\begin{exa}
If $R$ is the PS of depth $2$ depicted in Figure~\ref{fig: new_example}, then we have $\boxesatzero{R} = \{ o_1, o_2, o_3, o_4 \}$, $\boxes{R} = \{ o_1, o_2, o_3, o_4, (o_2, o), (o_2, o'), (o_4, o), (o_4, o') \}$, $\exactboxes{R}{0} = \{ o_1, $ $(o_2, o),$ $ (o_2, o'), $ $o_3,$ $(o_4, o)$, $(o_4, o') \}$, $\exactboxes{R}{1} = \{ o_2, o_4 \}$, $\conclusions{R} = \{ p_1, $ $p_2, $ $p_3, $ $p_4, $ $p_5, $ $p_6, $ $p_7 \}$ and $\groundof{B_R(o_1)}$ is the ground-structure of Example~\ref{example: ground-structure}.
\begin{figure}
\centering
\resizebox{\textwidth}{!}{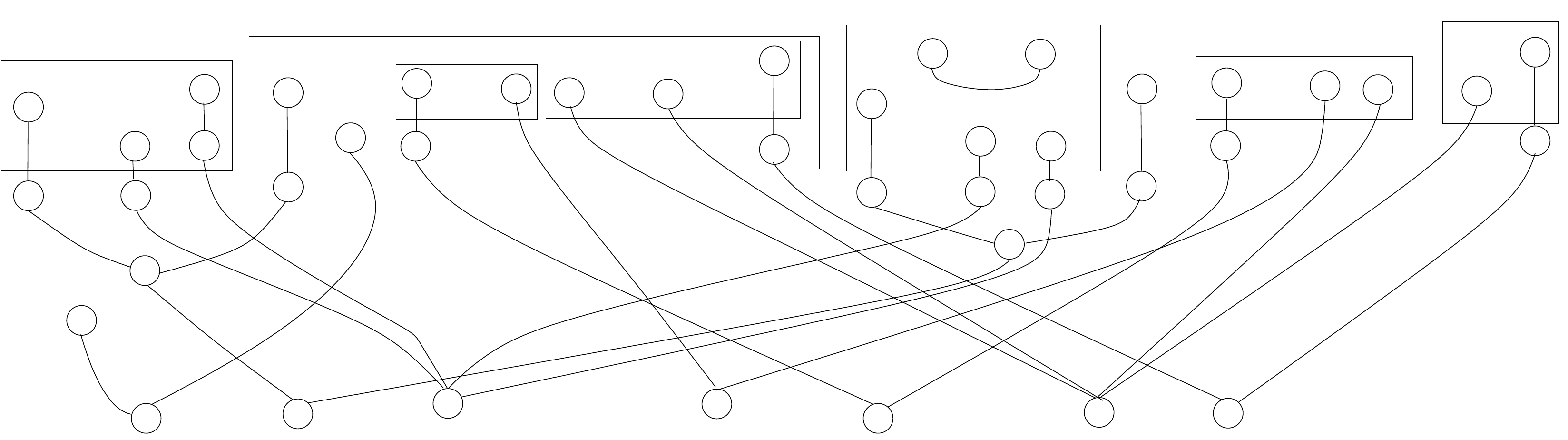}
\caption{The PS $R$}
\label{fig: new_example}
\end{figure}
\end{exa}

In the absence of axioms and cuts, our definition of PS through in-PS's is equivalent to our definition of PS in Definition~4 of \cite{Carvalho:CSL} through $\circ$-PS's. We removed $\circ$-ports because we simplified the proof of Proposition~\ref{prop: crucial} and after this simplification they would not play any role any more (actually we introduced a syntactic construction\footnote{See Definition~\ref{definition: adding contractions}} that, roughly speaking, can be seen as a partial recovery of these $\circ$-ports).

\begin{defi}
For any $l \in \{ \cod, \contr \}$, for any $p$, we denote by $l_p$ the PS $R$ of depth $0$ such that $\portsatzero{R} = \{ p \}$ and $\labelofport{\groundof{R}}(p) = l$.
\end{defi}

\begin{defi}\label{defin: typed differential PS}
For any $d \in \Nat$, we define, by induction on $d$, the set of \emph{typed differential in-PS's of depth $d$} (resp. the set of \emph{typed in-PS's of depth $d$}): it is the set of pairs $(S, \mathsf{T})$ such that $S$ is a differential in-PS (resp. an in-PS) and $\mathsf{T}$ is a function $\ports{S} \to \mathbb{T}$ such that:
\begin{itemize}
\item $(\groundof{S}, \restriction{\mathsf{T}}{\portsatzero{S}})$ is a typed simple differential net (resp. a typed ground-structure);
\item for any $o \in \boxesatzero{S}$, the pair $(B_S(o), \mathsf{T}_o)$ is a typed simple differential net, where $\mathsf{T}_o$ is the function $\ports{B_S(o)} \to \mathbb{T}$ defined by $\mathsf{T}_o(p) = \mathsf{T}(o, p)$ for any $p \in \ports{B_S(o)}$;
\item and, for any $o \in \boxesatzero{S}$, we have $(\forall q \in \temporaryConclusions{S}{o}) (\exists \zeta \in \{ \contr, \cod \}) \mathsf{T}(\target{S}(o, q)) = \zeta \mathsf{T}(o, q)$.
\end{itemize}

A \emph{typed differential PS} (resp. a \emph{typed PS}) is a typed differential in-PS (resp. a typed in-PS) $(S, \mathsf{T})$ such that $\conclusions{S} \subseteq \conclusions{\groundof{S}}$.
\end{defi}

\subsection{Isomorphisms}

We want to consider PS's up to the names of the ports, apart from the names of the shallow conclusions. We thus define the equivalence relation $\equiv$ on PS's; this relation is slightly finer than the equivalence relation $\simeq$, which ignores all the names of the ports.

\begin{defi}\label{definition: isomorphisms}
For any simple differential nets $\mathcal{G}$ and $\mathcal{G'}$, an isomorphism $\varphi$ from $\mathcal{G}$ to $\mathcal{G'}$ is a bijection $\ports{\mathcal{G}} \simeq \ports{\mathcal{G'}}$ such that:
\begin{itemize}
\item $\varphi[\wires{\mathcal{G}}] = \wires{\mathcal{G'}}$
\item $\varphi_\ast[\axioms{\mathcal{G}}] = \axioms{\mathcal{G'}}$
\item $\varphi_\ast[\cuts{\mathcal{G}}] = \cuts{\mathcal{G'}}$
\item $\varphi[\leftwires{\mathcal{G}}] = \leftwires{\mathcal{G'}}$
\item $\target{\mathcal{G'}} \circ \restriction{\varphi}{\wires{\mathcal{G}}} = \varphi \circ \target{\mathcal{G}}$
\item $\labelofport{\mathcal{G}} = \labelofport{\mathcal{G'}} \circ \varphi$
\end{itemize}
We write \emph{$\varphi: \mathcal{G} \simeq \mathcal{G'}$} to denote that $\varphi$ is an isomorphism from $\mathcal{G}$ to $\mathcal{G'}$; we write \emph{$\mathcal{G} \simeq \mathcal{G'}$} if there exists $\varphi$ such that $\varphi: \mathcal{G} \simeq \mathcal{G'}$.

Moreover, we write \emph{$\varphi : \mathcal{G} \equiv \mathcal{G'}$} to denote that $\varphi: \mathcal{G} \simeq \mathcal{G'}$ and $(\forall p \in \conclusions{\mathcal{G}}) \varphi(p) = p$; we write \emph{$\mathcal{G} \equiv \mathcal{G'}$} if there exists $\varphi$ such that $\varphi: \mathcal{G} \equiv \mathcal{G'}$.
\end{defi}

\begin{defi}\label{defin: isomorphism}
For any differential in-PS $S$ of depth $d$, for any differential in-PS $S'$, we define, by induction on $d$, the set of isomorphims from $S$ to $S'$: an isomorphism $\varphi$ from $S$ to $S'$ is a function $\ports{S} \to \ports{S'}$ such that:
\begin{itemize}
\item $(\forall p \in \portsatzero{S}) \varphi(p) \in \portsatzero{S'}$ and the function $\groundof{\varphi}: \begin{array}{rcl} \portsatzero{S} & \to & \portsatzero{S'}\\ p & \mapsto & \varphi(p) \end{array}$ is an isomorphism $\groundof{S} \simeq \groundof{S'}$;
\item $\varphi[\boxesatzero{S}] = \boxesatzero{S'}$;
\item $(\forall o \in \boxesatzero{S}) (\forall p \in \ports{B_S(o)}) (\exists p' \in \ports{B_{S'}(\varphi(o))}) \varphi(o, p) = (\varphi(o), p')$ and the function $$\varphi_o: \begin{array}{rcl} \ports{B_S(o)} & \to &\ports{B_{S'}(\varphi(o))}\\ p & \mapsto & \textit{$p'$ such that $\varphi(o, p) = (\varphi(o), p')$} \end{array}$$
is an isomorphism from $B_S(o)$ to $B_{S'}(\varphi(o))$;
\item $\dom{\target{S'}} = \varphi[\dom{\target{S}}]$ and, for any $p \in \dom{\target{S}}$, we have $(\varphi \circ\target{S})(p) = (\target{S'} \circ \varphi)(p)$.
\end{itemize}
We write \emph{$\varphi: S \simeq S'$} to denote that $\varphi$ is an isomorphism from $S$ to $S'$; we write \emph{$S \simeq S'$} if there exists $\varphi$ such that $\varphi: S \simeq S'$.

Moreover, we write \emph{$\varphi : S \equiv S'$} to denote that $\varphi: S \simeq S'$ and $(\forall p \in \conclusions{\groundof{S}}) \varphi(p) = p$; we write \emph{$S \equiv S'$} if there exists $\varphi$ such that $\varphi: S \equiv S'$.

Now, if $\mathcal{T}$ and $\mathcal{T'}$ are two sets of differential in-PS's, we write $\mathcal{T} \equiv \mathcal{T'}$ if there exists a bijection $\varphi : \mathcal{T} \simeq \mathcal{T'}$ such that, for any $T \in \mathcal{T}$, we have $T \equiv \varphi(T)$.

Finally, if $(S, \mathsf{T})$ and $(S', \mathsf{T'})$ are two typed differential in-PS's, then we write $(S, \mathsf{T}) \equiv (S', \mathsf{T'})$ if there exists $\varphi: S \equiv S'$ such that $\mathsf{T} = \mathsf{T'} \circ \varphi$.
\end{defi}

\begin{fact}\label{fact: iso of typed PS's}
Let $(S, \mathsf{T})$ and $(S', \mathsf{T'})$ be two cut-free typed differential in-PS's such that $S \equiv S'$. If $\restriction{\mathsf{T}}{\conclusions{S}} = \restriction{\mathsf{T'}}{\conclusions{S'}}$, then $(S, \mathsf{T}) \equiv (S', \mathsf{T'})$.
\end{fact}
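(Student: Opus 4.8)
The plan is to isolate the genuinely rigid content of the statement---that in a cut-free $\eta$-expanded net the typing is uniquely determined by the types of the conclusions---and then to obtain the Fact by transporting $\mathsf{T'}$ along the given isomorphism. So I would first prove a determination lemma: \emph{if $(S, \mathsf{T})$ is a cut-free typed differential in-PS, then $\mathsf{T}$ is the only function $\ports{S} \to \mathbb{T}$ making $(S, \mathsf{T})$ a typed differential in-PS with the given values on $\conclusions{S}$}; equivalently, two typings of the same cut-free $S$ that agree on $\conclusions{S}$ agree everywhere. Granting this, I set $\mathsf{T''} = \mathsf{T'} \circ \varphi$ for the given $\varphi : S \equiv S'$. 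Since $\varphi$ is an isomorphism of the underlying structure and $\mathsf{T'}$ is a valid typing of $S'$, the pullback $\mathsf{T''}$ is a valid (and still cut-free) typing of $S$; it then suffices to check that $\mathsf{T}$ and $\mathsf{T''}$ agree on $\conclusions{S}$ and to invoke the lemma, yielding $\mathsf{T} = \mathsf{T''} = \mathsf{T'} \circ \varphi$, which together with $\varphi : S \equiv S'$ is precisely $(S, \mathsf{T}) \equiv (S', \mathsf{T'})$.

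For the determination lemma I would argue by induction on $\depthof{S}$, handling depth $0$ first. The crucial observation is that cut-freeness forces every port of $\groundof{S}$ to lie above a conclusion: iterating $\target{\groundof{S}}$ downwards from any port yields a strictly $\le_{\groundof{S}}$-descending chain, which (by finiteness and antisymmetry) must stop at a port having no target, i.e. a non-wire; and since $\cutsatzero{S} = \emptyset$, the non-wire ports are exactly the shallow conclusions. It is therefore enough to propagate types upwards from $\conclusions{\groundof{S}}$ along this order, and at each step the rules of Definition~\ref{defin: typed simple differential net} read the type of a premise off the type of its target: a port of type $A \tens B$ (resp. $A \parr B$) forces its left and right premises to have types $A$ and $B$, with $\leftwiresatzero{S}$ telling the two apart, while an exponential port of type $\zeta C$ forces every premise to have type $C$. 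Thus $\restriction{\mathsf{T}}{\portsatzero{S}}$ is completely fixed by $\restriction{\mathsf{T}}{\conclusions{\groundof{S}}}$.

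It remains to recover the types inside the boxes, which is where the depth induction and the precise shape of $\conclusions{S}$ enter. For $o \in \boxesatzero{S}$, the type of the main door of $B_S(o)$ is obtained by stripping the $\cod$ from the (already fixed) type of $o$, and the type of each auxiliary door $q$ is read off the now-known type of $\target{S}(o, q)$ through the last clause of Definition~\ref{defin: typed differential PS}; the remaining conclusions of $B_S(o)$, namely those $q$ with $(o, q) \notin \dom{\target{S}}$, are exactly the non-shallow conclusions of $S$ living in that box, so their types are supplied directly by $\restriction{\mathsf{T}}{\conclusions{S}}$. Hence all conclusion types of the lower-depth in-PS $B_S(o)$ are determined, and the induction hypothesis fixes $\mathsf{T}$ on $\{ o \} \times \ports{B_S(o)}$. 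There is no circularity here, because the depth-$0$ pass never inspects box contents, so the door types it feeds into each box are available before the inductive step is taken.

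Finally I must verify the hypothesis of the lemma, i.e. that $\mathsf{T}$ and $\mathsf{T''}$ coincide on $\conclusions{S}$. On the shallow conclusions this is immediate: $\varphi$ fixes $\conclusions{\groundof{S}}$ pointwise, so $\mathsf{T''} = \mathsf{T'}$ there, and these are exactly the ports on which the assumption $\restriction{\mathsf{T}}{\conclusions{S}} = \restriction{\mathsf{T'}}{\conclusions{S'}}$ equates $\mathsf{T}$ and $\mathsf{T'}$; on the non-shallow conclusions, which $\varphi$ carries bijectively onto $\nonshallowConclusions{S'}$, the same assumption read through $\varphi$ gives $\mathsf{T}(p) = \mathsf{T'}(\varphi(p)) = \mathsf{T''}(p)$. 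I expect the only delicate point to be the depth-$0$ propagation: it is precisely cut-freeness (and not mere acyclicity of $\le_{\groundof{S}}$) that guarantees reachability from a conclusion, since a cut carries a type $T$ that the conclusions leave unconstrained and would therefore break determinacy.
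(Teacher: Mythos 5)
The paper states this Fact without any proof, so the only question is whether your argument stands on its own. Your determination lemma does: the downward $\target{\groundof{S}}$-chains terminating at shallow conclusions (here cut-freeness is exactly what is needed, as you say), the upward propagation through $\tens$, $\parr$ and exponential ports, the recovery of door types from $\mathsf{T}(\target{S}(o,q)) = \zeta\,\mathsf{T}(o,q)$, and the feeding of non-shallow conclusion types into the depth induction are all correct, and the pullback $\mathsf{T''} = \mathsf{T'} \circ \varphi$ being a valid typing of $S$ is routine.

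The gap is in your final verification, on the non-shallow conclusions. The hypothesis $\restriction{\mathsf{T}}{\conclusions{S}} = \restriction{\mathsf{T'}}{\conclusions{S'}}$ is an equality of functions, hence it forces $\conclusions{S} = \conclusions{S'}$ and yields $\mathsf{T}(p) = \mathsf{T'}(p)$ pointwise; it does \emph{not} yield $\mathsf{T}(p) = \mathsf{T'}(\varphi(p))$, because an isomorphism witnessing $S \equiv S'$ is only required to fix the \emph{shallow} conclusions and may permute the non-shallow ones. Concretely, let $S' = S$ be the depth-$1$ in-PS with one box $o$ whose content has a main door $q_0$ labelled $\one$ and two axioms $\{q_1,q_2\}$, $\{q_3,q_4\}$ that are not doors, so that $(o,q_1),\ldots,(o,q_4)$ are non-shallow conclusions; take $\mathsf{T'} = \mathsf{T}$ typing the axioms by $X, X^\perp$ and $Y, Y^\perp$ with $X \neq Y$, and take for $\varphi$ the automorphism of $S$ exchanging the two axioms. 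All hypotheses of the Fact hold and $\varphi : S \equiv S'$, yet $\mathsf{T'}(\varphi(o,q_1)) = Y \neq X = \mathsf{T}(o,q_1)$, so $\mathsf{T''}$ disagrees with $\mathsf{T}$ on $\conclusions{S}$ and your lemma cannot be applied to this $\varphi$. The Fact itself is not contradicted (the identity works there), but since $(S,\mathsf{T}) \equiv (S',\mathsf{T'})$ asserts the \emph{existence} of a type-preserving isomorphism, a proof must construct a suitable one rather than check an arbitrarily given witness; your argument does not do this when $\nonshallowConclusions{S} \neq \emptyset$. The proof is complete exactly in the case $\conclusions{S} \subseteq \conclusions{\groundof{S}}$ (all conclusions shallow), since then $\varphi$ fixes every conclusion and $\mathsf{T''}$ agrees with $\mathsf{T}$ on all of them; this happens to cover the paper's only use of the Fact (in Theorem~\ref{theorem: typed framework}, for typed PS's), but it does not cover the statement as written, which is about arbitrary typed differential in-PS's.
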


Another variant of the notion of isomorphism will be defined in the next subsection (Definition~\ref{definition: equiv modulo}). A special case of isomorphism consists in renaming only ports at depth $0$:

\begin{defi}\label{defin: special iso}
Let $S$ and $S'$ be two differential in-PS's. Let $\varphi$ be a bijection $\mathcal{P} \simeq \mathcal{P'}$, where $\mathcal{P'} \cap (\portsatzero{S} \setminus \mathcal{P}) = \emptyset$. We say that $S'$ is obtained from $S$ by renaming the ports via $\varphi$ and we write $S' = S[\varphi]$ if the following properties hold:
\begin{itemize}
\item $\portsatzero{S'} = \overline{\varphi}[\portsatzero{S}]$
\item $\wiresatzero{S'} = \overline{\varphi}[\wiresatzero{S}]$
\item $\axiomsatzero{S'} = \{ \overline{\varphi}[a]; a \in \axiomsatzero{S} \}$ 
\item $\cutsatzero{S'} = \{ \overline{\varphi}[a]; a \in \cutsatzero{S} \}$ 
\item $\leftwiresatzero{S'} = \overline{\varphi}[\leftwiresatzero{S}]$
\item $\labelofport{\groundof{S'}} \circ \overline{\varphi} = \labelofport{\groundof{S}}$
\item $\target{\groundof{S'}} \circ \overline{\varphi} = \overline{\varphi} \circ \target{\groundof{S}}$
\item $\boxesatzero{S'} = \overline{\varphi}[\boxesatzero{S}]$
\item $\dom{\target{S'}} = \bigcup_{o \in \boxesatzero{S}} \{ (\overline{\varphi}(o), p) ; p \in \temporaryConclusions{S}{o} \}$ and, for any $(o, p) \in \dom{\target{S}}$, we have $\target{S'}(\overline{\varphi}(o), p) = \target{S}(o, p)$
\item and $B_{S'} = B_S \circ \overline{\varphi}$,
\end{itemize}
where $\overline{\varphi}$ is the function $\portsatzero{S} \to \portsatzero{S'}$ that associates with every $p \in \portsatzero{S}$ the following port of $\groundof{S'}$: $\left\lbrace \begin{array}{ll} p & \textit{if $p \in \portsatzero{S} \setminus \mathcal{P}$;} \\ \varphi(p) &  \textit{if $p \in \mathcal{P}$.} \end{array} \right.$

If $S' = S[\varphi]$ and $\varphi$ is, for some singleton $\mathcal{E} = \{ a \}$, the bijection $\portsatzero{S} \simeq \mathcal{E} \times \portsatzero{S}$ that associates with every $p \in \portsatzero{S}$ the pair $(a, p)$, then $S'$ is denoted by $\langle a, S \rangle$ too.
\end{defi}

\subsection{Some operations on differential proof-structures}

In this subsection, we describe some operations to obtain new PS's from old ones.

For any differential in-PS $S$, for any integer $i$, we define a differential in-PS $S^{\leq i}$ of depth $\leq i$, which is obtained from $S$ by removing some boxes:

\begin{defi}
Let $S$ be a differential in-PS and let $i \in \Nat$. We denote by $S^{\leq i}$ the differential in-PS such that $\groundof{S^{\leq i}} = \groundof{S}$, $\boxesatzero{S^{\leq i}} = \boxesatzerosmaller{S}{i}$, $B_{S^{\leq i}} = \restriction{B_S}{\boxesatzerosmaller{S}{i}}$ and $\target{S^{\leq i}} = \restriction{\target{S}}{\bigcup_{o \in \boxesatzerosmaller{S}{i}} (\{ o \} \times \temporaryConclusions{S}{o})}$.
\end{defi}

In particular $S^{\leq 0}$ is essentially the same object as $\groundof{S}$.

\begin{rem}\label{rem: substructure}
If $\depthof{T} < i$, then $T^{\leq i} = T$.
\end{rem}

\begin{rem}\label{rem: <= <=}
We have ${(S^{\leq i})}^{\leq i'} = S^{\leq \min \{ i, i' \}}$.
\end{rem}

\begin{exa}
The differential PS $R^{\leq 1}$, where $R$ is the PS depicted in Figure~\ref{fig: new_example}, is depicted in Figure~\ref{fig: Rleq1},~p.~\pageref{fig: Rleq1}.
\end{exa}

We can also erase some ports at depth $0$:

\begin{defi}\label{defin: substructure}
Let $S'$ and $S$ be two differential in-PS's. Let $\mathcal{Q} \subseteq \portsatzero{S}$. We write $S' \sqsubseteq_{\mathcal{Q}} S$ to denote that $\portsatzero{S'} \subseteq \portsatzero{S}$, $\wiresatzero{S'} = \{ w \in (\wiresatzero{S} \cap \portsatzero{S'}) \setminus (\mathcal{Q} \cap \exponentialportsatzero{S}); \target{\groundof{S}}(w) \in \portsatzero{S'} \}$, $\labelofport{\groundof{S'}} = \restriction{\labelofport{\groundof{S}}}{\portsatzero{S'}}$, $\target{\groundof{S'}} = \restriction{\target{\groundof{S}}}{\wiresatzero{S'}}$, $\leftwires{\groundof{S'}} = \leftwires{\groundof{S}} \cap \{ w \in \wiresatzero{S} ; \target{\groundof{S}}(w) \in \multiplicativeportsatzero{S'} \}$, $\axiomsatzero{S'} = \{ a \in \axiomsatzero{S} ; \bigcup a \subseteq \portsatzero{S'}\}$, $\cutsatzero{S'} = \{ a \in \cutsatzero{S} ; \bigcup a \subseteq \portsatzero{S'} \setminus (\mathcal{Q} \cap \exponentialportsatzero{S}) \}$, $\boxesatzero{S'} = \boxesatzero{S} \cap \portsatzero{S'}$, $B_{S'} = \restriction{B_S}{\boxesatzero{S'}}$ and $\target{S'} = \restriction{\target{S}}{\bigcup_{o \in \boxesatzero{S'}} (\{ o \} \times \temporaryConclusions{S}{o})}$.  

We write $S' \sqsubseteq S$ if there exists $\mathcal{Q}$ such that $S' \sqsubseteq_{\mathcal{Q}} S$.  
\end{defi}

\begin{rem}
We have $S' \sqsubseteq_{\mathcal{Q}} S$ if and only if $S' \sqsubseteq_{\mathcal{Q} \cap \exponentialportsatzero{S}} S$.
\end{rem}

\begin{rem}\label{rem: substructure and conclusions}
If $S' \sqsubseteq S$, then $\portsatzero{S'} \cap \conclusions{\groundof{S}} \subseteq \conclusions{\groundof{S'}}$.
\end{rem}

\begin{rem}\label{rem: unique substructure}
If $S', S'' \sqsubseteq_{\mathcal{Q}} S$ and $\portsatzero{S'} = \portsatzero{S''}$, then $S' = S''$. So, if, for some $\mathcal{P} \subseteq \portsatzero{S}$, there exists a differential in-PS $S'$ such that $\portsatzero{S'} = \mathcal{P}$ and $S' \sqsubseteq_{\emptyset} S$, then we can denote by $\restriction{S}{\mathcal{P}}$ the unique such differential in-PS $S'$.
\end{rem}

\begin{rem}
We have $S' \sqsubseteq_{\mathcal{Q}} S$ if, and only if, the following properties hold:
\begin{itemize}
\item ${S'}^{\leq 0} \sqsubseteq_{\mathcal{Q}} S^{\leq 0}$
\item $\boxesatzero{S'} = \boxesatzero{S} \cap \portsatzero{S'}$
\item $B_{S'} = \restriction{B_S}{\boxesatzero{S'}}$
\item $\target{S'} = \restriction{\target{S}}{\bigcup_{o \in \boxesatzero{S'}} (\{ o \} \times \temporaryConclusions{S}{o})}$  
\end{itemize}
\end{rem}

\begin{rem}
If $S'_1 \sqsubseteq_{\mathcal{Q}} S_1$ and $\varphi: S_1 \simeq S_2$, then there exists a unique $S'_2 \sqsubseteq_{\varphi[\mathcal{Q}]} S_2$ such that there exists an isomorphism $S'_1 \simeq S'_2$ associating with every port $p$ of $S'_1$ the port $\varphi(p)$ of $S_2$.
\end{rem}

\begin{fact}\label{fact: transitivity of substructures}
Let $S$, $S'$ and $S''$ be three differential in-PS's. Let $\mathcal{Q} \subseteq \portsatzero{S}$ and $\mathcal{Q'} \subseteq \portsatzero{S'}$. If $S'' \sqsubseteq_{\mathcal{Q'}} S'$ and $S' \sqsubseteq_{\mathcal{Q}} S$, then $S'' \sqsubseteq_{\mathcal{Q} \cup \mathcal{Q'}} S$.
\end{fact}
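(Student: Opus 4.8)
The plan is to verify, one clause at a time, each of the conditions defining $S'' \sqsubseteq_{\mathcal{Q} \cup \mathcal{Q'}} S$ in Definition~\ref{defin: substructure}, obtaining it by composing the corresponding clauses of the two hypotheses $S'' \sqsubseteq_{\mathcal{Q'}} S'$ and $S' \sqsubseteq_{\mathcal{Q}} S$. Before that I would record the one observation doing the real work: since $\labelofport{\groundof{S'}} = \restriction{\labelofport{\groundof{S}}}{\portsatzero{S'}}$ and $\portsatzero{S'} \subseteq \portsatzero{S}$, the label of a port of $S'$ is the same read in $\groundof{S'}$ or in $\groundof{S}$; hence $\exponentialportsatzero{S'} = \exponentialportsatzero{S} \cap \portsatzero{S'}$ and $\multiplicativeportsatzero{S'} = \multiplicativeportsatzero{S} \cap \portsatzero{S'}$, and likewise for $S''$ inside $S'$, hence inside $S$. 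In particular, since $\mathcal{Q'} \subseteq \portsatzero{S'}$, this yields the identity $\mathcal{Q'} \cap \exponentialportsatzero{S'} = \mathcal{Q'} \cap \exponentialportsatzero{S}$, which is what lets the two parameters merge into $(\mathcal{Q} \cup \mathcal{Q'}) \cap \exponentialportsatzero{S}$.

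The clauses not mentioning the parameter are immediate from transitivity of inclusion and of restriction. From $\portsatzero{S''} \subseteq \portsatzero{S'} \subseteq \portsatzero{S}$ I get $\portsatzero{S''} \subseteq \portsatzero{S}$, and restricting twice gives $\labelofport{\groundof{S''}} = \restriction{\labelofport{\groundof{S}}}{\portsatzero{S''}}$. For axioms, $\bigcup a \subseteq \portsatzero{S''}$ already forces $\bigcup a \subseteq \portsatzero{S'}$, so the two constraints collapse to $\axiomsatzero{S''} = \{ a \in \axiomsatzero{S} ; \bigcup a \subseteq \portsatzero{S''} \}$. The box clauses are the same: $\boxesatzero{S''} = \boxesatzero{S'} \cap \portsatzero{S''} = \boxesatzero{S} \cap \portsatzero{S'} \cap \portsatzero{S''} = \boxesatzero{S} \cap \portsatzero{S''}$, and then $B_{S''} = \restriction{B_{S'}}{\boxesatzero{S''}} = \restriction{B_S}{\boxesatzero{S''}}$ since $\boxesatzero{S''} \subseteq \boxesatzero{S'}$.

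For the wire clause I would use that $\target{\groundof{S'}} = \restriction{\target{\groundof{S}}}{\wiresatzero{S'}}$, so $\target{\groundof{S'}}(w) = \target{\groundof{S}}(w)$ for every $w \in \wiresatzero{S'} \supseteq \wiresatzero{S''}$. Expanding the definition of $\wiresatzero{S''}$ in terms of $\wiresatzero{S'}$, then $\wiresatzero{S'}$ in terms of $\wiresatzero{S}$, the conditions ``$w \in \portsatzero{S'}$'' and ``$\target{\groundof{S}}(w) \in \portsatzero{S'}$'' become redundant (implied by their $\portsatzero{S''}$-counterparts via $\portsatzero{S''} \subseteq \portsatzero{S'}$), while the two exclusions $w \notin \mathcal{Q} \cap \exponentialportsatzero{S}$ and $w \notin \mathcal{Q'} \cap \exponentialportsatzero{S'} = \mathcal{Q'} \cap \exponentialportsatzero{S}$ combine into $w \notin (\mathcal{Q} \cup \mathcal{Q'}) \cap \exponentialportsatzero{S}$, which is exactly the required description. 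The clause for $\target{\groundof{S''}}$ then follows from $\wiresatzero{S''} \subseteq \wiresatzero{S'}$ by composing the two restrictions. The leftwires clause is analogous, using $\multiplicativeportsatzero{S''} \subseteq \multiplicativeportsatzero{S'}$, and the cut clause again uses $\mathcal{Q'} \cap \exponentialportsatzero{S'} = \mathcal{Q'} \cap \exponentialportsatzero{S}$ to rewrite the conjunction of $\bigcup a \subseteq \portsatzero{S'} \setminus (\mathcal{Q} \cap \exponentialportsatzero{S})$ and $\bigcup a \subseteq \portsatzero{S''} \setminus (\mathcal{Q'} \cap \exponentialportsatzero{S})$ as the single constraint $\bigcup a \subseteq \portsatzero{S''} \setminus ((\mathcal{Q} \cup \mathcal{Q'}) \cap \exponentialportsatzero{S})$.

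Finally, for the two box-target clauses I first note that for $o \in \boxesatzero{S'}$ one has $B_{S'}(o) = B_S(o)$ and $\dom{\target{S'}} = \bigcup_{o \in \boxesatzero{S'}} (\{ o \} \times \temporaryConclusions{S}{o})$, whence $\temporaryConclusions{S'}{o} = \temporaryConclusions{S}{o}$; the claimed equality $\target{S''} = \restriction{\target{S}}{\bigcup_{o \in \boxesatzero{S''}} (\{ o \} \times \temporaryConclusions{S}{o})}$ then follows by composing the two restrictions, using $\boxesatzero{S''} \subseteq \boxesatzero{S'}$. The only genuine obstacle is the bookkeeping around the parameter in the wire and cut clauses, and it reduces entirely to the single fact that $\mathcal{Q'}$, living at depth $0$ of $S'$, sees the same exponential ports in $S$ as in $S'$ because labels are preserved under the substructure relation.
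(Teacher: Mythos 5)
Your proof is correct: the paper states this fact without any proof, treating it as a routine unfolding of Definition~\ref{defin: substructure}, and your clause-by-clause verification---hinging on the key identity $\mathcal{Q'} \cap \exponentialportsatzero{S'} = \mathcal{Q'} \cap \exponentialportsatzero{S}$, which follows from preservation of labels---is exactly that intended verification. One small remark: in the left-wires clause the word ``analogous'' hides a slightly different mechanism than in the wire clause, since there you must pass from quantification over $\wiresatzero{S'}$ back to quantification over $\wiresatzero{S}$, which requires the extra (one-line) observation that any $w \in \leftwires{\groundof{S}} \cap \wiresatzero{S}$ whose target is in $\multiplicativeportsatzero{S''} \subseteq \multiplicativeportsatzero{S'}$ already belongs to $\leftwires{\groundof{S'}}$ and hence, by the pre-net axiom $\mathcal{L} \subseteq \mathcal{W}$, to $\wiresatzero{S'}$.
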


\begin{fact}
Let $S'$ and $S$ be two differential in-PS's and let $\mathcal{Q} \subseteq \exponentialportsatzero{S}$ such that $S' \sqsubseteq_{\mathcal{Q}} S$. Let $i \in \Nat$. Then $S'^{\leq i} \sqsubseteq_{\mathcal{Q}} S^{\leq i}$.
\end{fact}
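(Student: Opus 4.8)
The plan is to reduce everything to the (unlabelled) characterisation of $\sqsubseteq_{\mathcal{Q}}$ by four conditions on depth-$0$ data that appears just before Fact~\ref{fact: transitivity of substructures}, applied to $T := S^{\leq i}$ and $T' := {S'}^{\leq i}$: I want to verify that $T'$ satisfies those four conditions relative to $T$ with the same $\mathcal{Q}$, which is exactly $T' \sqsubseteq_{\mathcal{Q}} T$. First I would record the two transfer facts coming from the hypothesis $S' \sqsubseteq_{\mathcal{Q}} S$, read through the same characterisation: we have $\boxesatzero{S'} = \boxesatzero{S} \cap \portsatzero{S'}$, $B_{S'} = \restriction{B_S}{\boxesatzero{S'}}$ and $\target{S'} = \restriction{\target{S}}{\bigcup_{o \in \boxesatzero{S'}} (\{ o \} \times \temporaryConclusions{S}{o})}$. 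In particular, for every $o \in \boxesatzero{S'}$ we get $B_{S'}(o) = B_S(o)$, hence $\conclusions{B_{S'}(o)} = \conclusions{B_S(o)}$ and $\depthof{B_{S'}(o)} = \depthof{B_S(o)}$; and since $\dom{\target{S'}} = \bigcup_{o \in \boxesatzero{S'}}(\{ o \} \times \temporaryConclusions{S}{o})$, we obtain $\temporaryConclusions{S'}{o} = \temporaryConclusions{S}{o}$ for every $o \in \boxesatzero{S'}$. These are what let me move box-content, depth, and temporary-conclusion data freely between $S$ and $S'$.

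For the depth-$0$ condition I would invoke Remark~\ref{rem: <= <=}: since $i \geq 0$, $T^{\leq 0} = (S^{\leq i})^{\leq 0} = S^{\leq 0}$ and ${T'}^{\leq 0} = {S'}^{\leq 0}$ as differential in-PS's. Hence the required clause ${T'}^{\leq 0} \sqsubseteq_{\mathcal{Q}} T^{\leq 0}$ is literally the clause ${S'}^{\leq 0} \sqsubseteq_{\mathcal{Q}} S^{\leq 0}$ already supplied by $S' \sqsubseteq_{\mathcal{Q}} S$. Here I also use that $\groundof{S^{\leq i}} = \groundof{S}$, so that $\exponentialportsatzero{S^{\leq i}} = \exponentialportsatzero{S}$ and the hypothesis $\mathcal{Q} \subseteq \exponentialportsatzero{S}$ indeed gives $\mathcal{Q} \subseteq \exponentialportsatzero{T}$, as needed to even state the conclusion.

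The box clause is a direct computation: $\boxesatzero{T'} = \boxesatzerosmaller{S'}{i} = \{ o \in \boxesatzero{S'} ; \depthof{B_{S'}(o)} < i \}$, and substituting $\boxesatzero{S'} = \boxesatzero{S} \cap \portsatzero{S'}$ together with $\depthof{B_{S'}(o)} = \depthof{B_S(o)}$ rewrites this as $\boxesatzerosmaller{S}{i} \cap \portsatzero{S'} = \boxesatzero{T} \cap \portsatzero{T'}$ (using $\portsatzero{T'} = \portsatzero{S'}$). The box-content clause then follows, since $\boxesatzerosmaller{S'}{i} \subseteq \boxesatzero{S'}$ and $\boxesatzerosmaller{S'}{i} \subseteq \boxesatzerosmaller{S}{i} = \boxesatzero{T}$ give $B_{T'} = \restriction{B_{S'}}{\boxesatzerosmaller{S'}{i}} = \restriction{B_S}{\boxesatzerosmaller{S'}{i}} = \restriction{B_T}{\boxesatzero{T'}}$.

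The remaining, and only delicate, clause is the one governing $\target{}$, because it is phrased through the sets $\temporaryConclusions{T}{o}$, which are defined via $\dom{\target{T}}$ and must be reconciled across $S$, $S'$, $T$, $T'$. The key observation is that restricting $\target{}$ to fewer boxes does not alter the temporary conclusions of the surviving boxes: for $o \in \boxesatzerosmaller{S}{i}$ one checks $\temporaryConclusions{T}{o} = \temporaryConclusions{S}{o}$, and for $o \in \boxesatzerosmaller{S'}{i}$ one checks $\temporaryConclusions{T'}{o} = \temporaryConclusions{S'}{o} = \temporaryConclusions{S}{o}$. Feeding these equalities into the definitions, both $\target{T'}$ and $\restriction{\target{T}}{\bigcup_{o \in \boxesatzero{T'}}(\{ o \} \times \temporaryConclusions{T}{o})}$ collapse to the single expression $\restriction{\target{S}}{\bigcup_{o \in \boxesatzerosmaller{S'}{i}}(\{ o \} \times \temporaryConclusions{S}{o})}$, which establishes the last clause and hence $T' \sqsubseteq_{\mathcal{Q}} T$. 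I expect this bookkeeping with the $\temporaryConclusions{}{}$ sets to be the main obstacle; every other clause is immediate once the characterisation remark and Remark~\ref{rem: <= <=} are in hand.
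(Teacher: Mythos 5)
Your proposal is correct. The paper offers no proof of this fact at all — it is stated as a bare, routine consequence of the definitions — and your argument (reduction to the four-clause characterisation of $\sqsubseteq_{\mathcal{Q}}$, disposal of the depth-$0$ clause via Remark~\ref{rem: <= <=}, and the verification that $\temporaryConclusions{S'}{o} = \temporaryConclusions{S}{o}$ for $o \in \boxesatzero{S'}$ and that truncation leaves the temporary conclusions of surviving boxes unchanged, which is indeed the only delicate point) is precisely the bookkeeping the paper leaves implicit, carried out correctly.
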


The operator $\bigoplus$ glues together several differential in-PS's that share only shallow conclusions that are contractions:\footnote{This operation has nothing to do with the additive $\oplus$ of linear logic, it is rather essentially the mix of linear logic.}

\begin{defi}
Let $\mathcal{U}$ be a finite set of differential in-PS's. We say that $\mathcal{U}$ is \emph{gluable} if, for any $T, T' \in \mathcal{U}$ such that $T \not= T'$, we have $\portsatzero{T} \cap \portsatzero{T'} \subseteq \conclusions{\groundof{T}} \cap \portsatzerooftype{\contr}{T} \cap \conclusions{\groundof{T'}} \cap \portsatzerooftype{\contr}{T'}$ and, for any pair $(p_1, p_2) \in \portsatzero{T}$, we have $p_1 \notin \boxesatzero{T'}$ and, if $p_1$ is a pair $(p, p')$ too, then $p \notin \boxesatzero{T'}$.
 
If $\mathcal{U}$ is gluable, then $\bigoplus \mathcal{U}$ is the differential in-PS such that:
\begin{itemize}
\item $\portsatzero{\bigoplus \mathcal{U}} = \bigcup_{T \in \mathcal{U}} \portsatzero{T}$
\item $\wiresatzero{\bigoplus \mathcal{U}} = \bigcup_{T \in \mathcal{U}} \wiresatzero{T}$
\item $\labelofport{\groundof{\bigoplus \mathcal{U}}}(p) = \labelofport{\groundof{T}}(p)$ for any $p \in \portsatzero{\bigoplus \mathcal{U}}$ and any $T \in \mathcal{U}$ such that $p \in \portsatzero{T}$
\item $\axiomsatzero{\bigoplus \mathcal{U}} = \bigcup_{T \in \mathcal{U}} \axiomsatzero{T}$
\item $\cutsatzero{\bigoplus \mathcal{U}} = \bigcup_{T \in \mathcal{U}} \cutsatzero{T}$
\item $\leftwiresatzero{\bigoplus \mathcal{U}} = \bigcup_{T \in \mathcal{U}} \leftwiresatzero{T}$
\item $\target{\groundof{\bigoplus \mathcal{U}}}(p) = \target{\groundof{T}}(p)$ for any $p \in \wiresatzero{\bigoplus \mathcal{U}}$ and any $T \in \mathcal{U}$ such that $p \in \wiresatzero{T}$;
\item $\boxesatzero{\bigoplus \mathcal{U}}(p) = \bigcup_{T \in \mathcal{U}} \boxesatzero{T}$
\item $B_{\bigoplus \mathcal{U}}(o) = B_T(o)$ for any $o \in \boxesatzero{\bigoplus \mathcal{U}}$ and any $T \in \mathcal{U}$ such that $o \in \boxesatzero{T}$;
\item $\dom{\target{\bigoplus \mathcal{U}}} = \bigcup_{T \in \mathcal{U}} \dom{\target{T}}$ and $\target{\bigoplus \mathcal{U}}(p) = \target{T}(p)$ for any $p \in \dom{\target{\bigoplus \mathcal{U}}}$ and any $T \in \mathcal{U}$ such that $p \in \dom{\target{T}}$.
\end{itemize}
\end{defi}

\begin{rem}\label{rem: sum <=i}
If $\mathcal{U}$ is gluable, then $(\bigoplus \mathcal{U})^{\leq i} = \bigoplus \{ U^{\leq i} ; U \in \mathcal{U} \}$.
\end{rem}

We can add wires:

\begin{defi}
Let $S$ be a differential in-PS. Let $\mathcal{W} \subseteq \conclusions{S}$ and $\mathcal{W'} \subseteq \exponentialportsatzero{S} \setminus \boxesatzero{S}$ such that $(\forall p \in \mathcal{W} \cap \portsatzero{S}) (\forall p' \in \mathcal{W'}) \neg p \leq_{\groundof{S}} p'$. Let $t$ be a function $\mathcal{W} \to \mathcal{W'}$. Then we denote by $S@t$ the differential in-PS such that
\begin{itemize}
\item $\targetAnyPort{S@t}$ is the extension of $\targetAnyPort{S}$ such that $\dom{\targetAnyPort{S@t}} = \dom{\targetAnyPort{S}} \cup \mathcal{W}$ and $(\forall p \in \mathcal{W}) \targetAnyPort{S@t}(p) = t(p)$
\item and $\portsatzero{S@t} = \portsatzero{S}$, $\labelofport{\groundof{S@t}} = \labelofport{\groundof{S}}$, $\leftwiresatzero{S@t} = \leftwiresatzero{S}$, $\axiomsatzero{S@t} = \axiomsatzero{S}$, $\cutsatzero{S@t} = \cutsatzero{S}$, $\boxesatzero{S@t} = \boxesatzero{S}$, $B_{S@t} = B_S$.
\end{itemize}
\end{defi}

\begin{rem}\label{rem: @<=i}
We have $(S@t)^{\leq i} = (S^{\leq i})@\restriction{t}{\portsatdepthleq{S}{i}}$.
\end{rem}

We can remove shallow conclusions:

\begin{defi}
Let $T$ be a differential in-PS such that $\conclusions{\groundof{T}} \subseteq \exponentialportsatzero{T} \setminus \boxesatzero{T}$. Then $\overline{T}$ is the unique differential in-PS such that 
\begin{itemize}
\item $\portsatzero{\overline{T}} = \portsatzero{T} \setminus \conclusions{\groundof{T}}$
\item and $T = (\overline{T} \oplus \conclusions{\groundof{T}})@t$, where $t$ is the function that associates with every $p \in \dom{\targetAnyPort{T}}$ such that $\targetAnyPort{T}(p) \in \conclusions{\groundof{T}}$ the port $\targetAnyPort{T}(p)$.
\end{itemize}
If $\mathcal{T}$ is a set of differential in-PS's, then $\overline{\mathcal{T}} = \{ \overline{T} ; T \in \mathcal{T} \}$.
\end{defi}

This operation allows to define the following variant of the notion of isomorphism of differential in-PS's:

\begin{defi}\label{definition: equiv modulo}
Let $S$ and $U$ be two differential in-PS's. Let $o \in \boxesatzero{S}$ such that $\conclusions{U} \subseteq \temporaryConclusions{S}{o}$. Let $T$ be a differential in-PS such that $\conclusions{\groundof{T}} \subseteq \exponentialportsatzero{T} \setminus \boxesatzero{T}$. Then we write $\varphi: U \equiv_{(S, o)} T$ if $\varphi : U \simeq \overline{T}$ such that, for any $p \in \conclusions{U}$, we have $\target{S}(o, p) = \targetAnyPort{T}(\varphi(p))$. We write $U \equiv_{(S, o)} T$ if there exists $\varphi$ such that $\varphi: U \equiv_{(S, o)} T$.
\end{defi}

\begin{rem}\label{rem: iso modulo}
Notice that:
\begin{itemize}
\item While the relations $\simeq$ and $\equiv$ are symmetric, the relation $\equiv_{(S, o)}$ is \emph{not} symmetric.
\item If $\varphi: U \equiv_{(S, o)} T$ and $\psi: U' \equiv U$, then $\varphi \circ \psi: U' \equiv_{(S, o)} T$.
\item If $\varphi_1: U \equiv_{(S, o)} T_1$ and $\varphi_2: U \equiv_{(S, o)} T_2$, then $\psi: T_1 \equiv T_2$, where $\psi$ is the bijection $\ports{T_1} \simeq \ports{T_2}$ defined by $\psi(p) = \left\lbrace \begin{array}{ll} \varphi_2({\varphi_1}^{-1}(p)) & \textit{if $p \notin \conclusions{\groundof{T_1}}$;}\\ p & \textit{otherwise.} \end{array}\right.$
\end{itemize}
\end{rem}

The following operation consists in adding contractions as shallow conclusions to the content of some box $o$; the conclusions of the box $o$ that were contracted at depth $0$ are now contracted inside the box $o$. We can then define a complexity measure on in-PS's that decreases with this operation, which allows to prove Proposition~\ref{prop: crucial} by induction on this complexity measure.

\begin{defi}\label{definition: adding contractions}
Let $R$ and $R_o$  be two in-PS's. Let $o \in \boxesatzero{R} \cap \boxesatzero{R_o}$. Let $\mathcal{Q'} \subseteq \portsatzerooftype{\contr}{B_{R_o}(o)}$ and let $\varphi$ be a bijection $\contractionsUnder{R}{o} \simeq \mathcal{Q'}$. We say that \emph{$R_o$ is obtained from $R$ by adding, according to $\varphi$, contractions as shallow conclusions to the content of the box $o$} and we write $R_o = \varphi \cdot_o R$ if the following properties hold:
\begin{itemize}
\item ${R_o}^{\leq 0} = R^{\leq 0}$;
\item $\boxesatzero{R_o} = \boxesatzero{R}$;
\item $B_{R_o}(o') = \left\lbrace \begin{array}{ll} B_R(o') & \textit{if $o' \not= o$;}\\ R'_o & \textit{if $o' = o$;} \end{array} \right.$ with ${R'_o} = ({B_R(o)} \oplus \bigoplus_{q' \in \mathcal{Q'}} \contr_{q'})@t$, where $\mathcal{Q'}$ is a disjoint set from $\portsatzero{B_R(o)}$ and $t$ is the function $\temporaryConclusions{R}{o} \setminus \{ \aboveBang{R}{o} \} \to \mathcal{Q'}$ that associates with every $p \in \temporaryConclusions{R}{o} \setminus \{ \aboveBang{R}{o} \}$ the port $\varphi(\target{R}(o, p))$;
\item and $\target{R_o} = \restriction{\target{R}}{\dom{\target{R}} \setminus (\{ o \} \times (\temporaryConclusions{R}{o} \setminus \{ \aboveBang{R}{o} \}))}$.
\end{itemize}
\end{defi}

\begin{rem}\label{rem: action on PS}
For any $o' \in \boxesatzero{B_R(o)}$, we have 
\begin{itemize}
\item $\temporaryConclusions{{B_{R}(o)}}{o'} = $ $\{ p \in \temporaryConclusions{{B_{(\varphi \cdot_o R)}(o)}}{o'};$ $\target{B_{(\varphi \cdot_o R)}(o)}(o', p) \notin \mathcal{Q'} \}$
\item and $(\forall p \in \temporaryConclusions{{B_{R}(o)}}{o'}) \target{B_R(o)}(o', p) = \target{B_{(\varphi \cdot_o R)}(o)}(o', p)$.
\end{itemize}
Moreover, for any $q \in \temporaryConclusions{R}{o}$ such that $\target{B_{(\varphi \cdot_o R)}(o)}(q) \in \mathcal{Q'}$, we have $\target{B_{(\varphi \cdot_o R)}(o)}(q) = \varphi(\target{R}(o, q))$. 
\end{rem}

Notice that, if $R$ is a PS, then $B_{(\varphi \cdot_o R)}(o)$ is a PS too (while $B_R(o)$ is not necessarily a PS); we will implicitly use this property in the proofs of Lemma~\ref{lem: ground-structure} and Proposition~\ref{prop: boxes}.

\begin{exa}
The in-PS $\varphi \cdot_{o_3} R$, where $\varphi$ is some bijection $\contractionsUnder{R}{o_3} \simeq \mathcal{Q'}$, is depicted in Figure~\ref{figure: R_o3}.
\begin{figure}
\resizebox{\textwidth}{!}{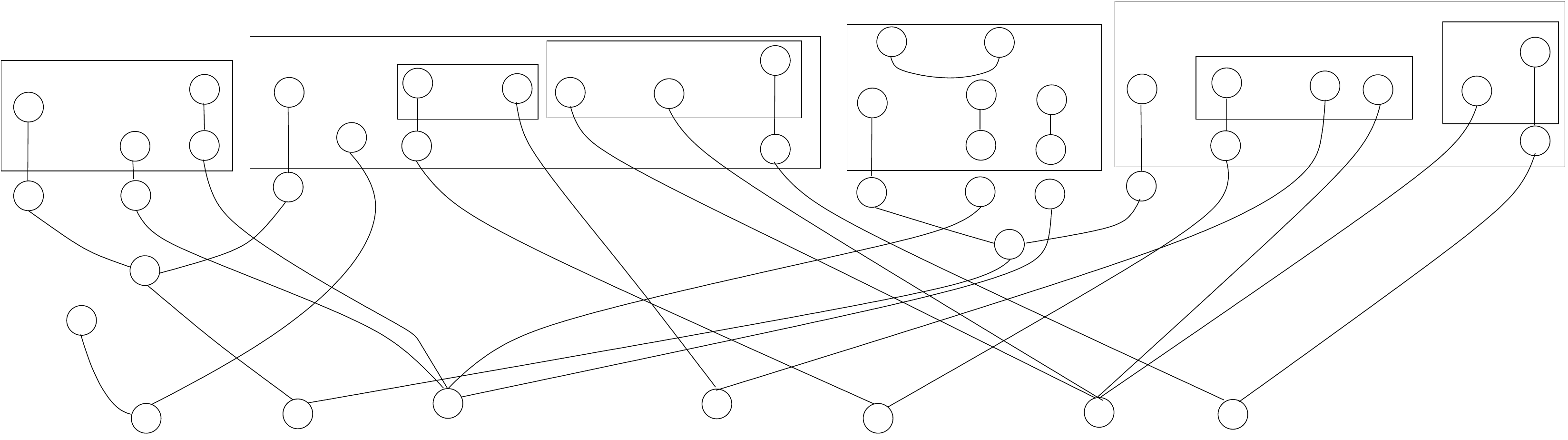}
\caption{The in-PS $\varphi \cdot_{o_3} R$}\label{figure: R_o3}
\end{figure}
\end{exa}

\begin{exa}
The in-PS $\varphi' \cdot_{o_4} R$, where $\varphi'$ is some bijection $\contractionsUnder{R}{o_4} \simeq \mathcal{Q'}$, is depicted in Figure~\ref{figure: R_o4}.
\begin{figure}
\resizebox{\textwidth}{!}{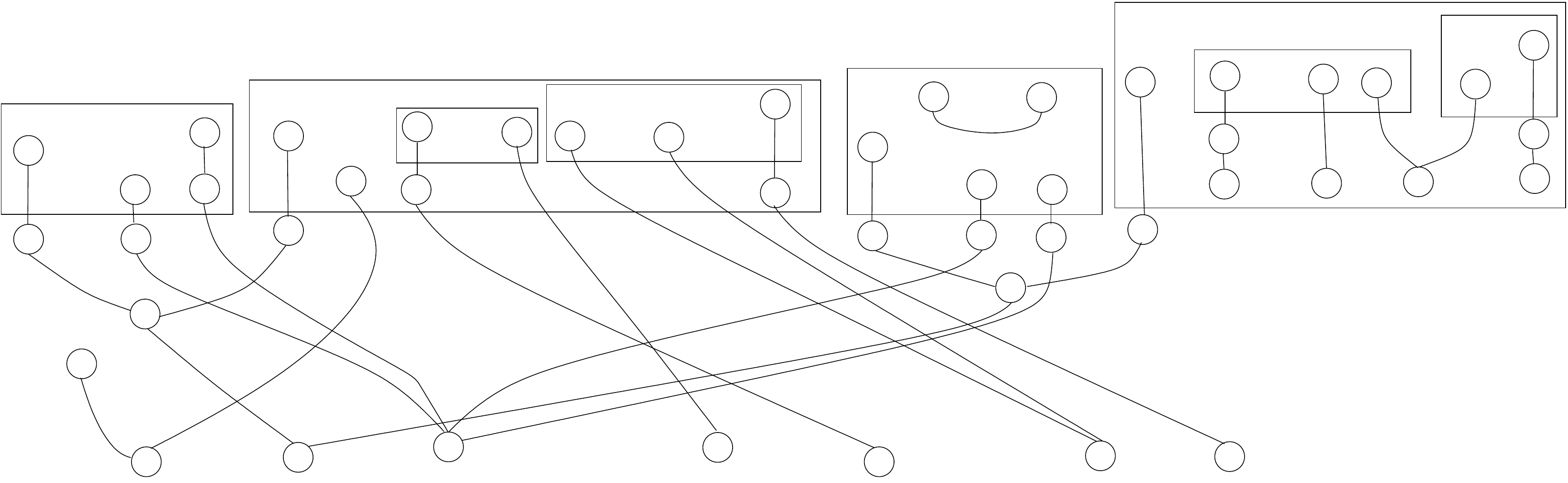}
\caption{The in-PS $\varphi' \cdot_{o_4} R$}\label{figure: R_o4}
\end{figure}
\end{exa}

\section{Taylor expansion}\label{section: Taylor expansion}

When Jean-Yves Girard introduced proof-nets in \cite{ll}, he also introduced \emph{experiments on proof-nets}. Experiments (see our Definition~\ref{defin: experiment} in Section~\ref{section: injectivity}) are a technology allowing to compute pointwisely the interpretation $\llbracket R \rrbracket$ of a proof-net $R$ in the model directly on the proof-net rather than through some sequent calculus proof obtained from one of its sequentializations: the set of \emph{results} of all the experiments on a given proof-net is its interpretation $\llbracket R \rrbracket$. In an untyped framework, experiments correspond to derivations of intersection types and results correspond to intersection types. 

Inspired by this notion, we introduce \emph{pseudo-experiments}. 

\begin{defi}\label{defin: pseudo-experiments}
For any differential in-PS $R$, we define, by induction on $\textit{depth}(R)$, the set $\pseudoExperiments{R}$ of \emph{pseudo-experiments on $R$}: it is the set of functions that associate with every $o \in \boxesatzero{R}$ a finite set of pseudo-experiments on $B_R(o)$ and with $\emptysequence$ some $m \in \Nat$.

Given a pseudo-experiment $e$ on a differential in-PS $R$, we define, by induction on $\textit{depth}(R)$, the function $e^\#: \boxes{R} \rightarrow \finitesubsets{\Nat}$ as follows: for any $o \in \boxesatzero{R}$, $e^\#(o) = \{ \Card{e(o)} \}$ and, for any $o' \in \boxes{B_R(o)}$, $e^\#(o, o') = \bigcup_{e_o \in e(o)} {e_o}^\#(o')$.
\end{defi}

Our definition is quite \emph{ad hoc}; actually, what we have in mind is the following notion of \emph{canonical pseudo-experiment}: A canonical pseudo-experiment on a differential in-PS $R$ is a function that associates with every $o \in \boxesatzero{R}$ a finite multiset of canonical pseudo-experiments on $B_R(o)$. Thus a canonical pseudo-experiment is an experiment without any labels on the axioms and any constraints on cuts. If we consider pseudo-experiments instead of canonical pseudo-experiments, it is only in order to be less verbose. For instance, if we used canonical pseudo-experiments instead of pseudo-experiments, the differential in-PS $S$ of Definition~\ref{defin: Taylor} could be defined by setting $$S = R^{\leq i} \oplus \bigoplus_{o \in \boxesatzerogeq{R}{i}} \bigoplus_{e_o \in \supp{e(o)}} \bigoplus_{z \in \{ 1, \ldots, e(e_o) \}} \langle o, \langle (e_o, z), \termofTaylor{B_R(o)}{e_o}{i} \rangle \rangle$$ instead of $S = R^{\leq i} \oplus \bigoplus_{o \in \boxesatzerogeq{R}{i}} \bigoplus_{e_o \in e(o)} \langle o, \langle e_o, \termofTaylor{B_R(o)}{e_o}{i} \rangle \rangle$. 

\begin{exa}\label{example: pseudo-experiment}
There exists a pseudo-experiment $e$ on the PS $R$ of Figure~\ref{fig: new_example} such that $e^\#(o_1) = \{ 10^{223} \}$, $e^\#(o_2) = \{ 10 \}$, $e^\#(o_3) = \{ 10^{224} \}$, $e^\#(o_4) = \{ 100 \}$, $e^\#((o_2, o))$ $=$ $\{ 10^3, \ldots, 10^{12} \}$, $e^\#((o_2, o'))$ $=$ $\{ 10^{13}, \ldots, 10^{22} \}$, $e^\#((o_4, o))$ $=$ $\{ 10^{23}, \ldots, 10^{122} \}$ and $e^\#((o_4, o'))$ $=$ $\{ 10^{123}, \ldots, 10^{222} \}$.
\end{exa}

For defining Taylor expansion, we only need to define $\termofTaylor{R}{e}{i}$ with $i = 0$, which is the differential in-PS obtained by (fully) expanding the boxes according to $e$. But a key tool of the proof is the introduction of the differential in-PS's $\termofTaylor{R}{e}{i}$ with $i > 0$, which are the differential in-PS's obtained by expanding only the boxes of depth at least $i$. We recall that the notation $\langle o, R \rangle$ for any differential in-PS $R$ was introduced in Definition~\ref{defin: special iso}.

\begin{defi}\label{defin: Taylor}
Let $R$ be an in-PS of depth $d$. Let $e$ be a pseudo-experiment on $R$. Let $i \in \Nat$. 
We define, by induction on $d$, a differential in-PS $\termofTaylor{R}{e}{i}$ of depth $\min \{ i, d \}$ and a function $\isaCopyfrom{R}{e}{i}: \ports{\termofTaylor{R}{e}{i}} \to \ports{R}$ as follows: We set $\termofTaylor{R}{e}{i} = S@t$, where 
$S = R^{\leq i} \oplus \bigoplus_{o \in \boxesatzerogeq{R}{i}} \bigoplus_{e_o \in e(o)} \langle o, \langle e_o, \termofTaylor{B_R(o)}{e_o}{i} \rangle \rangle$ 
and $t$ is the function $\mathcal{W}_0 \cup \mathcal{W}_{> 0} \to \exponentialportsatzero{R}$ with
\begin{itemize}
\item $\mathcal{W}_0 = \bigcup_{o \in \boxesatzerogeq{R}{i}} \bigcup_{e_o \in e(o)} \{ (o, (e_o, q)) ; (q \in \conclusions{\groundof{\termofTaylor{B_R(o)}{e_o}{i}}} \wedge \isaCopyfrom{B_R(o)}{e_o}{i}(q) \in \temporaryConclusions{R}{o}) \}$
\item $\mathcal{W}_{> 0} = \bigcup_{(o, (e_o, o')) \in \boxesatzero{S} \setminus \boxesatzero{R^{\leq i}}} \left\lbrace ((o, (e_o, o')), q) ; \begin{array}{ll} ((o', q) \in \nonshallowConclusions{\termofTaylor{B_R(o)}{e_o}{i}}\\ \wedge \isaCopyfrom{B_R(o)}{e_o}{i}(o', q) \in \temporaryConclusions{R}{o}) \end{array} \right\rbrace$
\item and $t(p) = \left\lbrace \begin{array}{ll} \target{R}(o, \isaCopyfrom{B_R(o)}{e_o}{i}(q)) & \textit{if $p = (o, (e_o, q)) \in \mathcal{W}_0$;}\\ \target{R}(o, \isaCopyfrom{B_R(o)}{e_o}{i}(o', q)) & \textit{if $p = ((o, (e_o, o')), q) \in \mathcal{W}_{> 0}$.} \end{array} \right.$
\end{itemize}
For any $p \in \ports{\termofTaylor{R}{e}{i}}$, the port $\isaCopyfrom{R}{e}{i}(p)$ of $R$ is the following one: 
$$\left\lbrace \begin{array}{ll} p & \textit{if $p \in \ports{R^{\leq i}}$;}\\ (o, \isaCopyfrom{B_R(o)}{e_o}{i}(p')) &\textit{if $p = (o, (e_o, p')) \in \portsatzero{\termofTaylor{R}{e}{i}} \setminus \portsatzero{R}$;}\\
(o, \isaCopyfrom{B_R(o)}{e_o}{i}(o', p')) & \textit{if $p = ((o, (e_o, o')), p')$ and $(o, (e_o, o')) \in \boxesatzero{\termofTaylor{R}{e}{i}} \setminus \boxesatzero{R^{\leq i}}$.} \end{array} \right.$$
If $o \in \boxesatzerogeq{R}{i}$ and $e_o$ is a pseudo-experiment on $B_R(o)$, then we set
\begin{align*}
  R\langle o, i, e_o \rangle = \langle o, \langle e_o, \termofTaylor{B_R(o)}{e_o}{i} \rangle \rangle.
\end{align*}
\end{defi}

\begin{rem}
We have $\mathcal{W}_0 \subseteq \wiresatzero{\termofTaylor{R}{e}{i}}$ and $\dom{\target{\termofTaylor{R}{e}{i}}}$ is the set
$$\dom{\target{R^{\leq i}}} \cup \left(\bigcup_{o \in \boxesatzerogeq{R}{i}} \bigcup_{e_o \in e(o)} \bigcup_{o' \in \boxesatzero{\termofTaylor{B_R(o)}{e_o}{i}}} \{ ((o, (e_o, o')), p) ; p \in \temporaryConclusions{\termofTaylor{B_R(o)}{e_o}{i}}{o'} \}\right) \cup \mathcal{W}_{> 0}$$
Moreover, for any $p \in \dom{\target{R^{\leq i}}}$, we have $\target{\termofTaylor{R}{e}{i}}(p) = \target{R}(p)$.
\end{rem}

Notice that if $R$ is cut-free, then $\termofTaylor{R}{e}{i}$ is cut-free too. The conclusions are the duplications of the conclusions; in particular, if $R$ is a PS, then $\termofTaylor{R}{e}{i}$ is a differential PS:

\begin{fact}\label{fact: conclusions}
Let $R$ be an in-PS. Let $e$ be a pseudo-experiment on $R$. Let $i \in \Nat$. Then we have $(\forall q \in \ports{\termofTaylor{R}{e}{i}}) (\isaCopyfrom{R}{e}{i}(q) \in \conclusions{R} \Leftrightarrow q \in \conclusions{\termofTaylor{R}{e}{i}})$.
\end{fact}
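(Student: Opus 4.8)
The plan is to argue by induction on $d = \depthof{R}$, reducing the claim to a bookkeeping of which conclusions survive the operations $\bigoplus$ and $@t$ used to build $T := \termofTaylor{R}{e}{i}$. The base case $d = 0$ is immediate: there are no boxes, so $T = R$ and $\kappa := \isaCopyfrom{R}{e}{i}$ is the identity. For the inductive step I would first isolate two structural facts. (i) Since $@t$ only turns the conclusions in $\dom{t}$ into targeted ports and adds no new ports, $\conclusions{T} = \conclusions{S} \setminus (\mathcal{W}_0 \cup \mathcal{W}_{> 0})$, where $S = R^{\leq i} \oplus \bigoplus_{o, e_o} \langle o, \langle e_o, T_o\rangle\rangle$. (ii) Since the pieces of $S$ are glued only along shared shallow contraction conclusions (which stay conclusions), a port lying in a single piece is a shallow conclusion of $S$ iff it is one of that piece. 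With these in hand I would split on the three clauses in the definition of $\kappa$, writing $T_o := \termofTaylor{B_R(o)}{e_o}{i}$ and $\kappa_o := \isaCopyfrom{B_R(o)}{e_o}{i}$.

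For $q \in \ports{R^{\leq i}}$ (so $\kappa(q) = q$) there are two easy subcases. If $q$ is at depth $0$, then $q \notin \dom{t}$ (all elements of $\dom{t}$ are freshly-named copies), and the wire/cut sets of $S$ agree with those of $R$ on $\portsatzero{R}$, so $q \in \conclusions{T} \Leftrightarrow q \in \conclusions{\groundof{R}} \Leftrightarrow q \in \conclusions{R}$. If $q = (o, q')$ lives inside a retained box $o \in \boxesatzerosmaller{R}{i}$, then $B_T(o) = B_R(o)$ and, since $\target{T}$ and $\target{R}$ agree on $\{o\} \times \temporaryConclusions{R}{o}$ for such $o$, the non-shallow-conclusion condition for $(o,q')$ is literally the same in $R$ and in $T$. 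For a depth-$0$ copy $q = (o, (e_o, p'))$ coming from an expanded box ($o \in \boxesatzerogeq{R}{i}$), fact (ii) gives $q \in \conclusions{\groundof{S}} \Leftrightarrow p' \in \conclusions{\groundof{T_o}}$, while $q \in \mathcal{W}_0 \Leftrightarrow (p' \in \conclusions{\groundof{T_o}} \wedge \kappa_o(p') \in \temporaryConclusions{R}{o})$ and $q \notin \mathcal{W}_{> 0}$; combining these yields $q \in \conclusions{T} \Leftrightarrow (p' \in \conclusions{\groundof{T_o}} \wedge \kappa_o(p') \notin \temporaryConclusions{R}{o})$. On the other side $(o, \kappa_o(p')) \in \conclusions{R} \Leftrightarrow (\kappa_o(p') \in \conclusions{B_R(o)} \wedge \kappa_o(p') \notin \temporaryConclusions{R}{o})$, and the induction hypothesis turns $\kappa_o(p') \in \conclusions{B_R(o)}$ into $p' \in \conclusions{T_o}$, i.e. into $p' \in \conclusions{\groundof{T_o}}$ since $p'$ is at depth $0$; the two sides match.

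The remaining and most delicate clause is $q = ((o, (e_o, o')), p')$ with $\hat o := (o, (e_o, o')) \in \boxesatzero{T} \setminus \boxesatzero{R^{\leq i}}$ and $\kappa(q) = (o, \kappa_o(o', p'))$. Here $B_T(\hat o) \cong B_{T_o}(o')$, so $q \in \conclusions{T}$ iff $p' \in \conclusions{B_{T_o}(o')}$ and $(\hat o, p') \notin \dom{\target{T}}$; if $p' \notin \conclusions{B_{T_o}(o')}$ both sides fail (via the induction hypothesis, $(o', p') \notin \conclusions{T_o}$ forces $\kappa_o(o', p') \notin \conclusions{B_R(o)}$), so assume $p' \in \conclusions{B_{T_o}(o')}$. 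I would then read off from the description of $\dom{\target{T}}$ that $(\hat o, p') \in \dom{\target{T}}$ holds exactly when either $(o', p') \in \dom{\target{T_o}}$ (the internal door-target inherited from $T_o$) or $(\hat o, p') \in \mathcal{W}_{> 0}$, the latter meaning $(o', p') \in \nonshallowConclusions{T_o} \wedge \kappa_o(o', p') \in \temporaryConclusions{R}{o}$. Writing $A$ for $(o', p') \in \dom{\target{T_o}}$ and $B$ for $\kappa_o(o', p') \in \temporaryConclusions{R}{o}$, and noting that under $p' \in \conclusions{B_{T_o}(o')}$ we have $(o', p') \in \conclusions{T_o} \Leftrightarrow \neg A$, this collapses by the Boolean identity $A \vee (\neg A \wedge B) = A \vee B$ to the statement that $(\hat o, p') \in \dom{\target{T}} \Leftrightarrow A \vee B$. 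On the other side, $(o, \kappa_o(o', p')) \in \conclusions{R} \Leftrightarrow (o', p') \in \conclusions{T_o} \wedge \kappa_o(o', p') \notin \temporaryConclusions{R}{o}$ by the induction hypothesis, i.e. $\Leftrightarrow \neg A \wedge \neg B$, which is exactly the negation of $A \vee B$; hence the two conditions agree. I expect this Boolean merging of the inherited door-targets with the new $\mathcal{W}_{> 0}$-wires to be the only real subtlety, everything else being a routine unwinding of the definitions of $\conclusions{\cdot}$, $\bigoplus$ and $@t$.
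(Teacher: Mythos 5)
Your proof is correct and follows essentially the same route as the paper's: induction on $\depthof{R}$ with a case analysis tracking the three clauses in the definition of $\isaCopyfrom{R}{e}{i}$ (ports of $R^{\leq i}$, depth-$0$ ports of expanded copies, ports inside the new boxes), applying the induction hypothesis to $B_R(o)$ in the last two cases. The only difference is presentational — you prove each case as a single chain of equivalences, making explicit the bookkeeping $\conclusions{S@t} = \conclusions{S} \setminus \dom{t}$ and the Boolean collapse $A \vee (\neg A \wedge B) = A \vee B$, whereas the paper establishes the two implications separately — but the underlying argument is the same.
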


\begin{proof}
By induction on $\depthof{R}$. Let $q \in \ports{\termofTaylor{R}{e}{i}}$ such that $\isaCopyfrom{R}{e}{i}(q) \in \conclusions{R}$. We distinguish between two cases:
\begin{itemize}
\item $\isaCopyfrom{R}{e}{i}(q) \in \conclusions{R^{\leq i}}$: we have $q = \isaCopyfrom{R}{e}{i}(q) \in \conclusions{R^{\leq i}} \subseteq \conclusions{\termofTaylor{R}{e}{i}}$;
\item There exist $o \in \boxesatzerogeq{R}{i}$ and $p \in \conclusions{B_R(o)} \setminus \temporaryConclusions{R}{o}$ such that $\isaCopyfrom{R}{e}{i}(q) = (o, p)$: Then
\begin{itemize}
\item either there exist $e_o \in e(o)$ and $q' \in \portsatzero{\termofTaylor{B_R(o)}{e_o}{i}}$ such that $q = (o, (e_o, q'))$ and $\isaCopyfrom{B_R(o)}{e_o}{i}(q') = p$, and then, by induction hypothesis, $q' \in \conclusions{\termofTaylor{B_R(o)}{e_o}{i}}$, hence $q = (o, (e_o, q')) \in \conclusions{R \langle o, i, e_o \rangle}$; since $\isaCopyfrom{B_R(o)}{e_o}{i}(q') \notin \temporaryConclusions{R}{o}$, we have $q \notin \dom{\target{\groundof{\termofTaylor{R}{e}{i}}}}$ and we obtain $q \in \conclusions{\termofTaylor{R}{e}{i}}$;
\item or there exist $e_o \in e(o)$, $o' \in \boxesatzero{\termofTaylor{B_R(o)}{e_o}{i}}$ and $q' \in \ports{B_{\termofTaylor{B_R(o)}{e_o}{i}}(o')}$ such that $q = ((o, (e_o, o')), q')$ and $\isaCopyfrom{B_R(o)}{e_o}{i}(o', q') = p$, and then, by induction hypothesis, $(o', q') \in \conclusions{\termofTaylor{B_R(o)}{e_o}{i}}$, hence $q = ((o, (e_o, o')), q') \in \conclusions{R \langle o, i, e_o \rangle}$; since $\isaCopyfrom{B_R(o)}{e_o}{i}(o', q') \notin \temporaryConclusions{R}{o}$, we have $q \notin \dom{\target{\termofTaylor{R}{e}{i}}}$ and we obtain $q \in \conclusions{\termofTaylor{R}{e}{i}}$.
\end{itemize}
\end{itemize}
Conversely, let $q \in \conclusions{\termofTaylor{R}{e}{i}}$. We distinguish between three cases:
\begin{itemize}
\item $q \in \conclusions{R^{\leq i}}$: we have $\isaCopyfrom{R}{e}{i}(q) = q \in \conclusions{R^{\leq i}} \subseteq \conclusions{\termofTaylor{R}{e}{i}}$;
\item There exist $o \in \boxesatzerogeq{R}{i}$, $e_o \in e(o)$ and $q' \in \conclusions{\groundof{\termofTaylor{B_R(o)}{e_o}{i}}}$ such that $\isaCopyfrom{B_R(o)}{e_o}{i}(q') \notin \temporaryConclusions{R}{o}$ and $q = (o, (e_o, q'))$: by induction hypothesis, we have $\isaCopyfrom{B_R(o)}{e_o}{i}(q') \in \conclusions{B_R(o)}$, hence $\isaCopyfrom{R}{e}{i}(q) = (o, \isaCopyfrom{B_R(o)}{e_o}{i}(q'))$ with $\isaCopyfrom{B_R(o)}{e_o}{i}(q') \in \conclusions{B_R(o)} \setminus \temporaryConclusions{R}{o}$; we thus have $\isaCopyfrom{R}{e}{i}(q) \in \conclusions{R}$.
\item There exist $o \in \boxesatzerogeq{R}{i}$, $e_o \in e(o)$, $o' \in \boxesatzero{\termofTaylor{B_R(o)}{e_o}{i}}$ and $q' \in \conclusions{B_{\termofTaylor{R}{e}{i}}(o, (e_o, o'))} \setminus \temporaryConclusions{\termofTaylor{R}{e}{i}}{(o, (e_o, o'))}$ such that $\isaCopyfrom{B_R(o)}{e_o}{i}(o', q') \notin \temporaryConclusions{R}{o}$ and $q = ((o, (e_o, o')), q')$: we have $q' \in \conclusions{B_{\termofTaylor{B_R(o)}{e_o}{i}}(o')} \setminus \temporaryConclusions{\termofTaylor{B_R(o)}{e_o}{i}}{o'}$, hence $(o', q') \in \conclusions{\termofTaylor{B_R(o)}{e_o}{i}}$; by induction hypothesis, we have $\isaCopyfrom{B_R(o)}{e_o}{i}(o', q') \in \conclusions{B_R(o)}$; since $\isaCopyfrom{B_R(o)}{e_o}{i}(o', q') \notin \temporaryConclusions{R}{o}$, we have $\isaCopyfrom{R}{e}{i}(q) = (o, \isaCopyfrom{B_R(o)}{e_o}{i}(o', q')) \in \conclusions{R}$.
\qedhere
\end{itemize}
\end{proof}

\begin{figure}
\centering
\resizebox{\textwidth}{4cm}{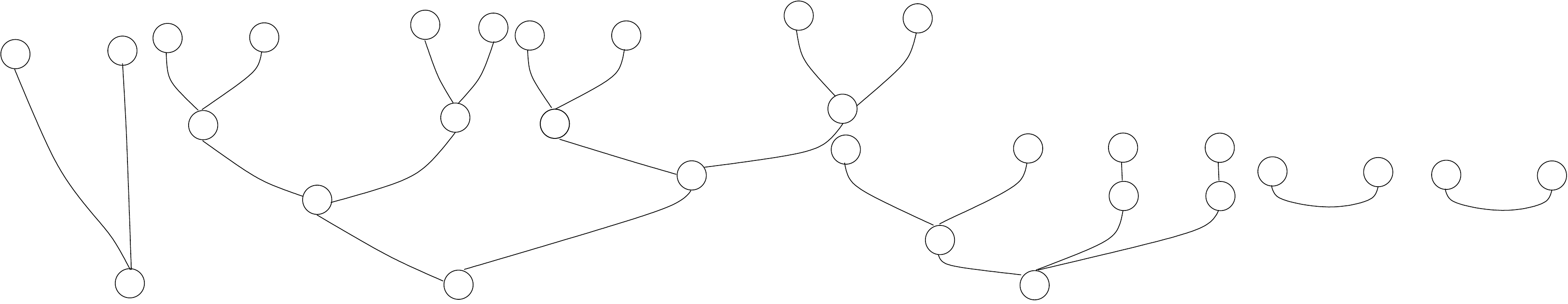}
\caption{The differential PS $S0_1$}
\label{fig: termofTaylor{R}{e}{0}: 1}
\centering
\resizebox{\textwidth}{!}{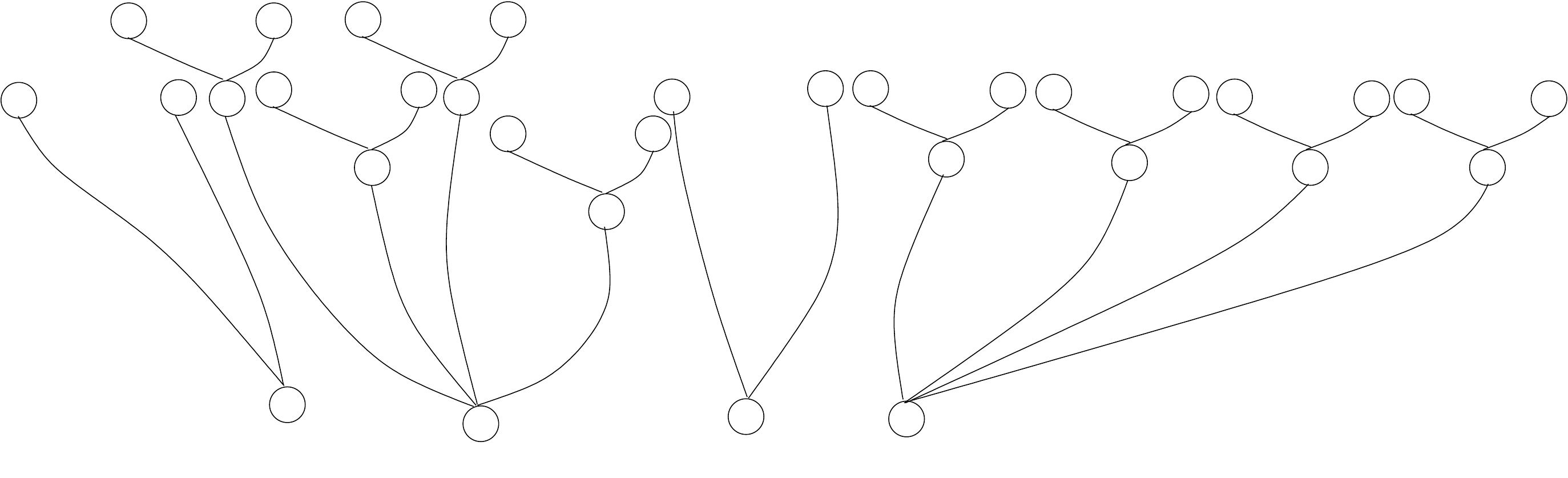}
\caption{The differential PS $S0_2$}
\label{fig: termofTaylor{R}{e}{0}: 2}
\end{figure}

\begin{exa}
Generally, given a PS $R$ and an integer $i$, the PS $\termofTaylor{R}{e}{i}$ does not depend only on $e^\#$ (we can have ${e_1}^\# = {e_2}^\#$ and not $\termofTaylor{R}{e_1}{i} \equiv \termofTaylor{R}{e_2}{i}$), but with the PS $R$ depicted in Figure~\ref{fig: new_example}, it is not the case: with \emph{this} PS $R$, we have $(\forall i \in \Nat) (\forall e_1, e_2 \in \pseudoExperiments{R}) ({e_1}^\# = {e_2}^\# \Rightarrow \termofTaylor{R}{e_1}{i} \equiv \termofTaylor{R}{e_2}{i})$. If $R$ is this PS and $e$ some pseudo-experiment such that $e^\#$ is as described in Example~\ref{example: pseudo-experiment}, then $\termofTaylor{R}{e}{0} = S0_1 \oplus S0_2$, where $S0_1$ and $S0_2$ are the differential PS's depicted in Figures~\ref{fig: termofTaylor{R}{e}{0}: 1} and~\ref{fig: termofTaylor{R}{e}{0}: 2} respectively, and $\termofTaylor{R}{e}{1} = S1_1 \oplus S1_2 \oplus S1_3$, where $S1_1$, $S1_2$ and $S1_3$ are the differential PS's depicted in Figures~\ref{fig: termofTaylor{R}{e}{1}: 1},~\ref{fig: termofTaylor{R}{e}{1}: 2} and~\ref{fig: termofTaylor{R}{e}{1}: 3} respectively.
\end{exa}

\begin{figure}
\centering
\begin{minipage}{0.3\textwidth}
\centering
\resizebox{.4 \textwidth}{!}{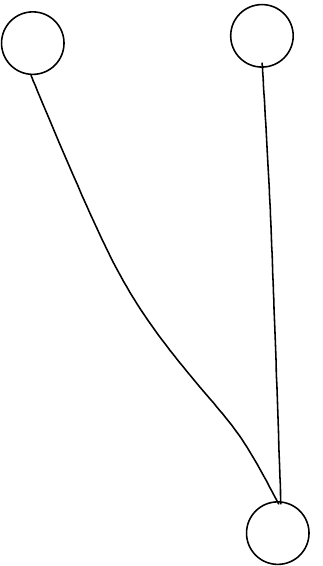}
\captionsetup{width=.9 \textwidth}
\caption{The differential PS $S1_1$}
\label{fig: termofTaylor{R}{e}{1}: 1}
\end{minipage}\hfill
\begin{minipage}{0.65 \textwidth}
\centering
\resizebox{\textwidth}{!}{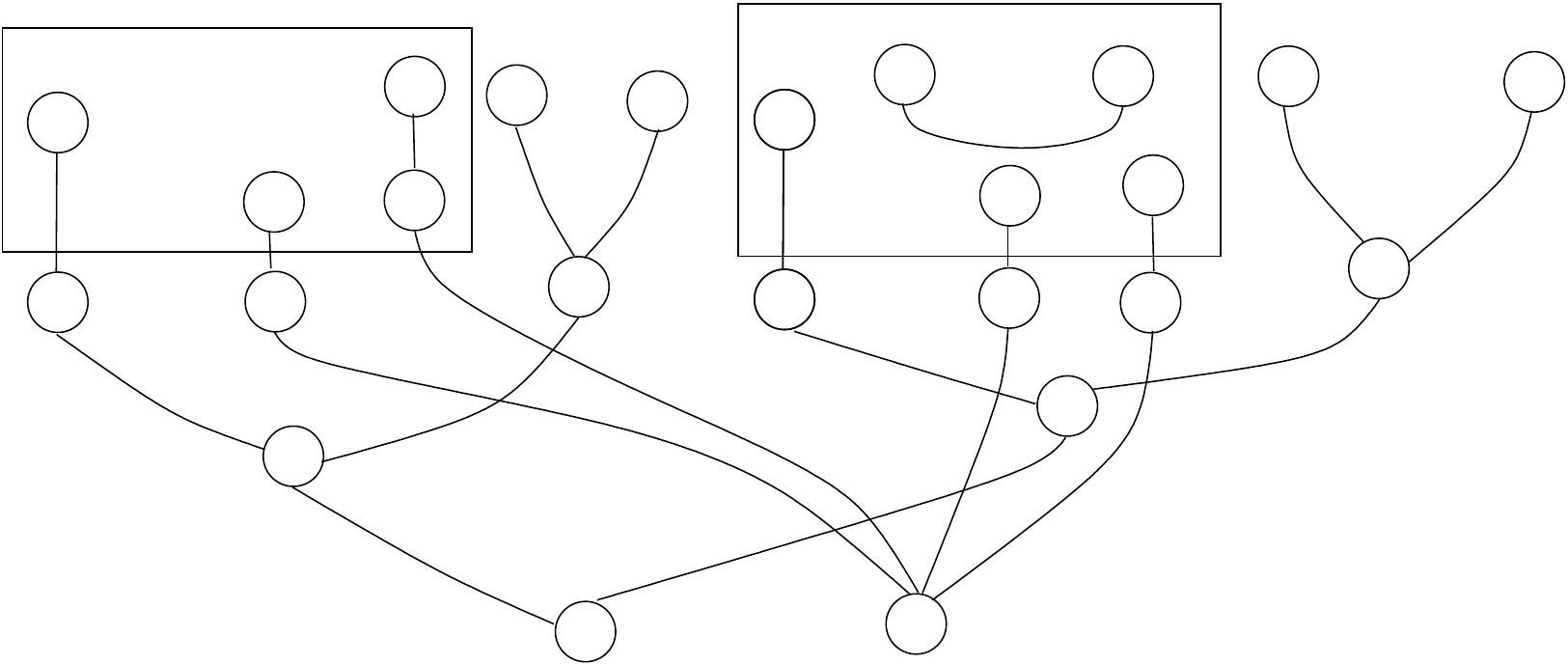}
\caption{The differential PS $S1_2$}
\label{fig: termofTaylor{R}{e}{1}: 2}
\end{minipage}\hfill
\centering
\resizebox{\textwidth}{!}{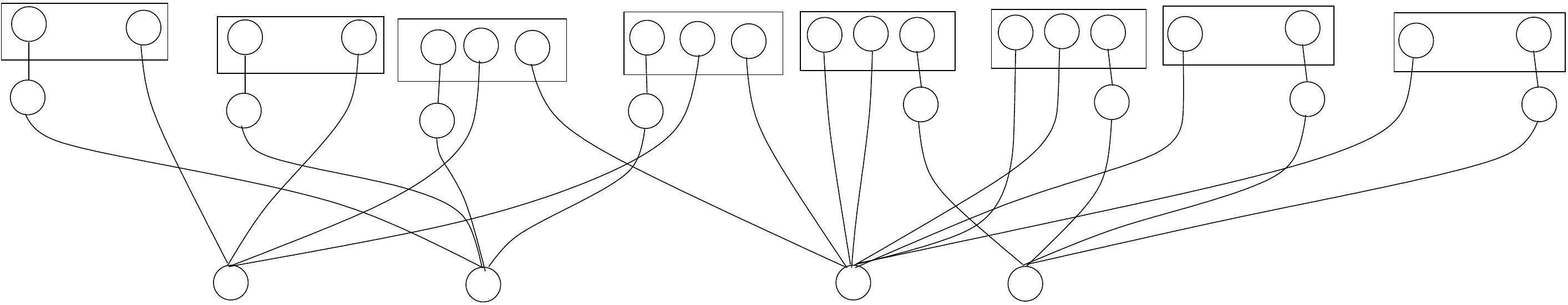}
\caption{The differential PS $S1_3$}
\label{fig: termofTaylor{R}{e}{1}: 3}
\end{figure}

The Taylor expansion we consider has no coefficients (i.e. has coefficients in the Boolean semiring $\mathbb{B} = \{ 0, 1 \}$, where $1+1 = 1$). In other words, we consider the support of Taylor expansion with coefficients:

\begin{defi}\label{defin: Taylor expansion}
Let $R$ be a PS. A \emph{Taylor expansion of $R$} is a set $\TaylorExpansion{R}$  of simple differential nets such that $\TaylorExpansion{R} \equiv \{ \termofTaylor{R}{e}{0} ; e \in \pseudoExperiments{R} \}$.
\end{defi}

It is clear that two Taylor expansions of $R$ are the same sets of simple differential nets up to the names of the ports that are not conclusions, that is why it makes sense to speak about ``the'' Taylor expansion of a PS.

An important case for our proof is the partial Taylor expansion of an in-PS of the form $\varphi \cdot_o R$ for some in-PS $R$, some box $o$ of $R$ at depth $0$ and some bijection $\varphi : \contractionsUnder{R}{o} \simeq \mathcal{Q'}$. Since we have $\pseudoExperiments{\varphi \cdot_o R} = \pseudoExperiments{R}$, we can compare $\termofTaylor{R}{e}{i}$ with $\termofTaylor{(\varphi \cdot_o R)}{e}{i}$. From $B_{B_{(\varphi \cdot_o R)}(o)} = B_{B_R(o)}$, we already deduce $B_{\termofTaylor{B_{(\varphi \cdot_o R)}(o)}{e}{i}} = B_{\termofTaylor{B_{R}(o)}{e}{i}}$; more information is given by the following lemma:

\begin{lem}\label{lem: termofTaylor of R_o}
Let $R$ be an in-PS. Let $o \in \boxesatzero{R}$. Let $e_o$ be a pseudo-experiment on $B_R(o)$. Let $\varphi$ be some bijection $\contractionsUnder{R}{o} \simeq \mathcal{Q'}$. Let $R_o$ be an in-PS such that $R_o = \varphi \cdot_o R$. Let $i \in \Nat$. Then we have:
\begin{itemize}
\item $\target{\termofTaylor{B_R(o)}{e_o}{i}} = \restriction{\target{\termofTaylor{B_{R_o}(o)}{e_o}{i}}}{\{ p \in \dom{\target{\termofTaylor{B_{R_o}(o)}{e_o}{i}}} ; \target{\termofTaylor{B_{R_o}(o)}{e_o}{i}}(p) \notin \mathcal{Q'} \}}$
\item and, for any $o_1 \in \boxesatzero{\termofTaylor{B_{R_o}(o)}{e_o}{i}}$, 
for any $p \in \temporaryConclusions{\termofTaylor{B_{R_o}(o)}{e_o}{i}}{o_1}$ such that
\begin{displaymath}
  \target{\termofTaylor{B_{R_o}(o)}{e_o}{i}}(o_1, p) \in \mathcal{Q'},
\end{displaymath}
we have $$\target{\termofTaylor{B_{R_o}(o)}{e_o}{i}}(o_1, p) = \varphi(\target{\termofTaylor{R}{e}{i}}((o, (e_o, o_1)), p))$$
\end{itemize}
\end{lem}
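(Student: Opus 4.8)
The plan is to prove both items by unfolding Definition~\ref{defin: Taylor} for $\termofTaylor{B_{R_o}(o)}{e_o}{i}$ and for $\termofTaylor{B_R(o)}{e_o}{i}$ and comparing the two expansions componentwise. The decisive preliminary remark is that $B_{R_o}(o) = (B_R(o) \oplus \bigoplus_{q' \in \mathcal{Q'}} \contr_{q'})@t$ differs from $B_R(o)$ only by the contraction ports $\mathcal{Q'}$, which sit at depth $0$, and by the wires $t$ pointing to them; in particular $B_{R_o}(o)$ and $B_R(o)$ have exactly the same boxes and, as noted before the statement, $B_{B_{R_o}(o)}(o'') = B_{B_R(o)}(o'')$ for every box $o''$. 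Hence the box-expansion summand $\bigoplus_{o'' \in \boxesatzerogeq{B_R(o)}{i}} \bigoplus_{e_{o''} \in e_o(o'')} \langle o'', \langle e_{o''}, \termofTaylor{B_{B_R(o)}(o'')}{e_{o''}}{i} \rangle \rangle$ of Definition~\ref{defin: Taylor} is literally the same in both expansions: since the nested box contents are unchanged, their Taylor expansions (and the associated tracking functions $\isaCopyfrom{\cdot}{\cdot}{i}$) coincide, so no induction on the depth is needed. Using Remark~\ref{rem: @<=i} and Remark~\ref{rem: sum <=i} I would rewrite $B_{R_o}(o)^{\leq i}$ as $B_R(o)^{\leq i}$ together with the same contractions $\mathcal{Q'}$ and the (restricted) wires to them, so that the entire difference between the two Taylor expansions is concentrated in the wires whose target is $\mathcal{Q'}$.

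For the first item I would observe that the operation $@t$ enlarges the targeting function only by wires whose target lies in $\mathcal{Q'}$, and that $\mathcal{Q'}$ is disjoint from $\portsatzero{B_R(o)}$, which contains all possible targets occurring in $\termofTaylor{B_R(o)}{e_o}{i}$. Therefore the ports of $\termofTaylor{B_{R_o}(o)}{e_o}{i}$ whose target is not in $\mathcal{Q'}$ are exactly the ports already present in the domain of $\target{\termofTaylor{B_R(o)}{e_o}{i}}$, with unchanged targets, which is the claimed restriction identity. The point to check carefully is that a non-shallow conclusion $(o'', q)$ of $B_R(o)$ belonging to $\temporaryConclusions{R}{o}$ --- a conclusion of $\termofTaylor{B_R(o)}{e_o}{i}$ --- becomes, in $B_{R_o}(o)$, a new element of $\temporaryConclusions{B_{R_o}(o)}{o''}$ (it now targets $\mathcal{Q'}$), so one verifies that all such extra temporary conclusions contribute only $\mathcal{Q'}$-targeting wires and hence vanish under the restriction.

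For the second item I would fix a box $o_1 \in \boxesatzero{\termofTaylor{B_{R_o}(o)}{e_o}{i}}$ and a conclusion $p \in \temporaryConclusions{\termofTaylor{B_{R_o}(o)}{e_o}{i}}{o_1}$ with $\target{\termofTaylor{B_{R_o}(o)}{e_o}{i}}(o_1, p) \in \mathcal{Q'}$, and trace this wire back through the relevant clause of Definition~\ref{defin: Taylor} (the $\mathcal{W}_{> 0}$ clause when $o_1$ comes from an expanded box, the direct $B_{R_o}(o)^{\leq i}$ clause when $o_1$ is a box of depth $< i$). In either case the target is $\target{B_{R_o}(o)}$ evaluated at the port of $B_{R_o}(o)$ that $(o_1, p)$ is a copy of, namely $q := \isaCopyfrom{B_{R_o}(o)}{e_o}{i}(o_1, p)$; since this value lies in $\mathcal{Q'}$, Remark~\ref{rem: action on PS} gives $\target{B_{R_o}(o)}(q) = \varphi(\target{R}(o, q))$. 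On the other side, because the tracking functions for $B_{R_o}(o)$ and $B_R(o)$ agree, $\isaCopyfrom{B_R(o)}{e_o}{i}(o_1, p) = q$, and the $\mathcal{W}_{> 0}$ clause applied to $\termofTaylor{R}{e}{i}$ yields $\target{\termofTaylor{R}{e}{i}}((o, (e_o, o_1)), p) = \target{R}(o, q)$; applying $\varphi$ gives the desired equation.

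The main obstacle I anticipate is the depth stratification. One must split the conclusions of $B_R(o)$ lying in $\temporaryConclusions{R}{o}$ into shallow ones, which remain in the $(\cdot)^{\leq i}$ part, and non-shallow ones, which reappear only after the boxes of depth $\geq i$ have been expanded, as $\mathcal{W}_{> 0}$ wires, and then check that in each stratum the wire to $\mathcal{Q'}$ matches, under $\varphi$, the corresponding wire of $\termofTaylor{R}{e}{i}$. Keeping $\isaCopyfrom{B_R(o)}{e_o}{i}$ and $\isaCopyfrom{B_{R_o}(o)}{e_o}{i}$ synchronized --- legitimate precisely because $B_{R_o}(o)$ and $B_R(o)$ have identical box contents --- together with an honest case analysis on whether the boxes involved have depth below or at least $i$, is the technical heart of the argument.
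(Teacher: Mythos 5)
Your proposal is correct and follows essentially the same route as the paper's proof: a direct, non-inductive unfolding of Definition~\ref{defin: Taylor}, exploiting that $B_{B_{R_o}(o)}(o') = B_{B_R(o)}(o')$ makes the box-expansion summands and the tracking functions coincide, so that the two targeting functions can only differ by wires into $\mathcal{Q'}$, which are then identified via Remark~\ref{rem: action on PS} and a case split on whether the box involved has depth $< i$ or $\geq i$. The only ingredient you leave implicit is the use of Fact~\ref{fact: conclusions} to certify the temporary-conclusion memberships needed to place a wire in the $\mathcal{W}_{>0}$ clause (and to rule out that it is an internal wire of a copy), which is exactly the bookkeeping the paper's case analysis carries out.
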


\begin{proof}
We set 
$$S = {B_{R}(o)}^{\leq i} \oplus \bigoplus_{o' \in \boxesatzerogeq{{B_{R}(o)}}{i}} \bigoplus_{e_{o'} \in e_o(o')} \langle o', \langle e_{o'}, \termofTaylor{B_{B_{R}(o)}(o')}{e_{o'}}{i} \rangle \rangle$$ 
and
$$S_o = {B_{R_o}(o)}^{\leq i} \oplus \bigoplus_{o' \in \boxesatzerogeq{{B_{R_o}(o)}}{i}} \bigoplus_{e_{o'} \in e_o(o')} \langle o', \langle e_{o'}, \termofTaylor{B_{B_{R_o}(o)}(o')}{e_{o'}}{i} \rangle \rangle$$ 
Notice that $$S_o = {B_{R_o}(o)}^{\leq i} \oplus \bigoplus_{o' \in \boxesatzerogeq{{B_{R}(o)}}{i}} \bigoplus_{e_{o'} \in e_o(o')} \langle o', \langle e_{o'}, \termofTaylor{B_{B_{R}(o)}(o')}{e_{o'}}{i} \rangle \rangle$$
and $\boxesatzero{S_o} = \boxesatzero{S}$. 
Let $p \in \dom{\target{\termofTaylor{B_R(o)}{e_o}{i}}}$ and let us show that $p \in \dom{\target{\termofTaylor{B_{R_o}(o)}{e_o}{i}}}$. We distinguish between two cases:
\begin{itemize}
\item $p \in \dom{\target{{B_{R}(o)}^{\leq i}}}$: Then $p \in \dom{\target{{B_{R_o}(o)}^{\leq i}}} \subseteq \dom{\target{\termofTaylor{B_{R_o}(o)}{e_o}{i}}}$.
\item $p \notin \dom{\target{{B_{R}(o)}^{\leq i}}}$: 
Then there exists $(o', (e_{o'}, o'')) \in \boxesatzero{S} \setminus \boxesatzero{{B_{R}(o)}^{\leq i}} = \boxesatzero{S_o} \setminus \boxesatzero{{B_{R_o}(o)}^{\leq i}}$ such that $p$ is an element of the set
\begin{eqnarray*}
& & \left\lbrace ((o', (e_{o'}, o'')), q) ; \begin{array}{ll} ((o'', q) \in \nonshallowConclusions{\termofTaylor{B_{B_R(o)}(o')}{e_{o'}}{i}}\\ \wedge \isaCopyfrom{B_{B_R(o)}(o')}{e_{o'}}{i}(o'', q) \in \temporaryConclusions{B_R(o)}{o'}) \end{array} \right\rbrace\\
& = & \left\lbrace ((o', (e_{o'}, o'')), q) ; \begin{array}{ll} ((o'', q) \in \nonshallowConclusions{\termofTaylor{B_{B_{R_o}(o)}(o')}{e_{o'}}{i}}\\ \wedge \isaCopyfrom{B_{B_{R_o}(o)}(o')}{e_{o'}}{i}(o'', q) \in \temporaryConclusions{B_R(o)}{o'}) \end{array} \right\rbrace\allowdisplaybreaks\\
& \subseteq & \left\lbrace ((o', (e_{o'}, o'')), q) ; \begin{array}{ll} ((o'', q) \in  \nonshallowConclusions{\termofTaylor{B_{B_{R_o}(o)}(o')}{e_{o'}}{i}}\\ \wedge \isaCopyfrom{B_{B_{R_o}(o)}(o')}{e_{o'}}{i}(o'', q) \in \temporaryConclusions{B_{R_o}(o)}{o'}) \end{array} \right\rbrace\\
& \subseteq & \dom{\target{\termofTaylor{B_{R_o}(o)}{e_o}{i}}}
\end{eqnarray*}
\end{itemize}
Conversely, let $p \in \dom{\target{\termofTaylor{B_{R_o}(o)}{e_o}{i}}}$ such that $\target{\termofTaylor{B_{R_o}(o)}{e_o}{i}}(p) \notin \mathcal{Q'}$; we will check that $p \in \dom{\target{\termofTaylor{B_{R}(o)}{e_o}{i}}}$ and $\target{\termofTaylor{B_{R_o}(o)}{e_o}{i}}(p) = \target{\termofTaylor{B_{R}(o)}{e_o}{i}}(p)$. We distinguish between two cases:
\begin{itemize}
\item $p \in \dom{\target{{B_{R_o}(o)}^{\leq i}}}$: Then $p \in \dom{\target{{({B_R(o)} \oplus \bigoplus_{q' \in \mathcal{Q'}} \contr_{q'})}^{\leq i}}} = \dom{\target{{{B_R(o)}}^{\leq i}}} \subseteq \dom{\target{\termofTaylor{B_R(o)}{e_o}{i}}}$. In this case, we have $\target{\termofTaylor{B_{R}(o)}{e_o}{i}}(p) = \target{B_{R}(o)}(p) = \target{B_{R_o}(o)}(p) = \target{\termofTaylor{B_{R_o}(o)}{e_o}{i}}(p)$.
\item $p \notin \dom{\target{{B_{R_o}(o)}^{\leq i}}}$: Then there exists $(o', (e_{o'}, o'')) \in \boxesatzero{S_o} \setminus \boxesatzero{{B_{R_o}(o)}^{\leq i}} = \boxesatzero{S} \setminus \boxesatzero{{B_{R}(o)}^{\leq i}}$ such that $p$ is an element of the set
\begin{eqnarray*}
& & \left\lbrace ((o', (e_{o'}, o'')), q) ; \begin{array}{ll} ((o'', q) \in \nonshallowConclusions{\termofTaylor{B_{B_{R_o}(o)}(o')}{e_{o'}}{i}}\\ \wedge \isaCopyfrom{B_{B_{R_o}(o)}(o')}{e_{o'}}{i}(o'', q) \in \temporaryConclusions{{B_{R_o}(o)}}{o'}) \end{array} \right\rbrace \\
& = & \left\lbrace ((o', (e_{o'}, o'')), q) ; \begin{array}{ll} ((o'', q) \in \nonshallowConclusions{\termofTaylor{B_{B_{R}(o)}(o')}{e_{o'}}{i}}\\ \wedge \isaCopyfrom{B_{B_{R}(o)}(o')}{e_{o'}}{i}(o'', q) \in \temporaryConclusions{{B_{R_o}(o)}}{o'}) \end{array} \right\rbrace\\
& & \textit{(because $B_{B_{R_o}(o)}(o') = B_{B_{R}(o)}(o')$)}\allowdisplaybreaks\\
& = & \left\lbrace ((o', (e_{o'}, o'')), q) ; \begin{array}{ll} ((o'', q) \in \nonshallowConclusions{\termofTaylor{B_{B_{R}(o)}(o')}{e_{o'}}{i}}\\ \wedge \isaCopyfrom{B_{B_{R}(o)}(o')}{e_{o'}}{i}(o'', q) \in \temporaryConclusions{{B_{R}(o)}}{o'}) \end{array} \right\rbrace\\
& \cup & \left\lbrace ((o', (e_{o'}, o'')), q) ; \begin{array}{ll} ((o'', q) \in \nonshallowConclusions{\termofTaylor{B_{B_{R}(o)}(o')}{e_{o'}}{i}}\\ \wedge \isaCopyfrom{B_{B_{R}(o)}(o')}{e_{o'}}{i}(o'', q) \in \temporaryConclusions{{B_{R_o}(o)}}{o'}\\ \wedge \target{B_{R_o}(o)}(o', \isaCopyfrom{B_{B_{R}(o)}(o')}{e_{o'}}{i}(o'', q)) \in \mathcal{Q'}
) \end{array}\right\rbrace \\
& & \textit{(by Remark~\ref{rem: action on PS}, $\temporaryConclusions{B_{R_o}(o)}{o'} = \temporaryConclusions{B_{R}(o)}{o'} \cup \{ q' \in \temporaryConclusions{B_{R_o}(o)}{o'}; \target{B_{R_o}(o)}(o', q') \in \mathcal{Q'} \}$)}\allowdisplaybreaks\\
& = & \left\lbrace ((o', (e_{o'}, o'')), q) ; \begin{array}{ll} ((o'', q) \in \nonshallowConclusions{\termofTaylor{B_{B_{R}(o)}(o')}{e_{o'}}{i}}\\ \wedge \isaCopyfrom{B_{B_{R}(o)}(o')}{e_{o'}}{i}(o'', q) \in \temporaryConclusions{{B_{R}(o)}}{o'}) \end{array} \right\rbrace\\
& \cup & \left\lbrace ((o', (e_{o'}, o'')), q) ; \begin{array}{ll} ((o'', q) \in \nonshallowConclusions{\termofTaylor{B_{B_{R}(o)}(o')}{e_{o'}}{i}}\\ \wedge 
\target{\termofTaylor{B_{R_o}(o)}{e_o}{i}}((o', (e_{o'}, o'')), q)
 \in \mathcal{Q'}
) \end{array} \right\rbrace\allowdisplaybreaks\\
& = & \left\lbrace ((o', (e_{o'}, o'')), q) ; \begin{array}{ll} ((o'', q) \in \nonshallowConclusions{\termofTaylor{B_{B_{R}(o)}(o')}{e_{o'}}{i}}\\ \wedge \isaCopyfrom{B_{B_{R}(o)}(o')}{e_{o'}}{i}(o'', q) \in \temporaryConclusions{{B_{R}(o)}}{o'}) \end{array} \right\rbrace\\
& & \textit{(by assumption, we have $\target{\termofTaylor{B_{R_o}(o)}{e_o}{i}}(p) \notin \mathcal{Q'}$)}\\
& \subseteq & \dom{\target{\termofTaylor{B_{R}(o)}{e_o}{i}}}
\end{eqnarray*}

In this case, if $q$ is such that $p = ((o', (e_{o'}, o'')), q)$, then we have 
\begin{eqnarray*}
\target{\termofTaylor{B_{R}(o)}{e_o}{i}}(p) & = & \target{B_R(o)}(o', \isaCopyfrom{B_{B_R(o)}(o')}{e_{o'}}{i}(o'', q))\\
& = & \target{B_{R_o}(o)}(o', \isaCopyfrom{B_{B_{R_o}(o)}(o')}{e_{o'}}{i}(o'', q))\\
& = & \target{\termofTaylor{B_{R_o}(o)}{e_o}{i}}(p)
\end{eqnarray*}
\end{itemize}
We thus have $\target{\termofTaylor{B_R(o)}{e_o}{i}} = \restriction{\target{\termofTaylor{B_{R_o}(o)}{e_o}{i}}}{\{ p \in \dom{\target{\termofTaylor{B_{R_o}(o)}{e_o}{i}}} ; \target{\termofTaylor{B_{R_o}(o)}{e_o}{i}}(p) \notin \mathcal{Q'} \}}$. 

Now, let $o_1 \in \boxesatzero{\termofTaylor{B_{R_o}(o)}{e_o}{i}}$ and $p \in \temporaryConclusions{\termofTaylor{B_{R_o}(o)}{e_o}{i}}{o_1}$ such that $\target{\termofTaylor{B_{R_o}(o)}{e_o}{i}}(o_1, p) \in \mathcal{Q'}$.
\begin{itemize}
\item If $o_1 \in \boxesatzerosmaller{B_{R_o}(o)}{i}$, then $(o_1, p) \in \temporaryConclusions{R}{o}$, hence $\target{\termofTaylor{B_{R_o}(o)}{e_o}{i}}(o_1, p) = \target{B_{R_o}(o)}(o_1, p) = \varphi(\target{R}(o, (o_1, p))) = \varphi(\target{R}(o, \isaCopyfrom{B_{R_o}(o)}{e_o}{i}(o_1, p)))$;
\item if $o_1 = (o', (e_{o'}, o''))$ with $o' \in \boxesatzerogeq{B_{R_o}(o)}{i}$, then $((o', (e_{o'}, o'')), p) \notin \conclusions{\termofTaylor{B_{R_o}(o)}{e_o}{i}}$, hence, by Fact~\ref{fact: conclusions}, $(o', \isaCopyfrom{B_{B_{R_o}(o)}(o')}{e_{o'}}{i}(o'', p)) = \isaCopyfrom{B_{R_o}(o)}{e_{o}}{i}((o', (e_{o'}, o'')), p) \notin \conclusions{B_{R_o}(o)}$. Moreover, since $\target{\termofTaylor{B_{R_o}(o)}{e_o}{i}}(o_1, p) \in \mathcal{Q'}$, we have $(o'', p) \in \conclusions{\termofTaylor{B_{B_{R_o}(o)}(o')}{e_{o'}}{i}}$ (otherwise, $\target{\termofTaylor{B_{B_{R_o}(o)}(o')}{e_{o'}}{i}}(o'', p) \in \portsatzero{\termofTaylor{B_{B_{R_o}(o)}(o')}{e_{o'}}{i}} = \portsatzero{\termofTaylor{B_{B_{R}(o)}(o')}{e_{o'}}{i}}$, which entails $\target{\termofTaylor{B_{R_o}(o)}{e_o}{i}}(o_1, p) 
\in \portsatzero{\termofTaylor{{B_{R}(o)}}{e_{o}}{i}}$, which contradicts $\target{\termofTaylor{B_{R_o}(o)}{e_o}{i}}(o_1, p) \in \mathcal{Q'}$). By Fact~\ref{fact: conclusions}, we obtain $\isaCopyfrom{B_{B_{R_o}(o)}(o')}{e_{o'}}{i}(o'', p) \in \conclusions{B_{B_{R_o}(o)}(o')}$. We showed $\isaCopyfrom{B_{B_{R_o}(o)}(o')}{e_{o'}}{i}(o'', p) \in \temporaryConclusions{B_{R_o}(o)}{o'}$. We thus have 
\begin{eqnarray*}
\target{\termofTaylor{B_{R_o}(o)}{e_o}{i}}(o_1, p) & = & \target{B_{R_o}(o)}(o', \isaCopyfrom{B_{B_{R_o}(o)}(o')}{e_{o'}}{i}(o'', p))\\
& = & \varphi(\target{R}(o, (o', \isaCopyfrom{B_{B_{R_o}(o)}(o')}{e_{o'}}{i}(o'', p))))\\
& = &\varphi(\target{R}(o, \isaCopyfrom{B_{R_o}(o)}{e_o}{i}(o_1, p)));
\end{eqnarray*}
\end{itemize}
now, we have $\target{R}(o, \isaCopyfrom{B_{R_o}(o)}{e_o}{i}(o_1, p)) = \target{R}(o, \isaCopyfrom{B_{R}(o)}{e_o}{i}(o_1, p)) = \target{\termofTaylor{R}{e}{i}}((o, (e_o, o_1)), p)$.
\end{proof}

The rest of this section is devoted to show Proposition~\ref{prop: arity}, which shows how one can compute, for any in-PS $R$, the arity in $\termofTaylor{R}{e}{i}$ of a port at depth $0$ of $R$. For that purpose, we introduce the function $b_S^{\geq i}$ that associates with every port $p$ of $S$ at depth greater than $i$ the deepest box of depth at least $i$ that contains $p$:

\begin{defi}
For any differential in-PS $S$, for any $i \in \Nat$, we define, by induction on $\depthof{S}$, the function $b_S^{\geq i} : \portsatdepthgreater{S}{i} \to \boxesgeq{S}{i}$ as follows: 
$$\begin{array}{ccl}
b_S^{\geq i}: \portsatdepthgreater{S}{i} & \to & \boxesgeq{S}{i}\\
(o, p) & \mapsto & \left\lbrace \begin{array}{ll} o & \textit{if $o \in \boxesatzerogeq{S}{i}$ and $p \in \portsatdepthleq{B_S(o)}{i}$;}\\ (o, b_{B_S(o)}^{\geq i}(p)) & \textit{if $o \in \boxesatzerogeq{S}{i}$ and $p \in \portsatdepthgreater{B_S(o)}{i}$;} \end{array} \right.
\end{array}$$
\end{defi}

\begin{fact}\label{fact: characterization of ports at depth 0}
Let $R$ be an in-PS. Let $e$ be a pseudo-experiment on $R$. Let $i \in \Nat$. Then we have 
$$(\forall p \in \ports{\termofTaylor{R}{e}{i}} \setminus \portsatzero{R}) (p \in \portsatzero{\termofTaylor{R}{e}{i}} \Leftrightarrow (b_R^{\geq 0} \circ \isaCopyfrom{R}{i}{e})(p) \in \boxesgeq{R}{i})$$
\end{fact}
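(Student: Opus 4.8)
The plan is to argue by induction on $\depthof{R}$, exploiting the explicit description of the ports of $\termofTaylor{R}{e}{i}$ and of the copying function $\isaCopyfrom{R}{e}{i}$ of Definition~\ref{defin: Taylor}. First I would record that, since $\termofTaylor{R}{e}{i} = S@t$ and $@t$ does not touch depth-$0$ ports, we have
$$\portsatzero{\termofTaylor{R}{e}{i}} = \portsatzero{R} \cup \bigcup_{o \in \boxesatzerogeq{R}{i}} \bigcup_{e_o \in e(o)} \{ (o, (e_o, q)) ; q \in \portsatzero{\termofTaylor{B_R(o)}{e_o}{i}} \},$$
and that each $p \in \ports{\termofTaylor{R}{e}{i}}$ falls into exactly one of the three cases defining $\isaCopyfrom{R}{e}{i}$. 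Deleting the ports of $\portsatzero{R}$ leaves three families: (a) $p = (o, p') \in \ports{R^{\leq i}}$ with $o \in \boxesatzerosmaller{R}{i}$, for which $\isaCopyfrom{R}{e}{i}(p) = p$; (b) $p = (o, (e_o, q))$ with $o \in \boxesatzerogeq{R}{i}$ and $q \in \portsatzero{\termofTaylor{B_R(o)}{e_o}{i}}$; and (c) $p = ((o,(e_o,o')), q')$ sitting inside an expanded box of $\termofTaylor{R}{e}{i}$. Exactly the ports of family (b) belong to $\portsatzero{\termofTaylor{R}{e}{i}}$, so the left-hand side of the equivalence selects family (b), and the statement reduces to checking, case by case, that $(b_R^{\geq 0} \circ \isaCopyfrom{R}{e}{i})(p) \in \boxesgeq{R}{i}$ holds precisely for family (b).

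Before the case analysis I would establish two preliminary observations. The first is that $\isaCopyfrom{R}{e}{i}(p) \in \portsatzero{R}$ iff $p \in \portsatzero{R}$: in cases (b) and (c) the value of $\isaCopyfrom{R}{e}{i}$ is a pair $(o, \cdot)$ with $o \in \boxesatzero{R}$, hence lies in $\{o\} \times \ports{B_R(o)}$, which is disjoint from $\portsatzero{R} = \ports{\groundof{R}}$. Consequently, for every $p$ in the quantification range we have $\isaCopyfrom{R}{e}{i}(p) \in \portsatdepthgreater{R}{0}$, the domain of $b_R^{\geq 0}$, so the composite is well-defined. The second observation is the compatibility of $b^{\geq 0}$ with the box hierarchy, read off from its defining recursion: for $(o, r)$ with $r$ at positive depth in $B_R(o)$ one has $b_R^{\geq 0}(o, r) = (o, b_{B_R(o)}^{\geq 0}(r))$, while for $r \in \portsatzero{B_R(o)}$ one has $b_R^{\geq 0}(o, r) = o$; moreover, by the recursive definition of $\boxesgeq{R}{i}$, we have $(o, \hat o) \in \boxesgeq{R}{i}$ iff $\hat o \in \boxesgeq{B_R(o)}{i}$.

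With these in hand the three cases go as follows. In case (a), $\isaCopyfrom{R}{e}{i}(p) = p = (o, p')$ with $\depthof{B_R(o)} < i$; the innermost enclosing box $b_R^{\geq 0}(p)$ is $o$ (which lies in $\boxesatzerosmaller{R}{i}$) or a box $(o,\hat o)$ with $\hat o$ a box of $B_R(o)$, and since $\depthof{B_R(o)} < i$ forces $\boxesgeq{B_R(o)}{i} = \emptyset$, in both subcases $b_R^{\geq 0}(p) \notin \boxesgeq{R}{i}$, matching $p \notin \portsatzero{\termofTaylor{R}{e}{i}}$. In case (b), $\isaCopyfrom{R}{e}{i}(p) = (o, \isaCopyfrom{B_R(o)}{e_o}{i}(q))$ with $o \in \boxesatzerogeq{R}{i}$; if $q \in \portsatzero{B_R(o)}$ then $b_R^{\geq 0}$ returns $o \in \boxesgeq{R}{i}$, while if $q$ is a lifted port the induction hypothesis applied to $B_R(o)$ gives $b_{B_R(o)}^{\geq 0}(\isaCopyfrom{B_R(o)}{e_o}{i}(q)) \in \boxesgeq{B_R(o)}{i}$, whence $b_R^{\geq 0}(\isaCopyfrom{R}{e}{i}(p)) \in \boxesgeq{R}{i}$ by the compatibility observation. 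In case (c), the sub-port $(o', q')$ of $\termofTaylor{B_R(o)}{e_o}{i}$ lies inside a box, so it is not a depth-$0$ port there; the induction hypothesis applied to $B_R(o)$ yields $b_{B_R(o)}^{\geq 0}(\isaCopyfrom{B_R(o)}{e_o}{i}(o', q')) \notin \boxesgeq{B_R(o)}{i}$, and compatibility transports this to $b_R^{\geq 0}(\isaCopyfrom{R}{e}{i}(p)) \notin \boxesgeq{R}{i}$, in agreement with $p \notin \portsatzero{\termofTaylor{R}{e}{i}}$.

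The main obstacle I anticipate is the bookkeeping in the lifted subcase of (b) and in (c): one must confirm that the sub-port fed to the induction hypothesis genuinely lies outside $\portsatzero{B_R(o)}$ (so that the hypothesis applies and $b_{B_R(o)}^{\geq 0}$ is defined on its $\isaCopyfrom{B_R(o)}{e_o}{i}$-image), which is exactly where the first preliminary observation, specialized one level down, is used; and one must keep the nested triple $((o,(e_o,o')),q') \mapsto (o, \isaCopyfrom{B_R(o)}{e_o}{i}(o', q'))$ straight so that the recursion of $b^{\geq 0}$ matches the recursion of $\isaCopyfrom{}{}{}$ step for step. None of the individual verifications is deep, but aligning the indices of the three-layer notation is the delicate point.
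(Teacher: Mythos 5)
Your proof is correct and follows essentially the same route as the paper's own proof: induction on $\depthof{R}$, with the same three-way case split (ports of depth-$0$ copies of expanded boxes, ports inside unexpanded boxes, ports inside boxes of expanded copies) and the same parallel unwinding of the recursions defining the copy map, $b^{\geq 0}$ and $\boxesgeq{R}{i}$. The only minor divergence is that where the paper first proves the auxiliary implication $(\forall p \in \portsatdepthgreater{S}{0})\,(b_S^{\geq 0}(p) \in \boxesgeq{S}{i} \Rightarrow p \in \portsatdepthgreater{S}{i})$ to decide which branch of the $b^{\geq 0}$ recursion applies, you instead use the equally valid and slightly more direct observation that $\kappa$ sends ports outside $\portsatzero{\cdot}$ to ports outside $\portsatzero{\cdot}$ (and, incidentally, you treat the subcase of a port lying strictly inside an unexpanded box more carefully than the paper's shorthand $b_R^{\geq 0}(o,p') = o$ does).
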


\begin{proof}
First notice that, for any differential in-PS $S$, we have:
$$(\forall p \in \portsatdepthgreater{S}{0}) (b_S^{\geq 0}(p) \in \boxesgeq{S}{i} \Rightarrow p \in \portsatdepthgreater{S}{i}) \: \: \: (\ast)$$
Indeed: Let $p \in \portsatdepthgreater{S}{0}$ such that $b_S^{\geq 0}(p) \in \boxesgeq{S}{i}$ and let $o \in \boxesatzero{S}$ and $p' \in \ports{B_S(o)}$ such that $p = (o, p')$; we distinguish between two cases:
\begin{itemize}
\item $p' \in \portsatzero{B_S(o)}$: We have $b_S^{\geq 0}(p) = o \in \boxesatzerogeq{S}{i}$;
\item $p' \in \portsatdepthgreater{B_S(o)}{0}$: We have $b_S^{\geq 0}(p) = (o, b_{B_S(o)}^{\geq 0}(p')) \in \boxesatzerogeq{S}{i}$, hence $o \in \boxesgeq{S}{i}$;
\end{itemize}
in both cases we have $o \in \boxesatzerogeq{S}{i}$, hence $p = (o, p') \in \portsatdepthgreater{S}{i}$.

We prove now the fact by induction on $\depthof{R}$. If $\depthof{R} = 0$, then $\ports{\termofTaylor{R}{e}{i}} \setminus \portsatzero{R} = \emptyset$. Otherwise, let $p \in \ports{\termofTaylor{R}{e}{i}} \setminus \portsatzero{R} \not= \emptyset$:
\begin{itemize}
\item If $p \in \portsatzero{\termofTaylor{R}{e}{i}}$, then there exist $o \in \boxesatzerogeq{R}{i}$, $e_o \in e(o)$ and $p' \in \portsatzero{\termofTaylor{B_R(o)}{e_o}{i}}$ such that $p = (o, (e_o, p'))$; we distinguish between two cases:
\begin{itemize}
\item $p' \in \portsatzero{B_R(o)}$: We have $b_R^{\geq 0}(\isaCopyfrom{R}{e}{i}(p)) = b_R^{\geq 0}(o, p') = o \in \boxesgeq{R}{i}$.
\item $p' \in \portsatdepthgreater{B_R(o)}{0}$: By induction hypothesis, we have $b_{B_R(o)}^{\geq 0}(\isaCopyfrom{B_R(o)}{e_o}{i}(p')) \in \boxesgeq{B_R(o)}{i}$, hence, by $(\ast)$, $\isaCopyfrom{B_R(o)}{e_o}{i}(p') \in \portsatdepthgreater{B_R(o)}{i}$; we thus have $$b_R^{\geq 0}(\isaCopyfrom{R}{e}{i}(p)) = b_R^{\geq 0}(o, \isaCopyfrom{B_R(o)}{e_o}{i}(p')) = (o, b_{B_R(o)}^{\geq 0}(\isaCopyfrom{B_R(o)}{e_o}{ i}(p'))) \in \boxesgeq{R}{i}.$$
\end{itemize}
\item If $p \in \portsatdepthgreater{\termofTaylor{R}{e}{i}}{0}$, then there exist $o \in \boxesatzero{\termofTaylor{R}{e}{i}}$ and $p' \in \ports{B_{\termofTaylor{R}{e}{i}}(o)}$ such that $p = (o, p')$; we distinguish between two cases:
\begin{itemize}
\item $o \in \boxesatzerosmaller{R}{i}$: Then $p' \in \portsatdepthleq{B_R(o)}{i}$, hence $b_R^{\geq 0}(\isaCopyfrom{R}{e}{i}(p)) = b_R^{\geq 0}(o, p') = o \in \boxessmaller{R}{i}$;
\item $o = (o_1, (e_1, o'))$ with $o_1 \in \boxesatzerogeq{R}{i}$, $e_1 \in e(o_1)$ and $o' \in \boxesatzero{\termofTaylor{B_R(o_1)}{e_1}{i}}$: By induction hypothesis, we have $b_{B_R(o_1)}^{\geq 0}(\isaCopyfrom{B_R(o_1)}{e_1}{i}(o', p')) \in \boxessmaller{B_R(o_1)}{i}$, hence 
\begin{displaymath}
  b_R^{\geq 0}(\isaCopyfrom{R}{e}{i}(p)) = b_R^{\geq 0}(o_1, \isaCopyfrom{B_R(o_1)}{e_1}{i}(o', p')) = (o_1, b_{B_R(o_1)}^{\geq 0}(\isaCopyfrom{B_R(o_1)}{e_1}{i}(o', p'))) \in \boxessmaller{R}{i}.
\end{displaymath}

\end{itemize}  
\end{itemize}
\end{proof}

If a port $q$ of a PS $R$ is deep enough, then it is duplicated $\sum e^\#(b_R^{\geq i}(q))$ times in $\termofTaylor{R}{e}{i}$:

\begin{lem}\label{lem: preimage under kappa}
Let $R$ be an in-PS. Let $e$ be a pseudo-experiment on $R$. Let $i \in \Nat$. Let $q \in \portsatdepthgreater{R}{i}$. Then we have $$\Card{\{ p \in \ports{\termofTaylor{R}{e}{i}} ; \isaCopyfrom{R}{e}{i}(p) = q \}} = \sum e^\#(b_R^{\geq i}(q))$$
\end{lem}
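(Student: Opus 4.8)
The plan is to argue by induction on $\depthof{R}$. If $\depthof{R} = 0$ there are no boxes, so $\portsatdepthgreater{R}{i} = \emptyset$ and the statement holds vacuously. For the inductive step I would use the decomposition $\portsatdepthgreater{R}{i} = \bigcup_{o \in \boxesatzerogeq{R}{i}} \{o\} \times \ports{B_R(o)}$ to write $q = (o, q'')$ for a uniquely determined $o \in \boxesatzerogeq{R}{i}$ and $q'' \in \ports{B_R(o)}$.

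The structural core is to establish a bijection between $\{ p \in \ports{\termofTaylor{R}{e}{i}} ; \isaCopyfrom{R}{e}{i}(p) = q \}$ and the disjoint union $\bigsqcup_{e_o \in e(o)} \{ \tilde p \in \ports{\termofTaylor{B_R(o)}{e_o}{i}} ; \isaCopyfrom{B_R(o)}{e_o}{i}(\tilde p) = q'' \}$. First I would observe that no preimage of $q$ lies in $\ports{R^{\leq i}}$: on that set $\isaCopyfrom{R}{e}{i}$ is the identity (first clause of Definition~\ref{defin: Taylor}), whereas $q = (o, q'')$ is not a port of $R^{\leq i}$ since $o \in \boxesatzerogeq{R}{i}$ is erased in $R^{\leq i}$. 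Hence every preimage comes from a summand $\langle o, \langle e_o, \termofTaylor{B_R(o)}{e_o}{i} \rangle \rangle$; for a fixed $e_o$, the renaming $p' \mapsto (o,(e_o,p'))$ and $(o',p') \mapsto ((o,(e_o,o')),p')$ identifies that summand's ports with $\ports{\termofTaylor{B_R(o)}{e_o}{i}}$, and by the last two clauses defining $\isaCopyfrom{R}{e}{i}$ we have $\isaCopyfrom{R}{e}{i}(\cdot) = (o, \isaCopyfrom{B_R(o)}{e_o}{i}(\cdot))$ there. Taking preimages of $(o,q'')$ yields exactly $\{ \tilde p ; \isaCopyfrom{B_R(o)}{e_o}{i}(\tilde p) = q'' \}$, and these are disjoint across distinct $e_o$ because the summands carry disjoint port names. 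So the desired cardinality equals $\sum_{e_o \in e(o)} \Card{\{ \tilde p \in \ports{\termofTaylor{B_R(o)}{e_o}{i}} ; \isaCopyfrom{B_R(o)}{e_o}{i}(\tilde p) = q'' \}}$.

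Next I would split on $q''$. If $q'' \in \portsatdepthleq{B_R(o)}{i} = \ports{(B_R(o))^{\leq i}}$, then $b_R^{\geq i}(q) = o$, so the right-hand side is $\sum e^\#(o) = \Card{e(o)}$; and each inner preimage set is a singleton, since $\isaCopyfrom{B_R(o)}{e_o}{i}$ is the identity on $\ports{(B_R(o))^{\leq i}}$ while every port outside that set is sent by $\isaCopyfrom{B_R(o)}{e_o}{i}$ to a port at depth $> i$, so nothing else can map to the shallow port $q''$. Hence the count is $\Card{e(o)}$, as required. If instead $q'' \in \portsatdepthgreater{B_R(o)}{i}$, then the induction hypothesis applied to $B_R(o)$, $e_o$ and $q''$ gives each inner count as $\sum {e_o}^\#(b_{B_R(o)}^{\geq i}(q''))$, so the total is $\sum_{e_o \in e(o)} \sum {e_o}^\#(b_{B_R(o)}^{\geq i}(q''))$; since $b_R^{\geq i}(q) = (o, b_{B_R(o)}^{\geq i}(q''))$ and $e^\#(o, b_{B_R(o)}^{\geq i}(q'')) = \bigcup_{e_o \in e(o)} {e_o}^\#(b_{B_R(o)}^{\geq i}(q''))$, this must be reconciled with $\sum e^\#(b_R^{\geq i}(q))$.

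I expect this last reconciliation to be the main obstacle. It amounts to the identity $\sum_{e_o \in e(o)} \sum {e_o}^\#(c) = \sum \bigcup_{e_o \in e(o)} {e_o}^\#(c)$, with $c = b_{B_R(o)}^{\geq i}(q'')$, which is precisely the point where the \emph{set}-union in the definition of $e^\#$ must not collapse distinct contributions. The hard part is therefore to exploit the relevant disjointness of the family $\{ {e_o}^\#(c) \}_{e_o \in e(o)}$ (equivalently, the distinctness across distinct elements of $e(o)$ of the cardinalities these sets record), so that summation over the union recovers the total obtained by summation over $e(o)$; everything else is a direct unfolding of Definition~\ref{defin: Taylor} and the definitions of $e^\#$ and $b_R^{\geq i}$.
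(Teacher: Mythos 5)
Your proposal retraces the paper's own proof: induction on $\depthof{R}$ with a vacuous base case, the decomposition $q = (o, q'')$ with $o \in \boxesatzerogeq{R}{i}$, the reduction of the preimage of $q$ to the disjoint union over $e_o \in e(o)$ of the preimages of $q''$ under $\isaCopyfrom{B_R(o)}{e_o}{i}$, and the shallow/deep case split with the induction hypothesis in the deep case; the paper merely organizes the bookkeeping differently, using Fact~\ref{fact: characterization of ports at depth 0} to split your deep case into two sub-cases according to whether the copies of $q$ occur at depth $0$ of $\termofTaylor{R}{e}{i}$ or inside its boxes, with the same resulting counts. The single step you leave open,
\[ \sum_{e_o \in e(o)} \sum {e_o}^\#(c) \; = \; \sum \, \bigcup_{e_o \in e(o)} {e_o}^\#(c) \qquad \text{where } c = b_{B_R(o)}^{\geq i}(q''), \]
is exactly the step the paper takes silently: in its chain of equalities it writes $\sum_{e_o \in e(o)} \sum {e_o}^\#(b_{B_R(o)}^{\geq i}(q')) = \sum e^\#(o, b_{B_R(o)}^{\geq i}(q'))$ with no justification whatsoever.

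You are right that this is where the difficulty lies, and in fact the step cannot be closed at the stated level of generality: the lemma is false for arbitrary pseudo-experiments. Take $R$ of depth $2$ with a single box $o$ whose content has a single box $o'$, take $i = 0$, and let $e(o) = \{ e_1, e_2 \}$ where $\Card{e_1(o')} = 3 = \Card{e_2(o')}$ and $e_1, e_2$ differ only in their value at $\emptysequence$ (the integer tag of Definition~\ref{defin: pseudo-experiments} exists precisely to allow distinct copies that otherwise coincide). For $q$ a port of the content of $o'$ one has $b_R^{\geq 0}(q) = (o, o')$ and $e^\#(o, o') = \{ 3 \} \cup \{ 3 \} = \{ 3 \}$, so the right-hand side of the lemma equals $3$, whereas $\termofTaylor{R}{e}{0}$ contains $2 \cdot 3 = 6$ copies of $q$: the set union in the definition of $e^\#$ collapses the two equal contributions. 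The disjointness you ask for is exactly the second clause of Definition~\ref{defin: k-heterogeneous}, read hereditarily (i.e.\ imposed on each $e_o$ as well, as the induction requires); under it the sets ${e_o}^\#(c)$, $e_o \in e(o)$, are pairwise disjoint, summation over their union agrees with the sum of the sums, and your induction closes verbatim. So the gap you identified is real, but it is a gap in the lemma as stated (and in the paper's proof, from which Proposition~\ref{prop: arity} inherits it), not a defect of your reconstruction: the repair is to restrict the statement to $k$-heterogeneous pseudo-experiments, which is the setting of its crucial later applications, or equivalently to make $e^\#$ multiset-valued in the spirit of the canonical pseudo-experiments sketched after Definition~\ref{defin: pseudo-experiments}.
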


\begin{proof}
By induction on $\depthof{R}$. If $\depthof{R} = 0$, then there is no such $q$. Otherwise: Let $o \in \boxesatzerogeq{R}{i}$ and $q' \in \ports{B_R(o)}$ such that $q = (o, q')$. We distinguish between three cases:
\begin{itemize}
\item $q' \in \portsatdepthleq{B_R(o)}{i}$: we have $b_R^{\geq 0}(q) = b_R^{\geq 0}(o, q') = o \in \boxesgeq{R}{i}$, hence, by Fact~\ref{fact: characterization of ports at depth 0}, we have $\{ p \in \ports{\termofTaylor{R}{e}{i}} ; \isaCopyfrom{R}{e}{i}(p) = q \} = \{ p \in \portsatzero{\termofTaylor{R}{e}{i}} \setminus \portsatzero{R} ; \isaCopyfrom{R}{e}{i}(p) = (o, q') \} = \{ (o, (e_o, q')) ; e_o \in e(o) \}$, hence 
\begin{eqnarray*}
\Card{\{ p \in \ports{\termofTaylor{R}{e}{i}} ; \isaCopyfrom{R}{e}{i}(p) = q \}} & = & \Card{e(o)}\\
& = & \sum e^\# (o)\\
& = & \sum e^\#(b_R^{\geq i}(q))
\end{eqnarray*}
\item $q' \in \portsatdepthgreater{B_R(o)}{i}$ and $b_{B_R(o)}^{\geq 0}(q') \in \boxesgeq{B_R(o)}{i}$: we have $b_R^{\geq 0}(q) = b_R^{\geq 0}(o, q') = (o, b_{B_R(o)}^{\geq 0}(q')) \in \boxesgeq{R}{i}$, hence, by Fact~\ref{fact: characterization of ports at depth 0}, $\{ p \in \ports{\termofTaylor{R}{e}{i}} ; \isaCopyfrom{R}{e}{i}(p) = q \} = \{ p \in \portsatzero{\termofTaylor{R}{e}{i}} \setminus \portsatzero{R}; \isaCopyfrom{R}{e}{i}(p) = (o, q') \} = \bigcup_{e_o \in e(o)} \{ (o, (e_o, p')) ; (p' \in \portsatzero{\termofTaylor{B_R(o)}{e_o}{i}} \wedge \isaCopyfrom{B_R(o)}{e_o}{i}(p') = q') \}$; for any $e_o \in e(o)$, by Fact~\ref{fact: characterization of ports at depth 0} again, we have 
$$\{ p' \in \ports{\termofTaylor{B_R(o)}{e_o}{i}} ; \isaCopyfrom{B_R(o)}{e_o}{i}(p') = q' \} = \{ p' \in \portsatzero{\termofTaylor{B_R(o)}{e_o}{i}} ; \isaCopyfrom{B_R(o)}{e_o}{i}(p') = q' \}$$ and, by induction hypothesis, we have 
$$\Card{\{ p' \in \ports{\termofTaylor{B_R(o)}{e_o}{i}} ; \kappa_{e_o, i}(p') = q' \}} =  \sum {e_o}^\#(b_{B_R(o)}^{\geq i}(q'))$$
We thus obtain 
\begin{eqnarray*}
& & \Card{\{ p \in \ports{\termofTaylor{R}{e}{i}} ; \isaCopyfrom{R}{e}{i}(p) = q \}} \\
& = & \Card{\bigcup_{e_o \in e(o)} \{ (o, (e_o, p')) ; (p' \in \ports{\termofTaylor{B_R(o)}{e_o}{i}} \wedge \isaCopyfrom{B_R(o)}{e_o}{i}(p') = q') \}} \allowdisplaybreaks\\
& = & \sum_{e_o \in e(o)} \Card{\{ p' \in \ports{\termofTaylor{B_R(o)}{e_o}{i}} ; \isaCopyfrom{B_R(o)}{e_o}{i}(p') = q' \}}\allowdisplaybreaks\\
& = & \sum_{e_o \in e(o)} \sum {e_o}^\#(b_{B_R(o)}^{\geq i}(q')) \allowdisplaybreaks\\
& = & \sum e^\#(o, b_{B_R(o)}^{\geq i}(q'))\\
& = & \sum e^\#(b_R^{\geq i}(q))
\end{eqnarray*}
\item $q' \in \portsatdepthgreater{B_R(o)}{i}$ and $b_{B_R(o)}^{\geq 0}(q') \in \boxessmaller{B_R(o)}{i}$: we have $b_R^{\geq 0}(q) = b_R^{\geq 0}(o, q') = (o, b_{B_R(o)}^{\geq 0}(q')) \in \boxessmaller{R}{i}$, hence, by Fact~\ref{fact: characterization of ports at depth 0}, $$\{ p \in \ports{\termofTaylor{R}{e}{i}} ; \isaCopyfrom{R}{e}{i}(p) = q \} = \{ p \in \portsatdepthgreater{\termofTaylor{R}{e}{i}}{0} ; \isaCopyfrom{R}{e}{i}(p) = (o, q') \};$$ for any $e_o \in e(o)$, by Fact~\ref{fact: characterization of ports at depth 0} again, we have $\{ p \in \ports{\termofTaylor{B_R(o)}{e_o}{i}} ; \isaCopyfrom{B_R(o)}{e_o}{i}(p) = q' \} = \{ p \in \portsatdepthgreater{\termofTaylor{B_R(o)}{e_o}{i}}{0} ; \isaCopyfrom{B_R(o)}{e_o}{i}(p) = q' \}$ and, by induction hypothesis, we have $$\Card{\{ p \in \ports{\termofTaylor{B_R(o)}{e_o}{i}} ; \isaCopyfrom{B_R(o)}{e_o}{i}(p) = q' \}} =  \sum {e_o}^\#(b_{B_R(o)}^{\geq i}(q'))$$ hence 
\begin{align*}
& \Card{\{ p \in \ports{\termofTaylor{R}{e}{i}} ; \isaCopyfrom{R}{e}{i}(p) = q \}}\\
={}& \Card{\bigcup_{e_o \in e(o)} \{ ((o, (e_o, o')), p') ; ((o', p') \in \portsatdepthgreater{\termofTaylor{B_R(o)}{e_o}{i}}{0} \wedge \isaCopyfrom{B_R(o)}{e_o}{i}(o', p') = q') \}}\allowdisplaybreaks\\
={}& \sum_{e_o \in e(o)} \Card{\{ p \in \portsatdepthgreater{\termofTaylor{B_R(o)}{e_o}{i}}{0} ; \isaCopyfrom{B_R(o)}{e_o}{i}(p) = q' \}}\allowdisplaybreaks\\
={}& \sum_{e_o \in e(o)} \Card{\{ p \in \ports{\termofTaylor{B_R(o)}{e_o}{i}} ; \isaCopyfrom{B_R(o)}{e_o}{i}(p) = q' \}}\\
& \textit{(because $(\forall e_o \in e(o)) (\forall p \in \portsatzero{\termofTaylor{B_R(o)}{e_o}{i}}) \isaCopyfrom{B_R(o)}{e_o}{i}(p) \notin \portsatdepthgreater{B_R(o)}{i}$)}\allowdisplaybreaks\\
={}& \sum_{e_o \in e(o)} \sum {e_o}^\#(b_{B_R(o)}^{\geq i}(q'))\allowdisplaybreaks \\
={}& \sum e^\#(o, b_{B_R(o)}^{\geq i}(q'))\\
={}& \sum e^\#(b_R^{\geq i}(q))
 \tag*{\qedhere}
\end{align*}
\end{itemize}
\end{proof}

We finally obtain Proposition~\ref{prop: arity}:

\begin{prop}\label{prop: arity}
Let $R$ be an in-PS. Let $e$ be a pseudo-experiment on $R$. Let $i \in \Nat$. Let $p \in \portsatzero{R}$. Then we have $$\arity{\termofTaylor{R}{e}{i}}(p) = \arity{R^{\leq i}}(p) + \sum_{\substack{p' \in \portsatdepthgreater{R}{i}\\ \target{R}(p') = p}} \sum e^\#(b_R^{\geq i}(p'))$$
\end{prop}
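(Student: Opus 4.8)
The plan is to unfold $\termofTaylor{R}{e}{i} = S@t$ from Definition~\ref{defin: Taylor} and split the arity of $p$ into the edges already present in $S$ and the edges added by $@t$. Recall that for any differential in-PS $S'$ and any $p \in \portsatzero{S'}$ the number $\arity{S'}(p)$ is the number of elements of $\dom{\targetAnyPort{S'}}$ sent to $p$ by $\targetAnyPort{S'}$. Since $@t$ merely extends $\targetAnyPort{S}$ by the fresh map $t : \mathcal{W}_0 \cup \mathcal{W}_{>0} \to \exponentialportsatzero{R}$, whose domain is contained in $\conclusions{S}$ and hence disjoint from $\dom{\targetAnyPort{S}}$, I would first record
$$\arity{\termofTaylor{R}{e}{i}}(p) = \arity{S}(p) + \Card{\{ w \in \mathcal{W}_0 \cup \mathcal{W}_{>0} ; t(w) = p \}}.$$

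For the first summand I would show $\arity{S}(p) = \arity{R^{\leq i}}(p)$. As $p \in \portsatzero{R} = \portsatzero{R^{\leq i}}$ and $S = R^{\leq i} \oplus \bigoplus_{o \in \boxesatzerogeq{R}{i}} \bigoplus_{e_o \in e(o)} R\langle o, i, e_o \rangle$, every port of a summand $R\langle o, i, e_o \rangle$ has the form $(o, (e_o, \cdot))$ and is therefore distinct from $p$; consequently no wire or box-door edge originating in such a summand can be sent to $p$, so the only elements of $\{ w \in \dom{\targetAnyPort{S}} ; \targetAnyPort{S}(w) = p \}$ come from $R^{\leq i}$. This yields the term $\arity{R^{\leq i}}(p)$.

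The heart of the proof is the second summand. By the definition of $t$, every $w \in \mathcal{W}_0 \cup \mathcal{W}_{>0}$ satisfies $t(w) = \target{R}(\isaCopyfrom{R}{e}{i}(w))$ with $\isaCopyfrom{R}{e}{i}(w) \in \bigcup_{o \in \boxesatzerogeq{R}{i}} \{ o \} \times \temporaryConclusions{R}{o}$, and this index set is exactly $\{ p' \in \portsatdepthgreater{R}{i} ; p' \in \dom{\target{R}} \}$. Partitioning by the value $p' = \isaCopyfrom{R}{e}{i}(w)$ gives
$$\Card{\{ w \in \mathcal{W}_0 \cup \mathcal{W}_{>0} ; t(w) = p \}} = \sum_{\substack{p' \in \portsatdepthgreater{R}{i} \\ \target{R}(p') = p}} \Card{\{ w \in \mathcal{W}_0 \cup \mathcal{W}_{>0} ; \isaCopyfrom{R}{e}{i}(w) = p' \}}.$$
The crucial claim is that for each such $p' = (o, q')$ the fibre on the right is the \emph{entire} fibre $\{ \tilde p \in \ports{\termofTaylor{R}{e}{i}} ; \isaCopyfrom{R}{e}{i}(\tilde p) = p' \}$, i.e. no copy of a temporary deep box-conclusion escapes $\mathcal{W}_0 \cup \mathcal{W}_{>0}$. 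One inclusion is immediate from the definitions of $\mathcal{W}_0$ and $\mathcal{W}_{>0}$. For the converse, a copy $\tilde p$ of $p'$ cannot lie in $R^{\leq i}$ (its image has depth $> i$), so $\tilde p = (o, (e_o, p''))$ or $\tilde p = ((o, (e_o, o')), p'')$; since $q' \in \temporaryConclusions{R}{o} \subseteq \conclusions{B_R(o)}$, Fact~\ref{fact: conclusions} forces $p''$ (resp. $(o', p'')$) to be a conclusion of $\termofTaylor{B_R(o)}{e_o}{i}$, shallow in the first case and non-shallow in the second, placing $\tilde p$ in $\mathcal{W}_0$ (resp. $\mathcal{W}_{>0}$).

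With the claim in hand, Lemma~\ref{lem: preimage under kappa} applies to each $p' \in \portsatdepthgreater{R}{i}$ and gives $\Card{\{ \tilde p ; \isaCopyfrom{R}{e}{i}(\tilde p) = p' \}} = \sum e^\#(b_R^{\geq i}(p'))$; summing over all $p'$ with $\target{R}(p') = p$ produces the second term of the formula, and adding $\arity{R^{\leq i}}(p)$ finishes the argument. I expect the main obstacle to be the crucial claim: one must align the two-part structure of $\mathcal{W}_0 \cup \mathcal{W}_{>0}$ (shallow versus non-shallow copies of box conclusions) with the case analysis on the shape of an arbitrary copy $\tilde p$, and it is precisely here that Fact~\ref{fact: conclusions} is indispensable, guaranteeing that every copy of a temporary conclusion is again a conclusion and hence gets reconnected by $t$ rather than surviving as a conclusion of $\termofTaylor{R}{e}{i}$.
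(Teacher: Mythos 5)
Your proof is correct and follows essentially the same route as the paper's: both decompose $\arity{\termofTaylor{R}{e}{i}}(p)$ into $\arity{R^{\leq i}}(p)$ plus the premises created by the function $t$, identify those premises with the copies of the temporary conclusions of the deep boxes via Fact~\ref{fact: conclusions}, and count these copies with Lemma~\ref{lem: preimage under kappa}. The only difference is bookkeeping: the paper computes the $\mathcal{W}_0$ and $\mathcal{W}_{>0}$ contributions grouped by box $o$ and copy $e_o$ and then regroups by $p'$, whereas you partition fibrewise over $p'$ directly and show each fibre inside $\mathcal{W}_0 \cup \mathcal{W}_{>0}$ is the full $\isaCopyfrom{R}{e}{i}$-fibre.
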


\begin{proof}
We have 
\begin{align*}
 & \sum_{o \in \boxesatzerogeq{R}{i}} \sum_{e_o \in e(o)} \Card{\left\lbrace q \in \conclusions{\groundof{\termofTaylor{B_R(o)}{e_o}{i}}} ; \begin{array}{l} (\isaCopyfrom{B_R(o)}{e_o}{i}(q) \in \temporaryConclusions{R}{o}\\ \wedge \target{R}(o, \isaCopyfrom{B_R(o)}{e_o}{i}(q)) = p) \end{array} \right\rbrace}\\
 = & \sum_{o \in \boxesatzerogeq{R}{i}} \sum_{e_o \in e(o)} \sum_{\substack{p' \in \temporaryConclusions{R}{o}\\ \target{R}(o, p') = p}} \Card{\left\lbrace q \in \conclusions{\groundof{\termofTaylor{B_R(o)}{e_o}{i}}} ;  \isaCopyfrom{B_R(o)}{e_o}{i}(q) = p'  \right\rbrace}\allowdisplaybreaks\\
 = & \sum_{o \in \boxesatzerogeq{R}{i}} \sum_{e_o \in e(o)} \sum_{\substack{p' \in \temporaryConclusions{R}{o}\\ \target{R}(o, p') = p}} \Card{\left\lbrace q \in \portsatzero{\termofTaylor{B_R(o)}{e_o}{i}} ;  \isaCopyfrom{B_R(o)}{e_o}{i}(q) = p'  \right\rbrace}\\
& \textit{   (by Fact~\ref{fact: conclusions})}\allowdisplaybreaks\\
 = & \sum_{o \in \boxesatzerogeq{R}{i}} \sum_{e_o \in e(o)} \sum_{\substack{p' \in \temporaryConclusions{R}{o}\\ \target{R}(o, p') = p}} \Card{\left\lbrace q \in \portsatzero{\termofTaylor{B_R(o)}{e_o}{i}} ;  \isaCopyfrom{R}{e}{i}(o, (e_o, q)) = (o, p')  \right\rbrace} \allowdisplaybreaks\\
 = & \sum_{o \in \boxesatzerogeq{R}{i}} \sum_{\substack{p' \in \temporaryConclusions{R}{o}\\ \target{R}(o, p') = p}} \sum_{e_o \in e(o)}  \Card{\left\lbrace q \in \portsatzero{\termofTaylor{B_R(o)}{e_o}{i}} ;  \isaCopyfrom{R}{e}{i}(o, (e_o, q)) = (o, p')  \right\rbrace}\\
= & \sum_{\substack{p' \in \portsatdepthgreater{R}{i}\\ \target{R}(p') = p}} \Card{\{ q \in \portsatzero{\termofTaylor{R}{e}{i}}; \isaCopyfrom{R}{e}{i}(q) = p' \}}
\end{align*}
and, for any $o \in \boxesatzerogeq{R}{i}$, for any $e_o \in e(o)$, we set 
\begin{align*}
& d_o(e_o)\\
= & \sum_{o' \in \boxesatzero{\termofTaylor{B_R(o)}{e_o}{i}}} \Card{\left\lbrace (o', q) \in \conclusions{\termofTaylor{B_R(o)}{e_o}{i}} ; \begin{array}{l}
(\isaCopyfrom{B_R(o)}{e_o}{i}(o', q) \in \temporaryConclusions{R}{o}\\
\wedge 
\target{R}(o, \isaCopyfrom{B_R(o)}{e_o}{i}(o', q)) = p) \end{array} \right\rbrace}
\end{align*}
we have
\begin{align*}
  & d_o(e_o)\allowdisplaybreaks\\
=  & \sum_{o' \in \boxesatzero{\termofTaylor{B_R(o)}{e_o}{i}}} \sum_{\substack{q' \in \temporaryConclusions{R}{o}\\ \target{R}(o, q') = p}} \Card{\left\lbrace (o', q) \in \conclusions{\termofTaylor{B_R(o)}{e_o}{i}} ; \isaCopyfrom{B_R(o)}{e_o}{i}(o', q) = q' \right\rbrace}\allowdisplaybreaks\\
=  & \sum_{o' \in \boxesatzero{\termofTaylor{B_R(o)}{e_o}{i}}} \sum_{\substack{q' \in \temporaryConclusions{R}{o}\\ \target{R}(o, q') = p}} \Card{\left\lbrace (o', q) \in \ports{\termofTaylor{B_R(o)}{e_o}{i}} ; \isaCopyfrom{B_R(o)}{e_o}{i}(o', q) = q' \right\rbrace}\\
& \textit{   (by Fact~\ref{fact: conclusions})}
\end{align*}
hence
\begin{align*}
  & \sum_{e_o \in e(o)} d_o(e_o)\allowdisplaybreaks\\
=  & \sum_{\substack{q' \in \temporaryConclusions{R}{o}\\ \target{R}(o, q') = p}} \sum_{e_o \in e(o)} \sum_{o' \in \boxesatzero{\termofTaylor{B_R(o)}{e_o}{i}}} \Card{\left\lbrace (o', q) \in \ports{\termofTaylor{B_R(o)}{e_o}{i}} ; \isaCopyfrom{B_R(o)}{e_o}{i}(o', q) = q' \right\rbrace}\allowdisplaybreaks\\
= &  \sum_{\substack{q' \in \temporaryConclusions{R}{o}\\ \target{R}(o, q') = p}} \sum_{e_o \in e(o)} \sum_{o' \in \boxesatzero{\termofTaylor{B_R(o)}{e_o}{i}}} \Card{\left\lbrace (o', q) \in \ports{\termofTaylor{B_R(o)}{e_o}{i}} ; \isaCopyfrom{R}{e}{i}((o, (e_o, o')), q) = (o, q') \right\rbrace}
\end{align*}
and $$\sum_{\substack{o \in \boxesatzerogeq{R}{i}}} \sum_{e_o \in e(o)} d_o(e_o) = \sum_{\substack{p' \in \portsatdepthgreater{R}{i}\\ \target{R}(p') = p}} \Card{\{ q \in \portsatdepthgreater{\termofTaylor{R}{e}{i}}{0} ; \isaCopyfrom{R}{e}{i}(q) = p' \}}$$
We thus have
\begin{align*}
& \arity{\termofTaylor{R}{e}{i}}(p)\\
= & \arity{R^{\leq i}}(p) + \sum_{\substack{o \in \boxesatzerogeq{R}{i}}} \sum_{e_o \in e(o)} d_o(e_o)\\
+ & \sum_{o \in \boxesatzerogeq{R}{i}} \sum_{e_o \in e(o)} \Card{\left\lbrace q \in \conclusions{\groundof{\termofTaylor{B_R(o)}{e_o}{i}}} ; \begin{array}{l} (\isaCopyfrom{B_R(o)}{e_o}{i}(q) \in \temporaryConclusions{R}{o}\\ \wedge \target{R}(o, \isaCopyfrom{B_R(o)}{e_o}{i}(q)) = p) \end{array} \right\rbrace}\allowdisplaybreaks\\
= & \arity{R^{\leq i}}(p) + \sum_{\substack{p' \in \portsatdepthgreater{R}{i}\\ \target{R}(p') = p}} \Card{\{ q \in \portsatdepthgreater{\termofTaylor{R}{e}{i}}{0} ; \isaCopyfrom{R}{e}{i}(q) = p' \}}\\
+ & \sum_{\substack{p' \in \portsatdepthgreater{R}{i}\\ \target{R}(p') = p}} \Card{\{ q \in \portsatzero{\termofTaylor{R}{e}{i}}; \isaCopyfrom{R}{e}{i}(q) = p' \}}\allowdisplaybreaks\\
= & \arity{R^{\leq i}}(p) + \sum_{\substack{p' \in \portsatdepthgreater{R}{i}\\ \target{R}(p') = p}} \Card{\{ q \in \ports{\termofTaylor{R}{e}{i}}; \isaCopyfrom{R}{e}{i}(q) = p' \}}\allowdisplaybreaks\\
= & \arity{R^{\leq i}}(p) + \sum_{\substack{p' \in \portsatdepthgreater{R}{i}\\ \target{R}(p') = p}} \sum e^\#(b_R^{\geq i}(p'))\\
& \textit{   (by Lemma~\ref{lem: preimage under kappa})}
 \tag*{\qedhere}
\end{align*}
\end{proof}

\begin{exa}
Consider the port $p_1$ of $R$ at depth $0$ (see Figure~\ref{fig: new_example}): We have $\{ p' \in \portsatdepthgreater{R}{1} ; \target{R}(p') = p_1 \} = \{ (o_2, q') \}$, $b_R^{\geq 1}((o_2, q')) = o_2$ and $\arity{R^{\leq 1}}(p_1) = 1$ (see Figure~\ref{fig: Rleq1},~p.~\pageref{fig: Rleq1}). Now, if $e$ is a pseudo-experiment as in Example~\ref{example: pseudo-experiment}, then $e^\#(o_2) = \{ 10 \}$, hence $\arity{R^{\leq 1}}(p_1) + \sum_{\substack{p' \in \portsatdepthgreater{R}{1}\\ \target{R}(p') = p_1}} \sum e^\#(b_R^{\geq 1}(p')) = 1 + \sum \{ 10 \} = 11$. And, indeed, we have $\arity{\termofTaylor{R}{e}{1}}(p_1) = 11$ (see Figure~\ref{fig: termofTaylor{R}{e}{1}: 1}).
\end{exa}

\begin{cor}\label{cor: arity for R_o}
Let $R$ be an in-PS. Let $o \in \boxesatzero{R}$. Let $\varphi$ be some bijection $\contractionsUnder{R}{o} \simeq \mathcal{Q'}$ and let $R_o$ be an in-PS such that $R_o = \varphi \cdot_o R$. Let $e_o \in e(o)$. Let $i \in \Nat$. Then, for any $p \in \portsatzero{\termofTaylor{B_R(o)}{e_o}{i}}$, we have $\arity{\termofTaylor{B_R(o)}{e_o}{i}}(p) = \arity{\termofTaylor{B_{R_o}(o)}{e_o}{i}}(p)$.
\end{cor}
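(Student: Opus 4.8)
The plan is to split $\portsatzero{\termofTaylor{B_R(o)}{e_o}{i}}$ into the ports already present in $B_R(o)$, namely $\portsatzero{B_R(o)} = \portsatzero{{B_R(o)}^{\leq i}}$, and the ``deep'' ports created by expanding the boxes of $B_R(o)$ of depth $\geq i$. Throughout write $T = \termofTaylor{B_R(o)}{e_o}{i}$ and $T_o = \termofTaylor{B_{R_o}(o)}{e_o}{i}$; note that $e_o$ is legitimately a pseudo-experiment on both, since $\pseudoExperiments{B_R(o)} = \pseudoExperiments{B_{R_o}(o)}$. As $B_{R_o}(o) = \varphi \cdot_o R$ adds the fresh contraction ports $\mathcal{Q'}$ (disjoint from $\portsatzero{B_R(o)}$) and merely redirects some conclusions onto them, one first checks $\portsatzero{T_o} = \portsatzero{T} \cup \mathcal{Q'}$, so that every $p \in \portsatzero{T}$ satisfies $p \notin \mathcal{Q'}$ and both arities in the statement are defined.

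For the deep ports the argument is essentially free. Such a $p \in \portsatzero{T} \setminus \portsatzero{B_R(o)}$ is of the form $(o', (e_{o'}, \hat p))$ with $o' \in \boxesatzerogeq{B_R(o)}{i}$, $e_{o'} \in e_o(o')$ and $\hat p \in \portsatzero{\termofTaylor{B_{B_R(o)}(o')}{e_{o'}}{i}}$, living inside the renamed copy $\langle o', \langle e_{o'}, \termofTaylor{B_{B_R(o)}(o')}{e_{o'}}{i} \rangle \rangle$. The redirections performed by the function $t$ of Definition~\ref{defin: Taylor} all land on depth-$0$ exponential ports of $B_R(o)$ (and, for $T_o$, on those together with $\mathcal{Q'}$), hence none of them is a premise of $p$; so $\arity{T}(p) = \arity{\termofTaylor{B_{B_R(o)}(o')}{e_{o'}}{i}}(\hat p)$. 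Since $\varphi \cdot_o R$ leaves the inner boxes untouched, $B_{B_{R_o}(o)}(o') = B_{B_R(o)}(o')$, so the identical computation for $T_o$ returns the same value, giving $\arity{T}(p) = \arity{T_o}(p)$.

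For $p \in \portsatzero{B_R(o)}$ I would apply Proposition~\ref{prop: arity} separately to $B_R(o)$ and to $B_{R_o}(o)$ (both with $e_o$) and compare the two formulas term by term. The boxes of depth $\geq i$ and their contents coincide, so $\portsatdepthgreater{B_R(o)}{i} = \portsatdepthgreater{B_{R_o}(o)}{i}$ and the functions $b^{\geq i}$ and ${e_o}^\#$ agree; what remains is to show that the conclusions $p'$ of those boxes with target $p$ are the same on both sides and that $\arity{{B_R(o)}^{\leq i}}(p) = \arity{{B_{R_o}(o)}^{\leq i}}(p)$. Both reduce to Remark~\ref{rem: action on PS}: the only temporary conclusions that $\varphi \cdot_o R$ adds to an inner box $o'$ are those whose new target lies in $\mathcal{Q'}$, the old targets being unchanged. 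Since $p \notin \mathcal{Q'}$, passing from $B_R(o)$ to $B_{R_o}(o)$ neither creates nor destroys a premise of $p$, and the two formulas agree.

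The main obstacle is bookkeeping rather than conceptual: ``arity'' counts both the wire premises (through $\arity{\groundof{\cdot}}$, i.e. the set $\mathcal{W}_0$ and the wires inherited from $({B_R(o)})^{\leq i}$) and the box-door premises (through the partial function $\target{}$, i.e. $\dom{\target{R^{\leq i}}}$ and $\mathcal{W}_{>0}$). Lemma~\ref{lem: termofTaylor of R_o} controls directly only the box-door part, stating precisely that $\target{T}$ is $\target{T_o}$ restricted to the preimage of the complement of $\mathcal{Q'}$; so the delicate point is to run the parallel matching for the depth-$0$ wires, which is exactly what Remark~\ref{rem: action on PS} furnishes. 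Once both kinds of premises are matched outside $\mathcal{Q'}$, the equality $\arity{T}(p) = \arity{T_o}(p)$ holds for every $p \in \portsatzero{T}$.
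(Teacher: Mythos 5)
Your proof is correct and follows essentially the same route as the paper's: the same case split between ports in $\portsatzero{B_R(o)}$ (handled by applying Proposition~\ref{prop: arity} to both $B_R(o)$ and $B_{R_o}(o)$ and matching the two resulting formulas term by term) and copy ports (whose arity reduces to the arity in the expansion of the unchanged inner box contents, using $B_{B_{R_o}(o)}(o_1) = B_{B_R(o)}(o_1)$). The only difference is that you spell out, via Remark~\ref{rem: action on PS}, why the two formulas' terms coincide outside $\mathcal{Q'}$ — a verification the paper's proof leaves implicit.
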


\begin{proof}
Let $p \in \portsatzero{\termofTaylor{B_R(o)}{e_o}{i}}$. If $p \in \portsatzero{B_R(o)}$, then, by Proposition~\ref{prop: arity}, we have 
\begin{eqnarray*}
\arity{\termofTaylor{B_R(o)}{e_o}{i}}(p) & = & \arity{{B_R(o)}^{\leq i}}(p) + \sum_{\substack{p' \in \portsatdepthgreater{B_R(o)}{i}\\ \target{B_R(o)}(p') = p}} \sum {e_o}^\#(b_{B_R(o)}^{\geq i}(p'))\\
& = & \arity{{B_{R_o}(o)}^{\leq i}}(p) + \sum_{\substack{p' \in \portsatdepthgreater{B_{R_o}(o)}{i}\\ \target{B_{R_o}(o)}(p') = p}} \sum {e_o}^\#(b_{B_{R_o}(o)}^{\geq i}(p'))\\
&  = & \arity{\termofTaylor{B_{R_o}(o)}{e_o}{i}}(p)
\end{eqnarray*}
If $p \notin \portsatzero{B_R(o)}$, then there exist $o_1 \in \boxesatzerogeq{B_R(o)}{i}$, $e_1 \in e_o(o_1)$ and $p' \in \portsatzero{\termofTaylor{B_{B_R(o)}(o_1)}{e_1}{i}}$ such that $p = (o_1, (e_1, p'))$; moreover, 
\begin{align*}
\arity{\termofTaylor{B_R(o)}{e_o}{i}}(p) ={} & \arity{\termofTaylor{B_{B_R(o)}(o_1)}{e_1}{i}}(p')\\
={} & \arity{\termofTaylor{B_{B_{R_o}(o)}(o_1)}{e_1}{i}}(p')\\
={} & \arity{\termofTaylor{B_{R_o}(o)}{e_o}{i}}(p)
 \tag*{\qedhere}
\end{align*}
\end{proof}

\section{Rebuilding the proof-structure}\label{section: rebuilding}

The Taylor expansion of a PS is an infinite set of simple differential nets (for PS's of depth $> 0$). It was not known whether from this infinite set is was possible to rebuild the PS; indeed, \emph{a priori}, two different PS's could have the same Taylor expansion. We will not only show that it is possible to rebuild any PS $R$ from its Taylor expansion $\TaylorExpansion{R}$, we will also show something much stronger: We are already able to rebuild the PS with only one well-chosen simple differential net that appears in the Taylor expansion, chosen according to a specific information given by a second simple differential net of the Taylor expansion. 

The algorithm leading from the simple differential net $\termofTaylor{R}{e}{0}$ for some well-chosen pseudo-experiment $e$ on $R$ to the entire rebuilding of $R$ is done in several steps: In the intermediate steps, we obtain a partial rebuilding where some boxes have been recovered but not all of them; a convenient way to represent this information is to use differential PS's, which lie between the purely linear differential proof-nets and the non-linear proof-nets.

The rebuilding of the PS $R$ is done in $d$ steps, where $d$ is the depth of $R$. We first rebuild the occurrences of the boxes of depth $0$ (the deepest ones) and next we rebuild the occurrences of the boxes of depth $1$ and so on... This can be formalized using simple differential nets (possibly with boxes) as follows: starting from $\termofTaylor{R}{e}{0}$, the first step of the algorithm builds $\termofTaylor{R}{e}{1}$, the second step builds $\termofTaylor{R}{e}{2}$ from $\termofTaylor{R}{e}{1}$, and so on... until $\termofTaylor{R}{e}{\depthof{R}} = R$. We thus reduced the problem of rebuilding the PS to the problem of rebuilding $\termofTaylor{R}{e}{i+1}$ from $\termofTaylor{R}{e}{i}$ for some well-chosen pseudo-experiment $e$.

We can hardly obtain much more than the following property for a non-well-chosen pseudo-experiment:

\begin{lem}\label{lem: ground-structure: trivial}
Let $R$ be an in-PS. Let $e$ be a pseudo-experiment on $R$. Let $i \in \Nat$. Then we have:
\begin{itemize}
\item ${\termofTaylor{R}{e}{i+1}}^{\leq i} \sqsubseteq_{\emptyset} {\termofTaylor{R}{e}{i}}$ 
\item $(\forall o \in \boxesatzerosmaller{{\termofTaylor{R}{e}{i+1}}}{i}) \temporaryConclusions{{\termofTaylor{R}{e}{i+1}}}{o} = \temporaryConclusions{{\termofTaylor{R}{e}{i}}}{o}$;
\item and $\restriction{\isaCopyfrom{R}{e}{i+1}}{\portsatzero{\termofTaylor{R}{e}{i+1}}} = \restriction{\isaCopyfrom{R}{e}{i}}{\portsatzero{\termofTaylor{R}{e}{i+1}}}$.
\end{itemize}
\end{lem}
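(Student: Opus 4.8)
The plan is to prove the three statements \emph{simultaneously} by induction on $\depthof{R}$, since, as will be apparent, the induction is genuinely mutual: the first statement at smaller depth feeds the port inclusions needed for the third, and the third feeds the wire- and conclusion-bookkeeping needed for the first and second. The base case $\depthof{R} = 0$ is immediate: then $R$ has no box, so $\termofTaylor{R}{e}{i} = R^{\leq i} = R = \termofTaylor{R}{e}{i+1}$, both $\isaCopyfrom{R}{e}{i}$ and $\isaCopyfrom{R}{e}{i+1}$ are the identity, and there are no shallow boxes, making all three clauses trivial.

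For the inductive step I would unfold $\termofTaylor{R}{e}{i} = S@t$ and $\termofTaylor{R}{e}{i+1} = S'@t'$ as in Definition~\ref{defin: Taylor} and compute ${\termofTaylor{R}{e}{i+1}}^{\leq i}$ by the commutation identities: Remark~\ref{rem: @<=i} pushes $(\cdot)^{\leq i}$ through $@t'$, Remark~\ref{rem: sum <=i} pushes it through $\bigoplus$, and Remark~\ref{rem: <= <=} gives $(R^{\leq i+1})^{\leq i} = R^{\leq i}$ and commutes $(\cdot)^{\leq i}$ with the renamings $\langle o, \langle e_o, - \rangle \rangle$, turning the inner copies into ${\termofTaylor{B_R(o)}{e_o}{i+1}}^{\leq i}$. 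The outcome is that, compared with $\termofTaylor{R}{e}{i}$, the object ${\termofTaylor{R}{e}{i+1}}^{\leq i}$ differs in exactly two ways: \textbf{(i)} the depth-$0$ ports contributed by the expansions $\langle o, \langle e_o, \termofTaylor{B_R(o)}{e_o}{i}\rangle\rangle$ of the boxes $o$ with $\depthof{B_R(o)} = i$ are absent; and \textbf{(ii)} for the boxes $o$ with $\depthof{B_R(o)} \geq i+1$ the inner copy is replaced by the substructure ${\termofTaylor{B_R(o)}{e_o}{i+1}}^{\leq i} \sqsubseteq_{\emptyset} \termofTaylor{B_R(o)}{e_o}{i}$ furnished by the induction hypothesis. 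Note that every port of $\portsatzero{R}$ survives on both sides, including the $\cod$-ports $o$ with $\depthof{B_R(o)} = i$, which persist as mere co-contractions because $(\cdot)^{\leq i}$ keeps the ground structure $\groundof{R}$ intact while dropping those ports from $\boxesatzero{}$.

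The first statement then asserts that the erasure described in \textbf{(i)} is a clean substructure operation with $\mathcal{Q} = \emptyset$, so I would check the clauses of Definition~\ref{defin: substructure} one by one. The label, axiom, cut, left-wire and box clauses reduce either to the common ground structure $\groundof{R}$ or, on the deep copies, to the induction hypothesis transported through the renaming (and composed via Fact~\ref{fact: transitivity of substructures}); the depth-$i$ expansions contribute only ports whose incident axioms, cuts and internal wires have both endpoints among the erased ports. The crux is the wire clause, which with $\mathcal{Q} = \emptyset$ demands
$$\wiresatzero{{\termofTaylor{R}{e}{i+1}}^{\leq i}} = \{ w \in \wiresatzero{\termofTaylor{R}{e}{i}} \cap \portsatzero{{\termofTaylor{R}{e}{i+1}}^{\leq i}} ; \target{\groundof{\termofTaylor{R}{e}{i}}}(w) \in \portsatzero{{\termofTaylor{R}{e}{i+1}}^{\leq i}} \},$$
i.e.\ that no retained wire has an erased target. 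This is where I expect the main difficulty, and I would settle it by classifying the wires of $\termofTaylor{R}{e}{i}$: wires of $\groundof{R}$ have both endpoints in $\portsatzero{R}$, hence retained; a gluing wire of $\mathcal{W}_0$ always has its source at a copy conclusion and its target an exponential port of $\portsatzero{R}$, so it is retained precisely when its (kept) target is; and an internal wire of a copy either lies in an erased depth-$i$ copy — source erased, so it is discarded on both sides — or in a deep copy, where the matching of retained wires is exactly the wire clause of the induction hypothesis lifted through the renaming. The same analysis shows that no wire issuing from a retained exponential port is ever suppressed, which is what forces $\mathcal{Q} = \emptyset$.

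The third statement I would verify alongside: a port $p \in \portsatzero{\termofTaylor{R}{e}{i+1}}$ is either a port of $R$, where $\isaCopyfrom{R}{e}{i+1}$ and $\isaCopyfrom{R}{e}{i}$ both act as the identity, or of the form $(o, (e_o, p'))$ with $\depthof{B_R(o)} \geq i+1$ and $p' \in \portsatzero{\termofTaylor{B_R(o)}{e_o}{i+1}} \subseteq \portsatzero{\termofTaylor{B_R(o)}{e_o}{i}}$ (the inclusion being the first statement at smaller depth, since $(\cdot)^{\leq i}$ preserves the set of depth-$0$ ports), where both maps return $(o,-)$ applied to the respective $\isaCopyfrom{B_R(o)}{e_o}{\bullet}(p')$, equal by the induction hypothesis. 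Finally, the second statement concerns the shallow boxes, untouched by either expansion: for a box $o$ of $R$ with $\depthof{B_R(o)} < i$ both sides yield $\temporaryConclusions{R}{o}$, while for a shallow box inside a deep copy one descends to the induction hypothesis; that being a shallow conclusion is preserved between level $i$ and level $i+1$ is then read off from Fact~\ref{fact: conclusions} combined with the equality of $\isaCopyfrom{}{}$-values just established.
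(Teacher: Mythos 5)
Your proposal is correct and takes essentially the same route as the paper's own proof: an induction on $\depthof{R}$ whose core is the explicit computation of ${\termofTaylor{R}{e}{i+1}}^{\leq i}$ as an $S@t$ via Remark~\ref{rem: @<=i}, Remark~\ref{rem: sum <=i} and Remark~\ref{rem: <= <=}, after which the comparison with $\termofTaylor{R}{e}{i}$ (erased depth-$i$ copies, deep copies handled by the induction hypothesis) yields all three clauses. The paper leaves the clause-by-clause verification implicit, so your write-up is simply a more detailed rendering of the same argument (your attribution of the commutation of $(\cdot)^{\leq i}$ with the renamings $\langle o, \langle e_o, - \rangle\rangle$ to Remark~\ref{rem: <= <=} is slightly off, but that commutation is immediate from Definition~\ref{defin: special iso} and is used tacitly by the paper as well).
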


\begin{proof}
By induction on $\depthof{R}$, noticing, by applying Remark~\ref{rem: @<=i}, Remark~\ref{rem: sum <=i} and Remark~\ref{rem: <= <=}, that we have ${\termofTaylor{R}{e}{i+1}}^{\leq i}  = S@t$ with 
$$S  = R^{\leq i} \oplus \bigoplus_{o \in \boxesatzerogeq{R}{i+1}} \bigoplus_{e_o \in e(o)} \langle o, \langle e_o, \termofTaylor{B_R(o)}{e_o}{i+1}^{\leq i} \rangle \rangle$$ and $t$ is the function $\mathcal{W}_0 \cup \mathcal{W}_{> 0} \to \exponentialportsatzero{R}$, where
\begin{itemize}
\item $\mathcal{W}_0 = \bigcup_{o \in \boxesatzerogeq{R}{i+1}} \bigcup_{e_o \in e(o)} \{ (o, (e_o, q)) ; (q \in \conclusions{\groundof{\termofTaylor{B_R(o)}{e_o}{i+1}}} \wedge \isaCopyfrom{B_R(o)}{e_o}{i+1}(q) \in \temporaryConclusions{R}{o}) \}$
\item $\mathcal{W}_{> 0} = \bigcup_{(o, (e_o, o')) \in \boxesatzerosmaller{S}{i} \setminus \boxesatzero{R^{\leq i+1}}} \left\lbrace ((o, (e_o, o')), q) ; \begin{array}{ll} ((o', q) \in \nonshallowConclusions{\termofTaylor{B_R(o)}{e_o}{i+1}}\\ \wedge \isaCopyfrom{B_R(o)}{e_o}{i+1}(o', q) \in \temporaryConclusions{R}{o}) \end{array} \right\rbrace$
\item and $t(p) = \left\lbrace \begin{array}{ll} \target{R}(o, \isaCopyfrom{B_R(o)}{e_o}{i+1}(q)) & \textit{if $p = (o, (e_o, q)) \in \mathcal{W}_0$;}\\ \target{R}(o, \isaCopyfrom{B_R(o)}{e_o}{i+1}(o', q)) & \textit{if $p = ((o, (e_o, o')), q) \in \mathcal{W}_{> 0}$.} \end{array} \right.$
\qedhere
\end{itemize}
\end{proof}

We thus have to consider some special experiments. As a first requirement, the pseudo-experiments we will consider are \emph{exhaustive}:

\begin{defi}\label{definition: exhaustive}
A pseudo-experiment $e$ of an in-PS $R$ is said to be \emph{exhaustive} if, for any $o \in \boxes{R}$, we have $0 \notin e^\#(o)$.
\end{defi}

But this requirement is not strong enough. More specific pseudo-experiments have to be considered. 
\begin{itemize}
\item In \cite{injectcoh}, it was shown that given the result of an \emph{injective $k$-obsessional experiment} ($k$ big enough) of a cut-free proof-net in the fragment $A ::= X \vert \contr A \parr A \vert A \parr \contr A \vert A \otimes A \vert !A$, one can rebuild the entire experiment and, so, the entire proof-net. We recall that the LPS of a cut-free proof-net forgets the outline of the boxes but keeps the trace of the auxiliary doors (see Figure~\ref{fig: LPS},~p.~\pageref{fig: LPS} for an example). There, ``injective'' means that the experiment labels two different axioms with different atoms and ``obsessional'' means that different copies of the same axiom are labeled by the same atom. Obsessionality entails that the names of the atoms play some role and thus we cannot reduce such experiments to some pseudo-experiments.
\item In \cite{LPSinjectivity}, it was shown that, for any two cut-free MELL proof-nets $R$ and $R'$, we have $LPS(R) = LPS(R')$ if, and only if, for $k$ big enough\footnote{Interestingly, \cite{k=2}, following the approach of \cite{LPSinjectivity}, showed that, if these two proof-nets are assumed to be (recursively) connected, then we can take $k = 2$.}, there exist an \emph{injective $k$-experiment} on $R$ and an \emph{injective $k$-experiment} on $R'$ having the same result; as an immediate corollary we obtained the injectivity of the set of (recursively) connected proof-nets. There, ``injective'' means that not only the experiment labels two different axioms with different atoms, but it labels also different copies of the same axiom by different atoms. Given some proof-net $R$, there is exactly one injective $k$-experiment on $R$ up to the names of the atoms. Injectivity allows to reduce such experiments to some pseudo-experiments: it makes sense to define \emph{$k$-pseudo-experiments} and we could show in the same way that, for any two cut-free proof-nets $R$ and $R'$, the two following statements are equivalent:
\begin{itemize}
\item we have $LPS(R) = LPS(R')$;
\item for any $k \in \Nat$, for any $k$-pseudo-experiment $e$ on $R$, for any $k$-pseudo-experiment $e'$ on $R'$, we have $\termofTaylor{R}{e}{0} = \termofTaylor{R'}{e'}{0}$, where a \emph{$k$-pseudo-experiment} is a pseudo-experiment such that, for any $o \in \boxes{R}$, we have $e^\#(o) = \{ k \}$. 
\end{itemize}
Now, there are many different cut-free PS's with the same LPS (see Figures~\ref{fig: LPS}, \ref{fig: R_1}, \ref{fig: R_2}, \ref{fig: R_3} and \ref{fig: R_4}, p.~\pageref{fig: LPS} for an example).
\end{itemize}

In \cite{LPSinjectivity}, the interest for \emph{injective} experiments came from the remark that the result of an (atomic) \emph{injective} experiment on a \emph{cut-free} proof-net can be easily identified with a simple differential net of its Taylor expansion in a sum of simple differential nets \cite{EhrhardRegnier:DiffNets} (it is essentially the content of our Lemma~\ref{lem: Taylor expansion}). Thus any proof using injective experiments can be straightforwardly expressed in terms of simple differential nets and conversely. Since this identification is trivial, besides the idea of considering injective experiments instead of obsessional experiments, the use of the terminology of differential nets does not bring any new insight\footnote{For proof-nets with cuts, the situation is completely different: the great novelty of differential nets is that differential nets have a cut-elimination; the simple differential nets appearing in the Taylor expansion of a proof-net with cuts have cuts, while the semantics does not see these cuts. But the proofs of the injectivity only consider cut-free proof-nets.}, it just superficially changes the presentation.That is why we decided in \cite{LPSinjectivity} to avoid introducing explicitly differential nets. In \cite{Carvalho:CSL}, we made the opposite choice for the following reason: The simple differential net representing the result and the proof-net are both instances of the more general notion of ``simple differential nets possibly with boxes'', which are used to represent the partial information obtained during the algorithm execution. Moreover, this identification allows to see the injectivity of the relational semantics as a particular case of the invertibility of Taylor expansion.

In the present paper, we introduce the notion of \emph{$k$-heterogeneous \mbox{(pseudo-)}experiment} ($k$-heterogeneous pseudo-experiments on \emph{cut-free} PS's are an abstraction of injective $k$-heterogeneous experiments, where \emph{injective} has the same meaning as in \cite{LPSinjectivity}) and the simple differential net we will consider to rebuild entirely the PS is the simple differential net obtained by expanding the boxes according to any \emph{$k$-heterogeneous pseudo-experiment} (for $k$ big enough): We show that, for any cut-free PS $R$, given the result $\alpha$ of a \emph{$k$-heterogeneous experiment} on $R$ for $k$ big enough, if $\alpha \in \sm{R'}$, where $R'$ is any cut-free PS, then $R'$ is the same PS as $R$. The constraints on $k$ are given by the result of a $1$-experiment, so we show that two (well-chosen) points are enough to determine a PS. The expression ``$k$-heterogeneous'' means that, for any two different occurrences of boxes, the experiment never takes the same number of copies: it takes $k^{j_1}$ copies and $k^{j_2}$ copies with $j_1 \not= j_2$ (\emph{a contrario}, in \cite{injectcoh} and \cite{LPSinjectivity}, the experiments always take the same number of copies).
\begin{figure}
\centering
\resizebox{.5 \textwidth}{!}{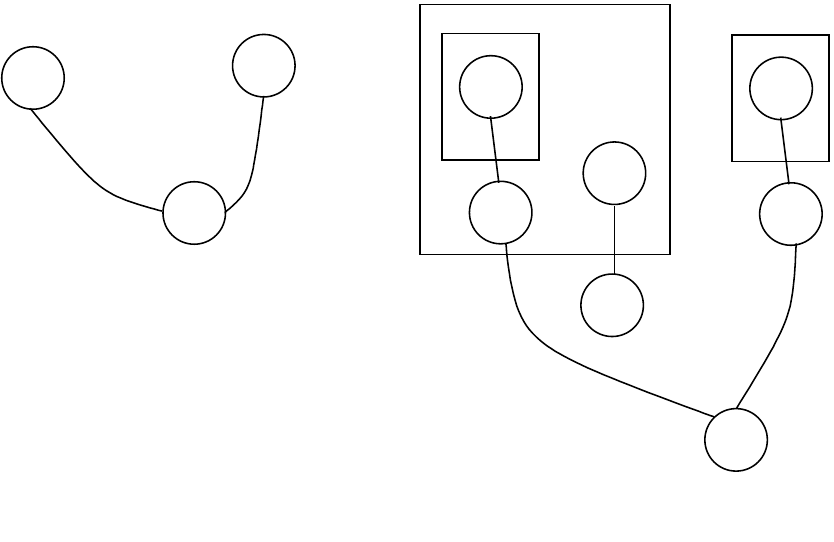}
\caption{Proof-net $S$}
\label{fig: no reconstruction of the experiment}
\end{figure}
As shown by the proof-net $S$ of Figure~\ref{fig: no reconstruction of the experiment}, it is impossible to rebuild the experiment from its result, since there exist five different $4$-heterogeneous experiments $e_1, e_2, e_3, e_4$ and $e_5$ on $S$ such that, for any $i \in \{ 1, 2, 3, 4, 5 \}$, we have $e_i(p) = (\ast, \ast)$, $e_i(o_1) = [\ast, \ast, \ast, \ast]$ and $e_i(p') = [[\underbrace{\ast, \ldots, \ast}_{4^2}], \ldots, [\underbrace{\ast, \ldots, \ast}_{4^6}]]$: The experiment $e_i$ takes $4$ copies of the box $o_1$ and $4^{i+1}$ copies of the box $o_2$.

Actually, more generally, we show that, for \emph{any} PS $R$, given the simple differential net $\termofTaylor{R}{e}{0}$ that belongs to the support of the Taylor expansion and that has been obtained by expanding the boxes according to any (atomic) injective $k$-heterogeneous pseudo-experiment $e$ on $R$ for $k$ big enough, if $\termofTaylor{R}{e}{0}$ belongs to the support of the Taylor expansion of any PS $R'$, then $R'$ is the same PS as $R$. Notice that in presence of cuts, the $k$-heterogeneous pseudo-experiment we consider is not necessarily induced  (see Definition~\ref{definition: experiment induces pseudo}) by an experiment.\footnote{In the case there is no such experiment, the simple differential net $\termofTaylor{R}{e}{0}$ reduces to $0$.}

\begin{defi}\label{defin: k-heterogeneous}
Let $k > 0$. A pseudo-experiment $e$ on an in-PS $R$ is said to be \emph{$k$-heterogeneous} if
\begin{itemize}
\item for any $o \in \boxes{R}$, for any $m \in e^\#(o)$, there exists $j > 0$ such that $m = k^j$;
\item for any $o \in \boxesatzero{R}$, for any $o' \in \boxes{B_R(o)}$, we have $(\forall e_1, e_2 \in e(o))$ $({e_1}^\#(o') \cap {e_2}^\#(o') \not= \emptyset \Rightarrow e_1 = e_2)$;
\item and, for any $o_1, o_2 \in \boxes{R}$, we have $({e}^\#(o_1) \cap {e}^\#(o_2) \not= \emptyset \Rightarrow o_1 = o_2)$.
\end{itemize}
\end{defi}

\begin{exa}\label{example: pseudo}
The pseudo-experiment $e$ of Example~\ref{example: pseudo-experiment} is a $10$-heterogeneous pseudo-experiment.
\end{exa}

$k$-heterogeneous experiments are characterized by (the arities of the co-contractions of) their corresponding terms of the Taylor expansion:

\begin{lem}\label{lem: M_0(e)}
For any in-PS $R$, we have $\bigcup_{o \in \boxes{R}} e^\#(o) = \arity{\termofTaylor{R}{e}{0}}[\portsatzerooftype{\cod}{\termofTaylor{R}{e}{0}}]$.
\end{lem}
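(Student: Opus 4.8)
The plan is to prove the identity by induction on $\depthof{R}$, using Proposition~\ref{prop: arity} to compute the arities of the co-contraction ports of $\termofTaylor{R}{e}{0}$. The base case $\depthof{R} = 0$ is immediate: then $\boxes{R} = \emptyset$, so the left-hand side is empty, while $R$ (hence $\termofTaylor{R}{e}{0} = R$), being an in-PS of depth $0$, has $\portsatzerooftype{\cod}{R} = \boxesatzero{R} = \emptyset$, so the right-hand side is empty too.

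For the inductive step I would first record the decomposition of the co-contractions of $\termofTaylor{R}{e}{0} = S@t$. Since $\portsatzerooftype{\cod}{-}$ depends only on the ground-structure, the operator $@t$ neither relabels ports nor adds boxes, and $S = R^{\leq 0} \oplus \bigoplus_{o \in \boxesatzero{R}} \bigoplus_{e_o \in e(o)} \langle o, \langle e_o, \termofTaylor{B_R(o)}{e_o}{0} \rangle \rangle$ glues the pieces only along shallow contractions, the set $\portsatzerooftype{\cod}{\termofTaylor{R}{e}{0}}$ is the disjoint union of $\portsatzerooftype{\cod}{R^{\leq 0}} = \portsatzerooftype{\cod}{R} = \boxesatzero{R}$ and, for each $o \in \boxesatzero{R}$ and each $e_o \in e(o)$, the renamed ports $\{ (o, (e_o, q)) ; q \in \portsatzerooftype{\cod}{\termofTaylor{B_R(o)}{e_o}{0}} \}$.

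Next I would compute the two kinds of arities. For a main door $o \in \boxesatzero{R}$, Proposition~\ref{prop: arity} with $i = 0$ gives $\arity{\termofTaylor{R}{e}{0}}(o) = \arity{R^{\leq 0}}(o) + \sum_{p'} \sum e^\#(b_R^{\geq 0}(p'))$, the sum ranging over $p' \in \portsatdepthgreater{R}{0}$ with $\target{R}(p') = o$. Here $\arity{R^{\leq 0}}(o) = 0$, since $\groundof{R}$ is a pre-ground-structure (no wire targets a $\cod$ port) and $R^{\leq 0}$ has no boxes; moreover the only $p'$ with $\target{R}(p') = o$ is $(o, \aboveBang{R}{o})$ by the defining property of $\target{R}$, and $b_R^{\geq 0}(o, \aboveBang{R}{o}) = o$ because $\aboveBang{R}{o} \in \portsatzero{B_R(o)}$. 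Hence $\arity{\termofTaylor{R}{e}{0}}(o) = \sum e^\#(o) = \Card{e(o)}$, the single element of $e^\#(o)$, so $\arity{\termofTaylor{R}{e}{0}}[\boxesatzero{R}] = \bigcup_{o \in \boxesatzero{R}} e^\#(o)$. For an internal co-contraction $(o, (e_o, q))$, both $\termofTaylor{R}{e}{0}$ and $\termofTaylor{B_R(o)}{e_o}{0}$ have depth $0$, so these arities are purely ground-structure arities; as the gluing and renaming preserve the premises of internal ports and $@t$ only adds wires targeting the original exponential ports $\exponentialportsatzero{R}$ (never a copy), no premise of $(o, (e_o, q))$ is created or lost, giving $\arity{\termofTaylor{R}{e}{0}}(o, (e_o, q)) = \arity{\termofTaylor{B_R(o)}{e_o}{0}}(q)$.

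Putting these together and applying the induction hypothesis to each $B_R(o)$ with $e_o$, I obtain
\[ \arity{\termofTaylor{R}{e}{0}}[\portsatzerooftype{\cod}{\termofTaylor{R}{e}{0}}] = \bigcup_{o \in \boxesatzero{R}} e^\#(o) \cup \bigcup_{o \in \boxesatzero{R}} \bigcup_{e_o \in e(o)} \bigcup_{o' \in \boxes{B_R(o)}} e_o^\#(o'). \]
Rewriting the inner double union via $e^\#(o, o') = \bigcup_{e_o \in e(o)} e_o^\#(o')$ and using $\boxes{R} = \boxesatzero{R} \cup \bigcup_{o \in \boxesatzero{R}} \{ (o, o') ; o' \in \boxes{B_R(o)} \}$ yields exactly $\bigcup_{o \in \boxes{R}} e^\#(o)$. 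The main obstacle is the arity bookkeeping for the main doors: one must check that Proposition~\ref{prop: arity} contributes precisely $\Card{e(o)}$, which hinges on the pre-ground-structure condition (no wire targets a $\cod$ port) together with the uniqueness of the premise $(o, \aboveBang{R}{o})$; the remainder is routine manipulation of disjoint unions.
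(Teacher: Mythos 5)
Your proof is correct and follows essentially the same route as the paper's: induction on $\depthof{R}$, using Proposition~\ref{prop: arity} to identify the arities of the former main doors $o \in \boxesatzero{R}$ with $\Card{e(o)}$, the induction hypothesis for the boxes inside each copy, and the fact that the renaming $\langle o, \langle e_o, \cdot \rangle \rangle$ together with $\oplus$ and $@t$ preserves the arities of the copied co-contractions. The only difference is that you spell out the bookkeeping (the pre-ground-structure condition, the uniqueness of $(o, \aboveBang{R}{o})$, and the arity preservation under renaming) that the paper's proof leaves implicit, which is fine.
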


\begin{proof}
By Proposition~\ref{prop: arity}, we have $\bigcup_{o \in \boxesatzero{R}} e^\#(o) = \arity{\termofTaylor{R}{e}{0}}[\boxesatzero{R}]$, hence
\begin{align*}
\bigcup_{o \in \boxes{R}} e^\#(o) ={} & \bigcup_{o \in \boxesatzero{R}} e^\#(o) \cup \bigcup_{o \in \boxesatzero{R}} \bigcup_{o' \in \boxes{B_R(o)}} e^\#(o, o')\\
={} & \arity{\termofTaylor{R}{e}{0}}[\boxesatzero{R}] \cup \bigcup_{o \in \boxesatzero{R}} \bigcup_{o' \in \boxes{B_R(o)}} \bigcup_{e_o \in e(o)} {e_o}^\#(o')\allowdisplaybreaks\\
={} & \arity{\termofTaylor{R}{e}{0}}[\boxesatzero{R}] \cup \bigcup_{o \in \boxesatzero{R}} \bigcup_{e_o \in e(o)} \arity{\termofTaylor{B_R(o)}{e_o}{0}}[\portsatzerooftype{\cod}{\termofTaylor{B_R(o)}{e_o}{0}}]\\
& \textit{    (by induction hypothesis)}\allowdisplaybreaks\\
={} & \arity{\termofTaylor{R}{e}{0}}[\boxesatzero{R}] \cup \bigcup_{o \in \boxesatzero{R}} \bigcup_{e_o \in e(o)} \arity{\termofTaylor{R}{e}{0}}[\portsatzerooftype{\cod}{R \langle o, 0, e_o \rangle}]\\
={} & \arity{\termofTaylor{R}{e}{0}}[\portsatzerooftype{\cod}{\termofTaylor{R}{e}{0}}]
 \tag*{\qedhere}
\end{align*}
\end{proof}

\begin{cor}\label{cor: characterization of k-heterogeneous experiments}
Let $R$ be an in-PS. Let $e$ be a pseudo-experiment on $R$. Let $k > 1$. Then $e$ is a $k$-heterogeneous pseudo-experiment on $R$ if, and only if, the two following properties hold together:
\begin{enumerate}
\item\label{property: arities are powers of k} $\arity{\termofTaylor{R}{e}{0}}[\portsatzerooftype{\cod}{\termofTaylor{R}{e}{0}}] \subseteq \{ k^j ; j > 0 \}$
\item\label{property: arities are distinct} $(\forall p_1, p_2 \in \portsatzerooftype{\cod}{\termofTaylor{R}{e}{0}}) (\arity{\termofTaylor{R}{e}{0}}(p_1) = \arity{\termofTaylor{R}{e}{0}}(p_2) \Rightarrow p_1 = p_2)$
\end{enumerate}
\end{cor}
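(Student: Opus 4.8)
The plan is to prove the two stated properties separately, each being a translation, through Lemma~\ref{lem: M_0(e)}, of one layer of Definition~\ref{defin: k-heterogeneous}. Lemma~\ref{lem: M_0(e)} identifies the set of arities realized by the co-contraction ports of $\termofTaylor{R}{e}{0}$ with $\bigcup_{o \in \boxes{R}} e^\#(o)$, and I would build on this to read off both the ``powers of $k$'' requirement and the ``distinctness'' requirement directly on the sets $e^\#(o)$.

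For the equivalence between Property~\ref{property: arities are powers of k} and the first clause of $k$-heterogeneity, I would simply observe that, by Lemma~\ref{lem: M_0(e)}, the inclusion $\arity{\termofTaylor{R}{e}{0}}[\portsatzerooftype{\cod}{\termofTaylor{R}{e}{0}}] \subseteq \{ k^j ; j > 0 \}$ says exactly that every element of every $e^\#(o)$, for $o \in \boxes{R}$, is of the form $k^j$ with $j > 0$; this is verbatim the first clause of Definition~\ref{defin: k-heterogeneous}. No use of $k > 1$ is needed for this equivalence: the hypothesis $k > 1$ only serves to keep $\{ k^j ; j > 0 \}$ nondegenerate, hence compatible with the distinctness demanded by Property~\ref{property: arities are distinct}.

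The substance is the equivalence between Property~\ref{property: arities are distinct} (injectivity of $\arity{\termofTaylor{R}{e}{0}}$ on the co-contractions) and the last two clauses. Here I would argue by induction on $\depthof{R}$, using the decomposition of the co-contraction ports already implicit in the proof of Lemma~\ref{lem: M_0(e)}: each such port comes either from a box $o \in \boxesatzero{R}$ (a single port, of arity $\Card{e(o)}$, the unique element of $e^\#(o)$) or, for some $o \in \boxesatzero{R}$ and some $e_o \in e(o)$, from a co-contraction of $\termofTaylor{B_R(o)}{e_o}{0}$ renamed by $\langle o, \langle e_o, - \rangle \rangle$, an operation that leaves the arity unchanged. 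Writing $P_b$ for the set of ports arising from a fixed $b \in \boxes{R}$, Lemma~\ref{lem: M_0(e)} refines to $\arity{\termofTaylor{R}{e}{0}}[P_b] = e^\#(b)$, and the $P_b$ partition $\portsatzerooftype{\cod}{\termofTaylor{R}{e}{0}}$. Global injectivity then splits into (i) injectivity across distinct boxes, i.e. pairwise disjointness of the $e^\#(b)$, which is precisely the third clause; and (ii) injectivity inside each $P_b$. For $b = o \in \boxesatzero{R}$ there is nothing to prove, while for a deeper box the ports of $P_b$ group according to the copy $e_o \in e(o)$ in which they live, so (ii) becomes the conjunction of the cross-copy disjointness of the sets ${e_o}^\#(\cdot)$ — which is the second clause — and injectivity inside each single copy, handled by the induction hypothesis applied to $e_o$ on $B_R(o)$.

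The delicate point, and the main obstacle, is the bookkeeping that makes this induction close, namely checking that the last two clauses are correctly inherited by the sub-experiments $e_o$. The first and third clauses are inherited automatically, since ${e_o}^\#(o')$ is always a subset of $e^\#(o, o')$, so disjointness upstairs forces disjointness downstairs. The second clause is the genuinely layered one: stated for $e$ it governs only the outermost copy-layer $e(o)$, and it is the recursive application along the box tree that must control collisions between copies at every depth. I would therefore phrase the inductive hypothesis so that it packages exactly ``injectivity on the co-contractions $\iff$ second and third clauses'' at each level, and verify that the decomposition above matches the outermost instance of the second clause while feeding the remaining instances into the recursive call. Once this matching is in place, both directions of the corollary follow by combining the two equivalences.
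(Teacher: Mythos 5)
Your proposal follows essentially the same route as the paper's own proof: Property~\ref{property: arities are powers of k} is read off from Lemma~\ref{lem: M_0(e)}, and Property~\ref{property: arities are distinct} is attacked by induction on $\depthof{R}$, splitting injectivity of $\arity{\termofTaylor{R}{e}{0}}$ on the co-contractions into a cross-box part (the third clause), a cross-copy part at the outermost layer (the second clause), and a within-copy part delegated to the induction hypothesis. The paper does the same, merely phrased after first translating the second clause into cross-copy disjointness of the sets $\arity{\termofTaylor{R}{e}{0}}[\portsatzerooftype{\cod}{R \langle o, 0, e_i \rangle}]$ and the third clause into ``equal arities imply equal images under $\isaCopyfrom{R}{e}{0}$''.

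However, the step you yourself single out as ``the delicate point'' is a genuine gap, and the fix you sketch cannot close it. In the direction ``clauses $\Rightarrow$ injectivity'' you must invoke the induction hypothesis on $(B_R(o), e_o)$, which requires the second clause \emph{for $e_o$}, i.e.\ control of collisions between sub-copies $f_1 \not= f_2 \in e_o(o')$. But the second clause for $e$ --- including the ``remaining instances'' at deeper boxes $o' \in \boxes{B_R(o)}$ that you propose to feed into the recursive call --- quantifies only over $e_1, e_2 \in e(o)$; none of its instances says anything about $f_1, f_2 \in e_o(o')$, so it is not inherited. Indeed, on the literal reading of Definition~\ref{defin: k-heterogeneous} the implication is false: take $R$ of depth $3$ with a single chain of boxes $o$, $o'$, $o''$, take $\Card{e(o)} = k$, and inside one copy $e_1 \in e(o)$ take two sub-copies $f_1 \not= f_2 \in e_1(o')$ (differing only in their value at $\emptysequence$, which is exactly what that value is for) with $\Card{f_1(o'')} = k^4 = \Card{f_2(o'')}$, all remaining cardinalities being pairwise distinct powers of $k$ chosen so that the sets ${e_i}^\#(o')$ and ${e_i}^\#(o', o'')$ are disjoint across distinct $i$ and the sets $e^\#(o)$, $e^\#(o, o')$, $e^\#(o, (o', o''))$ are pairwise disjoint. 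All three clauses then hold for $e$ (the repeated value $k^4$ is invisible to them, being absorbed inside the single set ${e_1}^\#(o', o'')$), yet $\termofTaylor{R}{e}{0}$ has two distinct co-contractions of arity $k^4$, so Property~\ref{property: arities are distinct} fails. Hence no phrasing of the inductive hypothesis can make your induction go through as described; what is needed is the hereditary reading of the second clause, imposed at every level of the box tree (equivalently: every $e_o \in e(o)$ is itself required to be $k$-heterogeneous on $B_R(o)$). That is the reading the paper uses implicitly --- its own terse proof recurses exactly as yours does and is silent on the same point, and elsewhere (e.g.\ the proof of Lemma~\ref{lem: M_i: aux}) the $k$-heterogeneity hypothesis is applied recursively to $(B_R(o), e_o)$; it also matches the avowedly global Definition~\ref{definition: k-heterogeneous point}. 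Under that reading your inheritance holds by definition and your decomposition does close both directions; you should state this reinterpretation explicitly rather than claim the inheritance can be verified from the outermost clauses.
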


\begin{proof}
For any pseudo-experiment $e$ on $R$, applying Lemma~\ref{lem: M_0(e)}, we obtain:
\begin{itemize}
\item For any $k > 1$, one has $(\forall o \in \boxes{R}) (\forall m \in e^\#(o)) (\exists j > 0) m = k^j$ if, and only if, one has $\arity{\termofTaylor{R}{e}{0}}[\portsatzerooftype{\cod}{\termofTaylor{R}{e}{0}}] \subseteq \{ k^j ; j > 0 \}$.
\item One has $(\forall o \in \boxesatzero{R}) (\forall o' \in \boxes{B_R(o)}) (\forall e_1, e_2 \in e(o)) ({e_1}^\#(o') \cap {e_2}^\#(o') \not= \emptyset \Rightarrow e_1 = e_2)$ if, and only if, one has $(\forall o \in \boxesatzero{R}) (\forall e_1, e_2 \in e(o))$ $(\arity{\termofTaylor{R}{e}{0}}[\portsatzerooftype{\cod}{R \langle o, 0, e_1 \rangle}] \cap \arity{\termofTaylor{R}{e}{0}}[\portsatzerooftype{\cod}{R \langle o, 0, e_2 \rangle}] \not= \emptyset \Rightarrow e_1 = e_2)$.
\item One has $(\forall o_1, o_2 \in \boxes{R}) ({e}^\#(o_1) \cap {e}^\#(o_2) \not= \emptyset \Rightarrow o_1 = o_2)$ if, and only if, one has $(\forall p_1, p_2 \in \portsatzerooftype{\cod}{\termofTaylor{R}{e}{0}}) (\arity{\termofTaylor{R}{e}{0}}(p_1) = \arity{\termofTaylor{R}{e}{0}}(p_2) \Rightarrow \isaCopyfrom{R}{e}{0}(p_1) = \isaCopyfrom{R}{e}{0}(p_2))$.
\end{itemize}
We prove, by induction on $\depthof{R}$, that, for any $k > 1$, the two following properties together:
\begin{itemize}
\item $(\forall o \in \boxesatzero{R}) (\forall e_1, e_2 \in e(o))$ $(\arity{\termofTaylor{R}{e}{0}}[\portsatzerooftype{\cod}{R \langle o, 0, e_1 \rangle}] \cap \arity{\termofTaylor{R}{e}{0}}[\portsatzerooftype{\cod}{R \langle o, 0, e_2 \rangle}] \not= \emptyset \Rightarrow e_1 = e_2)$
\item and $(\forall p_1, p_2 \in \portsatzerooftype{\cod}{\termofTaylor{R}{e}{0}}) (\arity{\termofTaylor{R}{e}{0}}(p_1) = \arity{\termofTaylor{R}{e}{0}}(p_2) \Rightarrow \isaCopyfrom{R}{e}{0}(p_1) = \isaCopyfrom{R}{e}{0}(p_2))$
\end{itemize}
imply $(\forall p_1, p_2 \in \portsatzerooftype{\cod}{\termofTaylor{R}{e}{0}}) (\arity{\termofTaylor{R}{e}{0}}(p_1) = \arity{\termofTaylor{R}{e}{0}}(p_2) \Rightarrow p_1 = p_2)$; the converse is trivial.
\end{proof}

We will rebuild $\termofTaylor{R}{e}{i+1}$ from $\termofTaylor{R}{e}{i}$ for any $k$-heterogeneous pseudo-experiment $e$ on $R$ (with $k \geq \basis{R}$, where $\basis{R}$ is an integer\footnote{The integer $\basis{R}$ is defined in Definition~\ref{defin: k big enough}.} provided by any $1$-pseudo-experiment on $R$). For this purpose we will introduce our notion of \emph{critical component} (elements of $\nontrivialconnected{\termofTaylor{R}{e}{i}}{\criticalports{\termofTaylor{R}{e}{i}}{k}{j}}{k}$ with $j \in \mathcal{N}_i(e)$)\footnote{The set $\mathcal{N}_i(e)$ is a set of integers that will be defined in Definition~\ref{defin: the algorithm (b)}, the set $\criticalports{\termofTaylor{R}{e}{i}}{k}{j}$ is a set of exponential ports of $\termofTaylor{R}{e}{i}$ at depth $0$ that will be defined in Definition~\ref{definition: critical} and the sets $\nontrivialconnected{S}{\mathcal{Q}}{k}$ of \emph{components $T$ of $S$ that are connected \emph{via} other ports than $\mathcal{Q}$, whose conclusions belong to $\mathcal{Q}$ and with $\cosize{T} < k$} will  be defined in Definition~\ref{defin: connected components}.}, which are special connected components of $\termofTaylor{R}{e}{i}$, \emph{connected} in a weak sense, and we will consider equivalence classes of critical components of $\termofTaylor{R}{e}{i}$ for the relation $\equiv$ that forgets names of ports apart from those of shallow conclusions.\footnote{This relation has been defined in Definition~\ref{definition: isomorphisms}.} We can summarize the three main ideas of our proof as follows:
\begin{enumerate}
\item The simple differential net that corresponds to a $k$-heterogeneous pseudo-experiments is informative enough to rebuild the entire PS if $k$ is big enough.
\item We introduce the notion of \emph{partial} Taylor expansion $\mathcal{T}_R[i]$ of a PS $R$ (with $i \in \Nat$) and reduce the problem of the rebuilding of a PS $R$ to the problem of the rebuilding of the simple differential net \emph{possibly with boxes} $\termofTaylor{R}{e}{i+1}$ that corresponds to a $k$-heterogeneous pseudo-experiment $e$ in the partial Taylor expansion $\mathcal{T}_R[i]$ from the simple differential net \emph{possibly with boxes} $\termofTaylor{R}{e}{i}$ that corresponds to the same $k$-heterogeneous pseudo-experiment $e$ in the partial Taylor expansion $\mathcal{T}_R[i]$.
\item We consider cardinalities of equivalence classes $\nontrivialconnected{\termofTaylor{R}{e}{i}}{\criticalports{\termofTaylor{R}{e}{i}}{k}{j}}{k}/\equiv$ of \emph{critical components} of $\termofTaylor{R}{e}{i}$ for the relation $\equiv$ in order to deduce the cardinalities of equivalence classes $\nontrivialconnected{\termofTaylor{R}{e}{i+1}}{\criticalports{\termofTaylor{R}{e}{i+1}}{k}{j}}{k}/\equiv$ of \emph{critical components} of $\termofTaylor{R}{e}{i+1}$ and the contents of the new boxes of depth $i$ of $\termofTaylor{R}{e}{i+1}$.
\end{enumerate}

\subsection{The borders of the boxes}\label{subsection: outline}

In this subsection we first show how to recover the set $\bigcup_{o \in \boxesgeq{R}{i}} \{ \log_k(m) ; m \in e^\#(o) \}$ and, therefore, the set $\portsatzerooftype{\cod}{\termofTaylor{R}{e}{i}} \setminus \boxesatzero{\termofTaylor{R}{e}{i}}$ of co-contractions of $\termofTaylor{R}{e}{i}$ (Lemma~\ref{lem: M_i}). Next, we show how to determine, from $\termofTaylor{R}{e}{i}$, the set $\exactboxesatzero{\termofTaylor{R}{e}{i+1}}{i}$ of ``new'' boxes and, for any such ``new'' box $o \in \exactboxesatzero{\termofTaylor{R}{e}{i+1}}{i}$, its \emph{border} i.e. the set $\target{\termofTaylor{R}{e}{i+1}}[\{ o \} \times \temporaryConclusions{\termofTaylor{R}{e}{i+1}}{o}]$ of exponential ports that are immediately below (Proposition~\ref{prop: critical ports below new boxes}), which, as shown by Example~\ref{example: border > LPS},~p.~\pageref{example: border > LPS}, is an information that is not provided by the LPS but is still too weak to rebuild the PS (the borders of the boxes do not allow to recover their outlines). In particular, we have $\exactboxesatzero{\termofTaylor{R}{e}{i+1}}{i} = !_{e, i}[\mathcal{N}_i(e)]$, where the set $\mathcal{N}_i(e) \subseteq \Nat$ is defined from the set $\mathcal{M}_0(e)$ of the numbers of copies of boxes taken by the pseudo--experiment $e$:

\begin{defi}\label{defin: the algorithm (b)}
Let $R$ be a differential in-PS. Let $k > 1$. Let $e$ be a $k$-heterogeneous pseudo-experiment on $R$. For any $i \in \Nat$, we define, by induction on $i$, $\mathcal{M}_i(e) \subseteq \Nat \setminus \{ 0 \}$ and $(m_{i, j}(e))_{j \in \Nat} \in \{ 0, \ldots, k-1 \}^\Nat$ as follows. 
We set $\mathcal{M}_0(e) = \bigcup_{o \in \boxes{R}{}} \{ j \in \Nat; k^j \in e^\#(o) \}$ and we write $\Card{\mathcal{M}_i(e)}$ in base $k$: $\Card{\mathcal{M}_i(e)} = \sum_{j \in \Nat} m_{i, j}(e) \cdot k^j$; we set $\mathcal{M}_{i+1}(e) =$ $\{ j > 0 ; m_{i, j}(e) \not= 0 \}$.

For any $i \in \Nat$, we set $\mathcal{N}_i(e) =  \mathcal{M}_i(e) \setminus \mathcal{M}_{i+1}(e)$.
\end{defi}

Notice that all the sets $\mathcal{M}_i(e)$ and $\mathcal{N}_i(e)$ can be computed from $\termofTaylor{R}{e}{0}$, since, by Lemma~\ref{lem: M_0(e)}, we have $\mathcal{M}_0(e) =$ $\log_k[\{ \arity{\termofTaylor{R}{e}{0}}(p) ;$ $p \in \portsatzerooftype{\cod}{\termofTaylor{R}{e}{0}} \}]$.

\begin{exa}\label{example: N_i(e)}
If $e$ is a $10$-heterogeneous pseudo-experiment as in Example~\ref{example: pseudo}, then $\mathcal{M}_0(e) = \{ 1, \ldots, 224 \}$. We have $\Card{\mathcal{M}_0(e)} = 4 + 2 \cdot 10^1 + 2 \cdot 10^2$, hence $\mathcal{M}_1(e) = \{ 1, 2 \}$ and $\mathcal{N}_0(e) = \{ 3, \ldots, 224 \}$. We have $\Card{\mathcal{M}_1(e)} = 2$, hence $\mathcal{M}_2(e) = \emptyset$ and $\mathcal{N}_1(e) = \{ 1, 2 \}$.  
\end{exa}

\begin{lem}\label{lem: bijection cod{e, i}}
Let $k > 1$. Let $R$ be an in-PS. If $e$ is a $k$-heterogeneous pseudo-experiment on $R$, then, for any $i \in \Nat$, there exists a unique bijection 
$$\cod_{e, i} : \log_k[\bigcup e^\#[\boxesgeq{R}{i}]] \simeq \portsatzerooftype{\cod}{\termofTaylor{R}{e}{i}} \setminus \boxesatzero{\termofTaylor{R}{e}{i}}$$ 
such that, for any $j \in \dom{\cod_{e, i}}$, the following properties hold:  
\begin{enumerate}
\item $k^j \in (e^\# \circ \isaCopyfrom{R}{e}{i} \circ \cod_{e, i})(j)$
\item\label{property: arity} and $(\arity{\termofTaylor{R}{e}{i}} \circ !_{e, i})(j) = k^j$. 
\end{enumerate}
\end{lem}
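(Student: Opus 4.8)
The plan is to show that each co-contraction of $\termofTaylor{R}{e}{i}$ is tagged by a unique pair consisting of the box of $R$ it comes from and its arity, that these pairs range exactly over $\{(o,m) : o \in \boxesgeq{R}{i},\ m \in e^\#(o)\}$, and then to exploit $k$-heterogeneity to see that the arity alone already determines the co-contraction. Concretely, I would first establish, by induction on $\depthof{R}$ and mirroring the proof of Lemma~\ref{lem: M_0(e)}, that the map
$$p \longmapsto (\isaCopyfrom{R}{e}{i}(p),\ \arity{\termofTaylor{R}{e}{i}}(p))$$
is a bijection from $\portsatzerooftype{\cod}{\termofTaylor{R}{e}{i}} \setminus \boxesatzero{\termofTaylor{R}{e}{i}}$ onto $\{(o,m) : o \in \boxesgeq{R}{i},\ m \in e^\#(o)\}$.

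For the induction I would use the decomposition $\termofTaylor{R}{e}{i} = S@t$ of Definition~\ref{defin: Taylor}: the co-contractions at depth $0$ split into the ports $o \in \boxesatzerogeq{R}{i}$, which lose their box status inside $R^{\leq i}$ and become co-contractions with $\isaCopyfrom{R}{e}{i}(o) = o$, and the co-contractions contributed by each summand $\langle o, \langle e_o, \termofTaylor{B_R(o)}{e_o}{i}\rangle\rangle$. For the first kind, Proposition~\ref{prop: arity} together with the fact that $\aboveBang{R}{o}$ is the only door of $R$ targeting $o$ yields $\arity{\termofTaylor{R}{e}{i}}(o) = \Card{e(o)}$, the unique element of $e^\#(o)$, so these account exactly for the pairs $(o,m)$ with $o \in \boxesatzerogeq{R}{i}$. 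For the second kind, the arity is inherited from the interior expansion (the wires added by $@t$ only target depth-$0$ exponential ports of $R$, never a prefixed interior co-contraction), so by the induction hypothesis applied to each $\termofTaylor{B_R(o)}{e_o}{i}$ the prefixed co-contractions biject with the pairs $((o,o'),m')$, where $o' \in \boxesgeq{B_R(o)}{i}$ and $m' \in {e_o}^\#(o')$. The crucial point is that for a fixed $(o,o')$ the sets ${e_o}^\#(o')$ are pairwise disjoint as $e_o$ ranges over $e(o)$ — this is precisely the second condition of Definition~\ref{defin: k-heterogeneous} — so their union is disjoint and equals $e^\#(o,o')$; hence the prefixed co-contractions biject with $\{((o,o'),m) : o'\in\boxesgeq{B_R(o)}{i},\ m\in e^\#(o,o')\}$. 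Since $\boxesgeq{R}{i} = \boxesatzerogeq{R}{i} \sqcup \bigsqcup_{o\in\boxesatzerogeq{R}{i}} \{(o,o') : o' \in \boxesgeq{B_R(o)}{i}\}$ (a box sitting inside a content of depth $<i$ can never have content-depth $\geq i$), the two kinds assemble into the claimed bijection.

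Once the pair bijection is available, I would invoke the remaining clauses of $k$-heterogeneity. If two co-contractions share an arity $m$, then $m \in e^\#(o_1) \cap e^\#(o_2)$ for their source boxes, so the third condition of Definition~\ref{defin: k-heterogeneous} forces $o_1 = o_2$, and injectivity of the pair map then forces the two co-contractions to coincide; hence $\arity{\termofTaylor{R}{e}{i}}$ is injective on co-contractions. Its image is $\bigcup_{o\in\boxesgeq{R}{i}} e^\#(o)$, and by the first condition every such value is a power $k^j$ with $j>0$, so $m \mapsto \log_k m$ identifies the image with $\log_k[\bigcup e^\#[\boxesgeq{R}{i}]]$. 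I would then define $\cod_{e, i}(j)$ to be the unique co-contraction of arity $k^j$: it is a bijection onto the co-contractions, property~\ref{property: arity} holds by construction, and property~1 holds because the pair attached to $\cod_{e, i}(j)$ is $(\isaCopyfrom{R}{e}{i}(\cod_{e, i}(j)), k^j)$, whence $k^j \in e^\#(\isaCopyfrom{R}{e}{i}(\cod_{e, i}(j)))$. Uniqueness is immediate: any bijection satisfying property~\ref{property: arity} must send $j$ to a co-contraction of arity $k^j$, and injectivity of the arity pins this down.

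The main obstacle is the inductive step establishing the pair bijection, namely matching the co-contractions of $\termofTaylor{R}{e}{i}$ against the pairs $(o,m)$ while correctly computing the arity of the \emph{new} co-contractions $o \in \boxesatzerogeq{R}{i}$ (for which the precise form of Proposition~\ref{prop: arity} is needed) and, above all, using the disjointness of the ${e_o}^\#(o')$ from $k$-heterogeneity to ensure that distinct copies $e_o$ of a box do not collapse distinct interior co-contractions into one pair. Everything after the pair bijection is a routine application of the two distinctness conditions of Definition~\ref{defin: k-heterogeneous}.
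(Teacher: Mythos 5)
Your proof is correct and follows essentially the same route as the paper's: an induction on $\depthof{R}$ that splits the co-contractions of $\termofTaylor{R}{e}{i}$ into the newly created ones $o \in \boxesatzerogeq{R}{i}$ (of arity $\Card{e(o)}$) and the prefixed ones coming from each $\termofTaylor{B_R(o)}{e_o}{i}$ (whose arity is preserved by prefixing), with the heterogeneity conditions supplying the required disjointness. Your intermediate ``pair bijection'' $p \mapsto (\isaCopyfrom{R}{e}{i}(p), \arity{\termofTaylor{R}{e}{i}}(p))$ is a repackaging of what the paper's recursive definition of $\cod_{e, i}$ establishes implicitly; its only added value is that it makes explicit the injectivity and uniqueness arguments that the paper leaves tacit.
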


\begin{proof}
Notice first that, for any $o \in \boxesatzerogeq{R}{i}$, we have $\arity{\termofTaylor{R}{e}{i}}(o) = \Card{e(o)}$, hence the function that associates with every $j \in \bigcup_{o \in \boxesatzerogeq{R}{i}} \{ \log_k(m) ; m \in e^\#(o) \}$ the unique $o_j \in \boxesatzerogeq{R}{i}$ such that $e^\#(o_j) = \{ k^j \}$ is a bijection 
$$\bigcup_{o \in \boxesatzerogeq{R}{i}} \{ \log_k(m) ; m \in e^\#(o) \} \simeq \portsatzerooftype{\cod}{R} \setminus \boxesatzero{\termofTaylor{R}{e}{i}}$$
such that $\arity{\termofTaylor{R}{e}{i}}(o_j) = \sum e^\#(o_j) = k^j$.

Now, we prove the lemma by induction on $\depthof{R}$. We set $$\cod_{e, i}(j) = \left\lbrace \begin{array}{ll} o_j & \textit{if $j \in \bigcup_{o \in \boxesatzerogeq{R}{i}} \{ \log_k(m) ; m \in e^\#(o) \}$;}\\
(o, (e_o, \cod_{e_o, i}(j))) & \textit{if $o \in \boxesatzerogeq{R}{i}$ and $k^j \in ({e_o}^\# \circ \kappa_{e_o, i} \circ \cod_{e_o, i})(j)$.} \end{array} \right.$$
For checking Property~\ref{property: arity}, notice that, for any $o \in \boxesatzerogeq{R}{i}$, for any $e_o \in e(o)$, for any $p \in \portsatzero{\termofTaylor{B_R(o)}{e_o}{i}}$, we have $\arity{\termofTaylor{R}{e}{i}}(o, (e_o, p)) = \arity{\termofTaylor{B_R(o)}{e_o}{i}}(p)$.
\end{proof}

\subsubsection{Identifying the co-contractions that correspond to the new boxes}

This part is devoted to prove Lemma~\ref{lem: M_i}, which shows that, for any $k$-heterogeneous pseudo-experiment $e$ on $R$, for any $i \in \Nat$, the function $\cod_{e, i}$ is actually a bijection $\mathcal{M}_i(e) \to \portsatzerooftype{\cod}{\termofTaylor{R}{e}{i}} \setminus \boxesatzero{\termofTaylor{R}{e}{i}}$ such that, for any $j \in \mathcal{M}_i(e)$,  we have  $(\arity{\termofTaylor{R}{e}{i}} \circ \cod_{e, i})(j) = k^j$. 

\begin{lem}\label{lem: M_i: aux}
Let $k > 1$. For any $i \in \Nat$, for any in-PS $R$ such that $\Card{\boxes{R}} < k$, for any $k$-heterogeneous pseudo-experiment $e$ on $R$, the following properties hold:
\begin{enumerate}
\item\label{property: at depth 0 disjoint from at depth > 0} $P_1(R, e, i)$: $(\forall o, o' \in \boxesatzero{R}) (\forall e_{o} \in e(o)) (\forall j \in \mathcal{M}_i(e_o)) k^j \notin e^\#(o')$
\item\label{property: M_i are disjoint} $P_2(R, e, i)$: $(\forall o, o' \in \boxesatzero{R}) (\forall e_o \in e(o)) (\forall e_{o'} \in e(o')) (\mathcal{M}_i(e_o) \cap \mathcal{M}_i(e_{o'}) \not= \emptyset \Rightarrow (o, e_o) = (o', e_{o'}))$
\item\label{property: M_{i+1} empty} $P_3(R, e, i)$: $(\depthof{R} = i \Rightarrow \mathcal{M}_{i}(e) = \emptyset)$
\item\label{property: m_{i, 0}} $P_4(R, e, i)$: $m_{i, 0}(e) = \Card{\boxesatzerogeq{R}{i}}$
\item\label{property: inclusion} $P_5(R, e, i)$: $\mathcal{M}_{i}(e) \subseteq \mathcal{M}_{i-1}(e)$
\item\label{property: description of M_i(e)} $P_6(R, e, i)$: $\mathcal{M}_{i}(e) = \log_k[\bigcup e^\#[\boxesatzerogeq{R}{i}]] \cup \bigcup_{o \in \boxesatzerogeq{R}{i+1}} \bigcup_{e_o \in e(o)} \mathcal{M}_{i}(e_o)$
\end{enumerate}
\end{lem}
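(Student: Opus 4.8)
The plan is to prove the six properties simultaneously by induction on $\depthof{R}$, reserving a secondary induction on $i$ for the digit-theoretic clauses $P_3$ and $P_5$. The base case $\depthof{R}=0$ is immediate: then $\boxes{R}=\emptyset$, so $\mathcal{M}_0(e)=\emptyset$, hence $\mathcal{M}_i(e)=\emptyset$ and $m_{i,j}(e)=0$ for all $i,j$, and every clause holds vacuously (note $\boxesatzerogeq{R}{i}=\emptyset$ too). For the inductive step I would first record that, for each $o\in\boxesatzero{R}$ and each $e_o\in e(o)$, the map $e_o$ is a $k$-heterogeneous pseudo-experiment on $B_R(o)$ with $\Card{\boxes{B_R(o)}}<\Card{\boxes{R}}<k$, so that the induction hypothesis applies to every pair $(B_R(o),e_o)$. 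I expect the verification that $e_o$ inherits $k$-heterogeneity — in particular the second clause of Definition~\ref{defin: k-heterogeneous}, which is stated only at the top level for $e$ — to be the most delicate bookkeeping point; it is what lets me read the global disjointness of $e^\#$ over $\boxes{R}$ as disjointness of the sets $\mathcal{M}_i(e_o)$.

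Next I would establish the structural recursion $P_6$, which is the backbone. Unfolding $\mathcal{M}_0(e)=\log_k[\bigcup_{o\in\boxes{R}}e^\#(o)]$ and using $e^\#(o,o')=\bigcup_{e_o\in e(o)}{e_o}^\#(o')$ splits $\mathcal{M}_0(e)$ into the top-level exponents $\log_k[\bigcup e^\#[\boxesatzero{R}]]$ and the recursive pieces $\bigcup_{o,e_o}\mathcal{M}_0(e_o)$; the boxes $o$ with $\depthof{B_R(o)}=0$ contribute nothing (their content has no boxes at all), which promotes $\boxesatzero{R}$ to $\boxesatzerogeq{R}{1}$ and gives $P_6$ for $i=0$. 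For $i>0$, $P_6$ follows by pushing this decomposition through the base-$k$ digit extraction defining $\mathcal{M}_i$ from $\mathcal{M}_{i-1}$, invoking the induction hypothesis on each $B_R(o)$ (its clauses $P_6$, and $P_3$, $P_5$ applied to the strictly smaller $B_R(o)$ to discard boxes whose content is too shallow). The disjointness statements $P_1$ and $P_2$ then come out of the same decomposition: clashes between distinct boxes are ruled out by the third clause of $k$-heterogeneity, clashes between distinct $e_o,e_{o'}\in e(o)$ by the second clause, and the remaining clashes by the induction hypothesis.

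The arithmetical heart is $P_4$. Using $P_6$ together with the disjointness just proved, I can write $\Card{\mathcal{M}_i(e)}=\Card{\boxesatzerogeq{R}{i}}+\sum_{o\in\boxesatzerogeq{R}{i+1}}\sum_{e_o\in e(o)}\Card{\mathcal{M}_i(e_o)}$ as a genuine disjoint union. For each $o\in\boxesatzerogeq{R}{i+1}$ one has $\Card{e(o)}=k^{j_o}$ with $j_o\geq 1$, and by the induction hypothesis $P_4$ the units digit $m_{i,0}(e_o)=\Card{\boxesatzerogeq{B_R(o)}{i}}$ is the \emph{same} for every $e_o$; hence $\sum_{e_o}\Card{\mathcal{M}_i(e_o)}\equiv k^{j_o}\cdot\Card{\boxesatzerogeq{B_R(o)}{i}}\equiv 0\pmod{k}$, so the whole double sum is a multiple of $k$. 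Since $\Card{\boxesatzerogeq{R}{i}}\leq\Card{\boxes{R}}<k$, the residue $\Card{\boxesatzerogeq{R}{i}}$ is exactly the units digit, which is $P_4$. This ``constant units digit times $k^{j_o}$ copies'' step is where the hypothesis $\Card{\boxes{R}}<k$ and the positivity $j_o\geq 1$ become essential, and I expect it to be the main obstacle — both because it leans on the inherited $P_4$ and because the disjointness of the decomposition cannot be weakened.

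Finally I would dispatch $P_5$ and $P_3$ by the secondary induction on $i$, reading off the definition $\mathcal{M}_{i}(e)=\{\,j>0 ; m_{i-1,j}(e)\neq 0\,\}$: $P_4$ pins the units digit, while $P_6$ lets me locate every higher nonzero base-$k$ digit of $\Card{\mathcal{M}_{i-1}(e)}$ inside $\mathcal{M}_{i-1}(e)$ itself, yielding $\mathcal{M}_i(e)\subseteq\mathcal{M}_{i-1}(e)$. For $P_3$, when $\depthof{R}=i$ the deep box-count contributions vanish through $P_6$, forcing first $\Card{\mathcal{M}_i(e)}<k$ and hence $\mathcal{M}_{i+1}(e)=\emptyset$, and then $\mathcal{M}_i(e)=\emptyset$ itself once the depth bound has propagated all the way down.
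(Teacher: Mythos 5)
Your plan is, in substance, the paper's own proof. The paper argues by complete induction on $\depthof{R}+i$; your nested induction (primary on depth, secondary on $i$) licenses exactly the instances it actually invokes, namely $(R,e,i-1)$, $(B_R(o),e_o,i-1)$ and $(B_R(o),e_o,i)$, and the six properties are bootstrapped in the same order: $P_6$ from the decomposition of $\boxes{R}$ into $\boxesatzero{R}$ and the nested boxes, $P_1$/$P_2$ from the second and third heterogeneity clauses plus the inner hypotheses, $P_4$ by precisely your mod-$k$ computation (the paper literally forms the term $\Card{\boxesatzerogeq{B_R(o)}{i}}\cdot\Card{e(o)}$ and uses $\Card{\boxesatzerogeq{R}{i}}<k$ to identify the units digit), and $P_5$/$P_3$ by reading base-$k$ digits. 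One small simplification: when $\depthof{R}=i$ both unions in $P_6(R,e,i)$ are empty, since $\boxesatzerogeq{R}{i}=\emptyset$, so $P_3$ drops out of $P_6$ in one line; no ``propagation'' argument is needed.

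The one place your plan would stall is the step you yourself single out: the ``verification'' that each $e_o\in e(o)$ is a $k$-heterogeneous pseudo-experiment on $B_R(o)$. This is not delicate bookkeeping; under the literal reading of Definition~\ref{defin: k-heterogeneous} it is \emph{false}, because the second clause constrains only the members of $e(o)$ for $o\in\boxesatzero{R}$ and says nothing about the members of $e_o(o')$ for a fixed $e_o$. Concretely, take $k=10$ and $\depthof{R}=3$ with a single chain of boxes $o$, $o'$, $o''$: let $\Card{e(o)}=10$, let the ten members $e_o^1,\dots,e_o^{10}$ of $e(o)$ take respectively $10^2,\dots,10^{11}$ copies of $o'$, and inside $e_o^1$ let two \emph{distinct} members $f_1\neq f_2$ of $e_o^1(o')$ both take $10^{13}$ copies of $o''$, all other copy numbers being pairwise distinct powers of $10$ in disjoint ranges. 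All three clauses hold for $e$ (the collision $f_1^\#(o'')=f_2^\#(o'')$ is invisible to them), yet $e_o^1$ violates the second clause on $B_R(o)$, and the lemma itself fails: $\Card{\mathcal{M}_0(e)}=1+10+99+10^3+\dots+10^{11}$, so $m_{0,0}(e)=0\neq 1=\Card{\boxesatzero{R}}$, contradicting $P_4(R,e,0)$. The paper never verifies inheritance either — it silently applies the induction hypothesis to $(B_R(o),e_o,i)$ — so the definition must be understood hereditarily: each $e_o\in e(o)$ is itself required to be $k$-heterogeneous (this is also what Corollary~\ref{cor: characterization of k-heterogeneous experiments} presupposes, since in the example above two distinct cocontractions of $\termofTaylor{R}{e}{0}$ share the arity $10^{13}$). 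Once you adopt that hereditary reading, inheritance is true by definition and your plan goes through exactly as the paper's proof does; if instead you try to derive it from the three stated clauses, you will fail.
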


\begin{proof}
By complete induction on $\depthof{R} + i$. First, if $i = 0 = \depthof{R}$, then $\mathcal{M}_{-1} = \emptyset = \mathcal{M}_0$, hence $P_1(R, e, i)$, $P_2(R, e, i)$, $P_3(R, e, i)$, $P_4(R, e, i)$, $P_5(R, e, i)$ and $P_6(R, e, i)$ hold trivially. Now, let $i \in \Nat$. Notice that if $P_6(R, e, i)$ holds, then $P_3(R,e, i)$ holds; moreover, if, in addition, $P_1(R, e, i)$ and $P_2(R, e, i)$ hold, then 
\begin{eqnarray*}
& & \Card{\mathcal{M}_{i}(e)}\\
& = & \Card{\boxesatzerogeq{R}{i}} + \sum_{o \in \boxesatzerogeq{R}{i+1}} \sum_{e_o \in e(o)} \Card{\mathcal{M}_{i}(e_o)}\allowdisplaybreaks\\
& = & \Card{\boxesatzerogeq{R}{i}} + \sum_{o \in \boxesatzerogeq{R}{i+1}} \sum_{e_o \in e(o)} \sum_{j \in \mathcal{M}_{i+1}(e_o) \cup \{ 0 \}} m_{i, j}(e_o) \cdot k^j\\
&  = & \Card{\boxesatzerogeq{R}{i}}\\
& + & \sum_{o \in \boxesatzerogeq{R}{i+1}} \left(\Card{\boxesatzerogeq{B_R(o)}{i}} \cdot \Card{e(o)} + \sum_{e_o \in e(o)} \sum_{j \in \mathcal{M}_{i+1}(e_o)} m_{i, j}(e_o) \cdot k^j \right)\\
& & \textit{(by $P_4(B_R(o), e_o, i)$ for each $o \in \boxesatzerogeq{R}{i+1}$ and each $e_o \in e(o)$)}
\end{eqnarray*}
We thus have $P_4(R, e, i)$. Moreover, if $i > 0$, then by applying $P_6(R, e, i-1)$, $P_6(R, e, i)$ and $P_5(B_R(o), e_o, i)$ for each $o \in \boxesatzerogeq{R}{i+1}$ and each $e_o \in o$, and by noticing the inclusion $\log_k[\bigcup e^\#[\boxesatzerogeq{R}{i}]] \subseteq \log_k[\bigcup e^\#[\boxesatzerogeq{R}{i-1}]]$, we obtain $P_5(R, e, i)$.

\begin{itemize}
\item Let us assume that $i = 0$. For any in-PS $R$, for any $k$-heterogeneous pseudo-experiment $e$ on $R$, since $\bigcup e^\#[\boxes{R}] \subseteq \{ k^j ; j > 0 \}$, we have $\Card{\mathcal{M}_{-1}(e)} = \sum \bigcup e^\#[\boxes{R}] = \sum_{j \in \log_k[\bigcup e^\#[\boxes{R}]]} k^j$, hence $\mathcal{M}_0(e) = \log_k[\bigcup e^\#[\boxes{R}]]$. 

Let $R$ be an in-PS such that $\Card{\boxes{R}} < k$ and let $e$ be a $k$-heterogeneous experiment on $R$.

Let $o, o' \in \boxesatzero{R}$ and let $e_{o} \in e(o)$. Let $j \in \mathcal{M}_0(e_{o})$. There exists $o'' \in \boxes{B_R(o)}$ such that $k^j \in {e_{o}}^\#(o'')$. Assume that $k^j \in e^\#(o')$. We have $e^\#((o, o'')) \cap e^\#(o') \not= \emptyset$; but, by the definition of \emph{$k$-heterogeneous experiment} (Definition~\ref{defin: k-heterogeneous}), this entails that $o' = (o, o'')$; we thus obtain a contradiction with the following requirement of the definition of in-PS's (Definition~\ref{defin: differential PS}): for any $(p_1, p_2) \in \portsatzero{R}$, we have $p_1 \notin \boxesatzero{R}$. We showed that $P_1(R, e, 0)$ holds.

Let $o, o' \in \boxesatzero{R}$. Let $e_o \in e(o)$ and $e_{o'} \in e(o')$ such that $\mathcal{M}_0(e_o) \cap \mathcal{M}_0(e_{o'}) \not= \emptyset$. Let $j \in \mathcal{M}_0(e_o) \cap \mathcal{M}_0(e_{o'})$. There exist $o_1 \in \boxes{B_R(o)}$ such that $k^j \in {e_o}^\#(o_1) = e^\#(o, o_1)$ and $o_2 \in \boxes{B_R(o')}$ such that $k^j \in {e_{o'}}^\#(o_2) = e^\#(o', o_2)$. By the definition of \emph{$k$-heterogeneous experiment}, we have $(o, o_1) = (o', o_2)$. We have ${e_o}^\#(o_1) \cap {e_{o'}}^\#(o_1) \not= \emptyset$, hence, again by the definition of \emph{$k$-heterogeneous experiment}, we obtain $e_o = e_{o'}$. We showed that $P_2(R, e, 0)$ holds.

We have
\begin{eqnarray*}
\mathcal{M}_0(e) & = & \log_k[\bigcup e^\#[\boxes{R}]]\\
& = & \log_k[\bigcup e^\#[\boxesatzero{R}]] \cup \bigcup_{o \in \boxesatzerogeq{R}{1}} \log_k[\bigcup e^\#[\{ o \} \times \boxesatzero{B_R(o)}]]\allowdisplaybreaks\\
& = & \log_k[\bigcup e^\#[\boxesatzero{R}]] \cup \bigcup_{o \in \boxesatzerogeq{R}{1}} \bigcup_{e_o \in e(o)} \log_k[\bigcup {e_o}^\#[\boxesatzero{B_R(o)}]]\\
& = & \log_k[\bigcup e^\#[\boxesatzero{R}]] \cup \bigcup_{o \in \boxesatzerogeq{R}{1}} \bigcup_{e_o \in e(o)} \mathcal{M}_0(e_o)
\end{eqnarray*}
hence $P_6(R, e, 0)$ holds. We already know that it follows that $P_4(R, e, 0)$ holds.

Let $j \in \mathcal{M}_0(e)$: We have $k^j \in \bigcup e^\#[\boxes{R}]$, hence $k^j \leq \sum \bigcup e^\#[\boxes{R}]$. Since $j < k^j$, we have $j < \sum \bigcup e^\#[\boxes{R}]$; so $j \in \mathcal{M}_{-1}(e)$. We thus proved $P_5(R, e, 0)$.

\item Let us assume that $i > 0$. Let $R$ be an in-PS such that $\Card{\boxes{R}} < k$ and let $e$ be a $k$-heterogeneous experiment on $R$. By $P_1(R, e, i-1)$ and $P_5(R, e, i-1)$, we have $P_1(R, e, i)$. By $P_2(R, e, i-1)$ and $P_5(R, e, i-1)$, we have $P_2(R, e, i)$. 
We have
\begin{eqnarray*}
& & \Card{\mathcal{M}_{i-1}(e)} \\
& = & \Card{\boxesatzerogeq{R}{i-1}} + \sum_{o \in \boxesatzerogeq{R}{i}} \sum_{e_o \in e(o)} \Card{\mathcal{M}_{i-1}(e_o)} \\
& & \textit{(by $P_6(R, e, i-1)$, $P_1(R, e, i-1)$ and $P_2(R, e, i-1)$)}\allowdisplaybreaks\\
& = & \Card{\boxesatzerogeq{R}{i-1}} + \sum_{o \in \boxesatzerogeq{R}{i}} \sum_{e_o \in e(o)} \left(m_{i-1, 0}(e_o) + \sum_{j \in \mathcal{M}_{i}(e_o)} m_{i-1, j}(e_o) \cdot k^j \right)\allowdisplaybreaks\\
& = & \Card{\boxesatzerogeq{R}{i-1}} + \sum_{o \in \boxesatzerogeq{R}{i}} \sum_{e_o \in e(o)} \left(\Card{\boxesatzerogeq{B_R(o)}{i-1}} + \sum_{j \in \mathcal{M}_{i}(e_o)} m_{i-1, j}(e_o) \cdot k^j \right)\\
& & \textit{(by $P_4(B_R(o), e_o, i-1)$ for each $o \in \boxesatzerogeq{R}{i}$ and each $e_o \in e(o)$)} \allowdisplaybreaks \\
& = & \Card{\boxesatzerogeq{R}{i-1}}\\
& + & \sum_{o \in \boxesatzerogeq{R}{i}} \left( \Card{\boxesatzerogeq{B_R(o)}{i-1}} \cdot \Card{e(o)} + \sum_{e_o \in e(o)} \sum_{j \in \mathcal{M}_{i}(e_o)} m_{i-1, j}(e_o) \cdot k^j\right) \allowdisplaybreaks \\
& = & \Card{\boxesatzerogeq{R}{i-1}} +\sum_{o \in \boxesatzerogeq{R}{i}} \Card{\boxesatzerogeq{B_R(o)}{i-1}} \cdot \sum e^\#(o) \\
& & + \sum_{o \in \boxesatzerogeq{R}{i+1}} \sum_{e_o \in e(o)} \sum_{j \in \mathcal{M}_{i}(e_o)} m_{i-1, j}(e_o) \cdot k^j \\
& & \textit{(by $P_3(B_R(o), e_o, i)$ for each $o \in \exactboxesatzero{R}{i}$ and each $e_o \in e(o)$)}
\end{eqnarray*}
By $P_1(R, e, i)$ and $P_2(R, e, i)$, we have $\{ j > 0 ; m_{i-1, j}(e) \not= 0 \} = \bigcup_{o \in \boxesatzerogeq{R}{i}} \{ j \in \Nat ; k^j = \sum e^\#(o) \} \cup \bigcup_{o \in \boxesatzerogeq{R}{i+1}} \bigcup_{e_o \in e(o)} \mathcal{M}_{i}(e_o)$, hence $P_6(R, e, i)$ holds.
 \qedhere
\end{itemize}
\end{proof}

\begin{lem}\label{lem: M_i}
Let $R$ be an in-PS. 
Let $k > \Card{\boxes{R}{}}$. 
For any $k$-heterogeneous pseudo-experiment $e$ on $R$, for any $i \in \Nat$, we have $\mathcal{M}_i(e) = \log_k[\bigcup e^\#[\boxesgeq{R}{i}]]$, hence $\mathcal{N}_i(e) = \log_k[\bigcup e^\#[\exactboxes{R}{i}]]$ and $(\isaCopyfrom{R}{e}{i} \circ \cod_{e, i})[\mathcal{N}_i(e)] = \exactboxes{R}{i}$.
\end{lem}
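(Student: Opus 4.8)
The plan is to prove the three assertions in turn: the central equality $\mathcal{M}_i(e) = \log_k[\bigcup e^\#[\boxesgeq{R}{i}]]$ by induction on $\depthof{R}$, and the claims about $\mathcal{N}_i(e)$ and about $\isaCopyfrom{R}{e}{i} \circ \cod_{e, i}$ as consequences. Throughout I would write $e_o$ for the elements of $e(o)$; these are $k$-heterogeneous pseudo-experiments on $B_R(o)$ with $\Card{\boxes{B_R(o)}} \leq \Card{\boxes{R}} < k$, so that both Lemma~\ref{lem: M_i: aux} and the induction hypothesis apply to each pair $(B_R(o), e_o)$.

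For the central equality, the key observation is that the right-hand side obeys exactly the recursion that $P_6(R, e, i)$ (Lemma~\ref{lem: M_i: aux}) gives for the left-hand side. Unfolding the definition of $\mathcal{B}^{\geq i}$, one has $\boxesgeq{R}{i} = \boxesatzerogeq{R}{i} \cup \bigcup_{o \in \boxesatzero{R}} \{ (o, o') ; o' \in \boxesgeq{B_R(o)}{i} \}$, where only the boxes $o$ with $\depthof{B_R(o)} \geq i+1$ contribute a nonempty inner set. Since $e^\#(o, o') = \bigcup_{e_o \in e(o)} {e_o}^\#(o')$, pushing $\log_k$ through the nested unions yields
\begin{align*}
\log_k[\bigcup e^\#[\boxesgeq{R}{i}]] ={} & \log_k[\bigcup e^\#[\boxesatzerogeq{R}{i}]]\\
& \cup \bigcup_{o \in \boxesatzerogeq{R}{i+1}} \bigcup_{e_o \in e(o)} \log_k[\bigcup {e_o}^\#[\boxesgeq{B_R(o)}{i}]].
\end{align*}
By the induction hypothesis applied to $(B_R(o), e_o)$, each inner term equals $\mathcal{M}_i(e_o)$, so the whole expression becomes the right-hand side of $P_6(R, e, i)$ and therefore equals $\mathcal{M}_i(e)$. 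The base case $\depthof{R} = 0$ is subsumed, the recursive term being empty.

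The second assertion follows by instantiating the first at $i$ and at $i+1$. Writing $\boxesgeq{R}{i} = \exactboxes{R}{i} \cup \boxesgeq{R}{i+1}$ as a disjoint union, the third clause of Definition~\ref{defin: k-heterogeneous} (distinct boxes have disjoint images under $e^\#$) makes $\bigcup e^\#[\exactboxes{R}{i}]$ and $\bigcup e^\#[\boxesgeq{R}{i+1}]$ disjoint sets of powers of $k$; since $\log_k$ is injective on such powers, taking the set difference gives $\mathcal{N}_i(e) = \mathcal{M}_i(e) \setminus \mathcal{M}_{i+1}(e) = \log_k[\bigcup e^\#[\exactboxes{R}{i}]]$. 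For the last assertion, Lemma~\ref{lem: bijection cod{e, i}} provides the bijection $\cod_{e, i}$ defined on $\mathcal{M}_i(e)$ with $k^j \in (e^\# \circ \isaCopyfrom{R}{e}{i} \circ \cod_{e, i})(j)$; thus $(\isaCopyfrom{R}{e}{i} \circ \cod_{e, i})(j)$ is the box of $R$ whose image under $e^\#$ contains $k^j$, which by the disjointness clause is unique. Restricting $j$ to $\mathcal{N}_i(e) = \log_k[\bigcup e^\#[\exactboxes{R}{i}]]$ forces that box to have depth exactly $i$; conversely every box of depth $i$ has nonempty $e^\#$ (the first clause of Definition~\ref{defin: k-heterogeneous} rules out the value $0$) and is hence reached. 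This gives $(\isaCopyfrom{R}{e}{i} \circ \cod_{e, i})[\mathcal{N}_i(e)] = \exactboxes{R}{i}$.

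The main obstacle is the inductive step: recognizing that the closed form $\log_k[\bigcup e^\#[\boxesgeq{R}{i}]]$ satisfies the very recursion encapsulated in $P_6$, and organizing the commutations of $\log_k$ with the nested box-level unions and with the identity $e^\#(o, o') = \bigcup_{e_o \in e(o)} {e_o}^\#(o')$ so that the induction hypothesis slots in cleanly. Once this box-tree bookkeeping is in place the matching is immediate, and the two corollaries rest entirely on the disjointness clause of $k$-heterogeneity together with the injectivity of $\log_k$ and the bijection of Lemma~\ref{lem: bijection cod{e, i}}.
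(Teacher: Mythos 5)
Your proof is correct and takes essentially the same route as the paper's: induction on $\depthof{R}$ combining property $P_6$ of Lemma~\ref{lem: M_i: aux} with the identity $e^\#(o, o') = \bigcup_{e_o \in e(o)} {e_o}^\#(o')$ and the commutation of $\log_k$ with unions, then the set-difference computation for $\mathcal{N}_i(e)$, and finally the property $k^j \in (e^\# \circ \isaCopyfrom{R}{e}{i} \circ \cod_{e, i})(j)$ for the last claim. The only differences are presentational: you run the computation from the right-hand side toward the $P_6$ recursion rather than expanding $\mathcal{M}_i(e)$, and you spell out the disjointness and uniqueness arguments (via the third clause of Definition~\ref{defin: k-heterogeneous}) that the paper leaves implicit.
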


\begin{proof}
By induction on $\depthof{R}$. We have
\begin{eqnarray*}
\mathcal{M}_i(e) & = & \log_k[\bigcup e^\#[\boxesatzerogeq{R}{i}]] \cup \bigcup_{o \in \boxesatzerogeq{R}{i+1}} \bigcup_{e_o \in e(o)} \mathcal{M}_{i}(e_o)\\
& & \textit{(by Lemma~\ref{lem: M_i: aux})} \allowdisplaybreaks\\
& = & \log_k[\bigcup e^\#[\boxesatzerogeq{R}{i}]] \cup \bigcup_{o \in \boxesatzerogeq{R}{i+1}} \bigcup_{e_o \in e(o)} \log_k[\bigcup {e_o}^\#[\boxesgeq{B_R(o)}{i}]]\\
& & \textit{(by the induction hypothesis)} \allowdisplaybreaks\\
& = & \log_k[\bigcup e^\#[\boxesatzerogeq{R}{i}] \cup \bigcup_{o \in \boxesatzerogeq{R}{i+1}} \bigcup_{e_o \in e(o)} \bigcup {e_o}^\#[\boxesgeq{B_R(o)}{i}]]\allowdisplaybreaks\\
& = & \log_k[\bigcup e^\#[\boxesatzerogeq{R}{i}] \cup \bigcup_{o \in \boxesatzerogeq{R}{i+1}} \bigcup_{e_o \in e(o)} \bigcup {e}^\#[\{ o \} \times  \boxesgeq{B_R(o)}{i}]]\allowdisplaybreaks\\
& = & \log_k[\bigcup e^\#[\boxesatzerogeq{R}{i} \cup \bigcup_{o \in \boxesatzerogeq{R}{i+1}} \bigcup_{e_o \in e(o)} (\{ o \} \times  \boxesgeq{B_R(o)}{i})]]\allowdisplaybreaks\\
& = & \log_k[\bigcup e^\#[\boxesgeq{R}{i}]]
\end{eqnarray*}
Hence 
\begin{eqnarray*}
\mathcal{N}_i(e) & = & \mathcal{M}_{i}(e) \setminus \mathcal{M}_{i+1}(e)\\
& = & \log_k[\bigcup e^\#[\boxesgeq{R}{i}]] \setminus \log_k[\bigcup e^\#[\boxesgeq{R}{i+1}]]\\
& = & \log_k[\bigcup e^\#[\exactboxes{R}{i}]]
\end{eqnarray*}
Moreover, since $(\forall j \in \mathcal{N}_i(e)) k^j \in (e^\# \circ \isaCopyfrom{R}{e}{i} \circ \cod_{e, i})(j)$, we obtain $(\isaCopyfrom{R}{e}{i} \circ \cod_{e, i})[\mathcal{N}_i(e)] = \exactboxes{R}{i}$.
\end{proof}

\begin{exa}~\label{example: N_i(e) - 2}
(Continuation of Example~\ref{example: N_i(e)}) We thus have $\mathcal{M}_1(e) = \{ 1, 2 \}$ and \linebreak[4] $\portsatzerooftype{\cod}{\termofTaylor{R}{e}{i}} \setminus \boxesatzero{\termofTaylor{R}{e}{1}} = \linebreak[2] \{ o_2, o_4 \}$ with $\arity{\termofTaylor{R}{e}{1}}(o_2)$ $=$ $10^1$ and $\arity{\termofTaylor{R}{e}{1}}(o_4)$ $=$ $10^2$ (see Figures~\ref{fig: termofTaylor{R}{e}{1}: 1},~\ref{fig: termofTaylor{R}{e}{1}: 2} and~\ref{fig: termofTaylor{R}{e}{1}: 3} - we recall that $\termofTaylor{R}{e}{1} = S1_1 \oplus S1_2 \oplus S1_3$).
\end{exa}

\subsubsection{Determining the contractions immediately below the new boxes}

The set $\criticalports{S}{k}{\mathcal{N}_i(e)}$ of ``critical ports'' is a set of exponential ports that will play a crucial role in our algorithm.

\begin{defi}\label{definition: critical}
Let $S$ be a differential in-PS. Let $ k > 1$. For any $p \in \portsatzero{S}$, we define the sequence $(m_{k, j}(S)(p))_{j \in \Nat} \in \{ 0, \ldots, k-1 \}^{\Nat}$ as follows: $\arity{S}(p) = \sum_{j \in \Nat} m_{k, j}(S)(p) \cdot k^j$. For any $j \in \Nat$, we set $\criticalports{S}{k}{j} = $ $\{ p \in \portsatzero{S} ; m_{k, j}(S)(p) \not= 0 \} \cap \exponentialports{\groundof{S}}$ and, for any $J \subseteq \Nat$, we set $\criticalports{S}{k}{J} =$  $\bigcup_{j \in J} \criticalports{S}{k}{j}$. 
\end{defi}

In particular, for any $j \in \mathcal{M}_i(e)$, we have $\cod_{e, i}(j) \in \criticalports{\termofTaylor{R}{e}{i}}{k}{j}$.

\begin{exa}
We have $\criticalports{\termofTaylor{R}{e}{1}}{10}{1} = \{ p_1, p_4, p_5, p_6, p_7, o_2 \}$ and $\criticalports{\termofTaylor{R}{e}{1}}{10}{2} = \{ p_4, $ $p_5, $ $p_6, $ $p_7, $ $o_4 \}$, where $\termofTaylor{R}{e}{1} = S1_1 \oplus S1_2 \oplus S1_3$ with $S1_1$, $S1_2$ and $S1_3$ depicted in Figures~\ref{fig: termofTaylor{R}{e}{1}: 1},~\ref{fig: termofTaylor{R}{e}{1}: 2} and~\ref{fig: termofTaylor{R}{e}{1}: 3} respectively. So we have $\criticalports{\termofTaylor{R}{e}{1}}{10}{\{ 1, 2\}} = \{ p_1, p_4, p_5, p_6, p_7, o_2, o_4 \}$.
\end{exa}

Critical ports are defined by their arities. Proposition~\ref{prop: critical ports below new boxes} shows that they are exponential ports that are immediately below the ``new'' boxes. 

In particular, this proposition highlights one more essential difference between the $k$-experiments of \cite{phdtortora, injectcoh, LPSinjectivity, k=2} and our $k$-heterogeneous experiments. There, such a $k$-experiment labelling some contraction $p$ with a multiset of cardinality $\sum_j m_j \cdot k^j$ (where $0 \leq m_j < k$ for any $j$) gives the information that immediately above the contraction $p$ there are exactly $m_{j_0}$ series of exactly $j_0$ auxiliary doors. Here, whenever a $k$-heterogeneous experiment labels some contraction $p$ with a multiset of cardinality $\sum_j m_j \cdot k^j$ (where $0 \leq m_j < k$ for any $j$), the integer $j_0$ is not related to the number of auxiliary doors in series anymore; it corresponds, in the case $m_{j_0} > 0$, with the existence of a box that has an occurrence taking $k^{j_0}$ copies of its content, the box having, among all its auxiliary doors, exactly $m_{j_0}$ auxiliary doors that are, each of them, \emph{the first one} (i.e. the deepest one) of a series of auxiliary doors immediately above the contraction $p$. 

By the way, the constraints on the experiments are completely different: The constraints in \cite{phdtortora, injectcoh, LPSinjectivity, k=2} give a lower bound on the arities of the co-contractions, while the constraints here are on the \emph{basis} $k$. For instance, in the case of Figure~\ref{fig: basis}, the co-size is $< 100$ and there are only $2$ boxes, but still it is not enough to consider a $10$-heterogeneous experiments with powers $\geq 2$: by taking an experiment $e$ with $e^\#(o_1) = 10^3$ and $e^\#(o_2) = 10^2$, we get a contraction of arity $9800 = 9 \cdot 10^3 + 8 \cdot 10^2$ in the corresponding term of the Taylor expansion, which, following our decomposition, would correspond to the PS of Figure~\ref{fig: basis2}.

\begin{figure}
\begin{minipage}{0.45\textwidth}
\centering
\resizebox{\textwidth}{!}{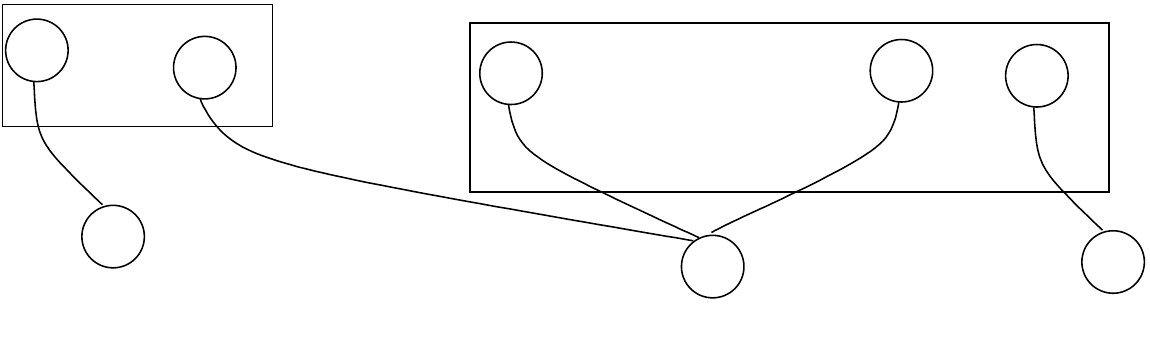}
\captionsetup{width=.9 \textwidth}
\caption{Lower bounding the arities of the co-contractions is not enough}
\label{fig: basis}
\end{minipage}\hfill
\begin{minipage}{0.45\textwidth}
\centering
\resizebox{\textwidth}{!}{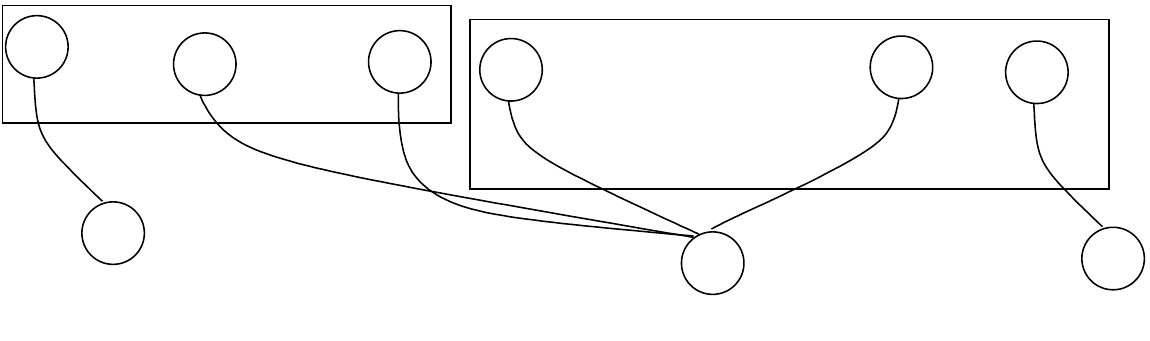}
\captionsetup{width=.9 \textwidth}
\caption{What our decomposition would provide}
\label{fig: basis2}
\end{minipage}\hfill
\end{figure}

\begin{exa}\label{example: outline of boxes}
(Continuation of Example~\ref{example: N_i(e) - 2}) We thus have $!_{f, 1}[\mathcal{N}_1(f)]$ $= \{ o_2, o_4 \}$; and indeed $o_2$ and $o_4$ are the boxes of depth $1$ at depth $0$ of $\termofTaylor{R}{e}{2} \equiv R$ (see Figure~\ref{fig: new_example}). Moreover we have $\criticalports{\termofTaylor{R}{e}{1}}{10}{1} = \{ p_1, p_4, p_5, p_6, p_7, o_2 \}$ and $\criticalports{\termofTaylor{R}{e}{1}}{10}{2} = \{ p_4, p_5, p_6, p_7, o_4 \}$; and indeed, in Figure~\ref{fig: new_example}, we have $\target{\termofTaylor{R}{e}{2}}[\{ o_2 \} \times \temporaryConclusions{\termofTaylor{R}{e}{2}}{o_2}] =$ $\{ p_1, p_4, p_5, p_6, p_7, o_2 \}$ and $\target{\termofTaylor{R}{e}{2}}[\{ o_4 \} \times \temporaryConclusions{\termofTaylor{R}{e}{2}}{o_4}] =$ $\{ p_4, p_5, p_6, p_7, o_4 \}$.
\end{exa}

\begin{defi}
Let $R$ be an in-PS. For any $p \in \portsatzero{R}$, for any $i \in \Nat$, we define a subset $\mathcal{B}_R^{\geq i}(p)$ of $\boxesgeq{R}{i}$: we set $\mathcal{B}_R^{\geq i}(p) = \{ b_R^{\geq i}(q) ; (q \in \portsatdepthgreater{R}{i} \wedge \target{R}(q) = p) \}$.
\end{defi}

A crucial lemma is the following one:

\begin{lem}\label{lem: arity{taylor{e}{i}}(c) for c at depth 0}
Let $R$ be an in-PS. Let $p \in \portsatzero{R}$. Let $k > \cosize{R}$. Let $e$ be a $k$-heterogeneous pseudo-experiment on $R$.  Let $i \in \Nat$. Let $p \in \portsatzero{R}$. Then, for any $j > 0$, we have $p \in \criticalports{\termofTaylor{R}{e}{i}}{k}{j}$ if, and only if, there exists $o \in \mathcal{B}_R^{\geq i}(p)$ such that $k^j \in e^\#(o)$. Moreover we have $\arity{\termofTaylor{R}{e}{i}}(p) \textit{ mod } k = \arity{R^{\leq  i}}(p)$.
\end{lem}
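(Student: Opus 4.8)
The plan is to read off the base-$k$ digits of $\arity{\termofTaylor{R}{e}{i}}(p)$, for which the only non-trivial point is controlling carries. First I would apply Proposition~\ref{prop: arity}, which gives
\[ \arity{\termofTaylor{R}{e}{i}}(p) = \arity{R^{\leq i}}(p) + \sum_{\substack{p' \in \portsatdepthgreater{R}{i}\\ \target{R}(p') = p}} \sum e^\#(b_R^{\geq i}(p')), \]
and regroup the right-hand sum according to the box $b_R^{\geq i}(p') \in \boxesgeq{R}{i}$. Setting $c_o = \Card{\{ p' \in \portsatdepthgreater{R}{i} ; \target{R}(p') = p \wedge b_R^{\geq i}(p') = o \}}$ for each $o \in \mathcal{B}_R^{\geq i}(p)$ (so that $c_o \geq 1$, since $\mathcal{B}_R^{\geq i}(p)$ is exactly the set of boxes that occur), this reads
\[ \arity{\termofTaylor{R}{e}{i}}(p) = \arity{R^{\leq i}}(p) + \sum_{o \in \mathcal{B}_R^{\geq i}(p)} c_o \cdot \sum e^\#(o). \]

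Next I would exploit $k$-heterogeneity. Each $e^\#(o)$ consists only of powers $k^{j'}$ with $j' > 0$, so $\sum e^\#(o) = \sum_{j' : k^{j'} \in e^\#(o)} k^{j'}$; and the third clause of Definition~\ref{defin: k-heterogeneous} makes the sets $e^\#(o)$ pairwise disjoint over $o \in \boxes{R}$, so for each power $k^{j'}$ there is at most one box $o$ with $k^{j'} \in e^\#(o)$. Hence, collecting the coefficient of $k^{j'}$ in the displayed expression, the pre-carry digit at position $j' > 0$ equals $c_{o}$ for the unique box $o$ with $k^{j'} \in e^\#(o)$ whenever that box lies in $\mathcal{B}_R^{\geq i}(p)$, and equals $0$ otherwise; the digit at position $0$ is $\arity{R^{\leq i}}(p)$. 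The key estimate is that all these digits are already $< k$: indeed $\arity{R^{\leq i}}(p) \leq \arity{R}(p) \leq \cosize{R} < k$ (as $\target{R^{\leq i}}$ is a restriction of $\target{R}$), and likewise each multiplicity $c_o$ is bounded by the number of premises of $p$, hence by $\cosize{R} < k$. Therefore no carry occurs and the displayed expression is literally the base-$k$ writing of $\arity{\termofTaylor{R}{e}{i}}(p)$.

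From this both claims follow immediately. Since the $k^0$-digit is $\arity{R^{\leq i}}(p) < k$ and every other contribution is a multiple of $k$, we get $\arity{\termofTaylor{R}{e}{i}}(p) \bmod k = \arity{R^{\leq i}}(p)$. For $j > 0$, the digit $m_{k, j}(\termofTaylor{R}{e}{i})(p)$ is non-zero if and only if some $o \in \mathcal{B}_R^{\geq i}(p)$ satisfies $k^j \in e^\#(o)$; it remains to match this with membership in $\criticalports{\termofTaylor{R}{e}{i}}{k}{j}$, which additionally requires $p$ to be an exponential port. This is automatic: $\target{R}$ takes its values in contractions and boxes only, so a non-exponential $p$ is the target of no port and $\mathcal{B}_R^{\geq i}(p) = \emptyset$, making both sides of the equivalence false.

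The main obstacle is precisely the no-carry argument of the second paragraph: it is what forces the two hypotheses to be used in tandem, disjointness of the $e^\#(o)$ (from $k$-heterogeneity) to ensure at most one box feeds each digit, together with $k > \cosize{R}$ to keep every individual digit below $k$. The only genuinely delicate verification is the bound $c_o < k$, i.e. that the number of deep premises of a fixed contraction coming through a single box does not exceed the co-size; this uses that distinct such ports are distinct premises of $p$, all of whose arities are dominated by $\cosize{R}$.
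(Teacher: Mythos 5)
Your proof is correct and takes essentially the same route as the paper's: apply Proposition~\ref{prop: arity}, then read off the base-$k$ digits of $\arity{\termofTaylor{R}{e}{i}}(p)$, with $k > \cosize{R}$ ruling out carries. The only differences are cosmetic: the paper bounds each digit directly by $\Card{\{ p' \in \portsatdepthgreater{R}{i} ; \target{R}(p') = p \}} < k$, so the disjointness clause of $k$-heterogeneity that you invoke is not actually needed (only that every multiplicity is a power $k^j$ with $j > 0$), and you additionally spell out the exponential-port requirement in the definition of $\criticalports{\termofTaylor{R}{e}{i}}{k}{j}$, which the paper leaves implicit.
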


\begin{proof}
By Proposition~\ref{prop: arity}, we have $$\arity{\termofTaylor{R}{e}{i}}(p) = \arity{R^{\leq i}}(p) + \sum_{\substack{p' \in \portsatdepthgreater{R}{i}\\ \target{R}(p') = p}} \sum e^\#(b_R^{\geq i}(p'))$$ Moreover  we have
\begin{itemize}
\item $\arity{R^{\leq i}}(p) < k$;
\item $\Card{\{ p' \in \portsatdepthgreater{R}{i} ; \target{R}(p') = p \}} < k$;
\item and $(\forall p' \in \portsatdepthgreater{R}{i}) (\forall m \in e^\#(b_R^{\geq i}(p'))) (\exists j > 0) k^j \in e^\#(b_R^{\leq i}(p'))$.
\end{itemize}
Hence
\begin{itemize}
\item $\arity{\termofTaylor{R}{e}{i}}(p) \textit{ mod } k = \arity{R^{\leq  i}}(p)$
\item and $p \in \criticalports{\termofTaylor{R}{e}{i}}{k}{j}$ if, and only if, $(\exists p' \in \portsatdepthgreater{R}{i}) (\target{R}(p') = p \wedge k^j \in e^\#(b_R^{\geq i}(p')))$, i.e. $(\exists o \in \mathcal{B}_R^{\geq i}(p)) k^j \in e^\#(o)$.
 \qedhere
\end{itemize}
\end{proof}

\begin{fact}\label{fact: critical ports}
Let $R$ be an in-PS. Let $e$ be a pseudo-experiment on $R$. Let $i \in \Nat$. Let $o \in \exactboxesatzero{\termofTaylor{R}{e}{i+1}}{i}$. Let $q \in \portsatdepthgreater{R}{i}$. Then we have $b_R^{\geq i}(q) = \isaCopyfrom{R}{e}{i+1}(o)$ if, and only if, there exists $q' \in \ports{B_{\termofTaylor{R}{e}{i+1}}(o)}$ such that $q = \isaCopyfrom{R}{e}{i+1}(o, q')$.
\end{fact}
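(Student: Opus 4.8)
The plan is to prove both implications at once by induction on $\depthof{R}$, after observing that each side of the equivalence is equivalent to the single geometric condition that $q$ is a copy of a port lying in the content of the box $\hat{o} := \isaCopyfrom{R}{e}{i+1}(o)$ of $R$. So I would set $\hat{o} = \isaCopyfrom{R}{e}{i+1}(o)$ and show that both ``$b_R^{\geq i}(q) = \hat{o}$'' and ``$q = \isaCopyfrom{R}{e}{i+1}(o, q')$ for some $q'$'' amount to $q \in \{ \hat{o} \} \times \ports{B_R(\hat{o})}$.

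First I would record the two structural facts underlying the whole argument. Since $o \in \exactboxesatzero{\termofTaylor{R}{e}{i+1}}{i}$, its content $B_{\termofTaylor{R}{e}{i+1}}(o)$ has depth exactly $i$; because $i < i+1$, no box inside $\hat{o}$ is expanded when forming $\termofTaylor{R}{e}{i+1}$, so $\hat{o} \in \exactboxes{R}{i}$, i.e. $\depthof{B_R(\hat{o})} = i$, and the restriction of $\isaCopyfrom{R}{e}{i+1}$ to $\{ o \} \times \ports{B_{\termofTaylor{R}{e}{i+1}}(o)}$ is a bijection onto $\{ \hat{o} \} \times \ports{B_R(\hat{o})}$. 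This last claim is exactly where the induction is used, with a case split according to whether $o$ is a surviving box ($o \in \boxesatzerosmaller{R}{i+1}$), where $\isaCopyfrom{R}{e}{i+1}$ acts as the identity and $B_{\termofTaylor{R}{e}{i+1}}(o) = B_R(o)$, or a copy $o = (o_1, (e_1, o'))$ with $o_1 \in \boxesatzerogeq{R}{i+1}$, where the third clause of the definition of $\isaCopyfrom{R}{e}{i+1}$ reduces everything to the inductive hypothesis applied to $B_R(o_1)$ and $e_1$.

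Next I would translate the left-hand side. From $\depthof{B_R(\hat{o})} = i$ it follows that $\portsatdepthgreater{B_R(\hat{o})}{i} = \emptyset$, hence $\portsatdepthleq{B_R(\hat{o})}{i} = \ports{B_R(\hat{o})}$; moreover every box of $R$ strictly containing $\hat{o}$ has content-depth $> i$. Unfolding the recursive definition of $b_R^{\geq i}$ along the box sequence of $q$, the descent passes through every ancestor of $\hat{o}$ (each has content-depth $\geq i$ and leaves a remaining port of depth $> i$) and then stops precisely at $\hat{o}$ as soon as the remaining port lies in $B_R(\hat{o})$. Thus $b_R^{\geq i}(q) = \hat{o}$ holds if and only if $q \in \{ \hat{o} \} \times \ports{B_R(\hat{o})}$. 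Combined with the bijection of the previous paragraph, which gives that the right-hand side holds if and only if $q$ belongs to the \emph{same} set $\{ \hat{o} \} \times \ports{B_R(\hat{o})}$, the two conditions coincide and the equivalence follows.

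The main obstacle I expect is the bookkeeping in the copy case: one must verify that the encoding of copied ports produced by $\isaCopyfrom{R}{e}{i+1}$ is surjective onto the interior of $\hat{o}$ (so that no interior port of $\hat{o}$ is missed), and that the content-depth of $\hat{o}$ really is exactly $i$ rather than merely $\leq i$ — the latter being what guarantees that the $b_R^{\geq i}$-descent bottoms out at $\hat{o}$ and not at some strictly deeper box of depth $\geq i$. Both points are routine consequences of the recursive definition of $\termofTaylor{R}{e}{i+1}$ together with Fact~\ref{fact: conclusions}, but they are the places where the argument could silently go wrong.
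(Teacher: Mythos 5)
Your proposal is correct and is essentially the paper's own proof: the paper also argues by induction on $\depthof{R}$ with exactly your case split --- $o$ a surviving box (i.e.\ $o \in \exactboxesatzero{R}{i}$, where $\isaCopyfrom{R}{e}{i+1}$ acts as the identity and $B_{\termofTaylor{R}{e}{i+1}}(o) = B_R(o)$) versus $o = (o_1, (e_1, o'))$ a copy coming from some $o_1 \in \boxesatzerogeq{R}{i+1}$, handled through the third clause of the definition of $\isaCopyfrom{R}{e}{i+1}$ and the induction hypothesis, for each of the two implications. Your only departure is organizational: you factor both sides through the intermediate condition that $q$ lies in the interior of the box $\isaCopyfrom{R}{e}{i+1}(o)$ of $R$ and analyze the descent of $b_R^{\geq i}$ globally, whereas the paper unfolds $b_R^{\geq i}$ and $\isaCopyfrom{R}{e}{i+1}$ one nesting level at a time inside the same induction.
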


\begin{proof}
By induction on $\depthof{R}$. We assume that $b_R^{\geq i}(q) = \isaCopyfrom{R}{e}{i+1}(o)$ and we distinguish between two cases:
\begin{itemize}
\item $o \in \exactboxesatzero{R}{i}$: we have $\isaCopyfrom{R}{e}{i+1}(o) = o$, hence there exists $q' \in \portsatdepthleq{B_R(o)}{i}$ such that $q = (o, q')$; but $B_R(o) = B_{\termofTaylor{R}{e}{i+1}}(o)$;
\item $o = (o_1, (e_1, o'))$ for some $o_1 \in \boxesatzerogeq{R}{i}$, $e_1 \in e(o_1)$ and $o' \in \exactboxesatzero{\termofTaylor{B_R(o_1)}{i}{e_1}}{i}$: we have $\isaCopyfrom{R}{e}{i+1}(o) = (o_1, \isaCopyfrom{B_R(o_1)}{e_1}{i+1}(o'))$, hence there exists $q_0 \in \portsatdepthgreater{B_R(o_1)}{i}$ such that $q = (o_1, q_0)$ and $b_{B_R(o_1)}^{\geq i}(q_0) = \isaCopyfrom{B_R(o_1)}{e_1}{i+1}(o')$; by induction hypothesis, there exists $q' \in \ports{B_{\termofTaylor{B_R(o_1)}{e_1}{i+1}}(o')}$ such that $q_0 = \isaCopyfrom{B_R(o_1)}{e_1}{i+1}(o_1, q')$; but $B_{\termofTaylor{B_R(o_1)}{e_1}{i+1}}(o') = B_{\termofTaylor{R}{e}{i+1}}(o)$.
\end{itemize}
Conversely, we assume that there exists $q' \in \ports{B_{\termofTaylor{R}{e}{i+1}}(o)}$ such that $q = \isaCopyfrom{R}{e}{i+1}(o, q')$ and we distinguish between two cases:
\begin{itemize}
\item $o \in \exactboxesatzero{R}{i}$: we have $b_R^{\geq i}(\isaCopyfrom{R}{e}{i+1}(o, q')) = b_R^{\geq i}(o, q') = o = \isaCopyfrom{R}{e}{i+1}(o)$;
\item $o = (o_1, (e_1, o'))$ for some $o_1 \in \boxesatzerogeq{R}{i}$, $e_1 \in e(o_1)$ and $o' \in \exactboxesatzero{\termofTaylor{B_R(o_1)}{i}{e_1}}{i}$: we have $b_R^{\geq i}(\isaCopyfrom{R}{e}{i+1}(o, q')) = b_R^{\geq i}(o_1, (\isaCopyfrom{B_R(o_1)}{e_1}{i+1}(o', q'))) = (o_1, b_{B_R(o_1)}^{\geq i}(\isaCopyfrom{B_R(o_1)}{e_1}{i+1}(o', q')))$; since $B_{\termofTaylor{R}{e}{i+1}}(o) = B_{\termofTaylor{B_R(o_1)}{e_1}{i+1}}(o')$, we can apply the induction hypothesis and we thus obtain $b_R^{\geq i}(\isaCopyfrom{R}{e}{i+1}(o, q')) = (o_1, \isaCopyfrom{B_R(o_1)}{e_1}{i+1}(o')) = \isaCopyfrom{R}{e}{i+1}(o)$.
 \qedhere
\end{itemize}
\end{proof}

\begin{prop}\label{prop: critical ports below new boxes}
Let $R$ be an in-PS. 
Let $k > \Card{\boxes{R}{}}, \cosize{R}$. Let $e$ be a $k$-heterogeneous pseudo-experiment on $R$ and let $i \in \Nat$. Then we have $\exactboxesatzero{\termofTaylor{R}{e}{i+1}}{i} = !_{e, i}[\mathcal{N}_{i, e}]$. Moreover, the set $\criticalports{\termofTaylor{R}{e}{i}}{k}{\Nat \setminus (\mathcal{M}_i(e) \cup \{ 0 \})}$ is empty. Furthermore, for any $j \in \mathcal{N}_i(e)$, we have $\criticalports{\termofTaylor{R}{e}{i}}{k}{j} = \target{\termofTaylor{R}{e}{i+1}}[\{ \cod_{e, i}(j) \} \times \temporaryConclusions{\termofTaylor{R}{e}{i+1}}{\cod_{e, i}(j)}]$ and, if $\cod_{e, i}(j) \notin \exactboxesatzero{R}{i}$, then there exist $o \in \boxesatzerogeq{R}{i+1}$ and $e_o \in e(o)$ such that $j \in \mathcal{N}_i(e_o)$ and $\criticalports{\termofTaylor{R}{e}{i}}{k}{j} \setminus \portsatzero{R} = \{ o \} \times ( \{ e_o \} \times \criticalports{\termofTaylor{B_R(o)}{e_o}{i}}{k}{j})$. 
In particular, we have $\criticalports{\termofTaylor{R}{e}{i}}{k}{\mathcal{N}_i(e)} \subseteq \exponentialportsatzero{\termofTaylor{R}{e}{i+1}}$.
\end{prop}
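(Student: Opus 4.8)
The plan is to prove all the assertions simultaneously by induction on $\depthof{R}$, using Lemma~\ref{lem: M_i} as the backbone. By Lemma~\ref{lem: M_i} we already know that $\mathcal{N}_i(e) = \log_k[\bigcup e^\#[\exactboxes{R}{i}]]$ and that $(\isaCopyfrom{R}{e}{i} \circ \cod_{e, i})[\mathcal{N}_i(e)] = \exactboxes{R}{i}$; combined with the $k$-heterogeneity of $e$, which forces $k^j \in e^\#(o)$ to determine $o$ uniquely, this shows that $\cod_{e, i}$ restricts to a bijection from $\mathcal{N}_i(e)$ onto the set of co-contractions of $\termofTaylor{R}{e}{i}$ corresponding to boxes of depth exactly $i$. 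By Lemma~\ref{lem: ground-structure: trivial}, these are precisely the ports that, in passing from $\termofTaylor{R}{e}{i}$ to $\termofTaylor{R}{e}{i+1}$, cease to be expanded and become main doors of boxes of depth exactly $i$ at depth $0$; this yields $\exactboxesatzero{\termofTaylor{R}{e}{i+1}}{i} = \cod_{e, i}[\mathcal{N}_i(e)]$ (the first assertion), and lets me henceforth write $o_j = \cod_{e, i}(j)$ for $j \in \mathcal{N}_i(e)$, viewed either as a co-contraction of $\termofTaylor{R}{e}{i}$ or as a box of $\termofTaylor{R}{e}{i+1}$, the two agreeing since $\isaCopyfrom{R}{e}{i}$ and $\isaCopyfrom{R}{e}{i+1}$ coincide on $\portsatzero{\termofTaylor{R}{e}{i+1}}$.

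For the emptiness assertion, I would split an exponential port $p$ at depth $0$ of $\termofTaylor{R}{e}{i}$ according to whether $p \in \portsatzero{R}$ or not. If $p \in \portsatzero{R}$, then by Lemma~\ref{lem: arity{taylor{e}{i}}(c) for c at depth 0} the condition $p \in \criticalports{\termofTaylor{R}{e}{i}}{k}{j}$ (for $j > 0$) is equivalent to the existence of $o \in \mathcal{B}_R^{\geq i}(p)$ with $k^j \in e^\#(o)$, which forces $j \in \log_k[\bigcup e^\#[\boxesgeq{R}{i}]] = \mathcal{M}_i(e)$ by Lemma~\ref{lem: M_i}. If $p = (o, (e_o, p'))$ with $o \in \boxesatzerogeq{R}{i}$, then $\arity{\termofTaylor{R}{e}{i}}(p) = \arity{\termofTaylor{B_R(o)}{e_o}{i}}(p')$, so $p \in \criticalports{\termofTaylor{R}{e}{i}}{k}{j}$ iff $p' \in \criticalports{\termofTaylor{B_R(o)}{e_o}{i}}{k}{j}$; the induction hypothesis then gives $j \in \mathcal{M}_i(e_o)$, and the only relevant case is $o \in \boxesatzerogeq{R}{i+1}$, where $\mathcal{M}_i(e_o) \subseteq \mathcal{M}_i(e)$ (property $P_6$ of Lemma~\ref{lem: M_i: aux}). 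In both cases $j \notin \mathcal{M}_i(e) \cup \{0\}$ yields $p \notin \criticalports{\termofTaylor{R}{e}{i}}{k}{j}$, which is the second assertion.

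For the description of the critical ports of a fixed power $j \in \mathcal{N}_i(e)$, I again split on $p \in \portsatzero{R}$ or not. For $p \in \portsatzero{R}$: by Lemma~\ref{lem: arity{taylor{e}{i}}(c) for c at depth 0} together with $k$-heterogeneity, $p \in \criticalports{\termofTaylor{R}{e}{i}}{k}{j}$ iff some $q \in \portsatdepthgreater{R}{i}$ with $\target{R}(q) = p$ satisfies $b_R^{\geq i}(q) = \isaCopyfrom{R}{e}{i+1}(o_j)$, the unique box of depth $i$ taking $k^j$ copies; Fact~\ref{fact: critical ports} rewrites this as the existence of $q' \in \ports{B_{\termofTaylor{R}{e}{i+1}}(o_j)}$ with $q = \isaCopyfrom{R}{e}{i+1}(o_j, q')$, and unwinding the definition of $\target{\termofTaylor{R}{e}{i+1}}$ from Definition~\ref{defin: Taylor} identifies this with $p$ belonging to the border $\target{\termofTaylor{R}{e}{i+1}}[\{o_j\} \times \temporaryConclusions{\termofTaylor{R}{e}{i+1}}{o_j}]$. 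For $p \notin \portsatzero{R}$, the recursive form of $\cod_{e, i}$ from Lemma~\ref{lem: bijection cod{e, i}} and $k$-heterogeneity produce a unique pair $(o, e_o)$ with $o \in \boxesatzerogeq{R}{i+1}$, $e_o \in e(o)$ and $j \in \mathcal{N}_i(e_o)$, after which the arity identity above plus the induction hypothesis for $B_R(o)$ give $\criticalports{\termofTaylor{R}{e}{i}}{k}{j} \setminus \portsatzero{R} = \{o\} \times (\{e_o\} \times \criticalports{\termofTaylor{B_R(o)}{e_o}{i}}{k}{j})$, the fourth assertion, which transports the border computation through the box $o$. The final inclusion follows at once, since $\target{\termofTaylor{R}{e}{i+1}}$ takes values among contractions and boxes at depth $0$, all exponential ports of $\termofTaylor{R}{e}{i+1}$.

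The main obstacle I anticipate is the bookkeeping in the third assertion: one must carefully identify a co-contraction $o_j$ of $\termofTaylor{R}{e}{i}$ with the corresponding box of $\termofTaylor{R}{e}{i+1}$ (relying on the first assertion and on Lemma~\ref{lem: ground-structure: trivial} to know the two $\isaCopyfrom{}{}{}$ maps agree on the relevant ports), and then reconcile the two a priori different descriptions of the same set — one as an arity condition via Lemma~\ref{lem: arity{taylor{e}{i}}(c) for c at depth 0}, the other as the image under $\target{\termofTaylor{R}{e}{i+1}}$ of the targeted conclusions of that box — by unwinding the heavy definition of $\termofTaylor{R}{e}{i+1}$ as $S@t$ and of its targeting function. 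The use of $k$-heterogeneity to guarantee uniqueness of the box, respectively of the pair $(o, e_o)$, realizing a given power $k^j$ is exactly what makes this matching well-defined; without it several boxes could contribute to the same $\criticalports{\termofTaylor{R}{e}{i}}{k}{j}$ and the clean equality with a single border would fail.
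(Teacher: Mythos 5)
Your proposal is correct and follows essentially the same route as the paper's proof: induction on $\depthof{R}$, with Lemma~\ref{lem: M_i} and $k$-heterogeneity pinning down the unique box (or pair $(o, e_o)$) realizing each power $k^j$, Lemma~\ref{lem: arity{taylor{e}{i}}(c) for c at depth 0} and Fact~\ref{fact: critical ports} converting the arity condition into border membership, and the induction hypothesis together with the arity identity $\arity{\termofTaylor{R}{e}{i}}(o,(e_o,p')) = \arity{\termofTaylor{B_R(o)}{e_o}{i}}(p')$ handling ports inside copies. The one imprecise citation is your use of Lemma~\ref{lem: ground-structure: trivial} for the first assertion: by itself it does not identify which ports of $\termofTaylor{R}{e}{i+1}$ are boxes of depth exactly $i$ (it only controls boxes of depth $<i$ and the agreement of the $\kappa$-maps), so the paper supplements it with an auxiliary inductive property (its $(\ast)$) and the recursive unwinding of Definition~\ref{defin: Taylor} -- which is exactly the bookkeeping you yourself anticipate as the main obstacle.
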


\begin{proof}
It is trivial to check, by induction on $\depthof{R}$, that we have 
$$(\forall o \in \boxesatzero{\termofTaylor{R}{e}{i}}) (\forall p \in \ports{B_{\termofTaylor{R}{e}{i}}(o)}) \isaCopyfrom{R}{e}{i}(o, p) \notin \exactboxes{R}{i} \: \: \: (\ast)$$

Now, we prove, by induction on $\depthof{R}$, that $\exactboxesatzero{\termofTaylor{R}{e}{i+1}}{i} = !_{e, i}[\mathcal{N}_i(e)]$:
\begin{itemize}
\item Let $j \in \mathcal{N}_i(e)$. By Lemma~\ref{lem: M_i}, there exists $o \in \exactboxes{R}{i}$ such that $j \in \log_k[e^\#(o)]$, hence $k^j \in e^\#(o)$. Since $k^j \in e^\#((\isaCopyfrom{R}{e}{i} \circ \cod_{e, i})(j))$, we have $(\isaCopyfrom{R}{e}{i} \circ \cod_{e, i})(j) = o \in \exactboxes{R}{i}$. If $o \in \exactboxesatzero{R}{i}$, then $\cod_{e, i}(j) =o$. Otherwise, there exist $o_1 \in \boxesatzero{R}$ and $o' \in \exactboxes{B_R(o_1)}{i}$ such that $o = (o_1, o')$: By $(\ast)$, there exist $e_1 \in e(o_1)$ and $o'' \in \portsatzero{\termofTaylor{B_R(o_1)}{e_1}{i}}$ such that $\isaCopyfrom{B_R(o_1)}{e_1}{i}(o'') = o'$ and $\cod_{e, i}(j) = (o_1, (e_1, o''))$, hence $o'' = \cod_{e_1, i}(j)$. By induction hypothesis, we have $\cod_{e_1, i}(j) \in \exactboxesatzero{\termofTaylor{B_R(o_1)}{i+1}{e_1}}{i}$, hence $(o_1, (e_1, o'')) \in \exactboxesatzero{\termofTaylor{R}{i+1}{e}}{i}$.
\item Conversely, let $o \in \exactboxesatzero{\termofTaylor{R}{e}{i+1}}{i}$. Let $j \in \log_k[e^\#(o)]$. By Lemma~\ref{lem: M_i}, we have $j \in \mathcal{N}_i(e)$. We have $k^j \in e^\#(o)$ and $k^j \in e^\#((\isaCopyfrom{R}{e}{i} \circ \cod_{e, i})(j))$, hence $(\isaCopyfrom{R}{e}{i} \circ \cod_{e, i})(j) = o$. If $o \in \exactboxesatzero{R}{i}$, then $\cod_{e, i}(j) = o$. Otherwise, there exist $o_1 \in \boxesatzerogeq{R}{i+1}$, $e_1 \in e(o_1)$ and $o' \in \exactboxesatzero{B_{\termofTaylor{B_R(o_1)}{e_1}{i+1}}}{i}$ such that $o = (o_1, (e_1, o'))$. By induction hypothesis, there exists $j \in \mathcal{N}_i(e_1)$ such that $\cod_{e_1, i}(j) = o'$: we have $\cod_{e, i}(j) = (o_1, (e_1, o'))$. 
\end{itemize}

By Lemma~\ref{lem: M_i} and Lemma~\ref{lem: arity{taylor{e}{i}}(c) for c at depth 0}, we have $\criticalports{\termofTaylor{R}{e}{i}}{k}{\Nat \setminus (\mathcal{M}_i(e) \cup \{ 0 \})} \cap \portsatzero{R} = \emptyset$. Moreover, for any $o \in\boxesatzerogeq{R}{i}$, for any $e_o \in e(o)$, again by Lemma~\ref{lem: M_i}, we have $\mathcal{M}_i(e_o) \subseteq \mathcal{M}_i(e)$, hence 
\begin{eqnarray*}
& & \criticalports{\termofTaylor{R}{e}{i}}{k}{\Nat \setminus (\mathcal{M}_i(e) \cup \{ 0 \})} \cap (\{ o \} \times (\{ e_o \} \times \portsatzero{\termofTaylor{B_R(o)}{e_o}{i}}))\\ 
& = & \{ o \} \times (\{ e_o \} \times \criticalports{\termofTaylor{B_R(o)}{e_o}{i}}{k}{\Nat \setminus (\mathcal{M}_i(e) \cup \{ 0 \})})\\
& \subseteq &  \{ o \} \times (\{ e_o \} \times \criticalports{\termofTaylor{B_R(o)}{e_o}{i}}{k}{\Nat \setminus (\mathcal{M}_i(e_o) \cup \{ 0 \})})\allowdisplaybreaks\\
& = & \{ o \} \times (\{ e_o \} \times \emptyset)\\
& = & \emptyset \textit{   (by induction hypothesis)}
\end{eqnarray*}
We showed $\criticalports{\termofTaylor{R}{e}{i}}{k}{\Nat \setminus (\mathcal{M}_i(e) \cup \{ 0 \})} = \emptyset$ $(\ast \ast)$.

Now, let $j \in \mathcal{N}_i(e)$. We distinguish between two cases:
\begin{itemize}
\item $\cod_{e, i}(j) \in \boxesatzero{R}$: By Lemma~\ref{lem: arity{taylor{e}{i}}(c) for c at depth 0}, we have $\criticalports{\termofTaylor{R}{e}{i}}{k}{j} \cap \portsatzero{R} = \{ p \in \portsatzero{R} ; \cod_{e, i}(j) \in \mathcal{B}_R^{\geq i}(p) \} = \target{R}[\{ \cod_{e, i}(j) \} \times \temporaryConclusions{R}{\cod_{e, i}(j)}] = \target{\termofTaylor{R}{e}{i+1}}[\{ \cod_{e, i}(j) \} \times \temporaryConclusions{\termofTaylor{R}{e}{i+1}}{\cod_{e, i}(j)}]$. Moreover, by $(\ast \ast)$, we have $\criticalports{\termofTaylor{R}{e}{i}}{k}{j} \subseteq \portsatzero{R}$. We thus have $\criticalports{\termofTaylor{R}{e}{i}}{k}{j} = \target{\termofTaylor{R}{e}{i+1}}[\{ \cod_{e, i}(j) \} \times \temporaryConclusions{\termofTaylor{R}{e}{i+1}}{\cod_{e, i}(j)}]$.
\item $\cod_{e, i}(j) = (o, (e_o, \cod_{e_o, i}(j)))$ for some $o \in \boxesatzerogeq{R}{i}$ and $e_o \in e(o)$: By induction hypothesis, we have $\criticalports{\termofTaylor{B_R(o)}{e_o}{i}}{k}{j} = \target{\termofTaylor{B_R(o)}{e_o}{i+1}}[\{ \cod_{e_o, i}(j) \} \times \temporaryConclusions{\termofTaylor{B_R(o)}{e_o}{i+1}}{\cod_{e_o, i}(j)}]$. By $(\ast \ast)$, we have $\criticalports{\termofTaylor{R}{e}{i}}{k}{j} \subseteq \portsatzero{R} \cup \portsatzero{R \langle o, i, e_o \rangle}$, hence 
\begin{eqnarray*}
\criticalports{\termofTaylor{R}{e}{i}}{k}{j} \setminus \portsatzero{R} & = & \criticalports{R \langle o, i, e_o \rangle}{k}{j}\\
& = & (\{ o \} \times ( \{ e_o \} \times \criticalports{\termofTaylor{B_R(o)}{e_o}{i}}{k}{j})) \\
& = & \{ o \} \times ( \{ e_o \} \times \target{\termofTaylor{B_R(o)}{e_o}{i+1}}[\{ \cod_{e_o, i}(j) \} \times \temporaryConclusions{\termofTaylor{B_R(o)}{e_o}{i+1}}{\cod_{e_o, i}(j)}])\\
& = & \target{\termofTaylor{R}{e}{i+1}}[\{ \cod_{e, i}(j) \} \times \temporaryConclusions{\termofTaylor{R}{e}{i+1}}{\cod_{e, i}(j)}] \setminus \portsatzero{R}
\end{eqnarray*}
 We have 
\begin{eqnarray*}
\criticalports{\termofTaylor{R}{e}{i}}{k}{j} \cap \portsatzero{R} & = & \{ p \in \portsatzero{R} ; (\exists o \in \mathcal{B}_R^{\geq i}(p)) k^j \in e^\#(o) \}\\
& & \textit{   (by Lemma~\ref{lem: arity{taylor{e}{i}}(c) for c at depth 0})}\\
& = & \{ p \in \portsatzero{R} ; \isaCopyfrom{R}{e}{i}(\cod_{e, i}(j)) \in \mathcal{B}_R^{\geq i}(p) \} \\
& & \textit{   (by Lemma~\ref{lem: bijection cod{e, i}})}\allowdisplaybreaks \\
& = & \{ p \in \portsatzero{R} ;  (\exists q \in \portsatdepthgreater{R}{i}) (b_R^{\geq i}(q) = \isaCopyfrom{R}{e}{i}(\cod_{e, i}(j)) \wedge \target{R}(q) =p) \} \allowdisplaybreaks\\
& = & \{ p \in \portsatzero{R} ;  (\exists q \in \portsatdepthgreater{R}{i}) (b_R^{\geq i}(q) = \isaCopyfrom{R}{e}{i+1}(\cod_{e, i}(j)) \wedge \target{R}(q) =p) \}\\
& & \textit{   (by Lemma~\ref{lem: ground-structure: trivial})}\allowdisplaybreaks \\
& = & \{ \target{R}(\isaCopyfrom{R}{e}{i+1}(\cod_{e, i}(j), q')) ; q' \in \ports{B_{\termofTaylor{R}{e}{i+1}}(\cod_{e, i}(j))} \}\\
& & \textit{   (by Fact~\ref{fact: critical ports})}\allowdisplaybreaks\\
& = & \{ \target{R}(\isaCopyfrom{R}{e}{i+1}(\cod_{e, i}(j), q')) ; q' \in \conclusions{B_{\termofTaylor{R}{e}{i+1}}(\cod_{e, i}(j))} \}\\
& & \textit{   (by Fact~\ref{fact: conclusions})}\\
& = & \target{\termofTaylor{R}{e}{i+1}}[\{ \cod_{e, i}(j) \} \times \temporaryConclusions{\termofTaylor{R}{e}{i+1}}{\cod_{e, i}(j)}] \cap \portsatzero{R}
\end{eqnarray*}
\end{itemize}
\end{proof}

As the following example shows, the information we obtain is already non-trivial, but far away to be strong enough.

\begin{exa}\label{example: border > LPS}
The PS's $R_1$, $R_2$, $R_3$ and $R_4$ of Figure~\ref{fig: R_1}, Figure~\ref{fig: R_2}, Figure~\ref{fig: R_3} and Figure~\ref{fig: R_4} respectively have the same LPS, which is depicted in Figure~\ref{fig: LPS}. But if we know that $p \in \target{R}[\{ o_1 \} \times \temporaryConclusions{R}{o_1}] \cap \target{R}[\{ o_2 \} \times \temporaryConclusions{R}{o_2}]$, then we know that $R \not= R_3$ and $R \not= R_4$: The information we already obtain is an information that is not obtained from $LPS(R)$. Still we are not yet able to distinguish between $R_1$ and $R_2$. Since $LPS(R_1) = LPS(R_2)$ and $\depthof{R_1} = 1 = \depthof{R_2}$, the sets of pseudo-experiments on $R_1$ and $R_2$ coincide. Now, for any pseudo-experiment on these two proof-nets $R_1$ and $R_2$, we have $\arity{\termofTaylor{R_1}{e}{0}}(o_1) = \arity{\termofTaylor{R_2}{e}{0}}(o_1)$, $\arity{\termofTaylor{R_1}{e}{0}}(o_2) = \arity{\termofTaylor{R_2}{e}{0}}(o_2)$ and $\arity{\termofTaylor{R_1}{e}{0}}(p) = \arity{\termofTaylor{R_2}{e}{0}}(p)$, which shows that the arity of exponential ports in the Taylor expansion is not sufficient to recover the PS.

\begin{figure}
\centering
\begin{minipage}[b]{0.45\linewidth}
\centering
\resizebox{\textwidth}{!}{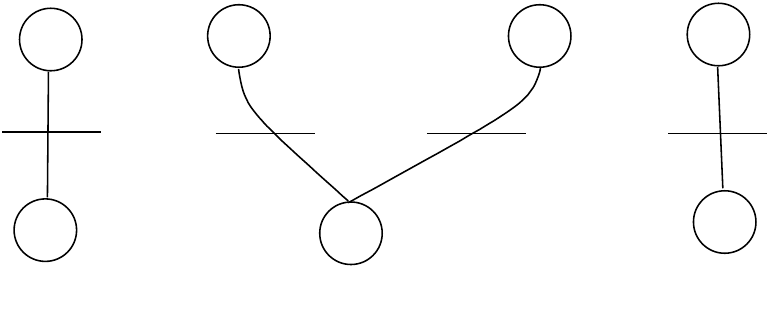}
\captionsetup{width= \textwidth}
\caption{LPS of $R_1$, $R_2$, $R_3$ and $R_4$}\label{fig: LPS}
\end{minipage}
\end{figure}
\begin{figure}
\centering
\begin{minipage}[t]{0.45 \linewidth}
\raggedleft
\resizebox{.9 \textwidth}{!}{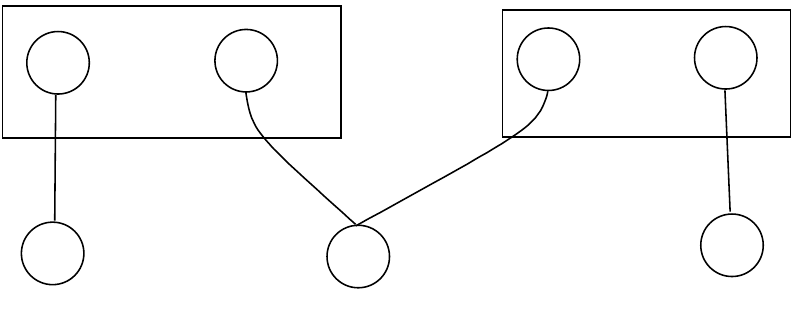}
\caption{$R_1$}\label{fig: R_1}
\end{minipage}\hfill
\begin{minipage}[t]{0.45\linewidth}
\raggedright
\resizebox{.9 \textwidth}{!}{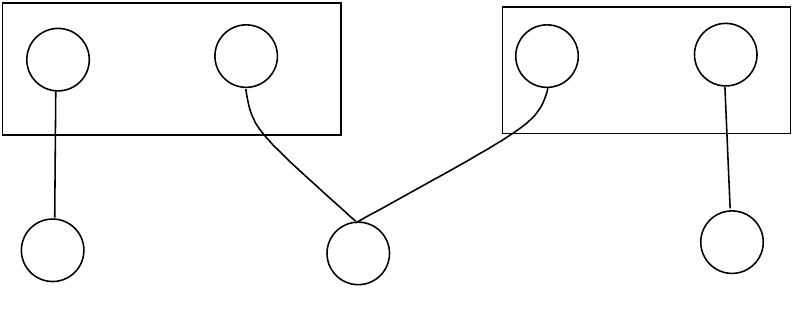}
\caption{$R_2$}\label{fig: R_2}
\end{minipage}
\centering
\begin{minipage}[t]{0.45\linewidth}
\raggedleft
\resizebox{\textwidth}{!}{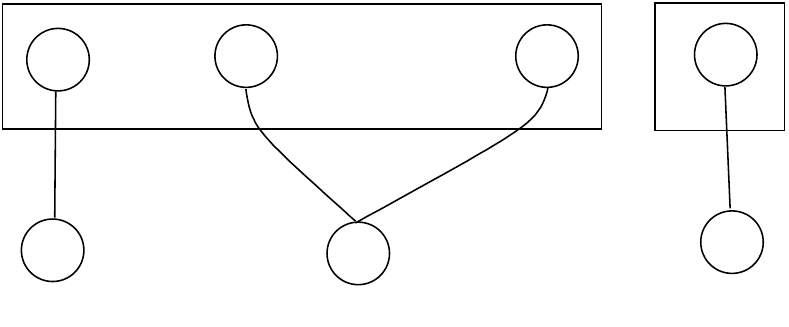}
\caption{$R_3$}
\label{fig: R_3}
\end{minipage}\hfill
\begin{minipage}[t]{0.45 \linewidth}
\raggedright
\resizebox{\textwidth}{!}{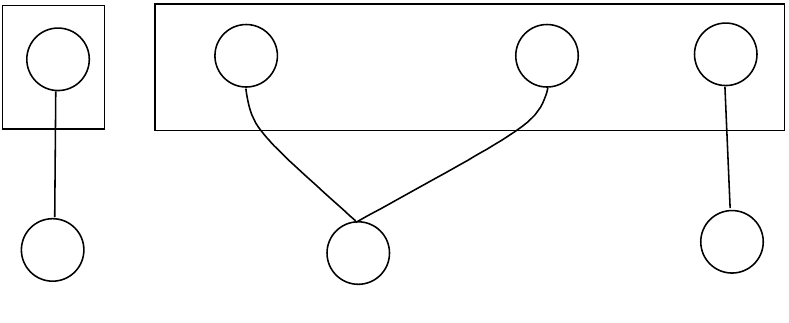}
\caption{$R_4$}
\label{fig: R_4}
\end{minipage}
\end{figure}
\end{exa}

\begin{cor}\label{cor: new critical contractions of R_o}
Let $R$ be an in-PS. Let $o \in \boxesatzero{R}$. Let $\varphi$ be some bijection $\contractionsUnder{R}{o} \simeq \mathcal{Q'}$. Let $R_o$ be an in-PS such that $R_o = \varphi \cdot_o R$. Let $k > \Card{\boxes{R}{}}, \cosize{R}$. Let $e$ be a $k$-heterogeneous pseudo-experiment on $R$, let $e_o \in e(o)$, let $i \in \Nat$ and let $j \in \mathcal{N}_i(e_o)$. Then $\criticalports{\termofTaylor{B_{R_o}(o)}{e_o}{i}}{k}{j} \cap \mathcal{Q'} \subseteq \varphi[\contractionsUnder{R}{o} \cap \criticalports{\termofTaylor{R}{e}{i}}{k}{j}]$.
\end{cor}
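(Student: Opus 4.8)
The plan is to prove the stronger pointwise statement: if $q' \in \mathcal{Q'}$ is critical at level $j$ in $\termofTaylor{B_{R_o}(o)}{e_o}{i}$, then its $\varphi$-preimage $c := \varphi^{-1}(q')$ is critical at level $j$ in $\termofTaylor{R}{e}{i}$. The inclusion follows immediately, since $c \in \contractionsUnder{R}{o}$ (because $\varphi : \contractionsUnder{R}{o} \simeq \mathcal{Q'}$ and $q' \in \mathcal{Q'}$), so $q' = \varphi(c) \in \varphi[\contractionsUnder{R}{o} \cap \criticalports{\termofTaylor{R}{e}{i}}{k}{j}]$. The engine of the proof is Lemma~\ref{lem: arity{taylor{e}{i}}(c) for c at depth 0}, which characterizes criticality of a depth-$0$ port $p$ at level $j$ as the existence of some box $o^\ast \in \mathcal{B}_R^{\geq i}(p)$ with $k^j \in e^\#(o^\ast)$; I will use this characterization once inside $B_{R_o}(o)$ and once inside $R$, and transport the witnessing box between the two through the single extra layer of box $o$.

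First I would record that Lemma~\ref{lem: arity{taylor{e}{i}}(c) for c at depth 0} applies to $B_{R_o}(o)$ and $e_o$. The sub-experiment $e_o$ inherits $k$-heterogeneity from $e$ (the conditions of Definition~\ref{defin: k-heterogeneous} transfer to sub-experiments, as is used elsewhere in the paper), and since $B_{R_o}(o)$ differs from $B_R(o)$ only by the contractions of $\mathcal{Q'}$ added by Definition~\ref{definition: adding contractions}, which create no new box, $e_o$ is still $k$-heterogeneous on $B_{R_o}(o)$. The bound $k > \cosize{B_{R_o}(o)}$ follows from $k > \cosize{R}$: the boxes of $B_{R_o}(o)$ are exactly those of $B_R(o)$ (hence of co-size $\leq \cosize{R}$), and each new port $q' \in \mathcal{Q'}$ has arity equal to the number of doors of $o$ routed to it, which is bounded by $\arity{R}(\varphi^{-1}(q')) \leq \cosize{R} < k$.

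Then, applying Lemma~\ref{lem: arity{taylor{e}{i}}(c) for c at depth 0} to $B_{R_o}(o)$, criticality of $q'$ at level $j$ provides a box $o_1' \in \mathcal{B}_{B_{R_o}(o)}^{\geq i}(q')$ with $k^j \in {e_o}^\#(o_1')$; unwinding $\mathcal{B}^{\geq i}$, we have $o_1' = b_{B_{R_o}(o)}^{\geq i}(o_1,p)$ for a door $(o_1,p)$ with $o_1 \in \boxesatzerogeq{B_{R_o}(o)}{i}$ and $\target{B_{R_o}(o)}(o_1,p) = q'$. Since $q' \in \mathcal{Q'}$, Remark~\ref{rem: action on PS} gives $(o_1,p) \in \temporaryConclusions{R}{o}$ with $\target{R}(o,(o_1,p)) = \varphi^{-1}(q') = c$. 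I then lift this door to $R$ by setting $q_R := (o,(o_1,p))$: since $o_1$ has depth $\geq i$ we get $\depthof{B_R(o)} > i$, so $o \in \boxesatzerogeq{R}{i}$ and $q_R \in \portsatdepthgreater{R}{i}$ with $\target{R}(q_R) = c$; moreover $(o_1,p) \in \portsatdepthgreater{B_R(o)}{i}$ and the box contents of $B_R(o)$ and $B_{R_o}(o)$ coincide, so $b_R^{\geq i}(q_R) = (o, b_{B_R(o)}^{\geq i}(o_1,p)) = (o,o_1')$. Thus $(o,o_1') \in \mathcal{B}_R^{\geq i}(c)$, and since $e_o \in e(o)$ we have $k^j \in {e_o}^\#(o_1') \subseteq e^\#(o,o_1')$. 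Applying Lemma~\ref{lem: arity{taylor{e}{i}}(c) for c at depth 0} now to $R$ and $e$ (which is $k$-heterogeneous with $k > \cosize{R}$) yields $c \in \criticalports{\termofTaylor{R}{e}{i}}{k}{j}$, as desired.

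The main obstacle is the bookkeeping in the lifting step: one must match the box $o_1'$ witnessing criticality of $q'$ inside the expanded content $\termofTaylor{B_{R_o}(o)}{e_o}{i}$ with the box $(o,o_1')$ witnessing criticality of $c$ inside the whole $\termofTaylor{R}{e}{i}$, and verify that $b^{\geq i}$ and the copy-number sets $e^\#$ transport correctly across the one extra box layer $o$ — precisely the content supplied by Lemma~\ref{lem: termofTaylor of R_o} (on how $\target{}$ relates for the new contractions) and Remark~\ref{rem: action on PS}. A secondary care point, already dealt with above, is checking that $e_o$ together with the co-size bound genuinely meet the hypotheses of Lemma~\ref{lem: arity{taylor{e}{i}}(c) for c at depth 0} on $B_{R_o}(o)$.
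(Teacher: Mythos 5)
Your proof is correct, and it follows a genuinely different route from the paper's. The paper argues at level $i+1$: since $j \in \mathcal{N}_i(e_o)$, it applies Proposition~\ref{prop: critical ports below new boxes} twice to rewrite both sides of the inclusion as borders of the new boxes $\cod_{e_o, i}(j)$ and $\cod_{e, i}(j)$ in $\termofTaylor{B_{R_o}(o)}{e_o}{i+1}$ and $\termofTaylor{R}{e}{i+1}$, and then transports a door of the former to a door of the latter via Lemma~\ref{lem: termofTaylor of R_o} and Fact~\ref{fact: conclusions}. You never leave level $i$: criticality is read off on both sides through the arity characterization of Lemma~\ref{lem: arity{taylor{e}{i}}(c) for c at depth 0}, and the witnessing box is carried across the single layer $o$ using Remark~\ref{rem: action on PS}, the equality $b_R^{\geq i}(o,(o_1,p)) = (o, b_{B_R(o)}^{\geq i}(o_1,p))$ and the inclusion ${e_o}^\#(o_1') \subseteq e^\#(o,o_1')$; the machinery of $\cod_{e,i}$, of $\isaCopyfrom{R}{e}{i+1}$ and of the level-$(i+1)$ expansions is avoided entirely. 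This buys you two small strengthenings: your argument never uses the hypothesis $k > \Card{\boxes{R}}$ (which the paper needs through Proposition~\ref{prop: critical ports below new boxes} and Lemma~\ref{lem: M_i}), and it yields the inclusion for every $j > 0$, not only for $j \in \mathcal{N}_i(e_o)$. What the paper's route buys is economy and geometric transparency: the corollary falls out of the already-proved description of critical ports as box borders. Both routes carry the same implicit obligation --- that $e_o$ qualifies as a $k$-heterogeneous pseudo-experiment on $B_{R_o}(o)$ with $k > \cosize{B_{R_o}(o)}$ --- which you discharge explicitly while the paper leaves it tacit. The only elision in your write-up is that deducing $(o_1,p) \in \temporaryConclusions{R}{o}$ requires, besides Remark~\ref{rem: action on PS}, the observation from Definition~\ref{definition: adding contractions} that the only doors of $B_{R_o}(o)$ whose target lies in $\mathcal{Q'}$ are the elements of $\temporaryConclusions{R}{o} \setminus \{ \aboveBang{R}{o} \}$ rerouted by the operation; this is immediate and does not affect correctness.
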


\begin{proof}
Notice first that, since the set $\mathcal{N}_i(e_o)$ is non-empty, by Proposition~\ref{prop: critical ports below new boxes}, we have $o \in \boxesatzerogeq{R}{i+1}$. By Proposition~\ref{prop: critical ports below new boxes} again, it is enough to show that 
\begin{eqnarray*}
& & \target{\termofTaylor{B_{R_o}(o)}{e_o}{i+1}}[\{ \cod_{e_o, i}(j) \} \times \temporaryConclusions{\termofTaylor{B_{R_o}(o)}{e_o}{i+1}}{\cod_{e_o, i}(j)}] \cap \mathcal{Q'} \\
& \subseteq & \varphi[\contractionsUnder{R}{o} \cap \target{\termofTaylor{R}{e}{i+1}}[\{ \cod_{e, i}(j) \} \times \temporaryConclusions{\termofTaylor{R}{e}{i+1}}{\cod_{e, i}(j)}]]
\end{eqnarray*}
Now, let $p \in \temporaryConclusions{\termofTaylor{B_{R_o}(o)}{e_o}{i+1}}{\cod_{e_o, i}(j)}$ such that $\target{\termofTaylor{B_{R_o}(o)}{e_o}{i+1}}(\cod_{e_o, i}(j), p) \in \mathcal{Q'}$. By Lemma~\ref{lem: termofTaylor of R_o}, we have $p \notin \temporaryConclusions{\termofTaylor{B_{R_o}(o)}{e_o}{i+1}}{\cod_{e_o, i}(j)}$ and $\target{\termofTaylor{B_{R_o}(o)}{e_o}{i+1}}(\cod_{e_o, i}(j), p) = \varphi(\target{R}(o, \isaCopyfrom{B_{R_o}(o)}{e_o}{i+1}(\cod_{e_o, i}(j), p)))$. Moreover, since $p \in \conclusions{B_{\termofTaylor{B_{R}(o)}{e_o}{i+1}}(\cod_{e_o, i}(j))}$, we obtain $(\cod_{e_o, i}(j), p) \in \conclusions{\termofTaylor{B_R(o)}{e_o}{i+1}}$, hence, by Fact~\ref{fact: conclusions}, $\isaCopyfrom{B_R(o)}{i+1}{e_o}(\cod_{e_o, i}(j), p) \in \conclusions{B_R(o)}$: We thus showed $\isaCopyfrom{B_R(o)}{e_o}{i+1}(\cod_{e_o, i}(j), p) \in \temporaryConclusions{R}{o}$, which entails $p \in \temporaryConclusions{\termofTaylor{R}{e}{i+1}}{\cod_{e, i}(j)}$ and $\target{R}(o, \isaCopyfrom{B_{R_o}(o)}{e_o}{i+1}(\cod_{e_o, i}(j), p)) = \target{\termofTaylor{R}{e}{i+1}}(\cod_{e, i}(j), p)$. 
\end{proof}

\subsection{Connected components}

In order to achieve the rebuilding of the ground-structure of $\termofTaylor{R}{e}{i+1}$ and to recover the content of its boxes, we introduce our notion of \emph{connected component} (Definition~\ref{defin: connected components}). This subsection is devoted to relate connected components of $\termofTaylor{R}{e}{i}$ with connected components of $\termofTaylor{R}{e}{i+1}$ (Proposition~\ref{prop: ground-structure} and Proposition~\ref{prop: crucial}).

The relation $\coh_S$ formalizes the notion of ``connectedness" between two ports of $S$ at depth $0$. But be aware that, here, ``connected'' has nothing to do with ``connected'' in the sense of \cite{LPSinjectivity}: here, any two doors of the same box are always ``connected''.

\begin{defi}
Let $T$ be a differential in-PS. We define the binary relation $\coh_T$ on $\portsatzero{T}$ as follows: for any $p, p' \in \portsatzero{T}$, we have $p \coh_T p'$ iff $\{ p, p' \} \in \axiomsatzero{T} \cup \cutsatzero{T}$ or $(p \in \wiresatzero{T}$ and $p' = \target{\groundof{T}}(p))$ or $(p' \in \wiresatzero{T}$ and $p = \target{\groundof{T}}(p'))$ or $(\exists o \in \boxesatzero{T}) (\exists q, q' \in \temporaryConclusions{T}{o}) \{ p, p' \} = \{ \target{T}(o, q), \target{T}(o, q') \}$.

Let $\mathcal{Q} \subseteq \exponentialportsatzero{T}$ and let $p, p' \in \portsatzero{T}$. A \emph{path in $T$ from $p$ to $p'$ without crossing $\mathcal{Q}$} is a finite sequence $(p_0, \ldots, p_n)$ of elements of $\portsatzero{T}$ such that $p_0 = p$, $p_n = p'$ and, for any $j \in \{ 0, \ldots, n-1 \}$, we have $p_j \coh_T p_{j+1}$ and $(p_j \in \mathcal{Q} \Rightarrow j = 0)$.

We say that $T$ is \emph{connected through ports not in $\mathcal{Q}$} if, for any $p, p' \in \portsatzero{T}$, there exists a path in $T$ from $p$ to $p'$ without crossing $\mathcal{Q}$.
\end{defi}

\begin{defi}
Let $T$ and $S$ two differential in-PS's and let $\mathcal{Q}$ such that $T \sqsubseteq_{\mathcal{Q}} S$. We write $T \trianglelefteq_{\mathcal{Q}} S$ if the following property holds: 
$$(\forall p \in \portsatzero{T} \setminus \mathcal{Q}) (\forall q \in \portsatzero{S}) (p \coh_S q \Rightarrow q \in \portsatzero{T})$$ 
\end{defi}

As we can expect, the connected components of $\termofTaylor{R}{e}{i+1}$ that do not cross any expanded box are exactly the connected components of $\termofTaylor{R}{e}{i}$ that do not cross any expanded box:

\begin{fact}\label{fact: components of R^{<=i}}
Let $R$ be an in-PS. Let $e$ be an exhaustive pseudo-experiment on $R$. Let $i \in \Nat$. Let $\mathcal{Q} \subseteq \portsatzero{R}$. Let $T$ be a differential in-PS such that $T \sqsubseteq_{\mathcal{Q}} R^{\leq i}$. Then we have $T \trianglelefteq_{\mathcal{Q}} \termofTaylor{R}{e}{i}$ if and only if $T \trianglelefteq_{\mathcal{Q}} \termofTaylor{R}{e}{i+1}$.
\end{fact}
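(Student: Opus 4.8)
The plan is to reduce both statements $T \trianglelefteq_{\mathcal{Q}} \termofTaylor{R}{e}{i}$ and $T \trianglelefteq_{\mathcal{Q}} \termofTaylor{R}{e}{i+1}$ to one and the same condition $(\star)$ that does not mention the Taylor expansion at all: namely that $T \trianglelefteq_{\mathcal{Q}} R^{\leq i}$ holds and that no port of $\portsatzero{T} \setminus \mathcal{Q}$ is of the form $\target{R}(o, q)$ for some $o \in \boxesatzerogeq{R}{i}$ and some $q \in \temporaryConclusions{R}{o}$. First I would note that $\portsatzero{T} \subseteq \portsatzero{R^{\leq i}} = \portsatzero{R}$, so that every port of $T$ is an \emph{original} port, and that $R^{\leq i} \sqsubseteq_{\emptyset} \termofTaylor{R}{e}{j}$ for $j \in \{ i, i+1 \}$, because the shallow ground-structure of $R$ and the boxes of $\boxesatzerosmaller{R}{i}$ reappear verbatim in $\termofTaylor{R}{e}{j}$, the only further ports, wires and boxes being copies, which are never original. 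By Fact~\ref{fact: transitivity of substructures} this gives $T \sqsubseteq_{\mathcal{Q}} \termofTaylor{R}{e}{j}$, so that in each case $\trianglelefteq_{\mathcal{Q}}$ reduces to its closure clause.

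The heart of the argument is a neighbourhood analysis: for a fixed $p \in \portsatzero{T} \setminus \mathcal{Q}$ I would sort the $\coh$-neighbours of $p$ in $\termofTaylor{R}{e}{j}$ according to the clause of the definition of $\coh$ producing them, using the explicit descriptions of $\wiresatzero{\termofTaylor{R}{e}{j}}$, of $\dom{\target{\termofTaylor{R}{e}{j}}}$ and of $\boxesatzero{\termofTaylor{R}{e}{j}}$ coming from Definition~\ref{defin: Taylor}. The expected outcome is that the axiom/cut-neighbours, the wire-neighbours that stay among original ports, and the box-door-neighbours through a kept box of $\boxesatzerosmaller{R}{i}$ all coincide with the $\coh_{R^{\leq i}}$-neighbours of $p$ and are literally the same in $\termofTaylor{R}{e}{i}$ and in $\termofTaylor{R}{e}{i+1}$; while every other neighbour witnesses that $p = \target{R}(o, q)$ for some shallow door $q$ of some $o \in \boxesatzerogeq{R}{i}$.

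The decisive step, and the only place where $i$ and $i+1$ look different, is the behaviour of such a door-target. The key local claim is: if $p = \target{R}(o, q)$ with $o \in \boxesatzerogeq{R}{i}$ then $p$ has a \emph{non-original} $\coh_{\termofTaylor{R}{e}{i}}$-neighbour. Indeed $o$ is expanded in $\termofTaylor{R}{e}{i}$ and, $e$ being exhaustive, $e(o) \neq \emptyset$, so a copy of the conclusion over $q$ is wired to $p$ (directly through $\mathcal{W}_0$ when $q$ is shallow in $B_R(o)$, otherwise through the box-door clause applied to the copy box carrying that conclusion, whose main door targets the copy box itself); as no copy port lies in $\portsatzero{T}$, the closure clause for $\termofTaylor{R}{e}{i}$ fails unless no such $p$ exists. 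For $\termofTaylor{R}{e}{i+1}$ the boxes of $\exactboxesatzero{R}{i}$ are instead \emph{kept}: there $p = \target{R}(o, q)$ with $o \in \exactboxesatzero{R}{i}$ gives $p \coh_{\termofTaylor{R}{e}{i+1}} o$ through the box-door clause (pairing $q$ with the main door of $o$), and $o \in \portsatzero{T}$ would force $o \in \boxesatzero{\termofTaylor{R}{e}{i+1}} \cap \portsatzero{T} = \boxesatzero{T} = \boxesatzerosmaller{R}{i} \cap \portsatzero{T}$, which is impossible since $o \in \exactboxesatzero{R}{i}$; so the closure clause again fails. The boxes of $\boxesatzerogeq{R}{i+1}$ are expanded in both expansions and are disposed of exactly as in the $\termofTaylor{R}{e}{i}$ case. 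Thus in both expansions the closure clause holds if and only if $(\star)$ holds, which yields the announced equivalence.

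I expect the main obstacle to be the bookkeeping ensuring that no \emph{unexpected original} neighbour is manufactured by an expanded box --- concretely, that a door-target of a copy box which happens to be original must be a $\target{R}$-image of a shallow door of the \emph{outer} box $o \in \boxesatzerogeq{R}{i}$. This requires unwinding the explicit shape of $\mathcal{W}_{> 0}$ together with Fact~\ref{fact: conclusions}, and it is precisely this, paired with the symmetric contradiction argument for the kept depth-$i$ boxes, that carries the whole proof.
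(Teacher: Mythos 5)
Your proposal is built on exactly the two mechanisms the paper's own proof uses: (1) if some $p \in \portsatzero{T} \setminus \mathcal{Q}$ were of the form $\target{R}(o,q)$ for a box $o$ that gets \emph{expanded}, exhaustivity produces a copy-port $\coh$-neighbour of $p$, which cannot lie in $\portsatzero{T}$ since $T \sqsubseteq_{\mathcal{Q}} R^{\leq i}$ contains only original ports; and (2) if $o \in \exactboxesatzero{R}{i}$ is \emph{kept} (level $i+1$), then $p \coh_{\termofTaylor{R}{e}{i+1}} o$ via the main door, and $o \in \portsatzero{T}$ is impossible because $\boxesatzero{T} = \boxesatzerosmaller{R}{i} \cap \portsatzero{T}$. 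Packaging both directions as equivalences with a common condition $(\star)$, rather than proving the two implications directly as the paper does, is only a presentational difference. On one point you are in fact more careful than the paper: you exclude door-targets of \emph{all} boxes of depth $\geq i$ (and treat non-shallow doors via the copy-box clause), which is what the step ``$p \coh_{\termofTaylor{R}{e}{i+1}} q$ entails $p \coh_{\termofTaylor{R}{e}{i}} q$'' genuinely requires; the paper states the exclusion only for $\exactboxesatzero{R}{i}$, although the same argument gives the stronger form.

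There is, however, one concretely false step: $R^{\leq i} \sqsubseteq_{\emptyset} \termofTaylor{R}{e}{i+1}$ does \emph{not} hold in general. The boxes of depth exactly $i$ are kept in $\termofTaylor{R}{e}{i+1}$, and their main doors are original ports of $R^{\leq i}$; hence $\boxesatzero{\termofTaylor{R}{e}{i+1}} \cap \portsatzero{R^{\leq i}} \supseteq \exactboxesatzero{R}{i}$ while $\boxesatzero{R^{\leq i}} = \boxesatzerosmaller{R}{i}$, so the clause $\boxesatzero{S'} = \boxesatzero{S} \cap \portsatzero{S'}$ of Definition~\ref{defin: substructure} is violated whenever $\exactboxesatzero{R}{i} \neq \emptyset$. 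Consequently your opening reduction ``in each case $\trianglelefteq_{\mathcal{Q}}$ reduces to its closure clause'' fails at level $i+1$: $T \sqsubseteq_{\mathcal{Q}} \termofTaylor{R}{e}{i+1}$ additionally requires $\portsatzero{T} \cap \exactboxesatzero{R}{i} = \emptyset$. For ports outside $\mathcal{Q}$ this does follow from your $(\star)$, since such a box satisfies $o = \target{R}(o, \aboveBang{R}{o})$ and is therefore its own door-target; but for ports in $\mathcal{Q} \cap \portsatzero{T}$ it does not, and there the claimed equivalence $(\star) \Leftrightarrow T \trianglelefteq_{\mathcal{Q}} \termofTaylor{R}{e}{i+1}$ breaks. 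Note that the paper never addresses the substructure component of $\trianglelefteq$ at all: its proof establishes only the transfer of the closure clause, which is all its later applications use. So the repair is simply to drop the transitivity claim for level $i+1$ and argue, as you already do, about the closure clause alone.
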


\begin{proof}
Assume $T \trianglelefteq_{\mathcal{Q}} \termofTaylor{R}{e}{i}$. Let $p \in \portsatzero{T} \setminus \mathcal{Q}$ and let $q \in \portsatzero{\termofTaylor{R}{e}{i+1}}$ such that $p \coh_{\termofTaylor{R}{e}{i+1}} q$. Let us show that there is no $o \in \exactboxesatzero{R}{i}$ such that $p \in \target{R}[\{ o \} \times \temporaryConclusions{R}{o}]$: Let us assume that there is such an $o$, let $e_o \in e(o)$ (such an $e_o$ exists since $e$ is exhaustive) and let $p' \in \temporaryConclusions{R}{o}$ such that $\target{R}(o, p') = p$; we have $p \coh_{\termofTaylor{R}{e}{i}} (o, (e_o, p'))$, hence $(o, (e_o, p')) \in \portsatzero{T}$, which contradicts $T \sqsubseteq R^{\leq i}$. This entails $p \coh_{\termofTaylor{R}{e}{i}} q$.

Conversely, assume that $T \trianglelefteq_{\mathcal{Q}} \termofTaylor{R}{e}{i+1}$. Let $p \in \portsatzero{T} \setminus \mathcal{Q}$ and let $q \in \portsatzero{\termofTaylor{R}{e}{i}}$ such that $p \coh_{\termofTaylor{R}{e}{i}} q$. If $q = (o, (e_o, q'))$ for some $o \in \exactboxesatzero{R}{i}$, $e_o \in e(o)$ and $q' \in \temporaryConclusions{R}{i}$ such that $\target{R}(o, q') = q$, then $p \in \target{R}[\{ o \} \times \temporaryConclusions{R}{o}]$, hence $p \coh_{\termofTaylor{R}{e}{i+1}} o$; then $o \in \portsatzero{T}$, which contradicts $T \sqsubseteq R^{\leq i}$. We thus have $q \in \portsatzero{\termofTaylor{R}{e}{i+1}}$, hence $p \coh_{\termofTaylor{R}{e}{i+1}} q$.
\end{proof}

The sets $\nontrivialconnected{S}{\mathcal{Q}}{k}$ of \emph{components $T$ of $S$ that are connected \emph{via} other ports than $\mathcal{Q}$, whose conclusions belong to $\mathcal{Q}$ and with $\cosize{T} < k$} will play a crucial role in the algorithm of the rebuilding of $\termofTaylor{R}{e}{i+1}$ from $\termofTaylor{R}{e}{i}$, where we will take for $\mathcal{Q}$ a subset of the critical ports $\criticalports{\termofTaylor{R}{e}{i}}{k}{\mathcal{N}_i(e)}$ that were considered in the previous subsection.\footnote{In presence of cuts, we cannot say any more that these components are ``above'' $\mathcal{Q}$.} The reader already knows that, here, ``connected'' has nothing to do with the ``connected proof-nets" of \cite{LPSinjectivity}: there, the crucial tool used was rather the ``bridges'', which put together two doors of the same copy of some box only if they are connected in the LPS of the proof-net.

\begin{defi}\label{defin: connected components}
Let $k \in \Nat$. 
Let $S$ be a differential in-PS. Let $\mathcal{Q} \subseteq \exponentialportsatzero{S}$. We set 
$$\nontrivialconnected{S}{\mathcal{Q}}{k} = \left\lbrace T \trianglelefteq_\mathcal{Q} S ; \begin{array}{l} 
 (\cosize{T} < k \textit{ and } \conclusions{\groundof{T}} \subseteq \mathcal{Q} \textit{ and} \\ \portsatzero{T} \setminus \mathcal{Q} \not= \emptyset \textit{ and } T \textit{ is connected through ports not in $\mathcal{Q}$})  \end{array} \right\rbrace$$
\end{defi}

\begin{rem}\label{rem: conclusions of connected components}
If $T \in \nontrivialconnected{S}{\mathcal{Q}}{k}$, then $\conclusions{\groundof{T}} = \mathcal{Q} \cap \portsatzero{T}$.
\end{rem}

A port at depth $0$ of $S$ that is not in $\mathcal{Q}$ cannot belong to two different components:

\begin{fact}\label{fact: two components with a common port}
Let $k \in \Nat$. Let $S$ be a differential in-PS. Let $\mathcal{Q} \subseteq \portsatzero{S}$. Let $T, T' \in \nontrivialconnected{S}{\mathcal{Q}}{k}$ such that $(\portsatzero{T} \cap \portsatzero{T'}) \setminus \mathcal{Q} \not= \emptyset$. Then $T = T'$.
\end{fact}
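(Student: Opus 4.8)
The plan is to reduce the equality $T = T'$ to the equality of the port sets $\portsatzero{T} = \portsatzero{T'}$. Both $T$ and $T'$ are substructures of $S$ relative to the same $\mathcal{Q}$ (membership in $\nontrivialconnected{S}{\mathcal{Q}}{k}$ entails $T \trianglelefteq_{\mathcal{Q}} S$, hence $T \sqsubseteq_{\mathcal{Q}} S$, and likewise for $T'$), so once $\portsatzero{T} = \portsatzero{T'}$ is established, Remark~\ref{rem: unique substructure} will immediately give $T = T'$. By symmetry of the hypotheses it will suffice to prove a single inclusion, say $\portsatzero{T} \subseteq \portsatzero{T'}$.

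First I would record the auxiliary fact that coherence passes to substructures: whenever $T \sqsubseteq_{\mathcal{Q}} S$ and $q \coh_T q'$, one has $q \coh_S q'$. This is a routine unfolding of the definition of $\coh$, using that the axioms, cuts and wires of $T$ are those of $S$ restricted to $\portsatzero{T}$, and that for each box $o \in \boxesatzero{T}$ one has $\temporaryConclusions{T}{o} = \temporaryConclusions{S}{o}$ and $\target{T}(o, \cdot) = \target{S}(o, \cdot)$; each of the four clauses defining $q \coh_T q'$ is then subsumed by the corresponding clause for $q \coh_S q'$.

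The core of the argument would be a propagation along a path. Fix a witness $p_0 \in (\portsatzero{T} \cap \portsatzero{T'}) \setminus \mathcal{Q}$ and an arbitrary $p \in \portsatzero{T}$. Since $T$ is connected through ports not in $\mathcal{Q}$, there is a path $(q_0, \ldots, q_n)$ in $T$ from $p_0$ to $p$ without crossing $\mathcal{Q}$; because the first vertex $q_0 = p_0$ lies outside $\mathcal{Q}$, the path condition $(q_j \in \mathcal{Q} \Rightarrow j = 0)$ forces $q_j \notin \mathcal{Q}$ for every $j \in \{ 0, \ldots, n-1 \}$. I would then prove $q_j \in \portsatzero{T'}$ by induction on $j$: the base case is the choice of $p_0$, and for the step I use that $q_j \in \portsatzero{T'} \setminus \mathcal{Q}$, that $q_j \coh_T q_{j+1}$ upgrades to $q_j \coh_S q_{j+1}$ by the auxiliary fact, and finally that $T' \trianglelefteq_{\mathcal{Q}} S$ forces any $\coh_S$-neighbour of a port of $T'$ lying outside $\mathcal{Q}$ to belong to $T'$. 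Taking $j = n$ then yields $p = q_n \in \portsatzero{T'}$, completing the inclusion.

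The only genuinely delicate point is bookkeeping rather than depth. I must be careful that the intermediate vertices of the path indeed avoid $\mathcal{Q}$, since this is exactly what lets me invoke the $\trianglelefteq_{\mathcal{Q}}$ condition at each step (that condition is only available for ports outside $\mathcal{Q}$); and that the final vertex $p$ — which may itself lie in $\mathcal{Q}$ when it is a conclusion of $T$ — is reached in one last step from $q_{n-1} \notin \mathcal{Q}$, so that the uniform induction covers it with no special case. The upgrade $\coh_T \Rightarrow \coh_S$ is the other spot to check, but it is purely mechanical given the definition of $\sqsubseteq_{\mathcal{Q}}$.
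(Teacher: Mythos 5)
Your proof is correct and follows exactly the paper's route: the paper's entire proof is the reduction via Remark~\ref{rem: unique substructure} to checking $\portsatzero{T} = \portsatzero{T'}$, leaving that check implicit. Your path-propagation argument (using connectivity through ports not in $\mathcal{Q}$, the upgrade $\coh_T \Rightarrow \coh_S$, and the $\trianglelefteq_{\mathcal{Q}}$ closure condition) is precisely the intended verification of that port equality, carried out correctly.
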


\begin{proof}
By Remark~\ref{rem: unique substructure}, it is enough to check that $\portsatzero{T} = \portsatzero{T'}$.
\end{proof}

\begin{exa}
We have 
\begin{itemize}
\item $\Card{\nontrivialconnected{S}{\{ p_1, p_4, p_5, p_6, p_7, o_2 \}}{10}} = 241$
\item and $\Card{\nontrivialconnected{S}{\{ p_4, p_5, p_6, p_7, o_4 \}}{10}} = 320$
\end{itemize}
with $S = S1_1 \oplus S1_2 \oplus S1_3$, where $S1_1$, $S1_2$ and $S1_3$ are the differential PS's of Figures~\ref{fig: termofTaylor{R}{e}{1}: 1}, \ref{fig: termofTaylor{R}{e}{1}: 2} and \ref{fig: termofTaylor{R}{e}{1}: 3} respectively.
\end{exa}

The connected components we consider do not mix several copies of boxes. More precisely:

\begin{lem}\label{lemma: ports of connected components}
Let $R$ be an in-PS. Let $k > \cosize{R}$. Let $e$ be a $k$-heterogeneous pseudo-experiment on $R$. Let $i \in \Nat$. Let $\mathcal{P} \subseteq \portsatzero{\termofTaylor{R}{e}{i}}$. Let $T \in \nontrivialconnected{\termofTaylor{R}{e}{i}}{\mathcal{P}}{k}$. Let $o \in \boxesatzerogeq{R}{i}$ and $e_o \in e(o)$ such that $\portsatzero{T} \cap \portsatzero{R \langle o, i, e_o \rangle} \not= \emptyset$. Then the following properties hold:
\begin{enumerate}
\item\label{fact: ports} $\portsatzero{T} \subseteq (\target{R}[\{ o \} \times \temporaryConclusions{R}{o}] \cap \mathcal{P}) \cup \portsatzero{R \langle o, i, e_o \rangle}$
\item\label{fact: shallow conclusions} $\portsatzero{T} \cap \target{R}[\{ o \} \times \temporaryConclusions{R}{o}] \subseteq \conclusions{\groundof{T}}$
\item\label{fact: wires} $\wiresatzero{T} \subseteq \portsatzero{R \langle o, i, e_o \rangle}$
\end{enumerate}
\end{lem}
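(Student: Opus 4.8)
Let me parse what Lemma~\ref{lemma: ports of connected components} claims. We have a $k$-heterogeneous pseudo-experiment $e$ on an in-PS $R$, with $k > \cosize{R}$. We form $\termofTaylor{R}{e}{i}$, which expands boxes of depth $\geq i$. For a box $o$ at depth $0$ with depth $\geq i$, and a pseudo-experiment $e_o \in e(o)$ used for some copy, we get $R\langle o, i, e_o\rangle = \langle o, \langle e_o, \termofTaylor{B_R(o)}{e_o}{i}\rangle\rangle$, which is one particular copy of the expanded content of box $o$.

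A connected component $T \in \nontrivialconnected{\termofTaylor{R}{e}{i}}{\mathcal{P}}{k}$ is a substructure that's connected via ports not in $\mathcal{P}$, with conclusions in $\mathcal{P}$, with cosize $< k$.

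The claim: if $T$ touches copy $R\langle o, i, e_o\rangle$ (i.e., shares a depth-0 port with it), then:
1. All ports of $T$ lie in (the border below $o$ that's in $\mathcal{P}$) ∪ (this single copy).
2. Ports of $T$ in the border below $o$ are conclusions of $T$.
3. Wires of $T$ lie inside this single copy.

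**Why this should be true.**

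The key structural fact is that in $\termofTaylor{R}{e}{i}$, the only ports at depth $0$ that don't come from a single copy $R\langle o, i, e_o\rangle$ are: (a) ports of $R^{\leq i}$ (the unexpanded part), and (b) the gluing happens only at contractions $\target{R}(o, \cdot)$ below the boxes.

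The gluing operator $\bigoplus$ glues copies together ONLY at shallow conclusions that are contractions. So two different copies $R\langle o, i, e_o\rangle$ and $R\langle o, i, e_o'\rangle$ share ports only at these contraction ports, which form the border $\target{R}[\{o\} \times \temporaryConclusions{R}{o}]$.

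Now, $T$ is connected via ports NOT in $\mathcal{P}$. Since the border ports where copies are glued together are in $\mathcal{P}$ (we take $\mathcal{P}$ to be a subset of critical ports below the boxes), paths can't continue through them. So once $T$ enters a copy $R\langle o, i, e_o\rangle$, it can only exit through border ports, which stop the path.

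**Proof plan.**

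The plan is to argue by analyzing the connectivity relation $\coh_{\termofTaylor{R}{e}{i}}$ restricted to the copy $R\langle o, i, e_o\rangle$. The crucial observation is how ports of one copy connect to ports outside it. First I would establish the structural fact about $\termofTaylor{R}{e}{i} = S@t$ where $S = R^{\leq i} \oplus \bigoplus_o \bigoplus_{e_o} R\langle o, i, e_o\rangle$: the only depth-$0$ ports shared between $\portsatzero{R\langle o, i, e_o\rangle}$ and the rest of $S$ are border ports $\target{R}[\{o\}\times\temporaryConclusions{R}{o}]$, because $\bigoplus$ glues only at shallow conclusions that are contractions, and the wires added by $@t$ have targets in $\exponentialportsatzero{R}$ which are exactly these border contractions.

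Then, for Property~\ref{fact: ports}, I take a path in $T$ without crossing $\mathcal{P}$ starting from a port $p_0 \in \portsatzero{T} \cap \portsatzero{R\langle o, i, e_o\rangle}$. Since $\mathcal{P}$ contains the border ports $\target{R}[\{o\}\times\temporaryConclusions{R}{o}]$ (this comes from Proposition~\ref{prop: critical ports below new boxes}, where $\mathcal{P}$ is taken as a subset of $\criticalports{\termofTaylor{R}{e}{i}}{k}{\mathcal{N}_i(e)}$), once a path reaches a border port it must stop. I argue by induction on path length that any port reachable stays in $\portsatzero{R\langle o, i, e_o\rangle}$ until it possibly hits a border port, which is in $\mathcal{P}$ and terminates the path. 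The cosize bound $k > \cosize{R}$ guarantees the component genuinely lies in one copy rather than spanning multiple copies.

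For Property~\ref{fact: shallow conclusions}, a border port $p = \target{R}(o, q)$ is a contraction; by $T \trianglelefteq_{\mathcal{P}} \termofTaylor{R}{e}{i}$ and $p \in \mathcal{P}$, together with Remark~\ref{rem: conclusions of connected components} giving $\conclusions{\groundof{T}} = \mathcal{P}\cap\portsatzero{T}$, such a $p$ is a conclusion of $\groundof{T}$. For Property~\ref{fact: wires}, a wire $w$ of $T$ has target in $\portsatzero{T}$; border ports are contractions which are never sources of wires leaving the copy (they receive wires, via $\target{\groundof{S}}$, from inside), so wires of $T$ sit inside $\portsatzero{R\langle o, i, e_o\rangle}$.

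**Main obstacle.**

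The hard part will be carefully verifying the structural fact that copies are glued \emph{only} at border contractions, tracking through the somewhat intricate definition of $\termofTaylor{R}{e}{i}$ with its $\mathcal{W}_0$ and $\mathcal{W}_{>0}$ wire sets and the $@t$ operation. In particular, I must confirm that no wire or axiom or cut of $\termofTaylor{R}{e}{i}$ directly connects a non-border port of one copy to a port of another copy or of $R^{\leq i}$. This reduces to unwinding that all the glued ports are shallow contraction-conclusions of the copies and that the added wires target only $\exponentialportsatzero{R}$ at depth $0$, i.e. the borders. Once that is pinned down, the path argument is routine, so the real care is in the bookkeeping of the gluing.
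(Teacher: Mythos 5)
Your plan founders on one step: you assume that the border ports $\target{R}[\{o\}\times\temporaryConclusions{R}{o}]$ belong to $\mathcal{P}$, justifying this by how the lemma will later be instantiated (``we take $\mathcal{P}$ to be a subset of critical ports''). But the lemma is stated --- and must be proved --- for an \emph{arbitrary} $\mathcal{P} \subseteq \portsatzero{\termofTaylor{R}{e}{i}}$; nothing in the hypotheses places the border inside $\mathcal{P}$. Indeed the paper applies the lemma with sets $\mathcal{P}$ that need not contain the border of the particular box $o$ that $T$ happens to touch: in Proposition~\ref{prop: crucial}, $\mathcal{P}$ is any subset of $\exponentialportsatzero{\termofTaylor{R}{e}{i+1}} \setminus \boxesatzero{\termofTaylor{R}{e}{i}}$, and in Lemma~\ref{lem: connected components of R_o} it is an arbitrary subset of $\portsatzero{\termofTaylor{R}{e}{i}}$. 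So your path argument, whose stopping mechanism is precisely ``border ports lie in $\mathcal{P}$'', has no justification as written; this is a genuine gap, not bookkeeping.

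What must be proved instead --- and is the real content of the paper's proof --- is the weaker statement: every border port that \emph{belongs to} $\portsatzero{T}$ lies in $\mathcal{P}$. This is where $k$-heterogeneity and the cosize bound in the definition of $\nontrivialconnected{\termofTaylor{R}{e}{i}}{\mathcal{P}}{k}$ actually do the work: (i) by Proposition~\ref{prop: arity}, any border port $\target{R}(o,p)$ has arity $\geq k$ in $\termofTaylor{R}{e}{i}$, since it has at least one premise at depth $>i$ in $R$ and each such premise is duplicated $\sum e^\#(b_R^{\geq i}(\cdot)) = k^j \geq k$ times; (ii) since $T \trianglelefteq_{\mathcal{P}} \termofTaylor{R}{e}{i}$, any port in $\portsatzero{T}\setminus\mathcal{P}$ keeps its full arity inside $T$; but $\cosize{T} < k$ by definition of the component set, so no border port can lie in $\portsatzero{T}\setminus\mathcal{P}$. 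With this in hand, your path argument (the first port where a path exits the copy must be a border port, hence in $\mathcal{P}$, hence the path ends there) goes through, and Properties~2 and~3 follow as you say. Note also that your closing remark attributes the ``one copy only'' phenomenon to the hypothesis $k > \cosize{R}$; the operative comparison is $\cosize{T} < k$ against the arity $\geq k$ of border ports, not a bound involving $\cosize{R}$.
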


\begin{proof}
Notice first that the two following properties hold:
\begin{enumerate}[label=(\roman*)]
\item\label{property: 1} For any $p \in \temporaryConclusions{R}{o}$, we have $\arity{\termofTaylor{R}{e}{i}}(\target{R}(o, p))  \geq k$. Indeed, for any $p \in \temporaryConclusions{R}{o}$, the set $\{ p' \in \portsatdepthgreater{R}{i} ; \target{R}(p') = p \}$ is non-empty, hence this property is obtained by applying Proposition~\ref{prop: arity}.
\item\label{property: 2} For any $p \in \portsatzero{T} \setminus \mathcal{P}$, we have $\arity{T}(p) = \arity{\termofTaylor{R}{e}{i}}(p)$.
\end{enumerate}
By $\ref{property: 1}$ and $\ref{property: 2}$, we obtain $$\target{R}[\{ o \} \times \temporaryConclusions{R}{o}] \cap \portsatzero{T} \subseteq \mathcal{P} \: \: \: (\ast)$$

Let $p \in \portsatzero{T} \setminus \portsatzero{R \langle o, i, e_o \rangle}$. There exist $q_0 \in \portsatzero{T} \setminus \mathcal{P}$ and a path $(q_0, \ldots, q_n)$ in $T$ from $q_0$ to $p = q_n$ without crossing $\mathcal{P}$. We set $\iota_0 = \min \{ \iota \in \{ 1, \ldots, n \} ; q_{\iota} \notin \portsatzero{R \langle o, i, e_o \rangle} \}$. Since $q_{\iota_0} \coh_ {\termofTaylor{R}{e}{i}} q_{\iota_0+1}$, there exists $p' \in \temporaryConclusions{R}{o}$ such that $q_{\iota_0+1} = \target {R}(o, p')$, hence, by $(\ast)$, $q_{\iota_0+1} \in \mathcal{P}$; we thus have $\iota_0+1 = n$ and then $p = q_n \in \target{R}[\{ o \} \times \temporaryConclusions{R}{o}] \cap \mathcal{P}$. We showed Property~\ref{fact: ports}.

Property~\ref{fact: shallow conclusions} is obtained by applying Property~\ref{fact: ports} and Remark~\ref{rem: conclusions of connected components}.

Finally, Property~\ref{fact: wires} is an immediate consequence of Properties~\ref{fact: ports} and \ref{fact: shallow conclusions}.
\end{proof}

As a consequence, if $T \in \nontrivialconnected{\termofTaylor{R}{e}{i}}{\mathcal{P}}{k}$ for some $k$-heterogeneous pseudo-experiment $e$ on $R$, then $T$ has at most one co-contraction, which is necessarily a conclusion of $T$ (hence an element of $\mathcal{P}$) and of arity $1$.

A PS $R$ with cuts might have some connected components without any conclusion; copies of such components will occur in the PS $\termofTaylor{R}{e}{i}$ and we want to recover from which boxes they come from. That is why we will consider $k$-heterogeneous experiments with $k > \numberInvisibleComponents{R}$, where $\numberInvisibleComponents{R}$ is defined as follows:

\begin{defi}\label{defin: components}
Let $R$ be an in-PS. We set 
$$\connectedcomponents{R} = \left\lbrace T \trianglelefteq_{\emptyset} R ; \begin{array}{l} 
 (\portsatzero{T} \not= \emptyset \wedge T \textit{ is connected through ports not in $\emptyset$})   \end{array} \right\rbrace$$ 

We denote by $\connectedcomponentsContaining{R}$ the function $\ports{R} \to \connectedcomponents{R}$ that associates with every $p \in \ports{R}$ the unique $T \in \connectedcomponents{R}$ such that $p \in \ports{T}$.

We define, by induction on $\depthof{R}$, the integer $\numberInvisibleComponents{R}$ as follows: 
$$\numberInvisibleComponents{R} = \Card{\{ U \in \connectedcomponents{R} ; \conclusions{U} = \emptyset \}} + \sum_{o \in \boxesatzero{R}} \numberInvisibleComponents{B_R(o)}$$
\end{defi}

Notice that, if $R$ is a cut-free in-PS, then $\numberInvisibleComponents{R} = 0$.

\begin{exa}
If $R$ is the PS of Figure~\ref{fig: new_example}, then $\Card{\connectedcomponents{B_R(o_3)}} = 4$ and $\numberInvisibleComponents{R} = 1$.
\end{exa}

The set $\connectedcomponents{R}$ is an alternative way to describe an in-PS $R$:

\begin{fact}\label{fact: connected components}
Let $R$ be an in-PS. We have $R = \bigoplus \connectedcomponents{R}$.
\end{fact}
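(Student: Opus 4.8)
The plan is to recognize $\connectedcomponents{R}$ as the family of connected components of the graph on $\portsatzero{R}$ whose edges are given by $\coh_R$, and then to check that $\bigoplus$ merely reassembles this partition into $R$.

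First I would record the dictionary between $\connectedcomponents{R}$ and $\coh_R$. For any $T \trianglelefteq_{\emptyset} R$ the set $\portsatzero{T}$ is closed under $\coh_R$ (this is exactly the defining condition of $\trianglelefteq_{\emptyset}$, since $\mathcal{Q} = \emptyset$), and because all the data entering the definition of $\coh$ (axioms, cuts, wires with their targets, and the targets of box doors) are simply restricted when one passes to a substructure $\sqsubseteq_{\emptyset} R$ with a $\coh_R$-closed port set, one gets ${\coh_T} = {\coh_R} \cap (\portsatzero{T} \times \portsatzero{T})$. Hence the clause ``$T$ is connected through ports not in $\emptyset$'' says precisely that $\portsatzero{T}$ is a single $\coh_R$-connectivity class. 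Conversely, every nonempty $\coh_R$-connectivity class $\mathcal{P}$ is $\coh_R$-closed, so by Remark~\ref{rem: unique substructure} there is a unique $\restriction{R}{\mathcal{P}} \sqsubseteq_{\emptyset} R$ with $\portsatzero{\restriction{R}{\mathcal{P}}} = \mathcal{P}$; closedness upgrades $\sqsubseteq_{\emptyset}$ to $\trianglelefteq_{\emptyset}$ and the identity of the two $\coh$ relations gives connectedness, so $\restriction{R}{\mathcal{P}} \in \connectedcomponents{R}$. Since distinct classes are disjoint, two elements of $\connectedcomponents{R}$ sharing a depth-$0$ port have the same port set, hence coincide by uniqueness; thus $T \mapsto \portsatzero{T}$ is a bijection from $\connectedcomponents{R}$ onto the set of $\coh_R$-connectivity classes.

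These classes partition $\portsatzero{R}$, so the port sets of distinct components are disjoint and $\connectedcomponents{R}$ is gluable: the intersection condition is vacuous, while the condition on pairs $(p_1, p_2) \in \portsatzero{T}$ is inherited from $R$ being a differential in-PS (any such pair lies in $\portsatzero{R}$, whence $p_1 \notin \boxesatzero{R} \supseteq \boxesatzero{T'}$, and likewise for a nested pair). It then remains to compare $\bigoplus \connectedcomponents{R}$ with $R$. By the definition of $\bigoplus$, its set of depth-$0$ ports, its wires, its labelling, its axioms, its cuts, its left premises, its depth-$0$ target function, its set of depth-$0$ boxes, its box-content function and its door-target function are, respectively, the unions (or the common values) of the corresponding data of the members of $\connectedcomponents{R}$. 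Since the classes partition $\portsatzero{R}$ and each wire, axiom, cut, left premise and box door of $R$ has all its endpoints inside a single class, every such datum of $R$ is contributed by exactly the component containing it; hence all these data coincide with those of $R$, and in particular $B_{\bigoplus \connectedcomponents{R}} = B_R$ because $B_{\restriction{R}{\mathcal{P}}} = \restriction{B_R}{\boxesatzero{R} \cap \mathcal{P}}$. Therefore $\bigoplus \connectedcomponents{R} = R$.

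The step I expect to be most delicate is the verification that a $\coh_R$-connectivity class $\mathcal{P}$ containing a box $o \in \boxesatzero{R}$ also contains every target $\target{R}(o, q)$ with $q \in \temporaryConclusions{R}{o}$, which is exactly what is needed for $\restriction{R}{\mathcal{P}}$ to carry the full restriction of $\target{R}$ and hence to be a well-defined substructure. This is where the main door intervenes: one has $\aboveBang{R}{o} \in \temporaryConclusions{R}{o}$ with $\target{R}(o, \aboveBang{R}{o}) = o$, so pairing the main door with an arbitrary door $q$ in the fourth clause of the definition of $\coh$ yields $o \coh_R \target{R}(o, q)$; thus $o$ and all its door targets always lie in the same class. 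Once this closure is in place, the two $\coh$ relations coincide on each class and the coordinatewise comparison above is routine.
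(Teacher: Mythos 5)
Your proof is correct. The paper states this Fact without any proof, treating it as immediate from the definitions, and your argument --- identifying $\connectedcomponents{R}$ with the $\coh_R$-connectivity classes of $\portsatzero{R}$ and checking coordinatewise that $\bigoplus$ reassembles the data of $R$, with the main-door observation $\target{R}(o, \aboveBang{R}{o}) = o$ guaranteeing that each class containing a box also contains all of its door targets --- is exactly the routine verification the paper leaves implicit.
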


\begin{defi}\label{defin: k big enough}
Let $R$ be an in-PS. We set 
$$\basis{R} = \max \{ \Card{\boxes{R}}, \cosize{R}, \numberInvisibleComponents{R}, 1 \} + 1$$
\end{defi}

\begin{lem}\label{lemm: components adding contractions}
Let $R$ and $R_o$ be two in-PS's. Let $o \in \boxesatzero{R}$. Let $\varphi$ be a bijection $\contractionsUnder{R}{o} \simeq \mathcal{Q'}$ such that $R_o = \varphi \cdot_o R$. Let $k > \cosize{R}$. There exists a bijection $\theta: \nontrivialconnected{B_{R_o}(o)}{\mathcal{Q'}}{k} \simeq \connectedcomponents{B_R(o)} \setminus \{ \connectedcomponentsContaining{B_R(o)}(\aboveBang{R}{o}) \}$ such that, for any $T \in \nontrivialconnected{B_{R_o}(o)}{\mathcal{Q'}}{k}$, we have $\theta(T) = \overline{T}$.
\end{lem}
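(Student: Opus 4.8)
The plan is to exhibit the inverse of $\theta$ explicitly, sending each component $U \in \connectedcomponents{B_R(o)} \setminus \{\connectedcomponentsContaining{B_R(o)}(\aboveBang{R}{o})\}$ to the component of $B_{R_o}(o)$ obtained by attaching to $U$ the fresh contractions of $\mathcal{Q'}$ sitting immediately below its auxiliary doors. First I would unfold the definition $B_{R_o}(o) = (B_R(o) \oplus \bigoplus_{q' \in \mathcal{Q'}} \contr_{q'})@t$: its ground-structure is that of $B_R(o)$ together with the new contraction ports $\mathcal{Q'}$, and the only new instances of $\coh_{B_{R_o}(o)}$ are the pairs $\{p, t(p)\}$ for $p \in \temporaryConclusions{R}{o} \setminus \{\aboveBang{R}{o}\}$, with $t(p) = \varphi(\target{R}(o,p)) \in \mathcal{Q'}$. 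In particular, since $\mathcal{Q'}$ is disjoint from $\portsatzero{B_R(o)}$, we have $\portsatzero{B_R(o)} = \portsatzero{B_{R_o}(o)} \setminus \mathcal{Q'}$, and restricted to these ports $\coh_{B_{R_o}(o)}$ coincides with $\coh_{B_R(o)}$; thus connectivity through ports not in $\mathcal{Q'}$ inside $B_{R_o}(o)$ is exactly connectivity inside $B_R(o)$. The port $\aboveBang{R}{o}$ is the unique temporary conclusion of $o$ not in $\dom{t}$, so it stays a conclusion of $\groundof{B_{R_o}(o)}$ lying outside $\mathcal{Q'}$, which will forbid its component from belonging to $\nontrivialconnected{B_{R_o}(o)}{\mathcal{Q'}}{k}$.

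For the forward direction, given $U \in \connectedcomponents{B_R(o)}$ with $\aboveBang{R}{o} \notin \portsatzero{U}$, I would let $T_U \sqsubseteq_{\emptyset} B_{R_o}(o)$ be the substructure on $\portsatzero{U} \cup \{\, t(p) ; p \in \portsatzero{U} \cap (\temporaryConclusions{R}{o} \setminus \{\aboveBang{R}{o}\}) \,\}$ and check the defining clauses of $\nontrivialconnected{B_{R_o}(o)}{\mathcal{Q'}}{k}$. The relation $T_U \trianglelefteq_{\mathcal{Q'}} B_{R_o}(o)$ holds by maximality of $U$ together with the description of the new $\coh$-edges; connectivity through ports not in $\mathcal{Q'}$ follows from connectivity of $U$, since each added contraction of $\mathcal{Q'}$ is $\coh$-adjacent to an auxiliary door of $U$ and the path condition only forbids $\mathcal{Q'}$ at interior vertices; the clause $\conclusions{\groundof{T_U}} \subseteq \mathcal{Q'}$ holds because the auxiliary doors of $U$ have become wires towards $\mathcal{Q'}$ while $\aboveBang{R}{o} \notin \portsatzero{U}$, so the only remaining ground-conclusions of $T_U$ are the attached $\mathcal{Q'}$-contractions; and $\portsatzero{T_U} \setminus \mathcal{Q'} = \portsatzero{U} \neq \emptyset$. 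The delicate clause, and the main obstacle, is $\cosize{T_U} < k$: for $q' \in \mathcal{Q'}$ I would bound $\arity{T_U}(q')$ by the number of auxiliary doors of $U$ mapped to $q'$, which is at most the number of premises in $R$ of the contraction $\varphi^{-1}(q') \in \contractionsUnder{R}{o}$, hence at most $\cosize{R} < k$, while every other port of $T_U$ keeps its arity in $B_R(o)$, again bounded by $\cosize{R} < k$. This is precisely where the hypothesis $k > \cosize{R}$ is used.

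Finally, I would check that $T \mapsto \overline{T}$ and $U \mapsto T_U$ are mutually inverse. Removing the shallow ground-conclusions $\mathcal{Q'} \cap \portsatzero{T_U}$ returns precisely $U$, so $\overline{T_U} = U$ and $\theta(T_U) = \overline{T_U} = U$ matches the prescribed formula. Conversely, given $T \in \nontrivialconnected{B_{R_o}(o)}{\mathcal{Q'}}{k}$, the set $\portsatzero{T} \setminus \mathcal{Q'}$ is a single $\connectedcomponents{B_R(o)}$-component $U$ by the first paragraph, the clause $T \trianglelefteq_{\mathcal{Q'}} B_{R_o}(o)$ forces $T = T_U$, and $U \neq \connectedcomponentsContaining{B_R(o)}(\aboveBang{R}{o})$ because otherwise $\aboveBang{R}{o}$ would be a ground-conclusion of $T$ outside $\mathcal{Q'}$, contradicting $\conclusions{\groundof{T}} \subseteq \mathcal{Q'}$. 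Hence $\theta$ is a well-defined bijection satisfying $\theta(T) = \overline{T}$, as required.
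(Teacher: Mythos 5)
Your strategy coincides with the paper's: one direction is $T \mapsto \overline{T}$, the other glues fresh contractions of $\mathcal{Q'}$ under the auxiliary doors of a component $U$, and the two maps are checked to be mutually inverse. Your explicit verification of $\cosize{T_U} < k$ — which the paper leaves implicit — is correct, and it is indeed the only place where the hypothesis $k > \cosize{R}$ is needed.

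There is, however, a concrete flaw in your construction of the inverse, inherited from your description of the new coherence edges. You define the port set of $T_U$ as $\portsatzero{U} \cup \{\, t(p) ; p \in \portsatzero{U} \cap (\temporaryConclusions{R}{o} \setminus \{ \aboveBang{R}{o} \}) \,\}$. But $\temporaryConclusions{R}{o}$ is a set of \emph{conclusions of $B_R(o)$}, and such a conclusion may be non-shallow, i.e. of the form $(o', p)$ with $o' \in \boxesatzero{B_R(o)}$: this is precisely the situation that motivates in-PS's, since a chain of auxiliary doors may cross several box boundaries before reaching its contraction (the paper's Figure~\ref{figure: R_o3} shows exactly this, with the new $\mathcal{Q'}$-contractions of $B_{\varphi \cdot_{o_3} R}(o_3)$ sitting under conclusions of the nested boxes $o$ and $o'$). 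A non-shallow conclusion $(o', p)$ belongs to $\ports{U}$ but not to $\portsatzero{U}$, so your $T_U$ omits the contraction $\varphi(\target{R}(o, (o', p))) \in \mathcal{Q'}$ attached to it. This breaks the clause $T_U \trianglelefteq_{\mathcal{Q'}} B_{R_o}(o)$: in $B_{R_o}(o)$ the port $(o', p)$ is a temporary conclusion of $o'$ with target in $\mathcal{Q'}$, so by the fourth clause of the definition of $\coh$ the box $o'$ itself (the target of $\aboveBang{B_{R_o}(o)}{o'}$) satisfies $o' \coh_{B_{R_o}(o)} \varphi(\target{R}(o, (o', p)))$, with $o' \in \portsatzero{T_U} \setminus \mathcal{Q'}$ and $\varphi(\target{R}(o, (o', p))) \notin \portsatzero{T_U}$. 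Hence your $T_U$ is not an element of $\nontrivialconnected{B_{R_o}(o)}{\mathcal{Q'}}{k}$, the claimed inverse is not well defined, and for the same reason the identity $T = T_U$ in your surjectivity step fails (a genuine $T$ contains these contractions, your $T_U$ does not). The same blind spot makes your description of the new $\coh_{B_{R_o}(o)}$-edges inaccurate: for a non-shallow door $(o', p)$ the new edges do not link ``$p$ and $t(p)$'' (not even a pair of ports at depth $0$), but link the $\mathcal{Q'}$-contraction to the depth-$0$ targets of the \emph{other} temporary conclusions of $o'$. The repair is exactly the paper's definition $T_U = (U \oplus \bigoplus_{p \in \temporaryConclusions{R}{o} \cap \ports{U}} \contr_{\varphi(\target{R}(o, p))})@t$, with $\ports{U}$ in place of $\portsatzero{U}$; with that change (and wiring the non-shallow doors through $\target{T_U}$ rather than through depth-$0$ wires), your remaining checks — coincidence of $\coh$ on the old ports, maximality, connectivity, $\conclusions{\groundof{T_U}} \subseteq \mathcal{Q'}$, the cosize bound, and the mutual-inverse verification — go through as in the paper.
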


\begin{proof}
Notice first $(\forall p, q \in \portsatzero{B_R(o)}) (p \coh_{B_R(o)} q \Rightarrow p \coh_{B_{R_o}(o)} q) (\ast)$.

We check that, for any $T \in \nontrivialconnected{B_{R_o}(o)}{\mathcal{Q'}}{k}$, we have $\overline{T} \in \connectedcomponents{B_R(o)} \setminus \{ \connectedcomponentsContaining{B_R(o)}(\aboveBang{R}{o}) \}$; let $T \in \nontrivialconnected{B_{R_o}(o)}{\mathcal{Q'}}{k}$:
\begin{itemize}
\item By Remark~\ref{rem: substructure and conclusions}, $\portsatzero{\overline{T}} = \portsatzero{T} \setminus \mathcal{Q'} \subseteq (\portsatzero{B_{R_o}(o)} \setminus \{ \aboveBang{R}{o} \}) \setminus \mathcal{Q'} = \portsatzero{B_{R}(o)} \setminus \{ \aboveBang{R}{o} \}$;
\item 
\begin{eqnarray*}
\wiresatzero{\overline{T}} & = & \{ w \in \wiresatzero{T} ; \target{\groundof{T}}(w) \notin \mathcal{Q'} \}\\
&  = & \{ w \in (\wiresatzero{B_{R_o}(o)} \cap \portsatzero{T}) \setminus \mathcal{Q'} ; \target{\groundof{B_{R_o}(o)}}(w) \in \portsatzero{T} \setminus \mathcal{Q'} \}\\
&  = & \{ w \in (\wiresatzero{B_R(o)} \cap \portsatzero{\overline{T}} ; \target{\groundof{B_R(o)}}(w) \in \portsatzero{\overline{T}} \}
\end{eqnarray*}
\item by $(\ast)$, we have $\overline{T} \trianglelefteq_{\emptyset} B_R(o)$
\item and $(\forall p \in \portsatzero{T} \setminus \mathcal{Q'}) (\forall q \in \portsatzero{T} \setminus \mathcal{Q'}) (p \coh_{T} q \Rightarrow p \coh_{\overline{T}} q)$, hence $\overline{T}$ is connected through ports not in $\emptyset$.
\end{itemize}

Now, for any $U \in \connectedcomponents{B_R(o)} \setminus \{ \connectedcomponentsContaining{B_R(o)}(\aboveBang{R}{o}) \}$, we set $T_U = (U \oplus \bigoplus_{p \in \temporaryConclusions{R}{o} \cap \ports{U}} \contr_{\varphi(\target{R}(o, p))})@t$, where $t$ is the function $\temporaryConclusions{R}{o} \cap \ports{U} \to \varphi[\target{R}[\{ o \} \times (\temporaryConclusions{R}{o} \cap \ports{U})]]$ that associates with every $p \in \temporaryConclusions{R}{o} \cap \ports{U}$ the port $\varphi(\target{R}(o, p))$ of $B_{R_o}(o)$, and we check that $T_U  \in \nontrivialconnected{B_{R_o}(o)}{\mathcal{Q'}}{k}$:
\begin{itemize}
\item $\portsatzero{T_U} \subseteq \portsatzero{U} \cup \mathcal{Q'} \subseteq \portsatzero{B_R(o)} \cup \mathcal{Q'} = \portsatzero{B_{R_o}(o)}$
\item 
\begin{eqnarray*}
\wiresatzero{T_U} & = & \wiresatzero{U} \cup (\temporaryConclusions{R}{o} \cap \portsatzero{U})\\
& = & \{ w \in \wiresatzero{B_R(o)} \cap \portsatzero{U} ; \target{\groundof{B_R(o)}}(w) \in \portsatzero{U} \} \cup (\temporaryConclusions{R}{o} \cap \portsatzero{U})\\
&  = & \{ w \in \wiresatzero{B_R(o)} \cap \portsatzero{U} ; \target{\groundof{B_{R_o}(o)}}(w) \in \portsatzero{U} \}\\
&  & \cup \{ w \in \temporaryConclusions{R}{o} \cap \portsatzero{U} ; \target{\groundof{B_{R_o}(o)}}(w) \in \portsatzero{T_U} \}\\
&  = & \{ w \in \wiresatzero{B_{R_o}(o)} \cap \portsatzero{U} ; \target{\groundof{B_{R_o}(o)}}(w) \in \portsatzero{T_U} \}\\
&  = & \{ w \in (\wiresatzero{B_{R_o}(o)} \cap \portsatzero{U}) \setminus (\mathcal{Q'} \cap \exponentialportsatzero{B_{R_o}(o)} ; \target{\groundof{B_{R_o}(o)}}(w) \in \portsatzero{T_U} \}\end{eqnarray*}
\item $\conclusions{\groundof{T_U}} = \{ \varphi(\target{R}(o, p)) ; p \in \temporaryConclusions{R}{o} \cap \ports{U} \} \subseteq \mathcal{Q'}$
\item 
\begin{eqnarray*}
\target{T_U} & = & \restriction{\target{B_{R_o}(o)}}{\bigcup_{o' \in \boxesatzero{U}} (\{ o' \} \times (\temporaryConclusions{B_R(o)}{o'} \cup \{ p \in \ports{B_{B_R(o)}(o')} ; (o', p) \in \temporaryConclusions{R}{o} \}))}\\
& = & \restriction{\target{B_{R_o}(o)}}{\bigcup_{o' \in \boxesatzero{T_U}} (\{ o' \} \times (\temporaryConclusions{B_R(o)}{o'} \cup \{ p \in \ports{B_{B_R(o)}(o')} ; (o', p) \in \temporaryConclusions{R}{o} \}))}\\
& = &\restriction{\target{B_{R_o}(o)}}{\bigcup_{o' \in \boxesatzero{T_U}} (\{ o' \} \times \temporaryConclusions{B_{R_o}(o)}{o'})} 
\end{eqnarray*}
\item by $(\ast)$, we have $T_U \trianglelefteq_{\mathcal{Q'}} B_{R_o}(o)$ and $T_U$ is connected through ports not in $\mathcal{Q'}$.
\end{itemize}
Moreover we have $\overline{T_U} = U$, which shows that $\theta$ is a surjection and, for any $T \in \nontrivialconnected{B_{R_o}(o)}{\mathcal{Q'}}{k}$, we have $T = (\overline{T_U} \oplus \bigoplus_{p \in \temporaryConclusions{R}{o} \cap \ports{\overline{T_U}}} \contr_{\varphi(\target{R}(o, p))})@t$ where $t$ is the function $\temporaryConclusions{R}{o} \cap \ports{\overline{T_U}} \to \varphi[\target{R}[\{ o \} \times (\temporaryConclusions{R}{o} \cap \ports{\overline{T_U}})]]$ that associates with every $p \in \temporaryConclusions{R}{o} \cap \ports{\overline{T_U}}$ the port $\varphi(\target{R}(o, p))$ of $B_{R_o}(o)$, which shows that $\theta$ is an injection.
\end{proof}

\begin{exa}
Let us consider the PS $R$ of Figure~\ref{fig: new_example}. We set $\mathcal{Q'} = \{ p'_1, p'_4, p'_5, p'_6, p'_7 \}$ and we define the bijection $\varphi : \contractionsUnder{R}{o_2} \simeq \mathcal{Q'}$ as follows: $\varphi(p_1) = p'_1$; $\varphi(p_4) = p'_4$; $\varphi(p_5) = p'_5$; $\varphi(p_6) = p'_6$; $\varphi(p_7) = p'_7$. If $k$ is big enough, then $\nontrivialconnected{(\varphi \cdot_{o_2} R)}{\mathcal{Q'}}{k} = \{ H'_1, H'_2, H'_3 \}$, where $H'_1$, $H'_2$ and $H'_3$ and the differential in-PS's of Figure~\ref{fig: H'_1}, Figure~\ref{fig: H'_2} and Figure~\ref{fig: H'_3} respectively.
\begin{figure}
\centering
\begin{minipage}{0.25 \textwidth}
\centering
\resizebox{.35 \textwidth}{!}{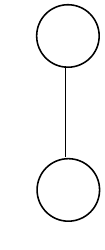}
\captionsetup{width=.9 \textwidth}
\caption{$H'_1$}
\label{fig: H'_1}
\end{minipage}\hfill
\begin{minipage}{0.3\textwidth}
\centering
\resizebox{.7 \textwidth}{!}{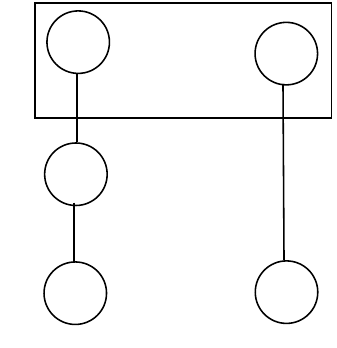}
\caption{$H'_2$}
\label{fig: H'_2}
\end{minipage}\hfill
\begin{minipage}{0.3\textwidth}
\centering
\resizebox{\textwidth}{!}{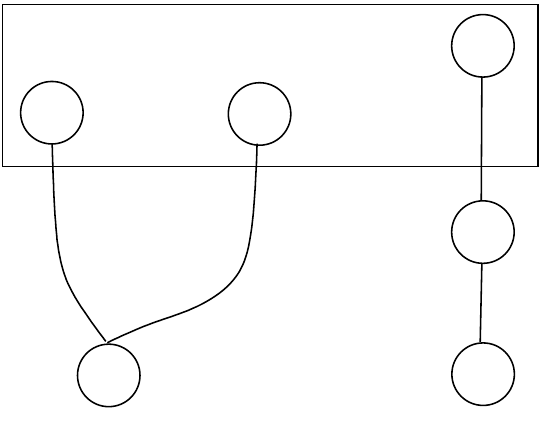}
\caption{$H'_3$}
\label{fig: H'_3}
\end{minipage}\hfill
\end{figure}
\end{exa}

\begin{lem}\label{lem: basis of R_o}
Let $R$ be an in-PS. Let $o \in \boxesatzero{R}$. Let $\mathcal{Q'}$ be some set, let $\varphi$ be some bijection $\contractionsUnder{R}{o} \simeq \mathcal{Q'}$, let $R_o$ be an in-PS obtained from $R$ by adding, according to $\varphi$, contractions as shallow conclusions to the content of the box $o$. Then $\basis{R_o} \leq \basis{R}$.
\end{lem}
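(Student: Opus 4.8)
The plan is to unfold the definition $\basis{R} = \max\{\Card{\boxes{R}}, \cosize{R}, \numberInvisibleComponents{R}, 1\} + 1$ (Definition~\ref{defin: k big enough}) and establish the three inequalities $\Card{\boxes{R_o}} \le \Card{\boxes{R}}$, $\cosize{R_o} \le \cosize{R}$ and $\numberInvisibleComponents{R_o} \le \numberInvisibleComponents{R}$ separately; since $\max$ is monotone in each argument and the constant $1$ is untouched, together they yield $\basis{R_o} \le \basis{R}$. Throughout I would use the explicit description of $R_o = \varphi \cdot_o R$ from Definition~\ref{definition: adding contractions}: one has ${R_o}^{\le 0} = R^{\le 0}$, $\boxesatzero{R_o} = \boxesatzero{R}$, $B_{R_o}(o') = B_R(o')$ for $o' \ne o$, $B_{R_o}(o) = R'_o = (B_R(o) \oplus \bigoplus_{q' \in \mathcal{Q'}} \contr_{q'})@t$, and $\target{R_o}$ is a restriction of $\target{R}$.

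The first two inequalities are routine. For the boxes: gluing the isolated contractions $\contr_{q'}$ and adding wires via $@t$ create no new box, so $\boxes{R'_o} = \boxes{B_R(o)}$, whence $\boxes{R_o} = \boxes{R}$ and in fact the counts coincide. For the co-size: the ground-structures agree and $\target{R_o}$ is a restriction of $\target{R}$, so $\arity{R_o}(p) \le \arity{R}(p) \le \cosize{R}$ for every $p \in \portsatzero{R_o} = \portsatzero{R}$; for the boxes $o' \ne o$ nothing changes, and for $R'_o$ I would bound each depth-$0$ arity: a port of $B_R(o)$ keeps its arity (the new wires all target $\mathcal{Q'}$), while a new contraction $q' = \varphi(c)$ has arity equal to the number of doors of $o$ that were premises of $c$ in $R$, which is at most $\arity{R}(c) \le \cosize{R}$. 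Recursing on the boxes of $R'_o = B_R(o)$ then gives $\cosize{R'_o} \le \cosize{R}$, hence $\cosize{R_o} \le \cosize{R}$.

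The delicate inequality is $\numberInvisibleComponents{R_o} \le \numberInvisibleComponents{R}$. Using Definition~\ref{defin: components} I would split $\numberInvisibleComponents{R_o}$ into its depth-$0$ contribution, the term $\numberInvisibleComponents{R'_o}$, and the untouched terms $\sum_{o' \ne o} \numberInvisibleComponents{B_R(o')}$, and likewise for $R$. First I would show $\numberInvisibleComponents{R'_o} = \numberInvisibleComponents{B_R(o)}$: passing from $B_R(o)$ to $R'_o$ only \emph{adds} the conclusion-ports $\mathcal{Q'}$ and wires former conclusions of $B_R(o)$ to them, so connected components can only merge and every component meeting a $q'$ acquires a conclusion; hence the conclusion-free components of $R'_o$ are exactly those of $B_R(o)$, and the recursion inside is unchanged. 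It then remains to compare the depth-$0$ invisible components, and here Lemma~\ref{lemm: components adding contractions} is the right tool: it matches the components of $B_{R_o}(o)$ lying above $\mathcal{Q'}$ with the components of $B_R(o)$ other than that of $\aboveBang{R}{o}$, which lets one track how the connectivity that box $o$ provided at depth $0$ in $R$ (making $\{o\} \cup \contractionsUnder{R}{o}$ a clique through its doors) is transferred \emph{inside} $o$ in $R_o$, the relevant doors reappearing as the non-shallow conclusions $(o, q')$ of the component of $o$. The main obstacle is precisely this depth-$0$ bookkeeping: breaking the box-$o$ clique can a priori split a connected component, and one must verify that no conclusion-free depth-$0$ component is thereby created beyond those already counted in $R$ --- this is where the precise form of $\conclusions{\cdot}$, including the non-shallow conclusions produced by the internalised contractions, has to be used carefully.
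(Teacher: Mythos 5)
Your reduction to the three inequalities $\Card{\boxes{R_o}} \leq \Card{\boxes{R}}$, $\cosize{R_o} \leq \cosize{R}$ and $\numberInvisibleComponents{R_o} \leq \numberInvisibleComponents{R}$ is the right frame, your arguments for the first two are correct (the paper treats them as immediate: its entire proof consists of the third inequality), and your claim $\numberInvisibleComponents{B_{R_o}(o)} = \numberInvisibleComponents{B_R(o)}$ is correct and is essentially the paper's second step: a conclusion-free component of $B_{R_o}(o)$ contains no port of $\mathcal{Q'}$, hence lies in $\nontrivialconnected{B_{R_o}(o)}{\mathcal{Q'}}{k}$, and the bijection $T \mapsto \overline{T}$ of Lemma~\ref{lemm: components adding contractions} fixes it, so it is a (conclusion-free) component of $B_R(o)$. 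The genuine gap is precisely the step you postpone with ``one must verify that no conclusion-free depth-$0$ component is thereby created'': that verification is the whole content of the lemma, and your proposal never performs it. For the one depth-$0$ component of $R_o$ that \emph{contains} $o$, the paper does it with an observation you allude to but never exploit: that component contains every port $(o, q')$ with $q' \in \mathcal{Q'}$, and each $(o, q')$ is a non-shallow conclusion of $R_o$ (hence of the component), so this component can be conclusion-free only if $\mathcal{Q'} = \emptyset$, in which case $R_o = R$ and there is nothing to prove.

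For the depth-$0$ components of $R_o$ that meet $\contractionsUnder{R}{o}$ but do \emph{not} contain $o$ --- exactly the split-off pieces you worry about --- the postponed step is not routine bookkeeping: it is false, and (for what it is worth) the paper's two-bullet proof does not address these components either. Concretely, let $R$ be the untyped PS of depth $1$ whose depth-$0$ ports are a box $o$, which is a shallow conclusion, and, for $i = 1, 2$, contractions $c_i, d_i, d'_i$, where $c_i$ is a wire with target $d_i$, $\{ d_i, d'_i \}$ is a cut, and $o$ has its main door on $o$ and one auxiliary door on each $c_i$ (say its content consists of three pairwise disconnected $\bot$-ports, one per door; a cut between two $\contr$-ports is legal in the untyped setting, which is the setting of this lemma and of Proposition~\ref{prop: crucial} where it is applied). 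In $R$ the clique formed by the doors of $o$ makes all of depth $0$ a single component with conclusion $o$, so $\numberInvisibleComponents{R} = 0$ and $\Card{\boxes{R}} = 1 = \cosize{R}$, whence $\basis{R} = 2$. In $R_o$ the clusters $\{ c_i, d_i, d'_i \}$ become components of their own with no conclusion at all ($c_i$ is a wire, $d_i$ and $d'_i$ are in a cut, and no box lies inside them), so $\numberInvisibleComponents{R_o} = 2$ while $\Card{\boxes{R_o}} = 1 = \cosize{R_o}$, whence $\basis{R_o} = 3 > \basis{R}$. So your plan cannot be completed as written: the inequality on $\numberInvisibleComponents{\cdot}$ genuinely fails for in-PS's having cuts at depth $0$ that hang onto the rest only through the doors of $o$, and it holds only in regimes where such conclusion-free clusters cannot arise (e.g.\ cut-free $R$, where both sides are $0$).
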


\begin{proof}
Let us show that $\numberInvisibleComponents{R_o} \leq \numberInvisibleComponents{R}$:
\begin{itemize}
\item If $\conclusions{\connectedcomponentsContaining{R_o}(o)} = \emptyset$, then $(\{ o \} \times \mathcal{Q'}) \cap \ports{\connectedcomponentsContaining{R_o}(o)} = \emptyset$, hence $\mathcal{Q'} = \emptyset$, which entails that $R_o = R$ and then $\connectedcomponentsContaining{R_o}(o) = \connectedcomponentsContaining{R}(o)$.
\item Moreover we have $\{ T \in \connectedcomponents{B_{R_o}(o)} ; \conclusions{T} = \emptyset \} \subseteq \nontrivialconnected{B_{R_o}(o)}{\mathcal{Q'}}{k}$, hence, by Lemma~\ref{lemm: components adding contractions}, we have $\{ T \in \connectedcomponents{B_{R_o}(o)} ; \conclusions{T} = \emptyset \} \subseteq \connectedcomponents{B_R(o)}$.
 \qedhere
\end{itemize}
\end{proof}

\begin{lem}\label{lem: connected components of R_o are connected components of R}
Let $R$ be an in-PS. Let $k > \cosize{R}$. Let $e$ be a $k$-heterogeneous pseudo-experiment on $R$. Let $i \in \Nat$. Let $\mathcal{P} \subseteq \portsatzero{\termofTaylor{R}{e}{i}}$. Let $o \in \boxesatzerogeq{R}{i}$. We set $\mathcal{Q} = \contractionsUnder{R}{o}$. Let $\varphi_o$ be some bijection $\contractionsUnder{R}{o} \simeq \mathcal{Q'}$ and let $R_o$ be an in-PS such that $R_o = \varphi_o \cdot_o R$. Then, for any $T \in \nontrivialconnected{\termofTaylor{R}{e}{i}}{\mathcal{P}}{k}$,  we have $T \sqsubseteq_{\mathcal{P}} {R_o}^{\leq i}$ if, and only if, $T \sqsubseteq_{\mathcal{P}} R^{\leq i}$. 
\end{lem}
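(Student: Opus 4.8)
The plan is to reduce the equivalence to a single equality of differential in-PS's, namely ${R_o}^{\leq i} = R^{\leq i}$. Granting this, the predicates $T \sqsubseteq_{\mathcal{P}} {R_o}^{\leq i}$ and $T \sqsubseteq_{\mathcal{P}} R^{\leq i}$ are literally the same statement about $T$, so the ``if and only if'' is immediate. It is worth noting in advance that, under this approach, the hypotheses on $k$, on $\cosize{R}$, on the pseudo-experiment $e$, and the membership $T \in \nontrivialconnected{\termofTaylor{R}{e}{i}}{\mathcal{P}}{k}$ are not used; the whole weight of the lemma rests on the single hypothesis $o \in \boxesatzerogeq{R}{i}$, which guarantees that the box $o$ is truncated away at level $i$.

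First I would unfold Definition~\ref{definition: adding contractions}. Passing from $R$ to $R_o = \varphi_o \cdot_o R$ leaves the ground-structure unchanged ($\groundof{R_o} = \groundof{R}$, since ${R_o}^{\leq 0} = R^{\leq 0}$) and the set of shallow boxes unchanged ($\boxesatzero{R_o} = \boxesatzero{R}$); it modifies only the content of the single box $o$, replacing $B_R(o)$ by $R'_o = (B_R(o) \oplus \bigoplus_{q' \in \mathcal{Q'}} \contr_{q'})@t$, and it replaces $\target{R}$ by the restriction that forgets exactly the pairs $(o, p)$ with $p \in \temporaryConclusions{R}{o} \setminus \{ \aboveBang{R}{o} \}$. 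Since $R'_o$ is built from $B_R(o)$ by gluing contraction ports and adding wires at depth $0$, we have $\depthof{B_{R_o}(o)} = \depthof{B_R(o)}$, and the contents of $R$ and $R_o$ have equal depths box by box.

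Next I would invoke $o \in \boxesatzerogeq{R}{i}$, i.e. $\depthof{B_R(o)} \geq i$. Combined with the previous point this gives $\boxesatzerosmaller{R_o}{i} = \boxesatzerosmaller{R}{i}$, and $o$ belongs to neither. Hence in both $R^{\leq i}$ and ${R_o}^{\leq i}$ the box $o$, together with its content and with all targets issued from it, is discarded. I would then compare the four defining data: the ground-structures coincide (both equal $\groundof{R}$); the shallow-box sets coincide ($\boxesatzerosmaller{R}{i} = \boxesatzerosmaller{R_o}{i}$); the box-content functions coincide, because $B_R$ and $B_{R_o}$ agree on every $o' \neq o$ and $o$ is excluded; and the target functions coincide, because $\target{R^{\leq i}}$ and $\target{{R_o}^{\leq i}}$ are the restrictions of $\target{R}$ and $\target{R_o}$ to $\bigcup_{o' \in \boxesatzerosmaller{R}{i}} (\{ o' \} \times \temporaryConclusions{R}{o'})$, a domain on which $\target{R}$ and $\target{R_o}$ agree (they differ only on pairs $(o, \cdot)$, all excluded since $o \notin \boxesatzerosmaller{R}{i}$) and on which $\temporaryConclusions{R}{o'} = \temporaryConclusions{R_o}{o'}$ (same content $B_R(o') = B_{R_o}(o')$ and same target on $(o', \cdot)$). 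This yields ${R_o}^{\leq i} = R^{\leq i}$ and closes the argument.

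The only step demanding genuine care --- the mild ``obstacle'' of the proof --- is the verification that the two target functions agree. One must check that the pairs suppressed when forming $\target{R_o}$ from $\target{R}$ are all of the shape $(o, \cdot)$ and therefore already lie outside the truncated domain $\bigcup_{o' \in \boxesatzerosmaller{R}{i}} (\{ o' \} \times \temporaryConclusions{R}{o'})$, and dually that the sets $\temporaryConclusions{R}{o'}$ for the surviving boxes $o' \neq o$ are untouched by the modification localized at $o$. Both are routine bookkeeping from the definitions of $S^{\leq i}$, of $\varphi_o \cdot_o R$, and of $\temporaryConclusions{S}{o'}$.
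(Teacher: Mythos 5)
Your proposal is correct, and it does take a genuinely cleaner route than the paper's own proof, even though both rest on the same two observations. The paper never asserts the equality ${R_o}^{\leq i} = R^{\leq i}$: it fixes $T$, notes that $o \notin \boxesatzero{T}$ (deducing this from $T \sqsubseteq \termofTaylor{R}{e}{i}$, where the box $o$, being of depth $\geq i$, has been expanded into a co-contraction), and then verifies clause by clause -- via the characterization of $\sqsubseteq_{\mathcal{Q}}$ in terms of ${(-)}^{\leq 0}$, $\boxesatzero{-}$, $B_{-}$ and $\target{-}$ -- that $T \sqsubseteq_{\mathcal{P}} {R_o}^{\leq i}$ implies $T \sqsubseteq_{\mathcal{P}} R^{\leq i}$, and then repeats the symmetric verification for the converse. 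The facts it invokes are exactly yours: ${R_o}^{\leq 0} = R^{\leq 0}$, $\boxesatzerosmaller{R_o}{i} = \boxesatzerosmaller{R}{i}$ (since $\varphi_o \cdot_o$ preserves the depth of the content of $o$), and the agreement of $B$ and of the target functions away from $o$. What your packaging buys is that the biconditional becomes an instance of a single stronger statement, the literal equality of the two truncations as $4$-tuples, so the conclusion holds for an \emph{arbitrary} differential in-PS $T$, and -- as you correctly point out -- the hypotheses on $k$, on $\cosize{R}$, on the $k$-heterogeneous pseudo-experiment $e$, and on $T$ being a member of $\nontrivialconnected{\termofTaylor{R}{e}{i}}{\mathcal{P}}{k}$ are never used; they appear in the statement only because of the form in which the lemma is invoked inside the proof of Proposition~\ref{prop: crucial}. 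Your observation that $o \notin \boxesatzero{T}$ comes for free (from $\boxesatzero{T} \subseteq \boxesatzerosmaller{R_o}{i}$, which excludes $o$) also removes the paper's only appeal to the hypothesis on $T$. The one point that genuinely needed care -- that the suppressed target pairs are all of the form $(o,\cdot)$ and that $\temporaryConclusions{R_o}{o'} = \temporaryConclusions{R}{o'}$ for every surviving box $o' \neq o$ -- is exactly where you placed the weight, and your verification of it is sound.
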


\begin{proof}
Assume $o \in \boxesatzerogeq{R}{i}$. Let $T \in \nontrivialconnected{\termofTaylor{R}{e}{i}}{\mathcal{P}}{k}$. Notice that, since $T \sqsubseteq \termofTaylor{R}{e}{i}$, we have $o \notin \boxesatzero{T}$.

Assume $T \sqsubseteq_{\mathcal{P}} {R_o}^{\leq i}$. We have $T^{\leq 0} \sqsubseteq_{\mathcal{P}} {({R_o}^{\leq i})}^{\leq 0} = {R_o}^{\leq 0} = R^{\leq 0} = {({R}^{\leq i})}^{\leq 0}$. Moreover:
\begin{itemize}
\item $\boxesatzero{T} = \boxesatzero{{R_o}^{\leq i}} \cap \portsatzero{T} = \boxesatzerosmaller{R_o}{i} \cap \portsatzero{T} = \boxesatzerosmaller{R}{i} \cap \portsatzero{T} = \boxesatzero{R^{\leq i}} \cap \portsatzero{T}$;
\item and, since $o \notin \boxesatzero{T}$, we have $B_T = \restriction{B_R}{\boxesatzero{T}}$ and $\target{T} = \restriction{\target{R^{\leq i}}}{\bigcup_{o' \in \boxesatzero{T}} (\{ o' \} \times \temporaryConclusions{R^{\leq i}}{o'})}$.
\end{itemize}
We obtained $T \sqsubseteq_{\mathcal{P}} R^{\leq i}$.

Conversely, assume $T \sqsubseteq_{\mathcal{P}} R^{\leq i}$. We have $T^{\leq 0} \sqsubseteq_{\mathcal{P}} {(R^{\leq i})}^{\leq 0} = R^{\leq 0} = {R_o}^{\leq 0} = {({R_o}^{\leq i})}^{\leq 0}$. Moreover:
\begin{itemize}
\item $\boxesatzero{T} = \boxesatzero{R^{\leq i}} \cap \portsatzero{T} = \boxesatzerosmaller{R}{i} \cap \portsatzero{T} = \boxesatzerosmaller{R_o}{i} \cap \portsatzero{T} = \boxesatzero{{R_o}^{\leq i}} \cap \portsatzero{T}$;
\item and, since $o \notin \boxesatzero{T}$, we have $B_T = \restriction{B_{R_o}}{\boxesatzero{T}}$ and $\target{T} = \restriction{\target{{R_o}^{\leq i}}}{\bigcup_{o' \in \boxesatzero{T}} (\{ o' \} \times \temporaryConclusions{{R_o}^{\leq i}}{o'})}$.
\end{itemize}
We obtained $T \sqsubseteq_{\mathcal{P}} {R_o}^{\leq i}$.
\end{proof}

The proof of the following lemma is postponed in the appendix.

\begin{lem}\label{lem: connected components of R_o}
Let $R$ be an in-PS. Let $k > \cosize{R}$. Let $e$ be a $k$-heterogeneous pseudo-experiment on $R$. Let $i \in \Nat$. Let $\mathcal{P} \subseteq \portsatzero{\termofTaylor{R}{e}{i}}$. Let $o \in \boxesatzerogeq{R}{i}$. We set $\mathcal{Q} = \contractionsUnder{R}{o}$. Let $\varphi_o$ be some bijection $\mathcal{Q} \simeq \mathcal{Q}'$ and let $R_o$ be an in-PS such that $R_o = \varphi_o \cdot_o R$. Let $\varphi_{e_o}$ be the bijection $\mathcal{Q} \simeq \{ o \} \times (\{ e_o \} \times \mathcal{Q'})$ defined by $\varphi_{e_o}(p) = (o, (e_o, \varphi_o(p)))$ for any $p \in \mathcal{Q}$. We set $\mathcal{P}_{e_o} = (\mathcal{P} \setminus \mathcal{Q}) \cup \varphi_{e_o}[\mathcal{P} \cap \mathcal{Q}]$. Then:
\begin{enumerate}
\item\label{lem: connected components of R_o item: [varphi]} for any $T \in \nontrivialconnected{\termofTaylor{R}{e}{i}}{\mathcal{P}}{k}$ such that $\portsatzero{T} \cap \portsatzero{R \langle o, i, e_o \rangle} \not= \emptyset$, we have $T[\varphi_{e_o}] \in \nontrivialconnected{\termofTaylor{R_o}{e}{i}}{\mathcal{P}_{e_o}}{k}$;
\item\label{lem: connected components of R_o item: [inverse of varphi]} for any $T \in \nontrivialconnected{\termofTaylor{R_o}{e}{i}}{\mathcal{P}_{e_o}}{k}$ such that $\portsatzero{T} \subseteq \portsatzero{R_o \langle o, i, e_o \rangle}$, we have $T[{\varphi_{e_o}}^{-1}] \in \nontrivialconnected{\termofTaylor{R}{e}{i}}{\mathcal{P}}{k}$;
\item\label{lem: connected components of R_o item: R_o -> R} for any $T \in \nontrivialconnected{\termofTaylor{R_o}{e}{i}}{\mathcal{P}}{k}$ such that $(\forall e_1 \in e(o)) \portsatzero{T} \cap \portsatzero{R_o \langle o, i, e_1 \rangle} = \emptyset$, we have $T \in \nontrivialconnected{\termofTaylor{R}{e}{i}}{\mathcal{P}}{k}$;
\item\label{lem: connected components of R_o item: R -> R_o} for any $T \in \nontrivialconnected{\termofTaylor{R}{e}{i}}{\mathcal{P}}{k}$ such that $(\forall e_1 \in e(o)) \portsatzero{T} \cap \portsatzero{R_o \langle o, i, e_1 \rangle} = \emptyset$, we have $T \in \nontrivialconnected{\termofTaylor{R_o}{e}{i}}{\mathcal{P}}{k}$;
\end{enumerate}
\end{lem}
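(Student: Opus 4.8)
The plan is to treat the four statements uniformly: each asserts that some object belongs to a set $\nontrivialconnected{\cdot}{\cdot}{k}$, so in every case I would simply verify, one by one, the defining clauses of Definition~\ref{defin: connected components} — the substructure relation $\trianglelefteq$, the bound $\cosize{\cdot} < k$, the inclusion of the shallow conclusions into the distinguished set of ports, the non-emptiness of the interior, and connectedness through ports outside the distinguished set. The whole difficulty is thus bookkeeping, and the argument rests on two structural inputs. The first is Lemma~\ref{lemma: ports of connected components}, which confines any $T \in \nontrivialconnected{\termofTaylor{R}{e}{i}}{\mathcal{P}}{k}$ meeting the copy $R\langle o, i, e_o\rangle$ to that single copy together with conclusions lying in $\target{R}[\{o\} \times \temporaryConclusions{R}{o}] \cap \mathcal{P}$, and forces all its wires inside the copy; this is what lets the renaming $\varphi_{e_o}$ act cleanly. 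The second is Lemma~\ref{lem: termofTaylor of R_o}, which says that $\termofTaylor{B_{R_o}(o)}{e_o}{i}$ differs from $\termofTaylor{B_R(o)}{e_o}{i}$ only by the extra targets landing in $\mathcal{Q'}$, and that these extra targets are the $\varphi$-images of the auxiliary-door targets. Co-size control comes from Corollary~\ref{cor: arity for R_o} together with $\cosize{R} < k$.

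For items~\ref{lem: connected components of R_o item: [varphi]} and~\ref{lem: connected components of R_o item: [inverse of varphi]} I would argue as follows. Given $T$ meeting $R\langle o, i, e_o\rangle$, Lemma~\ref{lemma: ports of connected components} shows that its depth-$0$ ports split into interior ports of the copy (fixed by $\overline{\varphi_{e_o}}$, since they are not in $\mathcal{Q}$) and conclusions in $\mathcal{Q} \cap \mathcal{P}$ (auxiliary-door targets, renamed to the internal contractions of $\mathcal{Q'}$), possibly together with the main-door target, which is not in $\mathcal{Q}$ and stays fixed. Using the second clause of Lemma~\ref{lem: termofTaylor of R_o}, I would check that the wires of $T$ that targeted a contraction $c \in \mathcal{Q}$ are renamed so as to target exactly $(o, (e_o, \varphi_o(c)))$, which is the target prescribed by $\target{\termofTaylor{R_o}{e}{i}}$; hence $T[\varphi_{e_o}]$ is literally the substructure of $\termofTaylor{R_o}{e}{i}$ supported on $\overline{\varphi_{e_o}}[\portsatzero{T}]$, and its distinguished conclusions fall into $\mathcal{P}_{e_o} = (\mathcal{P} \setminus \mathcal{Q}) \cup \varphi_{e_o}[\mathcal{P} \cap \mathcal{Q}]$. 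The remaining clauses are then immediate: a port renaming preserves the co-size of the interior, the arity of each internal contraction $\varphi_o(c)$ equals that of the portion of $c$ seen by $T$ (so still $< k$), connectedness and the non-empty interior are transported verbatim, and the $\trianglelefteq$ boundary condition is preserved because the only modified coherences sit between ports of $\mathcal{Q}$ (resp. $\mathcal{Q'}$), which are conclusions. Item~\ref{lem: connected components of R_o item: [inverse of varphi]} is the same computation read backwards, applying $\varphi_{e_o}^{-1}$; here the hypothesis $\portsatzero{T} \subseteq \portsatzero{R_o\langle o, i, e_o\rangle}$ plays the role that localization played above.

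For items~\ref{lem: connected components of R_o item: R_o -> R} and~\ref{lem: connected components of R_o item: R -> R_o}, the component $T$ avoids every copy $R_o\langle o, i, e_1\rangle$, so by localization it lives in $R^{\leq i} = R_o^{\leq i}$ up to the boxes and targets left unchanged by the modification. Lemma~\ref{lem: connected components of R_o are connected components of R} gives the equivalence of the two substructure relations $T \sqsubseteq_{\mathcal{P}} R^{\leq i}$ and $T \sqsubseteq_{\mathcal{P}} R_o^{\leq i}$, and I would then observe that passing from $\termofTaylor{R}{e}{i}$ to $\termofTaylor{R_o}{e}{i}$ only deletes the coherences between the contractions of $\mathcal{Q}$ induced by the auxiliary doors of $o$. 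Since those coherences have both endpoints in $\mathcal{Q} \subseteq \mathcal{P}$, they are never traversed by a path of $T$ (paths cross the distinguished ports only at their origin) and never appear in the $\trianglelefteq$ test (which quantifies over $p \in \portsatzero{T} \setminus \mathcal{P}$). Hence $T \trianglelefteq_{\mathcal{P}} \termofTaylor{R}{e}{i}$ iff $T \trianglelefteq_{\mathcal{P}} \termofTaylor{R_o}{e}{i}$, and all the other clauses are unchanged, which yields both inclusions at once.

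The part I expect to be most delicate is the verification underlying items~\ref{lem: connected components of R_o item: [varphi]}--\ref{lem: connected components of R_o item: [inverse of varphi]}: identifying $T[\varphi_{e_o}]$ with an honest substructure of $\termofTaylor{R_o}{e}{i}$ requires matching $\target{\termofTaylor{R_o}{e}{i}}$ port by port through Lemma~\ref{lem: termofTaylor of R_o}, and checking that the newly created internal contractions of $\mathcal{Q'}$ have the right arity so that $\cosize{T[\varphi_{e_o}]} < k$ is retained. It is also here that the asymmetry of the statement — membership for $T$ merely \emph{meeting} the copy in~\ref{lem: connected components of R_o item: [varphi]}, versus $T$ \emph{contained} in the copy in~\ref{lem: connected components of R_o item: [inverse of varphi]} — must be respected: a component of $\termofTaylor{R}{e}{i}$ may legitimately reach out from the copy to contractions of $\mathcal{Q}$, but after internalisation these become interior ports of $R_o$'s copy, so only already-contained components survive the inverse renaming.
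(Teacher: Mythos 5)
Your plan for items~\ref{lem: connected components of R_o item: [varphi]} and~\ref{lem: connected components of R_o item: [inverse of varphi]} is essentially the paper's own proof: localize $T$ inside one copy by Lemma~\ref{lemma: ports of connected components}, match the redirected targets port by port through Lemma~\ref{lem: termofTaylor of R_o}, and check the clauses of Definition~\ref{defin: connected components} one by one; the correspondence you single out (a wire of $T$ targeting $c \in \mathcal{Q}$ is renamed into a wire targeting $(o, (e_o, \varphi_o(c)))$, and conversely) is exactly what carries that verification, so this half is sound in outline.

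The genuine gap is in your treatment of items~\ref{lem: connected components of R_o item: R_o -> R} and~\ref{lem: connected components of R_o item: R -> R_o}. You dismiss the transfer of the $\trianglelefteq$ condition on the grounds that the coherences affected by passing from $\termofTaylor{R}{e}{i}$ to $\termofTaylor{R_o}{e}{i}$ ``have both endpoints in $\mathcal{Q} \subseteq \mathcal{P}$''. This fails twice. First, $\mathcal{Q} \subseteq \mathcal{P}$ is not a hypothesis of the lemma: $\mathcal{P}$ is an arbitrary subset of $\portsatzero{\termofTaylor{R}{e}{i}}$, and the very definition $\mathcal{P}_{e_o} = (\mathcal{P} \setminus \mathcal{Q}) \cup \varphi_{e_o}[\mathcal{P} \cap \mathcal{Q}]$ presupposes that $\mathcal{P} \cap \mathcal{Q}$ may be a proper subset of $\mathcal{Q}$. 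Second, the affected coherences do not all have both endpoints in $\mathcal{Q}$: in $\termofTaylor{R}{e}{i}$ every copy of an auxiliary door of $o$ is attached (as a wire, or through a copied box) to its target in $\mathcal{Q}$, so there are coherences joining a port of a copy $R \langle o, i, e_1 \rangle$ to a port of $\mathcal{Q}$, and these are precisely the ones that disappear in $\termofTaylor{R_o}{e}{i}$. Consequently, if some $c \in (\portsatzero{T} \cap \mathcal{Q}) \setminus \mathcal{P}$ existed, then in item~\ref{lem: connected components of R_o item: R_o -> R} the condition $T \trianglelefteq_{\mathcal{P}} \termofTaylor{R}{e}{i}$ that you want to conclude would quantify over $c$ and force a port of a copy into $\portsatzero{T}$, contradicting the avoidance hypothesis; so the transfer genuinely hinges on excluding such $c$ and is not vacuous bookkeeping. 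What the paper does at exactly this point, and what is missing from your argument, is to \emph{derive} the inclusion $\portsatzero{T} \cap \mathcal{Q} \subseteq \mathcal{P}$ from the two hypotheses (the $\trianglelefteq_{\mathcal{P}}$ condition in the expansion where $T$ lives, together with $\portsatzero{T} \cap \portsatzero{R_o \langle o, i, e_1 \rangle} = \emptyset$ for all $e_1 \in e(o)$, using that a $k$-heterogeneous pseudo-experiment takes at least one copy of every box); once that inclusion is available, your ``never quantified over'' reasoning goes through. Two further slips in the same paragraph: avoidance of the copies of $o$ does not put $T$ inside $R^{\leq i}$, since $T$ may still meet copies of other boxes $o' \in \boxesatzerogeq{R}{i}$ with $o' \not= o$; and invoking Lemma~\ref{lem: connected components of R_o are connected components of R} in item~\ref{lem: connected components of R_o item: R_o -> R} is circular, because the hypothesis of that lemma already requires $T \in \nontrivialconnected{\termofTaylor{R}{e}{i}}{\mathcal{P}}{k}$, which is the conclusion you are after.
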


The set of ``critical components'' we will consider in our algorithm is the set $$\bigcup_{j \in \mathcal{N}_i(e)} \nontrivialconnected{\termofTaylor{R}{e}{i}}{\criticalports{\termofTaylor{R}{e}{i}}{k}{j}}{k}$$ 
From this set we will build the contents of the new boxes. In particular, each port of $\termofTaylor{R}{e}{i}$ at depth $0$ that goes inside a new box of $\termofTaylor{R}{e}{i+1}$ (i.e. each element of $\portsatzero{\termofTaylor{R}{e}{i}} \setminus \portsatzero{\termofTaylor{R}{e}{i+1}}$) belongs to a ``critical component'':

\begin{lem}\label{lem: ground-structure}
Let $R$ be a PS. Let $k > \Card{\boxes{R}}, \cosize{R}$. Let $e$ be a $k$-heterogeneous experiment on $R$. Let $i \in \Nat$. For any $p \in \portsatzero{\termofTaylor{R}{e}{i}} \setminus \portsatzero{\termofTaylor{R}{e}{i+1}}$, there exists $T \in \bigcup_{j \in \mathcal{N}_i(e)} \nontrivialconnected{\termofTaylor{R}{e}{i}}{\criticalports{\termofTaylor{R}{e}{i}}{k}{j}}{k}$ such that $$p \in \portsatzero{T} \setminus \criticalports{\termofTaylor{R}{e}{i}}{k}{j} \subseteq \portsatzero{\termofTaylor{R}{e}{i}} \setminus \portsatzero{\termofTaylor{R}{e}{i+1}}$$
\end{lem}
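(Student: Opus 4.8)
The plan is to argue by induction on $\depthof{R}$. First I would note that every port of $R$ at depth $0$ stays at depth $0$ in both $\termofTaylor{R}{e}{i}$ and $\termofTaylor{R}{e}{i+1}$ (both ground-structures are $\groundof{R}$), so a port $p\in\portsatzero{\termofTaylor{R}{e}{i}}\setminus\portsatzero{\termofTaylor{R}{e}{i+1}}$ cannot belong to $\portsatzero{R}$; hence $p=(o,(e_o,p'))$ for some $o\in\boxesatzerogeq{R}{i}$, some $e_o\in e(o)$ and some $p'\in\portsatzero{\termofTaylor{B_R(o)}{e_o}{i}}$. I would then split according to whether the content of $o$ has depth $\geq i+1$ or exactly $i$.

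In the first case $o\in\boxesatzerogeq{R}{i+1}$, the copy $R\langle o,i+1,e_o\rangle$ occurs in $\termofTaylor{R}{e}{i+1}$, so $p\notin\portsatzero{\termofTaylor{R}{e}{i+1}}$ forces $p'\in\portsatzero{\termofTaylor{B_R(o)}{e_o}{i}}\setminus\portsatzero{\termofTaylor{B_R(o)}{e_o}{i+1}}$. Since $e_o$ is $k$-heterogeneous on $B_R(o)$ (immediate from Definition~\ref{defin: k-heterogeneous}) and $\depthof{B_R(o)}<\depthof{R}$, the induction hypothesis gives $j\in\mathcal{N}_i(e_o)$ and $T'\in\nontrivialconnected{\termofTaylor{B_R(o)}{e_o}{i}}{\criticalports{\termofTaylor{B_R(o)}{e_o}{i}}{k}{j}}{k}$ with $p'\in\portsatzero{T'}\setminus\criticalports{\termofTaylor{B_R(o)}{e_o}{i}}{k}{j}\subseteq\portsatzero{\termofTaylor{B_R(o)}{e_o}{i}}\setminus\portsatzero{\termofTaylor{B_R(o)}{e_o}{i+1}}$. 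I would transport $T'$ through the renaming $\langle o,\langle e_o,-\rangle\rangle$ onto the copy $R\langle o,i,e_o\rangle$ inside $\termofTaylor{R}{e}{i}$, keeping the same index $j$; Lemma~\ref{lem: M_i} gives $\mathcal{N}_i(e_o)\subseteq\mathcal{N}_i(e)$, and Proposition~\ref{prop: critical ports below new boxes} identifies $\criticalports{\termofTaylor{R}{e}{i}}{k}{j}\setminus\portsatzero{R}$ with $\{o\}\times(\{e_o\}\times\criticalports{\termofTaylor{B_R(o)}{e_o}{i}}{k}{j})$, so the conclusions of the transported $T$ sit exactly in the border. All the defining properties of $\nontrivialconnected{\termofTaylor{R}{e}{i}}{\criticalports{\termofTaylor{R}{e}{i}}{k}{j}}{k}$ and the final inclusion then transport from $T'$, using that the non-critical ports of $T'$ lie inside nested boxes and hence are not conclusions of $B_R(o)$, so $T$ does not escape the copy.

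In the second case $o\in\exactboxesatzero{R}{i}$ one has $\termofTaylor{B_R(o)}{e_o}{i}=B_R(o)$, so $R\langle o,i,e_o\rangle$ is an unexpanded copy of $B_R(o)$ glued at depth $0$, with $o$ turned into a co-contraction of arity $k^j$ where $e^\#(o)=\{k^j\}$ and $j\in\mathcal{N}_i(e)$ by Lemma~\ref{lem: M_i}. I set $\mathcal{Q}=\criticalports{\termofTaylor{R}{e}{i}}{k}{j}$, the border of $o$ by Proposition~\ref{prop: critical ports below new boxes}, and let $T$ be the connected component of $p$ in $\termofTaylor{R}{e}{i}$ through ports not in $\mathcal{Q}$. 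Lemma~\ref{lemma: ports of connected components} gives $\portsatzero{T}\subseteq(\target{R}[\{o\}\times\temporaryConclusions{R}{o}]\cap\mathcal{Q})\cup\portsatzero{R\langle o,i,e_o\rangle}$ and $\wiresatzero{T}\subseteq\portsatzero{R\langle o,i,e_o\rangle}$, so $T$ is confined to a single copy. Because $R$ is a PS, every conclusion of $B_R(o)$ is wired below $o$, so the only conclusions of $T$ are border ports (contractions below $o$, and possibly the co-contraction $o$ itself), whence $\conclusions{\groundof{T}}\subseteq\mathcal{Q}$ by Remark~\ref{rem: conclusions of connected components}; and $p\in\portsatzero{T}\setminus\mathcal{Q}$ since $p$ is a copy port while $\mathcal{Q}\subseteq\portsatzero{R}\cup\{o\}$. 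The inclusion $\portsatzero{T}\setminus\mathcal{Q}\subseteq\portsatzero{\termofTaylor{R}{e}{i}}\setminus\portsatzero{\termofTaylor{R}{e}{i+1}}$ is then immediate, because $o\in\exactboxesatzero{R}{i}$ is kept as a box in $\termofTaylor{R}{e}{i+1}$, so every interior copy-port leaves depth $0$.

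The step I expect to be the main obstacle is the bound $\cosize{T}<k$ in this second case. For a non-border port of $T$ the arity equals its arity in the unexpanded copy, i.e. $\arity{B_R(o)}(\cdot)\leq\cosize{R}<k$ (by the arity computation behind Proposition~\ref{prop: arity} and Corollary~\ref{cor: arity for R_o}); the subtle point is the border contractions, whose arity in all of $\termofTaylor{R}{e}{i}$ is $\geq k$, yet which inside $T$ only collect the premises coming from the single copy $e_o$, a number bounded by $\arity{R}(\cdot)\leq\cosize{R}<k$ — here it is essential that Lemma~\ref{lemma: ports of connected components} confines $T$ to one copy. The cleanest way to package this is to pass to $R_o=\varphi\cdot_o R$ for a bijection $\varphi:\contractionsUnder{R}{o}\simeq\mathcal{Q'}$: then $B_{R_o}(o)$ is a genuine PS with $\cosize{B_{R_o}(o)}\leq\cosize{R}<k$ (Lemma~\ref{lem: basis of R_o}), its connected components are those of $B_R(o)$ with the border contractions added (Lemma~\ref{lemm: components adding contractions}), and Lemma~\ref{lem: connected components of R_o} moves the component back and forth between $\termofTaylor{R}{e}{i}$ and $\termofTaylor{R_o}{e}{i}$; I would use this route (relying on the fact that $B_{\varphi\cdot_o R}(o)$ is a PS) to obtain both the co-size bound and the membership $T\in\nontrivialconnected{\termofTaylor{R}{e}{i}}{\mathcal{Q}}{k}$ without ad hoc estimates.
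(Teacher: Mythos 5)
Your skeleton (induction on $\depthof{R}$, case split on $o\in\exactboxesatzero{R}{i}$ versus $o\in\boxesatzerogeq{R}{i+1}$, borders via Proposition~\ref{prop: critical ports below new boxes}) is the paper's, but your deep-box case has a genuine gap: you apply the induction hypothesis to $B_R(o)$, whereas the lemma is stated for \emph{PS's} and $B_R(o)$ is in general only an in-PS. This is not a formality, because the statement is actually \emph{false} for in-PS's. Suppose $o'$ is a box of $B_R(o)$ of depth exactly $i$ and $q$ is a conclusion of $B_{B_R(o)}(o')$ that is a \emph{non-shallow} conclusion of $B_R(o)$, i.e.\ wired in $R$ to a contraction at depth $0$ of $R$ (the typical situation: in the new syntax, auxiliary doors may cross several box boundaries before reaching their contraction), and suppose $q$ is, say, a $\bot$-port. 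Then in $\termofTaylor{B_R(o)}{e_o}{i}$ every copy of $q$ is a dangling non-exponential conclusion at depth $0$, and it lies in $\portsatzero{\termofTaylor{B_R(o)}{e_o}{i}}\setminus\portsatzero{\termofTaylor{B_R(o)}{e_o}{i+1}}$; any $\trianglelefteq$-closed subnet containing such a copy has it among its shallow conclusions, so the condition $\conclusions{\groundof{T}}\subseteq\criticalports{\termofTaylor{B_R(o)}{e_o}{i}}{k}{j}$ can never hold, and no critical component contains that port. Your justification (``the non-critical ports of $T'$ lie inside nested boxes and hence are not conclusions of $B_R(o)$'') silently assumes this situation away. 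This is exactly the subtlety the paper flags before the examples of Definition~\ref{definition: adding contractions}: if $R$ is a PS then $B_{(\varphi\cdot_o R)}(o)$ is a PS too, \emph{while $B_R(o)$ is not necessarily a PS}, and this is used precisely in the proof of this lemma.

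Ironically, you hold the right repair kit but deploy it in the wrong case. The paper uses $R_o=\varphi\cdot_o R$ in the \emph{deep-box} case: apply the induction hypothesis to $B_{R_o}(o)$ (a genuine PS, the formerly dangling conclusions now being wired to the fresh contractions $\mathcal{Q}'$), then identify the critical ports of $\termofTaylor{B_{R_o}(o)}{e_o}{i}$ with $\varphi[\contractionsUnder{R}{o}\cap\criticalports{\termofTaylor{R}{e}{i}}{k}{j_0}]\cup\criticalports{\termofTaylor{B_R(o)}{e_o}{i}}{k}{j_0}$ via Corollary~\ref{cor: arity for R_o} and Corollary~\ref{cor: new critical contractions of R_o}, and finally carry the component back into $\termofTaylor{R}{e}{i}$ with Lemma~\ref{lem: connected components of R_o}~(\ref{lem: connected components of R_o item: [inverse of varphi]}). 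In the case $o\in\exactboxesatzero{R}{i}$, where you route through $R_o$, the paper needs none of this: since $R$ is a PS, every conclusion of a connected component $T'\in\connectedcomponents{B_R(o)}$ is wired below $o$, so one directly glues $\langle o,\langle e_o,T'\rangle\rangle$ with the ports $\target{R}(o,p)$, $p\in\conclusions{T'}$; Proposition~\ref{prop: critical ports below new boxes} then gives $\conclusions{\groundof{T}}\subseteq\criticalports{\termofTaylor{R}{e}{i}}{k}{j_0}$, the co-size bound is immediate (every arity in the glued net is bounded by $\cosize{R}<k$), and the construction uniformly covers the component containing $\aboveBang{R}{o}$, which the bijection of Lemma~\ref{lemm: components adding contractions} that you invoke explicitly excludes. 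Note also that your appeal to Lemma~\ref{lemma: ports of connected components} in that case is circular as written: that lemma \emph{assumes} $T\in\nontrivialconnected{\termofTaylor{R}{e}{i}}{\mathcal{P}}{k}$, in particular the co-size bound you are trying to establish. So the proposal is repairable, but only by swapping where the $\varphi\cdot_o R$ machinery is used, which is the substance of the paper's proof.
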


\begin{proof}
By induction on $\depthof{R}$. If $\depthof{R} = 0$, then $\portsatzero{\termofTaylor{R}{e}{i}} \setminus \portsatzero{\termofTaylor{R}{e}{i+1}} = \emptyset$. Assume that $\depthof{R} > 0$ and let $p \in \portsatzero{\termofTaylor{R}{e}{i}} \setminus \portsatzero{\termofTaylor{R}{e}{i+1}}$. There exist $o \in \boxesatzerogeq{R}{i}$ and $e_o \in e(o)$ such that $p \in \portsatzero{R \langle o, i, e_o \rangle}$. We distinguish between two cases:
\begin{itemize}
\item $o \in \exactboxesatzero{R}{i}$: By Proposition~\ref{prop: critical ports below new boxes}, there exists $j_0 \in \mathcal{N}_i(e)$ such that $\cod_{e, i}(j_0) = o$. We have $R \langle o, i, e_o \rangle = \langle o, \langle e_o, B_R(o) \rangle \rangle$, hence there exists $T' \in \connectedcomponents{B_R(o)}$ such that $p \in \portsatzero{\langle o, \langle e_o, T' \rangle \rangle}$. We set $T = (\langle o, \langle e_o, T' \rangle \rangle \oplus \sum_{p \in \conclusions{T'}} {\labelofport{\groundof{R}}(\target{R}(o, p))}_{\target{R}(o, p)})@\restriction{\target{R}}{\{ o \} \times \conclusions{T'}}$. We have $\conclusions{\groundof{T}} = \target{R}[\{ o \} \times \conclusions{T'}] = \target{\termofTaylor{R}{e}{i+1}}[\{ o \} \times \conclusions{T'}] \subseteq \criticalports{\termofTaylor{R}{e}{i}}{k}{j_0}$ (by Proposition~\ref{prop: critical ports below new boxes}), hence $T \in \nontrivialconnected{\termofTaylor{R}{e}{i}}{\criticalports{\termofTaylor{R}{e}{i}}{k}{j_0}}{k}$.  
\item $o \in \boxesatzerogeq{R}{i+1}$: There exists $p' \in \portsatzero{\termofTaylor{B_R(o)}{e_o}{i}} \setminus \portsatzero{\termofTaylor{B_R(o)}{e_o}{i+1}}$ such that $p = (o, (e_o, p'))$. Let $R_o$ be an in-PS such that $R_o = \varphi \cdot_o R$, where $\varphi$ is some bijection $\contractionsUnder{R}{o} \simeq \mathcal{Q}'$. We have $p' \in \portsatzero{\termofTaylor{B_{R_o}(o)}{e_o}{i}} \setminus \portsatzero{\termofTaylor{B_{R_o}(o)}{e_o}{i+1}}$, hence, by induction hypothesis, there exists $T' \in \bigcup_{j \in \mathcal{N}_i(e_o)} \nontrivialconnected{\termofTaylor{B_{R_o}(o)}{e_o}{i}}{\criticalports{\termofTaylor{B_{R_o}(o)}{e_o}{i}}{k}{j}}{k}$ such that $p' \in \portsatzero{T'}$; let $j_0 \in \mathcal{N}_i(e_o)$ such that $T' \in \nontrivialconnected{\termofTaylor{B_{R_o}(o)}{e_o}{i}}{\criticalports{\termofTaylor{B_{R_o}(o)}{e_o}{i}}{k}{j_0}}{k}$. By Corollary~\ref{cor: arity for R_o}, we have $T' \in \nontrivialconnected{\termofTaylor{B_{R_o}(o)}{e_o}{i}}{(\mathcal{Q}' \cap \criticalports{\termofTaylor{B_{R_o}(o)}{e_o}{i}}{k}{j_0}) \cup \criticalports{\termofTaylor{B_{R}(o)}{e_o}{i}}{k}{j_0}}{k}$. By Corollary~\ref{cor: new critical contractions of R_o}, we have $T' \in \nontrivialconnected{\termofTaylor{B_{R_o}(o)}{e_o}{i}}{\varphi[\contractionsUnder{R}{o} \cap \criticalports{\termofTaylor{R}{e}{i}}{k}{j_0}] \cup \criticalports{\termofTaylor{B_{R}(o)}{e_o}{i}}{k}{j_0}}{k}$. We denote by $\varphi_{e_o}$ the bijection $\contractionsUnder{R}{o} \simeq \{ o \} \times (\{ e_o \} \times \mathcal{Q}')$ that associates $(o, (e_o, \varphi(q)))$ with every $q \in \contractionsUnder{R}{o}$. By Proposition~\ref{prop: critical ports below new boxes}, we have $\langle o, \langle e_o, T' \rangle \rangle \in \nontrivialconnected{\termofTaylor{R_o}{e}{i}}{\varphi_{e_o}[\contractionsUnder{R}{o} \cap \criticalports{\termofTaylor{R}{e}{i}}{k}{j_0}] \cup \criticalports{\termofTaylor{R}{e}{i}}{k}{j_0}}{k}$. We set $T = \langle o, \langle e_o, T' \rangle \rangle [{\varphi_{e_o}}^{-1}]$. By Lemma~\ref{lem: connected components of R_o}~(\ref{lem: connected components of R_o item: [inverse of varphi]}), we have $T \in \nontrivialconnected{\termofTaylor{R}{e}{i}}{(\contractionsUnder{R}{o} \cap \criticalports{\termofTaylor{R}{e}{i}}{k}{j_0}) \cup \criticalports{\termofTaylor{R}{e}{i}}{k}{j_0}}{k}$; we thus have $T \in \nontrivialconnected{\termofTaylor{R}{e}{i}}{\criticalports{\termofTaylor{R}{e}{i}}{k}{j_0}}{k}$.
 \qedhere
\end{itemize}
\end{proof}

Notice that not each port belonging to a ``critical component'' goes inside a new box. That is why, for now, we are not yet able to describe exactly the ground-structure of ${\termofTaylor{R}{e}{i+1}}$ but we can only obtain an approximant of ${\termofTaylor{R}{e}{i+1}}^{\leq 0}$:

\begin{prop}\label{prop: ground-structure}
Let $R$ be a PS. Let $k > \Card{\boxes{R}}, \cosize{R}$. Let $e$ be a $k$-heterogeneous pseudo-experiment on $R$. Let $i \in \Nat$. We set 
$$\mathcal{P} = \left(\portsatzero{\termofTaylor{R}{e}{i}} \setminus \bigcup_{j \in \mathcal{N}_i(e)} \bigcup_{\begin{array}{c} T \in \nontrivialconnected{\termofTaylor{R}{e}{i}}{\criticalports{\termofTaylor{R}{e}{i}}{k}{j}}{k} \end{array}} \portsatzero{T}\right) \cup \criticalports{\termofTaylor{R}{e}{i}}{k}{\mathcal{N}_i(e)}$$ Then we have $\restriction{{\termofTaylor{R}{e}{i}}^{\leq 0}}{\mathcal{P}} \sqsubseteq_{\emptyset} {\termofTaylor{R}{e}{i+1}}^{\leq 0}$.
\end{prop}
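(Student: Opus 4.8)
The plan is to assemble the statement from the structural results already in place rather than to rerun an induction on $\depthof{R}$. First I would check the set-theoretic inclusion $\mathcal{P} \subseteq \portsatzero{\termofTaylor{R}{e}{i+1}}$. For the summand $\criticalports{\termofTaylor{R}{e}{i}}{k}{\mathcal{N}_i(e)}$ this is precisely the last assertion of Proposition~\ref{prop: critical ports below new boxes}, which places the critical ports inside $\exponentialportsatzero{\termofTaylor{R}{e}{i+1}}$. For a port $p$ of the other summand — a port of $\termofTaylor{R}{e}{i}$ at depth $0$ lying in no critical component — I would argue by contraposition using Lemma~\ref{lem: ground-structure}: were $p \in \portsatzero{\termofTaylor{R}{e}{i}} \setminus \portsatzero{\termofTaylor{R}{e}{i+1}}$, that lemma would produce some $T \in \nontrivialconnected{\termofTaylor{R}{e}{i}}{\criticalports{\termofTaylor{R}{e}{i}}{k}{j}}{k}$ (for some $j \in \mathcal{N}_i(e)$) with $p \in \portsatzero{T}$, contradicting the very definition of $\mathcal{P}$. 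Hence every element of $\mathcal{P}$ is a depth-$0$ port of $\termofTaylor{R}{e}{i+1}$.

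Next I would verify that $\restriction{{\termofTaylor{R}{e}{i}}^{\leq 0}}{\mathcal{P}}$ is genuinely defined, i.e. that the induced data forms a differential in-PS (in the sense of Remark~\ref{rem: unique substructure}); since $S^{\leq 0}$ carries no boxes, the only clause that could fail is that some $\tens$- or $\parr$-port of $\mathcal{P}$ is stripped of a premise. So suppose $p \in \mathcal{P}$ is multiplicative and $w$ is a premise wire of $p$ with $w \notin \mathcal{P}$. Then $w$ lies in the interior $\portsatzero{T} \setminus \mathcal{Q}_j$ of some critical component $T$, where $\mathcal{Q}_j = \criticalports{\termofTaylor{R}{e}{i}}{k}{j}$; as $w \coh_{\termofTaylor{R}{e}{i}} p$ and $T \trianglelefteq_{\mathcal{Q}_j} \termofTaylor{R}{e}{i}$, the defining property of $\trianglelefteq$ forces $p \in \portsatzero{T}$. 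But $p$, being multiplicative, is not an exponential port, hence $p \notin \criticalports{\termofTaylor{R}{e}{i}}{k}{\mathcal{N}_i(e)}$, so $p \in \portsatzero{T} \setminus \criticalports{\termofTaylor{R}{e}{i}}{k}{\mathcal{N}_i(e)}$ and $p \notin \mathcal{P}$, a contradiction. The same $\trianglelefteq$-closure (with Fact~\ref{fact: two components with a common port} guaranteeing that components meet only along $\mathcal{Q}_j$) disposes of axioms, so the restriction exists.

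Finally I would identify the target. By Lemma~\ref{lem: ground-structure: trivial} we have ${\termofTaylor{R}{e}{i+1}}^{\leq i} \sqsubseteq_\emptyset \termofTaylor{R}{e}{i}$, and applying the Fact that $(-)^{\leq i'}$ preserves $\sqsubseteq$ together with Remark~\ref{rem: <= <=} yields ${\termofTaylor{R}{e}{i+1}}^{\leq 0} \sqsubseteq_\emptyset {\termofTaylor{R}{e}{i}}^{\leq 0}$; by Remark~\ref{rem: unique substructure} this exhibits ${\termofTaylor{R}{e}{i+1}}^{\leq 0}$ as the induced substructure of ${\termofTaylor{R}{e}{i}}^{\leq 0}$ on $\portsatzero{\termofTaylor{R}{e}{i+1}}$. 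Since $\mathcal{P} \subseteq \portsatzero{\termofTaylor{R}{e}{i+1}}$, the induced substructure of ${\termofTaylor{R}{e}{i+1}}^{\leq 0}$ on $\mathcal{P}$ is itself $\sqsubseteq_\emptyset {\termofTaylor{R}{e}{i+1}}^{\leq 0}$, and by Fact~\ref{fact: transitivity of substructures} (both increments taken with empty parameter) it is also $\sqsubseteq_\emptyset {\termofTaylor{R}{e}{i}}^{\leq 0}$; having port set $\mathcal{P}$, uniqueness forces it to coincide with $\restriction{{\termofTaylor{R}{e}{i}}^{\leq 0}}{\mathcal{P}}$. Reading this back gives exactly $\restriction{{\termofTaylor{R}{e}{i}}^{\leq 0}}{\mathcal{P}} \sqsubseteq_\emptyset {\termofTaylor{R}{e}{i+1}}^{\leq 0}$.

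The substantive content is entirely concentrated in Lemma~\ref{lem: ground-structure} (every port swallowed by a new depth-$i$ box is gathered into a critical component) and in the truncated Lemma~\ref{lem: ground-structure: trivial}; given these, the proposition is bookkeeping. I expect the step most deserving of care to be the well-definedness of $\restriction{{\termofTaylor{R}{e}{i}}^{\leq 0}}{\mathcal{P}}$: one must make sure that excising whole critical components while retaining only their exponential conclusions never breaks a multiplicative or axiom link, which is exactly what the $\trianglelefteq$-closure supplies.
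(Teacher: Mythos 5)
Your proposal rests on exactly the same two pillars as the paper's proof: the inclusion $\mathcal{P} \subseteq \portsatzero{\termofTaylor{R}{e}{i+1}}$, obtained just as you obtain it (Proposition~\ref{prop: critical ports below new boxes} for the critical ports, Lemma~\ref{lem: ground-structure} by contraposition for the remaining ports), and the relation between the depth-$0$ wires of $\termofTaylor{R}{e}{i}$ and of $\termofTaylor{R}{e}{i+1}$ supplied by Lemma~\ref{lem: ground-structure: trivial}. The paper finishes with a direct two-inclusion computation showing
$$\wires{\restriction{\groundof{\termofTaylor{R}{e}{i}}}{\mathcal{P}}} = \{ w \in \wires{\groundof{\termofTaylor{R}{e}{i+1}}} \cap \mathcal{P} ;\ \target{\groundof{\termofTaylor{R}{e}{i+1}}}(w) \in \mathcal{P} \},$$
whereas you package the same content as a restriction-composition argument. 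Your step 2 (well-definedness of $\restriction{{\termofTaylor{R}{e}{i}}^{\leq 0}}{\mathcal{P}}$, for multiplicative ports and axioms, via the $\trianglelefteq$-closure of critical components) is a point the paper passes over in silence even though the statement needs it to parse; that check is correct and is a welcome addition.

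The one step you should repair is in your step 3: you invoke ``the induced substructure of ${\termofTaylor{R}{e}{i+1}}^{\leq 0}$ on $\mathcal{P}$'' and then conclude by uniqueness. By Remark~\ref{rem: unique substructure}, that object is defined only if \emph{some} differential in-PS with port set $\mathcal{P}$ is $\sqsubseteq_{\emptyset} {\termofTaylor{R}{e}{i+1}}^{\leq 0}$, and the only candidate in sight is $\restriction{{\termofTaylor{R}{e}{i}}^{\leq 0}}{\mathcal{P}}$ itself — so, as written, you are presupposing a reformulation of the conclusion. The circle is easy to break with tools you already have on the table: from your ${\termofTaylor{R}{e}{i+1}}^{\leq 0} \sqsubseteq_{\emptyset} {\termofTaylor{R}{e}{i}}^{\leq 0}$ and from $\mathcal{P} \subseteq \portsatzero{\termofTaylor{R}{e}{i+1}}$ one gets directly
$$\{ w \in \wiresatzero{\termofTaylor{R}{e}{i+1}} \cap \mathcal{P} ;\ \target{\groundof{\termofTaylor{R}{e}{i+1}}}(w) \in \mathcal{P} \} = \{ w \in \wiresatzero{\termofTaylor{R}{e}{i}} \cap \mathcal{P} ;\ \target{\groundof{\termofTaylor{R}{e}{i}}}(w) \in \mathcal{P} \},$$
because the extra conditions ``$w \in \portsatzero{\termofTaylor{R}{e}{i+1}}$'' and ``target in $\portsatzero{\termofTaylor{R}{e}{i+1}}$'' occurring in the wire clause of ${\termofTaylor{R}{e}{i+1}}^{\leq 0} \sqsubseteq_{\emptyset} {\termofTaylor{R}{e}{i}}^{\leq 0}$ are absorbed by $\mathcal{P} \subseteq \portsatzero{\termofTaylor{R}{e}{i+1}}$; the clauses for labels, left wires, axioms and cuts agree for the same reason, and there are no boxes at truncation $0$. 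Hence the differential in-PS $\restriction{{\termofTaylor{R}{e}{i}}^{\leq 0}}{\mathcal{P}}$, whose existence your step 2 provides, itself satisfies every clause of $\sqsubseteq_{\emptyset} {\termofTaylor{R}{e}{i+1}}^{\leq 0}$ — which is the desired statement, and is, in substance, exactly the computation the paper performs.
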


\begin{proof}
By Lemma~\ref{lem: ground-structure: trivial}, we have:
\begin{enumerate}
\item\label{prop1}  $\wiresatzero{\termofTaylor{R}{e}{i+1}} \subseteq \wiresatzero{\termofTaylor{R}{e}{i}}$ and $\target{\groundof{\termofTaylor{R}{e}{i+1}}} = \restriction{\target{\groundof{\termofTaylor{R}{e}{i}}}}{\wiresatzero{\termofTaylor{R}{e}{i+1}}}$ 
\item\label{prop2} $\wiresatzero{\termofTaylor{R}{e}{i}} \cap \portsatzero{\termofTaylor{R}{e}{i+1}} \subseteq \wiresatzero{\termofTaylor{R}{e}{i+1}}$ 
\end{enumerate}

By~\ref{prop1}., we have 
\begin{eqnarray*}
& & \{ w \in \wires{\groundof{\termofTaylor{R}{e}{i+1}}} \cap \mathcal{P} ; \target{\groundof{\termofTaylor{R}{e}{i+1}}}(w) \in \mathcal{P} \}\\
& \subseteq & \{ w \in \wires{\groundof{\termofTaylor{R}{e}{i}}} \cap \mathcal{P} ; \target{\groundof{\termofTaylor{R}{e}{i}}}(w) \in \mathcal{P} \}
\end{eqnarray*}
By Lemma~\ref{lem: ground-structure} and Proposition~\ref{prop: critical ports below new boxes}, we have $\mathcal{P} \subseteq \portsatzero{\termofTaylor{R}{e}{i+1}}$ $(\ast)$. By $(\ast)$ and \ref{prop2}., we have 
\begin{eqnarray*}
& & \{ w \in \wires{\groundof{\termofTaylor{R}{e}{i}}} \cap \mathcal{P} ; \target{\groundof{\termofTaylor{R}{e}{i}}}(w) \in \mathcal{P} \}\\
& \subseteq & \{ w \in \wires{\groundof{\termofTaylor{R}{e}{i+1}}} \cap \mathcal{P} ; \target{\groundof{\termofTaylor{R}{e}{i+1}}}(w) \in \mathcal{P} \}
\end{eqnarray*}
We thus showed
\begin{align*} 
& \wires{\restriction{\groundof{\termofTaylor{R}{e}{i}}}{\mathcal{P}}}\\
={} & \{ w \in \wires{\groundof{\termofTaylor{R}{e}{i}}} \cap \mathcal{P} ; \target{\groundof{\termofTaylor{R}{e}{i}}}(w) \in \mathcal{P} \}\\
={} & \{ w \in \wires{\groundof{\termofTaylor{R}{e}{i+1}}} \cap \mathcal{P} ; \target{\groundof{\termofTaylor{R}{e}{i+1}}}(w) \in \mathcal{P} \}
 \tag*{\qedhere}
\end{align*}
\end{proof}

\begin{exa}\label{example: approximant}
(Continuation of Example~\ref{example: outline of boxes}) Figure~\ref{fig: approximant_of_ground_of_R} depicts the differential PS $\restriction{{\termofTaylor{R}{e}{1}}^{\leq 0}}{\mathcal{P}}$ obtained by applying Proposition~\ref{prop: ground-structure}: We have 
$$\mathcal{P} = \left(\portsatzero{\termofTaylor{R}{e}{1}} \setminus \bigcup_{j \in \mathcal{N}_1(e)} \bigcup_{\begin{array}{c} T \in \nontrivialconnected{\termofTaylor{R}{e}{1}}{\criticalports{\termofTaylor{R}{e}{1}}{10}{j}}{k} \end{array}} \portsatzero{T}\right) \cup \criticalports{\termofTaylor{R}{e}{1}}{10}{\mathcal{N}_1(e)}$$
with $\criticalports{\termofTaylor{R}{e}{i}}{10}{\mathcal{N}_1(e)} = \{ p_1, p_4, p_5, p_6, p_7, o_2, o_4 \}$.
\begin{figure}
\centering
\resizebox{\textwidth}{!}{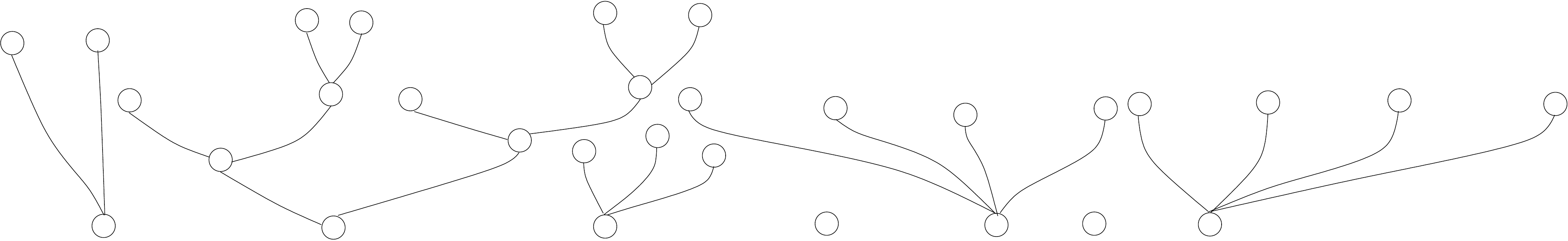}
\caption{The differential PS ${\termofTaylor{R}{e}{1}}^{\leq 0}$}
\label{fig: Ground_of_T_e_1}
\centering
\resizebox{\textwidth}{!}{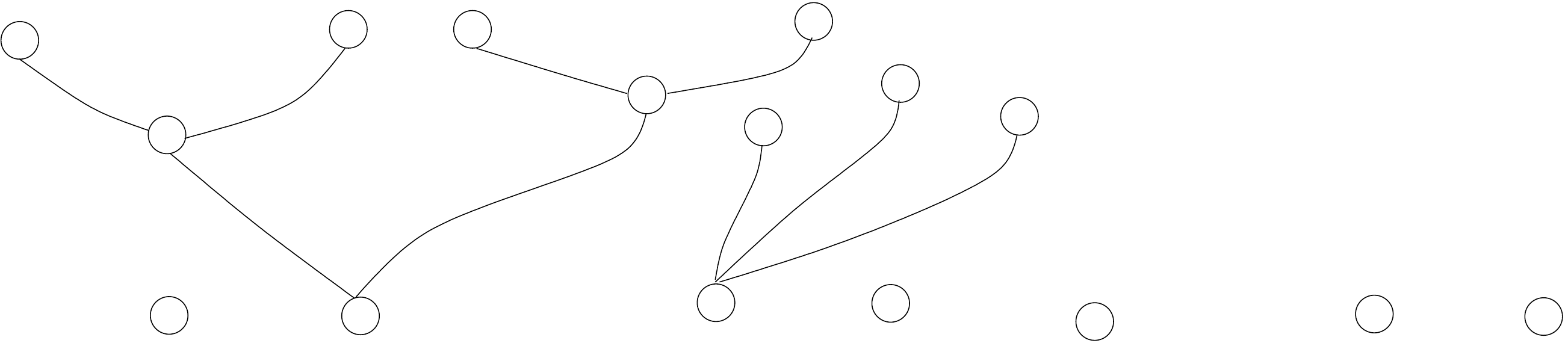}
\caption{The approximant $\restriction{{\termofTaylor{R}{e}{1}}^{\leq 0}}{\mathcal{P}}$ of the differential PS ${\termofTaylor{R}{e}{2}}^{\leq 0}$}
\label{fig: approximant_of_ground_of_R}
\centering
\resizebox{\textwidth}{!}{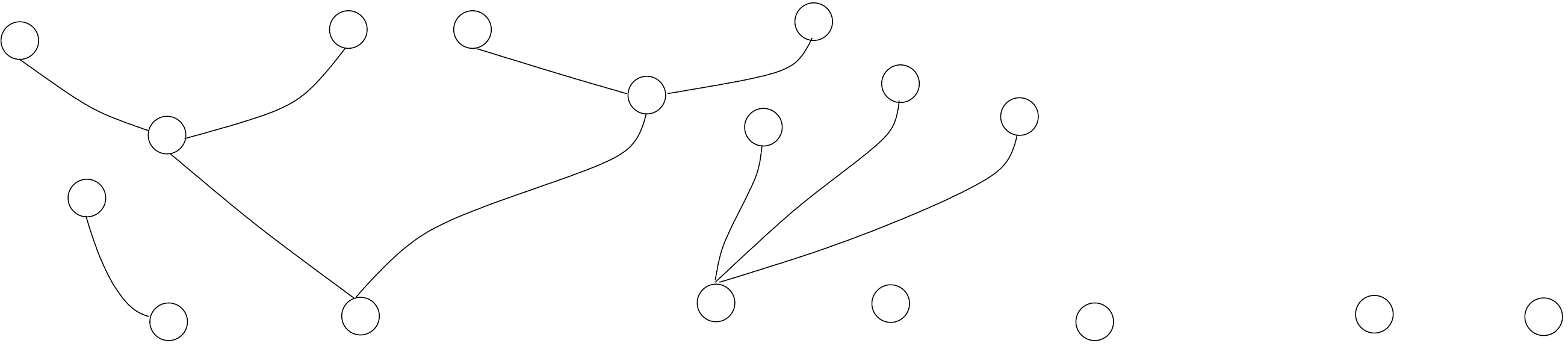}
\caption{The differential PS ${\termofTaylor{R}{e}{2}}^{\leq 0}$}
\label{fig: Ground_of_R}
\end{figure}
\end{exa}

There is no other connected component of $\termofTaylor{R}{e}{i+1}$ whose conclusions belong to critical ports of $\termofTaylor{R}{e}{i}$ than critical components of $\termofTaylor{R}{e}{i}$:

\begin{prop}\label{prop: connected components of termofTaylor{R}{e}{i+1}}
Let $R$ be an in-PS. Let $k > \cosize{R}$. Let $e$ be a $k$-heterogeneous pseudo-experiment on $R$. Let $i \in \Nat$. Then $\nontrivialconnected{\termofTaylor{R}{e}{i+1}}{\criticalports{\termofTaylor{R}{e}{i}}{k}{\mathcal{N}_i(e)}}{k} \subseteq \nontrivialconnected{\termofTaylor{R}{e}{i}}{\criticalports{\termofTaylor{R}{e}{i}}{k}{\mathcal{N}_i(e)}}{k}$.
\end{prop}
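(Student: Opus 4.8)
The statement compares the ``critical components'' of $\termofTaylor{R}{e}{i+1}$ with those of $\termofTaylor{R}{e}{i}$ relative to the common set $\mathcal{Q} = \criticalports{\termofTaylor{R}{e}{i}}{k}{\mathcal{N}_i(e)}$. The first thing I would observe is that most clauses defining $\nontrivialconnected{-}{\mathcal{Q}}{k}$ in Definition~\ref{defin: connected components} are \emph{intrinsic} to the pair $(T,\mathcal{Q})$: the conditions $\cosize{T} < k$, $\conclusions{\groundof{T}} \subseteq \mathcal{Q}$, $\portsatzero{T}\setminus\mathcal{Q}\neq\emptyset$ and ``$T$ is connected through ports not in $\mathcal{Q}$'' only mention $T$ and $\mathcal{Q}$, never the ambient net. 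Hence, fixing $T \in \nontrivialconnected{\termofTaylor{R}{e}{i+1}}{\mathcal{Q}}{k}$, it suffices to upgrade the two relations that \emph{do} refer to the ambient net, namely $T \sqsubseteq_{\mathcal{Q}} \termofTaylor{R}{e}{i+1}$ to $T \sqsubseteq_{\mathcal{Q}} \termofTaylor{R}{e}{i}$ and $T \trianglelefteq_{\mathcal{Q}} \termofTaylor{R}{e}{i+1}$ to $T \trianglelefteq_{\mathcal{Q}} \termofTaylor{R}{e}{i}$.

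I would run the argument by induction on $\depthof{R}$, splitting on where the interior ports of $T$ live. Lemma~\ref{lem: ground-structure: trivial} is the workhorse: it identifies the depth-$0$ ground-structure of $\termofTaylor{R}{e}{i+1}$ with that of $\termofTaylor{R}{e}{i}$ on the common ports, matches the contents of all boxes of depth $< i$, and identifies $\isaCopyfrom{R}{e}{i+1}$ with $\isaCopyfrom{R}{e}{i}$ on depth-$0$ ports. Since $\portsatzero{T} \subseteq \portsatzero{\termofTaylor{R}{e}{i+1}} \subseteq \portsatzero{\termofTaylor{R}{e}{i}}$, the wire/label/target/left-wire/axiom/cut clauses of $T \sqsubseteq_{\mathcal{Q}} \termofTaylor{R}{e}{i}$ follow from those of $T \sqsubseteq_{\mathcal{Q}} \termofTaylor{R}{e}{i+1}$, because these depth-$0$ data coincide on $\portsatzero{T}$ and the $\mathcal{Q}$-erasure is literally the same set. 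In the \emph{shallow} case, where $T$ is a substructure of $R^{\leq i}$, the transfer of $\trianglelefteq_{\mathcal{Q}}$ is exactly Fact~\ref{fact: components of R^{<=i}}. In the \emph{deep} case, where some interior port of $T$ sits inside an expanded copy $\termofTaylor{B_R(o)}{e_o}{i}$, I would use Lemma~\ref{lemma: ports of connected components} to confine $T$ to a single copy $R\langle o, i, e_o\rangle$ (its wires all lie in that copy, its other ports being conclusions in $\mathcal{Q}$), then pass to $R_o = \varphi\cdot_o R$ for a bijection $\varphi:\contractionsUnder{R}{o}\simeq\mathcal{Q}'$, transport $T$ across the induced renaming by Lemma~\ref{lem: connected components of R_o}, realign the relevant critical ports using Corollary~\ref{cor: arity for R_o} and Corollary~\ref{cor: new critical contractions of R_o}, and invoke the induction hypothesis on $B_R(o)$ (legitimate since $\cosize{B_R(o)}\leq\cosize{R}<k$).

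The conceptual heart, and the step I expect to be the main obstacle, is the box clause $\boxesatzero{T} = \boxesatzero{\termofTaylor{R}{e}{i}}\cap\portsatzero{T}$, i.e.\ that no ``new'' box of $\termofTaylor{R}{e}{i+1}$ (a co-contraction $\cod_{e,i}(j)$ of $\termofTaylor{R}{e}{i}$ with $j\in\mathcal{N}_i(e)$) lies in $T$: such an $o$ belongs to $\mathcal{Q}$, so it could only occur as a conclusion of $T$, and then $T \sqsubseteq_{\mathcal{Q}} \termofTaylor{R}{e}{i+1}$ would force $o\in\boxesatzero{T}$ with its full content, which has no counterpart in $\termofTaylor{R}{e}{i}$. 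The clean way to exclude this is a connectivity argument on the coherence neighbours of $o$ inside $T$: the axiom/cut case is impossible ($o$ is a $\cod$-port and a conclusion, hence in no cut); the wire case is killed because wires into the $\mathcal{Q}$-port $o$ are erased by $\sqsubseteq_{\mathcal{Q}}$, and in $\termofTaylor{R}{e}{i+1}$ nothing but its own main conclusion targets $o$; and the only surviving same-box relation passes through $o$ itself, whose other endpoints all lie in $\target{\termofTaylor{R}{e}{i+1}}[\{o\}\times\temporaryConclusions{\termofTaylor{R}{e}{i+1}}{o}]\subseteq\mathcal{Q}$ by Proposition~\ref{prop: critical ports below new boxes}. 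Thus every neighbour of $o$ lies in $\mathcal{Q}$, so $o$ cannot be joined to an interior port by a path avoiding $\mathcal{Q}$, contradicting connectivity together with $\portsatzero{T}\setminus\mathcal{Q}\neq\emptyset$. Dually, for $\trianglelefteq_{\mathcal{Q}}$ I would show that every coherence neighbour in $\termofTaylor{R}{e}{i}$ of an interior port $p$ of $T$ already lies in $\portsatzero{\termofTaylor{R}{e}{i+1}}$: otherwise, by Lemma~\ref{lem: ground-structure}, that neighbour would be an interior port of a critical component pushed inside a new box, forcing $p$ itself inside a new box and contradicting $p\in\portsatzero{\termofTaylor{R}{e}{i+1}}$; the coherence relation is then preserved, so $T \trianglelefteq_{\mathcal{Q}} \termofTaylor{R}{e}{i+1}$ gives the conclusion. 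The delicate point to reconcile is that this direct route leans on Proposition~\ref{prop: critical ports below new boxes} (which asks $k>\Card{\boxes{R}}$), whereas the statement only assumes $k>\cosize{R}$; the inductive deep-case reduction above is precisely what lets one carry the box-exclusion out locally under the weaker hypothesis.
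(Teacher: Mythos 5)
Your proposal is correct in substance, but it is organized around machinery the paper never uses for this statement, and the ``direct route'' you only sketch at the end is in fact the paper's entire proof. Writing $\mathcal{Q} = \criticalports{\termofTaylor{R}{e}{i}}{k}{\mathcal{N}_i(e)}$, the paper's argument is short and non-inductive: for the substructure part it notes $\depthof{T} < i$, so $T = T^{\leq i}$ (Remark~\ref{rem: substructure}), and composes $T \sqsubseteq_{\mathcal{Q}} {\termofTaylor{R}{e}{i+1}}^{\leq i}$ with ${\termofTaylor{R}{e}{i+1}}^{\leq i} \sqsubseteq_{\emptyset} \termofTaylor{R}{e}{i}$ (Lemma~\ref{lem: ground-structure: trivial}, Fact~\ref{fact: transitivity of substructures}); for $\trianglelefteq_{\mathcal{Q}}$ it splits into the shallow case, settled by Fact~\ref{fact: components of R^{<=i}}, and the remaining case, settled by exactly your final argument: a neighbour $q \in \portsatzero{\termofTaylor{R}{e}{i}} \setminus \portsatzero{\termofTaylor{R}{e}{i+1}}$ of an interior port $p$ of $T$ would, by Lemma~\ref{lem: ground-structure}, lie in a critical component $T'$ whose $\trianglelefteq$-property absorbs $p$ into its interior, hence into the swallowed ports, contradicting $p \in \portsatzero{\termofTaylor{R}{e}{i+1}}$. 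So your induction on $\depthof{R}$ with the deep-case confinement (Lemma~\ref{lemma: ports of connected components}) and transport through $R_o = \varphi \cdot_o R$ (Lemma~\ref{lem: connected components of R_o}, Corollary~\ref{cor: arity for R_o}, Corollary~\ref{cor: new critical contractions of R_o}) is redundant: once the direct route is run, it covers shallow and deep components uniformly. On the credit side, your box-exclusion argument is a genuine improvement in rigor: the paper merely asserts ``$\depthof{T} < i$'' with no justification, and your connectivity argument --- every coherence neighbour of a new box lies in $\mathcal{Q}$, by Proposition~\ref{prop: critical ports below new boxes}, so such a box cannot be joined to an interior port --- is the right way to establish it.

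Your closing claim, however, does not hold up: the inductive route does not rescue the weaker hypothesis $k > \cosize{R}$, because Corollary~\ref{cor: new critical contractions of R_o} and Proposition~\ref{prop: critical ports below new boxes}, which your transport and realignment steps invoke (and which your box-exclusion needs in any case), themselves require $k > \Card{\boxes{R}}$; induction on depth does not shrink $\Card{\boxes{R}}$ the way it bounds $\cosize{B_R(o)}$. The mismatch you noticed is real, but it is a defect of the paper's own statement rather than something either route repairs: the paper's proof invokes Lemma~\ref{lem: ground-structure}, stated for PS's with $k > \Card{\boxes{R}}, \cosize{R}$, under the proposition's weaker hypotheses. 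In the paper the proposition is only ever applied (inside Proposition~\ref{prop: from i to i+1}) with $k \geq \basis{R} > \Card{\boxes{R}}$, where the issue disappears.
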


\begin{proof}
Let $T \in \nontrivialconnected{\termofTaylor{R}{e}{i+1}}{\criticalports{\termofTaylor{R}{e}{i}}{k}{\mathcal{N}_i(e)}}{k}$. Notice that $\depthof{T} < i$. We have 
$$T \sqsubseteq_{\criticalports{\termofTaylor{R}{e}{i}}{k}{\mathcal{N}_i(e)}} \termofTaylor{R}{e}{i+1},$$ 
hence, by Remark~\ref{rem: substructure}, Lemma~\ref{lem: ground-structure: trivial} and Fact~\ref{fact: transitivity of substructures}, $T = T^{\leq i} \sqsubseteq_{\criticalports{\termofTaylor{R}{e}{i}}{k}{\mathcal{N}_i(e)}} \termofTaylor{R}{e}{i}$.

If $T \sqsubseteq_{\criticalports{\termofTaylor{R}{e}{i}}{k}{\mathcal{N}_i(e)}} R^{\leq i}$, then one can apply Fact~\ref{fact: components of R^{<=i}}. 

Otherwise: Let $p \in \portsatzero{T} \setminus \criticalports{\termofTaylor{R}{e}{i}}{k}{\mathcal{N}_i(e)}$ and $q \in \portsatzero{\termofTaylor{R}{e}{i}} \setminus \portsatzero{\termofTaylor{R}{e}{i+1}}$ such that $p \coh_{\termofTaylor{R}{e}{i}} q$. By Lemma~\ref{lem: ground-structure}, there exists $T' \in \nontrivialconnected{\termofTaylor{R}{e}{i}}{\criticalports{\termofTaylor{R}{e}{i}}{k}{\mathcal{N}_i(e)}}{k}$ such that $p, q \in \portsatzero{T'} \setminus \criticalports{\termofTaylor{R}{e}{i}}{k}{\mathcal{N}_i(e)} \subseteq \portsatzero{\termofTaylor{R}{e}{i}} \setminus \portsatzero{\termofTaylor{R}{e}{i+1}}$, which contradicts $p \in \portsatzero{T} \subseteq \portsatzero{\termofTaylor{R}{e}{i+1}}$.
\end{proof}

We will apply Proposition~\ref{prop: crucial} with $\mathcal{P} = \criticalports{\termofTaylor{R}{e}{i}}{k}{j_0}$, where $j_0 \in \mathcal{N}_i(e)$, but, since the proof is by induction, we need to slightly generalize it.

\begin{prop}\label{prop: crucial}
Let $R$ be an in-PS. Let $k \geq \basis{R}$. 
Let $e$ be a $k$-heterogeneous pseudo-experiment on $R$. Let $i \in \Nat$. Let $\mathcal{P} \subseteq \exponentialportsatzero{\termofTaylor{R}{e}{i+1}} \setminus \boxesatzero{\termofTaylor{R}{e}{i}}$. Let $T \in \nontrivialconnected{\termofTaylor{R}{e}{i}}{\mathcal{P}}{k}$ such that $\conclusions{T} \subseteq \conclusions{\groundof{T}}$. We set 
\begin{itemize}
\item $\mathcal{T} = \{ T' \in \nontrivialconnected{\termofTaylor{R}{e}{i}}{\mathcal{P}}{k}; T \equiv T' \}$
\item $\mathcal{T'} = \{ T' \in \nontrivialconnected{\termofTaylor{R}{e}{i+1}}{\mathcal{P}}{k} ; T \equiv T' \}$
\item $\mathcal{B} = \{ o \in \boxesatzerogeq{R}{i} ; \conclusions{\groundof{T}} \cap \target{R}[\{ o \} \times \temporaryConclusions{R}{o}] \not= \emptyset \}$
\item $\mathcal{B'} = \{ o \in \boxesatzerogeq{R}{i+1} ; \conclusions{\groundof{T}} \cap \target{R}[\{ o \} \times \temporaryConclusions{R}{o}] \not= \emptyset \}$
\end{itemize}
Let $(m_j)_{j \in \Nat}, (m'_j)_{j \in \Nat} \in \{ 0, \ldots, k-1 \}^\Nat$ such that $\Card{\mathcal{T}}  = \sum_{j \in \Nat} m_j \cdot k^j$ 
and $ \Card{\mathcal{T'}} = \sum_{j \in \Nat} m'_j \cdot k^j$. Then the following properties hold:
\begin{itemize}
\item $\{ j \in \Nat \setminus \{ 0 \} ; m_j \not= 0 \} \subseteq \mathcal{M}_i(e)$
\item $\{ j \in \Nat \setminus \{ 0 \} ; m'_j \not= 0 \} \subseteq \mathcal{M}_{i+1}(e)$
\item $(\forall j \in \mathcal{M}_{i+1}(e)) m'_j = m_j$
\item $(\forall j \in \mathcal{N}_i(e)) m_j = \Card{\{ U \in \connectedcomponents{B_{\termofTaylor{R}{e}{i+1}}(\cod_{e, i}(j))} ; U \equiv_{(\termofTaylor{R}{e}{i+1}, \cod_{e, i}(j))} T \}}$
\end{itemize}
Moreover, if $\mathcal{B} \not= \emptyset$, then the following properties hold:
\begin{itemize}
\item $\cod_{e, i}[\{ j \in \Nat \setminus \{ 0 \} ; m_j \not= 0 \}] \subseteq \mathcal{B} \cup \bigcup_{o \in \mathcal{B}} \bigcup_{e_o \in e(o)} \cod_{e, i}[\mathcal{M}_i(e_o)]$
\item $\cod_{e, i}[\{ j \in \Nat \setminus \{ 0 \} ; m'_j \not= 0 \}] \subseteq \mathcal{B'} \cup \bigcup_{o \in \mathcal{B'}} \bigcup_{e_o \in e(o)} \cod_{e, i}[\mathcal{M}_{i+1}(e_o)]$
\end{itemize}
Finally, if $\mathcal{P} \subseteq \portsatzero{R}$, then $m_0 = \Card{\{ T' \in \mathcal{T} ; T' \sqsubseteq_{\mathcal{P}} R^{\leq i} \}} = m'_0$.
\end{prop}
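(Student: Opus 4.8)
The plan is to argue by induction on $\depthof{R}$. The base case $\depthof{R}=0$ is immediate: no box of depth $\geq i$ survives in $R^{\leq i}$, so $\termofTaylor{R}{e}{i} = \termofTaylor{R}{e}{i+1}$, the sets $\mathcal{M}_i(e)$, $\mathcal{M}_{i+1}(e)$, $\mathcal{N}_i(e)$ contain no positive integer, every component is ground, and the digit claims collapse to $m_0 = m'_0 = \Card{\mathcal{T}}$ with $m_j = m'_j = 0$ for $j>0$. For the inductive step, I would first partition $\mathcal{T}$ into a disjoint union. By Lemma~\ref{lemma: ports of connected components} each $T' \in \mathcal{T}$ either satisfies $T' \sqsubseteq_{\mathcal{P}} R^{\leq i}$ (a \emph{ground} component) or has all of its non-conclusion ports inside a single copy $R\langle o, i, e_o\rangle$ of one box $o \in \boxesatzerogeq{R}{i}$; Fact~\ref{fact: two components with a common port} makes that box and copy unique, so $\mathcal{T} = \mathcal{T}_0 \sqcup \bigsqcup_{o}\bigsqcup_{e_o \in e(o)} \mathcal{T}_{o, e_o}$, and likewise for $\mathcal{T'}$ (where now $o$ ranges over $\boxesatzerogeq{R}{i+1}$, the depth-$i$ boxes having been reabsorbed). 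The whole argument then amounts to computing each block and reading off the base-$k$ digits.

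Next, for a fixed $o \in \boxesatzerogeq{R}{i}$ I would reduce the block $\bigsqcup_{e_o}\mathcal{T}_{o, e_o}$ to a strictly shallower sub-problem. Introducing $R_o = \varphi \cdot_o R$ for a bijection $\varphi : \contractionsUnder{R}{o} \simeq \mathcal{Q'}$ (Definition~\ref{definition: adding contractions}) and the induced renaming $\varphi_{e_o}$, Lemma~\ref{lem: connected components of R_o}(\ref{lem: connected components of R_o item: [varphi]},\ref{lem: connected components of R_o item: [inverse of varphi]}) gives an $\equiv$-preserving bijection between $\mathcal{T}_{o, e_o}$ and the critical components of $\termofTaylor{B_{R_o}(o)}{e_o}{i}$ in the $\equiv$-class of $T$, the border ports being matched by Corollary~\ref{cor: arity for R_o} and Corollary~\ref{cor: new critical contractions of R_o}. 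Since $\depthof{B_{R_o}(o)} < \depthof{R}$ and, by Lemma~\ref{lem: basis of R_o} together with the definitions of $\cosize{}$, $\numberInvisibleComponents{}$ and $\boxes{}$, $k \geq \basis{R} \geq \basis{R_o} \geq \basis{B_{R_o}(o)}$, I may invoke the induction hypothesis on $B_{R_o}(o)$ with $e_o$. This expresses $\Card{\mathcal{T}_{o, e_o}}$ in base $k$, and for a new box $o \in \exactboxesatzero{R}{i}$ the relevant count is, via Lemma~\ref{lemm: components adding contractions} and Definition~\ref{definition: equiv modulo}, exactly the number of $U \in \connectedcomponents{B_{\termofTaylor{R}{e}{i+1}}(o)}$ with $U \equiv_{(\termofTaylor{R}{e}{i+1}, o)} T$.

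Finally I would assemble the digits. The $k$-heterogeneity of $e$ (Definition~\ref{defin: k-heterogeneous}) forces each box $o = \cod_{e, i}(j) \in \boxesatzerogeq{R}{i}$ to take exactly $k^j$ copies, with the exponents $j$ pairwise distinct across boxes; hence equal per-copy counts merge into a single digit at position $j$ and, the positions being distinct, no carry occurs. Identifying $\log_k[\bigcup e^\#[\boxesgeq{R}{i}]]$ with $\mathcal{M}_i(e)$ via Lemma~\ref{lem: M_i} then yields $\{ j>0 ; m_j \neq 0\} \subseteq \mathcal{M}_i(e)$ and, when $\mathcal{B} \neq \emptyset$, the sharper inclusion into $\mathcal{B} \cup \bigcup_{o \in \mathcal{B}}\bigcup_{e_o}\cod_{e, i}[\mathcal{M}_i(e_o)]$ by tracking which box supplies each component. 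For the passage to $\termofTaylor{R}{e}{i+1}$, Proposition~\ref{prop: connected components of termofTaylor{R}{e}{i+1}} together with Fact~\ref{fact: components of R^{<=i}} shows that the only components lost are those absorbed into the new depth-$i$ boxes, i.e.\ precisely the digits at the positions $\mathcal{N}_i(e)$; this gives at once $\{j>0 ; m'_j \neq 0\} \subseteq \mathcal{M}_{i+1}(e)$, the equalities $m'_j = m_j$ for $j \in \mathcal{M}_{i+1}(e)$, and the formula for $m_j$ with $j \in \mathcal{N}_i(e)$. When $\mathcal{P} \subseteq \portsatzero{R}$ the block $\mathcal{T}_0$ is exactly $\{T' \in \mathcal{T} ; T' \sqsubseteq_{\mathcal{P}} R^{\leq i}\}$ and contributes the $k^0$-digit, giving $m_0 = m'_0$.

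I expect the main obstacle to be the base-$k$ bookkeeping: one must check that every per-copy count of $\equiv$-components is genuinely $<k$, so that it is a legitimate digit — this is exactly where $k > \numberInvisibleComponents{R}, \cosize{R}$ is used — and that the distinct powers $k^j$ arising from distinct boxes never collide or produce carries. Keeping the ``which box does a component come from'' decomposition consistent with the $\equiv$-class count, in particular reconciling the ground-level relation $\equiv$ with the inside-the-box relation $\equiv_{(S, o)}$, is the technical heart of the argument.
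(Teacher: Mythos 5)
Your architecture is recognizably close to the paper's: decompose $\mathcal{T}$ into ground components and components confined to single copies, transport the per-copy blocks into the box contents via $R_o = \varphi \cdot_o R$ and Lemma~\ref{lem: connected components of R_o}, and assemble base-$k$ digits using heterogeneity. (The paper organizes the recursion differently --- induction on $(\depthof{R}, \Card{\mathcal{B}})$ lexicographically, peeling off one box of $\mathcal{B}$ at a time and counting the remaining components inside $\termofTaylor{R_o}{e}{i}$ at the \emph{same} depth --- but your simultaneous decomposition over all boxes is not in itself the problem.) The genuine gap is that your case analysis is incomplete: you only treat components whose shallow conclusions are border contractions or ports inside a copy, i.e.\ you implicitly assume $\conclusions{\groundof{T}} \cap \boxesatzerogeq{R}{i} = \emptyset$. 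But $\mathcal{P}$ may contain a port $o \in \boxesatzerogeq{R}{i}$ itself: such an $o$ is a co-contraction of $\termofTaylor{R}{e}{i}$, an exponential port of $\termofTaylor{R}{e}{i+1}$, and \emph{not} a box of $\termofTaylor{R}{e}{i}$, so it passes the hypothesis on $\mathcal{P}$; indeed the intended application (Proposition~\ref{prop: boxes}) takes $\mathcal{P} = \criticalports{\termofTaylor{R}{e}{i}}{k}{j_0}$, which contains $\cod_{e, i}(j_0)$. A component $T$ with $o \in \conclusions{\groundof{T}}$ consists of $o$ together with the main-door premise $(o, (e_o, \aboveBang{R}{o}))$ of one copy and everything connected to it; this is exactly the paper's Part~IV.

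Your reduction breaks on these components: the renaming $\varphi_{e_o}$ only moves the ports of $\contractionsUnder{R}{o} = \target{R}[\{ o \} \times \temporaryConclusions{R}{o}] \setminus \{ o \}$, leaving $o$ untouched, so $T[\varphi_{e_o}]$ still contains the depth-$0$ port $o$ of $R$, the hypothesis $\portsatzero{T} \subseteq \portsatzero{R_o \langle o, i, e_o \rangle}$ of item~(\ref{lem: connected components of R_o item: [inverse of varphi]}) of Lemma~\ref{lem: connected components of R_o} fails, and there is no corresponding component of $\termofTaylor{B_{R_o}(o)}{e_o}{i}$ at all --- your claimed bijection with critical components of the box content does not exist here. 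The paper handles this case by a direct argument, not induction: when $\conclusions{\groundof{T}} \subseteq \target{R}[\{ o \} \times \temporaryConclusions{R}{o}]$ there is exactly one such component per copy, all equivalent, so $\Card{\mathcal{T}} = k^{j_o}$, $\Card{\mathcal{T'}}$ is $k^{j_o}$ or $0$ according to whether $o \in \boxesatzerogeq{R}{i+1}$, and the matching set of $U$'s is the singleton $\{ \connectedcomponentsContaining{B_R(o)}(\aboveBang{R}{o}) \}$; otherwise $\Card{\mathcal{T}}, \Card{\mathcal{T'}} \leq 1$. Two further, smaller corrections to your bookkeeping. First, per-copy counts are neither equal across copies nor $< k$ in general (a component nested inside an inner box expanded $k^{j'}$ times occurs $k^{j'}$ times in that copy); what must coincide across the copies of $o$ is only the \emph{digit at position $0$} of each per-copy count, and that copy-independence is precisely what the final clause of the statement ($m_0 = \Card{\{ T' \in \mathcal{T} ; T' \sqsubseteq_{\mathcal{P}} R^{\leq i} \}} = m'_0$ when $\mathcal{P} \subseteq \portsatzero{R}$) supplies inductively --- your induction must carry that clause, which is why it is part of the proposition. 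Second, you cannot invoke Proposition~\ref{prop: connected components of termofTaylor{R}{e}{i+1}} for the primed digits: it is proved only for $\mathcal{P} = \criticalports{\termofTaylor{R}{e}{i}}{k}{\mathcal{N}_i(e)}$, whereas here $\mathcal{P}$ is arbitrary; the paper instead re-runs the entire decomposition in parallel for $\termofTaylor{R}{e}{i+1}$ (the primed sequences $p'_j, n'_j$), and you would have to do the same.
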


\begin{proof} 
We prove the proposition by induction on $\left(\depthof{R}, \Card{\mathcal{B}}\right)$ lexicographically ordered. Part~I) is devoted to the case where $\depthof{R} = 0$, Part~II) is devoted to the case where $\depthof{R} > 0$ and $\mathcal{B} = \emptyset$, Part~III) is devoted to the case where $\depthof{R} > 0$, $\mathcal{B} \not= \emptyset$ and $\conclusions{\groundof{T}} \cap \boxesatzerogeq{R}{i} = \emptyset$, and Part IV) is devoted to the case where $\depthof{R} > 0$, $\mathcal{B} \not= \emptyset$ and $\conclusions{\groundof{T}} \cap \boxesatzerogeq{R}{i} \not= \emptyset$. 

\paragraph*{Part I }$\depthof{R} = 0$: Then $\termofTaylor{R}{e}{i} = R = \termofTaylor{R}{e}{i+1}$ and $R^{\leq i} = R$, hence $\mathcal{T} = \{ T' \in \mathcal{T} ; T' \sqsubseteq_{\mathcal{P}} R^{\leq i} \} = \mathcal{T'}$;
\begin{itemize}
\item $\conclusions{\groundof{T}} \not= \emptyset$: $\Card{\{ T' \in \mathcal{T} ; T' \sqsubseteq_{\mathcal{P}} R^{\leq i} \}} \leq \cosize{R} < k$;
\item $\conclusions{\groundof{T}} = \emptyset$: $\Card{\{ T' \in \mathcal{T} ; T' \sqsubseteq_{\mathcal{P}} R^{\leq i} \}} \leq \numberInvisibleComponents{R} < k$;
\end{itemize}
so, in both cases, $\Card{\mathcal{T}} = \Card{\mathcal{T'}} < k$, which entails $m_0 = \Card{\mathcal{T}} = \Card{\mathcal{T'}} = m'_0$ and $\{ j \in \Nat \setminus \{ 0 \} ; m_j \not= 0 \} = \emptyset = \{ j \in \Nat \setminus \{ 0 \} ; m'_j \not= 0 \}$; moreover, $\mathcal{M}_{i+1}(e) = \emptyset = \mathcal{N}_i(e)$.

\paragraph*{Part II }$\depthof{R} > 0$ and $\mathcal{B} = \emptyset$:\footnote{As an instance of this case, take for $R$ the PS that is depicted in Figure~\ref{fig: new_example}, take $i = 0$, take for $e$ a $10$-heterogeneous experiment with $e^\#$ like in Example~\ref{example: pseudo-experiment}, take for $\mathcal{P}$ any subset of $\exponentialportsatzero{\termofTaylor{R}{e}{1}} \setminus \boxesatzero{\termofTaylor{R}{e}{0}}$ and for $T$ one of the $10^{224}$ copies of \scalebox{\scalefactter}{\begin{picture}(0,0)%
\includegraphics{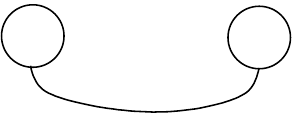}%
\end{picture}%
%
%
\setlength{\unitlength}{3947sp}%
\begingroup\makeatletter\ifx\SetFigFont\undefined%
\gdef\SetFigFont#1#2#3#4#5{%
  \reset@font\fontsize{#1}{#2pt}%
  \fontfamily{#3}\fontseries{#4}\fontshape{#5}%
  \selectfont}%
\fi\endgroup%
\begin{picture}(1403,534)(1272,-461)
\put(2448,-147){\makebox(0,0)[lb]{\smash{{\SetFigFont{12}{14.4}{\rmdefault}{\mddefault}{\updefault}{\color[rgb]{0,0,0}$\bottom$}%
}}}}
\put(1358,-156){\makebox(0,0)[lb]{\smash{{\SetFigFont{12}{14.4}{\rmdefault}{\mddefault}{\updefault}{\color[rgb]{0,0,0}$\one$}%
}}}}
\end{picture}%
}.} Then we distinguish between two cases (Case 1) and Case 2)):
\begin{itemize}
\item Case 1) There exist $o \in \boxesatzerogeq{R}{i}$ and $e_o \in e(o)$ such that the set $\portsatzero{T} \cap \portsatzero{R \langle o, i, e_o \rangle}$ is non-empty: By Lemma~\ref{lemma: ports of connected components}, we have $\portsatzero{T} \subseteq \portsatzero{R \langle o, i, e_o \rangle}$; we set $\mathcal{P}_o = \{ p \in \portsatzero{\termofTaylor{B_R(o)}{e_o}{i}} ; (o, (e_o, p)) \in \mathcal{P} \}$ and $T_0 \in \nontrivialconnected{\termofTaylor{B_R(o)}{e_o}{i}}{\mathcal{P}_o}{k}$ such that $T = \langle o, \langle e_o, T_0 \rangle \rangle$; we have $\mathcal{T} = \{ \langle o, \langle e_o, T' \rangle \rangle ; (T' \in \nontrivialconnected{\termofTaylor{B_R(o)}{e_o}{i}}{\mathcal{P}_o}{k} \wedge T_0 \equiv T') \}$ and $\mathcal{T'} = \{ \langle o, \langle e_o, T' \rangle \rangle ; (T' \in \nontrivialconnected{\termofTaylor{B_R(o)}{e_o}{i+1}}{\mathcal{P}_o}{k} \wedge T_0 \equiv T') \}$. We have $\basis{B_R(o)} \leq \basis{R}$ and $\conclusions{T_0} \subseteq \conclusions{\groundof{T_0}}$, hence we can apply the induction hypothesis: we obtain
\begin{itemize}
\item $\{ j \in \Nat \setminus \{ 0 \} ; m_j \not= 0 \} \subseteq \mathcal{M}_i(e_o) \subseteq \mathcal{M}_i(e)$ (by Lemma~\ref{lem: M_i})
\item $\{ j \in \Nat \setminus \{ 0 \} ; m'_j \not= 0 \} \subseteq \mathcal{M}_{i+1}(e_o) \subseteq \mathcal{M}_{i+1}(e)$ (by Lemma~\ref{lem: M_i})
\item and $(\forall j \in \mathcal{M}_{i+1}(e_o)) m'_j = m_j$, which entails $(\forall j \in \mathcal{M}_{i+1}(e)) m'_j = m_j$, since, by Lemma~\ref{lem: M_i}, we have $\mathcal{M}_{i+1}(e) \cap \mathcal{M}_i(e_o) = \mathcal{M}_{i+1}(e_o)$.
\end{itemize}

Now, let $j \in \mathcal{N}_i(e)$, let $U \in \connectedcomponents{B_{\termofTaylor{R}{e}{i+1}}(\cod_{e, i}(j))}$ such that $U \equiv_{(\termofTaylor{R}{e}{i+1}, \cod_{e, i}(j))} T$ and let $p \in \conclusions{\groundof{U}}$; there exists $p' \in \conclusions{\groundof{T}}$ such that $p' = \target{\termofTaylor{R}{e}{i+1}}(\cod_{e, i}(j), p)$; by Proposition~\ref{prop: critical ports below new boxes}, we have $\cod_{e, i}(j) \in \exactboxesatzero{R}{i}$ or there exist $o' \in \boxesatzerogeq{R}{i+1}$ and $e_{o'} \in e(o')$ such that $j \in \mathcal{N}_i(e_{o'})$:
\begin{itemize}
\item $\cod_{e, i}(j) \in \exactboxesatzero{R}{i}$: then we have $p' \in \portsatzero{R}$, hence $p' \notin \mathcal{P}$, which is in contradiction with $\conclusions{\groundof{T}} \subseteq \mathcal{P}$;
\item or ($j \in \mathcal{N}_i(e_{o'})$ for some $o' \in \boxesatzerogeq{R}{i+1}$ and $e_{o'} \in e(o')$): then we have $$p' = (o', (e_{o'}, \target{\termofTaylor{B_R(o')}{e_{o'}}{i+1}}(\cod_{e_{o'}, i}(j), p))) \in \mathcal{P}$$ hence $(o', e_{o'}) = (o, e_o)$.
\end{itemize}
This shows that, for any $j \in \mathcal{N}_i(e) \setminus \mathcal{N}_i(e_o)$, we have 
$$\Card{\{ U \in \connectedcomponents{B_{\termofTaylor{R}{e}{i+1}}(\cod_{e, i}(j))} ; U \equiv_{(\termofTaylor{R}{e}{i+1}, \cod_{e, i}(j))} T \}} = 0$$
but, since ${j} \notin \mathcal{M}_i(e_o)$, we have $m_{j} = 0$.

For any $j \in \mathcal{N}_i(e_o)$, we have $\cod_{e, i}(j) = (o, (e_o, \cod_{e_o, i}(j)))$; applying the induction hypothesis, we obtain
\begin{eqnarray*}
& & \Card{\{ U \in \connectedcomponents{B_{\termofTaylor{R}{e}{i+1}}(\cod_{e, i}(j))} ; U \equiv_{(\termofTaylor{R}{e}{i+1}, \cod_{e, i}(j))} T \}} \\
& = & \Card{\{ U \in \connectedcomponents{B_{\termofTaylor{R}{e}{i+1}}(o, (e_o, \cod_{e, i}(j)))} ; U \equiv_{(\termofTaylor{R}{e}{i+1}, (o, (e_o, \cod_{e_o, i}(j))))} T \}} \\
& = & \Card{\{ U \in \connectedcomponents{B_{\termofTaylor{B_R(o)}{e_o}{i+1}}(\cod_{e_o, i}(j))} ; U \equiv_{(\termofTaylor{B_R(o)}{e_o}{i+1}, \cod_{e_o, i}(j))} T_0 \}} \\
& = & m_{j}
\end{eqnarray*}
\item Case 2) For any $o \in \boxesatzerogeq{R}{i}$, for any $e_o \in e(o)$, we have $\portsatzero{T} \cap \portsatzero{R \langle o, i, e_o \rangle} = \emptyset$: Notice that then $T \sqsubseteq R$. We set $\mathcal{P}_0 = \mathcal{P} \cap \portsatzero{R}$. We have $\mathcal{T} = \{ T' \in \mathcal{T} ; T' \sqsubseteq_{\mathcal{P}_0} R^{\leq i} \}$, $\mathcal{T'} = \{ T' \in \mathcal{T'} ; T' \sqsubseteq_{\mathcal{P}_0} R^{\leq i} \}$ and $\Card{\{ T' \in \mathcal{T} ; T' \sqsubseteq_{\mathcal{P}_0} R^{\leq i} \}} < k$, hence, by Fact~\ref{fact: components of R^{<=i}}, we have $m_0 = \Card{\mathcal{T}} = \Card{\mathcal{T'}} = m'_0$ and $\{ j \in \Nat \setminus \{ 0 \} ; m_j \not= 0 \} = \emptyset = \{ j \in \Nat \setminus \{ 0 \} ; m'_j \not= 0 \}$.

Since $T \sqsubseteq R$ and $\mathcal{B} = \emptyset$, for any $o \in \exactboxesatzero{\termofTaylor{R}{e}{i+1}}{i}$, we have $$\{ U \in \connectedcomponents{B_{\termofTaylor{R}{e}{i+1}}(o)} ; U \equiv_{(\termofTaylor{R}{e}{i+1}, o)} T \} = \emptyset$$
\end{itemize}

\paragraph*{Part III }$\depthof{R} > 0$, $\mathcal{B} \not= \emptyset$ and $\conclusions{\groundof{T}} \cap \boxesatzerogeq{R}{i} = \emptyset$:\footnote{As an instance of this case, take for $R$ the PS that is depicted in Figure~\ref{fig: new_example}, take $i = 1$, take for $e$ a $10$-heterogeneous experiment with $e^\#$ like in Example~\ref{example: pseudo-experiment}, take $\mathcal{P} = \{ p_1 \}$ and for $T$ the differential PS that is depicted in Figure~\ref{fig: main_prop} (see Example~\ref{example: crucial prop} on page~\pageref{example: crucial prop}).} Then let $o \in \mathcal{B}$: Let $R_o$ be an in-PS and $\varphi$ be some bijection $\contractionsUnder{R}{o} \simeq \mathcal{Q}'$ such that $R_o = \varphi \cdot_o R$. 

Roughly speaking, the number $\sum_{j \in \Nat} m_j \cdot k^j$ (resp. $\sum_{j \in \Nat} m'_j \cdot k^j$) of components of $\termofTaylor{R}{e}{i}$ (resp. $\termofTaylor{R}{e}{i+1}$) that are equivalent to the connected component $T$ is the sum of the number $\sum_{j \in \Nat} p_j \cdot k^j$ (resp. $\sum_{j \in \Nat} p'_j \cdot k^j$) of such components that come from the expansion of the box $o$ and the number $\sum_{j \in \Nat} n_j \cdot k^j$ (resp. $\sum_{j \in \Nat} n'_j \cdot k^j$) of such components that do not come from the expansion of the box $o$, but we define the sequence $(n_j)_{j \in \Nat}$ (resp. the sequence $(n'_j)_{j \in \Nat}$) through $\termofTaylor{\varphi \cdot_o R}{e}{i}$ (resp. $\termofTaylor{\varphi \cdot_o R}{e}{i+1}$), and not through $\termofTaylor{R}{e}{i}$ (resp. $\termofTaylor{R}{e}{i+1}$), in order to be able to apply the induction hypothesis. 

Let $j_o \in \mathcal{M}_i(e)$ such that $\cod_{e, i}(j_o) = o$. We define a subset $\mathcal{N}$ of $\mathcal{N}_i(e)$ as follows: We set $\mathcal{N} = \left\lbrace \begin{array}{ll} \{ j_o \} & \textit{if $o \in \exactboxesatzero{R}{i}$;}\\ \bigcup_{e_o \in e(o)} \mathcal{N}_i(e_o) & \textit{otherwise.} \end{array} \right.$ For every $e_o \in e(o)$, let $\varphi_{e_o}$ be the bijection $\contractionsUnder{R}{o} \simeq \{ o \} \times (\{ e_o \} \times \mathcal{Q}')$ defined by $\varphi_{e_o}(q) = (o, (e_o, \varphi(q)))$ for any $q \in \contractionsUnder{R}{o}$; we set $\mathcal{P}_{e_o} = (\mathcal{P} \setminus \contractionsUnder{R}{o}) \cup \varphi_{e_o}[\mathcal{P} \cap \contractionsUnder{R}{o}]$.

Let $(p_j)_{j \in \Nat} \in \{ 0, \ldots, k-1 \}^{\Nat}$ such that 
$$\Card{\{ T' \in \mathcal{T} ; \portsatzero{T'} \cap \bigcup_{e_o \in e(o)} \portsatzero{R \langle o, i, e_o \rangle} \not= \emptyset \}} = \sum_{j \in \Nat} p_j \cdot k^j$$
For every $e_o \in e(o)$, let $(p_{e_o, j})_{j \in \Nat} \in \{ 0, \ldots, k-1 \}^{\Nat}$ such that $$\Card{\{ T' \in \nontrivialconnected{\termofTaylor{R_o}{e}{i}}{\mathcal{P}_{e_o}}{k} ; T' \equiv T[\varphi_{e_o}] \}} = \sum_{j \in \Nat} p_{e_o, j} \cdot k^j$$

Let $(p'_j)_{j \in \Nat} \in \{ 0, \ldots, k-1 \}^{\Nat}$ such that 
\begin{itemize}
\item if $o \in \boxesatzerogeq{R}{i+1}$, then
$$\Card{\{ T' \in \mathcal{T'} ; \portsatzero{T'} \cap \bigcup_{e_o \in e(o)} \portsatzero{R \langle o, i+1, e_o \rangle} \not= \emptyset \}} = \sum_{j \in \Nat} p'_j \cdot k^j$$
\item if $o \in \exactboxesatzero{R}{i}$, then $0 = \sum_{j \in \Nat} p'_j \cdot k^j$.
\end{itemize}
For any $e_o \in e(o)$, let $(p'_{e_o, j})_{j \in \Nat} \in \{ 0, \ldots, k-1 \}^{\Nat}$ such that 
\begin{itemize}
\item if $o \in \boxesatzerogeq{R}{i+1}$, then
$$\Card{\{ T' \in \nontrivialconnected{\termofTaylor{R_o}{e}{i+1}}{\mathcal{P}_{e_o}}{k} ; T' \equiv T[\varphi_{e_o}] \}} = \sum_{j \in \Nat} p'_{e_o, j} \cdot k^j$$
\item if $o \in \exactboxesatzero{R}{i}$, then $0 = \sum_{j \in \Nat} p'_{e_o, j} \cdot k^j$.
\end{itemize}

By Lemma~\ref{lemma: ports of connected components}, we have 
\begin{eqnarray*}
& & \Card{\{ T' \in \mathcal{T} ; \portsatzero{T'} \cap \bigcup_{e_o \in e(o)} \portsatzero{R \langle o, i, e_o \rangle} \not= \emptyset \}}\\
& = & \sum_{e_o \in e_o} \Card{\{ T' \in \mathcal{T} ; \portsatzero{T'} \cap \portsatzero{R \langle o, i, e_o \rangle} \not= \emptyset \}}
\end{eqnarray*}
hence, by Lemma~\ref{lem: connected components of R_o}~(\ref{lem: connected components of R_o item: [varphi]})~and~(\ref{lem: connected components of R_o item: [inverse of varphi]}), we have:
\begin{itemize}
\item $\sum_{j \in \Nat} p_j \cdot k^j = \sum_{e_o \in e(o)} \sum_{j \in \Nat} p_{e_o, j} \cdot k^j$ 
\item $\sum_{j \in \Nat} p'_j \cdot k^j = \sum_{e_o \in e(o)} \sum_{j \in \Nat} p'_{e_o, j} \cdot k^j$
\end{itemize}

Let $(n_j)_{j \in \Nat} \in \{ 0, \ldots, k-1 \}^{\Nat}$ such that
\begin{displaymath}
  \Card{\{ T' \in \nontrivialconnected{\termofTaylor{R_o}{e}{i}}{\conclusions{\groundof{T}}}{k} ; T' \equiv T \}} = \sum_{j \in \Nat} n_j \cdot k^j
\end{displaymath}
and let $(n'_j)_{j \in \Nat} \in \{ 0, \ldots, k-1 \}^{\Nat}$ such that
\begin{displaymath}
  \Card{\{ T' \in \nontrivialconnected{\termofTaylor{R_o}{e}{i+1}}{\conclusions{\groundof{T}}}{k} ; T' \equiv T \}} = \sum_{j \in \Nat} n'_j \cdot k^j.
\end{displaymath}
By Lemma~\ref{lem: connected components of R_o}~(\ref{lem: connected components of R_o item: R_o -> R})~and~(\ref{lem: connected components of R_o item: R -> R_o}), we have:
\begin{itemize}
\item $\sum_{j \in \Nat} m_j \cdot k^j = \sum_{j \in \Nat} (p_j + n_j) \cdot k^j$
\item $\sum_{j \in \Nat} m'_j \cdot k^j = \sum_{j \in \Nat} (p'_j + n'_j) \cdot k^j$
\end{itemize}
Now, the proof will be in three steps (Step 1), Step 2) and Step 3)):

\subparagraph*{Step 1) } It consists in proving properties about the sequences $(n_j)_{j \in \Nat}$ and $(n'_j)_{j \in \Nat}$. We distinguish between two cases:
\begin{itemize}
\item $\mathcal{B} = \{ o \}$: Then $\{ T' \in \nontrivialconnected{\termofTaylor{R_o}{e}{i}}{\mathcal{P}}{k} ; T' \equiv T \} = \{ T' \in \nontrivialconnected{\termofTaylor{R_o}{e}{i}}{\mathcal{P}}{k} ; (T' \equiv T \wedge T' \sqsubseteq_{\mathcal{P}} {R_o}^{\leq i}) \}$ and $\{ T' \in \nontrivialconnected{\termofTaylor{R_o}{e}{i+1}}{\mathcal{P}}{k} ; T' \equiv T \} = \{ T' \in \nontrivialconnected{\termofTaylor{R_o}{e}{i+1}}{\mathcal{P}}{k} ; (T' \equiv T \wedge T' \sqsubseteq_{\mathcal{P}} {R_o}^{\leq i}) \}$, hence $\Card{\{ T' \in \nontrivialconnected{\termofTaylor{R_o}{e}{i}}{\mathcal{P}}{k} ; T' \equiv T \}} < k$ and, by Fact~\ref{fact: components of R^{<=i}}, we have
$$\Card{\{ T' \in \nontrivialconnected{\termofTaylor{R_o}{e}{i}}{\mathcal{P}}{k} ; T' \equiv T \}} = \Card{\{ T' \in \nontrivialconnected{\termofTaylor{R_o}{e}{i+1}}{\mathcal{P}}{k} ; T' \equiv T \}}$$ 
so $\{ j \in \Nat \setminus \{ 0 \} ; n_j \not= 0 \} = \emptyset = \{ j \in \Nat \setminus \{ 0 \} ; n'_j \not= 0 \}$ and $$n_0 = \Card{\{ T' \in \nontrivialconnected{\termofTaylor{R_o}{e}{i}}{\mathcal{P}}{k} ; (T' \equiv T \wedge T' \sqsubseteq_{\mathcal{P}} {R_o}^{\leq i}) \}} = n'_0$$ (in particular, $(\forall j \in \Nat) n'_j = n_j$). If $\mathcal{P} \subseteq \portsatzero{R}$, then, by Lemma~\ref{lem: connected components of R_o are connected components of R}, we have $n_0 = \Card{\{ T' \in \mathcal{T} ; T' \sqsubseteq_{\mathcal{P}} R^{\leq i} \}} = n'_0$.
\item $\mathcal{B} \setminus \{ o \} \not= \emptyset$: By Lemma~\ref{lem: basis of R_o}, we can apply the induction hypothesis: We have 
\begin{itemize}
\item $\cod_{e, i}[\{ j \in \Nat \setminus \{ 0 \} ; n_j \not= 0 \}] \subseteq (\mathcal{B} \setminus \{ o \}) \cup \bigcup_{o' \in \mathcal{B} \setminus \{ o \}} \bigcup_{e_{o'} \in e(o')} \cod_{e, i}[\mathcal{M}_i(e_{o'})]$
\item $\cod_{e, i}[\{ j \in \Nat \setminus \{ 0 \} ; n'_j \not= 0 \}] \subseteq (\mathcal{B'} \setminus \{ o \}) \cup \bigcup_{o' \in \mathcal{B'} \setminus \{ o \}} \bigcup_{e_{o'} \in e(o')} \cod_{e, i}[\mathcal{M}_{i+1}(e_{o'})]$
\item $(\forall j \in \mathcal{M}_{i+1}(e)) n'_j = n_j$
\item if $\mathcal{P} \subseteq \portsatzero{R}$, then $n_0 = \Card{\{ T' \in \nontrivialconnected{\termofTaylor{R_o}{e}{i}}{\mathcal{P}}{k} ; (T' \equiv T \wedge T' \sqsubseteq_{\mathcal{P}} {R_o}^{\leq i}) \}} = n'_0$, hence, by Lemma~\ref{lem: connected components of R_o are connected components of R}, we have $n_0 = \Card{\{ T' \in \mathcal{T} ; T' \sqsubseteq_{\mathcal{P}} R^{\leq i} \}} = n'_0$.
\end{itemize}
\end{itemize}

Now, for any $j \in \mathcal{N}_i(e) \setminus \mathcal{N}$, we have $B_{\termofTaylor{R_o}{e}{i+1}}(\cod_{e, i}(j)) = B_{\termofTaylor{R}{e}{i+1}}(\cod_{e, i}(j))$, hence $$\Card{\{ U \in \connectedcomponents{B_{\termofTaylor{R}{e}{i+1}}(\cod_{e, i}(j))} ; U \equiv_{(\termofTaylor{R}{e}{i+1}, \cod_{e, i}(j))} T \}} = n_{j}$$

\subparagraph*{Step 2) }
In order to prove 
\begin{itemize}
\item $\{ j \in \Nat ; p_j \not= 0 \} \subseteq \{ 0, j_o \} \cup \bigcup_{e_o \in e(o)} \mathcal{M}_i(e_o)$ 
\item $(p_0 \not= 0 \Rightarrow n_0 = 0)$
\item $\{ j \in \Nat ; p'_j \not= 0 \} \subseteq \{ 0, j_o \} \cup \bigcup_{e_o \in e(o)} \mathcal{M}_{i+1}(e_o)$
\item $(o \in \boxesatzerogeq{R}{i+1} \Rightarrow (\forall j \in \{ j_o \} \cup \bigcup_{e_o \in e(o)} \mathcal{M}_{i+1}(e_o)) p'_j = p_j)$
\item $(p'_0 \not= 0 \Rightarrow n'_0 = 0)$
\item $(\forall j \in \mathcal{N}) p_j = \Card{\{ U \in \connectedcomponents{B_{\termofTaylor{R}{e}{i+1}}(\cod_{e, i}(j))} ; U \equiv_{(\termofTaylor{R}{e}{i+1}, \cod_{e, i}(j))} T \}}$
\end{itemize}
we distinguish between three cases (Case a), Case b) and Case c)):
\begin{itemize}
\item Case a) $\conclusions{\groundof{T}} \subseteq \contractionsUnder{R}{o}$: It is worth noticing that the differential in-PS $T$ has no co-contraction. If there is no $e_1 \in e(o)$ such that $\{ T' \in \nontrivialconnected{\termofTaylor{R_o}{e}{i}}{\mathcal{P}_{e_1}}{k} ; T' \equiv T[\varphi_{e_1}] \} \not= \emptyset$, then, by Lemma~\ref{lem: connected components of R_o}~(\ref{lem: connected components of R_o item: [varphi]}), for any $j \in \Nat$, we have $p_j = 0 = p'_j$, so there is nothing to prove. From now on, let us assume that there exist $e_1 \in e(o)$ and $T'_1 \in \{ T' \in \nontrivialconnected{\termofTaylor{R_o}{e}{i}}{\mathcal{P}_{e_1}}{k} ; T' \equiv T[\varphi_{e_1}] \}$; let $\zeta : T'_1 \equiv T[\varphi_{e_1}]$ and let $T_1 \in \nontrivialconnected{\termofTaylor{B_{R_o}(o)}{e_1}{i}}{\mathcal{Q}'}{k}$ such that $T'_1 = \langle o, \langle e_1, T_1 \rangle \rangle$; notice that, for any $e_o \in e(o)$, we have $T[\varphi_{e_o}] \equiv \langle o, \langle e_o, T_1 \rangle \rangle$, hence
\begin{eqnarray*}
& & \{ T' \in \nontrivialconnected{\termofTaylor{R_o}{e}{i}}{\mathcal{P}_{e_o}}{k} ; T' \equiv T[\varphi_{e_o}] \}\\
& = & \{ \langle o, \langle e_o, T' \rangle \rangle ; (T' \in \nontrivialconnected{\termofTaylor{B_{R_o}(o)}{e_o}{i}}{\mathcal{Q}'}{k} \wedge T' \equiv T_1) \} \: \: \: (\ast)
\end{eqnarray*}
we have $\conclusions{T_1} \subseteq \conclusions{\groundof{T_1}} \subseteq \mathcal{Q}' \subseteq \portsatzero{B_{R_o}(o)}$, hence, since by Lemma~\ref{lem: basis of R_o} we can apply the induction hypothesis, we have 
$$p_{e_1, 0} = \Card{\{ T' \in \nontrivialconnected{\termofTaylor{B_{R_o}(o)}{e_1}{i}}{\mathcal{Q}'}{k} ; (T' \equiv T_1 \wedge T' \sqsubseteq_{\mathcal{Q}'} {B_{R_o}(o)}^{\leq i}) \}}$$ 
and 
\begin{eqnarray*}
& & \sum_{e_o \in e(o)} \Card{\{ T' \in \nontrivialconnected{\termofTaylor{R_o}{e}{i}}{\mathcal{P}_{e_o}}{k} ; T' \equiv T[\varphi_{e_o}] \}}\\
& = & \sum_{e_o \in e(o)} \Card{\{ T' \in \nontrivialconnected{\termofTaylor{B_{R_o}(o)}{e_o}{i}}{\mathcal{Q}'}{k} ; T' \equiv T_1 \}} \: \: \: \textit{(by $(\ast)$)} \allowdisplaybreaks\\
& = & \Card{\{ T' \in \nontrivialconnected{\termofTaylor{B_{R_o}(o)}{e_1}{i}}{\mathcal{Q}'}{k} ; (T' \equiv T_1 \wedge T' \sqsubseteq_{\mathcal{Q}'} {B_{R_o}(o)}^{\leq i}) \}} \cdot \Card{e(o)}\\
& + & \sum_{e_o \in e(o)} \sum_{j \in \mathcal{M}_i(e_o)} p_{e_o, j} \cdot k^j
\end{eqnarray*}
hence $\cod_{e, i}[\{ j \in \Nat ; p_j \not= 0 \}] \subseteq \{ o \} \cup \bigcup_{e_o \in e(o)} \cod_{e, i}[\mathcal{M}_i(e_o)]$.

If $o \in \exactboxesatzero{R}{i}$, then there exists a bijection $\{ T' \in \nontrivialconnected{B_{R_o}(o)}{\mathcal{Q}'}{k} ; T' \equiv T_1 \} \simeq \{ U \in \connectedcomponents{B_R(o)} ; U \equiv_{(\termofTaylor{R}{e}{i+1}, o)} T \}$ that associates with every $T' \in \nontrivialconnected{B_{R_o}(o)}{\mathcal{Q}'}{k}$ such that $T' \equiv T_1$ the differential in-PS $\overline{T'}$. Indeed:
\begin{itemize}
\item By Lemma~\ref{lemm: components adding contractions}, there exists a bijection $\nontrivialconnected{B_{R_o}(o)}{\mathcal{Q}'}{k}  \simeq \connectedcomponents{B_R(o)}$ that associates with every $T' \in \nontrivialconnected{B_{R_o}(o)}{\mathcal{Q}'}{k}$ the differential in-PS $\overline{T'}$;
\item If $T' \in \nontrivialconnected{B_{R_o}(o)}{\mathcal{Q}'}{k}$ and $\psi: T' \equiv T_1$, then the function $\delta$ that associates with every $p \in \ports{\overline{T'}}$ the port $\zeta(o_1, (e_1, \psi(p)))$ of $\overline{T}$ is an isomorphism $\overline{T'} \simeq \overline{T}$ such that, 
for any $p \in \conclusions{\overline{T'}}$, we have $\targetAnyPort{T'}(p) = \targetAnyPort{B_{R_o}(o)}(p) = \varphi(\target{R}(o, p))$, hence 
\begin{eqnarray*}
\varphi_{e_1}(\targetAnyPort{T}(\delta(p))) & = & \varphi_{e_1}(\targetAnyPort{T}(\zeta(o_1, (e_1, \psi(p))))\\
& = & \targetAnyPort{T[\varphi_{e_1}]}(\zeta(o_1, (e_1, \psi(p))))\allowdisplaybreaks\\
& = & \zeta(\targetAnyPort{T'_1}(o_1, (e_1, \psi(p))))\allowdisplaybreaks\\
& = & \targetAnyPort{T'_1}(o_1, (e_1, \psi(p)))\allowdisplaybreaks\\
& = & (o_1, (e_1, \targetAnyPort{T_1}(\psi(p))))\allowdisplaybreaks\\
& = & (o_1, (e_1, \psi(\targetAnyPort{T'}(p))))\allowdisplaybreaks\\
& = & (o_1, (e_1, \targetAnyPort{T'}(p)))\allowdisplaybreaks\\
& = & (o_1, (e_1, \varphi(\target{R}(o, p))))\\
& = & \varphi_{e_1}(\target{R}(o, p))
\end{eqnarray*}
So, 
$\target{\termofTaylor{R}{e}{i+1}}(o, p) = \target{R}(o, p) = \targetAnyPort{T}(\delta(p))$.
\end{itemize}

If $o \in \boxesatzerogeq{R}{i+1}$, then, by induction hypothesis, $p'_{e_1, 0} = \Card{\{ T' \sqsubseteq_{\mathcal{Q}'} B_{R_o}(o) ; T' \equiv T_1 \}}$ and 
\begin{eqnarray*}
& & \sum_{e_o \in e(o)} \Card{\{ T' \in \nontrivialconnected{\termofTaylor{R_o}{e}{i+1}}{\mathcal{P}_{e_o}}{k} ; T' \equiv T[\varphi_{e_o}] \}}\\
& = & \sum_{e_o \in e(o)} \Card{\{ T' \in \nontrivialconnected{\termofTaylor{B_{R_o}(o)}{e_o}{i+1}}{\mathcal{Q}'}{k} ; T' \equiv T_1 \}}\\
& = & \Card{\{ T' \in \nontrivialconnected{\termofTaylor{B_{R_o}(o)}{e_1}{i}}{\mathcal{Q}'}{k} ; (T' \equiv T_1 \wedge T' \sqsubseteq_{\mathcal{Q}'} B_{R_o}(o)) \}} \cdot \Card{e(o)}\\
& + & \sum_{e_o \in e(o)} \sum_{j \in \mathcal{M}_{i+1}(e_o)} p'_{e_o, j} \cdot k^j
\end{eqnarray*}
hence $\cod_{e, i}[\{ j \in \Nat ; p'_j \not= 0 \} \subseteq \{ o \} \cup \bigcup_{e_o \in e(o)} \cod_{e, i}[\mathcal{M}_{i+1}(e_o)]$; by induction hypothesis, for any $e_o \in e(o)$, we have $(\forall j \in \mathcal{M}_{i+1}(e_o) \cup \{ 0 \}) p'_{e_o, j} = p_{e_o, j}$; moreover, we have $$p_{j_o} = \Card{\{ T' \in \nontrivialconnected{\termofTaylor{B_{R_o}(o)}{e_1}{i}}{\mathcal{Q}'}{k} ; (T' \equiv T_1 \wedge T' \sqsubseteq_{\mathcal{Q}'} B_{R_o}(o)) \}} = p'_{j_o}$$ now, let $j \in \mathcal{N}$ and let $e_o \in e(o)$ such that $\cod_{e, i}(j) = (o, (e_o, \cod_{e_o, i}(j)))$: We have 
\begin{eqnarray*}
& & \Card{\{ U \in \connectedcomponents{B_{\termofTaylor{R}{e}{i+1}}(\cod_{e, i}(j))} ; U \equiv_{(\termofTaylor{R}{e}{i+1}, \cod_{e, i}(j))} T \}}\\
& = & \Card{\{ U \in \connectedcomponents{B_{\termofTaylor{R_o}{e}{i+1}}(\cod_{e, i}(j))} ; U \equiv_{(\termofTaylor{R_o}{e}{i+1}, \cod_{e, i}(j))} T[\varphi_{e_o}] \}}\\
& = & p_j \textit{   (by induction hypothesis)}
\end{eqnarray*}
\item Case b) There exists $e_1 \in e(o)$ such that $\conclusions{\groundof{T}} \cap \portsatzero{R \langle o, i, e_1 \rangle} \not= \emptyset$: By Lemma~\ref{lemma: ports of connected components}, we have $\portsatzero{T} \subseteq \portsatzero{R \langle o, i, e_1 \rangle} \cup \target{R}[\{ o \} \times \temporaryConclusions{R}{o}]$, hence $(\forall j \in \Nat) n_j = 0 = n'_j$; we set $\mathcal{P'} = \mathcal{Q}' \cup \{ p ; (o, (e_1, p)) \in \mathcal{P} \}$; let $T_0 \in \nontrivialconnected{\termofTaylor{B_{R_o}(o)}{e_1}{i}}{\mathcal{P'}}{k}$ such that $T = \langle o, \langle e_1, T_0 \rangle \rangle$; we have $\conclusions{T_0} \subseteq \conclusions{\groundof{T_0}}$ and, by Lemma~\ref{lem: basis of R_o}, $\basis{R_o} \leq \basis{R} \leq k$, hence we can apply the induction hypothesis: We obtain 
\begin{eqnarray*}
&  & \sum_{e_o \in e(o)} \Card{\{ T' \in \mathcal{T} ; \portsatzero{T'} \cap \portsatzero{R \langle o, i, e_o \rangle} \not= \emptyset \}} \\
& = & \Card{\{ T' \in \mathcal{T} ; \portsatzero{T'} \cap \portsatzero{R \langle o, i, e_1 \rangle} \not= \emptyset \}}\\
& = & \Card{\{ T' \in \nontrivialconnected{\termofTaylor{R_o}{e}{i}}{\mathcal{P}_{e_1}}{k} ; T' \equiv T[\varphi_{e_1}] \}}\\
& = & \Card{\{ T' \in \nontrivialconnected{\termofTaylor{B_{R_o}(o)}{e_1}{i}}{\mathcal{P'}}{k} ; T' \equiv T_0 \}}\\
& = & \sum_{j \in \mathcal{M}_i(e_1) \cup \{ 0 \}} p_{e_1, j} \cdot k^j
\end{eqnarray*}
hence we have $\{ j \in \Nat ; p_j \not= 0 \} \subseteq \mathcal{M}_i(e_1) \cup \{ 0 \}$. 

If $o \in \exactboxesatzero{R}{i}$ then $\mathcal{M}_i(e_1) = \emptyset$, hence $p_{j_o} = 0$; from the other hand, since $\conclusions{\groundof{T}} \cap \portsatzero{R \langle o, i, e_1 \rangle} \not= \emptyset$, we have $\{ U \in \connectedcomponents{B_{\termofTaylor{R}{e}{i+1}}(\cod_{e, i}(j_o))} ; U \equiv_{(\termofTaylor{R}{e}{i+1}, \cod_{e, i}(j))} T \} = \{ U \in \connectedcomponents{B_{R}(o)} ; U \equiv_{(\termofTaylor{R}{e}{i+1}, \cod_{e, i}(j))} T \} = \emptyset$.

If $o \in \boxesatzerogeq{R}{i+1}$, then, by induction hypothesis, we have
\begin{eqnarray*}
&  & \sum_{e_o \in e(o)} \Card{\{ T' \in \mathcal{T'} ; \portsatzero{T'} \cap \portsatzero{R \langle o, i+1, e_o \rangle} \not= \emptyset \}} \\
& = & \Card{\{ T' \in \mathcal{T'} ; \portsatzero{T'} \cap \portsatzero{R \langle o, i+1, e_1 \rangle} \not= \emptyset \}}\\
& = & \Card{\{ T' \in \nontrivialconnected{\termofTaylor{R_o}{e}{i+1}}{\mathcal{P}_{e_1}}{k} ; T' \equiv T[\varphi_{e_1}] \}}\\
& = & \Card{\{ T' \in \nontrivialconnected{\termofTaylor{B_{R_o}(o)}{e_1}{i+1}}{\mathcal{P'}}{k} ; T' \equiv T_0 \}}\\
& = & \sum_{j \in \mathcal{M}_{i+1}(e_1) \cup \{ 0 \}} p'_{e_1, j} \cdot k^j
\end{eqnarray*}
and $\{ j \in \Nat ; p'_{e_1, j} \not= 0 \} \subseteq \mathcal{M}_{i+1}(e_1) \cup \{ 0 \}$ and $(\forall j \in \mathcal{M}_{i+1}(e_1) \cup \{ 0 \}) p'_{e_1, j} = p_{e_1, j}$; let $j \in \mathcal{N}$ and let $e_o \in e(o)$ such that $\cod_{e, i}(j) = (o, (e_o, \cod_{e_o, i}(j)))$: 
\begin{itemize}
\item if $e_o \not= e_1$, then $j \notin \mathcal{M}_i(e_1)$, hence $p_j = 0$; from the other hand, we have 
$$\{ U \in \connectedcomponents{B_{\termofTaylor{R}{e}{i+1}}(\cod_{e, i}(j))} ; U \equiv_{(\termofTaylor{R}{e}{i+1}, \cod_{e, i}(j))} T \} = \emptyset$$
\item if $e_o = e_1$, then we have 
\begin{eqnarray*}
& & \Card{\{ U \in \connectedcomponents{B_{\termofTaylor{R}{e}{i+1}}(\cod_{e, i}(j))} ; U \equiv_{(\termofTaylor{R}{e}{i+1}, \cod_{e, i}(j))} T \}}\\
& = & \Card{\{ U \in \connectedcomponents{B_{\termofTaylor{R_o}{e}{i+1}}(\cod_{e, i}(j))} ; U \equiv_{(\termofTaylor{R_o}{e}{i+1}, \cod_{e, i}(j))} T[\varphi_{e_1}] \}}\\
& = & p_j \textit{   (by induction hypothesis)}
\end{eqnarray*}
\end{itemize}
\item Case c) $\conclusions{\groundof{T}} \cap (\portsatzero{R} \setminus \{ \target{R}(o, p) ; p \in \conclusions{B_R(o)} \}) \not= \emptyset$: By Lemma~\ref{lemma: ports of connected components}, for any $e_o \in e(o)$, we have $\{ T' \in \mathcal{T} ; \portsatzero{T'} \cap \portsatzero{R \langle o, i, e_o \rangle} \not= \emptyset \} = \emptyset = \{ T' \in \mathcal{T'} ; \portsatzero{T'} \cap \portsatzero{R \langle o, i, e_o \rangle} \not= \emptyset \}$, hence $(\forall j \in \Nat) p_j = 0 = p'_j$. From the other hand, for any $j \in \mathcal{N}_i(e)$, the set $\{ U \in \connectedcomponents{B_{\termofTaylor{R}{e}{i+1}}(\cod_{e, i}(j))} ; U \equiv_{(\termofTaylor{R}{e}{i+1}, \cod_{e, i}(j))} T \}$ is empty.
\end{itemize}

\subparagraph*{Step 3) }
We distinguish between two cases:
\begin{itemize}
\item Case a) $o \in \exactboxesatzero{R}{i}$: We have $\cod_{e, i}[\{ j \in \Nat ; p_j \not= 0 \} \cap \mathcal{M}_{i+	1}(e)] \subseteq \cod_{e, i}[\{ j \in \Nat ; p_j \not= 0 \}] \cap \cod_{e, i}[\mathcal{M}_{i+1}(e)] \subseteq (\{ o \} \cup \bigcup_{e_o \in e(o)} \cod_{e, i}[\mathcal{M}_i(e_o)]) \cap \cod_{e, i}[\mathcal{M}_{i+1}(e)] = (\{ o \} \cap \cod_{e, i}[\mathcal{M}_{i+1}(e)]) \cup \bigcup_{e_o \in e(o)} (\cod_{e, i}[\mathcal{M}_i(e_o)] \cap \cod_{e, i}[\mathcal{M}_{i+1}(e)]) \subseteq \emptyset \cup \bigcup_{e_o \in e(o)} \cod_{e, i}[\mathcal{M}_i(e_o) \cap \mathcal{M}_{i+1}(e)] =\emptyset$, hence, for any $j \in \Nat$, $m_j = n_j$; by Lemma~\ref{lem: connected components of R_o}~(\ref{lem: connected components of R_o item: R_o -> R})~and(\ref{lem: connected components of R_o item: R -> R_o}), we have 
\begin{eqnarray*}
\Card{\mathcal{T'}} & = & \Card{\{ T' \in \nontrivialconnected{\termofTaylor{R_o}{e}{i+1}}{\conclusions{\groundof{T}}}{k} ; T' \equiv T \}}
\end{eqnarray*}
hence, for any $j \in \Nat$, we have $m'_j = n'_j$; since, for any $j \in \mathcal{M}_{i+1}(e)$, we have $n'_j = n_j$, we obtain $(\forall j \in \mathcal{M}_{i+1}(e)) m'_j = m_j$.
\item Case b) $o \in \boxesatzerogeq{R}{i+1}$: We have $\{ j \in \Nat ; m'_j \not= 0 \} = \{ j \in \Nat ; n'_j \not= 0 \} \cup \{ j \in \Nat ; p'_j \not= 0 \}$, hence $\cod_{e, i}[\{ j \in \Nat \setminus \{ 0 \} ; m'_j \not= 0 \}] \subseteq \mathcal{B'} \cup \bigcup_{o \in \mathcal{B'}} \bigcup_{e_o \in e(o)} \cod_{e, i}[\mathcal{M}_{i+1}(e_o)]$. For any $j \in \{ j_o \} \cup \bigcup_{e_o \in e(o)} \mathcal{M}_{i+1}(e_o)$, we have $n'_j = n_j$ and $p'_j = p_j$, hence $m'_j = m_j$; for any $j \in \mathcal{M}_{i+1}(e) \setminus (\{ j_o \} \cup \bigcup_{e_o \in e(o)} \mathcal{M}_{i+1}(e_o))$, we have $n'_j = n_j$ and $p'_j = 0 = p_0$, hence $m'_j = m_j$.
\end{itemize}

\paragraph*{Part IV }$\depthof{R} > 0$, $\mathcal{B} \not= \emptyset$ and $\conclusions{\groundof{T}} \cap \boxesatzerogeq{R}{i} \not= \emptyset$: Let $o \in \conclusions{\groundof{T}} \cap \boxesatzerogeq{R}{i}$. There exists $e_o \in e(o)$ such that $(o, (e_o, \aboveBang{R}{o})) \in \portsatzero{T}$ and $\target{T}(o, (e_o, \aboveBang{R}{o})) = o$. Notice that $o$ is a co-contraction of $T$ such that $\arity{\groundof{T}}(o) > 0$, which entails that the set $\{ T' \in \mathcal{T} ; T' \sqsubseteq_{\mathcal{P}} R^{\leq i} \}$ is empty. We distinguish between two cases: Case a), where $\conclusions{\groundof{T}} \subseteq \target{R}[\{o \} \times \temporaryConclusions{R}{o}]$ and Case b), where $\conclusions{\groundof{T}} \setminus \target{R}[\{o \} \times \temporaryConclusions{R}{o}] \not= \emptyset$.
\begin{itemize}
\item Case a) $\conclusions{\groundof{T}} \subseteq \target{R}[\{o \} \times \temporaryConclusions{R}{o}]$: Let $j_o \in \mathcal{M}_i(e)$ such that $\cod_{e, i}(j_o) = o$. \begin{itemize}
\item We have $\Card{\mathcal{T}} = k^{j_o}$.
\item We have $\Card{\mathcal{T'}} = \left\lbrace \begin{array}{ll} k^{j_o} & \textit{if $o \in \boxesatzerogeq{R}{i+1}$;}\\ 0 & \textit{otherwise.} \end{array} \right.$
\item We have $\{ U \in \connectedcomponents{B_{\termofTaylor{R}{e}{i+1}}(\cod_{e, i}(j_o))} ; U \equiv_{(\termofTaylor{R}{e}{i+1}, \cod_{e, i}(j_o))} T \} = \connectedcomponentsContaining{B_R(o)}(\aboveBang{R}{o})$, hence, for any $j \in \mathcal{N}_i(e)$,
\begin{eqnarray*}
& & \Card{\{ U \in \connectedcomponents{B_{\termofTaylor{R}{e}{i+1}}(\cod_{e, i}(j))} ; U \equiv_{(\termofTaylor{R}{e}{i+1}, \cod_{e, i}(j))} T \}}\\
& = & \left\lbrace \begin{array}{ll} 1 & \textit{if $j = j_o \in \mathcal{N}_i(e)$;}\\ 0 & \textit{otherwise.} \end{array} \right.
\end{eqnarray*}
\end{itemize}
\item Case b) $\conclusions{\groundof{T}} \setminus \target{R}[\{o \} \times \temporaryConclusions{R}{o}] \not= \emptyset$: By Lemma~\ref{lemma: ports of connected components}, $\portsatzero{T} \setminus \target{R}[\{ o \} \times \temporaryConclusions{R}{o}] \subseteq \portsatzero{R \langle o, i, e_o \rangle}$, hence $\mathcal{P} \subseteq \portsatzero{R}$ does not hold. 
\begin{itemize}
\item For any $T' \in \mathcal{T} \cup \mathcal{T'}$, we have $(o, (e_o, \aboveBang{R}{o})) \in \portsatzero{T'}$, hence $\Card{\mathcal{T}} \leq 1$ and $\Card{\mathcal{T'}} \leq 1$.
\item For any $j \in \mathcal{N}_i(e)$, we have
  \begin{align*}
    \{ U \in \connectedcomponents{B_{\termofTaylor{R}{e}{i+1}}(\cod_{e, i}(j))} ; U \equiv_{(\termofTaylor{R}{e}{i+1}, \cod_{e, i}(j))} T \} = \emptyset \tag*{\qedhere}
  \end{align*}

\end{itemize}
\end{itemize}
\end{proof}

\begin{figure}
\centering
\begin{minipage}{0.45\textwidth}
\centering
\resizebox{.4 \textwidth}{!}{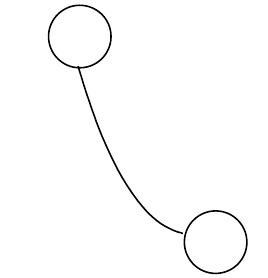}
\captionsetup{width= .9 \textwidth}
\caption{The differential in-PS \mbox{$T \in \nontrivialconnected{\termofTaylor{R}{e}{1}}{\{ p_1 \}}{10}$}}
\label{fig: main_prop}
\end{minipage}\hfill
\begin{minipage}{0.45\textwidth}
\centering
\resizebox{.15 \textwidth}{!}{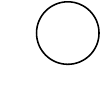}
\captionsetup{width= .6 \textwidth}
\caption{The in-PS \mbox{$U_0 \in \connectedcomponents{B_{\termofTaylor{R}{e}{2}}(\cod_{e, 1}(1))}$}}
\label{fig: U_0}
\end{minipage}
\end{figure}

\begin{exa}\label{example: crucial prop}
(Continuation of Example~\ref{example: approximant}) Let $T$ be the differential in-PS depicted in Figure~\ref{fig: main_prop}.
We have $T \in \nontrivialconnected{\termofTaylor{R}{e}{1}}{\{ p_1 \}}{10}$, $\Card{\mathcal{T}} = 11$ and $\Card{\mathcal{T'}} = 1$, $\mathcal{B} = \{ o_2 \}$ and $\mathcal{B'} = \emptyset$. We thus have:
\begin{itemize}
\item $m_0 = 1 = m_1$ and $m_j = 0$ for any $j \geq  2$
\item and $m'_0 = 1$ and $m'_j = 0$ for any $j \geq 1$.
\end{itemize}
We recall (see Example~\ref{example: N_i(e)}) that $\mathcal{M}_1(e) = \{ 1, 2 \} = \mathcal{N}_1(e)$ and $\mathcal{M}_2(e) = \emptyset$. Notice that we have $\{ U \in \connectedcomponents{B_{\termofTaylor{R}{e}{2}}(\cod_{e, 1}(1))} ; U \equiv_{(\termofTaylor{R}{e}{i+1}, \cod_{e, i}(j))} T \} = \{ U_0 \}$, where $U_0$ is the in-PS depicted in Figure~\ref{fig: U_0}.

\begin{figure}
\centering
\resizebox{\textwidth}{!}{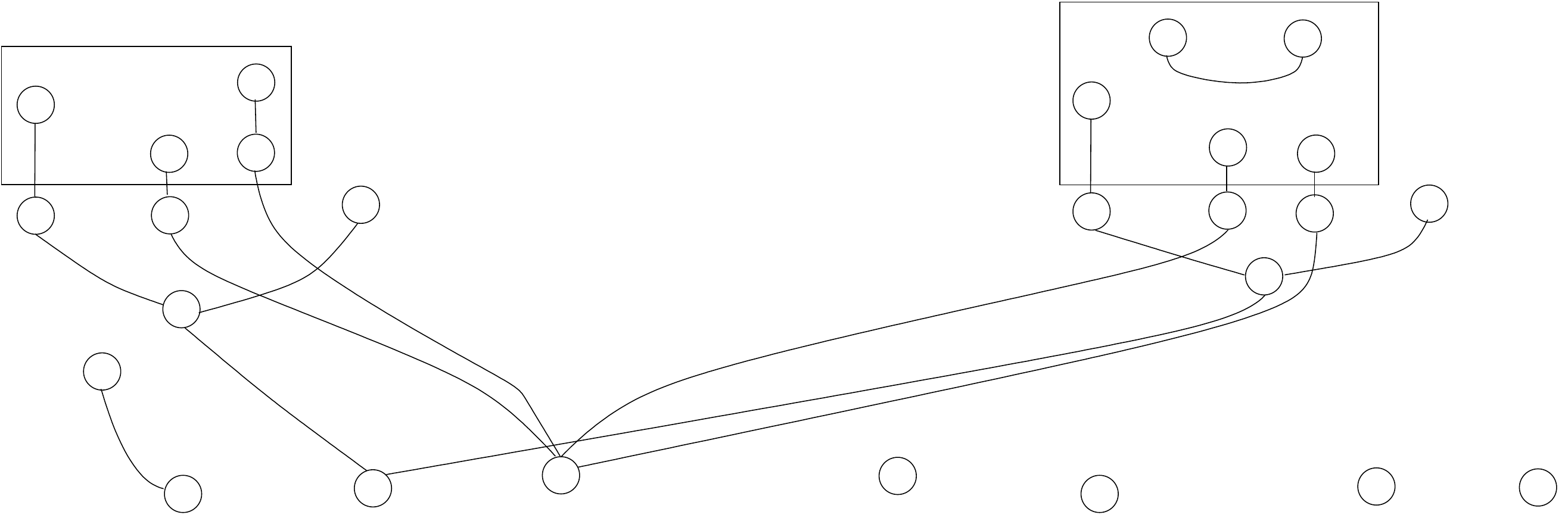}
\caption{The differential PS $R^{\leq 1}$}
\label{fig: Rleq1}
\end{figure}
The differential PS $R^{\leq 1}$ is depicted in Figure~\ref{fig: Rleq1}: We have $\{ T' \in \mathcal{T} ; T' \sqsubseteq_{\{ p_1 \}} R^{\leq 1} \} = \{ T \}$.
\end{exa}

\subsection{The content of the boxes}

\begin{lem}\label{lem: bijection => equiv}
Let $S$ be a differential in-PS. Let $o \in \boxesatzero{S}$. Let $\mathcal{T}$ be a set of differential in-PS's that is gluable. Assume that there exists a bijection $\gamma : \connectedcomponents{B_S(o)} \simeq \mathcal{T}$ such that, for any $V \in \connectedcomponents{B_S(o)}$, we have $V \equiv_{(S, o)} \gamma(V)$. Then $B_S(o) \equiv_{(S, o)} \bigoplus \mathcal{T}$.
\end{lem}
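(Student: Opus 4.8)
The plan is to assemble the required isomorphism out of the local isomorphisms witnessing $V \equiv_{(S, o)} \gamma(V)$, exploiting that the connected components of $B_S(o)$ are pairwise port-disjoint and closed under the relation $\coh$. First I would record the structural decomposition. By Fact~\ref{fact: connected components} we have $B_S(o) = \bigoplus \connectedcomponents{B_S(o)}$, and since each $V \in \connectedcomponents{B_S(o)}$ satisfies $V \trianglelefteq_\emptyset B_S(o)$, the sets $\ports{V}$ partition $\ports{B_S(o)}$ and each $V$ is closed under $\coh_{B_S(o)}$; in particular no wire, axiom, cut, or pair of doors of a common box links two distinct components, and $\conclusions{B_S(o)} = \bigsqcup_{V} \conclusions{V}$. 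As $\gamma$ is a bijection $\connectedcomponents{B_S(o)} \simeq \mathcal{T}$, we have $\bigoplus \mathcal{T} = \bigoplus_{V \in \connectedcomponents{B_S(o)}} \gamma(V)$. For each $V$ I fix $\varphi_V : V \simeq \overline{\gamma(V)}$ witnessing $V \equiv_{(S, o)} \gamma(V)$, so that $\target{S}(o, p) = \targetAnyPort{\gamma(V)}(\varphi_V(p))$ for every $p \in \conclusions{V}$. Since each $\conclusions{V} \subseteq \temporaryConclusions{S}{o}$ and each $\conclusions{\groundof{\gamma(V)}} \subseteq \exponentialportsatzero{\gamma(V)} \setminus \boxesatzero{\gamma(V)}$, the corresponding inclusions hold for $B_S(o)$ and for $\bigoplus \mathcal{T}$, so both $\overline{\bigoplus \mathcal{T}}$ and the statement $B_S(o) \equiv_{(S, o)} \bigoplus \mathcal{T}$ are well-formed.

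The main step is the identity $\overline{\bigoplus \mathcal{T}} = \bigoplus_{V} \overline{\gamma(V)}$. Because $\mathcal{T}$ is gluable, any port shared by two distinct members $\gamma(V), \gamma(V')$ is a shallow contraction conclusion of both; hence $\conclusions{\groundof{\bigoplus \mathcal{T}}} = \bigcup_V \conclusions{\groundof{\gamma(V)}}$, and after deleting these conclusions the residual pieces $\overline{\gamma(V)}$ become pairwise port-disjoint, so $\bigoplus_V \overline{\gamma(V)}$ is a genuine (disjoint) gluing. Comparing the two sides portwise, using that $\bigoplus$ inherits labels, wires, axioms, cuts, left-wires, boxes, box-contents and target functions from the member containing each port, and that $\overline{\cdot}$ deletes exactly the shallow ground conclusions (turning their former premises into conclusions), one checks that the two differential in-PS's coincide; this is the routine-but-careful bookkeeping heart of the argument. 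Consequently $\ports{\overline{\bigoplus \mathcal{T}}} = \bigsqcup_V \ports{\overline{\gamma(V)}}$, and I set $\Phi = \bigcup_V \varphi_V$, a bijection $\ports{B_S(o)} \simeq \ports{\overline{\bigoplus \mathcal{T}}}$.

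It then remains to check that $\Phi : B_S(o) \equiv_{(S, o)} \bigoplus \mathcal{T}$. That $\Phi$ is an isomorphism $B_S(o) \simeq \overline{\bigoplus \mathcal{T}}$ follows componentwise: since the components are $\coh$-closed and the $\overline{\gamma(V)}$ are disjoint, the ground data, the sets $\boxesatzero{\cdot}$, the box-contents, and the partial map $\target{\cdot}$ all split as disjoint unions over $V$, on each of which $\Phi$ restricts to $\groundof{\varphi_V}$, respectively $\varphi_V$, the isomorphism witnessing $V \simeq \overline{\gamma(V)}$. For the target condition, let $p \in \conclusions{B_S(o)}$ and let $V$ be the unique component with $p \in \conclusions{V}$; then $\Phi(p) = \varphi_V(p) \in \dom{\targetAnyPort{\gamma(V)}} \subseteq \dom{\targetAnyPort{\bigoplus \mathcal{T}}}$, and by the definition of $\bigoplus$ we have $\targetAnyPort{\bigoplus \mathcal{T}}(\varphi_V(p)) = \targetAnyPort{\gamma(V)}(\varphi_V(p)) = \target{S}(o, p)$, the last equality being exactly the hypothesis $\varphi_V : V \equiv_{(S, o)} \gamma(V)$. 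This yields $\Phi : B_S(o) \equiv_{(S, o)} \bigoplus \mathcal{T}$, as required.

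The hard part is the structural identity $\overline{\bigoplus \mathcal{T}} = \bigoplus_V \overline{\gamma(V)}$: although conceptually clear, it must be established by unwinding the definitions of $\bigoplus$, $\overline{\cdot}$, and $@$ simultaneously, and the one point that genuinely uses the hypotheses rather than pure bookkeeping is tracking how the former premises of the deleted shallow contractions are reconnected. This is precisely where the gluability of $\mathcal{T}$ and the closure of the connected components under $\coh$ must be invoked, to guarantee that no such premise crosses a component boundary and that the reconnection data on each side are the same.
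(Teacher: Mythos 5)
Your proof is correct and follows essentially the same route as the paper's: the paper also glues the local witnesses by setting $\varphi(p) = \varphi_{\connectedcomponentsContaining{B_S(o)}(p)}(p)$ and invoking Fact~\ref{fact: connected components} to identify $B_S(o)$ with $\bigoplus \connectedcomponents{B_S(o)}$. The only difference is that the paper leaves implicit the bookkeeping you spell out (the identity $\overline{\bigoplus \mathcal{T}} = \bigoplus_V \overline{\gamma(V)}$ and the componentwise verification of the isomorphism and target conditions).
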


\begin{proof}
For each $V \in \connectedcomponents{B_S(o)}$, we are given $\varphi_V : V \equiv_{(S, o)} \gamma(V)$. Now, we define $\varphi: \bigoplus \connectedcomponents{B_S(o)} \equiv_{(S, o)} \bigoplus \mathcal{T}$ as follows: for any $p \in \ports{\bigoplus \connectedcomponents{B_S(o)}}$, we set $\varphi(p) = \varphi_{\connectedcomponentsContaining{B_S(o)}(p)}(p)$. But, by Fact~\ref{fact: connected components}, we have $\bigoplus \connectedcomponents{B_S(o)} = B_S(o)$.
\end{proof}

\begin{fact}\label{fact: equiv modulo}
Let $R$ be an in-PS. Let $o \in \boxesatzerogeq{R}{i}$. Let $e_o \in e(o)$. Let $i \in \Nat$. Let $o_1 \in \boxesatzero{\termofTaylor{B_R(o)}{e_o}{i}}$. Let $V \in \connectedcomponents{B_{\termofTaylor{B_R(o)}{e_o}{i}}(o_1)}$ and let $T$ be an in-PS such that $V \equiv_{(\termofTaylor{B_R(o)}{e_o}{i}, o_1)} T$. Then $V \equiv_{(\termofTaylor{R}{e}{i}, (o, (e_o, o_1)))} \langle o, \langle e_o, T \rangle \rangle$.
\end{fact}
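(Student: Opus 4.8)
The plan is to exhibit an explicit isomorphism witnessing the right-hand relation, obtained by post-composing the witness of the hypothesis with the renaming that sends a depth-$0$ port $q$ to $(o, (e_o, q))$. First I would record the type-checking: since $o \in \boxesatzerogeq{R}{i}$, the box $(o, (e_o, o_1))$ occurs at depth $0$ of $\termofTaylor{R}{e}{i}$ and, because the summand $\langle o, \langle e_o, \termofTaylor{B_R(o)}{e_o}{i} \rangle \rangle$ of the ground-structure of $\termofTaylor{R}{e}{i}$ is just $B := \termofTaylor{B_R(o)}{e_o}{i}$ with every depth-$0$ port $q$ renamed to $(o, (e_o, q))$ (Definition~\ref{defin: special iso} applied twice), we have $B_{\termofTaylor{R}{e}{i}}(o, (e_o, o_1)) = B_B(o_1)$; hence $V \in \connectedcomponents{B_{\termofTaylor{R}{e}{i}}(o, (e_o, o_1))}$ and the relation $\equiv_{(\termofTaylor{R}{e}{i}, (o, (e_o, o_1)))}$ is meaningful. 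Since renaming depth-$0$ ports preserves labels and boxhood, $\overline{\langle o, \langle e_o, T \rangle \rangle} = \langle o, \langle e_o, \overline{T} \rangle \rangle$ and the renaming restricts to an isomorphism $\rho : \overline{T} \simeq \overline{\langle o, \langle e_o, T \rangle \rangle}$, $q \mapsto (o, (e_o, q))$. Given the witness $\varphi : V \equiv_{(B, o_1)} T$ of the hypothesis, I would set $\psi = \rho \circ \varphi$, which is at once an isomorphism $V \simeq \overline{\langle o, \langle e_o, T \rangle \rangle}$.

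It remains to check the target condition $\target{\termofTaylor{R}{e}{i}}((o, (e_o, o_1)), p) = \targetAnyPort{\langle o, \langle e_o, T \rangle \rangle}(\psi(p))$ for every $p \in \conclusions{V}$, and this is the main point. The right-hand side is immediate: $\psi(p) = (o, (e_o, \varphi(p)))$, and because $\targetAnyPort{}$ is transported by the renaming (Definition~\ref{defin: special iso}), one gets $\targetAnyPort{\langle o, \langle e_o, T \rangle \rangle}(o, (e_o, \varphi(p))) = (o, (e_o, \targetAnyPort{T}(\varphi(p))))$. For the left-hand side I would compute $\target{\termofTaylor{R}{e}{i}}$ on the box-conclusion $((o, (e_o, o_1)), p)$ directly from Definition~\ref{defin: Taylor}. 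The crucial observation is that this pair lies neither in $\mathcal{W}_0$ nor in $\mathcal{W}_{> 0}$: every element of $\mathcal{W}_0$ is a depth-$0$ port of the form $(o, (e_o, q))$, and membership in $\mathcal{W}_{> 0}$ would require $(o_1, p) \in \nonshallowConclusions{B}$; but $p \in \temporaryConclusions{B}{o_1}$ means $(o_1, p) \in \dom{\target{B}}$, so $(o_1, p)$ is not a conclusion of $B$ and a fortiori not in $\nonshallowConclusions{B}$. Consequently the wire-addition $@t$ leaves the target of $((o, (e_o, o_1)), p)$ unchanged, and since $\bigoplus$ preserves the target functions of its summands, this target is inherited from $\langle o, \langle e_o, B \rangle \rangle$, namely $(o, (e_o, \target{B}(o_1, p)))$.

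Combining the two computations, the required identity reduces to $(o, (e_o, \target{B}(o_1, p))) = (o, (e_o, \targetAnyPort{T}(\varphi(p))))$, which holds because $\target{B}(o_1, p) = \targetAnyPort{T}(\varphi(p))$ is exactly the target condition furnished by $\varphi : V \equiv_{(B, o_1)} T$. The only genuinely delicate step is the left-hand target computation: one must unfold $\termofTaylor{R}{e}{i} = S @ t$ carefully enough to see that the conclusions of an already-present box at depth $> 0$ are not among the wires re-targeted by $@t$, so that their targets are governed solely by the renamed partial Taylor expansion $\langle o, \langle e_o, B \rangle \rangle$. Everything else is bookkeeping about how the $\langle o, -\rangle$ and $\langle e_o, -\rangle$ renamings commute with $\overline{\,\cdot\,}$ and transport $\targetAnyPort{}$.
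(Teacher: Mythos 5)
Your proof is correct and takes essentially the same route as the paper's: the paper likewise defines the witness by $\varphi'(p) = (o, (e_o, \varphi(p)))$ (your $\psi = \rho \circ \varphi$) and verifies the target condition via the identity $\target{\termofTaylor{R}{e}{i}}((o, (e_o, o_1)), p) = (o, (e_o, \target{\termofTaylor{B_R(o)}{e_o}{i}}(o_1, p)))$ followed by the hypothesis on $\varphi$. The only difference is presentational: you justify that identity by unfolding Definition~\ref{defin: Taylor} and checking that $((o,(e_o,o_1)),p)$ lies in neither $\mathcal{W}_0$ nor $\mathcal{W}_{>0}$, whereas the paper states it without comment.
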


\begin{proof}
Let $\varphi: V \equiv_{(\termofTaylor{B_R(o)}{e_o}{i}, o_1)} T$. Let $\varphi'$ be the function $\ports{V} \to \ports{\langle o, \langle e_o, T \rangle \rangle}$ defined by $\varphi'(p) = (o, (e_o, \varphi(p)))$ for any $p$. 
For any $p \in \conclusions{V}$, we have 
\begin{eqnarray*}
\target{\termofTaylor{R}{e}{i}}((o, (e_o, o_1)), p) & = & (o, (e_o, \target{\termofTaylor{B_R(o)}{e_o}{i}}(o_1, p)))\\
& = & (o, (e_o, \target{T}(\varphi(p))))\\
& = & \target{\langle o, \langle e_o, T \rangle \rangle}(\varphi'(p))
\end{eqnarray*}
We thus have $\varphi': V \equiv_{(\termofTaylor{R}{e}{i}, (o, (e_o, o_1)))} \langle o, \langle e_o, T \rangle \rangle$.
\end{proof}

The rebuilding of the content of the boxes of $\termofTaylor{R}{e}{i+1}$ is achieved by the following proposition:

\begin{prop}\label{prop: boxes}
Let $R$ be a PS. Let $k \geq \basis{R}$. 
Let $e$ be a $k$-heterogeneous pseudo-experiment on $R$. Let $i \in \Nat$. Let $j_0 \in \mathcal{N}_i(e)$. 
We set $\mathfrak{T} = \nontrivialconnected{\termofTaylor{R}{e}{i}}{\criticalports{\termofTaylor{R}{e}{i}}{k}{j_0}}{k} / \equiv$. For any $\mathcal{T} \in \mathfrak{T}$, let $(m_j^{\mathcal{T}})_{j \in \Nat} \in \{ 0, \ldots, k-1 \}^\Nat$ such that $\Card{\mathcal{T}}  = \sum_{j \in \Nat} m_j^{\mathcal{T}} \cdot k^j$. Let $\mathcal{U} \subseteq \nontrivialconnected{\termofTaylor{R}{e}{i}}{\criticalports{\termofTaylor{R}{e}{i}}{k}{j_0}}{k}$ such that, for any $\mathcal{T} \in \mathfrak{T}$, we have $\Card{\mathcal{U} \cap \mathcal{T}} = m_{j_0}^{\mathcal{T}}$. Then we have $$B_{\termofTaylor{R}{e}{i+1}}(\cod_{e, i}(j_0)) \equiv_{(\termofTaylor{R}{e}{i+1}, \cod_{e, i}(j_0))} \bigoplus \mathcal{U}$$
\end{prop}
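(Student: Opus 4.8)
The plan is to reduce the statement to a single application of Lemma~\ref{lem: bijection => equiv}, with $S = \termofTaylor{R}{e}{i+1}$ and $o = \cod_{e, i}(j_0)$: it suffices to produce a bijection $\gamma : \connectedcomponents{B_S(o)} \simeq \mathcal{U}$ such that $V \equiv_{(S, o)} \gamma(V)$ for every $V$. Before that I would record that $\mathcal{U}$ is gluable, its members being critical components that pairwise meet only in shallow conclusions lying in $\criticalports{\termofTaylor{R}{e}{i}}{k}{j_0}$ (contractions), by Fact~\ref{fact: two components with a common port} and Remark~\ref{rem: conclusions of connected components}.

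The block sizes of this bijection come from Proposition~\ref{prop: crucial}, applied with $\mathcal{P} = \criticalports{\termofTaylor{R}{e}{i}}{k}{j_0}$. First I would check the hypotheses: by Proposition~\ref{prop: critical ports below new boxes}, $\criticalports{\termofTaylor{R}{e}{i}}{k}{\mathcal{N}_i(e)} \subseteq \exponentialportsatzero{\termofTaylor{R}{e}{i+1}}$, and since $j_0 \in \mathcal{N}_i(e)$ these ports sit immediately below the genuine new box $\cod_{e, i}(j_0)$, so none is an old shallow box of $\termofTaylor{R}{e}{i}$; hence $\mathcal{P} \subseteq \exponentialportsatzero{\termofTaylor{R}{e}{i+1}} \setminus \boxesatzero{\termofTaylor{R}{e}{i}}$. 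Moreover, as $R$ is a PS, $\termofTaylor{R}{e}{i}$ is a differential PS, so every $T \in \nontrivialconnected{\termofTaylor{R}{e}{i}}{\mathcal{P}}{k}$ satisfies $\conclusions{T} \subseteq \conclusions{\groundof{T}}$ (a box kept by a substructure keeps all its temporary conclusions, while the ambient differential PS has no non-shallow conclusion). Thus Proposition~\ref{prop: crucial} applies to each critical component $T$; with its notation, $\mathcal{T}$ is exactly the $\equiv$-class of $T$, so its $m_j$ coincide with the $m_j^{\mathcal{T}}$ of the statement, and the first conclusion taken at $j = j_0$ reads
\[
m_{j_0}^{[T]_\equiv} = \Card{\{ U \in \connectedcomponents{B_S(o)} ; U \equiv_{(S, o)} T \}} .
\]

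Next I would organize $\connectedcomponents{B_S(o)}$ by $\equiv_{(S, o)}$-type. By the third item of Remark~\ref{rem: iso modulo}, $U \equiv_{(S, o)} T$ and $U \equiv_{(S, o)} T'$ force $T \equiv T'$; and, as in the other items of that remark, $\equiv_{(S, o)}$ is invariant under $\equiv$ in its right argument. Hence, for a fixed class $\mathcal{T} \in \mathfrak{T}$, the set $\{ U \in \connectedcomponents{B_S(o)} ; U \equiv_{(S, o)} T \}$ is independent of the representative $T \in \mathcal{T}$, has cardinality $m_{j_0}^{\mathcal{T}}$, and these sets are pairwise disjoint over distinct classes. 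The remaining point is totality: every $U \in \connectedcomponents{B_S(o)}$ has a type. This I would read off the shape of $\termofTaylor{R}{e}{i}$ (Definition~\ref{defin: Taylor}): there the box $o = \cod_{e, i}(j_0)$ is expanded, each copy of $B_S(o)$ being plugged, through its conclusions, onto the critical ports of $\criticalports{\termofTaylor{R}{e}{i}}{k}{j_0}$ (Proposition~\ref{prop: critical ports below new boxes}); so each connected component $U$ of $B_S(o)$ reappears in any such copy as some $T \in \nontrivialconnected{\termofTaylor{R}{e}{i}}{\mathcal{P}}{k}$ with $U \equiv_{(S, o)} T$. Thus $\connectedcomponents{B_S(o)}$ splits into blocks indexed by $\mathfrak{T}$, the block of $\mathcal{T}$ having size $m_{j_0}^{\mathcal{T}} = \Card{\mathcal{U} \cap \mathcal{T}}$.

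Finally I would assemble $\gamma$: inside each class $\mathcal{T}$ pick any bijection between the block $\{ U ; U \equiv_{(S, o)} T \}$ and $\mathcal{U} \cap \mathcal{T}$, which have equal cardinality; for any $U$ in the block and any $V' \in \mathcal{U} \cap \mathcal{T}$ we have $U \equiv_{(S, o)} T \equiv V'$, hence $U \equiv_{(S, o)} V'$ by right-invariance. Gluing these partial bijections gives $\gamma : \connectedcomponents{B_S(o)} \simeq \mathcal{U}$ with $V \equiv_{(S, o)} \gamma(V)$ for all $V$, and Lemma~\ref{lem: bijection => equiv} yields $B_S(o) \equiv_{(S, o)} \bigoplus \mathcal{U}$, as claimed. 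I expect the main obstacle to be the totality step together with the careful handling of the \emph{non-symmetric} relation $\equiv_{(S, o)}$: one must verify that the type of a component of the recovered box is read correctly and that right-hand invariance under $\equiv$ holds, so that the base-$k$ digit $m_{j_0}^{\mathcal{T}}$ produced by Proposition~\ref{prop: crucial} genuinely counts the components of $B_S(o)$ of a prescribed shape.
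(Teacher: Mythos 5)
Your overall skeleton coincides with the paper's: apply Proposition~\ref{prop: crucial} with $\mathcal{P} = \criticalports{\termofTaylor{R}{e}{i}}{k}{j_0}$ so that, for each $\equiv$-class $\mathcal{T}$, the base-$k$ digit $m_{j_0}^{\mathcal{T}}$ counts the components $U$ of $B_{\termofTaylor{R}{e}{i+1}}(\cod_{e, i}(j_0))$ with $U \equiv_{(\termofTaylor{R}{e}{i+1}, \cod_{e, i}(j_0))} T$, then assemble a bijection class by class and conclude with Lemma~\ref{lem: bijection => equiv}; your hypothesis checks for Proposition~\ref{prop: crucial} and your gluability remark are sound points that the paper leaves implicit. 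The genuine gap is the step you yourself flag as ``totality''. The claim that every $U \in \connectedcomponents{B_{\termofTaylor{R}{e}{i+1}}(\cod_{e, i}(j_0))}$ ``reappears in any copy as some $T \in \nontrivialconnected{\termofTaylor{R}{e}{i}}{\mathcal{P}}{k}$ with $U \equiv_{(\termofTaylor{R}{e}{i+1}, \cod_{e, i}(j_0))} T$'' cannot simply be read off Definition~\ref{defin: Taylor}: it is, in substance, the proposition being proved, and the paper spends roughly half of its proof establishing exactly this, namely an injection $\xi : \connectedcomponents{B_{\termofTaylor{R}{e}{i+1}}(\cod_{e, i}(j))} \to \nontrivialconnected{\termofTaylor{R}{e}{i}}{\criticalports{\termofTaylor{R}{e}{i}}{k}{j}}{k}$ with $V \equiv_{(\termofTaylor{R}{e}{i+1}, \cod_{e, i}(j))} \xi(V)$, by induction on $\depthof{R}$.

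Here is why the appeal to ``the shape of $\termofTaylor{R}{e}{i}$'' does not suffice. Even in the unnested case $\cod_{e, i}(j_0) \in \exactboxesatzero{R}{i}$, a copy of $U$ inside $\termofTaylor{R}{e}{i}$ is not itself an element of $\nontrivialconnected{\termofTaylor{R}{e}{i}}{\mathcal{P}}{k}$: one must extend it with fresh nodes labelled $\labelofport{\groundof{R}}(\target{R}(\cod_{e, i}(j_0), p))$ sitting at the critical ports below, rewire, and only then verify that the shallow conclusions of the resulting object land in $\criticalports{\termofTaylor{R}{e}{i}}{k}{j_0}$ (via Proposition~\ref{prop: critical ports below new boxes}) and that the target-compatibility clause $\target{S}(o,p) = \targetAnyPort{T}(\varphi(p))$ of Definition~\ref{definition: equiv modulo} holds. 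In the nested case $\cod_{e, i}(j_0) = (o, (e_o, \cod_{e_o, i}(j_0)))$ with $o \in \boxesatzerogeq{R}{i+1}$, the doors of the new box can be contracted across the border of the box $o$, onto contractions in $\contractionsUnder{R}{o}$ lying outside any single copy, so no copy contains the data needed for the isomorphism; this is precisely what forces the paper's induction through $R_o = \varphi \cdot_o R$ (adding contractions as shallow conclusions), Corollary~\ref{cor: new critical contractions of R_o}, Lemma~\ref{lem: connected components of R_o} and Fact~\ref{fact: equiv modulo} --- machinery your proposal never invokes. Two smaller inaccuracies: the ``right-invariance'' of $\equiv_{(S,o)}$ you use ($U \equiv_{(S,o)} T$ and $T \equiv T'$ imply $U \equiv_{(S,o)} T'$) is not among the items of Remark~\ref{rem: iso modulo}; it does hold (restrict $\psi : T \equiv T'$ to $\overline{T} \simeq \overline{T'}$ and use that $\psi$ fixes shallow conclusions), but it requires this argument --- or can be bypassed, as in the paper, by defining the per-class bijections only at the representatives $\xi(V)$ and using the third item of the remark for surjectivity; and your parenthetical that shared conclusions are contractions is inexact, since $\cod_{e, i}(j_0)$ itself belongs to $\criticalports{\termofTaylor{R}{e}{i}}{k}{j_0}$ and is a cocontraction, so gluability of $\mathcal{U}$ additionally needs that at most one member of $\mathcal{U}$ contains it.
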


\begin{proof}
We first prove, by induction on $\depthof{R}$, that, for any $j \in \mathcal{N}_i(e)$, there exists an injection $\xi: \connectedcomponents{B_{\termofTaylor{R}{e}{i+1}}(\cod_{e, i}(j))} \to \nontrivialconnected{\termofTaylor{R}{e}{i}}{\criticalports{\termofTaylor{R}{e}{i}}{k}{j}}{k}$ such that, for any $V \in \connectedcomponents{B_{\termofTaylor{R}{e}{i+1}}(\cod_{e, i}(j))}$, we have $V \equiv_{(\termofTaylor{R}{e}{i+1}, \cod_{e, i}(j))} \xi(V)$. $(\ast)$ 

Let $j \in \mathcal{N}_i(e)$ and let $V \in \connectedcomponents{B_{\termofTaylor{R}{e}{i+1}}(\cod_{e, i}(j))}$.

If $\cod_{e, i}(j) \in \exactboxesatzero{R}{i}$, then $V \in \connectedcomponents{B_{R}(\cod_{e, i}(j))}$: Let $e_1 \in e(\cod_{e, i}(j))$. Let $\xi(V)$ be the following differential PS: 
$$\xi(V) = (\langle \cod_{e, i}(j), \langle e_1, V \rangle \rangle \oplus \bigoplus_{p \in \conclusions{B_R(\cod_{e, i}(j))} \cap \ports{V}} \labelofport{\groundof{R}}(\target{R}(\cod_{e, i}(j), p))_{\target{R}(\cod_{e, i}(j), p)})@t$$ where $t$ is the function $\{ \cod_{e, i}(j) \} \times (\{ e_1 \} \times (\conclusions{B_R(\cod_{e, i}(j))} \cap \ports{V})) \to \target{R}[\{ \cod_{e, i}(j) \} \times (\conclusions{B_R(\cod_{e, i}(j))} \cap \ports{V})]$ that associates with every $(\cod_{e, i}(j), (e_1, p))$ for some $p \in \conclusions{B_R(\cod_{e, i}(j))} \cap \ports{V}$ the port $\target{R}(\cod_{e, i}(j), p)$ of $\groundof{R}$.
By Proposition~\ref{prop: critical ports below new boxes}, we have 
\begin{eqnarray*}
\conclusions{\groundof{\xi(V)}} & = & \target{R}[\{ \cod_{e, i}(j) \} \times (\conclusions{B_R(\cod_{e, i}(j))} \cap \ports{V})]\\
& \subseteq & \target{R}[\{ \cod_{e, i}(j) \} \times \conclusions{B_R(\cod_{e, i}(j))}]\\
& = & \target{R}[\{ \cod_{e, i}(j) \} \times \temporaryConclusions{R}{\cod_{e, i}(j)}]\\
& = & \target{\termofTaylor{R}{e}{i+1}}[\{ \cod_{e, i}(j) \} \times \temporaryConclusions{\termofTaylor{R}{e}{i+1}}{\cod_{e, i}(j)}]\\
& = & \criticalports{\termofTaylor{R}{e}{i}}{k}{j}
\end{eqnarray*}
hence $\xi(V) \in \nontrivialconnected{\termofTaylor{R}{e}{i}}{\criticalports{\termofTaylor{R}{e}{i}}{k}{j}}{k}$. Moreover, we have $\overline{\xi(V)} = \langle \cod_{e, i}(j), \langle e_1, V \rangle \rangle$ 
and $V \equiv_{(\termofTaylor{R}{e}{i+1}, \cod_{e, i}(j))} \xi(V)$.

Now, if $\cod_{e, i}(j) = (o, (e_o, \cod_{e_o, i}(j)))$ for some $o \in \boxesatzerogeq{R}{i+1}$ and $e_o \in e(o)$, then we have $V \in \connectedcomponents{B_{\termofTaylor{B_R(o)}{e_o}{i+1}}(\cod_{e_o, i}(j))}$: Let $\varphi$ be some bijection $\contractionsUnder{R}{o} \simeq \mathcal{Q}'$ and let $R_o$ be an in-PS such that $R_o = \varphi \cdot_o R$; we have $V \in \connectedcomponents{B_{\termofTaylor{B_{R_o}(o)}{e_o}{i+1}}(\cod_{e_o, i}(j))}$, hence, by induction hypothesis, there exists $T'_V \in \nontrivialconnected{\termofTaylor{B_{R_o}(o)}{e_o}{i}}{\criticalports{\termofTaylor{B_{R_o}(o)}{e_o}{i}}{k}{j}}{k}$ such that $V \equiv_{(\termofTaylor{B_{R_o}(o)}{e_o}{i+1}, \cod_{e_o, i}(j))} T'_V$. Let $\varphi_{e_o}$ be the bijection $\contractionsUnder{R}{o} \simeq \{ o \} \times (\{ e_o \} \times \mathcal{Q}')$ defined by $\varphi_{e_o}(p) = (o, (e_o, \varphi(p)))$ for any $p \in \contractionsUnder{R}{o}$. We have $\langle o, \langle e_o, T'_V\rangle \rangle \in \nontrivialconnected{\termofTaylor{R_o}{e}{i}}{(\criticalports{\termofTaylor{R}{e}{i}}{k}{j} \setminus \contractionsUnder{R}{o}) \cup \varphi_{e_o}[\criticalports{\termofTaylor{R}{e}{i}}{k}{j} \cap \contractionsUnder{R}{o}]}{k}$, hence, by Lemma~\ref{lem: connected components of R_o}~(\ref{lem: connected components of R_o item: [inverse of varphi]}), $\langle o, \langle e_o, T'_V \rangle \rangle[{\varphi_{e_o}}^{-1}] \in \nontrivialconnected{\termofTaylor{R}{e}{i}}{\criticalports{\termofTaylor{R}{e}{i}}{k}{j}}{k}$. Moreover, by Fact~\ref{fact: equiv modulo}, we have $V \equiv_{(\termofTaylor{R_o}{e}{i+1}, \cod_{e, i}(j))} \langle o, \langle e_o, T'_V\rangle \rangle$, hence  $V \equiv_{(\termofTaylor{R}{e}{i+1}, \cod_{e, i}(j))} \langle o, \langle e_o, T'_V\rangle \rangle[{\varphi_{e_o}}^{-1}]$. We can thus set $\xi(V) = \langle o, \langle e_o, T'_V\rangle \rangle[{\varphi_{e_o}}^{-1}]$.

We proved $(\ast)$. Now, let us show that there exists a bijection $$\gamma: \connectedcomponents{B_{\termofTaylor{R}{e}{i+1}}(\cod_{e, i}(j_0))} \simeq \mathcal{U}$$ such that, for any $V \in \connectedcomponents{B_{\termofTaylor{R}{e}{i+1}}(\cod_{e, i}(j_0))}$, we have $V \equiv_{(\termofTaylor{R}{e}{i+1}, \cod_{e, i}(j_0))} \gamma(V)$: For any $T \in \im{\xi}$, by Proposition~\ref{prop: crucial}, there exists a bijection $$\gamma_T : \{ U \in \connectedcomponents{B_{\termofTaylor{R}{e}{i+1}}(\cod_{e, i}(j_0))} ; U \equiv_{\termofTaylor{R}{e}{i+1}, \cod_{e, i}(j_0)} T \} \simeq \{ T' \in \mathcal{U} ; T' \equiv T \}$$
We define the function $\gamma$ by setting $\gamma(V) = \gamma_{\xi(V)}(V)$ for any $V$. Since $\xi$ is an injection, $\gamma$ is an injection. Let us check that $\gamma$ is a surjection too: Let $T \in \mathcal{U}$; by Proposition~\ref{prop: crucial}, there exists $V \in \connectedcomponents{B_{\termofTaylor{R}{e}{i+1}}(\cod_{e, i}(j_0))}$ such that $V \equiv_{(\termofTaylor{R}{e}{i+1}, \cod_{e, i}(j))} T$; by Remark~\ref{rem: iso modulo}, we have $T \in \im{\gamma_{\xi(V)}}$.

Finally, we can apply Lemma~\ref{lem: bijection => equiv} to obtain $B_{\termofTaylor{R}{e}{i+1}}(\cod_{e, i}(j_0)) \equiv_{(\termofTaylor{R}{e}{i+1}, \cod_{e, i}(j_0))} \bigoplus \mathcal{U}$.
\end{proof}

\begin{exa}
Consider the $3$-heterogeneous experiment $e$ on the PS's $R_1$ and $R_2$ (which are depicted in Figure~\ref{fig: R_1} and in Figure~\ref{fig: R_2} respectively) such that $e^\#(o_1) = \{ 3 \}$ and $e^\#(o_2) = \{ 9 \}$. We set $\mathfrak{T}_1 = \nontrivialconnected{\termofTaylor{R_1}{e}{0}}{\criticalports{\termofTaylor{R_1}{e}{0}}{3}{1}}{3} / \equiv$ and $\mathfrak{T}_2 = \nontrivialconnected{\termofTaylor{R_2}{e}{0}}{\criticalports{\termofTaylor{R_2}{e}{0}}{3}{1}}{3} / \equiv$. We have $\Card{\mathfrak{T}_1} = 3 = \Card{\mathfrak{T}_2}$.
\begin{figure}
\begin{minipage}{0.33\textwidth}
\centering
\resizebox{.15 \textwidth}{!}{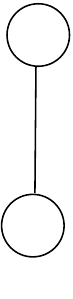}
\captionsetup{width= .9 \textwidth}
\caption{The differential PS $V_1$}
\label{fig: V_1}
\end{minipage}\hfill
\begin{minipage}{0.33\textwidth}
\centering
\resizebox{.35 \textwidth}{!}{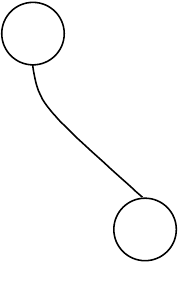}
\captionsetup{width= .9 \textwidth}
\caption{The differential PS $V_2$}
\label{fig: V_2}
\end{minipage}\hfill
\begin{minipage}{0.33\textwidth}
\centering
\resizebox{.5 \textwidth}{!}{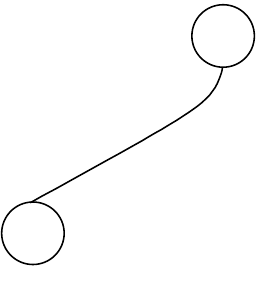}
\captionsetup{width= .9 \textwidth}
\caption{The differential PS $V_3$}
\label{fig: V_3}
\end{minipage}\hfill
\end{figure}
We have 
\begin{itemize}
\item $\Card{\{ V \in \bigcup \mathfrak{T}_1 ; V \equiv V_1 \}} = 3 = \Card{\{ V \in \bigcup \mathfrak{T}_2 ; V \equiv V_1 \}}$,
\item $\Card{\{ V \in \bigcup \mathfrak{T}_1 ; V \equiv V_2 \}} = 3 = \Card{\{ V \in \bigcup \mathfrak{T}_2 ; V \equiv V_3 \}}$
\item and $\Card{\{ V \in \bigcup \mathfrak{T}_1 ; V \equiv V_3 \}} = 9 = \Card{\{ V \in \bigcup \mathfrak{T}_2 ; V \equiv V_2 \}}$, 
\end{itemize}
where $V_1$, $V_2$ and $V_3$ are the differential PS's that are depicted in Figure~\ref{fig: V_1}, Figure~\ref{fig: V_2} and Figure~\ref{fig: V_3} respectively.
\end{exa}

\subsection{Characterizing a PS by some finite subset of its Taylor expansion}

The proof of the following proposition contains a full description of the algorithm leading from $\termofTaylor{R}{e}{i}$ to $\termofTaylor{R}{e}{i+1}$:

\begin{prop}\label{prop: from i to i+1}
Let $R$ and $R'$ be two PS's. Let $k \geq \max \{ \basis{R}, \basis{R'} \}$. Let $e$ be a $k$-heterogeneous pseudo-experiment on $R$ and let $e'$ be a $k$-heterogeneous pseudo-experiment on $R'$ such that $\termofTaylor{R}{e}{0} \equiv \termofTaylor{R'}{e'}{0}$. Then, for any $i \in \Nat$, we have $\termofTaylor{R}{e}{i} \equiv \termofTaylor{R'}{e'}{i}$.
\end{prop}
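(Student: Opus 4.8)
The plan is to argue by induction on $i$, the base case $i = 0$ being exactly the hypothesis $\termofTaylor{R}{e}{0} \equiv \termofTaylor{R'}{e'}{0}$. The whole point is that the passage from $\termofTaylor{R}{e}{i}$ to $\termofTaylor{R}{e}{i+1}$ described in the previous subsections is \emph{canonical}, i.e.\ it commutes with the relation $\equiv$; so I would check that each ingredient of the rebuilding is transported by an isomorphism $\varphi : \termofTaylor{R}{e}{i} \equiv \termofTaylor{R'}{e'}{i}$. First, without even using the induction, I would record that since $\termofTaylor{R}{e}{0} \equiv \termofTaylor{R'}{e'}{0}$ the two nets carry the same arities on their co-contractions, so by Lemma~\ref{lem: M_0(e)} we get $\mathcal{M}_0(e) = \mathcal{M}_0(e')$; as $\mathcal{M}_i(e)$ and $\mathcal{N}_i(e)$ are obtained from $\mathcal{M}_0(e)$ by the purely arithmetical recursion of Definition~\ref{defin: the algorithm (b)}, we have $\mathcal{M}_i(e) = \mathcal{M}_i(e')$ and $\mathcal{N}_i(e) = \mathcal{N}_i(e')$ for every $i$. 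This guarantees that the index sets driving the two rebuildings coincide.

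For the inductive step, assume $\varphi : \termofTaylor{R}{e}{i} \equiv \termofTaylor{R'}{e'}{i}$. Because $\varphi$ is an isomorphism of differential in-PS's it preserves labels and arities at depth $0$, hence by Definition~\ref{definition: critical} it maps $\criticalports{\termofTaylor{R}{e}{i}}{k}{j}$ bijectively onto $\criticalports{\termofTaylor{R'}{e'}{i}}{k}{j}$ for each $j$; in particular, by Lemma~\ref{lem: bijection cod{e, i}} and Corollary~\ref{cor: characterization of k-heterogeneous experiments}, it sends the co-contraction $\cod_{e, i}(j)$ to $\cod_{e', i}(j)$ for every $j \in \mathcal{M}_i(e)$. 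Since $\varphi$ also preserves the relation $\coh$, the substructure order $\sqsubseteq$, connectedness through a set of ports, and $\cosize{\cdot}$, it restricts for each $j \in \mathcal{N}_i(e)$ to a bijection $\nontrivialconnected{\termofTaylor{R}{e}{i}}{\criticalports{\termofTaylor{R}{e}{i}}{k}{j}}{k} \simeq \nontrivialconnected{\termofTaylor{R'}{e'}{i}}{\criticalports{\termofTaylor{R'}{e'}{i}}{k}{j}}{k}$ that respects the equivalence relation $\equiv$ and therefore the cardinalities $\Card{\mathcal{T}}$ of the classes, i.e.\ the digits $m_{j}^{\mathcal{T}}$ occurring in Proposition~\ref{prop: boxes}.

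With this in hand I would assemble $\psi : \termofTaylor{R}{e}{i+1} \equiv \termofTaylor{R'}{e'}{i+1}$ from three pieces. On the part lying outside the new boxes I would use $\varphi$ directly: by Lemma~\ref{lem: ground-structure: trivial} one has ${\termofTaylor{R}{e}{i+1}}^{\leq i} \sqsubseteq_{\emptyset} \termofTaylor{R}{e}{i}$ (and symmetrically on the primed side), and Proposition~\ref{prop: ground-structure} together with Proposition~\ref{prop: critical ports below new boxes} pins down this substructure, the set $\exactboxesatzero{\termofTaylor{R}{e}{i+1}}{i} = !_{e,i}[\mathcal{N}_i(e)]$ of new boxes, and their borders $\target{\termofTaylor{R}{e}{i+1}}[\{\cod_{e,i}(j)\} \times \temporaryConclusions{\termofTaylor{R}{e}{i+1}}{\cod_{e,i}(j)}] = \criticalports{\termofTaylor{R}{e}{i}}{k}{j}$ purely in terms of $\termofTaylor{R}{e}{i}$, $k$ and $\mathcal{N}_i(e)$. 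On each new box $\cod_{e,i}(j_0)$ with $j_0 \in \mathcal{N}_i(e)$ I would invoke Proposition~\ref{prop: boxes}: choosing the selections $\mathcal{U}$ and $\mathcal{U}'$ of that statement so that $\mathcal{U}' = \varphi[\mathcal{U}]$ (possible since $\varphi$ preserves the classes and their counts), it yields $B_{\termofTaylor{R}{e}{i+1}}(\cod_{e,i}(j_0)) \equiv_{(\termofTaylor{R}{e}{i+1}, \cod_{e,i}(j_0))} \bigoplus \mathcal{U}$ and $B_{\termofTaylor{R'}{e'}{i+1}}(\cod_{e',i}(j_0)) \equiv_{(\termofTaylor{R'}{e'}{i+1}, \cod_{e',i}(j_0))} \bigoplus \mathcal{U}'$, whence an isomorphism between the two box contents compatible with the two borders. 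Finally I would check that all these isomorphisms fix the shallow conclusions (which lie outside the new boxes and are already fixed by $\varphi$, using Fact~\ref{fact: conclusions}), and glue them, obtaining $\termofTaylor{R}{e}{i+1} \equiv \termofTaylor{R'}{e'}{i+1}$.

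The main obstacle will be precisely this gluing: the isomorphism of box contents supplied by Proposition~\ref{prop: boxes} is only an $\equiv_{(S,o)}$-isomorphism (Definition~\ref{definition: equiv modulo}), which matches the doors of the box with the correct border ports but is \emph{not} symmetric, and I must reconcile it with the plain, symmetric $\equiv$ carried by $\varphi$ on the surrounding ground-structure, so that the partial target function $\target{\termofTaylor{R}{e}{i+1}}$ linking each door of a new box to its critical port below is transported correctly. Verifying that $\mathcal{U}$ and $\mathcal{U}' = \varphi[\mathcal{U}]$ are legitimate simultaneous choices, and that the resulting per-box and outside-box isomorphisms agree on the shared border ports $\criticalports{\termofTaylor{R}{e}{i}}{k}{\mathcal{N}_i(e)}$, is where the care is needed; everything else is transport of canonically defined data along the isomorphism $\varphi$.
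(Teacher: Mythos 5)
Your plan follows the paper's own proof quite closely (induction on $i$; transport of $\mathcal{M}_i(e)$, $\mathcal{N}_i(e)$, critical ports and critical components along $\varphi$; Proposition~\ref{prop: boxes} with the simultaneous choice $\mathcal{U}' = \varphi[\mathcal{U}]$; final gluing), but it has one genuine gap: the claim that ``on the part lying outside the new boxes I would use $\varphi$ directly''. The depth-$0$ part of $\termofTaylor{R}{e}{i+1}$ does not consist of $\mathcal{P}$ alone; it also contains those critical components of $\termofTaylor{R}{e}{i}$ that \emph{survive}, i.e.\ remain components of $\termofTaylor{R}{e}{i+1}$ rather than being absorbed into the contents of the new boxes (Proposition~\ref{prop: ground-structure} only pins down $\restriction{{\termofTaylor{R}{e}{i}}^{\leq 0}}{\mathcal{P}}$, not all of ${\termofTaylor{R}{e}{i+1}}^{\leq 0}$). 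Which members of an $\equiv$-class of critical components survive is information invisible at level $i$: if $\alpha$ is an automorphism of $\termofTaylor{R'}{e'}{i}$ exchanging two $\equiv$-equivalent critical components (such automorphisms exist, e.g.\ exchanging components coming from two copies produced by the expansion of one box), then $\alpha \circ \varphi$ is just as legitimate an induction hypothesis as $\varphi$, yet at most one of the two can send each surviving component to a surviving one. Concretely, for a surviving component $T$ the set $\varphi[\ports{T}]$ may consist of ports of the form $(o, (e_o, p))$ which are not ports of $\termofTaylor{R'}{e'}{i+1}$ at all (that net has $(o, p)$ instead), so $\varphi$ is simply not defined there as a map into $\ports{\termofTaylor{R'}{e'}{i+1}}$.

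Closing this gap is exactly the role of Proposition~\ref{prop: connected components of termofTaylor{R}{e}{i+1}} and Proposition~\ref{prop: crucial}, neither of which you invoke. The former guarantees that every component of $\termofTaylor{R}{e}{i+1}$ whose conclusions lie in the critical ports is already a critical component of $\termofTaylor{R}{e}{i}$; the latter shows that in each $\equiv$-class $\mathcal{T}$ the number of survivors equals $\sum_{j \notin \mathcal{N}_i(e)} m^{\mathcal{T}}_j \cdot k^j$, a quantity computed from the level-$i$ digits and hence equal on both sides. This yields a class-respecting bijection $\tau$ between the survivors on the two sides, together with isomorphisms $\psi_T : \sigma(T) \equiv \tau(T)$ (where $\sigma$ is the bijection on critical components induced by $\varphi$), and the paper accordingly defines $\psi$ on ports outside $\mathcal{P}$ as $\varphi$ \emph{followed by} the correction $\psi_{H(\varphi(p))}$, not as $\varphi$ itself; the same correction is needed for the old boxes of depth $< i$ attached to surviving components. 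Your ``main obstacle'' paragraph worries only about reconciling $\equiv_{(S,o)}$ with $\equiv$ at the borders of the new boxes --- a real but manageable bookkeeping point --- while your closing assertion that ``everything else is transport of canonically defined data along the isomorphism $\varphi$'' is precisely the statement that fails.
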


\begin{proof}
By induction on $i$. We assume that we are given $\varphi: \termofTaylor{R}{e}{i} \equiv \termofTaylor{R'}{e'}{i}$. 

We set $\mathcal{M} = \mathcal{M}_i(e)$ (resp. $\mathcal{M'} = \mathcal{M}_i(e')$) and $\mathcal{N} = \mathcal{N}_i(e)$ (resp. $\mathcal{N'} = \mathcal{N}_i(e')$). There is a bijection $\cod : \mathcal{M} \simeq \portsatzerooftype{\cod}{\termofTaylor{R}{e}{i}} \setminus \boxesatzero{{\termofTaylor{R}{e}{i}}}$ (resp. a bijection $\cod' : \mathcal{M}' \simeq \portsatzerooftype{\cod}{\termofTaylor{R'}{e'}{i}} \setminus \boxesatzero{{\termofTaylor{R'}{e'}{i}}}$) such that, for any $j \in \mathcal{M}$,  we have $(\arity{{\termofTaylor{R}{e}{i}}} \circ \cod)(j) = k^j$ (resp. $(\arity{{\termofTaylor{R'}{e'}{i}}} \circ \cod')(j) = k^j$). For any $j \in \mathcal{N}$, we set $\mathcal{T}_j = \nontrivialconnected{\termofTaylor{R}{e}{i}}{\criticalports{\termofTaylor{R}{e}{i}}{k}{j}}{k}$ (resp. $\mathcal{T}'_j = \nontrivialconnected{\termofTaylor{R'}{e'}{i}}{\criticalports{\termofTaylor{R'}{e'}{i}}{k}{j}}{k}$). We set $\mathfrak{T} = \bigcup_{j \in \mathcal{N}} (\mathcal{T}_j / \equiv)$ (resp. $\mathfrak{T}' = \bigcup_{j \in \mathcal{N}'} (\mathcal{T}'_j / \equiv)$). We set $\mathcal{P} =  (\portsatzero{{\termofTaylor{R}{e}{i}}} \setminus \bigcup_{j \in \mathcal{N}} \bigcup_{\begin{array}{c} T \in \mathcal{T}_j \end{array}} \portsatzero{T}) \cup \criticalports{{\termofTaylor{R}{e}{i}}}{k}{\mathcal{N}}$ (resp. $\mathcal{P}' =  (\portsatzero{{\termofTaylor{R'}{e'}{i}}} \setminus \bigcup_{j \in \mathcal{N'}} \bigcup_{\begin{array}{c} T' \in \mathcal{T}'_j \end{array}} \portsatzero{T'}) \cup \criticalports{{\termofTaylor{R'}{e'}{i}}}{k}{\mathcal{N'}}$). For any $\mathcal{T} \in \mathfrak{T}$ (resp. $\mathcal{T}' \in \mathfrak{T}'$), we define $(m_j^{\mathcal{T}}) \in {\{ 0, \ldots, k-1 \}}^\Nat$ (resp. $(n_j^{\mathcal{T}'}) \in {\{ 0, \ldots, k-1 \}}^\Nat$) as follows: $\Card{\mathcal{T}} = \sum_{j \in \Nat} m_j^{\mathcal{T}} \cdot k^j$ (resp. $\Card{\mathcal{T}'} = \sum_{j \in \Nat} n_j^{\mathcal{T}'} \cdot k^j$). 

Notice that $\mathcal{M} = \mathcal{M}'$, $\mathcal{N} = \mathcal{N}'$ and, for any $j \in \mathcal{N}$, $\varphi[\criticalports{\termofTaylor{R}{e}{i}}{k}{j}] = \criticalports{\termofTaylor{R'}{e'}{i}}{k}{j}$. Moreover, there exists a bijection $\sigma: \bigcup_{j \in \mathcal{N}} \mathcal{T}_j \simeq \bigcup_{j \in \mathcal{N}} \mathcal{T}'_j$ such that, for any $T \in \bigcup_{j \in \mathcal{N}} \mathcal{T}_j$, there is an isomorphism $T \simeq \sigma(T)$ associating with every port $p$ of ${T}$ the port $\varphi(p)$ of $\termofTaylor{R'}{e'}{i}$, hence $\varphi[\mathcal{P}] = \mathcal{P}'$. Also, notice that, for any $T_1, T_2 \in \bigcup_{j \in \mathcal{N}} \mathcal{T}_j$, we have $(T_1 \equiv T_2 \Leftrightarrow \sigma(T_1) \equiv \sigma(T_2))$, which entails that there exists a bijection $\overline{\sigma} : \mathfrak{T} \simeq \mathfrak{T}'$ such that, for any $\mathcal{T} \in \mathfrak{T}$, for any $\mathcal{T}' \in \mathfrak{T}'$, for any $T \in \mathcal{T}$, for any $T' \in \mathcal{T}'$, we have $(\sigma(T) = T' \Rightarrow \overline{\sigma}(\mathcal{T}) = \mathcal{T}')$; moreover, for any $\mathcal{T} \in \mathfrak{T}$, we have $\Card{\mathcal{T}} = \Card{\overline{\sigma}(\mathcal{T})}$. For any $\mathcal{T} \in \mathfrak{T}$, we thus have $(n_j^{\overline{\sigma}(\mathcal{T})})_{j \in \Nat} = (m_j^{\mathcal{T}})_{j \in \Nat}$. 

For any $j \in \mathcal{N}$, we set $\mathcal{V}_j = \nontrivialconnected{\termofTaylor{R}{e}{i+1}}{\criticalports{\termofTaylor{R}{e}{i}}{k}{j}}{k}$. We set $\mathfrak{V} = \bigcup_{j \in \mathcal{N}} (\mathcal{V}_j / \equiv)$. For any $j \in \mathcal{N}$, we are given $\mathcal{U}_j \subseteq \mathcal{T}_j$ such that, for any $\mathcal{T} \in \mathcal{T}_j/\equiv$, we have $\Card{\mathcal{U}_j \cap \mathcal{T}} = m_j^{\mathcal{T}}$.  Then one can describe the differential PS $\termofTaylor{R}{e}{i+1}$ as follows:
\begin{itemize}
\item $\restriction{{\termofTaylor{R}{e}{i}}^{\leq 0}}{\mathcal{P}} \sqsubseteq_\emptyset {\termofTaylor{R}{e}{i+1}}^{\leq 0}$ (by Proposition~\ref{prop: ground-structure});
\item ${\termofTaylor{R}{e}{i+1}}^{\leq i} \sqsubseteq_{\emptyset} {\termofTaylor{R}{e}{i}}$  (by Lemma~\ref{lem: ground-structure: trivial});
\item $(\forall j \in \mathcal{N}) \mathcal{V}_j \subseteq \mathcal{T}_{j}$ (by Proposition~\ref{prop: connected components of termofTaylor{R}{e}{i+1}});
\item for any $\mathcal{V} \in \mathfrak{V}$, we have $\Card{\mathcal{V}} = \sum_{j \notin \mathcal{N}} m_j^{\mathcal{T}} \cdot k^j$, where $\mathcal{T} \in \mathfrak{T}$ such that $\mathcal{V} \subseteq \mathcal{T}$ (by Proposition~\ref{prop: crucial});
\item $\boxesatzero{{\termofTaylor{R}{e}{i+1}}} = (\boxesatzero{{\termofTaylor{R}{e}{i}}} \cap \portsatzero{{\termofTaylor{R}{e}{i+1}}}) \cup \cod[\mathcal{N}]$ (by Lemma~\ref{lem: ground-structure: trivial} and Proposition~\ref{prop: critical ports below new boxes});
\item and, for any $j \in \mathcal{N}$, there exists $\varphi_j:  B_{{\termofTaylor{R}{e}{i+1}}}(\cod(j)) \equiv_{({\termofTaylor{R}{e}{i+1}}, \cod(j))} \bigoplus \mathcal{U}_j$ (by Proposition~\ref{prop: boxes}).
\end{itemize}

In the same way: For any $j \in \mathcal{N}$, we set $\mathcal{V}'_j = \nontrivialconnected{\termofTaylor{R'}{e'}{i+1}}{\criticalports{\termofTaylor{R'}{e'}{i}}{k}{j}}{k}$. We set $\mathfrak{V}' = \bigcup_{j \in \mathcal{N}} (\mathcal{V}'_j / \equiv)$. For any $j \in \mathcal{N}$, we set $\mathcal{U}'_j = \sigma[\mathcal{U}_j]$. Then one can describe the differential PS $\termofTaylor{R'}{e'}{i+1}$ as follows:
\begin{itemize}
\item $\restriction{{\termofTaylor{R'}{e'}{i}}^{\leq 0}}{\mathcal{P'}} \sqsubseteq_\emptyset {\termofTaylor{R'}{e'}{i+1}}^{\leq 0}$ (by Proposition~\ref{prop: ground-structure});
\item ${\termofTaylor{R'}{e'}{i+1}}^{\leq i} \sqsubseteq_{\emptyset} {\termofTaylor{R'}{e'}{i}}$  (by Lemma~\ref{lem: ground-structure: trivial});
\item $(\forall j \in \mathcal{N'}) \mathcal{V}'_j \subseteq \mathcal{T}'_{j}$ (by Proposition~\ref{prop: connected components of termofTaylor{R}{e}{i+1}});
\item for any $\mathcal{V}' \in \mathfrak{V}'$, we have $\Card{\mathcal{V}'} = \sum_{j \notin \mathcal{N}'} n_j^{\mathcal{T}'} \cdot k^j$, where $\mathcal{T}' \in \mathfrak{T}'$ such that $\mathcal{V}' \subseteq \mathcal{T}'$ (by Proposition~\ref{prop: crucial});
\item $\boxesatzero{{\termofTaylor{R'}{e'}{i+1}}} = (\boxesatzero{{\termofTaylor{R'}{e'}{i}}} \cap \portsatzero{{\termofTaylor{R'}{e'}{i+1}}}) \cup \cod'[\mathcal{N}']$ (by Lemma~\ref{lem: ground-structure: trivial} and Proposition~\ref{prop: critical ports below new boxes});
\item and, for any $j \in \mathcal{N}'$, there exists $\varphi'_j:  B_{{\termofTaylor{R'}{e'}{i+1}}}(\cod'(j)) \equiv_{({\termofTaylor{R'}{e'}{i+1}}, \cod'(j))} \bigoplus \mathcal{U}'_j$ (by Proposition~\ref{prop: boxes}).
\end{itemize}

So, for any $\mathcal{T} \in \mathfrak{T}$, there exists a bijection $\tau_{\mathcal{T}}: \mathcal{T} \simeq \overline{\sigma}(\mathcal{T})$ such that 
$$\tau_{\mathcal{T}}[\mathcal{T} \cap (\bigcup_{j \in \mathcal{N}} \nontrivialconnected{{\termofTaylor{R}{e}{i+1}}}{\criticalports{{\termofTaylor{R}{e}{i}}}{k}{j}}{k})] = \overline{\sigma}(\mathcal{T}) \cap (\bigcup_{j \in \mathcal{N}'} \nontrivialconnected{{\termofTaylor{R'}{e'}{i+1}}}{\criticalports{{\termofTaylor{R'}{e'}{i}}}{k}{j}}{k})$$ 
hence there exist a bijection $\tau: \bigcup_{j \in \mathcal{N}} \mathcal{T}_j \simeq \bigcup_{j \in \mathcal{N}} \mathcal{T}'_j$ and a sequence $(\psi_T)_{T \in \bigcup_{j \in \mathcal{N}} \mathcal{T}_j}$ such that $(\forall T \in \bigcup_{j \in \mathcal{N}} \mathcal{T}_j) \psi_T: \sigma(T) \equiv \tau(T)$ and 
$$\tau[\bigcup_{j \in \mathcal{N}} \nontrivialconnected{{\termofTaylor{R}{e}{i+1}}}{\criticalports{{\termofTaylor{R}{e}{i}}}{k}{j}}{k}] = \bigcup_{j \in \mathcal{N'}} \nontrivialconnected{{\termofTaylor{R'}{e'}{i+1}}}{\criticalports{{\termofTaylor{R'}{e'}{i}}}{k}{j}}{k}$$ 
For any $p \in \portsatzero{\termofTaylor{R'}{e'}{i}} \setminus \mathcal{P'}$, let $H(p)$ be the unique $T' \in \bigcup_{j \in \mathcal{N}} \mathcal{T}'_j$ such that $p \in \portsatzero{T'}$.

We can thus define a bijection $\psi: \ports{\termofTaylor{R}{e}{i+1}} \simeq \ports{\termofTaylor{R'}{e'}{i+1}}$ such that:
\begin{itemize}
\item for any $p \in \mathcal{P}$, we have $\psi(p) = \varphi(p) \in \mathcal{P}'$;
\item for any $o \in \boxesatzero{\termofTaylor{R}{e}{i+1}} \cap \mathcal{P}$, for any $p \in \ports{B_{\termofTaylor{R}{e}{i+1}}(o)}$, we have $\psi(o, p) = \varphi(o, p)$;
\item for any $p \in \portsatzero{\termofTaylor{R}{e}{i+1}} \setminus \mathcal{P}$, we have $\psi(p) = \psi_{H(\varphi(p))}(\varphi(p))$;
\item for any $o \in \boxesatzerosmaller{\termofTaylor{R}{e}{i+1}}{i} \setminus \mathcal{P}$, for any $p \in \ports{B_{\termofTaylor{R}{e}{i+1}}(o)}$, we have $\psi(o, p) = \psi_{H(\varphi(o))}(\varphi(o, p))$;
\item and, for any $j \in \mathcal{N}$, for any $p \in \ports{B_{{\termofTaylor{R}{e}{i+1}}}(\cod(j))}$, we have $\psi(\cod(j), p) = (\varphi(\cod(j)), ({\varphi'_j}^{-1} \circ \varphi \circ \varphi_j)(p))$.
\end{itemize}
It is straightforward to check that $\psi: \termofTaylor{R}{e}{i+1} \equiv \termofTaylor{R'}{e'}{i+1}$. In particular:
\begin{itemize}
\item for any $p \in \wiresatzero{\termofTaylor{R}{e}{i+1}} \setminus \mathcal{P}$ such that $\target{\groundof{\termofTaylor{R}{e}{i+1}}}(p) \in \mathcal{P}$, we have $\varphi(\target{\groundof{\termofTaylor{R}{e}{i}}}(p)) \in \mathcal{P}' \cap \portsatzero{H(\varphi(p))} = \conclusions{H(\varphi(p))}$, hence
\begin{eqnarray*}
\psi(\target{\groundof{\termofTaylor{R}{e}{i+1}}}(p)) & = & \varphi(\target{\groundof{\termofTaylor{R}{e}{i+1}}}(p))\\
& = & \varphi(\target{\groundof{\termofTaylor{R}{e}{i}}}(p))\\
& = & \psi_{H(\varphi(p))}(\varphi(\target{\groundof{\termofTaylor{R}{e}{i}}}(p)))\allowdisplaybreaks\\
& = & \psi_{H(\varphi(p))}(\target{\groundof{\termofTaylor{R'}{e'}{i}}}(\varphi(p)))\allowdisplaybreaks\\
& = & \target{\groundof{\termofTaylor{R'}{e'}{i}}}(\psi_{H(\varphi(p))}(\varphi(p)))\\
& = & \target{\groundof{\termofTaylor{R'}{e'}{i+1}}}(\psi_{H(\varphi(p))}(\varphi(p)))\\
& = & \target{\groundof{\termofTaylor{R'}{e'}{i+1}}}(\psi(p))
\end{eqnarray*}
\item we have 
\begin{eqnarray*}
\psi[\boxesatzero{\termofTaylor{R}{e}{i+1}}] & = & \psi[\boxesatzero{\termofTaylor{R}{e}{i+1}} \cap \mathcal{P}] \cup \psi[\boxesatzero{\termofTaylor{R}{e}{i+1}} \setminus \mathcal{P}]\\
& = & \varphi[\boxesatzero{\termofTaylor{R}{e}{i+1}} \cap \mathcal{P}] \cup \{ \psi_{H(\varphi(o))}(\varphi(o)) ; o \in \boxesatzero{\termofTaylor{R}{e}{i+1}} \setminus \mathcal{P} \}\allowdisplaybreaks\\
& = & (\boxesatzero{\termofTaylor{R'}{e'}{i+1}} \cap \mathcal{P}') \cup (\boxesatzero{\termofTaylor{R'}{e'}{i+1}} \setminus \mathcal{P}') \\
& = & \boxesatzero{\termofTaylor{R'}{e'}{i+1}}
\end{eqnarray*}
\item For any $j \in \mathcal{N}$, we have an isomorphism $\varphi_{\cod(j)}:\bigoplus \mathcal{U}_j \simeq \bigoplus \mathcal{U}'_j$ that associates with every $p \in \ports{\bigoplus \mathcal{U}_j}$ the port $p \in \ports{\bigoplus \mathcal{U}'_j}$, hence $(\varphi'_j)^{-1} \circ \varphi_{\cod(j)} \circ \varphi_j$ is an isomorphism $B_{\termofTaylor{R}{e}{i+1}}(\cod(j)) \simeq B_{\termofTaylor{R'}{e'}{i+1}}(\cod'(j))$.
  \qedhere
\end{itemize}
\end{proof}

A set of two well-chosen terms of the Taylor expansion are already enough to characterize a PS:

\begin{thm}\label{theorem: finite characterization}
For any PS $R$ having $\mathcal{T}$ as Taylor expansion, there exists a finite subset $\mathcal{T}_0$ of $\mathcal{T}$ with $\Card{\mathcal{T}_0} = 2$ such that, for any PS $R'$ having $\mathcal{T'}$ as Taylor expansion, for any $\mathcal{T}_0' \subseteq \mathcal{T'}$, we have $(\mathcal{T}_0 \equiv \mathcal{T}_0' \Rightarrow R \equiv R')$.
\end{thm}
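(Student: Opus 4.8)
The plan is to reduce Theorem~\ref{theorem: finite characterization} to Proposition~\ref{prop: from i to i+1}, which propagates an isomorphism $\termofTaylor{R}{e}{0} \equiv \termofTaylor{R'}{e'}{0}$ upward through all the partial Taylor expansions, provided one has a common basis $k \geq \max\{\basis{R}, \basis{R'}\}$ together with $k$-heterogeneous pseudo-experiments $e$ on $R$ and $e'$ on $R'$. Since $\termofTaylor{R}{e}{i} = R$ and $\termofTaylor{R'}{e'}{i} = R'$ as soon as $i \geq \max\{\depthof{R}, \depthof{R'}\}$ (because then no box is expanded and $R^{\leq i} = R$), this would immediately give $R \equiv R'$. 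The two terms of $\mathcal{T}_0$ must therefore serve two distinct purposes: one pins down the basis $k$ and certifies that it is legal for $R'$ as well, the other carries the actual reconstruction.

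Concretely, I would set $k = \basis{R}$ and choose $\mathcal{T}_0 = \{\alpha, \beta\}$, where $\alpha \equiv \termofTaylor{R}{e_1}{0}$ for a $1$-pseudo-experiment $e_1$ on $R$ and $\beta \equiv \termofTaylor{R}{e}{0}$ for some $k$-heterogeneous pseudo-experiment $e$ on $R$ (such an $e$ exists because $\boxes{R}$ is finite, so one can allot pairwise distinct powers of $k$ to the boxes). Since the co-contractions of $\alpha$ all have arity $1$ while those of $\beta$ have arities that are powers of $k$, we have $\alpha \not\equiv \beta$, whence $\Card{\mathcal{T}_0} = 2$. Now take any PS $R'$ with Taylor expansion $\mathcal{T}'$ and any $\mathcal{T}_0' \subseteq \mathcal{T}'$ with $\mathcal{T}_0 \equiv \mathcal{T}_0'$. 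The witnessing bijection pairs $\alpha$ with some $\alpha' \equiv \alpha$ and $\beta$ with some $\beta' \equiv \beta$ (the swap is excluded by transitivity, as $\alpha \not\equiv \beta$), and by the definition of $\TaylorExpansion{R'}$ there are pseudo-experiments $e'_1, e'$ on $R'$ with $\alpha' \equiv \termofTaylor{R'}{e'_1}{0}$ and $\beta' \equiv \termofTaylor{R'}{e'}{0}$.

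The crux is to establish that $k$ is a legal common basis, i.e. $k \geq \basis{R'}$, and that $e'$ is $k$-heterogeneous. The latter is the easier half: because $\equiv$ preserves labels and the arity function, the isomorphism $\termofTaylor{R'}{e'}{0} \equiv \beta \equiv \termofTaylor{R}{e}{0}$ transports Properties~\ref{property: arities are powers of k} and~\ref{property: arities are distinct} of Corollary~\ref{cor: characterization of k-heterogeneous experiments} from $\termofTaylor{R}{e}{0}$ to $\termofTaylor{R'}{e'}{0}$, so $e'$ is $k$-heterogeneous on $R'$. For the bound I would use the $1$-experiment point: by the same arity-transport argument, a $1$-pseudo-experiment term being characterised (via Lemma~\ref{lem: M_0(e)}) by the condition that all its co-contractions have arity $1$, the isomorphism $\alpha' \equiv \alpha$ forces $e'_1$ to be a $1$-pseudo-experiment on $R'$. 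Moreover the integer $\basis{\cdot}$ is recoverable up to $\equiv$ from any $1$-pseudo-experiment term, since the single full expansion exposes $\Card{\boxes{\cdot}}$ as the number of co-contractions and exposes $\cosize{\cdot}$ and $\numberInvisibleComponents{\cdot}$ through its connected-component structure. Hence $\alpha \equiv \alpha'$ yields $\basis{R'} = \basis{R} = k$, so $k \geq \max\{\basis{R}, \basis{R'}\}$.

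With both hypotheses secured, Proposition~\ref{prop: from i to i+1} applied to $e$, $e'$ and the common basis $k$ gives $\termofTaylor{R}{e}{i} \equiv \termofTaylor{R'}{e'}{i}$ for every $i \in \Nat$; taking $i = \max\{\depthof{R}, \depthof{R'}\}$ yields $R \equiv R'$, as required. I expect the main obstacle to be exactly the certification that $k = \basis{R}$ is admissible for the \emph{unknown} $R'$: this is precisely why a single point cannot suffice (cf.\ Proposition~\ref{prop: non-principal typing}), and it hinges on the claim that a $1$-pseudo-experiment term determines $\basis{\cdot}$ up to $\equiv$, a bookkeeping fact I would isolate as a preliminary lemma describing which combinatorial invariants of a PS survive in a single full expansion.
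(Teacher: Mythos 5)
Your proposal is correct and follows essentially the same route as the paper's proof: the same choice of $\mathcal{T}_0 = \{\termofTaylor{R}{e_1}{0}, \termofTaylor{R}{e}{0}\}$ with $e_1$ a $1$-pseudo-experiment and $e$ a $k$-heterogeneous pseudo-experiment for $k \geq \basis{R}$, the same use of Corollary~\ref{cor: characterization of k-heterogeneous experiments} to transport $k$-heterogeneity to $e'$, the same recovery of $\basis{R'}$ from the $1$-expansion via the invariance of $\cosize{\cdot}$, $\Card{\boxes{\cdot}}$ and $\numberInvisibleComponents{\cdot}$, and the same final appeal to Proposition~\ref{prop: from i to i+1}. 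Your explicit verification that $\alpha \not\equiv \beta$ (so the pairing cannot swap) and that $e'_1$ is forced to be a $1$-pseudo-experiment via Lemma~\ref{lem: M_0(e)} only spells out details the paper leaves implicit.
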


\begin{proof}
From $\termofTaylor{R}{e_1}{0}$ for $e_1$ a $1$-experiment, one can compute $\basis{R}$. Indeed, one can easily check by induction on $\depthof{R}$ that we have:
\begin{itemize}
\item $\cosize{R} = \cosize{\termofTaylor{R}{e_1}{0}}$
\item $\Card{\boxes{R}} = \Card{\boxes{\termofTaylor{R}{e_1}{0}}}$
\item $\numberInvisibleComponents{R} = \numberInvisibleComponents{\termofTaylor{R}{e_1}{0}}$
\end{itemize}
One can then take for $\mathcal{T}_0$ the set $\{ \termofTaylor{R}{e_1}{0}, \termofTaylor{R}{e}{0} \}$, where $e$ is any $k$-heterogeneous pseudo-experiment on $R$ with $k \geq \basis{R}$. Indeed:

Let $R'$ be a PS having $\mathcal{T'}$ as Taylor expansion and let $\{ T'_1, T'_2 \} \subseteq \mathcal{T'}$ such that $T'_1 \equiv \termofTaylor{R}{e_1}{0}$ and $T'_2 \equiv \termofTaylor{R}{e}{0}$. There exist a pseudo-experiment $e'_1$ on $R'$ such that $T'_1 = \termofTaylor{R'}{e'_1}{0}$ and a pseudo-experiment $e'$ on $R'$ such that $T'_2 = \termofTaylor{R'}{e'}{0}$. By Corollary~\ref{cor: characterization of k-heterogeneous experiments}, the pseudo-experiment $e'$ is a $k$-heterogeneous experiment on $R'$. So we can apply Proposition~\ref{prop: from i to i+1}. 
\end{proof}

In particular, we obtain the invertibility of Taylor expansion:

\begin{cor}\label{cor: invertibility of Taylor expansion}
Let $R_1$ and $R_2$ be two PS's having respectively $\mathcal{T}_1$ and $\mathcal{T}_2$ as Taylor expansions. If $\mathcal{T}_1 \equiv \mathcal{T}_2$, then $R_1 \equiv R_2$.
\end{cor}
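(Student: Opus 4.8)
The plan is to read this off directly from Theorem~\ref{theorem: finite characterization}, which already carries all the weight; the corollary is just an unwinding of the quantifiers. First I would recall what $\mathcal{T}_1 \equiv \mathcal{T}_2$ means: by the definition of $\equiv$ on sets of differential in-PS's (Definition~\ref{defin: isomorphism}), there is a bijection $\Phi : \mathcal{T}_1 \simeq \mathcal{T}_2$ with $T \equiv \Phi(T)$ for every $T \in \mathcal{T}_1$.

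Next I would invoke Theorem~\ref{theorem: finite characterization} applied to $R_1$: it produces a subset $\mathcal{T}_0 \subseteq \mathcal{T}_1$ with $\Card{\mathcal{T}_0} = 2$ such that, for every PS $R'$ having Taylor expansion $\mathcal{T}'$ and every $\mathcal{T}_0' \subseteq \mathcal{T}'$, we have $(\mathcal{T}_0 \equiv \mathcal{T}_0' \Rightarrow R_1 \equiv R')$. I would then set $\mathcal{T}_0' = \Phi[\mathcal{T}_0]$, which is a two-element subset of $\mathcal{T}_2$, and observe that the restriction $\restriction{\Phi}{\mathcal{T}_0}$ is a bijection $\mathcal{T}_0 \simeq \mathcal{T}_0'$ satisfying $T \equiv \Phi(T)$ on $\mathcal{T}_0$; hence $\mathcal{T}_0 \equiv \mathcal{T}_0'$. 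Instantiating the theorem with $R' = R_2$ and $\mathcal{T}' = \mathcal{T}_2$ then yields $R_1 \equiv R_2$, as desired.

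The one point deserving a sentence of care is that a Taylor expansion is only determined up to $\equiv$ (Definition~\ref{defin: Taylor expansion}), so $\mathcal{T}_1$ and $\mathcal{T}_2$ are arbitrary representatives; this is exactly why the theorem is phrased with an existential $\mathcal{T}_0$ and a universal $\mathcal{T}_0'$, and the transport along $\Phi$ above respects this bookkeeping. Consequently there is no genuine obstacle left at this stage: all the difficulty has already been discharged upstream, in Proposition~\ref{prop: from i to i+1}, whose inductive rebuilding of $\termofTaylor{R}{e}{i+1}$ from $\termofTaylor{R}{e}{i}$ (for a $\basis{R}$-heterogeneous pseudo-experiment) is the true heart of the argument, and in Corollary~\ref{cor: characterization of k-heterogeneous experiments} together with Theorem~\ref{theorem: finite characterization}, which let one recognize the distinguished pair $\{ \termofTaylor{R}{e_1}{0}, \termofTaylor{R}{e}{0} \}$ inside any equivalent Taylor expansion. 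Thus the corollary reduces to the purely set-theoretic transport of that distinguished pair across the witnessing bijection $\Phi$.
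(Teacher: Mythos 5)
Your proof is correct and takes exactly the route the paper intends: the corollary is stated without proof precisely because it is the immediate unwinding of Theorem~\ref{theorem: finite characterization} that you spell out, namely transporting the distinguished two-element subset $\mathcal{T}_0$ along the bijection $\Phi$ witnessing $\mathcal{T}_1 \equiv \mathcal{T}_2$ and instantiating the theorem with $R' = R_2$, $\mathcal{T}' = \mathcal{T}_2$, $\mathcal{T}_0' = \Phi[\mathcal{T}_0]$. Your closing remark about where the real work lies (Proposition~\ref{prop: from i to i+1} and the recognition of the distinguished pair via Corollary~\ref{cor: characterization of k-heterogeneous experiments}) is also accurate.
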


The finite subset $\mathcal{T}_0$ of Theorem~\ref{theorem: finite characterization} has cardinality $2$. A natural question is to ask if the theorem could be refined in such a way that one could have a singleton $\mathcal{T}_0$ (for any PS). The answer is: no, it is not possible.

\begin{prop}\label{prop: non-principal typing}
There exists a PS $R$ having $\mathcal{T}$ as Taylor expansion such that, for any $T \in \mathcal{T}$, there exist a PS $R'$ having $\mathcal{T'}$ as Taylor expansion and $T' \in \mathcal{T'}$ such that $T \equiv T'$ holds but $R \equiv R'$ does not hold.
\end{prop}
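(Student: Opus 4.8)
The plan is to exhibit a single witness PS $R$ together with, for each term of its Taylor expansion, a competing non-isomorphic PS producing the same term. I would take $R$ to be one of the depth-$1$ cut-free nets of the family $R_1,\dots,R_4$ of Figures~\ref{fig: R_1}--\ref{fig: R_4}, say $R=R_1$, relying on the properties already recorded in Example~\ref{example: border > LPS}: all four share the LPS of Figure~\ref{fig: LPS}, so $\pseudoExperiments{R_1}=\pseudoExperiments{R_2}=\pseudoExperiments{R_3}=\pseudoExperiments{R_4}$, and for every $e$ the arities $\arity{\termofTaylor{R_1}{e}{0}}(o_1)$, $\arity{\termofTaylor{R_1}{e}{0}}(o_2)$ and $\arity{\termofTaylor{R_1}{e}{0}}(p)$ coincide with those of the other three nets. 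The point is that a term $\termofTaylor{R}{e}{0}$ only records, at the shared contraction $p$, the total arity together with the \emph{unnamed} bundle of auxiliary-door copies above it, and at each box the cocontraction arity $\Card{e(o_i)}$; since $\equiv$ forgets the names of the ports produced inside expanded boxes, it does not record how the premises of $p$ are apportioned among the boxes.

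First I would reduce the statement to a per-experiment task: by Definition~\ref{defin: Taylor expansion} it suffices, for each $e\in\pseudoExperiments{R}$, to produce a PS $R'_e$ with $R'_e\not\equiv R$ and some $e'$ such that $\termofTaylor{R}{e}{0}\equiv\termofTaylor{R'_e}{e'}{0}$, for then $T=\termofTaylor{R}{e}{0}$ lies up to $\equiv$ in the Taylor expansion of $R'_e$. The competitor $R'_e$ is obtained by re-decomposing the apportionment at $p$: writing the arity of $p$ in $\termofTaylor{R}{e}{0}$ as a sum $c_1\,\Card{e(o_1)}+c_2\,\Card{e(o_2)}$ of the contributions coming from $o_1$ and $o_2$, I would choose a different admissible splitting of the same total (or, equivalently, a different one of the nets $R_2,R_3,R_4$ compatible with the same border) and keep $e'=e$. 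This is exactly the arithmetic ambiguity illustrated by Figures~\ref{fig: basis} and~\ref{fig: basis2}: the arity of a contraction admits several base-$\Card{e(o_i)}$ decompositions, each corresponding to a legitimate but different placement of auxiliary doors, and because the relevant contents are indistinguishable once the box borders are erased, the two placements yield the \emph{same} simple differential net while $R'_e\not\equiv R$.

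The hard part will be to make this work for \emph{every} $e$, and in particular for the heterogeneous experiments, where a crude redistribution could demand a negative auxiliary-door count and where, taken together with a second point, the term \emph{does} determine $R$. Here I would lean on precisely the obstruction that forces Theorem~\ref{theorem: finite characterization} to use \emph{two} points: by Corollary~\ref{cor: characterization of k-heterogeneous experiments} the property of being $k$-heterogeneous is visible in a single term, but the value of $\basis{R}$ (equivalently $\cosize{R}$) is \emph{not}, so one is always free to reinterpret a given term as coming from a net of larger co-size, and this is exactly what lets me choose $R'_e\not\equiv R$. The main technical obstacle is therefore to design the witness so that this reinterpretation is available for all copy-profiles simultaneously — I expect to have to verify, experiment by experiment, that $\termofTaylor{R_1}{e}{0}$ collides with $\termofTaylor{R_j}{e}{0}$ for at least one $j\in\{2,3,4\}$ (or with a suitably modified PS), and to check that each $R'_e$ is a genuine cut-free PS whose conclusions are shallow. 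I would close by remarking that the result is sharp: by Corollary~\ref{cor: invertibility of Taylor expansion} and Theorem~\ref{theorem: finite characterization} the full Taylor expansion, and already two well-chosen terms, determine $R$, so the obstruction is genuinely the missing basis information that no single point can supply.
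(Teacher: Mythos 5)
Your reduction to a per-term task is correct, and the underlying intuition (a single term of the Taylor expansion cannot record whether the premises of a contraction were already at depth $0$ or arose as copies of an auxiliary door of a box) is exactly the right one; it is also the paper's. But your concrete construction has a genuine gap: the claim that a different admissible splitting of the arity of $p$ is \emph{equivalently} realized by one of the nets $R_2,R_3,R_4$ is false, and with it the plan of taking $R=R_1$ and competitors inside that family collapses. The relation $\equiv$ (Definition~\ref{defin: isomorphism}) fixes the shallow conclusions \emph{pointwise}, and $o_1$, $o_2$, $p$ are shallow conclusions of $\termofTaylor{R_1}{e}{0}$; moreover the premises of $p$ carry their labels ($\bot$ versus $\one$), they are not an anonymous bundle. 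So if $\Card{e(o_1)}=n_1$ and $\Card{e(o_2)}=n_2$, the term of $R_1$ has cocontraction $o_1$ of arity $n_1$, cocontraction $o_2$ of arity $n_2$, and $p$ with $n_1$ premises labelled $\bot$ and $n_2$ labelled $\one$. Any term of $R_2$ whose cocontractions match these arities gives $p$ with $n_1$ ones and $n_2$ bottoms, and any term of $R_3$ (resp.\ $R_4$) gives $n_1$ of each (resp.\ $n_2$ of each). Hence for every $e$ with $n_1\neq n_2$, which is the generic case, none of $R_2,R_3,R_4$ has a term $\equiv$-equal to $\termofTaylor{R_1}{e}{0}$. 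You were misled by Example~\ref{example: border > LPS}: it asserts only equality of \emph{arities} at $o_1,o_2,p$, which is strictly weaker than $\equiv$ of the terms.

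What survives is precisely your unnamed fallback, and it is what the paper does: for each $e$ the competitor must be built by pulling the auxiliary-door content out of the box and duplicating it at depth $0$ with multiplicity equal to the number of copies taken by $e$. The paper's witness is even simpler than yours: $R$ is a single box $o$ containing $\one$ together with an auxiliary $\bot$ whose door feeds the contraction $p$ (Figure~\ref{fig: non-principal typing-1}); its Taylor expansion is $\{T_n\}_{n\in\Nat}$ (Figure~\ref{fig: non-principal typing-differential}), and for each $n$ the competitor $R'_n$ is the PS whose box contains only $\one$ and which has $n$ depth-$0$ ports labelled $\bot$ as premises of $p$ (Figure~\ref{fig: non-principal typing-2}); then $T_n$ occurs up to $\equiv$ in the Taylor expansion of $R'_n$ while $R'_n\not\equiv R$. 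Note that this works uniformly in $e$: no case analysis on heterogeneity, co-size or $\basis{R}$ is needed, so your third paragraph is a red herring. If you replace ``a suitably modified PS'' by this explicit construction, your argument goes through; as written, the step that must cover all pseudo-experiments is exactly the one that fails.
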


\begin{proof}
For $R$, we take the PS of depth $1$ depicted in Figure~\ref{fig: non-principal typing-1}.
\begin{figure}
\centering
\resizebox{.3 \textwidth}{!}{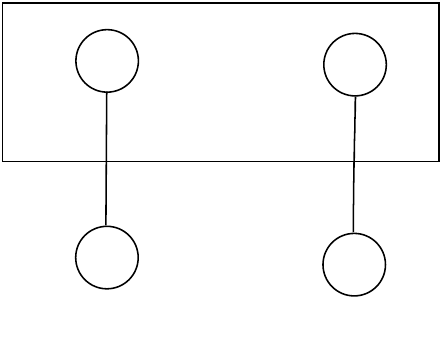}
\captionsetup{width=.9 \textwidth}
\caption{The PS $R$ of the proof of Proposition~\ref{prop: non-principal typing}}
\label{fig: non-principal typing-1}
\end{figure}
The PS $R$ has as a Taylor expansion the set $\{ T_n ; n \in \Nat \}$, where, for any $n \in \Nat$, the differential PS $T_n$ is as depicted in Figure~\ref{fig: non-principal typing-differential}.
\begin{figure}
\centering
\begin{minipage}{0.45\textwidth}
\raggedright
\resizebox{\textwidth}{!}{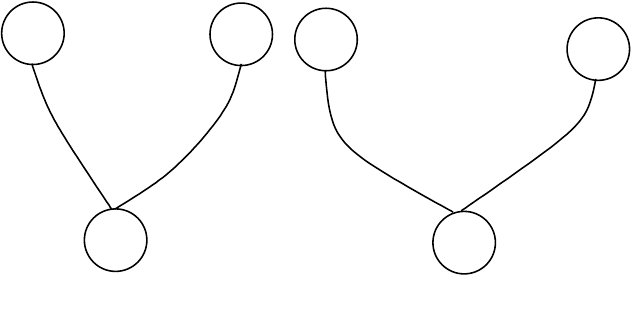}
\captionsetup{width=.9 \textwidth}
\caption{The differential PS's $T_n$ of Proposition~\ref{prop: non-principal typing}}
\label{fig: non-principal typing-differential}
\end{minipage}\hfill
\begin{minipage}{0.45\textwidth}
\resizebox{.7 \textwidth}{!}{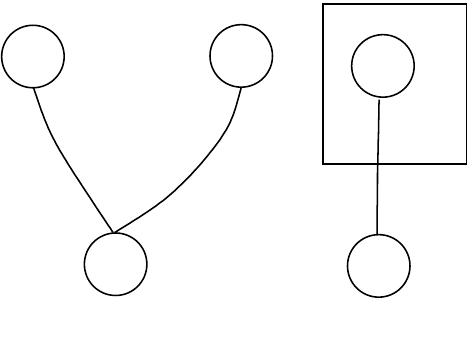}
\captionsetup{width=.9 \textwidth}
\caption{The PS's $R'_n$ of Proposition~\ref{prop: non-principal typing}}
\label{fig: non-principal typing-2}
\end{minipage}\hfill
\end{figure}
For any $n \in \Nat$, let $R'_n$ be the PS as depicted in Figure~\ref{fig: non-principal typing-2} having $\mathcal{T}_n'$ as Taylor expansion; there exists $T' \in \mathcal{T}_n'$ such that $T' \equiv T_n$.
\end{proof}

\section{Relational semantics}\label{section: injectivity}

\subsection{Untyped framework}

For the semantics of PS's in the multiset based relational semantics, we are given a set $\mathcal{A}$ that does not contain any couple nor any $3$-tuple and such that $\ast \notin \mathcal{A}$. We define, by induction on $n$, the sets $D_{\mathcal{A}, n}$ for any $n \in \Nat$:
\begin{itemize}
\item $D_{\mathcal{A}, 0} = \{ +, - \} \times (\mathcal{A} \cup \{ \ast \})$
\item $D_{\mathcal{A}, n+1} = D_{\mathcal{A}, 0} \cup (\{ +, - \} \times D_{\mathcal{A}, n} \times D_{\mathcal{A}, n}) \cup (\{ +, - \} \times \finitemultisets{D_{\mathcal{A}, n}})$
\end{itemize}
We set $D_\mathcal{A} = \bigcup_{n \in \Nat} D_{\mathcal{A}, n}$. 
\begin{itemize}
\item For any $\alpha \in D_{\mathcal{A}}$, we denote by $\height{\alpha}$ the integer $\min \{ n \in \Nat ; \alpha \in D_{\mathcal{A}, n} \}$. For any finite set $\mathcal{P}$, for any function $x: \mathcal{P} \to D_{\mathcal{A}}$, we set $\height{x} = \max \{ \height{x(p)} ; p \in \mathcal{P} \}$.
\item For any $\alpha \in D_{\mathcal{A}}$, we define the integer $\size{\alpha}$ by induction on $\height{\alpha}$: for any $\delta \in \{ +, - \}$, we set $\size{(\delta, \gamma)} = 1$ if $\gamma \in \mathcal{A} \cup \{ \ast \}$, $\size{(\delta, \alpha_1, \alpha_2)} = 1 + \size{\alpha_1} + \size{\alpha_2}$ if $\alpha_1, \alpha_2 \in D_{\mathcal{A}}$, and $\size{(\delta, \multi{\alpha_1, \ldots, \alpha_m})} = 1 + \sum_{j=1}^m \size{\alpha_j}$. For any finite set $\mathcal{P}$, for any function $x: \mathcal{P} \to D_{\mathcal{A}}$, we set $\size{x} = \sum_{p \in \mathcal{P}} \size{x(p)}$.
\item For any $\alpha \in D_{\mathcal{A}}$, we denote by $\atoms{\alpha}$ the set of $\gamma \in \mathcal{A}$ that occur in $\alpha$. For any finite set $\mathcal{P}$, for any function $x: \mathcal{P} \to D_{\mathcal{A}}$, we set $\atoms{x} = \bigcup_{\alpha \in \im{x}} \atoms{\alpha}$.
\end{itemize}

\begin{defi}
For any $\alpha \in D_\mathcal{A}$, we define $\alpha^\perp \in D_\mathcal{A}$ as follows:
\begin{itemize}
\item if $\alpha \in \mathcal{A} \cup \{ \ast \}$ and $\delta \in \{ +, - \}$, then $(\delta, \alpha)^\perp = (\delta^\perp, \alpha)$;
\item if $\alpha = (\delta, \alpha_1, \alpha_2)$ with $\delta \in \{ +, - \}$ and $\alpha_1, \alpha_2 \in D_{\mathcal{A}}$, then $\alpha^\perp = (\delta^\perp, {\alpha_1}^\perp, {\alpha_2}^\perp)$;
\item if $\alpha = (\delta, [\alpha_1, \ldots, \alpha_m])$ with $\delta \in \{ +, - \}$ and $\alpha_1, \ldots, \alpha_m \in D_{\mathcal{A}}$, then $\alpha^\perp = (\delta^\perp, [{\alpha_1}^\perp, \ldots, {\alpha_m}^\perp])$;
\end{itemize}
where $+^\perp = -$ and $-^\perp = +$.
\end{defi}

The following definition is an adaptation of the original definition in \cite{ll} to our framework. If $e$ is an experiment on some PS $S$, then $\restriction{\ports{e}}{\conclusions{S}}$ is its \emph{result}. Then the interpretation of a PS is the set of results of its experiments:

\begin{defi}\label{defin: experiment in an untyped framework}
Let $S$ be a differential in-PS. We define, by induction on $\textit{depth}(S)$, the set of \emph{experiments on $S$}: it is the set of pairs $e = (\ports{e}, \boxes{e})$, where 
\begin{itemize}
\item $\ports{e}$ is a function that associates with every $p \in \portsatzero{S}$ an element of $D_{\mathcal{A}}$ and with every $p \in \portsatdepthgreater{S}{0}$ an element of $\finitemultisets{D_{\mathcal{A}}}$,
\item and $\boxes{e}$ is a function which associates with every $o \in \boxesatzero{S}$ a finite multiset of experiments on $B_S(o)$ 
\end{itemize}
such that
\begin{itemize}
\item for any $p \in \multiplicativeportsatzero{S}$, for any $w_1, w_2 \in \wiresatzero{S}$ such that $\target{\groundof{S}}(w_1) = p = \target{\groundof{S}}(w_2)$, $w_1 \in \leftwires{\groundof{S}}$ and $w_2 \notin \leftwires{\groundof{S}}$, we have $\ports{e}(p) = (\delta, \ports{e}(w_1), \ports{e}(w_2))$ with $\delta \in \{ +, - \}$ and $(\labelofport{\groundof{S}}(p) = \tens \Leftrightarrow \delta = +)$;
\item for any $\{ p, p' \} \in \axiomsatzero{S} \cup \cutsatzero{S}$, we have $\ports{e}(p) = {\ports{e}(p')}^\perp$;
\item for any $p \in \exponentialportsatzero{S}$,
we have $\ports{e}(p) = (\delta, \sum_{\substack{w \in \wiresatzero{S}\\ \target{\groundof{S}}(w) = p}} \multi{\ports{e}(w)} +  \sum_{\substack{q \in \portsatdepthgreater{S}{0}\\ \target{S}(q) = p}} \ports{e}(q))$ and $\delta \in \{ +, - \}$ and $(\labelofport{\groundof{S}}(p) = \cod \Leftrightarrow \delta = +)$;
\item for any $p \in \portsatzerooftype{\one}{S}$, we have $\ports{e}(p) = (+, \ast)$ and, for any $p \in \portsatzerooftype{\bot}{S}$, we have $\ports{e}(p) = (-, \ast)$;
\item for any $o \in \boxesatzero{S}$, for any $p \in \portsatzero{B_S(o)}$, we have $\ports{e}(p) = \sum_{e_o \in \supp{\boxes{e}(o)}}$ $\boxes{e}(o)(e_o) \cdot \multi{\ports{e_o}(p)}$;
\item for any $o \in \boxesatzero{S}$, for any $p \in \portsatdepthgreater{B_S(o)}{0}$, we have $\ports{e}(p) = \sum_{e_o \in \supp{\boxes{e}(o)}}$ $\boxes{e}(o)(e_o) \cdot \ports{e_o}(p)$.
\end{itemize}
If $S$ is a PS, then we set $\sm{S} =$ $\{ \restriction{\ports{e}}{\conclusions{S}} ;$ $e \textit{ is an experiment on } S \}$.
\end{defi}

Every experiment induces a pseudo-experiment:

\begin{defi}\label{definition: experiment induces pseudo}
Let $S$ be an in-PS. Let $e$ be an experiment on $S$. Then we define, by induction on $\depthof{S}$, a pseudo-experiment $\overline{e}$ on $S$ as follows: 
\begin{itemize}
\item $\overline{e}(\emptysequence) = 0$
\item and, for any $o \in \boxesatzero{S}$, $\overline{e}(o) = \bigcup_{e_o \in \supp{e(o)}} \{ \overline{e_o}[\emptysequence \mapsto i] ; 1 \leq i \leq \boxes{e}(e_o) \}$.
\end{itemize}
\end{defi}

\begin{defi}\label{defin: injective}
Let $r \in \finitemultisets{D_{\mathcal{A}}}$. We say that $r$ is \emph{$\mathcal{A}$-injective} if, for any $(\delta, \gamma) \in \{ +, - \} \times {\mathcal{A}}$, there is at most one occurrence of $(\delta, \gamma)$ in $r$. 

For any set $\mathcal{P}$, for any function $x : \mathcal{P} \to D_{\mathcal{A}}$, we say that $x$ is \emph{$\mathcal{A}$-injective} if $\sum_{p \in \mathcal{P}} [x(p)]$ is $\mathcal{A}$-injective; moreover a subset $D_0$ of $(D_{\mathcal{A}})^{\mathcal{P}}$ is said to be $\mathcal{A}$-injective if, for any $x \in D_0$, the function $x$ is $\mathcal{A}$-injective.

An experiment $e$ on a differential PS $S$ is said to be \emph{injective} if $\restriction{\ports{e}}{\conclusions{R}}$ is $\mathcal{A}$-injective.
\end{defi}

\begin{defi}
Let $\sigma : {\mathcal{A}} \to D_{\mathcal{A}}$. For any $\alpha \in D_{\mathcal{A}}$, we define $\sigma \cdot \alpha \in D_{\mathcal{A}}$ as follows:
\begin{itemize}
\item if $\alpha \in {\mathcal{A}}$, then $\sigma \cdot (+, \alpha) = \sigma(\alpha)$ and $\sigma \cdot (-, \alpha) = \sigma(\alpha)^\perp$;
\item if $\alpha = (\delta, \ast)$ for some $\delta \in \{ +, - \}$, then $\sigma \cdot \alpha = \alpha$;
\item if $\delta \in \{ +, - \}$ and $\alpha_1, \alpha_2 \in D_{\mathcal{A}}$, then $\sigma \cdot (\delta, \alpha_1, \alpha_2) = (\delta, \sigma \cdot \alpha_1, \sigma \cdot \alpha_2)$;
\item if $\delta \in \{ +, - \}$ and $\alpha_1, \ldots, \alpha_m \in D_{\mathcal{A}}$, then $\sigma \cdot (\delta, [\alpha_1, \ldots, \alpha_m]) = (\delta, [\sigma \cdot \alpha_1, \ldots, \sigma \cdot \alpha_m])$.
\end{itemize}
For any set $\mathcal{P}$, for any function $x : \mathcal{P} \to D_{\mathcal{A}}$, we define a function $\sigma \cdot x : \mathcal{P} \to D_{\mathcal{A}}$ by setting: $(\sigma \cdot x)(p) = \sigma \cdot x(p)$ for any $p \in \mathcal{P}$.
\end{defi}

\begin{fact}\label{fact: substitutions compatible with duality}
For any $\sigma: A \to D_{\mathcal{A}}$, for any $\alpha \in D_{\mathcal{A}}$, we have $\sigma \cdot {\alpha}^{\perp} = {(\sigma \cdot \alpha)}^{\perp}$.
\end{fact}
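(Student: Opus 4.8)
The plan is to prove the identity by induction on $\height{\alpha}$ (equivalently, by induction on the structure of $\alpha$), distinguishing the four possible shapes an element of $D_{\mathcal{A}}$ can take. Before starting, I would record the auxiliary fact that $-^\perp$ is an involution on $D_{\mathcal{A}}$, i.e. ${(\beta^\perp)}^\perp = \beta$ for every $\beta \in D_{\mathcal{A}}$, and that $\height{\beta^\perp} = \height{\beta}$: both are immediate inductions on $\height{\beta}$ using ${(\delta^\perp)}^\perp = \delta$ for $\delta \in \{ +, - \}$. Involutivity is the only nontrivial ingredient the main argument will need, and it enters in exactly one place.

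For the base cases, first suppose $\alpha = (\delta, \gamma)$ with $\gamma \in \mathcal{A}$. If $\delta = +$, then $\alpha^\perp = (-, \gamma)$, so $\sigma \cdot \alpha^\perp = \sigma(\gamma)^\perp$, while $\sigma \cdot \alpha = \sigma(\gamma)$ gives ${(\sigma \cdot \alpha)}^\perp = \sigma(\gamma)^\perp$; the two agree. If $\delta = -$, then $\alpha^\perp = (+, \gamma)$, so $\sigma \cdot \alpha^\perp = \sigma(\gamma)$, whereas $\sigma \cdot \alpha = \sigma(\gamma)^\perp$ gives ${(\sigma \cdot \alpha)}^\perp = {(\sigma(\gamma)^\perp)}^\perp$, and here I invoke involutivity to conclude that this equals $\sigma(\gamma)$. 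This negative-atom subcase is the single step where involutivity is used. For the remaining base case $\alpha = (\delta, \ast)$, both sides reduce to $(\delta^\perp, \ast)$, since the substitution fixes every element of the form $(\delta', \ast)$ and duality only flips the sign.

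For the inductive cases, suppose $\alpha = (\delta, \alpha_1, \alpha_2)$. Then $\alpha^\perp = (\delta^\perp, \alpha_1^\perp, \alpha_2^\perp)$, and unfolding the substitution gives $\sigma \cdot \alpha^\perp = (\delta^\perp, \sigma \cdot \alpha_1^\perp, \sigma \cdot \alpha_2^\perp)$; applying the induction hypothesis to $\alpha_1$ and $\alpha_2$ (which have strictly smaller height) rewrites this as $(\delta^\perp, {(\sigma \cdot \alpha_1)}^\perp, {(\sigma \cdot \alpha_2)}^\perp)$, which is exactly ${(\delta, \sigma \cdot \alpha_1, \sigma \cdot \alpha_2)}^\perp = {(\sigma \cdot \alpha)}^\perp$. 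The multiset case $\alpha = (\delta, [\alpha_1, \ldots, \alpha_m])$ is identical, applying the induction hypothesis to each $\alpha_j$ componentwise. Since no case other than the negative-atom base case requires anything beyond unfolding the two definitions and the induction hypothesis, I do not expect any genuine obstacle: the proof is a routine structural induction whose only subtlety is remembering that duality must already be known to be involutive before the computation closes.
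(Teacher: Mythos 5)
Your proof is correct and follows essentially the same route as the paper, which simply says ``by induction on $\alpha$'': a structural induction over the four shapes of elements of $D_{\mathcal{A}}$, with the only non-routine point being that the case $\alpha = (-, \gamma)$, $\gamma \in \mathcal{A}$, reduces to $(\sigma(\gamma)^\perp)^\perp = \sigma(\gamma)$ and thus needs the (easily proved) involutivity of $-^\perp$ on $D_{\mathcal{A}}$ — a detail the paper leaves implicit and which you correctly isolate.
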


\begin{proof}
By induction on $\alpha$.
\end{proof}

\begin{lem}\label{lem: substitutions are actions}
For any functions $\sigma, \sigma' : {\mathcal{A}} \to D_{\mathcal{A}}$, for any $\alpha \in D_{\mathcal{A}}$, we have $\sigma \cdot (\sigma' \cdot \alpha) = (\sigma \cdot \sigma') \cdot \alpha$.
\end{lem}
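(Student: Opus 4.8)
The plan is to prove the identity by structural induction on $\alpha \in D_{\mathcal{A}}$, following exactly the case analysis used to define the action $\sigma \cdot (-)$. First I would record the crucial unfolding: by the definition of the action of a substitution on a function (instantiated with $\mathcal{P} = \mathcal{A}$ and the function $\sigma'$), the substitution $\sigma \cdot \sigma'$ is the function $\mathcal{A} \to D_{\mathcal{A}}$ sending each $\gamma \in \mathcal{A}$ to $\sigma \cdot \sigma'(\gamma)$. This identification is what makes the atomic base cases fall out, since it relates the composed substitution back to the action on the single value $\sigma'(\gamma)$.

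For the base cases, if $\alpha = (+, \gamma)$ with $\gamma \in \mathcal{A}$, then $\sigma \cdot (\sigma' \cdot \alpha) = \sigma \cdot \sigma'(\gamma) = (\sigma \cdot \sigma')(\gamma) = (\sigma \cdot \sigma') \cdot \alpha$, using the unfolding above. If $\alpha = (\delta, \ast)$ for some $\delta \in \{ +, - \}$, then both sides equal $(\delta, \ast)$, since the action fixes $\ast$-leaves. The only base case requiring an extra ingredient is $\alpha = (-, \gamma)$ with $\gamma \in \mathcal{A}$: here the left-hand side is $\sigma \cdot (\sigma'(\gamma)^\perp)$ while the right-hand side is $\big((\sigma \cdot \sigma')(\gamma)\big)^\perp = (\sigma \cdot \sigma'(\gamma))^\perp$, and these agree precisely by Fact~\ref{fact: substitutions compatible with duality}, which gives $\sigma \cdot \beta^\perp = (\sigma \cdot \beta)^\perp$ for $\beta = \sigma'(\gamma)$.

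For the inductive step, I use that the action commutes with the two non-atomic constructors. If $\alpha = (\delta, \alpha_1, \alpha_2)$, then unfolding the action twice gives $\sigma \cdot (\sigma' \cdot \alpha) = (\delta, \sigma \cdot (\sigma' \cdot \alpha_1), \sigma \cdot (\sigma' \cdot \alpha_2))$, and applying the induction hypothesis to $\alpha_1$ and to $\alpha_2$ rewrites each component as $(\sigma \cdot \sigma') \cdot \alpha_i$, which by definition is exactly $(\sigma \cdot \sigma') \cdot \alpha$. The multiset case $\alpha = (\delta, [\alpha_1, \ldots, \alpha_m])$ is handled identically, applying the induction hypothesis componentwise to each $\alpha_j$ and then reassembling the multiset.

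The argument is essentially bookkeeping, so I do not expect a genuine obstacle. The one point deserving care is the negative-atom base case, where one must invoke Fact~\ref{fact: substitutions compatible with duality} rather than attempting to push the action through $(-)^\perp$ by hand; everything else is a direct unfolding of definitions. It is also worth noting that the induction is well-founded because each $\alpha_i$ (resp.\ each $\alpha_j$) has strictly smaller $\height{\cdot}$ than $\alpha$, which matches the inductive scheme by which the action $\sigma \cdot (-)$ was itself defined.
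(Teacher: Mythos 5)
Your proof is correct and follows exactly the paper's own route: the paper proves this lemma by induction on $\alpha$, invoking Fact~\ref{fact: substitutions compatible with duality} precisely for the negative-atom case, which is the one place your writeup also flags as needing it. The rest is the same definitional unfolding, so your proposal is a faithful (and more detailed) version of the paper's argument.
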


\begin{proof}
By induction on $\alpha$, applying Fact~\ref{fact: substitutions compatible with duality}.
\end{proof}

\begin{defi}
Let $S$ be a differential in-PS. Let $e$ be an experiment on $S$. Let $\sigma : {\mathcal{A}} \to D_{\mathcal{A}}$. We define, by induction of $\depthof{S}$, an experiment $\sigma \cdot e$ on $S$ by setting 
\begin{itemize}
\item $\ports{\sigma \cdot e} = \sigma \cdot \ports{e}$
\item $\boxes{\sigma \cdot e}(o) = \sum_{e_o \in \supp{\boxes{e}(o)}} \boxes{e}(o)(e_o) \cdot [\sigma \cdot e_o]$ for any $o \in \boxesatzero{S}$.
\end{itemize}
\end{defi}

Since we deal with untyped proof-nets, we cannot assume that the proof-nets are $\eta$-expanded and that experiments label the axioms only by atoms. That is why we introduce the notion of \emph{atomic experiment}:

\begin{defi}\label{definition: atomic experiment}
For any differential in-PS $S$, we define, by induction on $\depthof{S}$, the set of \emph{atomic experiments on $S$}: it is the set of experiments $e$ on $S$ such that
\begin{itemize}
\item for any $\{ p, q \} \in \axioms{\groundof{S}}$, we have $\ports{e}(p), \ports{e}(q) \in \{ +, - \} \times {\mathcal{A}}$;
\item and, for any $o \in \boxesatzero{S}$, the multiset $\boxes{e}(o)$ is a multiset of atomic experiments on $B_S(o)$.
\end{itemize}
\end{defi}

\begin{defi}\label{definition: atomic point}
Let $\mathcal{P}$ be a set.  

Let $x \in {(D_{\mathcal{A}})}^\mathcal{P}$. A \emph{renaming of $x$} is a function $\sigma: \mathcal{A} \to D_{\mathcal{A}}$ such that $(\forall \gamma \in \atoms{x}) \sigma(\gamma) \in \{ +, - \} \times \mathcal{A}$.

Let $\mathcal{D} \subseteq {(D_{\mathcal{A}})}^\mathcal{P}$. 
Let $x \in \mathcal{D}$, we say that $x$ is \emph{$\mathcal{D}$-atomic} if we have 
$$(\forall \sigma \in {(D_{\mathcal{A}})}^{\mathcal{A}}) (\forall y \in \mathcal{D}) (\sigma \cdot y = x \Rightarrow \sigma \in \mathfrak{R}(y))$$

We denote by $\mathcal{D}_{\textit{At}}$ the subset of $\mathcal{D}$ consisting of the $\mathcal{D}$-atomic elements of $\mathcal{D}$.
\end{defi}

\begin{fact}\label{fact: st renaming => t renaming}
Let $x \in D_{\mathcal{A}}$. Let $\sigma$ and $\tau$ be two applications $\mathcal{A} \to D_{\mathcal{A}}$. Then we have $(\sigma \cdot \tau \in \mathfrak{R}(x) \Rightarrow (\tau \in \mathfrak{R}(x) \wedge \sigma \in \mathfrak{R}(\tau \cdot x)))$.
\end{fact}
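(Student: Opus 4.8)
The plan is to unfold the definition of $\mathfrak{R}$ and reduce the statement to a structural case analysis on the image $\tau(\gamma)$, for $\gamma \in \atoms{x}$. By the definition of $\mathfrak{R}$, the hypothesis $\sigma \cdot \tau \in \mathfrak{R}(x)$ says precisely that for every $\gamma \in \atoms{x}$ we have $(\sigma \cdot \tau)(\gamma) \in \{ +, - \} \times \mathcal{A}$, and by Lemma~\ref{lem: substitutions are actions} applied to $\alpha = (+, \gamma)$ this rewrites as $\sigma \cdot \tau(\gamma) \in \{ +, - \} \times \mathcal{A}$. Before the main argument I would record two elementary facts, each proved by a straightforward induction (on $\alpha$, resp.\ on $\height{x}$): first, that $-^{\perp}$ is an involution on $D_{\mathcal{A}}$ that leaves $\atoms{\cdot}$ unchanged, so that $\beta \in \{ +, - \} \times \mathcal{A}$ if and only if $\beta^\perp \in \{ +, - \} \times \mathcal{A}$ (using $(\delta, c)^\perp = (\delta^\perp, c)$ and $\ast \notin \mathcal{A}$); and second, that $\atoms{\tau \cdot x} \subseteq \bigcup_{\gamma \in \atoms{x}} \atoms{\tau(\gamma)}$, i.e.\ every atom occurring in $\tau \cdot x$ already occurs in the image of some atom of $x$.

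For the first conjunct, $\tau \in \mathfrak{R}(x)$, I would fix $\gamma \in \atoms{x}$ and argue by cases on the shape of $\tau(\gamma) \in D_{\mathcal{A}}$, reading off $\sigma \cdot \tau(\gamma)$ from the defining clauses of the action. If $\tau(\gamma) = (\delta, \ast)$ then $\sigma \cdot \tau(\gamma) = (\delta, \ast) \notin \{ +, - \} \times \mathcal{A}$; if $\tau(\gamma) = (\delta, \alpha_1, \alpha_2)$ then $\sigma \cdot \tau(\gamma)$ is a $3$-tuple; and if $\tau(\gamma) = (\delta, [\alpha_1, \ldots, \alpha_m])$ then $\sigma \cdot \tau(\gamma)$ is multiset-tagged. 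Since $\mathcal{A}$ contains no couple nor any $3$-tuple, in each of these three cases $\sigma \cdot \tau(\gamma) \notin \{ +, - \} \times \mathcal{A}$, contradicting the hypothesis. Hence the only remaining possibility, $\tau(\gamma) \in \{ +, - \} \times \mathcal{A}$, must hold, which is exactly $\tau \in \mathfrak{R}(x)$.

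For the second conjunct, $\sigma \in \mathfrak{R}(\tau \cdot x)$, I would use that $\tau$ is now known to be a renaming of $x$: write $\tau(\gamma) = (\delta_\gamma, c_\gamma)$ with $c_\gamma \in \mathcal{A}$ for each $\gamma \in \atoms{x}$, so that $\atoms{\tau(\gamma)} = \{ c_\gamma \}$ and, by the inclusion above, any $\gamma' \in \atoms{\tau \cdot x}$ equals some $c_\gamma$. Then $\sigma \cdot \tau(\gamma)$ is $\sigma(c_\gamma) = \sigma(\gamma')$ when $\delta_\gamma = +$, and $\sigma(c_\gamma)^\perp = \sigma(\gamma')^\perp$ when $\delta_\gamma = -$; in both cases the hypothesis gives $\sigma \cdot \tau(\gamma) \in \{ +, - \} \times \mathcal{A}$. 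In the first case this is directly $\sigma(\gamma') \in \{ +, - \} \times \mathcal{A}$, and in the second case $\sigma(\gamma')^\perp \in \{ +, - \} \times \mathcal{A}$ yields $\sigma(\gamma') \in \{ +, - \} \times \mathcal{A}$ by the involutivity fact. This establishes $\sigma \in \mathfrak{R}(\tau \cdot x)$ and completes the proof.

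The routine parts are the two inductive preliminaries and the case analysis; the only genuinely delicate point is the sign bookkeeping in the negative case $\delta_\gamma = -$, where an atom of $x$ is replaced by $\tau(\gamma)^\perp$ rather than $\tau(\gamma)$, so that recovering $\sigma(\gamma') \in \{ +, - \} \times \mathcal{A}$ from $\sigma(\gamma')^\perp \in \{ +, - \} \times \mathcal{A}$ relies crucially on duality being an involution compatible with the encoding (and on $\ast \notin \mathcal{A}$ together with $\mathcal{A}$ containing no tuples, which keeps all the cases disjoint).
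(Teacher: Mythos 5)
Your proof is correct, but it follows a genuinely different route from the paper's. The paper's proof is a short size argument: using Lemma~\ref{lem: substitutions are actions} it writes $(\sigma \cdot \tau) \cdot x = \sigma \cdot (\tau \cdot x)$ and then chains $\size{\tau \cdot x} \leq \size{\sigma \cdot (\tau \cdot x)} = \size{(\sigma \cdot \tau) \cdot x} = \size{x} \leq \size{\tau \cdot x}$ (substitution never decreases size, and a renaming preserves it), deduces $\size{\tau \cdot x} = \size{x}$ and $\size{\sigma \cdot (\tau \cdot x)} = \size{\tau \cdot x}$, and concludes directly that $\tau \in \mathfrak{R}(x)$ and $\sigma \in \mathfrak{R}(\tau \cdot x)$. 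You instead avoid size entirely: you rule out each non-atomic shape of $\tau(\gamma)$ straight from the hypothesis, and then treat $\sigma$ via the atom-tracking inclusion $\atoms{\tau \cdot x} \subseteq \bigcup_{\gamma \in \atoms{x}} \atoms{\tau(\gamma)}$ together with the involutivity of $(-)^\perp$. What the paper's argument buys is brevity, since both conjuncts fall out of one chain of (in)equalities; what yours buys is robustness at the degenerate cases. Indeed, taken literally, the step ``size preservation entails renaming'' fails for a substitution sending an atom to $(\delta, \ast)$ or to the signed empty multiset $(\delta, \multi{})$, both of which have size $1$ without being signed atoms, so the paper's final step tacitly re-invokes the hypothesis on exactly the cases your analysis handles explicitly; similarly, excluding $\sigma(\gamma') = (\delta, \ast)$ for $\gamma' \in \atoms{\tau \cdot x}$ requires tracing $\gamma'$ back to an atom of $x$, which is your atom-tracking lemma and your sign bookkeeping in the $\delta_\gamma = -$ case. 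Two minor remarks: invoking Lemma~\ref{lem: substitutions are actions} to identify $(\sigma \cdot \tau)(\gamma)$ with $\sigma \cdot \tau(\gamma)$ is superfluous, since the action on functions is defined pointwise; and your multiset case, like the paper's entire development, rests on the implicit assumption that $\mathcal{A}$ contains no finite multisets (the stated hypotheses only exclude couples, $3$-tuples and $\ast$), which is needed anyway for the case decomposition of $D_{\mathcal{A}}$ to be unambiguous, so this is not a gap attributable to your proof.
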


\begin{proof}
Let us assume that $\sigma \cdot \tau \in \mathfrak{R}(x)$. By Lemma~\ref{lem: substitutions are actions}, we have $(\sigma \cdot \tau) \cdot x = \sigma \cdot (\tau \cdot x)$. So, we have $\size{\tau \cdot x} \leq \size{\sigma \cdot (\tau \cdot x)} = \size{(\sigma \cdot \tau) \cdot x} = \size{x} \leq \size{\tau \cdot x} \leq \size{\sigma \cdot (\tau \cdot x)}$, hence $\size{\tau \cdot x} = \size{x}$ and $\size{\sigma \cdot (\tau \cdot x)} = \size{\tau \cdot x}$, which entails that $\tau \in \mathfrak{R}(x)$ and $\sigma \in \mathfrak{R}(\tau \cdot x)$.
\end{proof}

\begin{fact}\label{fact: injectivity for depth 0}
Let $S$ and $S'$ be two cut-free differential PS's of depth $0$. Let $e$ and $e'$ be two atomic experiments on $S$ and $S'$ respectively such that $\restriction{\ports{e}}{\conclusions{S}} = \restriction{\ports{e}}{\conclusions{S'}}$ is $\mathcal{A}$-injective. Then $S \equiv S'$.
\end{fact}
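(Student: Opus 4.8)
The plan is to reconstruct the whole structure from the labelled forest encoded by $\restriction{\ports{e}}{\conclusions{\groundof{S}}}$, the reconstruction of the axiom-links being the only genuinely global step and the place where $\mathcal{A}$-injectivity (Definition~\ref{defin: injective}) is used. First I would record the structural consequences of the hypotheses: since $S$ and $S'$ are cut-free and of depth $0$, we have $\boxesatzero{S} = \emptyset = \boxesatzero{S'}$ and $\cutsatzero{S} = \emptyset = \cutsatzero{S'}$, so $\conclusions{\groundof{S}} = \portsatzero{S} \setminus \wiresatzero{S}$; moreover $\conclusions{\groundof{S}} = \conclusions{\groundof{S'}}$, since the two restricted labellings have the same domain. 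Because $\leq_{\groundof{S}}$ is a finite partial order whose maximal elements are exactly the conclusions, every port $p$ reaches, by iterating $\target{\groundof{S}}$, a unique conclusion $c_p$; and since $\target{\groundof{S}}$ only hits ports labelled $\tens,\parr,\cod,\contr$, the value $\ports{e}(p)$ occurs as a proper subterm of $\ports{e}(c_p)$. In particular, the atom on any axiom port occurs in the label of some conclusion.

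Next I would build a label-driven forest isomorphism. For each conclusion $c$, the sub-structure of $S$ lying above $c$ is a tree whose shape is dictated by the syntactic tree of $\ports{e}(c)$: a node $(\delta,\ast)$ is a $\one$- or $\bottom$-link according to $\delta$, a node $(\delta,\gamma)$ with $\gamma \in \mathcal{A}$ is an axiom port, a binary node is a $\tens$ or $\parr$ (its two children being the left and right premises, so the constraint on $\leftwiresatzero{}$ is forced), and a multiset node is a $\cod$ or $\contr$. Reading this tree off the common value $\ports{e}(c) = \ports{e'}(c)$ yields an isomorphism between the tree above $c$ in $S$ and the tree above $c$ in $S'$ that fixes $c$, preserves $\labelofport{}$, $\target{}$, $\leftwiresatzero{}$ and $\wiresatzero{}$, and preserves the experiment value; the only freedom is the choice, at each $\cod$/$\contr$ node, of a label-respecting bijection between premises, which always exists because the two premise-multisets carry the same labels. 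Taking the union of these tree-isomorphisms over all conclusions (their port-sets are disjoint, by uniqueness of $c_p$, and cover $\portsatzero{S}$) produces a single bijection $\varphi : \portsatzero{S} \to \portsatzero{S'}$, identity on $\conclusions{\groundof{S}}$, satisfying every clause of Definition~\ref{definition: isomorphisms} except possibly $\varphi_\ast[\axiomsatzero{S}] = \axiomsatzero{S'}$.

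The final step, and the heart of the argument, is to check that $\varphi$ transports axioms to axioms, for which I would invoke $\mathcal{A}$-injectivity. Given $\{p,q\} \in \axiomsatzero{S}$ with $\ports{e}(p) = (\delta,\gamma)$ and $\ports{e}(q) = (\delta^\perp,\gamma)$, the construction gives $\ports{e'}(\varphi(p)) = (\delta,\gamma)$ and $\ports{e'}(\varphi(q)) = (\delta^\perp,\gamma)$, with $\varphi(p),\varphi(q)$ axiom ports of $S'$. Let $\{\varphi(p), r\}$ be the axiom of $S'$ containing $\varphi(p)$, so that $\ports{e'}(r) = (\delta^\perp,\gamma)$. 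Both $\varphi(q)$ and $r$ are axiom ports carrying the occurrence $(\delta^\perp,\gamma)$, which propagates down to a conclusion of $S'$; were $r \neq \varphi(q)$, these two distinct axiom ports would produce two occurrences of $(\delta^\perp,\gamma)$ in the conclusion labels, contradicting the $\mathcal{A}$-injectivity of $\restriction{\ports{e'}}{\conclusions{\groundof{S'}}}$. Hence $r = \varphi(q)$ and $\{\varphi(p),\varphi(q)\} \in \axiomsatzero{S'}$. Since $\varphi$ is a bijection mapping the axiom ports of $S$ onto those of $S'$ and sending each axiom pair into an axiom pair, it sends $\axiomsatzero{S}$ bijectively onto $\axiomsatzero{S'}$; thus $\varphi : S \equiv S'$.

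I expect the main obstacle to be the bookkeeping at $\cod$/$\contr$ ports: two premises of such a port may receive the same experiment label, so the premise-matching in paragraph two is not canonical. The key point to stress is that this non-determinism is harmless — a repeated premise label must be atom-free by injectivity, so the sub-forests above such premises contain no axiom ports and are isomorphic whatever bijection is chosen — and that all the rigidity of the reconstruction comes not from the forest skeleton but from the global axiom-matching of the last step.
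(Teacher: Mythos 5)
Your proposal is correct. For comparison, the paper's own proof of this fact is a one-line sketch --- ``By induction on $\Card{\portsatzero{S}}$'' --- so what it has in mind is a port-by-port peeling argument: the value of a conclusion under an atomic experiment reveals its link (a pair $(\delta,\ast)$ gives a unit, a triple a multiplicative together with its left/right premises, a multiset pair an exponential, an atom an axiom port), one removes that link from both $S$ and $S'$, checks that the restricted results of the smaller nets are still equal and $\mathcal{A}$-injective, and concludes by induction. You reach the same end non-inductively, in one global pass: the forest above the conclusions is read off the common labelling, the tree-isomorphisms are glued into a single bijection $\varphi$ fixing $\conclusions{\groundof{S}}$, and $\mathcal{A}$-injectivity is invoked exactly once, to show that $\varphi$ carries each axiom pair of $S$ onto an axiom pair of $S'$. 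The ingredients are identical (atomicity to make axiom ports recognizable from their values, the experiment clauses to make every other link recognizable, injectivity to pin down axiom partners); the trade-off is that the induction avoids your global bookkeeping (uniqueness of the conclusion $c_p$ below each port, disjointness of the trees), while your construction isolates cleanly the single step where injectivity --- as opposed to mere atomicity --- is really used. One small remark: your final paragraph's observation that repeated premise values at $\cod$/$\contr$ ports are atom-free is correct but not actually needed, since your axiom-matching argument only uses that $\varphi$ preserves labels and experiment values, which holds for any label-respecting choice of premise bijection.
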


\begin{proof}
By induction on $\Card{\portsatzero{S}}$.
\end{proof}

For any PS $R$, any $\sm{R}$-atomic injective point is the result of some atomic experiment on $R$:

\begin{fact}\label{fact: from point to experiment}
Let $R$ be a cut-free in-PS. Let $x \in {\sm{R}}_{\textit{At}}$. We assume that the set $\mathcal{A}$ is infinite or $x$ is $\mathcal{A}$-injective. Then there exists an atomic experiment $e$ on $R$ such that $\restriction{e}{\conclusions{R}} = x$.
\end{fact}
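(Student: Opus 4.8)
The plan is to start from an arbitrary experiment realizing $x$, replace every label carried by the axioms by a fresh atom so as to obtain an \emph{atomic} experiment, and then exploit the fact that $x$ is $\sm{R}$-atomic to downgrade the accompanying substitution to a mere renaming; applied back to the atomic experiment, this renaming produces the experiment we want.

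First I would use $x \in \sm{R}$ to fix an experiment $e_0$ on $R$ with $\restriction{\ports{e_0}}{\conclusions{R}} = x$. Then I would \emph{atomize} $e_0$: keeping exactly the same box structure $\boxes{e_0}$ (recursively), I assign to each occurrence of an axiom port (one occurrence per copy of its nesting boxes) a fresh atom, with dual signed atoms on the two endpoints of each axiom; all remaining port labels are then forced by the experiment conditions. This yields an atomic experiment $e_1$ on $R$, whose result I call $y := \restriction{\ports{e_1}}{\conclusions{R}} \in \sm{R}$. Simultaneously I define $\sigma : \mathcal{A} \to D_{\mathcal{A}}$ by sending each fresh atom to the $D_{\mathcal{A}}$-element that $e_0$ put on the corresponding axiom occurrence. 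Since the map $\alpha \mapsto \sigma \cdot \alpha$ is a homomorphism for the tensor, par, exponential and box constructions and commutes with duality (Fact~\ref{fact: substitutions compatible with duality}), an induction on $\depthof{R}$ gives $\sigma \cdot \ports{e_1} = \ports{e_0}$, hence $\sigma \cdot y = x$.

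Now I invoke $x \in {\sm{R}}_{\textit{At}}$: since $y \in \sm{R}$ and $\sigma \cdot y = x$, $\sm{R}$-atomicity of $x$ forces $\sigma \in \mathfrak{R}(y)$, i.e. $\sigma$ maps every atom of $\atoms{y}$ to a signed atom. Cut-freeness of $R$ is what makes this decisive: no construction at depth $0$, nor inside a realized (hence non-empty) box, ever discards an atom, so every fresh atom used by $e_1$ propagates upward to some conclusion and thus belongs to $\atoms{y}$. Consequently $\sigma$ renames \emph{all} the atoms occurring in $e_1$; therefore $\sigma \cdot e_1$ is again atomic (a signed atom is sent to a signed atom, recursively in the boxes), and its result is $\sigma \cdot y = x$. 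Taking $e = \sigma \cdot e_1$ finishes the proof.

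The step requiring care is the atomization, and this is exactly where the disjunctive hypothesis intervenes. When $\mathcal{A}$ is infinite, distinct fresh atoms for all axiom occurrences are available for free. When $\mathcal{A}$ is finite, I would instead use that $x$ is $\mathcal{A}$-injective: injectivity bounds the number of axiom occurrences reaching the conclusions by $\Card{\atoms{x}} \leq \Card{\mathcal{A}}$ (each atom of an injective point occurs at most twice, pairing the two endpoints of a single axiom), so the needed fresh atoms can again be chosen — indeed in this case the endpoint pairing can be read off directly from $x$. The remaining delicate bookkeeping concerns weakenings and boxes taken with zero copies: one must check that these contribute empty multisets and so introduce no atom that could escape $\atoms{y}$, which is precisely the point where restricting to realized occurrences and to cut-free $R$ is essential.
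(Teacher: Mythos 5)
Your main construction is correct when $\mathcal{A}$ is infinite, and it is essentially the paper's own argument run globally rather than inductively: the paper proves, by induction on $(\depthof{R}, \Card{\portsatzero{R}})$, that every non-atomic experiment $e'$ factors as $\sigma \cdot e$ for some experiment $e$ and some $\gamma \in \atoms{\restriction{\ports{e}}{\conclusions{R}}}$ with $\sigma(\gamma) \notin \{+,-\}\times\mathcal{A}$, and this contradicts $\sm{R}$-atomicity of the result exactly as in your step invoking $x \in {\sm{R}}_{\textit{At}}$. Your one-shot atomization of all axiom occurrences, together with the propagation remark (in a cut-free in-PS every port lies above a chain of targets ending in a conclusion, and every realized box copy injects its labels into the exponential port or non-shallow conclusion below, so every atom of the experiment survives into its result), is the same de-substitution performed globally; the propagation fact is precisely what makes the paper's witness $\gamma$ land in $\atoms{\restriction{\ports{e}}{\conclusions{R}}}$. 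So for the first disjunct of the hypothesis there is nothing to object to.

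The genuine gap is your treatment of the second disjunct ($\mathcal{A}$ finite, $x$ $\mathcal{A}$-injective). The bound you invoke --- that injectivity limits the number of axiom occurrences by $\Card{\atoms{x}} \leq \Card{\mathcal{A}}$ --- is circular: it presupposes that the axioms of $e_0$ carry signed atoms, which is exactly what the Fact asserts. An axiom that $e_0$ labels by $(\pm,\ast)$, or by a pair or a multiset, contributes nothing to $\atoms{x}$, so injectivity of $x$ bounds neither the number of axiom occurrences nor the number of distinct labels your $\sigma$ must separate; moreover your fresh atoms must also avoid $\atoms{e_0}$ (otherwise $\sigma \cdot e_1 = e_0$ fails at the occurrences where those atoms already appeared), and with $\mathcal{A}$ finite such atoms need not exist. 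The obstruction is not mere bookkeeping: take $\mathcal{A} = \{\gamma\}$ and $R$ consisting of two axioms whose four ports are all conclusions, with $x$ labelling the first axiom by $(+,\gamma)$ and $(-,\gamma)$ and the second by $(+,(+,\ast),(+,\ast))$ and its dual. This $x$ is $\mathcal{A}$-injective and is $\sm{R}$-atomic (any $\sigma$ with $\sigma \cdot y = x$ and $y \in \sm{R}$ is forced by the first axiom to send $\gamma$ to a signed atom, and $\gamma$ is the only atom), yet it is the result of no atomic experiment; so no choice of fresh atoms can make your argument go through in that regime. In other words, your proof really covers only the infinite case --- a case the paper's terse induction also handles, while the injectivity disjunct is only needed for degenerate situations such as $\propvar = \emptyset$, where $\mathcal{A} = \emptyset$ and typed nets have no axioms, so that every experiment is vacuously atomic.
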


\begin{proof}
We prove, by induction on $(\depthof{R}, \Card{\portsatzero{R}})$ lexicographically ordered, that, for any non-atomic experiment $e'$ on $R$, there exist an experiment $e$ on $R$, a function $\sigma : {\mathcal{A}} \to D_{\mathcal{A}}$ such that $\sigma \cdot e = e'$ and $\gamma \in \atoms{\restriction{e}{\conclusions{R}}}$ such that $\sigma(\gamma) \notin \{ +, - \} \times {\mathcal{A}}$.
\end{proof}

The converse does not necessarily hold (but see Lemma~\ref{lem: results of typed experiments are atomic} about \emph{typable} cut-free PS's): for some cut-free PS's $R$, there are results of atomic injective experiments on $R$ that are not $\sm{R}$-atomic. Indeed, consider Figure~\ref{fig: atomic experiments}. 
\begin{figure}[t]
\centering
\resizebox{\textwidth}{!}{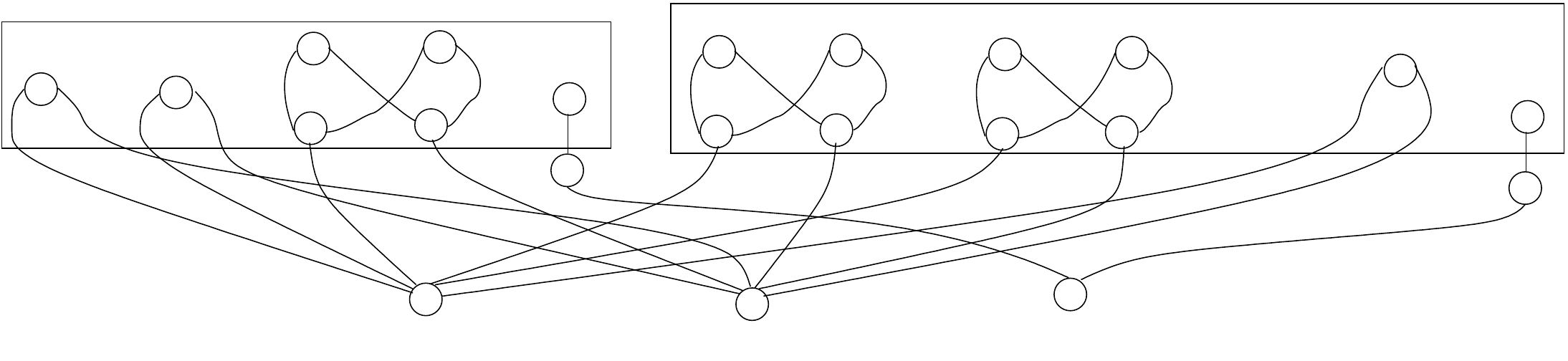}
\caption{PS $R''$}
\label{fig: atomic experiments}
\end{figure}
There exists an atomic injective experiment $e$ on $R''$ such that 
\begin{itemize}
\item $\ports{e}(p_1) = (-, [(+, \gamma_1), \ldots, (+, \gamma_7), (+, (+, \gamma_{8}), (+, \gamma_{9})), \ldots, (+, (+, \gamma_{22}), (+, \gamma_{23}))])$, 
\item $\ports{e}(p_2) = (-, [(-, \gamma_1), \ldots, (-, \gamma_7), (-, (-, \gamma_{8}), (-, \gamma_{9})), \ldots, (-, (-, \gamma_{22}), (-, \gamma_{23}))])$
\item and $\ports{e}(p_3) = (-, [(+, [(+, \ast), (+, \ast)]), (+, [(+, \ast), (+, \ast), (+, \ast)]) ])$,
\end{itemize}
where $\{ \gamma_1, \ldots, \gamma_{23} \} \subseteq {\mathcal{A}}$. But $\restriction{e}{\{ p_1, p_2, p_3 \}}$ is not in ${(\sm{R''})}_{\textit{At}}$: there exists an atomic injective experiment $e'$ on $R'$ such that 
\begin{itemize}
\item $\ports{e'}(p_1) = (-, [(+, \gamma_1), \ldots, (+, \gamma_8), (+, (+, \gamma_{10}), (+, \gamma_{11})), \ldots, (+, (+, \gamma_{22}), (+, \gamma_{23}))])$, 
\item $\ports{e'}(p_2) = (-, [(-, \gamma_1), \ldots, (-, \gamma_8), (-, (-, \gamma_{10}), (-, \gamma_{11})), \ldots, (-, (-, \gamma_{22}), (-, \gamma_{23}))])$ 
\item and $\ports{e'}(p_3) = (-, [(+, [(+, \ast), (+, \ast)]), (+, [(+, \ast), (+, \ast), (+, \ast)]) ])$;
\end{itemize}
we set $\sigma(\gamma) = \left\lbrace \begin{array}{ll} (+, \gamma) & \textit{if $\gamma \in {\mathcal{A}} \setminus \{ \gamma_8 \}$;}\\ (+, (+, \gamma_8), (+, \gamma_9)) & \textit{if $\gamma = \gamma_8$;} \end{array} \right.$ - we have $\sigma \cdot \restriction{e'}{\{ p_1, p_2, p_3 \}} = \restriction{e}{\{ p_1, p_2, p_3 \}}$.

But it does not matter, because yet atomic points are enough to generate all the points:

\begin{fact}\label{fact: atomic are enough}
Let $R$ be an in-PS. For any $y \in \sm{R}$, there exist $x \in {\sm{R}}_\textit{At}$ and $\sigma : {\mathcal{A}} \to D_{\mathcal{A}}$ such that $\sigma \cdot x = y$.
\end{fact}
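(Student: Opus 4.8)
The plan is to argue by well-founded induction on the pair $\mu(y) = (\size{y}, n_\ast(y))$ ordered lexicographically, where $n_\ast(y)$ denotes the total number of occurrences of the symbol $\ast$ in $y$ (viewed as a family of elements of $D_{\mathcal{A}}$). The role of the secondary component is the following subtlety, which I expect to be the only real obstacle: a substitution sending an atom $\gamma$ to some $(\delta, \ast)$ preserves $\size{\cdot}$ (both sides have size $1$) yet is not a renaming, so $\size{\cdot}$ alone is not strictly decreasing and the naive induction on size would loop. Recording the number of $\ast$'s repairs this.

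First I would record two monotonicity properties, both proved by a routine induction on $\alpha \in D_{\mathcal{A}}$ (using Fact~\ref{fact: substitutions compatible with duality} to handle negative occurrences) and then summed over the conclusions of $R$: for every $\sigma \colon \mathcal{A} \to D_{\mathcal{A}}$ and every $\alpha$, one has $\size{\sigma \cdot \alpha} \geq \size{\alpha}$ and $n_\ast(\sigma \cdot \alpha) \geq n_\ast(\alpha)$; moreover $\size{\sigma \cdot \alpha} = \size{\alpha}$ forces $\sigma(\gamma)$ to have size $1$ for every $\gamma \in \atoms{\alpha}$.

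For the induction itself, fix $y \in \sm{R}$. If $y$ is $\sm{R}$-atomic, I am done by taking $x = y$ together with the identity renaming $\sigma_0 \colon \gamma \mapsto (+, \gamma)$, which satisfies $\sigma_0 \cdot y = y$. Otherwise, by Definition~\ref{definition: atomic point} there are $z \in \sm{R}$ and $\tau \colon \mathcal{A} \to D_{\mathcal{A}}$ with $\tau \cdot z = y$ and $\tau \notin \mathfrak{R}(z)$; hence some $\gamma_0 \in \atoms{z}$ has $\tau(\gamma_0) \notin \{ +, - \} \times \mathcal{A}$, i.e. $\tau(\gamma_0)$ is either of size $\geq 2$ or equal to some $(\delta, \ast)$. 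In the first case the monotonicity above gives $\size{z} < \size{y}$; in the second case $\size{z} = \size{y}$ while each occurrence of $\gamma_0$ in $z$ becomes an occurrence of $\ast$ in $y$ and no $\ast$ is destroyed, so $n_\ast(z) < n_\ast(y)$. Either way $\mu(z) <_{\mathrm{lex}} \mu(y)$, and the induction hypothesis applied to $z$ yields $x \in \sm{R}_{\textit{At}}$ and $\rho$ with $\rho \cdot x = z$.

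Finally I would compose the two substitutions: by Lemma~\ref{lem: substitutions are actions} (applied coordinatewise over the conclusions of $R$), $(\tau \cdot \rho) \cdot x = \tau \cdot (\rho \cdot x) = \tau \cdot z = y$, so $x$ and $\sigma := \tau \cdot \rho$ witness the claim. The only delicate point is the well-foundedness bookkeeping just described, namely isolating the case that causes size to stall and handling it through $n_\ast$; everything else reduces to the two inductive monotonicity inequalities.
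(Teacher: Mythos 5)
Your proof is correct, and its skeleton is the paper's own: induct on a measure of $y$; if $y$ is $\sm{R}$-atomic take $x = y$ with the identity renaming; otherwise pull $y$ back through a non-renaming substitution $\tau \cdot z = y$ with $z \in \sm{R}$, apply the induction hypothesis to $z$, and compose the two substitutions using Lemma~\ref{lem: substitutions are actions}. The genuine difference is the induction measure, and it matters. The paper inducts on $\size{\sum_{p \in \conclusions{R}} [y(p)]}$ alone and asserts that the non-renaming pullback strictly decreases it; exactly as you observed, that assertion fails when the witnessing substitution sends some $\gamma_0 \in \atoms{z}$ to $(\delta, \ast)$, since $\size{(\delta, \gamma_0)} = 1 = \size{(\delta, \ast)}$ while $(\delta,\ast) \notin \{+,-\} \times \mathcal{A}$. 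So the paper's proof, as written, has a gap at precisely the point your secondary component $n_\ast$ is designed to handle, and your lexicographic measure $(\size{y}, n_\ast(y))$ repairs it: the action never destroys occurrences of $\ast$, so a size-preserving non-renaming pullback must create at least one, giving $n_\ast(z) < n_\ast(y)$. In this respect your write-up is more careful than the paper's. One small imprecision on your side: in your second case ($\tau(\gamma_0) = (\delta,\ast)$) you flatly assert $\size{z} = \size{y}$, but some other atom of $z$ could still be sent to an element of size $\geq 2$, in which case $\size{z} < \size{y}$; this costs nothing, because in every case either the size drops, or the size stalls and then all atoms of $z$ map to size-$1$ elements so that $n_\ast(z) < n_\ast(y)$ — hence $\mu(z) <_{\mathrm{lex}} \mu(y)$ regardless. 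The cleanest phrasing is to split on whether $\size{z} = \size{y}$ rather than on the shape of $\tau(\gamma_0)$.
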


\begin{proof}
By induction on $\size{\sum_{p \in \conclusions{R}} [y(p)]}$: if $y \in {\sm{R}}_\textit{At}$, then we can set $x = y$ and $\sigma = id_{\mathcal{A}}$; if $y \notin {\sm{R}}_{\textit{At}}$, then there exist a function $\sigma' : {\mathcal{A}} \to D_{\mathcal{A}}$, $y' \in \sm{R}$ such that $\sigma' \cdot y' = y$ and $\gamma \in \atoms{y'}$ such that $\sigma'(\gamma) \notin {\mathcal{A}}$, hence $\size{\sum_{p \in \conclusions{R}} [y'(p)]} < \size{\sum_{p \in \conclusions{R}} [y(p)]}$. By induction hypothesis, there exist $x \in {\sm{R}}_\textit{At}$ and $\sigma'' : {\mathcal{A}} \to D_{\mathcal{A}}$ such that $\sigma'' \cdot x = y'$. We set $\sigma = \sigma' \cdot \sigma''$: we have $\sigma \cdot x = (\sigma' \cdot \sigma'') \cdot x = \sigma' \cdot (\sigma'' \cdot x) = \sigma' \cdot y' = y$.
\end{proof}

\begin{lem}\label{lem: atomic experiment induces atomic experiment on Taylor expansion}
Let $R$ be an in-PS. Let $e$ be an (resp. atomic) experiment on $R$. Then there exists an (resp. atomic) experiment $\mathcal{T}_R(e)$ on $\termofTaylor{R}{\overline{e}}{0}$ such that $\restriction{\ports{\mathcal{T}_R(e)}}{\conclusions{\termofTaylor{R}{\overline{e}}{0}}} = \restriction{\ports{e}}{\conclusions{R}}$.
\end{lem}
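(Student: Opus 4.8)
The plan is to argue by induction on $\depthof{R}$, building $\mathcal{T}_R(e)$ explicitly and, in order to make the induction close, strengthening the conclusion into an invariant about \emph{all} the ports of $\termofTaylor{R}{\overline{e}}{0}$ rather than only its conclusions. Concretely I would prove: there is an experiment $\mathcal{T}_R(e)$ on $\termofTaylor{R}{\overline{e}}{0}$ (atomic whenever $e$ is) such that $\ports{\mathcal{T}_R(e)}$ and $\ports{e}$ agree on $\portsatzero{R}$, while for every $q \in \portsatdepthgreater{R}{0}$ one has $\ports{e}(q) = \sum_{p\,:\,\isaCopyfrom{R}{\overline{e}}{0}(p) = q} [\ports{\mathcal{T}_R(e)}(p)]$. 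The final statement is then the depth-$0$ instance of the first clause, since the shallow conclusions of $R$ lie in $\portsatzero{R}$ and are kept unchanged in $R^{\leq 0}$. A useful preliminary observation is that, because we expand at level $i = 0$, the differential net $\termofTaylor{R}{\overline{e}}{0}$ has depth $0$: an experiment on it is therefore just a labelling of its depth-$0$ ports, so the $\boxes{\cdot}$-component is trivial and only $\ports{\mathcal{T}_R(e)}$ has to be defined. The base case $\depthof{R} = 0$ is immediate, as $\termofTaylor{R}{\overline{e}}{0} = R$ and one takes $\mathcal{T}_R(e) = e$.

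For the inductive step I would use the description $\termofTaylor{R}{\overline{e}}{0} = S@t$ with $S = R^{\leq 0} \oplus \bigoplus_{o \in \boxesatzero{R}} \bigoplus_{e_o \in \overline{e}(o)} \langle o, \langle e_o, \termofTaylor{B_R(o)}{e_o}{0} \rangle \rangle$ from Definition~\ref{defin: Taylor}. On the ports inherited from $R^{\leq 0}$ I simply copy the labels of $e$. For each box $o$ and each copy $e_o \in \overline{e}(o)$, Definition~\ref{definition: experiment induces pseudo} gives $e_o = \overline{e_o'}[\emptysequence \mapsto z]$ for a unique $e_o' \in \supp{\boxes{e}(o)}$ and some index $1 \leq z \leq \boxes{e}(o)(e_o')$; since the value at $\emptysequence$ plays no role in Definition~\ref{defin: Taylor}, we have $\termofTaylor{B_R(o)}{e_o}{0} = \termofTaylor{B_R(o)}{\overline{e_o'}}{0}$, so the induction hypothesis applied to the experiment $e_o'$ on $B_R(o)$ yields an experiment on this copy, which I transport along the renaming $\langle o, \langle e_o, - \rangle \rangle$ and place into $\mathcal{T}_R(e)$. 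The crucial point is that $\overline{e}(o)$ furnishes exactly $\boxes{e}(o)(e_o')$ copies of $e_o'$, matching the multiplicities appearing in the experiment clauses of Definition~\ref{defin: experiment in an untyped framework}.

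It then remains to verify that $\mathcal{T}_R(e)$ is a genuine (atomic) experiment and satisfies the invariant. The multiplicative, unit, axiom and cut clauses are inherited directly, either because the labels are copied verbatim on $\portsatzero{R}$ or by the induction hypothesis inside each copy; atomicity likewise follows since axioms at depth $0$ keep their atomic labels and, by Definition~\ref{definition: atomic experiment}, the $e_o'$ are atomic. The delicate clause is the one for exponential ports, namely the former box main doors $o$ and the contractions lying immediately below the boxes: here the multiset premise in the experiment condition on $\termofTaylor{R}{\overline{e}}{0}$ runs over the wires glued by $t$, which by the definition of $\mathcal{W}_0$ are precisely the copies $(o,(e_o,q))$ of box conclusions $\isaCopyfrom{B_R(o)}{e_o}{0}(q) \in \temporaryConclusions{R}{o}$. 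Summing these over all copies $e_o \in \overline{e}(o)$ reproduces, via the induction hypothesis and the multiplicity count above, the sums $\sum_{e_o' \in \supp{\boxes{e}(o)}} \boxes{e}(o)(e_o') \cdot [\cdots]$ defining $\ports{e}$ at the corresponding exponential port of $R$. This combinatorial identification of glued wires with copies, together with the clean bookkeeping of the nested renamings and of how $t$ sends each copied conclusion back to its target (main door or contraction), is where I expect the real work to lie; the rest is routine. Finally, the matching of results follows by restricting the depth-$0$ part of the invariant to the shallow conclusions $\conclusions{\groundof{R}}$, using Fact~\ref{fact: conclusions} to identify them with the conclusions of $\termofTaylor{R}{\overline{e}}{0}$ at depth $0$.
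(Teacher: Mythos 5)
Your proposal is correct and takes essentially the same approach as the paper: the paper's proof likewise proceeds by induction on $\depthof{R}$ with exactly your strengthened invariant ($\ports{\mathcal{T}_R(e)}$ agrees with $\ports{e}$ on $\portsatzero{R}$, and $\ports{e}(q) = \sum_{p \,:\, \isaCopyfrom{R}{\overline{e}}{0}(p) = q} \multi{\ports{\mathcal{T}_R(e)}(p)}$ for $q \in \portsatdepthgreater{R}{0}$), and with exactly your construction of $\mathcal{T}_R(e)$ (copy $\ports{e}$ on $\portsatzero{R}$, and transport the inductively obtained experiments $\mathcal{T}_{B_R(o)}(e_o)$, for $e_o \in \supp{\boxes{e}(o)}$, into the corresponding copies). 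Your bookkeeping of how the elements of $\overline{e}(o)$ are matched with the elements of $\supp{\boxes{e}(o)}$ together with their multiplicities, and your verification of the exponential-port clause, are details that the paper's own proof states only implicitly.
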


\begin{proof}
We prove, by induction on $\depthof{R}$, that, for any in-PS $R$, for any experiment $e$ on $R$, there exists an experiment $\mathcal{T}_R(e)$ on $\termofTaylor{R}{\overline{e}}{0}$ such that  
\begin{itemize}
\item for any $p \in \portsatzero{R}$, we have $\ports{e}(p) = \ports{\mathcal{T}_R(e)}(p)$;
\item and, for any $p \in \portsatdepthgreater{R}{0}$, we have $\ports{e}(p) = \sum_{\substack{q \in \ports{\termofTaylor{R}{\overline{e}}{0}}\\ \isaCopyfrom{R}{e}{0}(q) = p}} \multi{\ports{\mathcal{T}_R(e)}(q)}$.
\end{itemize}
One can take for $\mathcal{T}_R(e)$ the following experiment on $\termofTaylor{R}{\overline{e}}{0}$:
\begin{itemize}
\item for any $p \in \portsatzero{R}$, $\ports{\mathcal{T}_R(e)}(p) = \ports{e}(p)$;
\item for any $o \in \boxesatzero{R}$, for any $e_o \in \supp{\boxes{e}(o)}$, for any $ p \in \portsatzero{\termofTaylor{B_R(o)}{\overline{e_o}}{0}}$,
  \begin{align*}
   \ports{\mathcal{T}_R(e)}((o, (e_o, p)) = \ports{\mathcal{T}_{B_R(o)}(e_o)}(p).
\tag*{\qedhere}
  \end{align*}
\end{itemize}
\end{proof}

\begin{defi}
Let $k > 1$. An experiment $e$ on some in-PS $R$ is said to be \emph{$k$-heterogeneous} if the pseudo-experiment $\overline{e}$ is $k$-heterogeneous.
\end{defi}

\begin{defi}\label{definition: k-heterogeneous point}
Let $k > 1$. 
For any function $x: \mathcal{P} \to D_{\mathcal{A}}$, where $\mathcal{P}$ is any set, we say that $x$ is \emph{$k$-heterogeneous} if the following properties hold:
\begin{itemize}
\item for any multiset $a$, the pair $(+, a)$ occurs at most once in $\im{x}$ and, if it occurs, then there exists $j > 0$ such that the cardinality of $a$ is $k^j$;
\item for any multisets $a_1$ and $a_2$ having the same cardinality such that the pairs $(+, a_1)$ and $(+, a_2)$ occur in $\im{x}$, we have $a_1 = a_2$.
\end{itemize}
\end{defi}

\begin{lem}\label{lem: characterization of k-heterogeneous experiments}
Let $R$ be a cut-free in-PS. Let $k > 1$. For any experiment $e$ on $R$, if $\restriction{\ports{e}}{\conclusions{R}}$ is $k$-heterogeneous, then $e$ is $k$-heterogeneous. Conversely, if $e$ is an atomic $k$-heterogeneous experiment on $R$, then $\restriction{\ports{e}}{\conclusions{R}}$ is $k$-heterogeneous.
\end{lem}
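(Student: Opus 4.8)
The plan is to route both implications through the fully expanded Taylor term $T := \termofTaylor{R}{\overline{e}}{0}$, using the experiment $\mathcal{T}_R(e)$ furnished by Lemma~\ref{lem: atomic experiment induces atomic experiment on Taylor expansion}: this is an experiment on $T$ — atomic whenever $e$ is — with $\restriction{\ports{\mathcal{T}_R(e)}}{\conclusions{T}} = \restriction{\ports{e}}{\conclusions{R}} =: x$. Since $R$ is cut-free, so is $T$, and $T$ has depth $0$; hence for any $\cod$-port $p$ of $T$ one has $\ports{\mathcal{T}_R(e)}(p) = (+, \sum_{\target{\groundof{T}}(w)=p}[\ports{\mathcal{T}_R(e)}(w)])$, so the cardinality of this positive multiset equals $\arity{T}(p)$. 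The whole argument then reduces to relating the arities of the co-contractions of $T$, which are governed by Corollary~\ref{cor: characterization of k-heterogeneous experiments}, to the positive multisets occurring in $x$, which are governed by Definition~\ref{definition: k-heterogeneous point}.

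The first thing I would establish is the structural dictionary between the co-contractions of $T$ and the positive multisets of $x$. Because $T$ is cut-free of depth $0$, every port lies below some conclusion, so the label of every port occurs, at a position determined by that port, as a subterm of the value of a conclusion, and distinct ports yield distinct such positions. Consequently $p \mapsto \ports{\mathcal{T}_R(e)}(p)$ sends $\portsatzerooftype{\cod}{T}$ injectively to positive multisets occurring in $x$, each occurrence having cardinality $\arity{T}(p)$. When $e$, hence $\mathcal{T}_R(e)$, is atomic this map is a bijection onto \emph{all} positive-multiset occurrences of $x$: the only other possible sources of a subterm of shape $(+, a)$ are excluded, namely axioms (atomic experiments label them by elements of $\{+,-\}\times\mathcal{A}$, see Definition~\ref{definition: atomic experiment}), cuts ($T$ is cut-free), units $(\pm,\ast)$, and $\tens/\parr$-ports (which produce pairs, never multisets).

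With this dictionary both directions are short. For the forward implication, assume $x$ is $k$-heterogeneous; by Corollary~\ref{cor: characterization of k-heterogeneous experiments} it suffices to check that $\arity{T}[\portsatzerooftype{\cod}{T}]\subseteq\{k^j; j>0\}$ and that distinct co-contractions have distinct arities. Each $\cod$-port $p$ yields a positive multiset occurring in $x$ of cardinality $\arity{T}(p)$, which the first clause of Definition~\ref{definition: k-heterogeneous point} forces to be a positive power of $k$; and if $p_1\neq p_2$ had $\arity{T}(p_1)=\arity{T}(p_2)$, the two occurring multisets $(+,a_1),(+,a_2)$ would have equal cardinality, hence $a_1=a_2$ by the second clause, so $(+,a_1)$ would occur at the two distinct positions of $p_1,p_2$, contradicting the first clause. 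Thus $\overline{e}$, and so $e$, is $k$-heterogeneous. For the converse, assume $e$ is atomic and $k$-heterogeneous, so $\overline{e}$ is a $k$-heterogeneous pseudo-experiment; by Corollary~\ref{cor: characterization of k-heterogeneous experiments} the arities of the co-contractions of $T$ lie in $\{k^j; j>0\}$ and are pairwise distinct. Using the bijection of the previous paragraph (valid by atomicity), every positive multiset occurring in $x$ is $\ports{\mathcal{T}_R(e)}(p)$ for a unique $\cod$-port $p$ with cardinality $\arity{T}(p)$, and the two clauses of Definition~\ref{definition: k-heterogeneous point} then follow directly from properties~\ref{property: arities are powers of k} and~\ref{property: arities are distinct} of that corollary.

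The main obstacle is precisely the structural dictionary of the second paragraph: making rigorous, by induction on $\depthof{R}$ (equivalently, on the depth-$0$ net $T$), that in a cut-free net every port's label propagates to a well-defined subterm-position of some conclusion value, that distinct ports give distinct positions, and — under atomicity — that no positive multiset enters $x$ except as the label of a co-contraction. One also has to fix the reading of ``$(+,a)$ occurs at most once in $\im{x}$'' as a count of subterm-occurrences within the values $x(p)$ (two box-copies sitting under the same shallow contraction contribute two occurrences inside one value). Everything after that is bookkeeping around Corollary~\ref{cor: characterization of k-heterogeneous experiments} (itself resting on Lemma~\ref{lem: M_0(e)}) and the elementary identity that the cardinality of a $\cod$-label equals its arity.
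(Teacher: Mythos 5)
Your proof is correct and follows essentially the same route as the paper's: both pass through $\termofTaylor{R}{\overline{e}}{0}$ via Lemma~\ref{lem: atomic experiment induces atomic experiment on Taylor expansion} and then invoke Corollary~\ref{cor: characterization of k-heterogeneous experiments} to translate between $k$-heterogeneity of the point and the arity conditions on co-contractions. The only difference is one of explicitness: your ``structural dictionary'' spells out the correspondence between co-contractions and positive-multiset occurrences that the paper compresses into the phrases ``Since $\termofTaylor{R}{\overline{e}}{0}$ is cut-free'' and ``Since $\mathcal{T}_R(e)$ is atomic.''
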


\begin{proof}
Let $e$ be an experiment on $R$ such that $\restriction{\ports{e}}{\conclusions{R}}$ is $k$-heterogeneous. By Lemma~\ref{lem: atomic experiment induces atomic experiment on Taylor expansion}, there exists an experiment $\mathcal{T}_R(e)$ on $\termofTaylor{R}{\overline{e}}{0}$ such that $\restriction{\ports{\mathcal{T}_R(e)}}{\conclusions{\termofTaylor{R}{\overline{e}}{0}}} = $ $\restriction{\ports{e}}{\conclusions{R}}$. Since $\termofTaylor{R}{\overline{e}}{0}$ is cut-free, we have:
\begin{enumerate}
\item $\arity{\termofTaylor{R}{\overline{e}}{0}}[\portsatzerooftype{\cod}{\termofTaylor{R}{\overline{e}}{0}}] \subseteq \{ k^j ; j > 0 \}$
\item $(\forall p_1, p_2 \in \portsatzerooftype{\cod}{\termofTaylor{R}{\overline{e}}{0}}) (\arity{\termofTaylor{R}{\overline{e}}{0}}(p_1) = \arity{\termofTaylor{R}{\overline{e}}{0}}(p_2) \Rightarrow p_1 = p_2)$
\end{enumerate}
By Corollary~\ref{cor: characterization of k-heterogeneous experiments}, the pseudo-experiment $\overline{e}$ is $k$-heterogeneous.

Conversely, let $e$ be an atomic $k$-heterogeneous experiment on $R$. Then, by Lemma~\ref{lem: atomic experiment induces atomic experiment on Taylor expansion}, there exists an atomic experiment $\mathcal{T}_R(e)$ of $\termofTaylor{R}{\overline{e}}{0}$ such that $\restriction{\ports{\mathcal{T}_R(e)}}{\conclusions{\termofTaylor{R}{\overline{e}}{0}}} = \restriction{\ports{e}}{\conclusions{R}}$. By Corollary~\ref{cor: characterization of k-heterogeneous experiments}, we have:
\begin{enumerate}
\item $\arity{\termofTaylor{R}{\overline{e}}{0}}[\portsatzerooftype{\cod}{\termofTaylor{R}{\overline{e}}{0}}] \subseteq \{ k^j ; j > 0 \}$
\item $(\forall p_1, p_2 \in \portsatzerooftype{\cod}{\termofTaylor{R}{\overline{e}}{0}}) (\arity{\termofTaylor{R}{\overline{e}}{0}}(p_1) = \arity{\termofTaylor{R}{\overline{e}}{0}}(p_2) \Rightarrow p_1 = p_2)$
\end{enumerate}
Since $\mathcal{T}_R(e)$ is atomic, $\restriction{\ports{\mathcal{T}_R(e)}}{\conclusions{\termofTaylor{R}{\overline{e}}{0}}}$ is $k$-heterogeneous.
\end{proof}

\begin{lem}\label{lem: Taylor expansion}
Let $R$ and $R'$ be two cut-free PS's such that $\conclusions{R} = \conclusions{R'}$. Let $e$ be an injective atomic experiment on $R$ and let $e'$ be an injective atomic experiment on $R'$ such that $\restriction{\ports{e}}{\conclusions{R}} = \restriction{\ports{e'}}{\conclusions{R'}}$. Then $\termofTaylor{R}{\overline{e}}{0} \equiv \termofTaylor{R'}{\overline{e'}}{0}$.
\end{lem}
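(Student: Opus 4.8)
The plan is to reduce the claim to the depth-$0$ injectivity statement Fact~\ref{fact: injectivity for depth 0}, by transporting the experiments $e$ and $e'$ onto the corresponding terms of the Taylor expansion and checking that the transported experiments share one and the same $\mathcal{A}$-injective result on conclusions.

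First I would apply Lemma~\ref{lem: atomic experiment induces atomic experiment on Taylor expansion}: since $e$ is an atomic experiment on $R$, it produces an atomic experiment $\mathcal{T}_R(e)$ on $\termofTaylor{R}{\overline{e}}{0}$ with $\restriction{\ports{\mathcal{T}_R(e)}}{\conclusions{\termofTaylor{R}{\overline{e}}{0}}} = \restriction{\ports{e}}{\conclusions{R}}$, and symmetrically an atomic experiment $\mathcal{T}_{R'}(e')$ on $\termofTaylor{R'}{\overline{e'}}{0}$ whose result on conclusions is $\restriction{\ports{e'}}{\conclusions{R'}}$. Next I would record that both $\termofTaylor{R}{\overline{e}}{0}$ and $\termofTaylor{R'}{\overline{e'}}{0}$ are cut-free differential PS's of depth $0$: cut-freeness follows from the cut-freeness of $R$ (resp.\ $R'$) noted after Definition~\ref{defin: Taylor}, the depth is $\min\{0, \depthof{R}\} = 0$, and they are differential PS's because $R$ and $R'$ are PS's.

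The one point deserving care is the bookkeeping identity $\conclusions{\termofTaylor{R}{\overline{e}}{0}} = \conclusions{R}$ with matching port names. Since $R$ is a cut-free PS, all its conclusions are shallow, i.e.\ lie in $\portsatzero{R} = \ports{R^{\leq 0}}$, on which $\isaCopyfrom{R}{\overline{e}}{0}$ is the identity (by the defining clause of Definition~\ref{defin: Taylor}). Hence Fact~\ref{fact: conclusions} gives, for $p \in \conclusions{R}$, that $p \in \conclusions{\termofTaylor{R}{\overline{e}}{0}}$; conversely any conclusion $q$ of $\termofTaylor{R}{\overline{e}}{0}$ satisfies $\isaCopyfrom{R}{\overline{e}}{0}(q) \in \conclusions{R} \subseteq \portsatzero{R}$, which forces $q \in \portsatzero{R}$ and $\isaCopyfrom{R}{\overline{e}}{0}(q) = q$. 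Thus $\conclusions{\termofTaylor{R}{\overline{e}}{0}} = \conclusions{R}$ as sets of named ports, and likewise $\conclusions{\termofTaylor{R'}{\overline{e'}}{0}} = \conclusions{R'}$. Combining this with $\conclusions{R} = \conclusions{R'}$ and the hypothesis $\restriction{\ports{e}}{\conclusions{R}} = \restriction{\ports{e'}}{\conclusions{R'}}$, the two transported experiments $\mathcal{T}_R(e)$ and $\mathcal{T}_{R'}(e')$ have literally the same result on conclusions, and this result is $\mathcal{A}$-injective since $e$ (and hence $e'$) is injective in the sense of Definition~\ref{defin: injective}.

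At that stage all hypotheses of Fact~\ref{fact: injectivity for depth 0} are satisfied — two cut-free differential PS's of depth $0$ equipped with atomic experiments whose common result on conclusions is $\mathcal{A}$-injective — so the fact yields directly $\termofTaylor{R}{\overline{e}}{0} \equiv \termofTaylor{R'}{\overline{e'}}{0}$, which is the conclusion. I do not anticipate a genuine obstacle: the combinatorial work is entirely absorbed by Lemma~\ref{lem: atomic experiment induces atomic experiment on Taylor expansion} and Fact~\ref{fact: injectivity for depth 0}, and the only delicate step is the name-matching of conclusions, which is a routine consequence of the cut-free PS hypothesis and Fact~\ref{fact: conclusions}.
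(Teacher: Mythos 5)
Your proposal is correct and follows exactly the paper's own proof: transport $e$ and $e'$ to atomic experiments on $\termofTaylor{R}{\overline{e}}{0}$ and $\termofTaylor{R'}{\overline{e'}}{0}$ via Lemma~\ref{lem: atomic experiment induces atomic experiment on Taylor expansion}, then conclude with Fact~\ref{fact: injectivity for depth 0}. The only difference is that you spell out the routine verifications (cut-freeness, depth $0$, name-matching of conclusions via Fact~\ref{fact: conclusions}, and $\mathcal{A}$-injectivity of the common result) that the paper leaves implicit.
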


\begin{proof}
By Lemma~\ref{lem: atomic experiment induces atomic experiment on Taylor expansion}, there exists an atomic experiment $\mathcal{T}_R(e)$ on $\termofTaylor{R}{\overline{e}}{0}$ such that $\restriction{\ports{\mathcal{T}_R(e)}}{\conclusions{\termofTaylor{R}{\overline{e}}{0}}} = \restriction{\ports{e}}{\conclusions{R}}$ and an atomic experiment $\mathcal{T}_{R'}(e')$ on $\termofTaylor{R'}{\overline{e'}}{0}$ such that $\restriction{\ports{\mathcal{T}_{R'}(e')}}{\conclusions{\termofTaylor{R'}{\overline{e'}}{0}}} = \restriction{\ports{e'}}{\conclusions{R'}}$. We apply Fact~\ref{fact: injectivity for depth 0}.
\end{proof}

\begin{thm}\label{thm: injectivity in the untyped framework}
Let $R$ be a cut-free PS. If the set $\mathcal{A}$ is infinite, then there exists an $\mathcal{A}$-injective subset $D_0$ of ${\sm{R}}_{\textit{At}}$ with $\Card{D_0} = 2$ such that, for any cut-free PS $R'$ with $\conclusions{R} = \conclusions{R'}$, one has $(D_0 \subseteq {\sm{R'}}_{\textit{At}} \Rightarrow R \equiv R')$.
\end{thm}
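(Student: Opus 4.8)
The plan is to transport the invertibility of Taylor expansion (the machinery culminating in Proposition~\ref{prop: from i to i+1}) from the level of terms of the Taylor expansion down to the level of points of the relational model, using the dictionary provided by Lemma~\ref{lem: atomic experiment induces atomic experiment on Taylor expansion}, Lemma~\ref{lem: Taylor expansion} and Fact~\ref{fact: from point to experiment}. Concretely, I would take $D_0 = \{\alpha,\beta\}$ where $\alpha$ is an $\mathcal A$-injective atomic $1$-point and $\beta$ is an $\mathcal A$-injective atomic $k$-heterogeneous point (for some fixed $k \geq \basis R$), both lying in $\sm R_{\textit{At}}$. The reason for choosing one $1$-point and one $k$-heterogeneous point is exactly the one already exploited in the proof of Theorem~\ref{theorem: finite characterization}: the first controls $\basis{R'}$, while the second drives the rebuilding algorithm.

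To build these two points, I would first fix, using that $\mathcal A$ is infinite, an injective atomic $1$-experiment and an injective atomic $k$-heterogeneous experiment on $R$ (with $k \geq \basis R$); by Definition~\ref{defin: injective} their results are $\mathcal A$-injective, and by Lemma~\ref{lem: characterization of k-heterogeneous experiments} (resp.\ the analogous elementary computation reading co-contraction arities off via Lemma~\ref{lem: M_0(e)}) these results are a $k$-heterogeneous point (resp.\ a $1$-point). These results need not be atomic, so I would replace each of them by an atomic pre-image provided by Fact~\ref{fact: atomic are enough}; since a substitution $\sigma$ leaves the cardinality of every positive multiset already present unchanged and only creates new, deeper ones, the atomic pre-images are still a $1$-point and a $k$-heterogeneous point, and they remain $\mathcal A$-injective. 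Calling these $\alpha$ and $\beta$ gives the sought elements of $\sm R_{\textit{At}}$.

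Now let $R'$ be a cut-free PS with $\conclusions{R'} = \conclusions R$ and $\{\alpha,\beta\} \subseteq \sm{R'}_{\textit{At}}$. Since $\mathcal A$ is infinite, Fact~\ref{fact: from point to experiment} yields atomic experiments $e_1^R,e^R$ on $R$ and $e_1^{R'},e^{R'}$ on $R'$ whose results are $\alpha$ (first of each pair) and $\beta$ (second of each pair); all four are injective because $\alpha,\beta$ are $\mathcal A$-injective. By Lemma~\ref{lem: Taylor expansion}, $\termofTaylor{R}{\overline{e_1^R}}{0} \equiv \termofTaylor{R'}{\overline{e_1^{R'}}}{0}$ and $\termofTaylor{R}{\overline{e^R}}{0} \equiv \termofTaylor{R'}{\overline{e^{R'}}}{0}$. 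Since $\beta$ is $k$-heterogeneous, Lemma~\ref{lem: characterization of k-heterogeneous experiments} turns $\overline{e^R}$ and $\overline{e^{R'}}$ into $k$-heterogeneous pseudo-experiments; since $\alpha$ is a $1$-point, the same kind of argument turns $\overline{e_1^R}$ and $\overline{e_1^{R'}}$ into $1$-experiments. As in the proof of Theorem~\ref{theorem: finite characterization}, $\cosize{-}$, $\Card{\boxes{-}}$ and $\numberInvisibleComponents{-}$ of a PS coincide with those of its $1$-experiment Taylor term and are invariant under $\equiv$; hence $\basis{R'} = \basis R \leq k$. I can therefore apply Proposition~\ref{prop: from i to i+1} to $\overline{e^R}$ and $\overline{e^{R'}}$, getting $\termofTaylor{R}{\overline{e^R}}{i} \equiv \termofTaylor{R'}{\overline{e^{R'}}}{i}$ for every $i$. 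As a $k$-heterogeneous pseudo-experiment is exhaustive (each box is copied $k^j > 0$ times), taking $i = \max\{\depthof R,\depthof{R'}\}$ gives $\termofTaylor{R}{\overline{e^R}}{i} \equiv R$ and $\termofTaylor{R'}{\overline{e^{R'}}}{i} \equiv R'$, whence $R \equiv R'$.

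The main obstacle is the second paragraph: producing two points that are simultaneously atomic, $\mathcal A$-injective, and of the prescribed cardinality shape. The delicate direction is that passing to an atomic pre-image must preserve $\mathcal A$-injectivity; here I would argue that if $\sigma\cdot x$ is $\mathcal A$-injective then $\sigma$ can neither identify two occurrences of a signed atom of $x$ nor send an atom to a structure containing an atom, and combine this with the size estimates behind Fact~\ref{fact: st renaming => t renaming} to exclude a repeated signed atom in the pre-image. A secondary but genuinely needed ingredient is the ``$1$-point $\Rightarrow$ $1$-experiment'' statement used to pin down $\basis{R'}$: it is the exact analogue of Lemma~\ref{lem: characterization of k-heterogeneous experiments}, proved by transporting the point along Lemma~\ref{lem: atomic experiment induces atomic experiment on Taylor expansion} and reading off from Lemma~\ref{lem: M_0(e)} that every co-contraction of the cut-free term $\termofTaylor{R}{\overline{e_1^R}}{0}$ has arity $1$ (downward connectivity in a cut-free term making each co-contraction arity visible in the result).
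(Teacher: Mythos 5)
Your proposal is correct and is essentially the paper's own proof: there too, $D_0$ consists of the atomic pre-images (Fact~\ref{fact: atomic are enough}) of the results of an injective $1$-experiment and of an injective $k$-heterogeneous experiment on $R$ (with the same observation that atomization preserves $\mathcal{A}$-injectivity and the $1$-point/$k$-heterogeneous shape), which are then realized on $R'$ by atomic experiments (Fact~\ref{fact: from point to experiment}), identified at the level of Taylor terms (Lemma~\ref{lem: Taylor expansion}), recognized as $1$- and $k$-heterogeneous (Lemma~\ref{lem: characterization of k-heterogeneous experiments}), used to control $\basis{R'}$ (with $\numberInvisibleComponents{R'}=0$ by cut-freeness), and finally fed into Proposition~\ref{prop: from i to i+1}. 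The only cosmetic difference is the choice of $k$: the paper reads it off the atomic $1$-point $x_1$ itself (taking $k$ larger than the maximal cardinality of its negative multisets and the number of its positive multisets, which equal $\cosize{R}$ and $\Card{\boxes{R}}$), whereas you fix $k \geq \basis{R}$ directly and transport the three invariants through the $\equiv$-equivalence of the $1$-experiment Taylor terms, exactly as in the proof of Theorem~\ref{theorem: finite characterization}.
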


\begin{proof}
Let $f$ be an injective $1$-experiment on $R$ (its existence is ensured by the assumption that the set $\mathcal{A}$ is infinite). By Fact~\ref{fact: atomic are enough}, there exists $x_1 \in {\sm{R}}_{\textit{At}}$ and $\sigma_1: {\mathcal{A}} \to D_{\mathcal{A}}$ such that $\sigma_1 \cdot x_1 = \restriction{f}{\conclusions{R}}$; the point $x_1$ is $\mathcal{A}$-injective too. Let $k_1 \in \Nat$ be the greatest cardinality of the negative multisets that occur in $\im{x_1}$. Let $k_2 \in \Nat$ be the number of occurrences of positive multisets in $\im{x_1}$. Let $k > \max{\{ k_1, k_2 \}}$ and let $e$ be an injective $k$-heterogeneous experiment on $R$ (its existence is ensured by the assumption that the set $\mathcal{A}$ is infinite). By Fact~\ref{fact: atomic are enough}, there exist $x \in {\sm{R}}_{\textit{At}}$ and $\sigma: {\mathcal{A}} \to D_{\mathcal{A}}$ such that $\sigma \cdot x = \restriction{e}{\conclusions{R}}$. We can take $D_0 = \{ x_1, x \}$. Indeed: 

Since the point $\restriction{e}{\conclusions{R}}$ is $\mathcal{A}$-injective, the point $x$ is $\mathcal{A}$-injective too. By Lemma~\ref{lem: characterization of k-heterogeneous experiments}, $\restriction{e}{\conclusions{R}}$ is $k$-heterogeneous, hence $x$ too. Let $R'$ be a cut-free PS such that $\conclusions{R} = \conclusions{R'}$ and $D_0 \subseteq {\sm{R'}}_{\textit{At}}$. By Fact~\ref{fact: from point to experiment}, there exist an atomic experiment $f_0$ (resp. $f_0'$) on $R$ (resp. of $R'$) such that $\restriction{\ports{f_0}}{\conclusions{R}} = x_1$ (resp. $\restriction{\ports{f_0'}}{\conclusions{R'}} = x_1$) and an atomic experiment $e_0$ (resp. $e_0'$) on $R$ (resp. of $R'$) such that $\restriction{\ports{e_0}}{\conclusions{R}} = x$ (resp. $\restriction{\ports{e_0'}}{\conclusions{R'}} = x$). By Lemma~\ref{lem: Taylor expansion}, we have $\termofTaylor{R}{\overline{e_0}}{0} \equiv \termofTaylor{R'}{\overline{e_0'}}{0}$. 

Moreover, the experiments $f_0$ and $f_0'$ are $1$-experiments on $R$ and $R'$ respectively (all the positive multisets of $x_1$ have cardinality $1$). We have:
\begin{itemize}
\item $\cosize{R} = k_1 = \cosize{R'}$;
\item $\Card{\boxes{R}} = k_2 = \Card{\boxes{R'}}$;
\item and $\numberInvisibleComponents{R} = 0 = \numberInvisibleComponents{R}$ (because $R$ and $R'$ are cut-free).
\end{itemize}
We thus have $k \geq \max{\{ \basis{R}, \basis{R'} \}}$. Finally, by Lemma~\ref{lem: characterization of k-heterogeneous experiments}, the experiments $e_0$ and $e_0'$ are $k$-heterogeneous. We can then apply Proposition~\ref{prop: from i to i+1} to obtain $R \equiv R'$. 
\end{proof}

\begin{cor}\label{cor: injectivity in the untyped framework}
Let $R$ and $R'$ be two cut-free PS's such that $\conclusions{R} = \conclusions{R'}$. If the set $\mathcal{A}$ is infinite, then one has $(R \equiv R' \Leftrightarrow {\sm{R}}_{\textit{At}} = {\sm{R'}}_{\textit{At}})$.
\end{cor}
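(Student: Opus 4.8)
The plan is to derive this equivalence from Theorem~\ref{thm: injectivity in the untyped framework}, treating the two implications separately. All the genuine mathematical content sits in the right-to-left direction, which the theorem already packages; the left-to-right direction is a soundness statement that I would verify directly.

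For the implication $R \equiv R' \Rightarrow {\sm{R}}_{\textit{At}} = {\sm{R'}}_{\textit{At}}$, I would first establish that $\sm{R} = \sm{R'}$. Fix $\varphi : R \equiv R'$, which by Definition~\ref{defin: isomorphism} comes equipped with an isomorphism $\groundof{\varphi}$ of ground-structures and, for each $o \in \boxesatzero{R}$, an isomorphism $\varphi_o$ of box contents. Given an experiment $e$ on $R$ (Definition~\ref{defin: experiment in an untyped framework}), I would transport it along $\varphi$, by induction on $\depthof{R}$, to an experiment $e'$ on $R'$: set $\ports{e'}(\varphi(p)) = \ports{e}(p)$ for $p \in \portsatzero{R}$ and let $\boxes{e'}(\varphi(o))$ be the image under $\varphi_o$ of the multiset $\boxes{e}(o)$. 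One checks clause by clause that each defining condition of an experiment is preserved, since $\varphi$ respects labels, wires, the left/right distinction, axioms, cuts, targets, and the box structure. Because $R$ is a PS we have $\conclusions{R} \subseteq \conclusions{\groundof{R}}$, and $\varphi$ fixes every port of $\conclusions{\groundof{R}}$, so $\varphi$ restricts to the identity on $\conclusions{R} = \conclusions{R'}$; hence $\restriction{\ports{e'}}{\conclusions{R'}} = \restriction{\ports{e}}{\conclusions{R}}$, giving $\sm{R} \subseteq \sm{R'}$. Applying the same construction to $\varphi^{-1}$ yields the reverse inclusion, so $\sm{R} = \sm{R'}$.

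Now $\sm{R}$ and $\sm{R'}$ are the \emph{same} subset of $(D_{\mathcal{A}})^{\conclusions{R}}$ (using $\conclusions{R} = \conclusions{R'}$). Since the notion of $\mathcal{D}$-atomic element in Definition~\ref{definition: atomic point} refers only to the ambient set $\mathcal{D}$ and the action of substitutions, the atomic subsets of two equal ambient sets coincide; thus ${\sm{R}}_{\textit{At}} = {\sm{R'}}_{\textit{At}}$, completing this direction. For the converse, I would invoke Theorem~\ref{thm: injectivity in the untyped framework} directly: as $\mathcal{A}$ is infinite, it provides an $\mathcal{A}$-injective set $D_0 \subseteq {\sm{R}}_{\textit{At}}$ with $\Card{D_0} = 2$ such that $D_0 \subseteq {\sm{R'}}_{\textit{At}} \Rightarrow R \equiv R'$ for every cut-free PS $R'$ with $\conclusions{R} = \conclusions{R'}$. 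Assuming ${\sm{R}}_{\textit{At}} = {\sm{R'}}_{\textit{At}}$, we have $D_0 \subseteq {\sm{R}}_{\textit{At}} = {\sm{R'}}_{\textit{At}}$, and the theorem immediately yields $R \equiv R'$.

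I do not expect a serious obstacle here, precisely because the difficulty has already been discharged inside Theorem~\ref{thm: injectivity in the untyped framework} (and ultimately in Proposition~\ref{prop: from i to i+1}). The only new verification is the transport of experiments along an isomorphism in the forward direction, which is routine but should be carried out by induction on depth so that the box components are handled via the $\varphi_o$; the one conceptual point worth stating explicitly is that the operator $\mathcal{D} \mapsto \mathcal{D}_{\textit{At}}$ depends on $\mathcal{D}$ alone, so equal interpretations force equal atomic interpretations.
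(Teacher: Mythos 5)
Your proposal is correct and matches the paper's (implicit) argument: the paper states this corollary without proof, treating it as immediate from Theorem~\ref{thm: injectivity in the untyped framework}, which is exactly how you handle the right-to-left direction. Your left-to-right direction --- transporting experiments along $\varphi : R \equiv R'$ (which fixes $\conclusions{R}$ pointwise since $\conclusions{R} \subseteq \conclusions{\groundof{R}}$ for a PS) to get $\sm{R} = \sm{R'}$, and then observing that $\mathcal{D} \mapsto \mathcal{D}_{\textit{At}}$ depends only on $\mathcal{D}$ --- is precisely the routine soundness verification the paper leaves to the reader.
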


\subsection{Typed framework}\label{subsection: Typed}

We want to apply Theorem~\ref{thm: injectivity in the untyped framework} to obtain a similar result for typed PS's (Theorem~\ref{theorem: typed framework}). For that, we need to relate $\sm{(R, \mathsf{T})}$ to ${\sm{R}}_{\textit{At}}$; it is the role of Lemma~\ref{lem: relating typed semantics and untyped semantics}.

\begin{defi}\label{defin: experiment}
We assume that we are given a set $\sm{X}$ for each $X \in \propvar$. Then, for any $C \in \mathbb{T}$, we define, by induction on $C$, the set $\sm{C}$ as follows: $\sm{1} = \{ \ast \} = \sm{\bot}$; $\sm{(C_1 \tens C_2)} = \sm{C_1} \times \sm{C_2} = \sm{(C_1 \parr C_2)}$; $\sm{\cod C} = \finitemultisets{\sm{C}} = \sm{\contr C}$.

Let $(R, \mathsf{T})$ be a typed differential in-PS. We define, by induction on $\textit{depth}(R)$, the set of \emph{experiments on $(R, \mathsf{T})$}: it is the set of pairs $e = (\ports{e}, \boxes{e})$, where 
\begin{itemize}
\item $\ports{e}$ is a function that associates with every $p \in \portsatzero{R}$ an element of $\sm{\mathsf{T}(p)}$ and with every $p \in \portsatdepthgreater{R}{0}$ an element of $\finitemultisets{\sm{\mathsf{T}(p)}}$,
\item and $\boxes{e}$ is a function which associates with every $o \in \boxesatzero{R}$ a finite multiset of experiments on $(B_R(o), \restriction{\mathsf{T}}{\ports{B_R(o)}})$ 
\end{itemize}
such that
\begin{itemize}
\item for any $p \in \multiplicativeportsatzero{R}$, for any $w_1, w_2 \in \wiresatzero{R}$ such that $\target{\groundof{R}}(w_1) = p = \target{\groundof{R}}(w_2)$, $w_1 \in \leftwires{\groundof{R}}$ and $w_2 \notin \leftwires{\groundof{R}}$, we have $\ports{e}(p) = (\ports{e}(w_1), \ports{e}(w_2))$;
\item for any $\{ p, p' \} \in \axiomsatzero{R} \cup \cutsatzero{R}$, we have $\ports{e}(p) = \ports{e}(p')$;
\item for any $p \in \exponentialportsatzero{R}$,
we have $\ports{e}(p) = \sum_{\substack{w \in \wiresatzero{R}\\ \target{\groundof{R}}(w) = p}} [\ports{e}(w)] +  \sum_{o \in \boxesatzero{R}} \sum_{e_o \in \supp{\boxes{e}(o)}}$ $\sum_{\substack{q \in \conclusions{B_R(o)}\\ \target{R}(o, q) = p}} \boxes{e}(o)(e_o) \cdot \ports{e_o}(q)$;
\item for any $o \in \boxesatzero{R}$, for any $p \in \portsatzero{B_R(o)}$, we have $\ports{e}(p) = \sum_{e_o \in \supp{\boxes{e}(o)}} \boxes{e}(o)(e_o) \cdot \multi{\ports{e_o}(p)}$;
\item for any $o \in \boxesatzero{R}$, for any $p \in \portsatdepthgreater{B_R(o)}{0}$, we have $\ports{e}(p) = \sum_{e_o \in \supp{\boxes{e}(o)}} \boxes{e}(o)(e_o) \cdot \ports{e_o}(p)$.
\end{itemize}
For any experiment $e = ((R, \mathsf{T}), \ports{e}, \boxes{e})$, we set $\ports{e} = \ports{e}$ and $\boxes{e} = \boxes{e}$. We set $\sm{(R, \mathsf{T})} =$ $\{ \restriction{\ports{e}}{\conclusions{R}} ;$ $e \textit{ is an experiment on } (R, \mathsf{T}) \}$.
\end{defi}

From now on, we assume that, for any $X \in \propvar$, the set $\sm{X}$ does not contain any couple nor any $3$-tuple and $\ast \notin \mathcal{A}$ and we assume that $\mathcal{A} = \bigcup_{X \in \propvar} \sm{X}$. We define, by induction on $n$, the sets ${\overline{D}}_{\mathcal{A}, n}$ for any $n \in \Nat$:
\begin{itemize}
\item $\overline{D}_{\mathcal{A}, 0} = \mathcal{A} \cup \{ \ast \}$
\item $\overline{D}_{\mathcal{A}, n+1} = \overline{D}_{\mathcal{A}, 0} \cup (\overline{D}_{\mathcal{A}, n} \times \overline{D}_{\mathcal{A}, n}) \cup \finitemultisets{\overline{D}_{\mathcal{A}, n}}$
\end{itemize}
We set $\overline{D}_\mathcal{A} = \bigcup_{n \in \Nat} \overline{D}_{\mathcal{A}, n}$. We define the function $U: D_{\mathcal{A}} \to \overline{D}_{\mathcal{A}}$ as follows:
\begin{itemize}
\item if $\alpha = (\delta, \gamma)$ with $\delta \in \{ +, - \}$ and $\gamma \in \mathcal{A} \cup \{ \ast \}$, then $U(\alpha) = \gamma$;
\item if $\alpha = (\delta, \alpha_1, \alpha_2)$ with $\delta \in \{ +, - \}$ and $\alpha_1, \alpha_2 \in D$, then $U(\delta, \alpha_1, \alpha_2) = (U(\alpha_1), U(\alpha_2))$;
\item if $\alpha = (\delta, \alpha_0)$ with $\delta \in \{ +, - \}$ and $\alpha_0 \in D$, then $U(\delta, \alpha_0) = U(\alpha_0)$.
\end{itemize}

\begin{defi}
Let $\sigma : {\mathcal{A}} \to \mathcal{A}$. For any $\alpha \in \overline{D}_{\mathcal{A}}$, we define $\sigma \cdot \alpha \in \overline{D}_{\mathcal{A}}$ as follows:
\begin{itemize}
\item if $\alpha \in {\mathcal{A}}$, then $\sigma \cdot \alpha = \sigma(\alpha)$;
\item if $\alpha = \ast$, then $\sigma \cdot \alpha = \alpha$;
\item if $\alpha_1, \alpha_2 \in \overline{D}_{\mathcal{A}}$, then $\sigma \cdot (\alpha_1, \alpha_2) = (\sigma \cdot \alpha_1, \sigma \cdot \alpha_2)$;
\item if $\alpha_1, \ldots, \alpha_m \in \overline{D}_{\mathcal{A}}$, then $\sigma \cdot [\alpha_1, \ldots, \alpha_m] = [\sigma \cdot \alpha_1, \ldots, \sigma \cdot \alpha_m]$.
\end{itemize}
For any set $\mathcal{P}$, for any function $x : \mathcal{P} \to \overline{D}_{\mathcal{A}}$, we define a function $\sigma \cdot x : \mathcal{P} \to \overline{D}_{\mathcal{A}}$ by setting: $(\sigma \cdot x)(p) = \sigma \cdot x(p)$ for any $p \in \mathcal{P}$.
\end{defi}

For any function $\sigma: \mathcal{A} \to D_{\mathcal{A}}$, we define the function $U(\sigma): \mathcal{A} \to \mathcal{A}$ as follows: $$U(\sigma)(\gamma) = \left\lbrace \begin{array}{ll} U(\sigma(\gamma)) & \textit{if $\sigma(\gamma) \in \{ +, - \} \times \mathcal{A}$;} \\ \gamma & \textit{otherwise.} \end{array} \right.$$

\begin{fact}\label{fact: substitutions compatible with U}
For any $\alpha \in D_{\mathcal{A}}$, for any $\sigma \in \mathfrak{R}(\alpha)$, we have $U(\sigma \cdot \alpha) = U(\sigma) \cdot U(\alpha)$.
\end{fact}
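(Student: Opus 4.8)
The plan is to prove the identity $U(\sigma \cdot \alpha) = U(\sigma) \cdot U(\alpha)$ by induction on $\alpha$ (equivalently, on $\height{\alpha}$), mirroring the three clauses defining $U$, the three clauses defining the action $\sigma \cdot (-)$ on $D_{\mathcal{A}}$, and the clauses defining the action $U(\sigma) \cdot (-)$ on $\overline{D}_{\mathcal{A}}$. Throughout, the renaming hypothesis $\sigma \in \mathfrak{R}(\alpha)$ will be used only in the atomic base case, and the observation driving the whole argument is that $U$ forgets the polarity tag, so that any sign---and in particular the duality $(-)^\perp$ that $\sigma \cdot (-)$ introduces on negative atoms---is invisible after applying $U$.

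For the base case, write $\alpha = (\delta, \gamma)$ with $\gamma \in \mathcal{A} \cup \{ \ast \}$. If $\gamma = \ast$, then $\sigma \cdot \alpha = \alpha$ and $U(\alpha) = \ast$, so both sides reduce to $\ast$. If $\gamma \in \mathcal{A}$, then $\gamma \in \atoms{\alpha}$, so the hypothesis $\sigma \in \mathfrak{R}(\alpha)$ gives $\sigma(\gamma) = (\epsilon, \gamma') \in \{ +, - \} \times \mathcal{A}$ for some sign $\epsilon$ and some $\gamma' \in \mathcal{A}$. When $\delta = +$ we have $\sigma \cdot \alpha = \sigma(\gamma) = (\epsilon, \gamma')$, and when $\delta = -$ we have $\sigma \cdot \alpha = \sigma(\gamma)^\perp = (\epsilon^\perp, \gamma')$; in either case $U(\sigma \cdot \alpha) = \gamma'$. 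On the other side, $U(\alpha) = \gamma$, and since $\sigma(\gamma) \in \{ +, - \} \times \mathcal{A}$ the definition of $U(\sigma)$ yields $U(\sigma)(\gamma) = U(\sigma(\gamma)) = \gamma'$, whence $U(\sigma) \cdot U(\alpha) = U(\sigma) \cdot \gamma = \gamma' = U(\sigma \cdot \alpha)$. This is precisely where the renaming hypothesis is indispensable: without $\sigma(\gamma) \in \{ +, - \} \times \mathcal{A}$, the clause for $U(\sigma)$ would return the default value $\gamma$ instead of $U(\sigma(\gamma))$, and the identity would fail.

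For the inductive step, consider $\alpha = (\delta, \alpha_1, \alpha_2)$ and $\alpha = (\delta, \multi{\alpha_1, \ldots, \alpha_m})$. In both cases $\atoms{\alpha} = \bigcup_i \atoms{\alpha_i}$, so $\sigma \in \mathfrak{R}(\alpha_i)$ for each $i$ and the induction hypothesis applies to every immediate subterm. Using that $\sigma \cdot (-)$, $U$, and $U(\sigma) \cdot (-)$ each commute with the pairing and multiset constructors---for instance $U(\sigma \cdot (\delta, \alpha_1, \alpha_2)) = U(\delta, \sigma \cdot \alpha_1, \sigma \cdot \alpha_2) = (U(\sigma \cdot \alpha_1), U(\sigma \cdot \alpha_2))$---the computation closes by rewriting each component with the induction hypothesis and reassembling. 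I expect no genuine obstacle here: the argument is a routine structural induction, and the only subtlety worth flagging is the sign bookkeeping in the negative-atom base case, where one must check explicitly that applying $(-)^\perp$ before $U$ gives the same result as not applying it, which holds because $U$ depends only on the underlying unsigned element.
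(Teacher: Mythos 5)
Your proof is correct, and it is exactly the routine structural induction on $\alpha$ that the paper leaves implicit: Fact~\ref{fact: substitutions compatible with U} is stated there without proof, in the same spirit as the one-line proof ``by induction on $\alpha$'' given for Fact~\ref{fact: substitutions compatible with duality}. Your treatment of the base case is the only place where care is needed, and you handle it correctly, noting both that the hypothesis $\sigma \in \mathfrak{R}(\alpha)$ is what forces $U(\sigma)(\gamma) = U(\sigma(\gamma))$ rather than the default value $\gamma$, and that the dualization $\sigma(\gamma)^\perp$ arising from negative atoms is erased by $U$ since $(\delta,\gamma)^\perp = (\delta^\perp,\gamma)$ only flips the sign.
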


\begin{fact}\label{fact: experiments on typable PS's}
Let $(R, \mathsf{T})$ be a typed in-PS. Then, for any experiment $e$ on $R$, for any $p \in \portsatzero{R}$, we have $\height{e(p)} \geq \height{\mathsf{T}(p)}$.
\end{fact}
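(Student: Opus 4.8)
The plan is to prove the inequality by a double induction: an outer induction on $\depthof{R}$ and, at each fixed depth, an inner induction on the structure of the type $\mathsf{T}(p)$ (equivalently, on $\height{\mathsf{T}(p)}$), following the shape that the experiment value $e(p)$ is forced to have by Definition~\ref{defin: experiment in an untyped framework}. The key observation is that the defining clauses of an experiment make the $D_{\mathcal{A}}$-structure of $e(p)$ mirror, top-down, the type-forming operation recorded by $\labelofport{\groundof{R}}(p)$: a $\tens$- or $\parr$-port produces a triple, an exponential port produces a tagged multiset. First I would dispose of the base case $\height{\mathsf{T}(p)} = 0$: then $\mathsf{T}(p)$ is a propositional variable, its dual, or a unit, and $\height{e(p)} \geq 0$ holds trivially. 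Note that a premise of a multiplicative or exponential port carries a type of strictly smaller height, while a premise reaching $p$ from inside a box lives at strictly smaller depth, so both inductions are well-founded.

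For the multiplicative case, suppose $p \in \multiplicativeportsatzero{R}$ with left premise $w_1$ and right premise $w_2$ (both in $\portsatzero{R}$). By the typing clause of Definition~\ref{defin: typed simple differential net}, $\mathsf{T}(p) = (\mathsf{T}(w_1) \tens \mathsf{T}(w_2))$ or $(\mathsf{T}(w_1) \parr \mathsf{T}(w_2))$, so $\height{\mathsf{T}(p)} = 1 + \max\{ \height{\mathsf{T}(w_1)}, \height{\mathsf{T}(w_2)} \}$, whereas the experiment clause gives $e(p) = (\delta, e(w_1), e(w_2))$, whence $\height{e(p)} = 1 + \max\{ \height{e(w_1)}, \height{e(w_2)} \}$. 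Since $w_1, w_2$ carry types of strictly smaller height, the inner induction hypothesis yields $\height{e(w_i)} \geq \height{\mathsf{T}(w_i)}$, and the inequality follows.

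The exponential case is the crux. Here $p \in \exponentialportsatzero{R}$, $\mathsf{T}(p) = \zeta C$ with $\zeta = \labelofport{\groundof{R}}(p)$, so $\height{\mathsf{T}(p)} = 1 + \height{C}$, and $e(p) = (\delta, M)$ where $M$ is the multiset aggregating $e(w)$ over premise wires $w \in \wiresatzero{R}$ with $\target{\groundof{R}}(w) = p$ and $\ports{e}(q)$ over ports $q \in \portsatdepthgreater{R}{0}$ with $\target{R}(q) = p$. It therefore suffices to exhibit a single element of $M$ of height at least $\height{C}$, which then forces $\height{e(p)} = 1 + \max_{\alpha \in M} \height{\alpha} \geq 1 + \height{C}$. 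I would locate such a witness as follows: any depth-$0$ premise wire $w$ has $\mathsf{T}(w) = C$ with $\height{C} < \height{\mathsf{T}(p)}$, so the inner induction hypothesis gives $\height{e(w)} \geq \height{C}$; and any premise $q = (o, q')$ coming from a box $o$ contributes, for each $e_o \in \supp{\boxes{e}(o)}$, the value $\ports{e_o}(q')$, to which the outer induction hypothesis applies, since $q'$ is a port at depth $0$ of the strictly shallower in-PS $B_R(o)$ with $\mathsf{T}(o, q') = C$, giving $\height{\ports{e_o}(q')} \geq \height{C}$.

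The main obstacle is precisely this exponential case. On the bookkeeping side, one must use the typing clause linking $\mathsf{T}(\target{R}(o, q)) = \zeta\, \mathsf{T}(o, q)$ to identify the type $C$ of every contributing premise correctly, and organize the two inductions (depth and type height) so that the appeal to the Fact on each box content $B_R(o)$ is legitimate. On the substantive side, the delicate point is to guarantee that $M$ indeed contains a witness of height $\geq \height{C}$: this is exactly where the interaction between the type below an exponential port and the premises feeding it (depth-$0$ wires as well as the conclusions of the boxes sitting above it) must be controlled, and it is the step that governs whether the bound propagates past exponential nodes. Once this case is settled, the remaining unary/constant clauses are immediate and the Fact follows.
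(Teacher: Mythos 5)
You are proposing a proof of a statement that the paper itself records as an unproved \emph{Fact}, so there is no official argument to compare against; your proposal has to stand on its own, and it does not quite. The base cases and the multiplicative case are fine, but the exponential case --- which you correctly identify as the crux --- contains a genuine gap: your entire argument there consists of hunting for a witness inside the aggregated multiset $M$, and nothing guarantees that $M$ has any elements at all. Two situations make $M$ empty. First, weakenings: the syntax allows a $\contr$-port $p$ with no premises, and for such a port the typing clause of Definition~\ref{defin: typed simple differential net} constrains nothing, so $\mathsf{T}(p) = \contr C$ for an \emph{arbitrary} type $C$; every experiment then assigns $\ports{e}(p) = (-, [\,])$, of height $1$. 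Second, boxes of which the experiment takes zero copies: Definition~\ref{defin: experiment in an untyped framework} lets $\boxes{e}(o)$ be the empty multiset, and then the $\cod$-port $o$ receives $(+, [\,])$, again of height $1$. Concretely, the typed in-PS consisting of a single weakening $p$ with $\mathsf{T}(p) = \contr(\one \tens \one)$ satisfies $\height{\ports{e}(p)} = 1 < 2 = \height{\mathsf{T}(p)}$ for its unique experiment $e$ (under the evident height of types, $\height{\contr C} = 1 + \height{C}$, which is the only reading compatible with the companion Fact~\ref{fact: experiments on typed PS's provide uniform points}). So this is not a repairable omission in your write-up: at such ports the statement itself fails, no witness exists to be found, and any correct treatment must either restrict the claim or adjust the height conventions. (The same defect afflicts the paper's companion Fact asserting \emph{equality} for atomic experiments; the paper never confronts it because neither Fact is given a proof.)

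There is also a second, more technical slip in the same case, relevant even where witnesses do exist. You apply the outer induction hypothesis to a box premise $q = (o, q')$ ``since $q'$ is a port at depth $0$ of $B_R(o)$'', but the premises of a depth-$0$ exponential port that come from boxes are pairs $(o, q')$ with $q' \in \conclusions{B_R(o)}$, and conclusions of an in-PS need not be shallow: $q'$ may itself be of the form $(o'', q'')$, nested arbitrarily deep. For such $q'$ the value $\ports{e_o}(q')$ is a finite multiset, not an element of $D_{\mathcal{A}}$, and an induction hypothesis formulated only for ports at depth $0$ says nothing about it. To make the induction go through you would have to strengthen the statement to all ports, e.g.\ ``for every $q$ at depth $> 0$, every element of the multiset $\ports{e}(q)$ has height at least $\height{\mathsf{T}(q)}$'' --- a strengthening that, once again, has no content when the multiset is empty, which is exactly the case that breaks the Fact.
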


\begin{defi}
Any $\alpha \in D_{\mathcal{A}}$ is said  to be \emph{uniform} if, for any occurrence of any multiset $a$ that occurs in $\alpha$, for any $\beta, \beta' \in \supp{a}$, we have $\height{\beta} = \height{\beta'}$.

For any finite set $\mathcal{P}$, any function $x: \mathcal{P} \to D_{\mathcal{A}}$ is said to be \emph{uniform} if, for any $p \in \mathcal{P}$, $x(p)$ is uniform.
\end{defi}

\begin{fact}\label{fact: experiments on typed PS's provide uniform points}
Let $(R, \mathsf{T})$ be a typed in-PS. Then, for any atomic experiment $e$ on $R$, for any $p \in \portsatzero{R}$, $e(p)$ is uniform and we have $\height{e(p)} = \height{\mathsf{T}(p)}$.
\end{fact}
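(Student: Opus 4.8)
The plan is to prove both assertions simultaneously by induction on $\depthof{R}$, and, at a fixed depth, by a secondary induction along the order $\leq_{\groundof{R}}$ of Definition~\ref{defin: ground-structure} (equivalently, by induction on $\size{\mathsf{T}(p)}$, which strictly decreases from a port to each of its depth-$0$ premises, by the Fact following Definition~\ref{defin: typed simple differential net}). Here $\height{-}$ on a type is read as connective-nesting height: propositional variables and the units have height $0$, a tensor or par has height $1+\max$ of the heights of its two immediate subtypes, and $\cod C$ or $\contr C$ has height $1+\height{C}$. For each $p \in \portsatzero{R}$ I would argue by the clause of Definition~\ref{defin: experiment in an untyped framework} that defines $\ports{e}(p)$, showing that $\ports{e}(p)$ is uniform and that $\height{\ports{e}(p)} = \height{\mathsf{T}(p)}$. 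Since the inequality $\height{\ports{e}(p)} \geq \height{\mathsf{T}(p)}$ is already available from Fact~\ref{fact: experiments on typable PS's}, the genuinely new content is the reverse inequality together with uniformity.

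The non-exponential cases are routine. If $p$ carries an axiom, then atomicity of $e$ (Definition~\ref{definition: atomic experiment}) forces $\ports{e}(p) \in \{ +, - \} \times \mathcal{A}$, which has height $0$ and is vacuously uniform; this matches $\height{\mathsf{T}(p)} = 0$, since an axiom is typed by a propositional variable (Definition~\ref{defin: typed simple differential net}). The ports labelled $\one$ or $\bottom$ receive $(+, \ast)$ or $(-, \ast)$, again of height $0$ and uniform, matching the height-$0$ types $\one$ and $\bottom$. If $p$ is multiplicative with premises $w_1, w_2$, then $\ports{e}(p) = (\delta, \ports{e}(w_1), \ports{e}(w_2))$; no multiset is created at the top level, so $\ports{e}(p)$ is uniform as soon as $\ports{e}(w_1)$ and $\ports{e}(w_2)$ are, and $\height{\ports{e}(p)} = 1 + \max(\height{\ports{e}(w_1)}, \height{\ports{e}(w_2)})$, which equals $1 + \max(\height{\mathsf{T}(w_1)}, \height{\mathsf{T}(w_2)}) = \height{\mathsf{T}(p)}$ by the secondary induction hypothesis applied to $w_1, w_2$ (which precede $p$ for $\leq_{\groundof{R}}$).

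The crux is the exponential case $p \in \exponentialports{\groundof{R}}$, where $\ports{e}(p) = (\delta, a)$ with $a$ the multiset gathering $\ports{e}(w)$ over the shallow wire-premises $w$ of $p$ together with the contributions $\ports{e}(o,q)$ of the box doors $(o,q)$ at depth $>0$ with $\target{R}(o,q) = p$. By the typing discipline there is a single type $C$ with $\mathsf{T}(p) = \zeta C$ such that every shallow premise satisfies $\mathsf{T}(w) = C$ (Definition~\ref{defin: typed simple differential net}) and every feeding door satisfies $\mathsf{T}(\target{R}(o,q)) = \zeta\,\mathsf{T}(o,q)$, whence $\mathsf{T}(o,q) = C$ (third clause of Definition~\ref{defin: typed differential PS}). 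The shallow premises are handled by the secondary induction hypothesis, giving each $\ports{e}(w)$ uniform of height $\height{C}$; each $\ports{e}(o,q)$ is a multiset whose members are the values $\ports{e_o}(q)$ of experiments $e_o$ on the box content $B_R(o)$, which has smaller depth, so the outer induction hypothesis makes every such member uniform of height $\height{\mathsf{T}(o,q)} = \height{C}$. Hence every member of $a$ is uniform and shares the common height $\height{C}$, so $a$ is uniform, $\ports{e}(p) = (\delta, a)$ is uniform, and $\height{\ports{e}(p)} = 1 + \height{C} = \height{\mathsf{T}(p)}$. The main obstacle is exactly this step: coupling the typing constraints of Definition~\ref{defin: typed differential PS} to the two inductive hypotheses so that the two heterogeneous sources of premises, shallow wires and doors of boxes possibly nested several levels down, contribute values of \emph{the same} height. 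The one genuinely delicate point worth isolating is a degenerate exponential port whose multiset $a$ is empty, namely a weakening or a box whose only door is $\aboveBang{R}{o}$ taken zero times: there $\height{C}$ cannot be recovered from the empty $a$, and one must read it from the typing (or restrict to the exhaustive experiments actually used downstream), the height matching $\height{\zeta C}$ then being the point that requires care rather than a mechanical computation.
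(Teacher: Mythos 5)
The paper offers no proof of this Fact at all --- it is one of the bare, unproved ``Facts'' of Section~\ref{section: injectivity} --- so there is nothing on the paper's side to compare your argument with; it has to stand on its own. Its skeleton is the right one: induction on $\depthof{R}$ with a secondary induction along $\leq_{\groundof{R}}$, a case analysis on $\labelofport{\groundof{R}}(p)$, and the observation that the typing clauses of Definition~\ref{defin: typed simple differential net} and Definition~\ref{defin: typed differential PS} force every shallow premise and every box door feeding an exponential port to carry one and the same type $C$. This correctly settles the axiom, unit and multiplicative cases, and the exponential ports whose multiset is nonempty. But the case you set aside at the end as ``genuinely delicate'' is not a point requiring care: it is a hole that cannot be filled, because the statement is false there, and the induction hypothesis you need is exactly what fails.

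Concretely, consider the cut-free typed PS of depth $0$ with single conclusion $p$, a $\contr$-port of type $\contr\contr(X \tens X^\perp)$, whose two premises are a weakening $w_1$ (a $\contr$-port with no premises) of type $\contr(X \tens X^\perp)$ and a $\contr$-port $w_2$ of the same type whose premise is a $\tens$-port over an axiom. Every atomic experiment $e$ gives $\ports{e}(w_1) = (-, [\,])$, of height $1$, and $\ports{e}(w_2) = (-, [(+, (+,\gamma), (-,\gamma))])$ for some $\gamma \in \mathcal{A}$, of height $2$. Since $\mathsf{T}(w_1) = \mathsf{T}(w_2)$, no assignment of heights to types whatsoever can make $\height{\ports{e}(w)} = \height{\mathsf{T}(w)}$ hold at both $w_1$ and $w_2$ (so the failure does not depend on how one reads $\height{\mathsf{T}(p)}$, which the paper never defines); and $\ports{e}(p) = (-, [(-,[\,]),\, (-,[(+,(+,\gamma),(-,\gamma))])])$ mixes members of heights $1$ and $2$, so the uniformity claim fails at $p$ as well. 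Neither of your two escape routes works: ``reading the height from the typing'' changes nothing, since the semantic value $(-,[\,])$ is what it is and its height is $1$ whatever $C$ is; and restricting to exhaustive experiments is irrelevant, since exhaustiveness constrains the number of copies of boxes and this counterexample contains no box --- the zero-copy box you mention is a second instance of the same disease, not the only one. Note also that, with the connective-nesting reading of height on types that you adopt, the same port $w_1$ refutes the inequality of Fact~\ref{fact: experiments on typable PS's} which you invoke for the direction $\geq$. So any correct proof must be a proof of a corrected statement --- for instance restricted to typed in-PSs and experiments in which every exponential port receives a nonempty multiset; on that fragment your argument goes through essentially verbatim.
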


\begin{fact}\label{fact: substitutions providing uniform points}
Let $\alpha, \alpha' \in D_{\mathcal{A}}$ and $\sigma : \mathcal{A} \to D_{\mathcal{A}}$ such that $\alpha'$ is uniform and $\sigma \cdot \alpha' = \alpha$. We have $\height{\alpha} > \height{\alpha'}$ if, and only if, $\sigma \notin \mathfrak{R}(\alpha')$.
\end{fact}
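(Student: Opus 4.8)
The plan is to prove the equivalence by structural induction on $\alpha'$, after isolating two auxiliary facts that make the height behave well under the action of $\sigma$. First I would record that duality preserves height, i.e. $\height{\beta^\perp} = \height{\beta}$ for every $\beta \in D_{\mathcal{A}}$ (immediate induction on $\beta$, since $-^\perp$ commutes with the pair- and multiset-constructors and fixes the top layer). From this I would derive the monotonicity statement $\height{\sigma \cdot \beta} \geq \height{\beta}$ for every $\beta \in D_{\mathcal{A}}$ and every $\sigma : \mathcal{A} \to D_{\mathcal{A}}$: the only non-trivial leaves are the atoms $(\delta, \gamma)$ with $\gamma \in \atoms{\beta}$, which are replaced by $\sigma(\gamma)$ or $\sigma(\gamma)^\perp$, both of height $\geq 0$, and the pair/multiset cases follow because the top constructor adds $1$ to the maximum of the (inductively non-decreasing) heights of the immediate subterms.

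The ``only if'' direction is then the easy half, proved through its contrapositive. If $\sigma \in \mathfrak{R}(\alpha')$ then every $\gamma \in \atoms{\alpha'}$ satisfies $\sigma(\gamma) \in \{+,-\} \times \mathcal{A}$, hence $\height{\sigma(\gamma)} = 0 = \height{\sigma(\gamma)^\perp}$; an induction on $\alpha'$ shows that replacing each atomic leaf of $\alpha'$ by a height-$0$ leaf (and fixing the $\ast$-leaves) leaves the height unchanged, so $\height{\alpha} = \height{\alpha'}$ and in particular $\height{\alpha} > \height{\alpha'}$ fails. Here uniformity of $\alpha'$ is not needed.

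The substantial half is ``if'': assuming $\sigma \notin \mathfrak{R}(\alpha')$ I must produce a strict increase $\height{\alpha} > \height{\alpha'}$, and this is where the uniformity hypothesis is indispensable. The witness is an atom $\gamma_0 \in \atoms{\alpha'}$ with $\sigma(\gamma_0) \notin \{+,-\} \times \mathcal{A}$; the plan is to strengthen the induction hypothesis to: \emph{if $\alpha'$ is uniform and some atom of $\atoms{\alpha'}$ receives under $\sigma$ a value of height $\geq 1$, then $\height{\sigma \cdot \alpha'} > \height{\alpha'}$}. The induction splits on the shape of $\alpha'$; the pair case is harmless because the height of a pair is one plus the maximum of its two components and the component carrying $\gamma_0$ already increases strictly by the inductive hypothesis, while the multiset case is exactly where uniformity is used: since all members of a multiset occurring in $\alpha'$ share a common height, the member containing $\gamma_0$ cannot be dominated by a taller sibling, so its strict increase propagates to the multiset's height and hence upward to the root.

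The main obstacle is precisely this last propagation argument, together with the delicate bookkeeping of which non-renaming images actually raise the height. I would pay particular attention to the boundary images that a non-renaming $\sigma$ might use \emph{without} increasing height --- namely $(\pm, \ast)$ and the empty multiset --- and verify that the intended reading of $\mathfrak{R}$ together with uniformity forces the witnessing image $\sigma(\gamma_0)$ to have height $\geq 1$; this is the point where the proof is most sensitive to the conventions fixed for $\ast$ and for $\atoms{\cdot}$, and where I expect to spend most of the care.
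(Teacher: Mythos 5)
Your ``only if'' half (a renaming preserves height) and your multiset case are sound, but the ``if'' half has a genuine gap, and it is not the one you flagged. You declare the pair case ``harmless'', yet it is exactly where the induction breaks: uniformity constrains only multisets, not pairs, so nothing prevents the component carrying the witness $\gamma_0$ from being dominated by a strictly taller sibling, in which case its strict height increase does not propagate to the pair. Concretely, take $\alpha' = (+, (+,\gamma), (+,[(+,\ast),(+,\ast)]))$ with $\gamma \in \mathcal{A}$: it is uniform (the only multiset occurring in it has all its members of height $0$) and $\height{\alpha'} = 2$; taking $\sigma(\gamma) = (+,(+,\ast),(+,\ast))$, which is not in $\{+,-\}\times\mathcal{A}$, yields $\sigma \cdot \alpha'$ of height $2$ as well. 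So $\sigma \notin \mathfrak{R}(\alpha')$ while $\height{\sigma\cdot\alpha'} = \height{\alpha'}$, and no amount of bookkeeping can repair this step of your induction.

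The boundary case you did flag is equally fatal: uniformity is a property of $\alpha'$ alone and cannot force $\sigma(\gamma_0)$ to have height $\geq 1$; already $\alpha' = (+,\gamma)$ with $\sigma(\gamma) = (+,\ast)$ preserves height although $\sigma \notin \mathfrak{R}(\alpha')$. Note that both observations are counterexamples to the statement as literally written, so the defect is not a lack of care on your part: the fact, read literally, is false, and the paper's own proof (the single line ``By induction on $\height{\alpha}$'') confronts neither case, so there is nothing there to compare your argument against. What the fact is actually used for, in the proof of Lemma~\ref{lem: results of typed experiments are atomic}, is a situation where $\alpha' = x(p)$ and $\alpha$ come from atomic experiments on the same cut-free typed PS, so that their heights agree not merely at the root but hereditarily, at every position determined by the type $\mathsf{T}(p)$; under such a positionwise hypothesis every atom of $\alpha'$ sits at a position where $\alpha$ has height $0$, and the desired conclusion $\sigma \in \mathfrak{R}(\alpha')$ does follow. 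Your proof cannot close as planned without strengthening the hypotheses in some such way.
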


\begin{proof}
By induction on $\height{\alpha}$.
\end{proof}

\begin{lem}\label{lem: results of typed experiments are atomic}
We assume that, for any $X \in \propvar$, the set $\sm{X}$ is infinite. Let $(R, \mathsf{T})$ be a cut-free typed PS. Then, for any atomic experiment $e$ on $R$, we have $\restriction{e}{\conclusions{R}} \in {\sm{R}}_{\textit{At}}$.
\end{lem}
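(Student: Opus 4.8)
The plan is to unfold the definition of $\sm{R}$-atomicity and argue port by port. Fix an atomic experiment $e$ on $R$ and set $x = \restriction{\ports{e}}{\conclusions{R}}$. I must show that whenever $\sigma \in (D_{\mathcal{A}})^{\mathcal{A}}$ and $y \in \sm{R}$ satisfy $\sigma \cdot y = x$, then $\sigma \in \mathfrak{R}(y)$. Since $\atoms{y} = \bigcup_{p \in \conclusions{R}} \atoms{y(p)}$, it suffices to prove $\sigma(\gamma) \in \{ +, - \} \times \mathcal{A}$ for every $\gamma \in \atoms{y(p)}$ and every conclusion $p$; I write $y = \restriction{\ports{f}}{\conclusions{R}}$ for some experiment $f$ on $R$.

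First I would pin the heights. By Fact~\ref{fact: experiments on typed PS's provide uniform points} the value $x(p) = \ports{e}(p)$ is uniform and $\height{x(p)} = \height{\mathsf{T}(p)}$; by Fact~\ref{fact: experiments on typable PS's} applied to $f$ one has $\height{y(p)} \geq \height{\mathsf{T}(p)}$; and since a substitution never decreases the height of a datum, $\height{\sigma \cdot y(p)} \geq \height{y(p)}$. Chaining these with $\sigma \cdot y(p) = x(p)$ gives
\[
\height{\mathsf{T}(p)} = \height{x(p)} = \height{\sigma \cdot y(p)} \geq \height{y(p)} \geq \height{\mathsf{T}(p)},
\]
so all the inequalities are equalities; in particular $\height{y(p)} = \height{\mathsf{T}(p)}$ and the substitution $\sigma$ preserves the height of $y(p)$.

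The core of the argument is then an induction on the type $\mathsf{T}(p)$ (equivalently on $\height{\mathsf{T}(p)}$), showing simultaneously that $y(p)$ is uniform and that $\sigma \in \mathfrak{R}(y(p))$. For a compound type the untyped experiment clauses force the top constructor of $y(p)$ (and of $x(p)$) to match that of $\mathsf{T}(p)$, and $\sigma$ acts componentwise; each immediate subdatum of $x(p)$ has height exactly that of the corresponding subtype (because $x(p)$ is uniform and comes from the typed structure), while the matching subdatum of $y(p)$ has height at least that of the subtype, so the same height-squeezing as above lets me recurse. For $\mathsf{T}(p) \in \{ \one, \bottom \}$ the port $p$ carries that label, so both $\ports{e}(p)$ and $\ports{f}(p)$ are forced to $(\pm, \ast)$ and $\atoms{y(p)} = \emptyset$; for $\mathsf{T}(p)$ atomic, $\height{y(p)} = 0$ forces $y(p) = (\delta, c)$ with $c \in \mathcal{A} \cup \{ \ast \}$, and since $x(p) = \sigma \cdot y(p)$ is a genuine atom (the experiment $e$ being atomic, it labels axioms, hence atomic conclusions, by elements of $\{ +, - \} \times \mathcal{A}$, never by $\ast$), necessarily $c \in \mathcal{A}$ and $\sigma(c) \in \{ +, - \} \times \mathcal{A}$. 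Throughout, the height bookkeeping is exactly the content of Fact~\ref{fact: substitutions providing uniform points}, which I would invoke once the uniformity of the relevant source datum is in hand.

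The main obstacle I anticipate is the base case at atomic and unit types, i.e.\ disentangling the two ways a substitution can fail to be a renaming: sending an atom to a taller datum, and sending an atom to $(\delta, \ast)$. The first is excluded by the height equality $\height{y(p)} = \height{\mathsf{T}(p)}$ together with the uniformity of $x(p)$; the second is excluded precisely by the atomicity of $e$, which guarantees that $x(p)$ carries a true atom exactly where $\mathsf{T}(p)$ is atomic and $\ast$ exactly where $\mathsf{T}(p)$ is a unit, so that no atom of $y(p)$ can be mapped onto an $\ast$. Making the inductive hypothesis strong enough to carry both the renaming property and the uniformity of the source $y(p)$ simultaneously is the delicate point, since $y$ itself arises from an arbitrary (non-atomic) experiment and is not uniform a priori.
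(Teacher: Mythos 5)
Your proof is essentially sound, but it takes a genuinely different route from the paper's, and the difference is instructive. The paper never confronts the possible non-uniformity of $y$ at all: given $\sigma \cdot y = \restriction{\ports{e}}{\conclusions{R}}$, it factors $y$ itself as $y = \tau \cdot x$ with $x \in {\sm{R}}_{\textit{At}}$ (Fact~\ref{fact: atomic are enough}), notes that \emph{this} $x$ is uniform with $\height{x(p)} = \height{\mathsf{T}(p)}$ (Facts~\ref{fact: from point to experiment} and~\ref{fact: experiments on typed PS's provide uniform points}), applies Fact~\ref{fact: substitutions providing uniform points} once, at the top level, to the composite $(\sigma \cdot \tau) \cdot x = \restriction{\ports{e}}{\conclusions{R}}$ to get $\sigma \cdot \tau \in \mathfrak{R}(x)$, and then invokes the cancellation Fact~\ref{fact: st renaming => t renaming} to conclude $\sigma \in \mathfrak{R}(\tau \cdot x) = \mathfrak{R}(y)$. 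So the delicate point you flag at the end---that $y$ is not uniform a priori, hence Fact~\ref{fact: substitutions providing uniform points} cannot be applied to it directly---is precisely what the factorization is designed to bypass. Your route instead attacks $y$ head-on, proving uniformity and the renaming property simultaneously by induction on the type; in effect you re-prove a merged, generalized form of Facts~\ref{fact: experiments on typed PS's provide uniform points} and~\ref{fact: substitutions providing uniform points}. This costs more work, but it buys something real: your argument uses neither Fact~\ref{fact: atomic are enough} nor Fact~\ref{fact: st renaming => t renaming}, and (unlike Fact~\ref{fact: from point to experiment}) it nowhere needs the sets $\sm{X}$ to be infinite, so it would establish the lemma without that hypothesis.

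One step of your induction does need a more careful formulation than your sketch gives it. At a port of type $\contr C$ or $\cod C$, the equation $\sigma \cdot y(p) = x(p)$ identifies the two multisets only up to an arbitrary bijection between their elements, and the matched elements are values of \emph{different} experiments at possibly \emph{different} ports: an element of $y(p)$ is $\ports{f_o}(q)$ for some copy $f_o$ of a box experiment of $f$, while its partner in $x(p)$ is $\ports{e_{o'}}(q')$ for some copy of a box experiment of $e$, with no relation forced between $q$ and $q'$ or between the copies. Consequently the induction cannot be ``on the port $p$ of $R$ with the fixed experiments $e$ and $f$'': the hypothesis must quantify, for each type $C$, over all typed in-PSs (in practice, all nested box contents of $R$ and $R$ itself), all depth-$0$ ports of type $C$ therein, and all pairs consisting of an arbitrary experiment and an atomic experiment whose values at such ports are related by $\sigma$. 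With that strengthening the argument closes: Facts~\ref{fact: experiments on typable PS's} and~\ref{fact: experiments on typed PS's provide uniform points}, invoked at each matched pair, restart your height squeeze; each element of $y(p)$ then has height exactly $\height{C}$ and is uniform by the induction hypothesis, so $y(p)$ is uniform and $\sigma$ is a renaming on it.
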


\begin{proof}
Let $e$ be an atomic experiment on $R$. Let $x' \in \sm{R}$ and let $\sigma : \mathcal{A} \to D_{\mathcal{A}}$ such that $\sigma \cdot x' = \restriction{e}{\conclusions{R}}$. By Fact~\ref{fact: atomic are enough}, there exist $x \in \sm{R}_{\textit{At}}$ and $\tau: \mathcal{A} \to D_{\mathcal{A}}$ such that $\tau \cdot x = x'$. By Fact~\ref{fact: from point to experiment} and Fact~\ref{fact: experiments on typed PS's provide uniform points}, the point $x$ is uniform and $(\forall p \in \conclusions{R}) \height{x(p)} = \height{\mathsf{T}(p)} = e(p)$. 
By Lemma~\ref{lem: substitutions are actions}, we have $(\sigma \cdot \tau) \cdot x = \sigma \cdot (\tau \cdot x) = \sigma \cdot x' = \restriction{e}{\conclusions{R}}$. By Fact~\ref{fact: substitutions providing uniform points}, we have $\sigma \cdot \tau \in \mathfrak{R}(x)$, hence, by Fact~\ref{fact: st renaming => t renaming}, $\sigma \in \mathfrak{R}(\tau \cdot x) = \mathfrak{R}(x')$, which entails $\restriction{e}{\conclusions{R}} \in \sm{R}_{\textit{At}}$.
\end{proof}

\begin{lem}\label{lem: relating typed semantics and untyped semantics}
Let us assume that, for any $X \in \propvar$, the set $\sm{X}$ is infinite. Let $(R, \mathsf{T})$ be a cut-free typed PS. Then $\{ U \circ x ; x \in {\sm{R}}_{\textit{At}} \} = \bigcup_{x \in \sm{(R, \mathsf{T})}} \{ \sigma \cdot x ; \sigma \in \mathcal{A}^{\mathcal{A}} \}$.
\end{lem}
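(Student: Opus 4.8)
The plan is to reduce the statement to a correspondence between atomic untyped experiments on $R$ and typed experiments on $(R,\mathsf{T})$, mediated by a \emph{signing} operation that is inverse to $U$ on values of the correct shape. For each MELL type $C$ and each $v \in \sm{C} \subseteq \overline{D}_{\mathcal{A}}$, I would define by induction on $C$ an element $s_C(v) \in D_{\mathcal{A}}$ restoring the signs dictated by the polarities of $C$: $s_X(v) = (+, v)$ and $s_{X^\perp}(v) = (-, v)$ for $X \in \propvar$, $s_{\one}(\ast) = (+, \ast)$, $s_{\bottom}(\ast) = (-, \ast)$, $s_{A \tens B}(v_1, v_2) = (+, s_A(v_1), s_B(v_2))$, $s_{A \parr B}(v_1, v_2) = (-, s_A(v_1), s_B(v_2))$, and likewise $\cod \mapsto +$, $\contr \mapsto -$ on multisets. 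A routine induction gives $U(s_C(v)) = v$ and $s_{C^\perp}(v) = s_C(v)^\perp$, and these are exactly the signs imposed by the clauses of Definition~\ref{defin: experiment in an untyped framework}. Hence signing a typed experiment $e^t$ portwise, $\ports{e}(p) = s_{\mathsf{T}(p)}(\ports{e^t}(p))$, produces an atomic untyped experiment $e$ with $U \circ \ports{e} = \ports{e^t}$; conversely $U$ sends any atomic untyped experiment whose axiom atoms respect the typing (\emph{well-typed}, i.e. landing in the correct $\sm{X}$) back to a typed experiment. I call this the signing bijection.

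For the inclusion $\supseteq$, take $y = \restriction{\ports{e^t}}{\conclusions{R}} \in \sm{(R,\mathsf{T})}$ and $\sigma \in \mathcal{A}^{\mathcal{A}}$. Let $e_0$ be the well-typed atomic untyped experiment obtained by signing $e^t$, and lift $\sigma$ to $\hat\sigma \colon \mathcal{A} \to D_{\mathcal{A}}$, $\hat\sigma(\gamma) = (+, \sigma(\gamma))$, which is a renaming. Then $e := \hat\sigma \cdot e_0$ is again an atomic untyped experiment and, by Fact~\ref{fact: substitutions compatible with U} applied portwise together with $U(\hat\sigma) = \sigma$, one gets $U \circ \ports{e} = U(\hat\sigma) \cdot U(\ports{e_0}) = \sigma \cdot \ports{e^t}$. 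Restricting to conclusions and invoking Lemma~\ref{lem: results of typed experiments are atomic} (applicable since $R$ underlies a cut-free typed PS) yields $x := \restriction{\ports{e}}{\conclusions{R}} \in \sm{R}_{\textit{At}}$ with $U \circ x = \sigma \cdot y$, so $\sigma \cdot y$ lies in the left-hand set.

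For the inclusion $\subseteq$, take $x \in \sm{R}_{\textit{At}}$. Since each $\sm{X}$, hence $\mathcal{A}$, is infinite, Fact~\ref{fact: from point to experiment} gives an atomic untyped experiment $e$ with $\restriction{\ports{e}}{\conclusions{R}} = x$. The key step is to factor $e$ through a well-typed experiment: I would build, by induction on $\depthof{R}$, a well-typed atomic untyped experiment $e_0$ having the \emph{same} box-copying structure as $e$ (the same multiset supports $\boxes{e}(o)$, recursively) but using pairwise distinct, correctly typed atoms at all axioms of all copies, together with a renaming $\rho \in \mathfrak{R}$ sending each fresh atom back to the atom $e$ uses at the corresponding axiom. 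Distinctness of the fresh atoms makes $\rho$ well defined, and infiniteness of the $\sm{X}$ guarantees enough fresh atoms. Since $e$ and $e_0$ carry identical signs (both determined by the types) and identical structure, $\ports{e} = \rho \cdot \ports{e_0}$; then Fact~\ref{fact: substitutions compatible with U} gives $U \circ \ports{e} = U(\rho) \cdot U(\ports{e_0}) = U(\rho) \cdot \ports{e_0^t}$, where $e_0^t$ is the typed experiment corresponding to $e_0$ under the signing bijection and $U(\rho) \in \mathcal{A}^{\mathcal{A}}$. Restricting to conclusions shows $U \circ x \in \bigcup_{y \in \sm{(R,\mathsf{T})}} \{ \tau \cdot y ; \tau \in \mathcal{A}^{\mathcal{A}} \}$.

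The main obstacle I expect is precisely this last construction in the $\subseteq$ direction: matching the box-copying structure of $e$ while refreshing and retyping the axiom atoms coherently across all copies of all boxes. This forces the induction on depth (each $e_o \in \boxes{e}(o)$ is handled by the induction hypothesis on $B_R(o)$, with disjoint pools of fresh atoms so that the local renamings glue into a single $\rho$), and it is where infiniteness of $\sm{X}$ is genuinely used. The remaining ingredients — the portwise verification that signing yields a valid atomic experiment and that $U$ inverts it, and the bookkeeping that $U(\hat\sigma) = \sigma$ and $U(\rho) \in \mathcal{A}^{\mathcal{A}}$ — are routine inductions on type structure and on depth.
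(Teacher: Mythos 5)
Your proof is correct and its skeleton is the same as the paper's: both directions rest on the (implicit in the paper, explicit in your write-up) correspondence between typed experiments and well-typed atomic untyped experiments, on Lemma~\ref{lem: results of typed experiments are atomic}, and, for the hard inclusion, on presenting an atomic point as a renaming of a ``generic'' well-typed experiment and pushing everything through $U$ via Fact~\ref{fact: substitutions compatible with U}. Within that skeleton you diverge in two ways, both defensible. For the inclusion $\bigcup_{x \in \sm{(R,\mathsf{T})}} \{ \sigma \cdot x ; \sigma \in \mathcal{A}^{\mathcal{A}} \} \subseteq \{ U \circ x ; x \in {\sm{R}}_{\textit{At}} \}$, the paper proves an in-line sub-claim that ${\sm{R}}_{\textit{At}}$ is closed under renamings, which needs the uniformity/height facts (Fact~\ref{fact: experiments on typed PS's provide uniform points} and Fact~\ref{fact: substitutions providing uniform points}); you instead rename the \emph{experiment} (observing that $\hat{\sigma} \cdot e_0$ is again an atomic experiment on $R$) and apply Lemma~\ref{lem: results of typed experiments are atomic} to the renamed experiment, which bypasses that machinery entirely. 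For the converse inclusion, the paper splits your single construction into two steps: an injective atomic experiment $e$ together with $\sigma$ such that $x = \sigma \cdot \restriction{\ports{e}}{\conclusions{R}}$, then a retyping map $\tau$ forcing the axiom atoms into the right $\sm{X}$, then an adjusted renaming $\sigma'$; your fresh, well-typed experiment $e_0$ with the renaming $\rho$ performs both steps at once. Same idea, fewer moving parts; the cost is that you must actually verify the signing bijection, which the paper never isolates.

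Two repairs are needed in your $\subseteq$ direction, both local. First, your justification that ``$e$ and $e_0$ carry identical signs (both determined by the types)'' is not right: $e$ is an experiment on the \emph{untyped} PS $R$, so at axiom ports its signs are constrained only by duality inside each axiom, not by $\mathsf{T}$; $e$ may well assign $(-,\gamma)$ to a port typed $X$. This does not break the construction, because renamings can flip signs: define $\rho(\gamma_{\textit{fresh}})$ to be the \emph{signed} atom that $e$ assigns at the axiom port where $e_0$ assigns $(+,\gamma_{\textit{fresh}})$ (possibly of the form $(-,\gamma)$); then $\rho \cdot \ports{e_0} = \ports{e}$ holds, and the pairwise distinctness of the fresh atoms keeps $\rho$ well defined. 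Second, your inference ``each $\sm{X}$ is infinite, hence $\mathcal{A}$ is infinite'' fails when $\propvar = \emptyset$, in which case $\mathcal{A} = \emptyset$; the paper handles this case separately. For you it is harmless but should be said: when $\mathcal{A} = \emptyset$ every point is vacuously $\mathcal{A}$-injective, so Fact~\ref{fact: from point to experiment} still applies, a typed PS has no axioms, and your construction degenerates to $e_0 = e$ with $\rho$ the empty map.
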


\begin{proof}
By Lemma~\ref{lem: results of typed experiments are atomic}, we have $\sm{(R, \mathsf{T})} \subseteq \{ U \circ x ; x \in {\sm{R}}_{\textit{At}} \}$. Let $x \in \sm{(R, \mathsf{T})}$. 
Now, for any $x' \in \sm{R}_{\textit{At}}$, for any $\sigma \in \mathfrak{R}(x')$, we have $\sigma \cdot x' \in \sm{R}_{\textit{At}}$; indeed, let $x \in \sm{R}$ and $\tau : \mathcal{A} \to D_{\mathcal{A}}$ such that $\tau \cdot x = \sigma \cdot x'$; by Fact~\ref{fact: from point to experiment} and Fact~\ref{fact: experiments on typed PS's provide uniform points}, we have $\height{x} = \height{x'}$, hence, by Fact~\ref{fact: substitutions providing uniform points}, $\tau \in \mathfrak{R}(x)$. So, by Fact~\ref{fact: substitutions compatible with U}, we have $\bigcup_{x \in \sm{(R, \mathsf{T})}} \{ \sigma \cdot x ; \sigma \in \mathcal{A}^{\mathcal{A}} \} \subseteq \{ U(\sigma) \cdot (U \circ x) ; (x \in {\sm{R}}_{\textit{At}} \wedge \sigma \in \mathcal{A}^{\mathcal{A}}) \} = \{ U \circ (\sigma \cdot x) ; (x \in {\sm{R}}_{\textit{At}} \wedge \sigma \in \mathfrak{R}(x)) \} = \{ U \circ x ; x \in {\sm{R}}_{\textit{At}} \}$.

Conversely, let $x \in {\sm{R}}_{\textit{At}}$. If $\propvar = \emptyset$, then $U \circ x \in \sm{(R, \mathsf{T})}$. Otherwise, the set $\mathcal{A}$ is infinite, hence there exist an injective atomic experiment $e$ on $R$ and $\sigma \in \mathfrak{R}(x)$ such that $x = \sigma \cdot \restriction{e}{\conclusions{R}}$. Let $\tau: \mathcal{A} \to D_{\mathcal{A}}$ such that:
\begin{itemize}
\item for any $p \in \portsatzerooftype{\textit{ax}}{R}$, for any $\gamma \in \mathcal{A}$, for any $\delta \in \{ +, - \}$ such that $e(p) = (\delta, \gamma)$, we have $U(\tau)(\gamma) \in \sm{\mathsf{T}(p)}$;
\item and, for any $p \in \portsoftype{\textit{ax}}{R} \cap \portsatdepthgreater{R}{0}$, for any $\gamma \in \mathcal{A}$, for any $\delta \in \{ +, - \}$ such that $(\delta, \gamma) \in \supp{e(p)}$, we have $U(\tau)(\gamma) \in \sm{\mathsf{T}(p)}$.
\end{itemize}
We have $U(\tau) \cdot (U \circ \restriction{e}{\conclusions{R}}) \in \sm{(R, \mathsf{T})}$. It is clear that there exists $\sigma' \in \mathfrak{R}(\tau \cdot \restriction{e}{\conclusions{R}})$ such that $\sigma' \cdot (\tau \cdot \restriction{e}{\conclusions{R}}) = \sigma \cdot \restriction{e}{\conclusions{R}}$. We have $U \circ x = U \circ (\sigma \cdot \restriction{e}{\conclusions{R}}) = U \circ (\sigma' \cdot (\tau \cdot \restriction{e}{\conclusions{R}})) = U(\sigma') \cdot(U \circ (\tau \cdot \restriction{e}{\conclusions{R}})) = U(\sigma') \cdot (U(\tau) \cdot (U \circ \restriction{e}{\conclusions{R}}))$ (by applying Fact~\ref{fact: substitutions compatible with U} twice).
\end{proof}

\begin{defi}
Let $\mathcal{D} \subseteq {(D_{\mathcal{A}})}^{\mathcal{P}}$ for some set $\mathcal{P}$. We say that $\mathcal{D}$ \emph{reflects renamings} if the following property holds: $(\forall x \in \mathcal{D}) (\forall y \in D_{\mathcal{A}}) (\forall \sigma \in \mathfrak{R}(y)) (\sigma \cdot y = x \Rightarrow y \in \mathcal{D})$
\end{defi}

\begin{defi}
For any finite set $\mathcal{P}$, any function $x: \mathcal{P} \to D_{\mathcal{A}}$ is said to be \emph{balanced} if, for any $\gamma \in \mathcal{A}$, there are as many occurrences of $(+, \gamma)$ in $\sum_{p \in \mathcal{P}} \multi{x(p)}$ as occurrences of $(-, \gamma)$ in $\sum_{p \in \mathcal{P}} \multi{x(p)}$.
\end{defi}

\begin{fact}\label{fact: balanced}
Let $S$ be a differential in-PS. Then, for any $x \in \sm{S}$, $x$ is balanced.
\end{fact}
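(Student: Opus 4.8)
The plan is to track, for each atom $\gamma \in \mathcal{A}$, the signed count $b_\gamma(\alpha)$, defined as the number of occurrences of $(+, \gamma)$ in $\alpha$ minus the number of occurrences of $(-, \gamma)$ in $\alpha$, and extended additively to finite multisets of elements of $D_{\mathcal{A}}$ so that $b_\gamma(\ports{e}(q))$ makes sense also for ports $q$ at depth $> 0$. With this notation, $x$ being balanced is exactly the assertion that $\sum_{p \in \conclusions{S}} b_\gamma(x(p)) = 0$ for every $\gamma$. The first step is the elementary observation, proved by induction on $\alpha$, that $b_\gamma(\alpha^\perp) = - b_\gamma(\alpha)$: since $\ast \notin \mathcal{A}$ and $\mathcal{A}$ contains no couple nor $3$-tuple, the atoms occurring in $\alpha$ sit at leaves of the form $(\delta, \gamma)$, and duality merely flips each $\delta$ into $\delta^\perp$ while leaving $\gamma$ untouched, so each occurrence of $(+, \gamma)$ becomes an occurrence of $(-, \gamma)$ and conversely.

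Next I would prove, by induction on $\depthof{S}$, the slightly more general claim that for every experiment $e$ on a differential in-PS $S$ and every $\gamma$ one has $\sum_{p \in \conclusions{S}} b_\gamma(\ports{e}(p)) = 0$, where the sum ranges over all conclusions, shallow and non-shallow (this strengthening is forced because the contents of boxes are in-PS's that may carry non-shallow conclusions). The crucial point at depth $0$ is that $b_\gamma$ is additive along the clauses defining an experiment: at a multiplicative or an exponential port $p$, the quantity $b_\gamma(\ports{e}(p))$ equals the sum of $b_\gamma$ over the premises of $p$, namely the wires $w \in \wiresatzero{S}$ with $\target{\groundof{S}}(w) = p$ together with the deep ports $q$ with $\target{S}(q) = p$; at a $\one$- or $\bot$-port, or at an exponential port of arity $0$, it is $0$. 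Since the premise relation is well-founded (the relation $\leq_{\groundof{S}}$ is antisymmetric and $\portsatzero{S}$ is finite), iterating this identity expresses the sum of $b_\gamma$ over the non-wire ports, which are exactly the conclusions of $\groundof{S}$ together with the ports occurring in cuts, as the sum of $b_\gamma$ over the depth-$0$ leaves plus the sum over the deep premises $\bigcup_{o \in \boxesatzero{S}} \{ (o, q') ; q' \in \temporaryConclusions{S}{o} \}$, each counted exactly once. The depth-$0$ leaves contribute $0$: the $\one$-, $\bot$- and arity-$0$ exponential ports contribute $0$ individually, while the axiom ports cancel in pairs since $\ports{e}(p) = \ports{e}(p')^\perp$ on each axiom $\{ p, p' \}$ yields $b_\gamma(\ports{e}(p)) + b_\gamma(\ports{e}(p')) = 0$ by the first step; likewise each cut $\{ p, p' \} \in \cutsatzero{S}$ contributes $b_\gamma(\ports{e}(p)) + b_\gamma(\ports{e}(p')) = 0$. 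Hence $\sum_{p \in \conclusions{\groundof{S}}} b_\gamma(\ports{e}(p)) = \sum_{o \in \boxesatzero{S}} \sum_{q' \in \temporaryConclusions{S}{o}} b_\gamma(\ports{e}((o, q')))$.

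Then I would feed the boxes into the induction. The box clauses of the experiment give, uniformly for every conclusion $q'$ of $B_S(o)$ (whether shallow in $B_S(o)$ or not), the identity $b_\gamma(\ports{e}((o, q'))) = \sum_{e_o \in \supp{\boxes{e}(o)}} \boxes{e}(o)(e_o) \cdot b_\gamma(\ports{e_o}(q'))$, using that $b_\gamma$ is additive and insensitive to the singleton bracket $\multi{\cdot}$. Adding the displayed equality for $\conclusions{\groundof{S}}$ to the analogous expansion of the non-shallow part $\nonshallowConclusions{S} = \bigcup_{o} \{ (o, q') ; q' \in \conclusions{B_S(o)} \setminus \temporaryConclusions{S}{o} \}$, everything regroups, for each $o$ and each sub-experiment $e_o$ with coefficient $\boxes{e}(o)(e_o)$, into $\sum_{q' \in \conclusions{B_S(o)}} b_\gamma(\ports{e_o}(q'))$, which vanishes by the induction hypothesis applied to $e_o$ on $B_S(o)$. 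This gives $\sum_{p \in \conclusions{S}} b_\gamma(\ports{e}(p)) = 0$; the base case $\depthof{S} = 0$ is the depth-$0$ computation above with no deep premises. Restricting to a PS $S$, where every conclusion is shallow so that $x = \restriction{\ports{e}}{\conclusions{S}}$ and all $x(p) \in D_{\mathcal{A}}$, yields the Fact.

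The main obstacle I anticipate is bookkeeping rather than conceptual. One must carefully match, at each box $o$, the deep premises $(o, q')$ with $q' \in \temporaryConclusions{S}{o}$ that are consumed at depth $0$ against the non-shallow conclusions $(o, q')$ with $q' \in \conclusions{B_S(o)} \setminus \temporaryConclusions{S}{o}$, so that their two contributions recombine into the full sum over $\conclusions{B_S(o)}$ on which the induction hypothesis bites; and one must check that the telescoping along the depth-$0$ forest visits each depth-$0$ leaf and each deep premise exactly once, which is precisely where well-foundedness of the premise relation enters. The only genuinely delicate checks are that cut ports are never wires (so they are indeed among the roots and cancel in dual pairs) and that axiom ports, whether they occur as conclusions, as wire premises, or inside axiom-cuts, are always leaves and hence cancel via $\ports{e}(p) = \ports{e}(p')^\perp$.
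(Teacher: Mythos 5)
Your proof is correct, and it is complete where the paper is laconic: the paper's entire proof of this Fact is the single sentence ``By induction on $(\depthof{S}, \Card{\boxes{S}}, \Card{\portsatzero{S}}, \Card{\cutsatzero{S}})$ lexicographically ordered'', a measure whose last three components point to a proof by local surgery --- peel off one cut at depth $0$, one port at depth $0$, or one box at a time, checking that each removal preserves balancedness (each cut or axiom pair $\{p,p'\}$ cancels because $\ports{e}(p) = {\ports{e}(p')}^{\perp}$). You organize things differently: induction on $\depthof{S}$ only, with the whole depth-$0$ level discharged in one pass by telescoping the signed count $b_\gamma$ along the premise forest (roots are the non-wire ports, i.e.\ the conclusions of $\groundof{S}$ together with the cut ports; leaves are the axiom, $\one$, $\bot$ and arity-$0$ exponential ports), after which the deep premises $(o,q')$ with $q' \in \temporaryConclusions{S}{o}$ recombine with the non-shallow conclusions into all of $\conclusions{B_S(o)}$, where the induction hypothesis applies with weights $\boxes{e}(o)(e_o)$. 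The cost of your route is the explicit strengthening to all conclusions, shallow and non-shallow; but that strengthening is unavoidable in any proof, since box contents are in-PS's carrying non-shallow conclusions, and it is in fact what the Fact itself asserts for a general differential in-PS, so your closing step ``restricting to a PS'' is not needed --- your generalized claim is already, verbatim, the statement to be proved. What your organization buys is that the three relevant properties of $b_\gamma$ --- it is negated by $(\cdot)^{\perp}$, it is additive through every clause of the definition of experiments, and it is multiplied by the multiplicity $\boxes{e}(o)(e_o)$ through the box clauses --- are isolated once and for all, with no case analysis over which element to remove; what the paper's finer lexicographic measure would buy is avoiding the forest-telescoping bookkeeping that you correctly identify, and correctly resolve (via antisymmetry of $\leq_{\groundof{S}}$, finiteness, and the fact that every wire and every deep premise has exactly one target), as the only delicate point.
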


\begin{proof}
By induction on $(\depthof{S}, \Card{\boxes{S}}, \Card{\portsatzero{S}}, \Card{\cutsatzero{S}})$ lexicographically ordered.
\end{proof}

\begin{fact}\label{fact: up to renamings}
Let $\mathcal{P}$ be some finite set and let $\mathsf{T}$ be a function $\mathcal{P} \to \mathbb{T}$. Let $x_1, x_2 \in {(D_{\mathcal{A}})}^{\mathcal{P}}$ $\mathcal{A}$-injective and balanced such that $U \circ x_1 = U \circ x_2$ and, for any $p \in \mathcal{P}$, we have $U(x_1(p)) \in \sm{\mathsf{T}(p)}$. Then there exists $\sigma \in \mathfrak{R}(x_1)$ such that $\sigma \cdot x_1 = x_2$.
\end{fact}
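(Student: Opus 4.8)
The plan is to construct the renaming $\sigma$ explicitly, one atom at a time, by using the three hypotheses for three different purposes: $U\circ x_1=U\circ x_2$ fixes a common skeleton, the typing hypothesis fixes the polarities at the internal nodes, and $\mathcal{A}$-injectivity together with balancedness pins down the polarities at the atomic leaves.

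First I would record the combinatorial shape of $x_1$. Since $x_1$ is $\mathcal{A}$-injective, for every $(\delta,\gamma)\in\{+,-\}\times\mathcal{A}$ the leaf $(\delta,\gamma)$ occurs at most once among the subterms of the family $(x_1(p))_{p\in\mathcal{P}}$; since $x_1$ is balanced, the numbers of occurrences of $(+,\gamma)$ and of $(-,\gamma)$ are equal. Hence each $\gamma\in\atoms{x_1}$ occurs exactly twice, once as $(+,\gamma)$ at a position I call $q_\gamma^+$ and once as $(-,\gamma)$ at a position $q_\gamma^-$; the same holds for $x_2$. Because $U\circ x_1=U\circ x_2$, the two families have identical underlying trees in $\overline{D}_{\mathcal{A}}$, so $\atoms{x_1}=\atoms{x_2}$ and the two leaves of $x_2$ carrying $\gamma$ sit again at $q_\gamma^+$ and $q_\gamma^-$; by $\mathcal{A}$-injectivity and balancedness applied to $x_2$, those two leaves carry opposite polarities.

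Next I would argue that $x_1(p)$ and $x_2(p)$ carry the same polarity at every non-atomic node. Reading $U(x_i(p))\in\sm{\mathsf{T}(p)}$ off node by node, each pair-node corresponds to a $\tens$ or $\parr$ subformula of $\mathsf{T}(p)$ and each multiset-node to a $\cod$ or $\contr$ subformula, with the polarity there forced to be $+$ for $\tens,\cod,\one$ and $-$ for $\parr,\contr,\bottom$; since $x_1(p)$ and $x_2(p)$ share both the skeleton and the type $\mathsf{T}(p)$, they agree at all these internal nodes, and the only admissible remaining discrepancy is the polarity of the atomic leaves. \emph{This is the step I expect to be the main obstacle}: it is precisely where the typing hypothesis must be converted into polarity information, and one must verify carefully that in a balanced $\mathcal{A}$-injective element of the prescribed $U$-shape the connective polarities are indeed determined by $\mathsf{T}(p)$ (so that only the leaves are free), since $U$ itself forgets all polarities.

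Finally I would define and check $\sigma$. For $\gamma\in\atoms{x_1}$, let $\delta_\gamma\in\{+,-\}$ be the polarity of the $\gamma$-leaf of $x_2$ at position $q_\gamma^+$, set $\sigma(\gamma)=(\delta_\gamma,\gamma)$, and extend $\sigma$ arbitrarily off $\atoms{x_1}$; then $\sigma(\gamma)\in\{+,-\}\times\mathcal{A}$ for all $\gamma\in\atoms{x_1}$, so $\sigma\in\mathfrak{R}(x_1)$. I would prove $\sigma\cdot x_1=x_2$ by induction on the structure of $x_1(p)$: at an internal node the action preserves the (matching) polarity and recurses on the immediate subterms; at $q_\gamma^+$ one has $\sigma\cdot(+,\gamma)=\sigma(\gamma)=(\delta_\gamma,\gamma)$, which is $x_2$ there; and at $q_\gamma^-$ one has $\sigma\cdot(-,\gamma)=\sigma(\gamma)^\perp=(\delta_\gamma^\perp,\gamma)$, which is $x_2$ at $q_\gamma^-$ because the two $\gamma$-leaves of $x_2$ carry opposite polarities. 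Compatibility of the action with duality (Fact~\ref{fact: substitutions compatible with duality}) guarantees the leaf computation $\sigma\cdot(-,\gamma)=(\delta_\gamma^\perp,\gamma)$, and the induction then yields $\sigma\cdot x_1=x_2$, as required.
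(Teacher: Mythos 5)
Your leaf-level construction is correct, and it is the real content of this fact: $\mathcal{A}$-injectivity plus balancedness give that every $\gamma\in\atoms{x_1}$ occurs exactly once positively and once negatively, the equality $U\circ x_1=U\circ x_2$ matches these occurrences across the two points, and setting $\sigma(\gamma)=(\delta_\gamma,\gamma)$, where $\delta_\gamma$ is the polarity $x_2$ carries at the position where $x_1$ carries $(+,\gamma)$, handles both occurrences at once via $\sigma\cdot(-,\gamma)=\sigma(\gamma)^\perp$. The genuine gap is exactly the step you flagged and then asserted anyway: the claim that the typing hypothesis forces $x_1(p)$ and $x_2(p)$ to agree at every non-atomic node. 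This does not follow from the stated hypotheses, and in fact cannot: $\sm{\mathsf{T}(p)}$ consists of \emph{unpolarized} elements of $\overline{D}_{\mathcal{A}}$ (note that $\sm{\cod C}=\sm{\contr C}$, $\sm{(C_1\tens C_2)}=\sm{(C_1\parr C_2)}$ and $\sm{\one}=\sm{\bot}$), and $U$ erases all polarities, so the membership $U(x_1(p))\in\sm{\mathsf{T}(p)}$ constrains only the shape of $x_1(p)$, never its polarities. Concretely, take $\mathcal{P}=\{p\}$, $\mathsf{T}(p)=\one$, $x_1(p)=(+,\ast)$ and $x_2(p)=(-,\ast)$: both points are vacuously $\mathcal{A}$-injective and balanced, $U(+,\ast)=\ast=U(-,\ast)\in\sm{\one}$, yet $\sigma\cdot(+,\ast)=(+,\ast)\neq(-,\ast)$ for every $\sigma$, since the action preserves the polarity of every node that is not an atom. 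The same failure occurs at pair and multiset nodes, e.g. $(+,[(+,\ast)])$ versus $(-,[(+,\ast)])$ with $\mathsf{T}(p)=\cod\one$. So your sentence ``they agree at all these internal nodes'' is not a deduction; as literally stated, the fact is false.

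In fairness, this defect is inherited from the paper: its own proof is the bare phrase ``by induction on $\sum_{p\in\mathcal{P}}\size{\mathsf{T}(p)}$'', which silently assumes precisely the polarity agreement you asserted, and the same counterexample refutes the literal statement. What makes the fact true where it is used (Theorem~\ref{theorem: typed framework}) is an implicit extra hypothesis: there, $x_1$ and $x_2$ are results of atomic experiments on cut-free PS's typable with the \emph{same} conclusion types $\mathsf{T}$, and the definition of experiments forces the polarity of every non-atomic node to be the one induced by the corresponding constructor of $\mathsf{T}(p)$ ($+$ for $\tens,\cod,\one$ and $-$ for $\parr,\contr,\bot$), leaving only the atom polarities free --- which is exactly what a renaming can absorb. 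If you add this hypothesis (the polarities of $x_1$ and $x_2$ at non-atomic nodes are those induced by $\mathsf{T}$), your argument becomes complete, up to one routine refinement: at multiset nodes ``same position'' must be understood relative to a chosen bijection between corresponding multisets commuting with $U$, which exists because the $U$-images are equal as multisets. Without some such hypothesis, no argument can close the step, yours or the paper's.
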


\begin{proof}
By induction on $\sum_{p \in \mathcal{P}} \size{\mathsf{T}(p)}$.
\end{proof}

\begin{lem}\label{lem: renamings reflect atomicity}
Let $\mathcal{P}$ be some finite set. 
Let $\mathcal{D} \subseteq {(D_{\mathcal{A}})}^{\mathcal{P}}$. Let $x \in \mathcal{D}$ and let $\sigma \in \mathfrak{R}(x)$ such that $\sigma \cdot x \in {\mathcal{D}}_{\textit{At}}$. Then $x \in {\mathcal{D}}_{\textit{At}}$.
\end{lem}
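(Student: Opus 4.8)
The plan is to unfold the definition of $\mathcal{D}$-atomicity directly and reduce the claim to a single application of Fact~\ref{fact: st renaming => t renaming}. By Definition~\ref{definition: atomic point}, proving $x \in \mathcal{D}_{\textit{At}}$ amounts to showing that for every $\rho \in (D_{\mathcal{A}})^{\mathcal{A}}$ and every $y \in \mathcal{D}$ with $\rho \cdot y = x$, one has $\rho \in \mathfrak{R}(y)$. So I would fix such a $\rho$ and $y$ and aim to prove $\rho \in \mathfrak{R}(y)$, keeping in mind the hypotheses that $x \in \mathcal{D}$, that $\sigma \in \mathfrak{R}(x)$, and that $x' := \sigma \cdot x$ is $\mathcal{D}$-atomic.

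The central step is to transport the equation $\rho \cdot y = x$ across $\sigma$. Applying $\sigma \cdot (-)$ and using Lemma~\ref{lem: substitutions are actions} (extended from single elements of $D_{\mathcal{A}}$ to functions $\mathcal{P} \to D_{\mathcal{A}}$ by working pointwise, since $(\sigma \cdot z)(p) = \sigma \cdot z(p)$ for all $z$ and $p$), I get $(\sigma \cdot \rho) \cdot y = \sigma \cdot (\rho \cdot y) = \sigma \cdot x = x'$. Since $x' \in \mathcal{D}_{\textit{At}}$ and $y \in \mathcal{D}$, the very definition of $\mathcal{D}$-atomicity applied to the witness $\sigma \cdot \rho$ yields $\sigma \cdot \rho \in \mathfrak{R}(y)$.

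Finally, from $\sigma \cdot \rho \in \mathfrak{R}(y)$ I would extract $\rho \in \mathfrak{R}(y)$ by Fact~\ref{fact: st renaming => t renaming}. That fact is stated for a single argument in $D_{\mathcal{A}}$, so I first note that for a function $y$ one has $\mathfrak{R}(y) = \bigcap_{p \in \mathcal{P}} \mathfrak{R}(y(p))$, because $\atoms{y} = \bigcup_{p} \atoms{y(p)}$; hence $\sigma \cdot \rho \in \mathfrak{R}(y(p))$ for each $p$, and Fact~\ref{fact: st renaming => t renaming} (with its $\sigma, \tau, x$ instantiated as $\sigma, \rho, y(p)$) gives $\rho \in \mathfrak{R}(y(p))$ for each $p$, whence $\rho \in \mathfrak{R}(y)$. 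This completes the argument. As a by-product the same fact re-derives $\sigma \in \mathfrak{R}(\rho \cdot y(p)) = \mathfrak{R}(x(p))$, consistent with the hypothesis $\sigma \in \mathfrak{R}(x)$.

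The only real obstacle is bookkeeping: the underlying results (Lemma~\ref{lem: substitutions are actions} and Fact~\ref{fact: st renaming => t renaming}) are phrased for elements of $D_{\mathcal{A}}$, whereas here $x$ and $y$ range over $(D_{\mathcal{A}})^{\mathcal{P}}$. I expect this to be routine, using the pointwise definitions of the action $\cdot$, of $\mathfrak{R}$, of $\atoms{\cdot}$ and of $\size{\cdot}$; indeed one could alternatively reread the size-based proof of Fact~\ref{fact: st renaming => t renaming} and observe that it goes through verbatim with $\size{y} = \sum_{p} \size{y(p)}$. No genuinely new idea is needed beyond the associativity of substitution.
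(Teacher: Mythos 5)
Your proof is correct and follows essentially the same route as the paper's: transport the equation $\rho \cdot y = x$ along $\sigma$ via Lemma~\ref{lem: substitutions are actions}, invoke the $\mathcal{D}$-atomicity of $\sigma \cdot x$ to get $\sigma \cdot \rho \in \mathfrak{R}(y)$, and conclude $\rho \in \mathfrak{R}(y)$ by Fact~\ref{fact: st renaming => t renaming}. The only difference is that you spell out the pointwise lifting of these single-element results to functions in $(D_{\mathcal{A}})^{\mathcal{P}}$, which the paper leaves implicit.
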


\begin{proof}
Let $y \in \mathcal{D}$ and let $\tau: \mathcal{A} \to D_{\mathcal{A}}$ such that $\tau \cdot y = x$. Then, by Lemma~\ref{lem: substitutions are actions}, we have $(\sigma \cdot \tau) \cdot y = \sigma \cdot (\tau \cdot y) = \sigma \cdot x$. We have $\sigma \cdot \tau \in \mathfrak{R}(y)$, hence, by Fact~\ref{fact: st renaming => t renaming}, $\tau \in \mathfrak{R}(y)$. 
\end{proof}

Theorem~\ref{theorem: typed framework} is similar to Theorem~\ref{thm: injectivity in the untyped framework}:

\begin{thm}\label{theorem: typed framework}
Let $(R, \mathsf{T})$ be a cut-free typed PS. If, for any $X \in \propvar$, the set $\sm{X}$ is infinite, then there exists $D_0 \subseteq \sm{(R, \mathsf{T})}$ with $\Card{D_0} = 2$ such that, for any cut-free typed PS $(R', \mathsf{T'})$ with $\conclusions{R} = \conclusions{R'}$ and $\restriction{\mathsf{T}}{\conclusions{R}} = \restriction{\mathsf{T'}}{\conclusions{R'}}$, we have $(D_0 \subseteq \sm{(R', \mathsf{T'})} \Rightarrow (R, \mathsf{T}) \equiv (R', \mathsf{T'}))$.
\end{thm}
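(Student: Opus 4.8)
The plan is to reduce the statement to its untyped counterpart, Theorem~\ref{thm: injectivity in the untyped framework}, which already characterizes a cut-free (untyped) PS among all cut-free PS's with the same conclusions by two well-chosen atomic injective points. The bridge between the two worlds is the forgetful map $U : D_{\mathcal{A}} \to \overline{D}_{\mathcal{A}}$ together with Lemma~\ref{lem: relating typed semantics and untyped semantics}, which relates $\sm{(R, \mathsf{T})}$ to $\{ U \circ x ; x \in {\sm{R}}_{\textit{At}} \}$. Once $R \equiv R'$ has been recovered in the untyped sense, Fact~\ref{fact: iso of typed PS's} upgrades it to $(R, \mathsf{T}) \equiv (R', \mathsf{T'})$, since the two typed PS's are cut-free and agree on the types of their shared conclusions.

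First I would fix the two points. Let $R$ denote the underlying untyped PS of $(R, \mathsf{T})$. I choose an injective $1$-experiment and an injective $k$-heterogeneous experiment on the typed PS $(R, \mathsf{T})$, with $k$ as large as required in the proof of Theorem~\ref{thm: injectivity in the untyped framework} (the bound on $k$ being read off the $1$-point); such experiments exist because every $\sm{X}$ is infinite. Adding polarities dictated by the ($\eta$-expanded) types turns these into injective untyped atomic experiments on $R$, whose results $x_1, x \in {\sm{R}}_{\textit{At}}$ (atomicity coming from Lemma~\ref{lem: results of typed experiments are atomic}, valid because $R$ is typable) form exactly a characterizing pair for the untyped theorem: $x_1$ is an $\mathcal{A}$-injective $1$-point and $x$ an $\mathcal{A}$-injective $k$-heterogeneous point. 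By construction $U \circ x_1$ and $U \circ x$ are the original typed results, so I set $D_0 = \{ U \circ x_1, U \circ x \} \subseteq \sm{(R, \mathsf{T})}$ with $\Card{D_0} = 2$; in the degenerate box-free or $\propvar = \emptyset$ situations, where the two images may collapse, the PS is already determined by depth-$0$ data via Fact~\ref{fact: injectivity for depth 0} and $D_0$ is completed by any further point.

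The core of the argument is the transfer. Let $(R', \mathsf{T'})$ be a cut-free typed PS with $\conclusions{R'} = \conclusions{R}$, $\restriction{\mathsf{T'}}{\conclusions{R'}} = \restriction{\mathsf{T}}{\conclusions{R}}$ and $D_0 \subseteq \sm{(R', \mathsf{T'})}$; I must show $\{ x_1, x \} \subseteq {\sm{R'}}_{\textit{At}}$. Fix $x_1$ (the argument for $x$ is identical). From $U \circ x_1 \in \sm{(R', \mathsf{T'})}$ and Lemma~\ref{lem: relating typed semantics and untyped semantics} applied to $(R', \mathsf{T'})$, there is $y_1 \in {\sm{R'}}_{\textit{At}}$ with $U \circ y_1 = U \circ x_1$. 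Both points are balanced (Fact~\ref{fact: balanced}) and have the same underlying atom-occurrences; since $x_1$ is $\mathcal{A}$-injective, each atom of $U \circ x_1$ occurs exactly twice, and balancedness forces the two occurrences in $y_1$ to carry opposite polarities, so $y_1$ is $\mathcal{A}$-injective too. Applying Fact~\ref{fact: up to renamings} in both directions (with conclusion typing $\restriction{\mathsf{T}}{\conclusions{R}}$, noting $U(x_1(p)) \in \sm{\mathsf{T}(p)}$) produces renamings $\sigma \in \mathfrak{R}(x_1)$ with $\sigma \cdot x_1 = y_1$ and $\tau \in \mathfrak{R}(y_1)$ with $\tau \cdot y_1 = x_1$. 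Since $\sm{R'}$ is closed under the action of renamings (by the very definition of $\sigma \cdot e$), $x_1 = \tau \cdot y_1 \in \sm{R'}$; then $x_1 \in \sm{R'}$, $\sigma \in \mathfrak{R}(x_1)$ and $\sigma \cdot x_1 = y_1 \in {\sm{R'}}_{\textit{At}}$ let Lemma~\ref{lem: renamings reflect atomicity} conclude $x_1 \in {\sm{R'}}_{\textit{At}}$. Hence $\{ x_1, x \} \subseteq {\sm{R'}}_{\textit{At}}$, Theorem~\ref{thm: injectivity in the untyped framework} yields $R \equiv R'$, and Fact~\ref{fact: iso of typed PS's} gives $(R, \mathsf{T}) \equiv (R', \mathsf{T'})$.

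The delicate point --- where I expect the real work to sit --- is the recovery of the \emph{exact} untyped points $x_1, x$, polarities included, from the polarity-free points of $\sm{(R', \mathsf{T'})}$: the untyped theorem is applicable only to the specific $x_1, x$ it selected, not to an arbitrary preimage $y_1$ of $U \circ x_1$. The combination of $\mathcal{A}$-injectivity and balancedness is what pins the polarities down uniquely, and Lemma~\ref{lem: renamings reflect atomicity} is what prevents a renaming from silently leaving ${\sm{R'}}_{\textit{At}}$. A secondary subtlety is making the informal ``add the polarities dictated by the types'' lift of a typed experiment into a genuine untyped atomic experiment precise; this is routine for $\eta$-expanded typed nets but must be stated carefully so that $U \circ x_1$ is literally the chosen typed point rather than merely a renaming of it.
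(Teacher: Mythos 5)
Your proposal is correct and follows essentially the same route as the paper's proof: reduce to Theorem~\ref{thm: injectivity in the untyped framework} through $U$ and Lemma~\ref{lem: relating typed semantics and untyped semantics}, transfer the two points using Fact~\ref{fact: balanced}, Fact~\ref{fact: up to renamings} and Lemma~\ref{lem: renamings reflect atomicity}, and conclude with Fact~\ref{fact: iso of typed PS's}. The only differences are presentational and in fact tighten the paper's argument: building $D_0$ directly from typed experiments (via Lemma~\ref{lem: results of typed experiments are atomic}) makes $D_0 \subseteq \sm{(R, \mathsf{T})}$ immediate, and you spell out two hypotheses the paper leaves implicit, namely the $\mathcal{A}$-injectivity of the preimage point $y_1$ and the membership $x_1 \in \sm{R'}$ that Lemma~\ref{lem: renamings reflect atomicity} requires before it can be applied.
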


\begin{proof}
By Theorem~\ref{thm: injectivity in the untyped framework}, there exists an $\mathcal{A}$-injective subset $D'_0 = \{ y, z \}$ of $\sm{R}_{\textit{At}}$ with $\Card{D_0} = 2$ such that, for any cut-free PS $R'$, we have $(D'_0 \subseteq \sm{R'}_{\textit{At}} \Rightarrow R \equiv R')$. We can take $D_0 = \{ U \circ x ; x \in D'_0 \}$. Indeed:

Let $(R', \mathsf{T'})$ be a typed PS such that $\conclusions{R} = \conclusions{R'}$, $\restriction{\mathsf{T}}{\conclusions{R}} = \restriction{\mathsf{T'}}{\conclusions{R'}}$ and $D_0 \subseteq \sm{(R', \mathsf{T'})}$. 
By Lemma~\ref{lem: relating typed semantics and untyped semantics}, there exist $y', z' \in {\sm{R'}}_{\textit{At}}$ such that $U \circ y = U \circ y'$ and $U \circ z = U \circ z'$. By Fact~\ref{fact: balanced}, we can apply Fact~\ref{fact: up to renamings} to obtain that there exist $\sigma_y \in \mathfrak{R}(y)$ and $\sigma_z \in \mathfrak{R}(z)$ such that $\sigma_y \cdot y = y'$ and $\sigma_z \cdot z = z'$. By Lemma~\ref{lem: renamings reflect atomicity}, we thus have $D'_0 \subseteq {\sm{R'}}_{\textit{At}}$, hence $R \equiv R'$. By Fact~\ref{fact: iso of typed PS's}, we obtain $(R, \mathsf{T}) \equiv (R', \mathsf{T'})$.
\end{proof}

Now, we take advantage of the normalization property of the typed PS's:

\begin{cor}\label{cor: injectivity}
Let $(R_1, \mathsf{T}_1)$ and $(R_2, \mathsf{T}_2)$ be two typed PS's such that $\conclusions{R_1} = \conclusions{R_2}$ and $\restriction{\mathsf{T}_1}{\conclusions{R_1}} = \restriction{\mathsf{T}_2}{\conclusions{R_2}}$. If, for any $X \in \propvar$, the set $\sm{X}$ is infinite, then we have $((R_1, \mathsf{T}_1) \simeq_{\beta} (R_2, \mathsf{T}_2) \Leftrightarrow \sm{(R_1, \mathsf{T}_1)} = \sm{(R_2, \mathsf{T}_2)})$.
\end{cor}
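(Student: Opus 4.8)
The plan is to prove the two implications of the equivalence separately: the forward one is soundness of the relational model under cut-elimination, and the backward one reduces, via normalization, to the cut-free case already settled in Theorem~\ref{theorem: typed framework}.

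First I would handle the implication $(R_1, \mathsf{T}_1) \simeq_{\beta} (R_2, \mathsf{T}_2) \Rightarrow \sm{(R_1, \mathsf{T}_1)} = \sm{(R_2, \mathsf{T}_2)}$. Since $\simeq_{\beta}$ is the reflexive symmetric transitive closure of the cut-elimination relation, it suffices to check that a single step of cut-elimination preserves the interpretation $\sm{\cdot}$; this is the soundness of the relational model, which holds step by step because each reduction rule induces a bijection on the sets of results of experiments. This is the direction acknowledged as holding ``by soundness'' in the introduction. Chaining along the zig-zag of reductions connecting $(R_1, \mathsf{T}_1)$ and $(R_2, \mathsf{T}_2)$ then gives the equality of interpretations.

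For the converse, I would use that typed PS's are normalizable: let $(R_i^{\circ}, \mathsf{T}_i^{\circ})$ be a cut-free typed PS obtained from $(R_i, \mathsf{T}_i)$ by cut-elimination, for $i \in \{ 1, 2 \}$, so that $(R_i, \mathsf{T}_i) \simeq_{\beta} (R_i^{\circ}, \mathsf{T}_i^{\circ})$. Because cut-elimination leaves the conclusion ports and their types untouched, we have $\conclusions{R_1^{\circ}} = \conclusions{R_1} = \conclusions{R_2} = \conclusions{R_2^{\circ}}$ and $\restriction{\mathsf{T}_1^{\circ}}{\conclusions{R_1^{\circ}}} = \restriction{\mathsf{T}_2^{\circ}}{\conclusions{R_2^{\circ}}}$. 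Applying the forward implication to each $(R_i, \mathsf{T}_i) \simeq_{\beta} (R_i^{\circ}, \mathsf{T}_i^{\circ})$ gives $\sm{(R_i^{\circ}, \mathsf{T}_i^{\circ})} = \sm{(R_i, \mathsf{T}_i)}$; combined with the hypothesis $\sm{(R_1, \mathsf{T}_1)} = \sm{(R_2, \mathsf{T}_2)}$, this yields $\sm{(R_1^{\circ}, \mathsf{T}_1^{\circ})} = \sm{(R_2^{\circ}, \mathsf{T}_2^{\circ})}$. Now Theorem~\ref{theorem: typed framework} applied to the cut-free PS $(R_1^{\circ}, \mathsf{T}_1^{\circ})$ provides a two-element set $D_0 \subseteq \sm{(R_1^{\circ}, \mathsf{T}_1^{\circ})}$; the equality of interpretations gives $D_0 \subseteq \sm{(R_2^{\circ}, \mathsf{T}_2^{\circ})}$, whence $(R_1^{\circ}, \mathsf{T}_1^{\circ}) \equiv (R_2^{\circ}, \mathsf{T}_2^{\circ})$ by that theorem. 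Since $\equiv$ only renames non-shallow ports, two $\equiv$-equivalent PS's are $\beta$-equivalent, so finally $(R_1, \mathsf{T}_1) \simeq_{\beta} (R_1^{\circ}, \mathsf{T}_1^{\circ}) \simeq_{\beta} (R_2^{\circ}, \mathsf{T}_2^{\circ}) \simeq_{\beta} (R_2, \mathsf{T}_2)$.

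The main obstacle is not in the corollary itself, which is a routine chaining once Theorem~\ref{theorem: typed framework} is available, but in the two external inputs it relies on: the normalizability of typed PS's (used to produce the cut-free forms $R_i^{\circ}$) and the step-by-step soundness of the relational model under cut-elimination. Both are standard for MELL; the only point requiring a little care is verifying that cut-elimination preserves the names and the types of the conclusions, so that the hypotheses of Theorem~\ref{theorem: typed framework} (matching conclusions and matching conclusion-types) are genuinely met by $(R_1^{\circ}, \mathsf{T}_1^{\circ})$ and $(R_2^{\circ}, \mathsf{T}_2^{\circ})$.
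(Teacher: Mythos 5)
Your proof is correct and follows essentially the same route as the paper's: normalize both typed PS's to cut-free forms, transfer the equality of interpretations along $\simeq_\beta$ by soundness, apply Theorem~\ref{theorem: typed framework} to get $\equiv$ of the normal forms, and conclude $\simeq_\beta$. The paper's proof is merely terser, leaving the forward (soundness) implication, the extraction of the two-element set $D_0$, and the preservation of conclusion names and types under cut-elimination implicit.
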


\begin{proof}
Let us assume that $\sm{(R_1, \mathsf{T}_1)} = \sm{(R_2, \mathsf{T}_2)}$. There exist two cut-free typed PS's $(R'_1, \mathsf{T'}_1)$ and $(R'_2, \mathsf{T'}_2)$ such that $(R_1, \mathsf{T}_1) \simeq_{\beta} (R'_1, \mathsf{T'}_1)$ and $(R_2, \mathsf{T}_2) \simeq_{\beta} (R'_2, \mathsf{T'}_2)$. We have $\conclusions{R'_1} = \conclusions{R_1} = \conclusions{R_2} = \conclusions{R'_2}$ and $\sm{(R'_1, \mathsf{T'}_1)} = \sm{(R_1, \mathsf{T}_1)} = \sm{(R_2, \mathsf{T}_2)} = \sm{(R'_2, \mathsf{T'}_2)}$. By Theorem~\ref{theorem: typed framework}, we have $(R'_1, \mathsf{T'}_1) \equiv (R'_2, \mathsf{T'}_2)$, hence $(R_1, \mathsf{T}_1) \simeq_{\beta} (R_2, \mathsf{T}_2)$.
\end{proof}

\section*{Conclusion}

The aim of this work was to prove theorems that relate the syntax of the proof-nets with their relational semantics firstly and  with their Taylor expansion secondly. But incidentally we proved an interesting intrinsic semantical result: We showed that the entire relational semantics of any normalizable proof-net can be rebuilt from two well-chosen points (and that it is impossible to strengthen this result by rebuilding any proof-net from only one well-chosen point). So, in some way, these two points together could be seen as a principal typing of intersection types for the given proof-net. With the algorithm we described, we can first rebuild the normal form of the proof-net from this ``principal typing'' and then compute the semantics. Now, some technology could probably be developed to compute the semantics directly from these points without rebuilding the syntactic object, like in the case of untyped lambda-calculus (expansion, substitution...). 

\section*{Acknowledgment}
  \noindent One of the anonymous referees pointed out a mistake in a preliminary version of the proof of Lemma~\ref{lem: relating typed semantics and untyped semantics}. I wish to acknowledge him, also for his other helpful comments.

\bibliographystyle{plain}
\bibliography{ll_new}

\newcommand{\online}[1]{Available at \texttt{#1}\ensuremath{\;}}
\begin{thebibliography}{10}

\bibitem{AminiEhrhard}
Shahin Amini and Thomas Ehrhard.
\newblock On {Classical} {PCF}, {Linear} {Logic} and the {MIX} {Rule}.
\newblock In Stephan Kreutzer, editor, {\em 24th {EACSL} Annual Conference on
  Computer Science Logic, {CSL} 2015, September 7-10, 2015, Berlin, Germany},
  volume~41 of {\em LIPIcs}, pages 582--596. Schloss Dagstuhl - Leibniz-Zentrum
  fuer Informatik, 2015.

\bibitem{boudesunifying}
Pierre Boudes.
\newblock Unifying static and dynamic denotational semantics.
\newblock Technical report, Institut de Math\'ematiques de Luminy, 2004.

\bibitem{Boudes11}
Pierre Boudes.
\newblock Non-uniform (hyper/multi)coherence spaces.
\newblock {\em Mathematical Structures in Computer Science}, 21(1):1--40, 2011.

\bibitem{bucciarelliehrhard01}
Antonio Bucciarelli and Thomas Ehrhard.
\newblock On phase semantics and denotational semantics: the exponentials.
\newblock {\em Ann.\ Pure Appl.\ Logic}, 109:205--241, 2001.

\bibitem{inhabitation}
Antonio\ Bucciarelli, Delia\ Kesner, and Simona Ronchi Della~Rocca.
\newblock The inhabitation problem for non-idempotent intersection types.
\newblock In Josep Diaz, Ivan Lanese, and Davide Sangiorgi, editors, {\em
  Theoretical Computer Science - 8th {IFIP} {TC} 1/WG 2.2 International
  Conference, {TCS} 2014, Rome, Italy, September 1-3, 2014. Proceedings},
  volume 8705 of {\em Lecture Notes in Computer Science}, pages 341--354.
  Springer, 2014.

\bibitem{phddanos}
Vincent Danos.
\newblock {\em La Logique Lin\'eaire appliqu\'ee \`a l'\'etude de divers
  processus de normalisation}.
\newblock {Ph.D.} thesis, {U}niversit\'e {P}aris 7, 1990.

\bibitem{hilbert}
Vincent Danos and Laurent Regnier.
\newblock Proof-nets and the {H}ilbert space.
\newblock In {\em Advances in Linear Logic}, volume 222 of {\em London Math.\
  Soc.\ Lecture Note Ser.} Cambridge University Press, 1995.

\bibitem{lmbohm}
Ren\'e David and Walter Py.
\newblock $\lambda\mu$-calculus and {B}\"ohm's theorem.
\newblock {\em Journal of Symbolic Logic}, 66(1):407--413, March 2001.

\bibitem{phddecarvalho}
Daniel de~Carvalho.
\newblock {\em S\'emantiques de la logique lin\'eaire et temps de calcul}.
\newblock {Ph.D.} thesis, {U}niversit\'e {A}ix-{M}arseille~{II}, 2007.

\bibitem{Carvalho:CSL}
Daniel de~Carvalho.
\newblock The relational model is injective for multiplicative exponential
  linear logic.
\newblock In Jean{-}Marc Talbot and Laurent Regnier, editors, {\em 25th {EACSL}
  Annual Conference on Computer Science Logic, {CSL} 2016, August 29 -
  September 1, 2016, Marseille, France}, volume~62 of {\em LIPIcs}, pages
  41:1--41:19. Schloss Dagstuhl - Leibniz-Zentrum fuer Informatik, 2016.

\bibitem{Carvalhoexecution}
Daniel de~Carvalho.
\newblock Execution time of lambda-terms via denotational semantics and
  intersection types.
\newblock {\em Mathematical Structures in Computer Science}, 2017.

\bibitem{CarvPagTdF10}
Daniel\ de~Carvalho, Michele\ Pagani, and Lorenzo Tortora~de Falco.
\newblock A semantic measure of the execution time in linear logic.
\newblock {\em Theoretical Computer Science}, 412/20:1884--1902, 2011.

\bibitem{LPSinjectivity}
Daniel\ de~Carvalho and Lorenzo Tortora~de Falco.
\newblock The relational model is injective for multiplicative exponential
  linear logic (without weakenings).
\newblock {\em Annals of Pure and Applied Logic}, 163(9):1210--1236, 2012.

\bibitem{CarvalhoF16}
Daniel {de Carvalho} and Lorenzo {Tortora de Falco}.
\newblock A semantic account of strong normalization in linear logic.
\newblock {\em Inf. Comput.}, 248:104--129, 2016.

\bibitem{Ehrhard:Finiteness}
Thomas Ehrhard.
\newblock Finiteness spaces.
\newblock {\em Mathematical Structures in Computer Science}, 15(4):615–646,
  2005.

\bibitem{Ehrhard:IntroToDiff}
Thomas Ehrhard.
\newblock An introduction to differential linear logic: proof-nets, models and
  antiderivatives.
\newblock {\em Mathematical Structures in Computer Science}, page 1–66, 2017.

\bibitem{EhrhardRegnier:DiffNets}
Thomas Ehrhard and Laurent Regnier.
\newblock Differential interaction nets.
\newblock {\em Theoretical Computer Science}, 364:166--195, 2006.

\bibitem{Friedman}
Harvey Friedman.
\newblock Equality between functionals.
\newblock {\em Lecture Notes in Mathematics}, 453, 1975.

\bibitem{ll}
Jean-Yves Girard.
\newblock Linear logic.
\newblock {\em Theoretical Computer Science}, 50:1--102, 1987.

\bibitem{k=2}
Giulio Guerrieri, Lorenzo Tortora De~Falco, and Luc Pellissier.
\newblock Injectivity of relational semantics for (connected) {MELL} proof-nets
  via {Taylor} expansion.
\newblock https://hal.archives-ouvertes.fr/hal-00998847, 2014.

\bibitem{MLLwithunits}
Willem Heijltjes and Robin Houston.
\newblock No proof nets for {MLL} with units: proof equivalence in {MLL} is
  pspace-complete.
\newblock In Thomas~A. Henzinger and Dale Miller, editors, {\em Joint Meeting
  of the Twenty-Third {EACSL} Annual Conference on Computer Science Logic
  {(CSL)} and the Twenty-Ninth Annual {ACM/IEEE} Symposium on Logic in Computer
  Science (LICS), {CSL-LICS} '14, Vienna, Austria, July 14 - 18, 2014}, pages
  50:1--50:10. Association for Computing Machinery, 2014.

\bibitem{HeijltjesH15}
Willem Heijltjes and Robin Houston.
\newblock Proof equivalence in {MLL} is {PSPACE}-complete.
\newblock {\em Logical Methods in Computer Science}, 12(1), 2016.

\bibitem{mallpn}
Dominic J.~D. Hughes and Rob~J. van Glabbeek.
\newblock Proof nets for unit-free multiplicative-additive linear logic
  (extended abstract).
\newblock In {\em Proceedings of the Eighteenth Annual IEEE Symposium on Logic
  in Computer Science (LICS 2003)}, pages 1--10. IEEE Computer Society Press,
  2003.

\bibitem{Lafont:fromproofnetstointeractionnets}
Yves Lafont.
\newblock From proof-nets to interaction nets.
\newblock In {\em Advances in Linear Logic}, pages 225--247. Cambridge
  University Press, 1994.

\bibitem{Marco}
Gianfranco Mascari and Marco Pedicini.
\newblock Head linear reduction and pure proof net extraction.
\newblock {\em Theor.\ Comput.\ Sci.}, 135(1), 1994.

\bibitem{typedbohm}
Satoshi Matsuoka.
\newblock Weak typed {B\"ohm} theorem on {IMLL}.
\newblock {\em Annals of Pure and Applied Logic}, 145(1):37--90, 2007.

\bibitem{separationdiff}
Damiano Mazza and Michele Pagani.
\newblock The separation theorem for differential interaction nets.
\newblock In {\em Proceedings of the 14th LPAR International Conference},
  volume 4790 of {\em LNAI}. Springer, 2007.

\bibitem{pagani06a}
Michele Pagani.
\newblock Proofs, denotational semantics and observational equivalences in
  multiplicative linear logic.
\newblock {\em Mathematical Structures in Computer Science}, 17(2):341--359,
  2007.

\bibitem{taylorexpansioninverse}
Michele Pagani and Christine Tasson.
\newblock The {Taylor} expansion inverse problem in linear logic.
\newblock In {\em Proceedings of the 24th ANNUAL IEEE Symposium on Logic in
  Computer Science}, 2009.

\bibitem{Plotkin:Lambda_Definability}
Gordon~D. Plotkin.
\newblock Lambda-definability in the full type hierarchy.
\newblock In J.~P. Seldin and J.~R. Hindley, editors, {\em To H. B. Curry:
  Essays on Combinatory Logic, Lambda Calculus, and Formalism}, pages 363--373.
  Academic Press, 1980.

\bibitem{Rocca88}
Simona Ronchi~Della Rocca.
\newblock Principal type scheme and unification for intersection type
  discipline.
\newblock {\em Theoretical Computer Science}, 59:181--209, 1988.

\bibitem{separationsaurin}
Alexis Saurin.
\newblock Separation with streams in the lambda-mu-calculus.
\newblock In {\em Proceedings of the 20th Annual IEEE Symposium on Logic in
  Computer Science}, 2005.

\bibitem{Simpson:TLCA95}
Alex~K. Simpson.
\newblock Categorical completeness results for the simply-typed
  lambda-calculus.
\newblock In Mariangiola Dezani{-}Ciancaglini and Gordon~D. Plotkin, editors,
  {\em Typed Lambda Calculi and Applications, Second International Conference
  on Typed Lambda Calculi and Applications, {TLCA} '95, Edinburgh, UK, April
  10-12, 1995, Proceedings}, volume 902 of {\em Lecture Notes in Computer
  Science}, pages 414--427. Springer, 1995.

\bibitem{phdtortora}
Lorenzo {Tortora de Falco}.
\newblock {\em R\'eseaux, coh\'erence et exp\'eriences obsessionnelles}.
\newblock {Ph.D.} thesis, {U}niversit\'e {P}aris 7, January 2000.

\bibitem{injectcoh}
Lorenzo Tortora~de Falco.
\newblock Obsessional experiments for linear logic proof-nets.
\newblock {\em Mathematical Structures in Computer Science}, 13(6):799--855,
  2003.

\end{thebibliography}

\newpage

\section{Appendix: Proof of Lemma~\ref{lem: connected components of R_o}}\label{appendix}

\begin{proof}
With Lemma~\ref{lem: termofTaylor of R_o} in mind, we can describe $\termofTaylor{R_o}{e}{i}$:
\begin{itemize}
\item $\portsatzero{\termofTaylor{R_o}{e}{i}} = \portsatzero{\termofTaylor{R}{e}{i}} \cup \bigcup_{e_1\in e(o)} \{ (o, (e_1, p)) ; p \in \mathcal{Q}' \}$;
\item $\labelofport{\groundof{\termofTaylor{R_o}{e}{i}}}(p) = \left\lbrace \begin{array}{ll} \labelofport{\groundof{\termofTaylor{R}{e}{i}}}(p) & \textit{if $p \in \portsatzero{\termofTaylor{R}{e}{i}}$;} \\ \contr & \textit{otherwise;} \end{array} \right.$
\item $\isaCopyfrom{R_o}{e}{i}(p) = \left\lbrace \begin{array}{ll} \isaCopyfrom{R}{e}{i}(p) & \textit{if $p \in \portsatzero{\termofTaylor{R}{e}{i}}$;} \\ (o, \varphi_o(q')) & \textit{if $p = \varphi_{e_1}(q')$ for some $e_1 \in e(o)$, $q' \in \mathcal{Q}$;}\\
q & \textit{if $p = q$;} \end{array} \right.$
\item $\wiresatzero{\termofTaylor{R_o}{e}{i}} = \wiresatzero{\termofTaylor{R}{e}{i}} \cup \bigcup_{e_1\in e(o)} \{ (o, (e_1, p)) ; p \in \mathcal{Q}' \}$;
\item $\target{\groundof{\termofTaylor{R_o}{e}{i}}}(p)$ is the following port of $\groundof{\termofTaylor{R_o}{e}{i}}$:\\ $\left\lbrace \begin{array}{ll} 
(o, (e_1, \varphi_o(\target{\groundof{\termofTaylor{R}{e}{i}}}(p)))) & \textit{if $p = (o, (e_1, p'))$ with $e_1 \in e(o)$ and $\target{\groundof{\termofTaylor{R}{e}{i}}}(p) \in \mathcal{Q}$;}\\
\target{\groundof{\termofTaylor{R}{e}{i}}}(p) & \textit{otherwise.} \end{array} \right.$
\item $\cutsatzero{\termofTaylor{R_o}{e}{i}} = \cutsatzero{\termofTaylor{R}{e}{i}}$ and $\axiomsatzero{\termofTaylor{R_o}{e}{i}} = \axiomsatzero{\termofTaylor{R}{e}{i}}$
\item $\boxesatzero{\termofTaylor{R_o}{e}{i}} = \boxesatzero{\termofTaylor{R}{e}{i}}$
\item $B_{\termofTaylor{R_o}{e}{i}} = B_{\termofTaylor{R}{e}{i}}$
\item $\dom{\target{\termofTaylor{R_o}{e}{i}}} = \dom{\target{\termofTaylor{R}{e}{i}}}$ and $\target{\termofTaylor{R_o}{e}{i}}(o', p) = $ \\$\left\lbrace \begin{array}{ll} (o, (e_1, \varphi_o(\target{\termofTaylor{R}{e}{i}}(o', p))) & \begin{array}{l} \textit{if $\target{\termofTaylor{R}{e}{i}}(o', p) \in \mathcal{Q}$ and $o' = (o, (e_1, o''))$}\\ \textit{for some $e_1 \in e(o)$, $o'' \in \boxesatzero{\termofTaylor{B_R(o)}{e_1}{i}}$;}\end{array}\\ \target{\termofTaylor{R}{e}{i}}(o', p) & \textit{otherwise;} \end{array} \right.$
\end{itemize}
For any $p, p' \in \portsatzero{R \langle o, i, e_o \rangle} \cup \mathcal{Q} \cup \varphi_{e_o}[\mathcal{Q}]$ such that $\Card{\{ p, p' \} \cap \mathcal{Q}} \leq 1$, we have $p \coh_{\termofTaylor{R_o}{e}{i}} p'$ if, and only if, one of the following properties holds:
\begin{itemize}
\item $p, p' \in \portsatzero{\termofTaylor{R}{e}{i}} \setminus \mathcal{Q}$ and $p \coh_{\termofTaylor{R}{e}{i}} p'$
\item $\{ p, p' \} \subseteq \{ (o, (e_o, p'')), \varphi_{e_o}(\target{R}{(o, p'')}) \}$ for some $p'' \in \temporaryConclusions{R}{o}$
\item $\{ p, p' \}  \cap \mathcal{Q}$ is some singleton $\{ p'' \}$ with $p'' \in \mathcal{Q}$ and $\{ p, p' \} \subseteq \{ p'', \varphi_{e_o}(p'') \}$
\end{itemize}

1) Let $T \in \nontrivialconnected{\termofTaylor{R}{e}{i}}{\mathcal{P}}{k}$ such that $\portsatzero{T} \cap \portsatzero{R \langle o, i, e_o \rangle} \not= \emptyset$. We have $\portsatzero{T[\varphi_{e_o}]} = (\portsatzero{T} \setminus \mathcal{Q}) \cup \varphi_{e_o}[\mathcal{Q} \cap \portsatzero{T}]$, hence, by Lemma~\ref{lemma: ports of connected components}, $\portsatzero{T[\varphi_{e_o}]} \subseteq \portsatzero{R \langle o, i, e_o \rangle} \cup \varphi_{e_o}[\mathcal{Q} \cap \portsatzero{T}] \: \: \: (\ast)$. Again, by Lemma~\ref{lemma: ports of connected components}, we have $\portsatzero{T} \cap \mathcal{Q} \subseteq \mathcal{P} \: \: \: (\ast \ast)$. 

Let $w \in \wiresatzero{T}$: We have $w \notin \mathcal{P}$ (hence, by $(\ast \ast)$, $w \in \portsatzero{T[\varphi_{e_o}]} \setminus \mathcal{P}_{e_o}$). Again by Lemma~\ref{lemma: ports of connected components}, there exists $p \in \portsatzero{\termofTaylor{B_R(o)}{e_o}{i}}$ such that $w = (o, (e_o, p))$. We distinguish between two cases:
\begin{itemize}
\item $p \in \temporaryConclusions{R}{o}$: we have $\target{\groundof{\termofTaylor{R}{e}{i}}}(o, (e_o, p)) = \target{R}(o, p) \in \portsatzero{T} \cap \mathcal{Q}$, hence $\target{\groundof{\termofTaylor{R_o}{e}{i}}}(o, (e_o, p)) = (o, (e_o, \varphi_o(\target{R}(o, p)))) \in \portsatzero{T[\varphi_{e_o}]}$;
\item $p \notin \temporaryConclusions{R}{o}$: we have $\target{\groundof{\termofTaylor{R_o}{e}{i}}}(o, (e_o, p)) = \target{\groundof{\termofTaylor{R}{e}{i}}}(o, (e_o, p)) \in \portsatzero{T} \setminus \mathcal{Q} \subseteq \portsatzero{T[\varphi_{e_o}]}$.
\end{itemize}
We thus have
\begin{eqnarray*}
& & \wiresatzero{T}\\
& \subseteq & \{ w \in (\wiresatzero{\termofTaylor{R_o}{e}{i}} \cap \portsatzero{T[\varphi_{e_o}]}) \setminus \mathcal{P}_{e_o} ; \target{\groundof{\termofTaylor{R_o}{e}{i}}}(w) \in \portsatzero{T[\varphi_{e_o}]} \}
\end{eqnarray*}
Conversely, let $w \in (\wiresatzero{\termofTaylor{R_o}{e}{i}} \cap \portsatzero{T[\varphi_{e_o}]}) \setminus \mathcal{P}_{e_o}$ such that $\target{\groundof{\termofTaylor{R_o}{e}{i}}}(w) \in \portsatzero{T[\varphi_{e_o}]}$. By $(\ast)$, there exists $p \in \portsatzero{\termofTaylor{B_R(o)}{e_o}{i}}$ such that $w = (o, (e_o, p))$.  We distinguish between two cases:
\begin{itemize}
\item $p \in \temporaryConclusions{R}{o}$: we have $\target{\groundof{\termofTaylor{R_o}{e}{i}}}(o, (e_o, p)) = (o, (e_o, \varphi_o(\target{R}(o, p)))) \in \portsatzero{T[\varphi_{e_o}]}$, hence $(o, (e_o, \varphi_o(\target{R}(o, p)))) \in \varphi_{e_o}[\mathcal{Q} \cap \portsatzero{T}]$; so, $\target{\groundof{\termofTaylor{R}{e}{i}}}(o, (e_o, p)) = \target{R}(o, p) \in \portsatzero{T}$, which shows that $(o, (e_o, p)) \in \wiresatzero{T}$;
\item $p \notin \temporaryConclusions{R}{o}$: we have $\target{\groundof{\termofTaylor{R}{e}{i}}}(o, (e_o, p)) = \target{\groundof{\termofTaylor{R_o}{e}{i}}}(o, (e_o, p)) \in \portsatzero{T[\varphi_{e_o}]} \setminus \varphi_{e_o}[\mathcal{Q}] \subseteq \portsatzero{T}$.
\end{itemize}
We thus have
\begin{eqnarray*}
&  & \{ w \in (\wiresatzero{\termofTaylor{R_o}{e}{i}} \cap \portsatzero{T[\varphi_{e_o}]}) \setminus \mathcal{P}_{e_o} ; \target{\groundof{\termofTaylor{R_o}{e}{i}}}(w) \in \portsatzero{T[\varphi_{e_o}]} \}\\
& \subseteq & \wiresatzero{T}
\end{eqnarray*}
Moreover, $\wiresatzero{T[\varphi_{e_o}]} = \wiresatzero{T}$; we thus showed:
\begin{eqnarray*}
& & \wiresatzero{T[\varphi_{e_o}]}\\
& = & \{ w \in (\wiresatzero{\termofTaylor{R_o}{e}{i}} \cap \portsatzero{T[\varphi_{e_o}]}) \setminus \mathcal{P}_{e_o} ; \target{\groundof{\termofTaylor{R_o}{e}{i}}}(w) \in \portsatzero{T[\varphi_{e_o}]} \}
\end{eqnarray*}
So, $T[\varphi_{e_o}] \sqsubseteq_{\mathcal{P}_{e_o}} \termofTaylor{R_o}{e}{i}$. 

Since $T$ is connected through ports not in $\mathcal{P}$, $T[\varphi_{e_o}]$ is connected through ports not in $\varphi_{e_o}[\mathcal{P} \cap \dom{\varphi_{e_o}}] \cup (\mathcal{P} \setminus \dom{\varphi_{e_o}}) = \mathcal{P}_{e_o}$.

Finally, $\cosize{T[\varphi_{e_o}]} = \cosize{T}$ and $\conclusions{\groundof{T[\varphi_{e_o}]}} = \portsatzero{T[\varphi_{e_o}]} \setminus (\wiresatzero{T[\varphi_{e_o}]} \cup \bigcup \cutsatzero{T[\varphi_{e_o}]}) = ((\portsatzero{T} \setminus \mathcal{Q}) \cup \varphi_{e_o}[\mathcal{Q} \cap \portsatzero{T}]) \setminus (\wiresatzero{T} \cup \bigcup \cutsatzero{T}) = (\conclusions{\groundof{T}} \setminus \mathcal{Q}) \cup \varphi_{e_o}[\mathcal{Q} \cap \portsatzero{T}] \subseteq (\mathcal{P} \setminus \mathcal{Q}) \cup \varphi_{e_o}[\mathcal{Q} \cap \portsatzero{T}] = \mathcal{P}_{e_o}$. Let $p \in \portsatzero{T[\varphi_{e_o}]}$ and $p' \in \portsatzero{\termofTaylor{R_o}{e}{i}}$ such that $p \coh_{\termofTaylor{R_o}{e}{i}} p'$ and $p' \notin \portsatzero{T[\varphi_{e_o}]}$ (hence $p \notin \mathcal{Q}$, so $\Card{\{ p, p' \} \cap \mathcal{Q}} \leq 1$): We distinguish between three cases:
\begin{itemize}
\item $p, p' \in \portsatzero{\termofTaylor{R}{e}{i}} \setminus \mathcal{Q}$ and $p \coh_{\termofTaylor{R}{e}{i}} p'$: we have $p \in \mathcal{P} \setminus \mathcal{Q} \subseteq \mathcal{P}_{e_o}$;
\item $\{ p, p' \} \subseteq \{ (o, (e_o, p'')), \varphi_{e_o}(\target{R}{(o, p'')}) \}$ for some $p'' \in \temporaryConclusions{R}{o}$: either $p = \varphi_{e_o}(\target{R}(o, p'))$ (hence $p \in \varphi_{e_o}[\mathcal{Q}] \subseteq \mathcal{P}_{e_o})$ or $p = (o, (e_o, p''))$. From now on, let us assume that $p = (o, (e_o, p''))$. We have $\target{R}(o, p'') \in \mathcal{Q}$. If $\target{R}(o, p'') \in \portsatzero{T}$, then $p' = \varphi_{e_o}(\target{R}(o, p'')) \in \portsatzero{T}[\varphi_{e_o}]$, which contradicts $p' \notin \portsatzero{T}[\varphi_{e_o}]$; we thus have $\target{R}(o, p'') \notin \portsatzero{T}$, which entails $p \in \mathcal{P} \setminus \mathcal{Q} \subseteq \mathcal{P}_{e_o}$.
\item $\{ p, p' \}  \cap \mathcal{Q}$ is some singleton $\{ p'' \}$ with $p'' \in \mathcal{Q}$ and $\{ p, p' \} \subseteq \{ p'', \varphi_{e_o}(p'') \}$: since $p \notin \mathcal{Q}$, we have $p' = p''$ and $p = \varphi_{e_o}(p') \in \portsatzero{T[\varphi_{e_o}]}$, hence $p' \in \portsatzero{T}$ and, finally, $p = \varphi_{e_o}(p'') \in \varphi_{e_o}[\mathcal{P} \cap \mathcal{Q}] \subseteq \mathcal{P}_{e_o}$.
\end{itemize}
In the three cases, we have $p \in \mathcal{P}_{e_o}$.

So, we showed $T[\varphi_{e_o}] \in \nontrivialconnected{\termofTaylor{R_o}{e}{i}}{\mathcal{P}_{e_o}}{k}$.

2) Let $T \in \nontrivialconnected{\termofTaylor{R_o}{e}{i}}{\mathcal{P}_{e_o}}{k}$ such that $\portsatzero{T} \subseteq \portsatzero{R_o \langle o, i, e_o \rangle}$. We have $\portsatzero{T[{\varphi_{e_o}}^{-1}]} = (\portsatzero{T} \setminus \varphi_{e_o}[\mathcal{Q}]) \cup \{ q' \in \mathcal{Q} ; \varphi_{e_o}(q') \in \portsatzero{T} \}$.

Since $\target{\groundof{\termofTaylor{R_o}{e}{i}}}[\varphi_{e_o}[\mathcal{Q}]] = \mathcal{Q}$ and $\mathcal{Q} \cap \portsatzero{T} = \emptyset$, we have 
\begin{eqnarray*}
\wiresatzero{T} & = & \{ w \in (\wiresatzero{\termofTaylor{R_o}{e}{i}} \cap \portsatzero{T}) \setminus \mathcal{P}_{e_o} ; \target{\groundof{\termofTaylor{R_o}{e}{i}}}(w) \in \portsatzero{T} \}\\
& = & \{ w \in (\wiresatzero{\termofTaylor{R_o}{e}{i}} \cap \portsatzero{T}) \setminus \mathcal{P}_{e_o} ; \target{\groundof{\termofTaylor{R}{e}{i}}}(w) \in \portsatzero{T} \}\\
& = & \{ w \in (\wiresatzero{\termofTaylor{R}{e}{i}} \cap \portsatzero{T}) \setminus \mathcal{P} ; \target{\groundof{\termofTaylor{R}{e}{i}}}(w) \in \portsatzero{T} \}
\end{eqnarray*}
Moreover, we have $\wiresatzero{T[{\varphi_{e_o}}^{-1}]} = \wiresatzero{T}$. Hence
$$\wiresatzero{T[{\varphi_{e_o}}^{-1}]} = \{ w \in (\wiresatzero{\termofTaylor{R}{e}{i}} \cap \portsatzero{T}) \setminus \mathcal{P} ; \target{\groundof{\termofTaylor{R}{e}{i}}}(w) \in \portsatzero{T} \}$$

Since $T$ is connected modulo $\mathcal{P}_{e_o}$, $T[{\varphi_{e_o}}^{-1}]$ is connected modulo ${\varphi_{e_o}}^{-1}[\mathcal{P}_{e_o} \cap \dom{{\varphi_{e_o}}^{-1}}] \cup (\mathcal{P}_{e_o} \setminus \dom{{\varphi_{e_o}}^{-1}}) = \mathcal{P}$.

Finally, $\cosize{T[{\varphi_{e_o}}^{-1}]} = \cosize{T}$ and $\conclusions{\groundof{T[{\varphi_{e_o}}^{-1}]}} = \portsatzero{T[{\varphi_{e_o}}^{-1}]} \setminus (\wiresatzero{T[{\varphi_{e_o}}^{-1}]} \cup \bigcup \cutsatzero{T[{\varphi_{e_o}}^{-1}]}) = ((\portsatzero{T} \setminus \varphi_{e_o}[\mathcal{Q}]) \cup \{ q' \in \mathcal{Q} ; \varphi_{e_o}(q') \in \portsatzero{T} \}) \setminus (\wiresatzero{T} \cup \bigcup \cutsatzero{T}) = (\conclusions{\groundof{T}} \setminus \varphi_{e_o}[\mathcal{Q}]) \cup \{ q' \in \mathcal{Q} ; \varphi_{e_o}(q') \in \portsatzero{T} \} \subseteq (\mathcal{P}_{e_o} \setminus \varphi_{e_o}[\mathcal{Q}]) \cup \{ q' \in \mathcal{Q} ; \varphi_{e_o}(q') \in \portsatzero{T} \} = (((\mathcal{P} \setminus \mathcal{Q}) \cup \varphi_{e_o}[\mathcal{P} \cap \mathcal{Q}])  \setminus \varphi_{e_o}[\mathcal{Q}]) \cup \{ q' \in \mathcal{Q} ; \varphi_{e_o}(q') \in \portsatzero{T} \} \subseteq \mathcal{P}$. Let $p \in \portsatzero{T[{\varphi_{e_o}}^{-1}]}$ and $p' \in \portsatzero{\termofTaylor{R}{e}{i}}$ such that $p \coh_{\termofTaylor{R}{e}{i}} p'$ and $p' \notin \portsatzero{T[{\varphi_{e_o}}^{-1}]}$. We distinguish between three cases:
\begin{itemize}
\item $p, p' \notin \mathcal{Q}$: we have $p \in \portsatzero{T}$, $p' \notin \portsatzero{T}$ and $p \coh_{\termofTaylor{R_o}{e}{i}} p'$, hence, since $T \trianglelefteq_{\mathcal{P}_{e_o}} \termofTaylor{R_o}{e}{i}$, we have $p \in \mathcal{P}_{e_o}$; but $\mathcal{P}_{e_o} \cap \portsatzero{\termofTaylor{R}{e}{i}} \subseteq \mathcal{P}$;
\item $p \in \mathcal{Q}$: we have $\varphi_{e_o}(p) \in \conclusions{\groundof{T}} \subseteq \mathcal{P}_{e_o}$, hence $p \in \mathcal{P}$;
\item $p \notin \mathcal{Q}$ and $p' \in \mathcal{Q}$: we have $p \in \conclusions{\groundof{T}} \subseteq \mathcal{P}_{e_o}$; but $\mathcal{P}_{e_o} \cap \portsatzero{\termofTaylor{R}{e}{i}} \subseteq \mathcal{P}$.
\end{itemize}
In the three cases, we have $p \in \mathcal{P}$.

So, we showed $T[{\varphi_{e_o}}^{-1}] \in \nontrivialconnected{\termofTaylor{R}{e}{i}}{\mathcal{P}}{k}$.

3) Let $T \in \nontrivialconnected{\termofTaylor{R_o}{e}{i}}{\mathcal{P}}{k}$ such that $(\forall e_1 \in e(o)) \portsatzero{T} \cap \portsatzero{R_o \langle o, i, e_1 \rangle} = \emptyset$. Notice that $q \notin \portsatzero{T}$; indeed, since $q \in \conclusions{\groundof{\termofTaylor{R_o}{e}{i}}}$, we have $(q \in \portsatzero{T} \Rightarrow q \in \mathcal{P})$, but $q \notin \mathcal{P}$. 

Since $(\forall w \in \portsatzero{T}) w \notin \bigcup_{e_1 \in e(o)} \{ (o, (e_1, p)) ; p \in \mathcal{Q}' \}$, we have:
\begin{itemize}
\item $\wiresatzero{\termofTaylor{R}{e}{i}} \cap \portsatzero{T} = \wiresatzero{\termofTaylor{R_o}{e}{i}} \cap \portsatzero{T}$
\item and $(\forall w \in \wiresatzero{\termofTaylor{R}{e}{i}} \cap \portsatzero{T}) \target{\groundof{\termofTaylor{R}{e}{i}}}(w) = \target{\groundof{\termofTaylor{R_o}{e}{i}}}(w)$.
\end{itemize}
Hence 
\begin{eqnarray*}
\wiresatzero{T} & = & \{ w \in (\wiresatzero{\termofTaylor{R_o}{e}{i}} \cap \portsatzero{T}) \setminus \mathcal{P} ; \target{\groundof{\termofTaylor{R_o}{e}{i}}}(w) \in \portsatzero{T} \}\\
& = & \{ w \in (\wiresatzero{\termofTaylor{R}{e}{i}} \cap \portsatzero{T}) \setminus \mathcal{P} ; \target{\groundof{\termofTaylor{R}{e}{i}}}(w) \in \portsatzero{T} \}
\end{eqnarray*}
So, $T \sqsubseteq_{\mathcal{P}} \termofTaylor{R}{e}{i}$.

Let $p \in \portsatzero{T}$ and $q' \in \bigcup_{e_1 \in e(o)} \portsatzero{R \langle o, i, e_1 \rangle}$ such that $p \coh_{\termofTaylor{R}{e}{i}} q'$: We have $p \in \mathcal{Q}$. Since from $T \trianglelefteq_{\mathcal{P}} \termofTaylor{R_o}{e}{i}$ and $(\forall e_1 \in e(o)) \portsatzero{T} \cap \portsatzero{R_o \langle o, i, e_1 \rangle} = \emptyset$ we deduce $\portsatzero{T} \cap \mathcal{Q} \subseteq \mathcal{P}$, we obtain $p \in \mathcal{P}$. We thus have $T \trianglelefteq_{\mathcal{P}} \termofTaylor{R}{e}{i}$.

We showed $T \in \nontrivialconnected{\termofTaylor{R}{e}{i}}{\mathcal{P}}{k}$.

4) Let $T \in \nontrivialconnected{\termofTaylor{R}{e}{i}}{\mathcal{P}}{k}$ such that $(\forall e_1 \in e(o)) \portsatzero{T} \cap \portsatzero{R \langle o, i, e_1 \rangle} = \emptyset$. 

Let $p \in \portsatzero{T}$ and $q' \in \bigcup_{e_1 \in e(o)} \portsatzero{R \langle o, i, e_1 \rangle}$ such that $p \coh_{\termofTaylor{R_o}{e}{i}} q'$: We have $p \in \mathcal{Q}$. Since from $T \trianglelefteq_{\mathcal{P}} \termofTaylor{R}{e}{i}$ and $(\forall e_1 \in e(o)) \portsatzero{T} \cap \portsatzero{R \langle o, i, e_1 \rangle} = \emptyset$ we deduce $\portsatzero{T} \cap \mathcal{Q} \subseteq \mathcal{P}$, we obtain $p \in \mathcal{P}$. We thus have $T \trianglelefteq_{\mathcal{P}} \termofTaylor{R_o}{e}{i}$.

We showed $T \in \nontrivialconnected{\termofTaylor{R_o}{e}{i}}{\mathcal{P}}{k}$.
\end{proof}

\end{document}